\documentclass[12pt]{article} 
\usepackage{fullpage}
\usepackage{bbm,amsmath,amssymb,amsthm}
\usepackage{mathrsfs}
\usepackage[normalem]{ulem}
\usepackage[titles]{tocloft}
\usepackage{authblk}

\newtheorem{example}{Example}[section]
\newtheorem{definition}{Definition}[section]
\newtheorem{theorem}[definition]{Theorem}
\newtheorem{corollary}[definition]{Corollary}
\newtheorem{proposition}[definition]{Proposition}
\newtheorem{lemma}[definition]{Lemma}

\usepackage{pdfpages}

\usepackage{xcolor}
\definecolor{monvert}{rgb}{0.05,0.54,0.05}
\usepackage[
    linktocpage=true,
    hyperfootnotes=false,
    pdftex,                %
    bookmarks         = true,
    bookmarksnumbered = true,
    pdfpagemode       = None,
    pdfstartview      = FitH,
    pdfpagelayout     = SinglePage,
    colorlinks        = true,
    linkcolor= red, 
    citecolor         =blue,
    urlcolor          = magenta,
    pdfborder         = {0 0 0}
    ]{hyperref}

\usepackage{tikz}
\usetikzlibrary{trees}
\usetikzlibrary{arrows}
\tikzset{
 treenode/.style = {align=center, inner sep=4pt, text centered,
    font=\sffamily},
  arn_n/.style = {treenode, rounded corners, rectangle, black, font=\sffamily, draw=black,minimum height=2em,
    fill=white}, 
leaf/.style = {treenode, rectangle, draw=black,
    minimum width=1em, minimum height=1em}
}

\def\new#1{{\textcolor{black}{#1}}}

\def\fbr{\ensuremath{\mathrm{fbr}}}
\def\lag{\ensuremath{\mathsf{Lagrange}}}
\def\reg{\ensuremath{\mathrm{reg}}}
\def\sing{\ensuremath{\mathrm{sing}}}
\def\oreg{\ensuremath{V^\circ_{\rm reg}}}
\def\freg{\ensuremath{V_{\rm reg}}}
\def\GL{\ensuremath{\mathrm{GL}}}
\def\C {\ensuremath{\mathbf{C}}}
\def\R {\ensuremath{\mathbf{R}}}
\def\QQ {\ensuremath{\mathbf{Q}}}
\def\scrQ{\ensuremath{\mathscr{Q}}}
\def\scrC{\ensuremath{\mathscr{C}}}
\def\Zeroes{\ensuremath{\mathsf{Z}}}
\def\Ideal{\ensuremath{I}}
\def\IdealV{{\Ideal}(V)}
\def\openpolar{{W^\circ}}
\def\polar{{W}}
\def\Kpolar{{K}}
\def\Cons{\ensuremath{\mathscr{D}}}
\def\Proj{\ensuremath{\mathscr{U}}}
\def\lnf{\ensuremath{\mathsf{L}}}
\def\gnf{\ensuremath{\mathsf{G}}}
\def\Fiberlag{\ensuremath{\mathrm{F}_{\mathrm{Lagrange}}}}
\def\Polarlag{\ensuremath{\mathrm{W}_{\mathrm{Lagrange}}}}
\def\DF{\ensuremath{\mathrm{Dg}}}
\def\sfH{\ensuremath{\mathsf{H}}}
\def\sfC {\ensuremath{\mathsf{C}}}
\def\sfA {\ensuremath{\mathsf{A}}}
\def\T {\ensuremath{T}}
\def\chartpolar {\ensuremath{\mathrm{W}_{\rm chart}}}
\def\atlaspolar {\ensuremath{\mathrm{W}_{\rm atlas}}}
\def\atlasfiber {\ensuremath{\mathrm{F}_{\rm atlas}}}
\def\Open {\ensuremath{\mathcal{O}}}

\def\ZOffp{\ensuremath{\tilde{\mathscr{G}}}} 
\def\ZOlp{\ensuremath{\mathscr{G}'}} 
\def\ZOfinite{\ensuremath{\mathscr{G}_2}} 

\def\pollambda{{\textcolor{black}{\mathfrak{l}}}}
\def\polmu{{\textcolor{black}{\mathfrak{m}}}}
\def\linearmu{{\textcolor{black}{\mathfrak{h}}}}

\def\poldelta{{\textcolor{black}{\mathfrak{d}}}}
\def\singSBasicOpen{{\textcolor{black}{\ensuremath{\mathcal{U}}}}}
\def\BasicOpen{{\textcolor{black}{\ensuremath{\mathcal{O}}}}}
\def\UOpen{{\textcolor{black}{\ensuremath{\mathcal{U}}}}}
\def\openB{{\textcolor{black}{\ensuremath{\mathcal{B}}}}}
\def\lcs{{\textcolor{black}{\ensuremath{V^\circ}}}}
\def\lcswi{{\textcolor{black}{\ensuremath{Y_i^\circ}}}}
\def\lcsw{{\textcolor{black}{\ensuremath{Y^\circ}}}}
\def\SlocallyclosediA{{\textcolor{black}{\ensuremath{S_{i,\mA}^\circ}}}}
\def\Slocallyclosed{{\textcolor{black}{\ensuremath{S^\circ}}}}
\def\algxAX{{\textcolor{black}{\mathfrak{X}_{\mA,\x}^\circ}}}
\def\algiAX{{\textcolor{black}{\mathfrak{X}_{i,\mA}^\circ}}}
\def\algAX{{\textcolor{black}{\mathfrak{X}_\mA^\circ}}}
\def\algX{{\textcolor{black}{\mathfrak{X}^\circ}}}
\def\lcLambda{{\textcolor{black}{Y^\circ}}}
\def\closedX{{\textcolor{black}{X}}}
\def\closedZ{{\textcolor{black}{Z}}}
\def\algZ{{\textcolor{black}{Z}}}
\def\alg2Z{{\textcolor{black}{Z}}}
\def\Vmp{{\textcolor{black}{V^{mp}}}}

\def\Prime{{\ensuremath{\mathcal{P}}}}

\def\roottau{{\alpha}}
\def\nodetau{{\tau}}
\def\nodeoherletter{{\tilde\tau}}
\def\rootrho{{\xi}}
\def\depth{r}

\def\ZOA12{\ensuremath{\mathscr{E}}}
\def\ZOpropRK{\ensuremath{\mathscr{A}}}
\def\ZOdeltapropRK{\ensuremath{\mathcal{U}}}
\def\ZOK{\ensuremath{\mathscr{K}}}
\def\ZOomega{\textcolor{black}{\ensuremath{\mathscr{M}}}}

\def\scrGfiberchart{\ensuremath{\mathscr{G}_{3}^{\mathrm{chart}}}}
\def\scrGpolarchart{\ensuremath{\mathscr{G}_{1}^{\mathrm{chart}}}}
\def\scrGfiber{{\ensuremath{\mathscr{G}_3}}}
\def\scrGpolar{\ensuremath{\mathscr{G}_1}}

\def\Clos#1{\ensuremath{\overline{\mathscr{U}#1}}}

\def\scrOpen{{\textcolor{black}{\ensuremath{\mathscr{G}}}}}
\def\scrIopen{{\textcolor{black}{\ensuremath{\mathscr{I}}}}}

\def\sfTa{{\ensuremath{\mathsf{T}}}}
\def\sfTpa{{\ensuremath{\mathsf{T}'}}}

\def\scrS{\ensuremath{\mathscr{S}}}

\def\Klastindex{3}
\def\Klastlastindex{4}

\def\scrX{\textcolor{black}{\ensuremath{\mathscr{X}}}}

\def\degQ{\ensuremath{\kappa}}
\def\degR{\ensuremath{\gamma}}
\def\degS{\ensuremath{\sigma}}
\def\degC{\ensuremath{\mu}}
\def\degB{\ensuremath{\beta}}
\def\degfiber{\ensuremath{\gamma}}

\def\softO{\ensuremath{{O}{\,\tilde{ }\,}}}
\def\dalgo{{\ensuremath{\tilde d}}}
\def\minor{{\ensuremath{m'}}}
\def\pminor{{\ensuremath{m''}}}

\def\dinit{\ensuremath{{d_{\rho}}}}

\def\A {\ensuremath{\mathbb{A}}}
\def\B {\ensuremath{\mathbb{B}}}
\def\Co {\ensuremath{\mathbb{C}}}
\def\CC {\ensuremath{\overline{\mathbf{K}}}}

\def\P {\ensuremath{\mathbb{P}}}
\def\Q {\ensuremath{\mathbb{Q}}}
\def\N {\ensuremath{\mathbb{N}}}
\def\Re {\ensuremath{\mathbb{R}}}

\def\Z {\ensuremath{\mathbb{Z}}}

\def\GG {\ensuremath{\mathbf{G}}}
\def\KK {\ensuremath{\mathbf{K}}}

\def\n {\ensuremath{\mathbf{n}}}

\def\p {\ensuremath{\mathbf{p}}}

\def\mS {\ensuremath{\mathbf{S}}}
\def\mzero {\ensuremath{\mathbf{0}}}
\def\mT {\ensuremath{\mathbf{T}}}

\def\JJ {\ensuremath{\mathbf{J}}}

\def\mm {\ensuremath{\mathbf{m}}}
\def\v {\ensuremath{\mathbf{v}}}

\def\mA{\ensuremath{\mathbf{A}}}

\def\mB{\ensuremath{\mathbf{B}}}

\def\mM{\ensuremath{\mathbf{M}}}
\def\G {\ensuremath{\mathrm{GL}}}

\def\grad{\ensuremath{{\rm grad}}}
\def\ker{\ensuremath{{\rm ker}}}
\def\rank{\ensuremath{{\rm rank}}}
\def\jac{\ensuremath{{\rm jac}}}

\def\scrT{\ensuremath{\mathscr{T}}}

\def\scrR{\ensuremath{\mathscr{R}}}
\def\scrY{\ensuremath{\mathscr{Y}}}
\def\scrB{\ensuremath{\mathscr{B}}}

\def\y {\ensuremath{\mathbf{y}}}

\def\a {\ensuremath{\mathbf{a}}}
\def\b {\ensuremath{\mathbf{b}}}
\def\z {\ensuremath{\mathbf{z}}}
\def\x {\ensuremath{\mathbf{x}}}

\def\f {\ensuremath{\mathbf{f}}}
\def\g {\ensuremath{\mathbf{g}}}
\def\h {\ensuremath{\mathbf{h}}}

\def\L {\ensuremath{\mathbf{L}}}
\def\F {\ensuremath{\mathbf{F}}}
\def\G {\ensuremath{\mathbf{G}}}

\def\X {\ensuremath{\mathbf{X}}}
\def\Y {\ensuremath{\mathbf{Y}}}
\def\H {\ensuremath{\mathbf{H}}}

\def\m {\ensuremath{\mathfrak{m}}}
\def\v {\ensuremath{\mathbf{v}}}
\def\u {\ensuremath{\mathbf{u}}}

\def\U {\ensuremath{\mathbf{U}}}

\def\bdelta{\mbox{\boldmath$\delta$}}
\def\bzeta{\mbox{\boldmath$\zeta$}}

\def\bpsi{\mbox{\boldmath$\psi$}}
\def\bphi{\mbox{\boldmath$\phi$}}

\def\brho{\mbox{\boldmath$\rho$}}
\def\bell{\mbox{\boldmath$\ell$}}

\def\Vfiber{\ensuremath{V''}}
\def\fiber2{\ensuremath{Q''}}
\def\fibersing2{\ensuremath{S''}}
\def\param2{\ensuremath{\mathscr{Q}''}}
\def\paramsing2{\ensuremath{\mathscr{S}''}}

\def\gcd{\ensuremath{\mathrm{GCD}}}




\hyphenation{hy-phe-na-ti-on par-ti-cu-lar pa-ra-met-ri-za-ti-on pro-per-ties pro-po-si-ti-on}
\begin{document}


\title{A nearly optimal algorithm for deciding connectivity queries
in smooth and bounded real algebraic sets}
\author[1]{Mohab Safey El Din}
\author[2]{\'Eric Schost}
\affil[1]{Sorbonne Universit\'es, UPMC Univ. Paris 06, CNRS, INRIA Paris Center, LIP6, PolSys Team, France}
\affil[2]{David Cheriton School of Computer Science, University of Waterloo, Waterloo, ON, Canada}




\maketitle
\begin{abstract}
  A roadmap for a semi-algebraic set $S$ is a curve which has a
  non-empty and connected intersection with all connected components
  of $S$. Hence, this kind of object, introduced by Canny, can be used
  to answer connectivity queries (with applications, for instance, to
  motion planning) but has also become of central importance in
  effective real algebraic geometry, since it is used in higher-level
  algorithms.

  In this paper, we provide a probabilistic algorithm which computes
  roadmaps for smooth and bounded real algebraic sets. Its output size
  and running time are polynomial in $(nD)^{n\log(d)}$, where $D$ is
  the maximum of the degrees of the input polynomials, $d$ is the
  dimension of the set under consideration and $n$ is the number of
  variables. More precisely, the running time of the algorithm is
  essentially subquadratic in the output size. Even under our
  assumptions, it is the first roadmap algorithm with output size and
  running time polynomial in $(nD)^{n\log(d)}$.
\end{abstract}

\section{Introduction}

Roadmaps were introduced by Canny~\cite{CannyThese,Canny} as a
means to decide connectivity properties for semi-algebraic
sets. Informally, a roadmap of a semi-algebraic set $S$ is a
semi-algebraic curve in $S$, whose intersection with each connected
component of $S$ is non-empty and connected: connecting points on $S$
can then be reduced to connecting them to the roadmap and moving along
it.  The initial motivation of Canny's work was to motion planning,
but computing roadmaps actually became the key to further algorithms
in semi-algebraic geometry, such as computing a decomposition of a
semi-algebraic set into its semi-algebraically connected
components~\cite{BaPoRo06}.

This paper presents an algorithm that computes a roadmap of a real
algebraic set, under some regularity, smoothness and compactness
assumptions. In all this work, we work over a real field $\QQ$ with
real closure $\R$ and algebraic closure $\C$ (the reader may replace
$\QQ$ by the field of rational numbers $\Q$, $\R$ by the field of
reals $\mathbb{R}$ and $\C$ by the field of complex numbers
$\mathbb{C}$). To estimate running times, we count arithmetic
operations $(+,-,\times,\div)$ in $\QQ$ at unit cost.

\subsection{Prior results} 
Let $S \subset \R^n$ be a semi-algebraic set. If $S$ is defined by $s$
equations and inequalities with coefficients in $\QQ$ of degree
bounded by $D$, the cost of Canny's algorithm is
$s^n\log(s)D^{O(n^4)}$ operations in $\QQ$ \cite{Canny}; a Monte Carlo
version of it runs in time $s^n\log(s)D^{O(n^2)}$. Subsequent
contributions \cite{HRSRoadmap,GRRoadmap} gave algorithms of cost
$(sD)^{n^{O(1)}}$; they culminate with the algorithm of Basu, Pollack
and Roy~\cite{BaPoRo96,BPRRoadmap} of cost $s^{d+1}D^{O(n^2)}$,
where $d\le n$ is the dimension of the algebraic set defined by all
equations in the system.

None of these algorithms has cost lower than $D^{O(n^2)}$ and none of
them returns a roadmap of degree lower than $D^{O(n^2)}$. Yet, \new{in
  the case of real algebraic sets}, one would expect that a much
better cost $D^{O(n)}$ be achievable, since this is an upper bound on
the number of connected components of $S$, and many other questions
(such as finding at least one point per connected component) can be
solved within that cost.

In~\cite{SaSc11}, we proposed a probabilistic algorithm for the
hypersurface case that extended Canny's original approach; under
smoothness and compactness assumptions, the cost of that algorithm is
$(nD)^{O(n^{1.5})}$. In a nutshell, the main new idea introduced in
that paper is the following. Canny's algorithm and his successors,
including that in~\cite{SaSc11}, share a recursive structure, where
the dimension of the input drops through recursive calls; the main
factor that determines their complexity is the depth $\depth$ of the
recursion, since the cost grows roughly like $D^{O(\depth n)}$ for
inputs of degree $D$. In Canny's version, the dimension drops by one
at each step, so the recursion depth $\depth$ can reach $n-1$.

In~\cite{SaSc11}, we introduced new proof techniques for connectivity
results that leave more freedom in the construction of a roadmap,
allowing us to decrease the depth of the recursion. The algorithm
in~\cite{SaSc11} used baby-steps / giant-steps techniques, combining
steps of size $O(\sqrt{n})$ \new{(where the dimension decreases
  by roughly $\sqrt{n}$)} and steps of unit size, leading to an overall
recursion depth of $O(\sqrt{n})$.

The results in~\cite{SaSc11} left many questions open, such as making
the algorithm deterministic, removing the smoothness-compactness
assumptions or generalizing the approach from hypersurfaces to systems
of equations. In~\cite{BRSS}, we answered these questions, while still
following a baby-steps / giant-steps strategy: we showed how to obtain
a deterministic algorithm for computing a roadmap of a general real
algebraic set within a cost of $D^{O(n^{1.5})}$ operations in $\QQ$.

The next step is obviously to use a divide-and-conquer strategy, that
would divide the current dimension by two at every recursive step,
leading to a recursion tree of depth $O(\log(n))$.  In \cite{BaRo13},
Basu and Roy recently obtained such an important result: given $f$ in
$\QQ[X_1,\dots,X_n]$, their algorithm computes a roadmap for
$V(f)\cap \R^n$ in time polynomial in $n^{n\log^3(n)} D^{n \log^2(n)}$
while the output has size polynomial in
$n^{n\log^2(n)} D^{n \log(n)}$. Note that this algorithm is not
polynomial in its output size; the extra logarithmic factors appearing
in the exponents reflect the cost of computing with $O(\log(n))$
infinitesimals. Since that algorithm makes no smoothness assumption on
$V(f)$, it can as well handle the case of a system of equations
$f_1=\cdots=f_s=0$ by taking $f=\sum_i f_i^2$. Note also that this
algorithm is deterministic.

In this paper, we present as well a divide-and-conquer roadmap
algorithm. Compared to Basu and Roy's recent work, our algorithm is
probabilistic and handles less general situations (we still rely on
smoothness and compactness). However, it features a better running
time for such inputs: both output degree and running time are
polynomial in $(nD)^{n\log(d)}$ (where $d$ is the dimension of the
algebraic set we consider), the running time of our algorithm is
subquadratic in the size of the output, and the complexity constants
that lie in the exponent are made explicit.


\subsection{Roadmaps: definition and data representation}\label{ssec:data}

\paragraph*{Definition}
Our definition of a roadmap in the algebraic case is as follows.  Let
$V \subset \C^n$ be an algebraic set (the set of common solutions in
$\C^n$ to some polynomial equations). An
algebraic set ${R} \subset \C^n$ is a {\em roadmap of $V$} if the
following holds:
\begin{itemize}
\item ${R}$ is either an algebraic curve, or empty;

\smallskip

\item ${R}$ is contained in $V$;

\smallskip

\item each semi-algebraically connected component of $V\cap \R^n$ has
  a non-empty and semi-algebraically connected intersection with
  ${R}\cap \R^n$.
\end{itemize}
Finally, if $C$ is a finite subset of $\C^n$, we say that ${R}$ is a
{\em roadmap of $(V,C)$} if we have in addition:
\begin{itemize}
\item ${R}$ contains $C\cap V \cap \R^n$.
\end{itemize}
The set $C$ will be referred to as {\em control points}. For instance,
computing a roadmap of $(V,\{P_1,P_2\})$ enables us to test if the
points $P_1,P_2$ are on the same connected component of $V\cap \R^n$.

This definition is from~\cite{SaSc11}; it slightly differs from the
one in e.g.~\cite{BaPoRo06}, but serves the same purpose: compared
to~\cite{BaPoRo06}, our definition is coordinate-independent, and does
not involve a condition (called ${\rm RM}_3$ in~\cite{BaPoRo06}) that
is specific to the algorithm used in that reference. Most importantly,
we do not deal here with semi-algebraic sets, but with algebraic sets
only.

\paragraph*{Straight-line programs}  Our algorithms handle
mainly multivariate polynomials, as well as finite sets of points
and algebraic curves.

The input polynomials will be given by {\em straight-line programs}.
Informally, this is a representation of polynomials by means of a
sequence of operations $(+,-,\times)$, without test or division.
Precisely, a straight-line program $\Gamma$ computing polynomials in
$\QQ[X_1, \ldots, X_N]$ is a sequence $\gamma_1,\dots,\gamma_E$, where
for $i \ge 1$, we require that one of the following holds:
\begin{itemize}
\item $\gamma_i =\lambda_i$, with $\lambda_i \in \QQ$;

\smallskip

\item $\gamma_i = ({\sf op}_i,\lambda_i,a_i)$, with ${\sf op}_i \in
  \{+,-,\times\}$, $\lambda_i \in \QQ$ and $-N+1 \leq a_i < i$
(non-positive indices will refer to input variables);

\smallskip

\item  $\gamma_i = ({\sf op}_i,a_i,b_i)$, with   ${\sf op}_i \in
  \{+,-,\times\}$ and $-N+1 \leq a_i,b_i < i$.
\end{itemize}
To $\Gamma$, we can associate polynomials $G_{-N+1},\dots,G_{E}$
defined in the following manner: for $-N+1 \leq i \leq 0$, we take
$G_i=X_{i+N}$; for $i \ge 1$, $G_i$ is defined inductively in the
obvious manner, as either $G_i = \lambda_i$, $G_i = \lambda_i ~{\sf
  op}_i~ G_{a_i}$ or $G_i = G_{a_i} ~{\sf op}_i~ G_{b_i}$. We say that
$\Gamma$ {\em computes} some polynomials $f_1,\dots,f_s$ if all $f_i$
belong to $\{G_{-N+1},\dots,G_E\}$.  Finally, we call $E$ the
\emph{length} of $\Gamma$.

The reason for this choice is that we will use algorithms for solving
polynomial systems that originate in the
references~\cite{GiHeMoPa95,GiHeMoPa97,GiHeMoMoPa98,GiLeSa01,Lecerf2000},
where such an encoding is used. This is not a restriction, since any polynomial
of degree $D$ in $n$ variables can be computed by a straight-line program
of length $O(D^n)$, obtained by evaluating and summing all its monomials.

\paragraph*{Representing the output}
To represent finite algebraic sets and algebraic curves, we respectively use {\em
  zero-dimensional} and {\em one-dimensional} parametrizations. 

A zero-dimensional parametrization
$\scrQ=((q,v_1,\dots,v_n),\pollambda)$ with coefficients in $\QQ$
consists in polynomials $(q,v_1,\dots,v_n)$, such that $q\in \QQ[T]$
is squarefree and all $v_i$ are in $\QQ[T]$ and satisfy $\deg(v_i) <
\deg(q)$, and in a $\QQ$-linear form $\pollambda$ in the variables
$X_1,\dots,X_n$, such that $\pollambda(v_1,\dots,v_n)=T$. The
corresponding algebraic set, denoted by $\Zeroes(\scrQ)\subset \C^n$,
is defined in a parametric manner by
$$q(\roottau) = 0, \qquad X_i = v_i(\roottau) \ \ (1 \le i \le n);$$ it is
thus a finite set of points parametrized by the finitely many roots of
$q$. The constraint on $\pollambda$ says that the roots of $q$ are the
values taken by $\pollambda$ on $\Zeroes(\scrQ)$. The {\em degree} of
$\scrQ$ is defined as $\deg(q)=|\Zeroes(\scrQ)|$. By convention, the
sequence $(1)$ is considered as a zero-dimensional parametrization
that defines the empty set. 

Any finite subset $Q$ of $\C^n$ defined over $\QQ$ ({\it i.e.}, which
can be written as the zero-set of polynomials in $\QQ[X_1,\dots,X_n]$)
can be represented as $Q=\Zeroes(\scrQ)$, for a suitable~$\scrQ$. This kind
of description goes back to work of Kronecker and
Macaulay~\cite{Kronecker82,Macaulay16}, and has been used in computer
algebra since the
1980's~\cite{GiMo89,GiHeMoPa95,ABRW96,GiHeMoPa97,GiHeMoMoPa98,Rouillier99,GiLeSa01,Lecerf2000}.

\new{Next, we discuss the extension of this idea to algebraic curves.
  A {\em one-dimensional parame\-trization}
  $\scrQ=((q,v_1,\dots,v_n),\pollambda,\pollambda')$ with coefficients
  in $\QQ$ consists in polynomials $(q,v_1,\dots,v_n)$, such that
  we have:
  \begin{itemize}
  \item $q\in \QQ[U,T]$ is squarefree and monic in $U$ and $T$, with $\deg(q, U)=\deg(q, T)=\deg(q)$,
\smallskip
  \item $v_i$ are in $\QQ[U,T]$ and satisfy $\deg(v_i,T) < \deg(q,T)$,
  \end{itemize}
  and in linear forms $\pollambda,\pollambda'$ in $X_1,\dots,X_n$,
  such that
  $$\pollambda\left (v_1,\dots,v_n\right)=T \frac{\partial q}{\partial T}
  \bmod q \quad\text{and}\quad \pollambda'\left (v_1,\dots,v_n\right)=U
  \frac{\partial q}{\partial T} \bmod q.$$ The
corresponding algebraic set, denoted by $\Zeroes(\scrQ)\subset \C^n$, is now
defined as the smallest algebraic set containing the curve defined in
 a parametric manner by
 \begin{equation}\label{eq:def_one_dim_param}
q(\eta,\rootrho) = 0, \qquad \frac{\partial q}{\partial T}(\eta,\rootrho)
\ne 0, \qquad X_i = \frac{v_i(\eta, \rootrho)}{\frac{\partial q}{\partial
    T}(\eta, \rootrho)} \ \ (1 \le i \le n).   
 \end{equation}
}

The {\em degree} $\delta$ of $\Zeroes(\scrQ)$ is the maximum of the
cardinalities of the finite sets obtained by intersecting
$\Zeroes(\scrQ)$ with a hyperplane (whenever such sets are finite).
In all cases we use one-dimensional parametrizations, we request
additionally that $\delta=\deg(q)$.

Using for instance~\cite[Theorem~1]{Schost03}, we deduce that all
polynomials $q,v_1,\dots,v_n$ have total degree at most $\delta$; this
is the reason why we use these polynomials: if we were to invert the
denominator $\partial q/\partial T$ modulo $q$ in $\QQ(U)[T]$
in~\eqref{eq:def_one_dim_param}, thus involving rational functions in
$U$, the degree in $U$ would be quadratic in $\delta$.

Thus, we are now using the points of the plane curve $V(q) \subset
\C^2$ defined by $q(\eta,\rootrho)=0$ to parametrize the space curve
$\Zeroes(\scrQ)$; the condition on $\pollambda$ and $\pollambda'$
means that the plane curve $V(q)$ is the smallest algebraic set
containing the image of $\Zeroes(\scrQ)$ through the projection $\x
\mapsto (\pollambda'(\x),\pollambda(\x))$.

Any algebraic curve in $\C^n$ defined by polynomials with coefficients
in $\QQ$ can be written as $\Zeroes(\scrQ)$, for some
one-dimensional parametrization  $\scrQ$, by choosing $\pollambda$ and
$\pollambda'$ as random linear forms in $\QQ[X_1, \ldots, X_n]$ (this
is classical; see for instance~\cite{GiLeSa01}). For a curve of degree
$\delta$, such a description involves $O(n\delta^2)$ monomials.

The output of our algorithm is a roadmap $R$ of an algebraic set $V$:
it will thus be represented by a one-dimensional
parametrization. Given such a data structure, we explained
in~\cite{SaSc11} how to construct paths between points in $V \cap
\R^n$, so as to answer connectivity queries.


\subsection{Main result}

With these definitions, our main result is the following theorem. The
input polynomials are given by means of a straight-line program, whose
length will be called $E$; as said above, we can always use a trivial
straight-line program of length $O(D^n)$ to encode a polynomial of
degree $D$, so in the worst case we can take $E = O(n D^n)$. We make a
regularity assumption on these polynomials, that they should form a
{\em reduced regular sequence}. This means that for all $i$ in
$\{1,\dots,s\}$, $V(f_1,\dots,f_i)$ is equidimensional of dimension
$n-i$ and the ideal $\langle f_1,\dots,f_i \rangle$ is {\em radical},
in the sense that any polynomial vanishing on $V(f_1,\dots,f_i)$ must
belong to that ideal (in the next section, we review basic concepts of
algebraic geometry along these lines).

In all this work, the soft-O notation $\softO(g)$ denotes the class $g\log(g)^{O(1)}$.

\begin{theorem}\label{theo:main}
  Consider $\f=(f_1,\dots,f_p)$ of degree at most $D$ in
  $\QQ[X_1,\dots,X_n]$, given by a straight-line program of length
  $E$. Suppose that $V(\f) \subset \C^n$ has finitely many singular
  points, that $V(\f)\cap \R^n$ is bounded, and that the polynomials
  $\f$ form a reduced regular sequence. Given a zero-dimensional
  parametrization $\scrC$ of degree $\degC$, one can compute a roadmap
  of $(V(\f),\Zeroes(\scrC))$ of degree
\[
\softO \left (
\degC 16^{3d}  (n \log_2(n))^{2(2d+12\log_2(d))(\log_2(d)+6)}D^{(2n+1)(\log_2(d)+4)}
\right )\]
using 
\[
\softO \left (
\degC^3 16^{9d} E (n \log_2(n))^{6(2d+12\log_2(d))(\log_2(d)+7)}D^{3(2n+1)(\log_2(d)+5)}
\right ) 
\]
arithmetic operations in $\QQ$, with $d=n-p$.
\end{theorem}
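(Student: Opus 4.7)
The plan is to design a recursive divide-and-conquer algorithm whose recursion tree has depth $O(\log_2(n))$, which is the source of every $\log_2(n)$ exponent appearing in the bound of Theorem~\ref{theo:main}. At each level, a roadmap problem in dimension $d$ is reduced to a small number of roadmap problems in dimension roughly $d/2$, instead of the dimension $d-1$ used by Canny or $d-\sqrt{d}$ used in~\cite{SaSc11,BRSS}.

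At a generic recursive step, given a smooth, bounded, equidimensional set $V=V(\f)\subset\C^n$ of dimension $d$ defined by a radical ideal, together with a finite set $C=Z(\scrC)$ of control points, I would proceed as follows. First, after a generic linear change of coordinates (handled probabilistically), pick a coordinate projection $\pi$ onto a subspace of dimension close to $d/2$ and form the polar variety $W\subset V$ of critical points of $\pi$ restricted to $V$; under genericity, $W$ is again smooth, bounded, equidimensional, of dimension $\approx d/2$, and is defined by a system that can be encoded as a Lagrange-type system of the kind studied in~\cite{BRSS,SaSc11}. Second, compute a finite set $B$ of ``bridge'' points: the fibers of $\pi$ over the critical values of $\pi|_V$ and over $\pi(C)$, cut with $V$ so as to have dimension zero. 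Both $W$ and $B$ are computed by the geometric resolution algorithms of~\cite{GiHeMoPa95,GiHeMoPa97,GiLeSa01,Lecerf2000}, exploiting the straight-line program input of length $E$. Third, recursively compute a roadmap $R_W$ of $(W,\,C\cup B)$; fourth, for each $b\in B$, recursively compute a roadmap $R_b$ of the fiber $\pi^{-1}(\pi(b))\cap V$ with control points $(C\cup B)\cap\pi^{-1}(\pi(b))$. The output is the union $R_W\cup\bigcup_b R_b$, reassembled as a single one-dimensional parametrization with a random linear form.

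The main obstacle, and the core technical content, is proving that the output is indeed a roadmap: the union must meet every semi-algebraically connected component of $V\cap\R^n$ in a non-empty and connected set, and it must contain $C\cap V\cap\R^n$. This connectivity statement extends the results of~\cite{SaSc11,BRSS} from polar varieties of codimension $1$ or $\sqrt{d}$ to polar varieties of codimension $\approx d/2$. The smoothness, boundedness and radicality hypotheses on $\f$ are exactly what is needed to apply these generalized connectivity theorems at every level of the recursion, since smoothness, boundedness, equidimensionality and radicality must be preserved under passage to both $W$ and to generic fibers of $\pi$, which will be ensured by the genericity of the linear change of coordinates and a Sard-type argument.

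Finally, the running time and degree estimates come from unwinding recurrences $\delta(d)$ and $T(d)$ over the $O(\log_2(n))$ levels of recursion. At each level, B\'ezout-type bounds give degrees multiplied by $D^{O(n)}$ times factors polynomial in $n$; the number of control points propagated to the next level is bounded by the output degree of the current polar step, producing the $\degC$ factor and the exponents of the form $(2d+12\log_2(n))$. The geometric resolution algorithms run in time essentially cubic in the degree and linear in $E$, which explains both the factor $E$ in the cost and the tripling of exponents between the degree bound and the operation count. Composing the contributions of all $O(\log_2(n))$ levels and absorbing polylogarithmic overhead into the $\softO$ notation yields exactly the stated bounds.
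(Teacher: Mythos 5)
Your overall architecture — recurse on a polar variety of dimension $\approx d/2$ and on fibers of the projection over a finite set of points, with a recursion tree of depth $O(\log_2 n)$ — is exactly the paper's strategy, and your accounting of where the $\log_2(n)$ exponents and the factor $E$ come from is essentially right. However, two steps fail as stated. First, you claim that under a generic change of coordinates the polar variety $W$ is again \emph{smooth}. This is false in general and is not what the paper proves: Proposition~\ref{prop:ch4} only guarantees that $W$ is equidimensional with \emph{finite} singular locus (assumption $(\AS,\dalgo-1,e)$), and even this requires the sharp constraint $\dalgo\le (d+3)/2$ — beyond it the polar variety can acquire high-dimensional singularities. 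Since smoothness is not preserved, the entire recursion must be run under the weaker hypothesis ``finite singular locus,'' which is why the paper develops the chart/atlas machinery of Chapter~\ref{sec:chartsatlas} and carries a finite exceptional set $S$ through every level. Second, your bridge set is not finite: the critical values of $\pi|_V$ for a projection $\pi$ onto $\approx d/2$ coordinates form the image of $W$, a set of dimension $\dalgo-1>0$, so ``the fibers of $\pi$ over the critical values'' is an infinite family. The correct finite set is $\fiber2=\pi_{e+\dalgo-1}\bigl(K(e,1,W)\cup C\bigr)$ — critical points of the projection onto a \emph{single} coordinate restricted to $W$, together with the control points — and the finiteness of $K(e,1,W)$ is itself a substantial result (Proposition~\ref{prop:ch5}, proved in Chapter~\ref{chap:finitenessproperties} via an incidence-variety dimension count); it cannot be obtained by applying Proposition~\ref{prop:ch4} with $\dalgo=1$ to $W$ because polar varieties do not commute with changes of variables.

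There is also a genuine gap in the complexity analysis. You invoke ``B\'ezout-type bounds'' giving $D^{O(n)}$ per level, but the objects being solved at depth $r$ are iterated Lagrange systems with $\Theta(n)$ blocks of multipliers and up to $\Theta(n^2)$ variables in total; a naive B\'ezout bound on such a system is $D^{O(n\log n)}$ \emph{per level}, which over $\log n$ levels gives $D^{O(n\log^2 n)}$, strictly worse than the claimed bound. Closing the recurrence requires the multi-homogeneous B\'ezout estimates of Chapter~\ref{chap:solvelagrange}, which exploit the fact that each block of equations is linear in its block of Lagrange multipliers; this is what keeps the degree at $(nD)^{O(n)}$ per level and hence $(nD)^{O(n\log n)}$ overall. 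Relatedly, your per-point recursion ``for each $b\in B$'' should be replaced by a single recursive call on the whole fiber $\fbr(V^{\mA},\fiber2)$ treated as one algebraic set lying over the finite set $\fiber2$ — this is what keeps the recursion tree binary and forces the algorithm to compute with coefficients in a product of fields $\QQ[T]/\langle q\rangle$, handled by dynamic evaluation in Chapter~\ref{chap:posso}. None of these points invalidates your plan, but each is a place where the argument as written would not go through.
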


In other words, both output degree and running time are polynomial in
the quantity $\degC\, (nD)^{n \log(d)}$; the running time is essentially
cubic in the output degree, and subquadratic in the output size ---
recall that if the bivariate polynomials returned as output have
degree $\delta$, the output size, in terms of number of coefficients
in $\QQ$, is essentially $n\delta^2$.

The algorithm is probabilistic in the following sense: at several
steps, we have to choose random elements from the base field,
typically in the form of matrices or vectors. Every time a random
element $\gamma$ is chosen in a parameter space such as $\QQ^i$, there
will exist a non-zero polynomial $\Delta$ such that success is
guaranteed as soon as $\Delta(\gamma)\ne 0$. 

To our knowledge, this is the best known result for this question;
compared to the recent result in~\cite{BaRo13}, the exponents
appearing here are better. Even under our assumptions, Basu and Roy's
algorithm relies on the introduction of several infinitesimals, which
allow them to alleviate problems such as the presence of
singularities; our algorithm avoids introducing infinitesimals, which
improves running times and output degree but requires stronger
assumptions.


\subsection{Structure of the paper}

This paper is accompanied by an electronic appendix. The goal of the
main text is to give the reader a global view and understanding of the
objects and properties that are used; most proofs are postponed to the
appendix.  Sections in the main text are indexed as 1, 2, \dots;
sections in the appendix as A, B, \dots

We start with a short section of notation and background definitions.
In Section~\ref{sec:dim:smooth:finite}, we introduce the notions of
polar varieties and fibers that will play a crucial role in our
algorithm. 
Geometric properties of polar varieties and fibers allow us to give an
abstract version of our algorithm in Section~\ref{ssec:abstractalgo},
where data representation is not discussed yet. 

We then introduce in Section~\ref{sec:GLS} a construction based on
Lagrange systems, that we call generalized Lagrange system, to
represent all intermediate data (as the more standard techniques using
minors of Jacobian matrices to describe polar varieties do not lead to
acceptable complexity results), from which the final form of our
algorithm follows.

Properties of generalized Lagrange systems and their connection with
polar varieties and fibers are summarized in Section~\ref{sec:GLS}.
The description of our concrete algorithm and its complexity analysis
are given in Section~\ref{sec:mainalgo}; they are based on several
subroutines which are presented in Section~\ref{sec:solvingglag}.


\section{Algebraic sets}\label{sec:prelim}

In this section, we first recall some basic definitions related to
algebraic sets, that is, zero-sets of systems of polynomial equations
(for proofs and standard notions not recalled here, see for
instance~\cite{ZaSa58,Mumford76,Shafarevich77,Eisenbud95}). The last
subsection introduces the concepts of charts and atlases, which will
form the basis of the correctness proofs of our algorithms.


\subsection{Generalities on algebraic sets}

An {\em algebraic set} $V \subset \C^n$ is the set of common zeros of
some polynomials $\f=(f_1,\dots,f_s)$ in $\C[X_1,\dots,X_n]$; we write
$V=V(f_1,\dots,f_s)=V(\f)$.  We denote by $\IdealV$ the {\em ideal} of
$V$, that is, the set of polynomials in $\C[X_1,\dots,X_n]$ that
vanish at all points of $V$; the set $V$ is said to be {\em defined
  over $\QQ$} if $\IdealV$ can be generated by polynomials with
coefficients in~$\QQ$.

Two fundamental integer quantities associated to algebraic sets are
dimension and degree. Before defining them, let us mention that an
algebraic set $V$ can be uniquely decomposed into a finite union of
{\em irreducible} algebraic sets (that is, algebraic sets which
themselves cannot be written as a finite union of proper algebraic
subsets); they will be called the irreducible components of $V$.
\begin{itemize}
\item The {\em dimension} $\dim(V)$ of an algebraic set $V\subset
  \C^n$ can be defined either as the Krull dimension of
  $\C[X_1,\dots,X_n]/\IdealV$, or equivalently as the number of
  generic hyperplanes needed to obtain a finite set after intersection
  with $V$. We often write $d=\dim(V)$, and the {\em codimension} of
  $V$ is defined as $c=n-\dim(V)$.\smallskip

  For instance, an algebraic set $V\subset \C^n$ defined by a single
  equation $f=0$ (where $f$ is not a constant) has dimension $n-1$:
  intersecting $V$ with $n-1$ generic hyperplanes (defined by generic
  linear equations) and eliminating $n-1$ variables thanks to the
  linear equations leads to a univariate polynomial which has finitely
  many roots.\smallskip

  When all irreducible components of $V$ have the same dimension, we
  say that $V$ is {\em equidimensional}, or $d$-{\em equidimensional}
  if we want to make it clear that this dimension is $d$.

\smallskip

\item The {\em degree} of an irreducible algebraic set $V \subset
  \C^n$ is the number of intersection points between $V$ and $\dim(V)$
  generic hyperplanes (this is also the {\em maximal} number of such
  intersection points); the degree of an arbitrary algebraic set is
  defined as the sum of the degrees of its irreducible
  components~\cite{Heintz83}.
  For instance, the degree of an algebraic set $V\subset \C^n$ defined
  by a single squarefree equation $f=0$ equals the degree of the
  polynomial $f$. 

\smallskip

Crucial for us will be the {\em B\'ezout bound}~\cite{Heintz83}:
  if polynomials $\f=(f_1,\dots,f_s)$ have degree at most $D$, their zero-set
  $V(\f)$ has degree at most $D^s$.
\end{itemize}
Most important for our purposes will be algebraic sets of dimension
zero, and equidimensional algebraic sets of dimension $1$. The former 
are thus finite sets of points, for which degree equals cardinality;
the latter are algebraic curves, for which the degree is the number of 
intersection points with a generic hyperplane.

Finally, we mention that algebraic sets are the closed sets for the
so-called Zariski topology on $\C^n$; the Zariski closure
\new{$\overline{S}$} of an arbitrary subset \new{$S$} of $\C^n$ is thus the
smallest algebraic set that contains it.  For $\f=(f_1,\dots,f_s)$ as
above, the complement $\C^n-V(\f)$ will be written $\Open(\f)$;
it is open for the Zariski topology.


\subsection{Local properties}

Next, we discuss regular and singular points of an algebraic set. Let
thus $V$ be an algebraic set in $\C^n$. For $f$ in $\C[X_1, \ldots,
  X_n]$ and $\x$ in $\C^n$, we denote by $\grad_\x(f)$ the evaluation
of the gradient vector of $f$ at $\x$. Then, the {\em tangent space to
  $V$} at $\x \in V$ is the vector space $\T_\x V$ defined by the
equations $\grad_\x(f)\cdot\v =0$, for all polynomials $f$ in the
ideal $\IdealV$.

If $V$ is equidimensional, we define {\em regular points} on $V$ as
those points $\x$ where $\dim(T_\x V)=\dim(V)$ and {\em singular
  points} as all other points in $V$. The set of regular, resp.\ singular,
points is denoted by $\reg(V)$, resp.\ $\sing(V)$; the latter is an
algebraic subset of $V$, of smaller dimension than $V$. An
equidimensional algebraic set $V$ is said to be smooth when $\sing(V)$
is empty.

For polynomials $\f=(f_1,\dots,f_s)$ in $\C[X_1,\dots,X_n]$,
$\jac(\f)$ denotes the Jacobian matrix of $(f_1,\dots,f_s)$ with
respect to $X_1,\dots,X_n$; later on, we will also use the notation
$\jac(\f,i)$, which for $i\le n$ denotes the matrix obtained by
removing the first $i$ columns from $\jac(\f)$. As for gradients,
$\jac_\x(\f)$ and $\jac_\x(\f,i)$ denote the same matrices, with
entries evaluated at a point $\x$ in $\C^n$.

The following lemma is a direct consequence
of~\cite[Corollary~16.20]{Eisenbud95}, and gives us a more concrete
description of the objects defined above.

\begin{lemma}\label{sec:prelim:lemma:sing}
  If $V\subset \C^n$ is a $d$-equidimensional algebraic set, whose ideal $\IdealV$ is generated by polynomials
  $\f= (f_1,\dots,f_s)$, then we have the following:
  \begin{itemize}
  \item at any point $\x$ of $\reg(V)$, $\jac_\x(\f)$ has full rank
    $c=n-\dim(V)$ and its kernel is $T_\x V$;

\smallskip

  \item $\sing(V)$ is the zero-set of $\f$ and all
    $c$-minors of $\jac(\f)$.
  \end{itemize}
\end{lemma}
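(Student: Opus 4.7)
The plan is to reduce both statements to a single rank computation for $\jac_\x(\f)$, using the identification of the tangent space with the kernel of the Jacobian. Concretely, since $I(V)=\langle \f\rangle$, any polynomial $F$ vanishing on $V$ can be written as a combination $F=\sum_i g_i f_i$, so by the Leibniz rule $\grad(F,\x)=\sum_i g_i(\x)\,\grad(f_i,\x)$ at every $\x\in V$. Hence the linear conditions defining $T_\x V$ collapse to $\jac_\x(\f)\cdot\v=0$, i.e.\ $T_\x V=\ker \jac_\x(\f)$, and consequently $\dim T_\x V=n-\rank \jac_\x(\f)$.

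Next I would invoke the general fact that for any point $\x$ of an algebraic set $V$, the tangent space dimension bounds the local dimension from above: $\dim T_\x V\ge \dim_\x V$. Since $V$ is $d$-equidimensional, $\dim_\x V=d$ at every $\x\in V$, so $\rank \jac_\x(\f)\le n-d=c$ throughout $V$. Regular points are, by definition, those $\x\in V$ where $\dim T_\x V=d$, which by the displayed identity is equivalent to $\rank \jac_\x(\f)=c$; this gives the first bullet. The second bullet is then immediate: $\x\in V$ is singular iff $\rank \jac_\x(\f)<c$, which is the simultaneous vanishing of all $c$-minors of $\jac(\f)$, together of course with the vanishing of the $f_i$ themselves (so that $\x\in V$).

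The only nontrivial ingredient is the inequality $\dim T_\x V\ge \dim_\x V$ and the equidimensionality-based translation of it into $\rank \jac_\x(\f)\le c$; both are standard and are precisely what \cite[Corollary~16.20]{Eisenbud95} packages together. Accordingly I would simply cite that corollary rather than reprove it. No step here is expected to present any real obstacle, the statement being essentially a restatement of the Jacobian criterion for reduced equidimensional ideals in a form convenient for later reference.
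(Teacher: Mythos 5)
Your argument is correct, and it ends where the paper itself begins and ends: the paper gives no proof of this lemma beyond stating that it is a restatement of \cite[Corollary~16.20]{Eisenbud95}, which is exactly the reference you invoke for the one nontrivial ingredient. The extra unpacking you provide (the reduction $T_\x V=\ker\jac_\x(\f)$ via $I(V)=\langle\f\rangle$ and the translation of the rank condition into the vanishing of $c$-minors) is accurate and consistent with how the paper uses the result later.
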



\subsection{Changes of variables}\label{ssec:preliminaries:changevar}

Several statements will depend on linear changes of variables. If
$\KK$ is a field (typically for us $\C$ or $\QQ$), we denote by $\GL(n,
\KK)$ the set of $n\times n$ invertible matrices with entries in
$\KK$; when $\KK=\C$, we simply write $\GL(n)$ for $\GL(n, \C)$.  The
subset of matrices in $\GL(n, \KK)$ which leave invariant the first
$e$ coordinates and which act only on the last $n-e$ ones is denoted
by $\GL(n,e,\KK)$; such matrices have a $2 \times 2$ block diagonal
structure, the first block being the identity. {\em If extra variables
  are added on top of $\X=X_1,\dots,X_n$, these matrices will act only
  on the $\X$ variables}.

Given $f$ in $\C[X_1,\dots,X_n]$, and $\mA$ in $\GL(n)$, $f^\mA$
denotes the polynomial $f(\mA\X)$ and for $V \subset \C^n$, $V^\mA$
denotes the image of $V$ by the map $\phi_\mA: \x \mapsto
\mA^{-1}\x$. Thus, we have that for polynomials $\f=(f_1,\dots,f_s)$,
$V(\f^\mA)=\phi_\mA(V(\f))=V(\f)^\mA$.

The success of our algorithms will depend on our change of variables
being ``lucky'', in a sense that will always be made explicit. Our
statements will take the form: ``there exists a non-empty Zariski open
subset $\mathscr{O}$ of $\GL(n)$ such that for $\mA$ in $\mathscr{O}$,
\dots (some desirable properties are guaranteed)''. Strictly speaking,
we have only defined Zariski open and closed sets in $\C^n$, but the
definition carries over to subsets of $\GL(n)$ (which itself is open
in $\C^{n^2}$) by considering the induced topology.


\subsection{Fixing coordinates}\label{ssec:preliminaries:fixing}

\new{The structure of the main algorithm will require us to constantly consider
  situations where the first coordinates are fixed.
 For a fixed ambient dimension $n$ (which will always be clear 
from the context) and integers $0 \le e \le n$ and $0 \le d \le n-e$, we
denote by $\pi_{e,d}$ the projection
$$\begin{array}{cccc}
  \pi_{e,d}: & \C^n & \to & \C^d \\
  & \x =(x_1,\dots,x_n) & \mapsto & (x_{e+1},\dots,x_{e+d}).
\end{array}$$
For
$e=0$, $\pi_{0,d}$ is the projection on the space of the first $d$
coordinates; in this case, we simply write $\pi_{d}$.}

For $d=0$, we let $\C^0$ be a singleton of the form $\C^0=\{\bullet
\}$, and $\pi_{e,0}$ is the constant map $\x \mapsto \bullet$ (in this
respect, we also make the convention that the empty sequence $(\,)$ is
seen as a zero-dimensional parametrization encoding the singleton
$\{\bullet \}$).

\new{Consider a set $V$ in $\C^n$ and a subset $Q$ of $\C^d$, for some $d\in
\{1, \ldots, n\}$. Then, the {\em fiber} of $V$ above $Q$ for the
projection $\pi_d$ is the set $\fbr(V,Q)=V \cap {\pi_d}^{-1}(Q)$;
we say that $V$ {\em lies over $Q$} if $\pi_d(V)$ is contained
in~$Q$. For $\y$ in $\C^d$, we will further write $\fbr(V,\y)$
instead of the more formally correct $\fbr(V,\{\y\})$.}


\subsection{Charts and atlases}

An equidimensional algebraic set $V \subset \C^n$ is a {\em complete
  intersection} if it can be defined by a number of equations equal to
its codimension. This is a particularly convenient situation, as many
geometric properties are easier to comprehend in such a case.

\new{We will not be able to ensure this property throughout our
  algorithm, so we will replace it by a local version. We will also
  impose a smoothness property, leading us to the following notion of
  {\em chart}.  This definition applies to an algebraic set $V$ lying
  over a finite set $Q$, together with a set $S$ lying over $Q$ that
  we wish to exclude (this will be typically the set of singular
  points of $V$, or a superset of it).}

\begin{definition}\label{def:chart}
\new{ Let $n,e$ be integers, with $e \le n$, let $Q \subset \C^e$ be a
  finite set, and let $V \subset \C^n$ and $S\subset \C^n$ be algebraic
  sets lying over $Q$.}

\new{  We say that a pair of the form $\psi=(m,\h)$, with $m$ and
  $\h=(h_1,\dots,h_c)$ in $\C[X_1,\dots,X_n]$, is a {\em chart} of
  $(V,Q,S)$ if the following properties hold:
  \begin{enumerate}
  \item[${\sfC_1}.$] $\Open(m)\cap V-S$ is not empty;
\smallskip
  \item[${\sfC_2}.$] $\Open(m)\cap V-S = \Open(m) \cap \fbr(V(\h),Q)-S$;
\smallskip
  \item[${\sfC_3}.$] the inequality $c+e \le n$ holds;
\smallskip
  \item[${\sfC_4}.$] for all $\x$ in $\Open(m)\cap V-S$, the
    Jacobian matrix $\jac(\h,e)$ has full rank $c$ at $\x$.
  \end{enumerate}
}\end{definition}

This definition is inspired by the construction
in~\cite[Proposition~3.3.8]{BoCoRo98}. The salient points are the set
equality ${\sfC_2}$, together with the rank condition ${\sfC_4}$. To
understand the latter, consider the particular case where the finite
set $Q$ is a single point $(x_1,\dots,x_e)$. Then, the fiber
$\fbr(V(\h),Q)$ in ${\sfC_2}$ is defined by the equations
$(X_i-x_i)_{1\le i\le e}$ and $\h$, and 
the rank condition in $\sfC_4$ says that the Jacobian
matrix of these equations has full rank at $\x$.

An easy consequence of this definition is that when $V$ is
equidimensional of dimension $d$, if $\psi=(m,(h_1,\dots,h_c))$ is a
chart of $(V,Q,S)$, then as one would expect, $c=n-e-d$. This result
is proved as Lemma~\ref{sec:lemma:singS} in the electronic appendix.

Continuing the analogy with differential geometry, we will also rely
on the notion of {\em atlas} of $(V,Q,S)$.

\begin{definition} \label{def:atlas} 
\new{Let $n,e$ be integers, with $e \le n$, let $Q \subset \C^e$ be a
  finite set, let $V \subset \C^n$ and $S\subset \C^n$ be algebraic
  sets lying over $Q$.}

\new{An {\em atlas} of $(V, Q,S)$ is the data of $\bpsi=(\psi_i)_{1 \le i
    \le s}$, with $\psi_i=(m_i,\h_i)$ for all $i$, such that:
  \begin{enumerate}
  \item[${\sfA_1}.$] each $\psi_i$ is a chart of $(V, Q,S)$;
\smallskip
  \item[${\sfA_2}.$] $s \ge 1$ ({\it i.e.}, $\bpsi$ is not the empty sequence);
\smallskip
  \item[${\sfA_3}.$] the open sets $\Open(m_i)$ cover $V-S$. 
  \end{enumerate}
}
\end{definition}

If $V$ is equidimensional, there always exists an atlas for
$(V,Q,\sing(V))$. Conversely, the existence of an atlas for
$(V,Q,S)$, for some set $S$,
 is not enough to ensure that $V$ is equidimensional.
However, if this is known to be the case, and if $(V,Q,S)$ admits an
atlas, then all singular points of $V$ are in $S$.  As another example
of a useful property, if $(V,Q,S)$ admits an atlas
$\bpsi=(\psi_i)_{1 \le i \le s}$, with $\psi_i=(m_i,\h_i)$ for all
$i$, and if all $\h_i$ have the same cardinality $c$, then $V$ is
$d$-equidimensional, with $d=n-e-c$. These properties are proved in
Section~\ref{sec:preliminaries} of the electronic appendix.

Given a matrix $\mA$ in $\GL(n,e)$, and an atlas $\bpsi=(\psi_i)_{1
  \le i \le s}$ of $(V,Q,S)$, with $V$ and $S$ in $\C^n$, $Q$ in $\C^e$ and
$\psi_i=(m_i,\h_i)$ for all $i$, we write $\bpsi^\mA=(\psi^\mA_i)_{1
  \le i \le s}$, with $\psi_i^\mA=(m_i^\mA,\h_i^\mA)$ for all
$i$. Then, all $\psi_i^\mA$ are charts of $(V^\mA,Q,S^\mA)$, and
$\bpsi^\mA$ is an atlas of $(V^\mA,Q,S^\mA)$ (note that such a matrix
$\mA$ leaves $Q$ invariant, so $Q^\mA=Q$).

It is worth noting that the algorithms will never explicitly compute
any chart or atlas; however, we will rely on the properties of these
objects to establish correctness.


\section{Fibers and polar varieties}\label{sec:dim:smooth:finite}

The basic geometric constructions on which our algorithm relies are
fibers, already described above, and polar varieties. In this section, 
we state the main geometric properties (dimension, smoothness)
of these objects.


\subsection{Polar varieties}

Let $Q$ be a finite subset of $\C^e$, and let $V$ be an algebraic
subset of $\C^n$ lying over $Q$. If $V$ is $d$-equidimensional, for
any integer $\dalgo$ in $\{1,\dots,d\}$ the {\em open polar variety}
$\openpolar(e,\dalgo,V)$
is defined as the set of
critical points of $\pi_{e,\dalgo}$ on $\reg(V)$, that is, the set of
points $\x$ in $\reg(V)$ such that $\pi_{e,\dalgo}(T_\x V)$ has
dimension less than $\dalgo$. We further define the following objects:
\begin{itemize}
\item $\polar(e,\dalgo,V)$
  is the Zariski closure of $\openpolar(e,\dalgo,V)$;
\smallskip
\item $\Kpolar(e,\dalgo,V)=\openpolar(e,\dalgo,V) \cup \sing(V)$.
\end{itemize}
The set $\Kpolar(e,\dalgo,V)$ turns out to be closed for the Zariski
topology.  For instance, if $e=0$ and if the defining ideal of $V$ is
generated by polynomials $\f= (f_1,\dots,f_s)$, using Lemma
\ref{sec:prelim:lemma:sing}, we can deduce that $\Kpolar(0,\dalgo,V)$
is the subset of $V$ where $\jac(\f,\dalgo)$ has rank less than $c$,
where $c=n-d$ is the codimension of $V$ (this is proved as
Lemma~\ref{sec:prelim:lemma:Ksing} in appendix).

Since $\Kpolar(e,\dalgo,V)$ contains $\openpolar(e,\dalgo,V)$, and
since it is Zariski closed, it must contain $\polar(e,\dalgo,V)$ as
well. Although we will be mostly interested in $\polar(e,\dalgo,V)$,
the superset $\Kpolar(e,\dalgo,V)$ will turn out to be slightly simpler to compute, as
suggested by the remark above.
In cases where $V$ has no singular point, this distinction
becomes irrelevant, as the sets $\openpolar(e,\dalgo,V)$,
$\polar(e,\dalgo,V)$ and $\Kpolar(e,\dalgo,V)$ all coincide. 

Polar varieties as considered for instance in
references~\cite{BaGiHeMb97,BaGiHeMb01} and their successors
correspond to $e=0$.

Polar varieties were introduced by algebraic geometers Severi and Todd
in the 1930's, as a means to define characteristic classes, and they
played an important role in singularity theory in the 1970's and
1980's; see~\cite{Piene78,Teissier88} for a history of this
subject. They were used for algorithmic purposes in real geometry by
Bank, Giusti, Heintz {\it et al.} in a series of papers starting in
1997~\cite{BaGiHeMb97}, whose goal was to compute sample points on
real algebraic sets \cite{BaGiHeMb01,BaGiHePa05,BGHSS,SaSc03} and for
polynomial optimization~\cite{BGHS14,GS14}. While these ideas are close in
essence to other forms of {\em critical point
  methods}~\cite{BaPoRo06}, the rich geometry underlying the
construction of polar varieties is the key to many useful results (see
also~\cite{RRS,ARS}).

\begin{figure}[ht]
\centerline{\includegraphics[width=5cm]{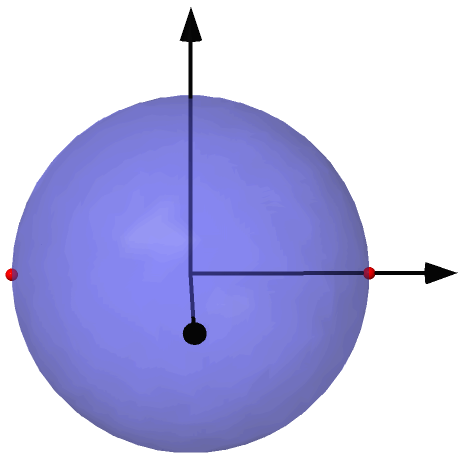}\qquad\quad
\includegraphics[width=5cm]{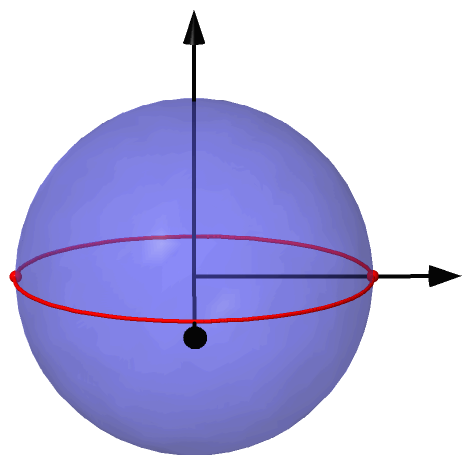}}
\caption{The polar varieties $\polar(0, 1, V)$ and $\polar(0, 2, V)$, where $V=V(X_1^2+X_2^2+X_3^2-1)$}
\label{fig:polar1}
\end{figure}

\begin{example}\label{example1}
  Figure~\ref{fig:polar1} shows the real points of the polar varieties
  $\polar(0, 1, V)$ and $\polar(0, 2, V)$, where $V \subset \Co^3$ is
  the $2$-dimensional sphere defined by $X_1^2+X_2^2+X_3^2-1=0$; these
  polar varieties
  correspond to critical points of projections on respectively a line
  and a plane. In this particular case, we see that $\polar(0, 1, V)$
  is defined by
$$
X_1^2+X_2^2+X_3^2-1=X_2=X_3=0,
$$ 
and that it has dimension zero.  The polar variety $\polar(0, 2, V)$ is defined by 
$$
X_1^2+X_2^2+X_3^2-1=X_3=0
$$ 
and it has dimension one. 
\end{example}
This example suggests that when $V$ is smooth and
equidimensional, $W(0, \dalgo, V)$ has dimension $\dalgo-1$. The next
proposition will show that this dimension property indeed holds,
provided we are in generic coordinates. In this respect, one should
notice that in general, $\polar(e,d,V^\mA)$ differs from
$\polar(e,d,V)^\mA$: the geometry of polar varieties, in particular
their dimension, may change when one applies a linear change of variables
to $V$.

The precise form of this dimension statement (which will be required in the
proof of Proposition~\ref{SEC:LAGRANGE:PROP:TRANSFER:POLAR} below) is
constructive: given an atlas for $V$, we build atlases for its polar
varieties.

Let $Q \subset \C^e$ be a finite set and let $V \subset \C^n$ and
$S\subset \C^n$ be algebraic sets lying over $Q$.  Suppose that $V$ is
equidimensional of dimension $d$ and consider an atlas
$\bpsi=(\psi_i)_{1 \le i \le s}$ for the triple $(V,Q,S)$. We are
interested in the polar variety $\polar(e,\dalgo,V)$, for an index
$\dalgo$ in $\{1,\dots,d\}$. Locally, in the chart
$\psi_i=(m_i,\h_i)$, this polar variety can be defined by the
cancellation of {\em all} minors in the Jacobian matrix $\jac(\h_i,
e+\dalgo)$, but all these minors give us too many polynomials for them
to define a chart for $\polar(e,\dalgo,V)$. To resolve this issue, we
localize further, using in a critical manner the so-called {\em
  exchange lemma} of~\cite[Lemma~4]{BaGiHeMb01}.  This idea is best
seen on an example.

\begin{example}\label{example:Lagexchange}
  We will use the following less straightforward example several
  times.  Take $n=6$, $c=2$ and $\f=(f_1, f_2)$, with
  $$f_1=X_1^2 + X_4^2 + X_5^2 - 1$$
  and 
  $$f_2=X_2X_3 + X_1X_6 + X_3X_5 - 1.$$ We take $e=0$, so
  $Q=\{\bullet\}$; one then easily checks that the algebraic set $V$
  defined by $f_1=f_2=0$ is smooth and has dimension $d=4$ in $\Co^6$;
  the polynomials $(m=1,\f)$ form a chart, and actually an atlas, of
  $(V,Q,S)$, with $S=\emptyset$. This example was chosen to have
  rather simple defining equations, while displaying the ``generic''
  behavior.

  Choose $\dalgo=\lfloor \frac{d+3}{2}\rfloor=3$, as we will do in our
  main algorithm; the corresponding truncated Jacobian matrix for the
  two polynomials $(f_1,f_2)$ is
  $$
  \jac(\f,3)=
\left [\begin{matrix}
    2X_{4} & 2X_{5} & 0\\ 
    0&X_3&X_1
  \end{matrix}\right ]. 
$$
The set of all $\x$ in $V$ where $\jac_\x(\f,3)$ has rank less
than two is defined by $(f_1,f_2)$, together with three minors:
$$2 X_1 X_5,\quad 2X_1 X_4,\quad 2 X_3 X_4.$$ While none of these
equations can be omitted in this definition, in the open set
$\Open(X_1)$ defined by $X_1 \ne 0$, only two of them suffice, namely
$2 X_1 X_5$ and $2X_1 X_4$. Factoring out the monomial $X_1$, we see
that in $\Open(X_1)$, the polar variety $W(0,3,V)$ is defined by the
equations $(f_1,f_2,X_4,X_5)$.

The polynomial $X_1$ was chosen as a non-zero 1-minor of $\jac(\f,3)$.
The other such minors are (up to a constant)
$X_3, X_4, X_5$. One can verify that the open sets $\Open(X_1)$,
$\Open(X_3)$, $\Open(X_4)$ and $\Open(X_5)$ cover the polar variety
$W(0,3,V)$, and that in each of these open sets, we can define
$W(0,3,V)$ using only $f_1,f_2$ and two further equations.
\end{example}

The following definition generalizes the construction in the example
above, starting from a $(c-1)$-minor of $\jac(\h,\dalgo)$. 

\begin{definition}\label{def:mH}
  For $\h=(h_1,\dots,h_c)$ in $\C[X_1,\dots,X_n]$, for any integers
  ${\dalgo}$ in $\{1,\dots, n-c\}$, and any
  $(c-1)$-minor $\pminor$ of $\jac(\h,{\dalgo})$, we denote by
  $\sfH(\h,{\dalgo},\pminor)$ the vector of $c$-minors of
  $\jac(\h,{\dalgo})$ obtained by successively adding the missing row
  and the missing columns of $\jac(\h,{\dalgo})$ to $\pminor$.  There
  are $n-c-{\dalgo}+1$ such minors.
\end{definition}

We can then state the basic construction of charts for polar
varieties, which will be immediately followed by the corresponding
construction for atlases. In addition to the choice of a $(c-1)$-minor
of the truncated Jacobian matrix of $\jac(\h,\dalgo)$, the
construction involves the choice of a $c$-minor of $\jac(\h)$ as well
(as the non-vanishing of such a minor allows us to guarantee that
$\jac(\h)$ has full rank). Taking into account arbitrary values of
$e$, and not only $e=0$ as in the example, we arrive at the following
definition.

\begin{definition}\label{sec:chart:notation:polar}
  Let $Q \subset \C^e$ be a finite set and let $V \subset \C^n$ and
  $S\subset \C^n$ be algebraic sets lying over $Q$. Let $\psi=(m,\h)$
  be a chart of $(V,Q,S)$ and let ${\dalgo}$ be an integer in
  $\{1,\dots,d\}$.
  Suppose that $\h=(h_1,\dots,h_c)$. For every $c$-minor $m'$ of
  $\jac(\h,e)$ and every $(c-1)$-minor $m''$ of $\jac(\h,e+{\dalgo})$,
  we define $\chartpolar(\psi,m',m'')$ as the polynomials
  $$\chartpolar(\psi,m',m'')=(m m' m'',\ (\h, \sfH(\h,e+{\dalgo},m''))).$$
\end{definition}

Once we have made explicit the construction of charts, the
construction of the whole atlas follows readily.

\begin{definition}\label{sec:atlas:notation:polar}
  Let $Q \subset \C^e$ be a finite set and let $V \subset \C^n$ and
  $S\subset \C^n$ be algebraic sets lying over $Q$.  Suppose that $V$
  is $d$-equidimensional, let $\bpsi=(\psi_i)_{1\le i\le s}$ be an
  atlas of $(V, Q,S)$ and let ${\dalgo}$ be an integer in
  $\{1,\dots,d\}$. Write  $W=\polar(e,\dalgo,V)$ and 
  for $i$ in $\{1,\dots,s\}$, write $\psi_i=(m_i,\h_i)$. 

  We define $\atlaspolar(\bpsi,V,Q,S,{\dalgo})$ as the sequence of all
  those $\chartpolar(\psi_i,m',m'')$, for $i$ in $\{1,\dots,s\}$ and
  for $m',m''$ respectively a $c$-minor of $\jac(\h_i,e)$ and a
  $(c-1)$-minor $\jac(\h_i,e+\dalgo)$, for which $\Open(m_i m'm'')
  \cap W-S$ is not empty.
\end{definition}

The following result is important in several aspects: it establishes
dimension properties of polar varieties, and does so in a constructive
manner, by relating the atlas of $V$ to that of the polar
variety. This proposition is proved in Section~\ref{sec:proof3.6} of
the electronic appendix.

\begin{proposition}\label{prop:ch4}\label{PROP:CH4}
  Let $Q \subset \C^e$ be a finite set and let $V \subset \C^n$ and
  $S\subset \C^n$ be algebraic sets lying over $Q$, with $S$ finite.
  Suppose that $V$ is equidimensional of dimension~$d$.
  Let $\bpsi$ be an atlas of $(V,Q,S)$, and let ${\dalgo}$ be an
  integer in $\{1,\dots,d\}$. If $2 \le {\dalgo}\leq (d+3)/2$, there
  exists a non-empty Zariski open subset
  $\scrGpolar(\bpsi,V,Q,S,{\dalgo})$ of $\GL(n,e)$ such that for $\mA$
  in $\scrGpolar(\bpsi,V,Q,S,{\dalgo})$, the following holds:
  \begin{itemize}
  \item either $\polar(e,{\dalgo},V^\mA)$ is empty, or
    \smallskip
  \item $\atlaspolar(\bpsi^\mA,V^\mA,Q,S^\mA,{\dalgo})$ is an atlas of
    $(\polar(e,{\dalgo},V^\mA),Q,S^\mA)$, and
    $\polar(e,{\dalgo},V^\mA)$ is equidimensional of dimension
    $\dalgo-1$, with $\sing(\polar(e,{\dalgo},V^\mA))$ contained in the finite set $S^\mA$.
  \end{itemize}
\end{proposition}
The bound $(d+3)/2$ for $\dalgo$ is sharp: for
higher values of $\dalgo$, polar varieties develop high-dimensional
singularities~\cite{BGHSS}.

For $e=0$, these claims were previously established by Bank, Giusti
{\em et al.}~\cite{BaGiHePa05,BGHSS} in the particular case where $V$
is smooth and a complete intersection.  Without these properties, the
proof becomes more involved, but in the end relies on a local version
of those in the above references, working locally using the charts
defined by $\bpsi$.
Let us also point out here the results in~\cite{BaGiHeLeMaSo15}, that
deal with other situations: using arguments in the same vein as
the above references, that paper proves in particular
equidimensionality of polar varieties, in generic coordinates, when we
work over a smooth quasi-affine algebraic set.

The value $\dalgo=1$ is excluded from the above proposition,
essentially because the proof for that case would require a slight
change in the arguments we use. We now show that a stronger statement
actually holds.

Our algorithm will compute the set $\Kpolar(e,1,W)$, with
$W=\polar(e,\dalgo,V^\mA)$ and $\dalgo$ as the proposition above, and
will require this set to be finite. Even if we had stated the previous
proposition with $\dalgo=1$, we would not be able to apply it to $W$,
since $\Kpolar(e,1,W)=\Kpolar(e,1,\polar(e,\dalgo,V^\mA))$ is in
general different from $\Kpolar(e,1,\polar(e,\dalgo,V)^\mA)$. However,
this finiteness result holds as well; for a proof of the following
proposition, see Section~\ref{chap:finitenessproperties} of the
electronic appendix.

\begin{proposition}\label{prop:ch5}\label{PROP:CH5}
  Let $Q \subset \C^e$ be a finite set and let $V \subset \C^n$ be an
  algebraic set lying over~$Q$.  Suppose that $V$ is equidimensional
  of dimension $d$, with finitely many singular points, and let
  $\dalgo$ be an integer such that $2 \le \dalgo \le (d+3)/2$. 

Then,
  there exists a non-empty Zariski open set $\ZOfinite(V,Q,\dalgo)
  \subset \GL(n,e)$ such that, for $\mA$ in $\ZOfinite(V,Q,\dalgo)$,
  writing $W=\polar(e,\dalgo,V^\mA)$, either $W$ is empty, or $W$ is
  equidimensional of dimension $\dalgo-1$, with finitely many singular
  points, and $\Kpolar(e, 1, W)$ is finite.
\end{proposition}
As claimed above, this implies in particular that $\polar(e,1,V^\mA)$
is finite, as one can prove that $\polar(e,1,V^\mA)$ is a subset of
$\Kpolar(e,1, W)$ (this is proved as
Lemma~\ref{prelim:lemma:polarpolar} in the electronic appendix).

The proposition above was proved in~\cite{SaSc11} in the case where
$V$ is a hypersurface, that is, defined by a single equation. In
general, the basic idea of the proof remains the same (study a
suitable incidence variety and relate the choices of $\mA$ that do not
satisfy our constraint to this incidence variety), but the proof
requires significant adaptations, as polar varieties cannot be
described as simply as in the hypersurface case.


\subsection{Fibers of a projection}\label{ssec:abstract:fibers}

In our algorithm, $V \subset \C^n$ is an algebraic set lying over a
finite set $Q \subset \C^e$, equidimensional of dimension $d$ and with
finitely many singular points. The following result shows that if we
are in generic coordinates, these properties carry over to fibers of
the projection $\pi_{e+\dalgo-1}$.

Precisely, starting from an atlas for $(V,Q,S)$, with $Q$ in $\C^e$,
and given a finite set $\fiber2\subset \C^{e+{\dalgo}-1}$ lying over
$Q$, we show how to get an atlas of $(\Vfiber,\fiber2,\fibersing2)$,
where $\Vfiber$ is the fiber $\fbr(V,\fiber2)$, for a suitable choice
of $\fibersing2$ (the notation used below is the one we will use in
the algorithm). The construction is straightforward: we mainly replace
$Q$ by the new set $\fiber2$ and remove some useless charts from the
collection. The only subtle point lies in the definition of the set
$\fibersing2$: we take $\fibersing2=\fbr(S \cup \polar(e,{\dalgo},V),
\fiber2)$, as this set can be proved to contain all singularities of
the fiber $\Vfiber$.

\begin{definition}\label{sec:atlas:notation:fiber}
  Let $Q \subset \C^e$ be a finite set and let $V \subset \C^n$ and
  $S\subset \C^n$ be algebraic sets lying over $Q$. Suppose that $V$
  is $d$-equidimensional, let $\bpsi=(\psi_i)_{1\le i\le s}$ be an
  atlas of $(V, Q,S)$ and let ${\dalgo}$ be an integer in
  $\{1,\dots,d\}$.

  For $i$ in $\{1,\dots,s\}$, write $\psi_i=(m_i,\h_i)$. Given a
  finite set $\fiber2\subset \C^{e+{\dalgo}-1}$ lying over $Q$, we
  define $\atlasfiber(\bpsi,V,Q,S,\fiber2)$ as the sequence of all
  $\psi_i$ for which $\Open(m_i) \cap \Vfiber-\fibersing2$ is not
  empty, with $\Vfiber=\fbr(V, \fiber2)$ and
  $\fibersing2=\fbr(S \cup \polar(e,{\dalgo},V), \fiber2)$.
\end{definition}

The following statement is a counterpart of Proposition~\ref{prop:ch4}
in the context of fibers. For a proof of this statement, see
Section~\ref{sec:proof3.9} of the electronic appendix.

\begin{proposition}\label{sec:atlas:prop:summary1}\label{SEC:ATLAS:PROP:SUMMARY1}
  Let $Q \subset \C^e$ be a finite set and let $V \subset \C^n$ and
  $S\subset \C^n$ be algebraic sets lying over $Q$, with $S$ finite.
  Suppose that $V$ is equidimensional of dimension~$d$.
  Let $\bpsi$ be an atlas of $(V,Q,S)$, and let ${\dalgo}$ be an
  integer in $\{1,\dots,d\}$.  If $2 \le {\dalgo}\leq (d+3)/2$, there
  exists a non-empty Zariski open subset
  $\scrGfiber(\bpsi,V,Q,S,{\dalgo})$ of $\GL(n,e)$ such that for $\mA$
  in $\scrGfiber(\bpsi,V,Q,S,{\dalgo})$, the following holds.

  Define $W=\polar(e,{\dalgo},V^\mA)$ and let $\fiber2 \subset \C^{e+{\dalgo}-1}$ be a finite set lying
  over $Q$; define $\Vfiber=\fbr(V^\mA,\fiber2)$.  Let further
  $\fibersing2 = \fbr(S^\mA \cup \polar(e,{\dalgo},V^\mA),\fiber2)$.
  Then:
  \begin{itemize}
  \item $\fibersing2$ is finite,
\smallskip
\item either $\Vfiber$ is empty or
  $\atlasfiber(\bpsi^\mA,V^\mA,Q,S^\mA,\fiber2)$ is an atlas of
  $(\Vfiber,\fiber2,\fibersing2)$, and $\Vfiber$ is equidimensional of
  dimension $d-({\dalgo}-1)$, with $\sing(\Vfiber)$ contained in the
  finite set~$\fibersing2$.
  \end{itemize}
\end{proposition}
The dimension claim is natural: imposing that $\Vfiber$ lies over a
finite subset $\fiber2$ of $\C^{e+{\dalgo}-1}$, we expect to reduce
the number of degrees of freedom by ${{\dalgo}-1}$.

Similar statements were proved for instance in~\cite{SaSc03} in the
case $e=0$, for $V$ a complete intersection; the proof of the
proposition above reduces to this situation by working locally on $V$,
using the charts provided by the atlas $\bpsi$.


\section{A family of algorithms}\label{ssec:abstractalgo}

In this section, we describe in a high-level manner a family of
algorithms to compute roadmaps, that are inspired by Canny's original
design. While all geometric constructions are specified, we do not
discuss data representation yet. Correctness, and in particular the
dimension equalities written as comments in the pseudo-code, are
subject to genericity properties; the main contribution of this
section is to make these requirements entirely explicit.


\subsection{Description}\label{ssec:descriptionabstract}

The family of algorithms described hereafter is based on a
connectivity result which is the combination of Theorem~14 and
Proposition~2 in~\cite{SaSc11}; roughly speaking, this result says
that if we are in generic coordinates, to compute a roadmap of an
algebraic set $V$, it is enough to compute the union of {\it (i)} a
roadmap of a well-chosen polar variety of $V$ and {\it (ii)} a roadmap
of fibers of a corresponding projection. 

In the resulting algorithm, we take as input an integer $e \le n$, an
algebraic set $V \subset \C^n$ that lies over a finite set
$Q \subset \C^e$, and a finite set $C$ of control points. We make the
following assumptions:
\begin{itemize}
\item $V$ is $d$-equidimensional, for some $d > 0$,
\smallskip
\item $V$ has  finitely many singular points,
\smallskip
\item $V \cap \R^n$ is bounded. 
\end{itemize}
As output, we return a roadmap of $(V,C)$. The algorithm is recursive,
the top-level call being with $e=0$ and thus $Q=\{\bullet\} \subset \C^0$.

When $V$ is a curve, we simply return $V$. Else, we first choose a
random change of variables $\mA$ and an index $\dalgo$ denoted by
$\dalgo={\sf Choose}(d)$. The choice of $\dalgo$ is 
the subject of Subsection~\ref{ssec:discussion}; our only constraints are that $\dalgo$ is in
$\{2,\dots, \lfloor (d+3)/2\rfloor\}$ (the lower bounds ensures that 
the corresponding polar variety has dimension at least one; the upper bound
allows us to apply the results of the previous section).

After applying $\mA$, we determine a finite set of points in
$\C^{\dalgo-1}$ written $\fiber2$ in the pseudo-code; explicitly, they
are obtained as a projection of $\Kpolar(e,1,W)\, \cup\,C^\mA$, with
$W = \polar(e,\dalgo,V^\mA)$. We recursively compute roadmaps of the
polar variety $W$ and of the fiber $V''=\fbr(V,\fiber2)$, updating the
control points, and we return the union of these roadmaps.

In the recursive call for the polar variety, the index $e$ does not
change; when we deal with $V''$, we increase the value of $e$ to
$e+\dalgo-1$.

The following pseudo-code describes this recursive algorithm. The
dimension statements on the right border are the expected dimensions
of the corresponding objects; genericity conditions on the change of
coordinates $\mA$ will ensure that these claims are indeed valid
(except when said objects turn out to be empty).

\medskip\noindent \hypertarget{abstractalgorecinit}{${\sf RoadmapRec}$}$(V,Q,C,d,e)$ 
   \hfill $d=\dim(V)$
\begin{enumerate}
\item \label{step:h:empty} if $V$ is empty, return $V$
\item \label{step:h:checkdim} if $d=1$, return $V$
\item \label{step:h:chvar} let $\mA$ be a random change of variables in $\GL(n, e, \QQ)$
\item \label{step:h:choosedalgo} let $\dalgo={\sf Choose}(d)$ \hfill $\dalgo \ge 2$
\item \label{step:h:W} let $W = \polar(e,\dalgo,V^\mA)$ \hfill $\dim(W)=\dalgo-1$
\item \label{step:h:C} let $B = \Kpolar(e,1,W)\, \cup\,C^\mA$ \hfill $\dim(B)= 0$
\item \label{step:h:CP} let $\fiber2 = \pi_{e+\dalgo-1}(B)$ \hfill $\dim(\fiber2)= 0$
\item \label{step:h:PQ} let $C' = C^\mA\ \cup\ \fbr ( W,\fiber2)$ \hfill new control points; $\dim(C')= 0$
\item \label{step:h:PP} let $C'' = \fbr(C', \fiber2)$ \hfill new control points; $\dim(C'')= 0$
\item \label{step:h:VQ} let $V''=\fbr(V^\mA, \fiber2)$  \hfill $\dim(V'')=\dim(V)-(\dalgo-1)$
\item\label{step:h:rec2} let ${R}'={\sf RoadmapRec}(W, Q, C',\dalgo-1,e)$ 
\item\label{step:h:rec1} let ${R}''={\sf RoadmapRec}(V'',\fiber2,C'',d-(\dalgo-1),e+\dalgo-1)$ 
\item\label{step:h:final} return ${{R}'}^{\mA^{-1}} \cup {{R}''}^{\mA^{-1}}$
\end{enumerate}

The main algorithm performs an initial call to ${\sf RoadmapRec}$ with
$V$ satisfying the same assumptions as above, $e=0$,
$Q=\{\bullet\} \subset \C^0$, and $C_0$ an arbitrary finite set of control
points. We add $\sing(V)$ to $C_0$ at the top-level call, resulting in
the following main algorithm.

\medskip\noindent ${\sf MainRoadmap}(V,\,C_0)$
\begin{enumerate}
\item return ${\sf RoadmapRec}(V,\{\bullet\},C_0 \cup \sing(V), \dim(V),0)$
\end{enumerate}


\subsection{Correctness}\label{ssec:correctness}

The nature of Algorithm ${\sf RoadmapRec}$ implies that the recursive
calls can be organized into a binary tree $\scrT$, whose structure
depends only on the dimension $d$ of the top-level input $V$ and our
choice function ${\sf Choose}$. Describing this tree explicitly will
be useful for the proof of the theorem below.

Given a positive integer $d$, the tree $\scrT$ is defined as
follows. Each node $\nodetau$ is labelled with a pair $(d_\nodetau,e_\nodetau)$ of
integers:
\begin{itemize}
\item the root $\rho$ of $\scrT$ is labelled with
  $(d_\rho,e_\rho)=(d,0)$.
\smallskip
\item a node $\nodetau$ is a leaf if and only if $d_\nodetau= 1$. Otherwise,
  it has two children $\nodetau'$ (on the left) and $\nodetau''$ (on the
  right).  Define $\dalgo_\nodetau={\sf Choose}(d_\nodetau)$. Then, $\nodetau'$
  and $\nodetau''$ have respective labels $(d_{\nodetau'},e_{\nodetau'})$ and
  $(d_{\nodetau''},e_{\nodetau''})$, with
  $$d_{\nodetau'}=\dalgo_\nodetau-1,\ e_{\nodetau'}=e_\nodetau
  \quad\text{and}\quad d_{\nodetau''}= d_\nodetau-(\dalgo_\nodetau-1),\
  e_{\nodetau''}=e_\nodetau+\dalgo_\nodetau-1.$$ 
\end{itemize}
In other words, $(d_\nodetau, e_\nodetau)$ are the last two arguments given to
${\sf RoadmapRec}$ at the recursive call considered at node $\nodetau$,
so that the recursive calls of the main algorithm correspond to the nodes 
of $\scrT$. The total number of nodes in $\scrT$ is $2d-1$.

The following theorem proves correctness of Algorithm ${\sf
  MainRoadmap}$ using this formalism. In the statement of the theorem,
we mention in particular {\em internal nodes} of $\scrT$; these are
the nodes that are not leaves, and they correspond to recursive calls
where the dimension is greater than one. We also refer to {\em proper
  ancestors} of a node $\nodetau$: they consist of the parent of
$\nodetau$, the parent of its parent, \dots, all the way to the root.

\begin{theorem}  \label{THEO:MAINABSTRACT}
  Assume that $V$ is a $d$-equidimensional algebraic set with finitely
  many singular points and that $V\cap \R^n$ is bounded.  Let
  $C_0\subset \C^n$ be a finite set of points and let
  $(\mA_\nodetau)_{\nodetau \text{~internal node of~} \scrT}$ be a
  family of matrices, with $\mA$ in $\GL(n, e_\nodetau, \QQ)$ for all
  $\nodetau$.

There
  exists a family of non-empty Zariski open sets
  $(\scrOpen_\nodetau)_{\nodetau\text{~internal node of~} \scrT}$, where for all $\nodetau$,
  $\scrOpen_\nodetau$ is in $\GL(n, e_\nodetau)$ and depends on the
  matrices $(\mA_{\nodeoherletter})_{\nodeoherletter \text{~proper ancestor of~}
    \nodetau}$, such that the following holds: if, for all internal nodes $\nodetau$ of $\scrT$,
   $\mA_\nodetau$ is in
  $\scrOpen_\nodetau$ and if it is used as the change of variables in 
the corresponding recursive call of ${\sf RoadmapRec}$, ${\sf
    MainRoadmap}(V, C_0)$ returns a roadmap of $(V, C_0)$.
\end{theorem}

This theorem is proved in Section~\ref{sec:proof4.1} of the electronic
appendix. Here, we discuss briefly the ingredients involved in the
proof.

Consider the algebraic set $V$ given as top-level input to ${\sf
  MainRoadmap}$, together with an atlas $\bpsi$ of
$(V,\{\bullet\},\sing(V))$. First, we show that the algorithm runs its
course.  To each node $\nodetau$ of $\scrT$, we associate the geometric
objects $V_\nodetau, Q_\nodetau, C_\nodetau$ that are given as input in the
corresponding recursive call, as well as all objects defined there,
such as the curve $R_\nodetau$ (if $\dim(V_\nodetau)=1$), and $W_\nodetau, B_\nodetau, Q''_\nodetau,
C'_\nodetau, C''_\nodetau, \ldots$ otherwise, together with an atlas
$\bpsi_\nodetau$ of $V_\nodetau$.

This is done in a recursive manner. Assuming we have reached a node
$\nodetau$, we define the Zariski open set $\scrOpen_\nodetau$ as the
intersection of those sets obtained by applying
Propositions~\ref{prop:ch4},~\ref{prop:ch5}
and~\ref{sec:atlas:prop:summary1} to $V_\nodetau, Q_\nodetau,
S_\nodetau$ and the atlas $\bpsi_\nodetau$. This allows us to ensure
that the dimension claims on the right border of the description of
the algorithm are valid (unless the corresponding object is empty) and
to define atlases for the children of $\nodetau$, so that we can
continue the construction.

Correctness itself then follows from connectivity results proved
in~\cite{SaSc11}. Propositions~\ref{prop:ch4} and~\ref{prop:ch5} imply
that at each node $\nodetau$, $V_\nodetau$ satisfies the assumptions of
Theorem~14 in~\cite{SaSc11}; this result establishes that $W_\nodetau\cup
V_\nodetau''$ has a non-empty and connected intersection with all
connected components of $V_\nodetau\cap \R^n$. Knowing this, Proposition 2
in that same reference then shows that given roadmaps $R_\nodetau'$ and
$R_\nodetau''$ for $(W_\nodetau, C_\nodetau')$ and $(V_\nodetau'', C_\nodetau'')$, for
$C_\nodetau'$ and $C_\nodetau''$ as defined in Steps~\ref{step:h:PQ}
and~\ref{step:h:PP}, ${{R_\nodetau}'} \cup {{R_\nodetau}''}$ is a roadmap of
$(V_\nodetau^{\mA_\nodetau},C_\nodetau^{\mA_\nodetau})$. Restoring the initial coordinates 
proves our claim.


\subsection{Discussion}\label{ssec:discussion}

Let us now suggest what kind of complexity one should expect in an
idealized model. As we will see, the function ${\sf Choose}$ which
selects the integer $\dalgo$ is the key factor to determine the
efficiency of the algorithm.

Assume that the input $V$ is described by polynomials of degree $D$ in
$n$ variables; the B\'ezout bound~\cite{Heintz83} implies that it has
degree at most $D^n$; initially, the set $Q$ is empty, and we may assume 
for simplicity that the set $C$ of control points has cardinality $2$. 

If we suppose that we enter ${\sf RoadmapRec}$ with $V$ of degree at
most $\delta$ and $Q$ and $C$ of cardinality at most $\delta$, a
reasonable rule of thumb is that the polar variety $W$ (used in one
recursive call) and the set $V''$ (used in the other recursive call)
will have degree at most $\delta D^n$, and that the same would hold in
terms of cardinality for the new points $Q$ and $C$.  Under the
further assumption that all computations at a given recursive call can be done
in time polynomial in $\delta D^n$ , we deduce that the overall
running time is polynomial in $\delta D^{n \depth}$, where $\depth$ is the
depth of the recursion.

Canny's algorithm corresponds to defining $\dalgo={\sf Choose}(d)=2$
at every step, so that $\depth$ is at most $d=\dim(V)\leq n-1$. For this choice, one can
implement all required operations within the complexity estimates
claimed above without much difficulty, since all polar varieties we
consider are curves (so there is no further recursion on their side);
this leads to a cost polynomial in $D^{nd}$.

Decreasing the depth $\depth$ means increasing $\dalgo$, so that we have
to deal with higher-dimensional polar varieties; this in turn raises
the question of how to efficiently represent them. In the baby-steps /
giant-steps algorithm of~\cite{SaSc11}, we assume that $V$ is defined
by a single polynomial, and we let
$\dalgo ={\sf Choose}(d)\simeq \sqrt{n}$. In that case, the polar
variety has dimension close to $\sqrt{n}$, and we use Canny's
algorithm to process it, since polar varieties of hypersurfaces can be
described easily.

One expects to do better by choosing
$\dalgo={\sf Choose}(d)=\lfloor (d+3)/2 \rfloor \simeq \dim(V)/2$,
yielding a genuine divide-and-conquer algorithm, with a recursion
depth of $\log_2(d)$. We illustrate this in the next subsection.

However, in the context of such divide-and-conquer algorithms, given
algebraic sets $V,Q$ passed as input to ${\sf RoadmapRec}$, it does
not seem manageable from the complexity viewpoint to use generators of
the defining ideal of $V$ to define $\polar(e,\dalgo,V^\mA)$: we
already mentioned that polar varieties can be defined by the
cancellation of minors of a Jacobian matrix, but that there are too
many of them for us to control the complexity in a reasonable
manner. Our solution will be to represent $V$ in $\C^n$ as the Zariski
closure of the projection of an open subset of an algebraic set lying
in a higher-dimensional space.

In Section~\ref{sec:GLS}, we introduce this main technical
contribution, the use of a data structure that we call {\em
  generalized Lagrange systems}, for which we can describe all objects
arising throughout the algorithm and perform all required operations
in a cost matching the rough description above.


\subsection{Examples}\label{ex:abstract}

For an algebraic set $V$ of dimension $d=2$ in $3$-dimensional space,
there is only one possible behavior for the algorithm, which is to
choose $\dalgo=2$; in this case, we recover Canny's algorithm. The
polar variety $W$ and the fiber $V''$ are then both curves, so there
is no need to work further in the recursive
calls. Figure~\ref{fig:algo1} illustrates this process on the familiar
example of a torus (see also~\cite{Lavalle,BaPoRo06}). The main
features of the algorithm appear on this example: because they are
critical loci, polar varieties intersect each connected component of
$V\cap\Re^n$, but the intersection may not be connected; taking fibers
allows us to re-establish connectivity.

\begin{figure}[]
\centerline{
\includegraphics[width=6.5cm]{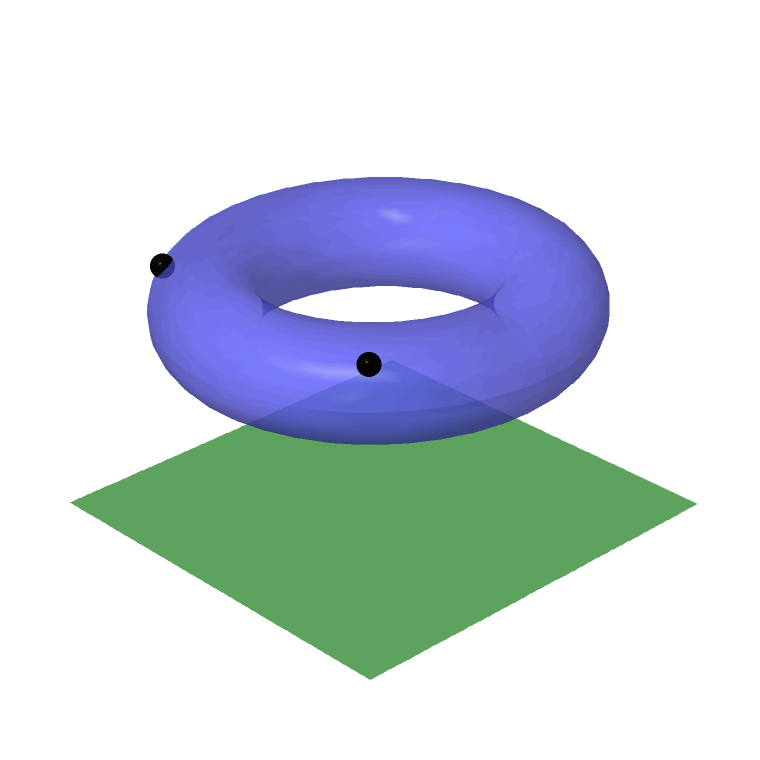}}
\centerline{\includegraphics[width=6.5cm]{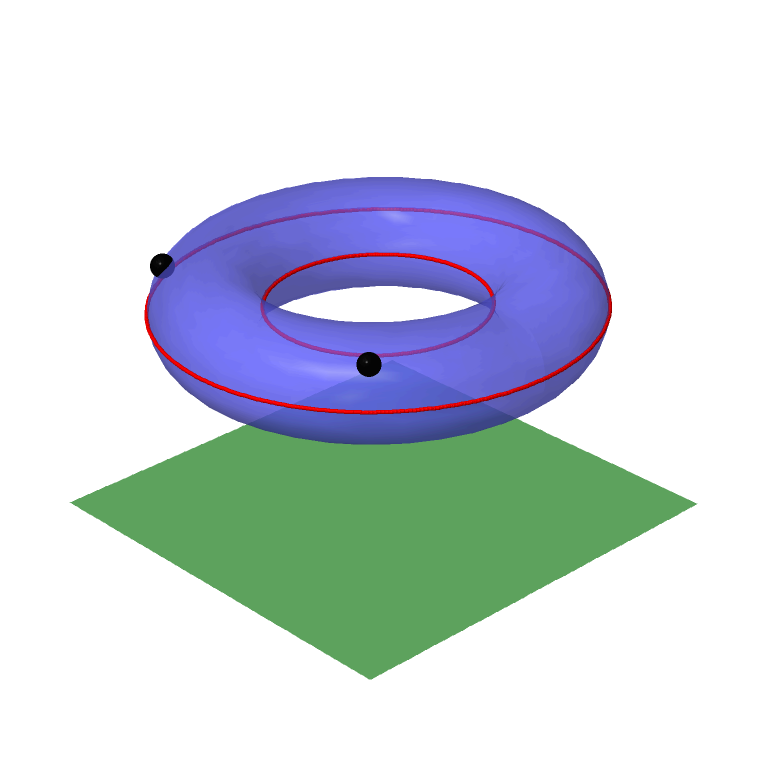}}
\centerline{\includegraphics[width=6.5cm]{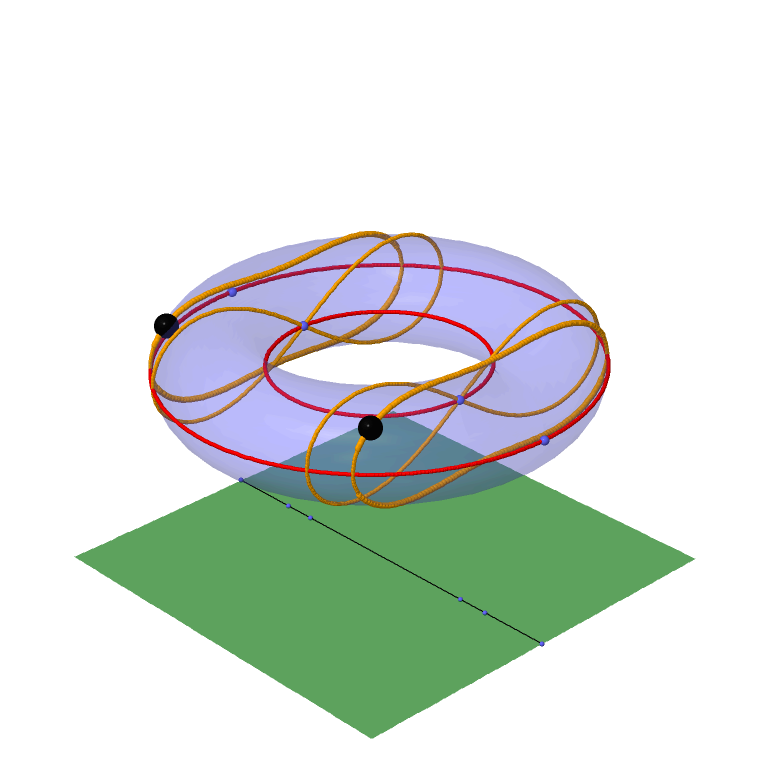}
}
\caption{The torus $V$ with $2$ control points (top), with its polar variety $\polar(0,2,V)$ (middle) and the whole roadmap (bottom)}
\label{fig:algo1}
\end{figure}

As mentioned in the previous subsection, we will be interested in the
divide-and-conquer approach where one takes
$\dalgo=\lfloor\frac{d+3}{2}\rfloor$ at every step.  In order to
illustrate the difference between this and Canny's original design, we
consider the algebraic set $V\subset \Co^6$ defined by the polynomials
$\f=(f_1, f_2)$ introduced in Example~\ref{example:Lagexchange}. The
algebraic set $V$ is smooth, equidimensional of dimension $4$ and
$V \cap \Re^6$ is compact. We take $C_0=\emptyset$; thus, on input
$(V, C_0)$, ${\sf MainRoadmap}$ simply performs a call to
${\sf RoadmapRec}$ with input $V$, $Q=\{\bullet\}$, $C=\emptyset$,
$d=4$ (we are in dimension $4$) and $e=0$ (we have fixed the value of
no variable).

Below, we describe the behaviour of ${\sf RoadmapRec}$ with the
function ${\sf Choose}(d)=\lfloor\frac{d+3}{2}\rfloor$, assuming that
all changes of variables satisfy the assumptions of Theorem~\ref{THEO:MAINABSTRACT}.
  \begin{description}
  \item[{\bf Steps~\ref{step:h:empty}--\ref{step:h:choosedalgo}}]
    We choose a matrix
    $\mA\in \GL(6, 0, \Q)$ and
    we take $\dalgo=\lfloor (4+3)/2 \rfloor=3$. 
\smallskip
  \item[{\bf Step~\ref{step:h:W}}] We compute a representation of the
    polar variety $W=\polar(0, 3, V^\mA)$.  By
    Proposition~\ref{prop:ch4}, if $W$ is not empty, it is equidimensional of dimension $2$.
\smallskip
  \item[{\bf Steps~\ref{step:h:C}--\ref{step:h:PP}}]
    Propositions~\ref{prop:ch5} and~\ref{sec:atlas:prop:summary1}
    imply that the sets $B, Q'', C', C''$ considered at these steps
    are finite, with $Q''\subset \Co^2$.
\smallskip
  \item[{\bf Step~\ref{step:h:rec2}}] We do a recursive call to ${\sf
    RoadmapRec}$ with input $W$, $Q=\{\bullet\}$ $C'$, $d=2$ (we are
    in dimension $2$) and $e=0$ (we have not fixed the value of any
    coordinate).

\smallskip

    A new matrix $\mA'\in \GL(6,0,\Q)$ is chosen at
    Step~\ref{step:h:chvar}, and we set $\dalgo=\lfloor (2+3)/2
    \rfloor=2$.  The finite sets computed at
    Steps~\ref{step:h:C}--\ref{step:h:PP} are denoted by $B_1, Q''_1,
    C'_1, C''_1$ and we have $Q''_1\subset \Co$.

\smallskip

    \begin{itemize}
    \item Proposition~\ref{prop:ch4} implies that the algebraic set
      $R'_1=\polar(0, 2, W^{\mA'})$ considered at Step~\ref{step:h:W}
      has dimension $1$ or is empty; it is returned by the recursive
      call of Step~\ref{step:h:rec2}.

\smallskip

    \item Proposition~\ref{sec:atlas:prop:summary1} implies that the
      algebraic set $R'_2=\fbr(W^{\mA'}, Q''_1)$ considered at
      Step~\ref{step:h:VQ} has dimension $1$ or is empty; it is
      returned by the recursive call of Step~\ref{step:h:rec1}.
    \end{itemize}
\smallskip
  \item[{\bf Step~\ref{step:h:VQ}}] We compute a representation of the
    fiber $V''=\fbr(V^\mA, Q'')$.
    Proposition~\ref{sec:atlas:prop:summary1} implies that $V''$ is
    either empty or equidimensional of dimension~$2$.
\smallskip
  \item[{\bf Step~\ref{step:h:rec1}}] We do a recursive call to ${\sf
    RoadmapRec}$ with input $V''$, $Q''$, $C''$, $d=2$ (we are in dimension $2$)
    and $e=2$ (since $V''$ lies over the finite set $Q''\subset \Co^2$).

\smallskip

    Since $\dim(V'')=2$, a new matrix $\mA''\in \GL(6,2,\Q)$ is chosen
    at Step~\ref{step:h:chvar}, and we set $\dalgo=\lfloor (2+3)/2
    \rfloor=2$.  The finite sets computed at
    Steps~\ref{step:h:C}--\ref{step:h:PP} are denoted by $B_2, Q''_2,
    C'_2, C''_2$, and we have $Q''_2\subset \Co$.

\smallskip

    \begin{itemize} 
    \item Proposition~\ref{prop:ch4} implies that the algebraic set
      $R''_1=\polar(2, 2, {V''}^{\mA''})$ considered at Step~\ref{step:h:W} has
      dimension $1$ or is empty. It is
      returned by the recursive call of Step~\ref{step:h:rec2}.

\smallskip
 
    \item Proposition~\ref{sec:atlas:prop:summary1} implies that the
      algebraic set $R''_2=\fbr({V''}^{\mA''}, Q''_2)$ considered at
      Step~\ref{step:h:VQ} has dimension $1$ or is empty. It is
      returned by the recursive call of Step~\ref{step:h:rec1}.
    \end{itemize}
\smallskip
\item[{\bf Step~\ref{step:h:final}}] We take the union of the
  algebraic sets returned by the recursive calls of
  Steps~\ref{step:h:rec2} and~\ref{step:h:rec1} and undo the linear
  change of variables induced by $\mA$.
  \end{description}

  Hence, the binary tree $\scrT$ defined in Subsection~\ref{ssec:correctness} has
  the following structure. 

\begin{center}
\begin{tikzpicture}[node distance=6cm, level distance=1.5cm,
level 1/.style={sibling distance=8cm, circle},
level 2/.style={sibling distance=2cm, circle}]
\node [arn_n] (Root) {$V$, $Q=\{\bullet\}\subset \Co^0$, $C=\emptyset$, $d=4$, $e=0$}
    child {
    node [arn_n] {$W, Q=\{\bullet\}, C', d=2, e=0$} 
    child { node[leaf] {$ d=1$}  }
    child { node[leaf] {$ d=1$} }
  }
  child {
    node [arn_n] {$V'', Q''\subset \Co^2, C'', d=2, e=2$}
    child { node[leaf] {$ d=1$} }
    child { node[leaf] {$ d=1$} }
  } ;

\end{tikzpicture}
\end{center}
The depth of the recursion is only $2$, while it would
be $3$ using Canny's algorithm.


\section{Generalized Lagrange systems}\label{sec:GLS}


\subsection{Overview}

The core of our construction is the following definition. When we use
this definition, the indeterminates will be $\X=X_1, \ldots, X_n$
together with some pre-existing blocks of Lagrange multipliers. In the
definition, we write these indeterminates as $\Y$.

\begin{definition}\label{def:LS}
  Let $\h=(h_1,\dots,h_c)$ be polynomials in $\KK[\Y]$, where $\KK$ is
  a field and $\Y$ a sequence of $N$ indeterminates, let
  $(L_1,\dots,L_c)$ be new indeterminates and let $\dalgo$ be an
  integer in $\{1,\dots,N-c\}$. Then $\lag(\h,\dalgo,(L_1,\dots,L_c))$ denotes the
  entries of the vector
  $$\left [ \begin{matrix} L_{1} & \cdots & L_{c} \end{matrix}
    \right ] \cdot \jac(\h,\dalgo).$$
\end{definition}

Because our assumption on $\dalgo$ implies that $c \le N-\dalgo$, the
existence of a non-zero vector $\bell=(\ell_1,\dots,\ell_c)$ that
cancels the new equations $\lag(\h,\dalgo,(L_1,\dots,L_c))$ characterizes the set
where the $c \times (N-\dalgo)$ matrix $\jac(\h,\dalgo)$ does not have
full rank $c$; this will allow us to describe polar varieties as {\em
  projections} of zeros of such systems.  The following example
illustrates this idea.

\begin{example}\label{example:Lag}
We continue with the polynomials $\f=(f_1,f_2)$ defined in
Example~\ref{example:Lagexchange}. We let $[L_1, L_2]$ be a row
vector of two new indeterminates, and we choose again
$\dalgo=3$. Then, $\lag(\f, 3, (L_1,L_2))$ denotes the entries of the vector
$$
\left [ \begin{matrix} 
L_{1} & L_{2} 
\end{matrix} \right ] 
\cdot 
\left [\begin{matrix}
2X_{4} & 2X_{5} & 0\\ 
0 & X_{3} & X_{1} \\
\end{matrix}\right ] = 
\left [\begin{matrix}
2L_1X_4~ & ~2L_1 X_5+L_2 X_3~ & ~L_2 X_1
\end{matrix}\right ]. 
$$
If we assume that $X_1$ is non-zero, the last equation becomes
$L_2=0$, and the second and third ones give
$L_1 X_4=L_1 X_5 =0$. If we furthermore introduce a dehomogenization
equation, such as for instance $2L_1 - L_2 =1$, we obtain $L_1=1/2$, 
$L_2=0$, together with $X_4=X_5=0$.
\end{example}
In this example, we can see the main feature of such Lagrange systems:
locally, one can solve for the unknowns $L_1,\dots,L_c$. The projection of the
solution set on the $\X$-space gives us equations $X_4=X_5=0$;
together with the original polynomials $f_1, f_2$, this yields the
equations that locally define the polar variety seen in
Example~\ref{example:Lagexchange}. The following proposition shows
that this is the case in general (in this proposition, we do not
discuss yet the dehomogenization we applied above, so all equations
remain homogeneous with respect to the Lagrange multipliers).

In what follows, given a non-zero polynomial $m$ in $\KK[\Y,L_1,\dots,L_c]$, for
some sequences of indeterminates $\Y$ and $(L_1,\dots,L_c)$ and a field $\KK$,
$\KK[\Y,L_1,\dots,L_c]_m$ denotes the ring of rational functions of the form
$P/m^r$, for $P$ in $\KK[\Y,L_1,\dots,L_c]$ and $r$ in $\N$.

\begin{proposition}\label{lemma:linearsolve}
  Let all notation be as in Definition~\ref{def:LS}, let $\pminor$ be
  a $(c-1)$-minor of $\jac(\h,\dalgo)$ and let $\iota$ be the index of
  the row of $\jac(\h,{\dalgo})$ not in $\pminor$.  If $\pminor \ne
  0$, there exist $(\rho_j)_{j=1,\dots,c,j\ne \iota}$ in
  $\KK[\Y]_{\pminor}$ such that the ideal $I$ generated in
  $\KK[\Y,L_1,\dots,L_c]_{\pminor}$ by $\h$ and $\lag(\h,\dalgo,(L_1,\dots,L_c))$ is the ideal
  generated by
  $$\h,\quad  L_\iota\,\sfH(\h,{\dalgo},\pminor),\quad (L_j  -\rho_j L_\iota)_{j=1,\dots,c, j \ne \iota},$$
  with $\sfH$ as in Definition~\ref{def:mH}.
\end{proposition}
\begin{proof} Without loss of generality, we write the proof in the case
  where $\pminor$ is the upper-left minor of $\jac(\h,{\dalgo})$. In
  particular, $\iota=c$ and the minors in $\sfH(\h,{\dalgo},\pminor)$ are
  built by successively adding to $\pminor$ the last row and columns
  $c,\dots,n-{\dalgo}$ of $\jac(\h,\dalgo)$; below, we denote these minors by
  $M_1,\dots,M_{d-{\dalgo}+1}$.  Write $\jac(\h,{\dalgo})$ as the matrix
$$\jac(\h,{\dalgo}) = \left [ \begin{matrix} {\bf m}_{c-1,c-1}& {\bf
        v}_{c-1,d-\dalgo+1} \\ {\bf u}_{1,c-1} & {\bf
        w}_{1,d-\dalgo+1} \end{matrix}\right ],$$ where subscripts
  denote dimensions. Since $\pminor=\det({\bf m})$ is a unit in
  $\KK[\Y,L_1,\dots,L_c]_{\pminor}$, the ideal considered in the proposition is generated in
  $\KK[\Y,L_1,\dots,L_c]_{\pminor}$ by the entries of
$$[L_{1}~\cdots~L_{c}] \ \jac(\h,{\dalgo})\ \left [\begin{matrix}
    {\bf m}^{-1}&{\bf 0}\\{\bf 0}&{\bf 1}\end{matrix}\right] \left
  [\begin{matrix} {\bf 1}&-{\bf v}\\{\bf 0}&{\bf 1}\end{matrix}\right]
=[L_{1}~\cdots~L_{c}] \ \left [\begin{matrix} {\bf 1}&{\bf
      0}\\{\bf u}{\bf m}^{-1}&{\bf w}-{\bf u}{\bf m}^{-1}{\bf
      v}\end{matrix}\right].
$$
The first $c-1$ entries are of the form $L_{j} + [{\bf u}{\bf m}^{-1}]_j
L_{c},$ so they are as prescribed, and the latter are checked to
be $M_1 L_{c}/\pminor,\dots,M_{d-{\dalgo}+1}L_{c}/\pminor$, by computing
minors of both sides the equality.
\end{proof}

The construction presented so far would be sufficient if only one
polar variety was needed. However, our abstract algorithm computes
polar varieties of polar varieties~\dots; as a result, we will have to
introduce several blocks of Lagrange multipliers.
Our starting point will be the $n$-dimensional space, endowed with
variables $\X=X_1,\dots,X_{n}$. To construct polar varieties in an
iterated manner, our blocks of Lagrange multipliers will be written
$\L_1,\dots,\L_k$, where each block $\L_i$ has the form
$\L_i=L_{i,1},\dots,L_{i,n_i}$, for some integers $n_1,\dots,n_k$. The
systems thus obtained will be called {\em generalized Lagrange
  systems}.

The purpose of this section is to give the precise definitions of
these objects and describe their main properties. Of particular
importance will be the notion of {\em normal form}, which expresses
the fact that one can solve for the Lagrange multipliers, as we did 
above in the case of a single block of multipliers.


\subsection{Definition of generalized Lagrange systems}\label{ssec:def:GLS}

The definition of generalized Lagrange systems is simple: it involves
straight-line programs and zero-dimensional parametrizations as
defined in Subsection~\ref{ssec:data}.

\begin{definition}\label{def:GLS}
A {\em generalized Lagrange system} is a triple $L=(\Gamma,\scrQ,\scrS)$,
where
\begin{itemize}
\item $\Gamma$ is a straight-line program evaluating a sequence $\F$
  of polynomials in $\QQ[\X,\L]$ of the form
  $\F=(\f,\f_1,\dots,\f_k)$, with $\L=(\L_1,\dots,\L_k)$ and where
\smallskip
  \begin{itemize}
  \item $\X=(X_1,\dots,X_n)$
\smallskip
  \item $\f=(f_1,\dots,f_p)$ is in $\QQ[\X]$ of cardinality $p$;
\smallskip
  \item for $i=1,\dots,k$, $\L_i=(L_{i,1},\dots,L_{i,n_i})$ is a block of $n_i$ variables;
\smallskip
  \item for $i=1,\dots,k$, $\f_i=(f_{i,1},\dots,f_{i,p_i})$ is in
    $\QQ[\X,\L_1,\dots,\L_i]$ of cardinality $p_i$ and $f_{i, j}$ has
    total degree at most $1$ in $\L_s$ for $1\leq j\leq p_i$ and $1 \le s \le i$;
  \end{itemize}
\smallskip
\item $\scrQ$ is a zero-dimensional parametrization with coefficients
  in $\QQ$, defining a finite set $Q=\Zeroes(\scrQ) \subset \C^e$;
\smallskip
\item $\scrS$ is a zero-dimensional parametrization with coefficients in
  $\QQ$, defining a finite set $S=\Zeroes(\scrS) \subset \C^n$ lying over
  $Q$;
\smallskip
\item for $i=0,\dots,k$, $(n + n_1 + \cdots+n_i) - (p+p_1 + \cdots+p_i) \ge e$.
\end{itemize}
We will also write $\F=(F_1,\dots,F_P)$ for the whole set of
equations, and let $N$ be the total number of variables, so that
$$N=n+n_1+\cdots+n_k \quad\text{and}\quad P=p+p_1+\cdots+p_k.$$
Finally, we will write $d=N-e-P$, so that by the last item above 
we have $d \ge 0$.
\end{definition}

We also attach to a generalized Lagrange system a combinatorial
information, its {\em type}, which allows us to easily derive
complexity estimates.

\begin{definition}\label{sec:lagrange:notation:typedegree}
  Let $L=(\Gamma,\scrQ,\scrS)$ be a generalized Lagrange system. Its {\em
    type} is the 4-uple $T=(k,\n,\p,e)$, where $k$,
  $\n=(n,n_1,\dots,n_k)$, $\p=(p,p_1,\dots,p_k)$ and $e$ are as in
  Definition~\ref{def:GLS}.
\end{definition}

In geometric terms, we will consider the set of zeros of $\F$ that lie
over $Q$ and avoid $S$, and most importantly the projection of this
set on the $\X$-space. In all that follows, this particular projection
will be denoted by $\pi_\X: \C^N \to \C^n$; the canonical projection
$(x_1, \ldots, x_n)\mapsto (x_1, \ldots, x_e)$ is still denoted by
$\pi_e$.
\begin{definition}\label{def:CZPC}
  Let $L=(\Gamma,\scrQ,\scrS)$ be a generalized Lagrange system, let
  $\F$ in $\QQ[\X, \L]$ be the sequence evaluated by $\Gamma$, and let
  $Q,S$ and $N$ be as in Definition~\ref{def:GLS}. We define the following
  objects:
\begin{itemize}
\item  $\Cons(L)=\fbr(V(\F),Q)-\pi_\X^{-1}(S)$; this is thus the set of all $(\x,\bell)$ in $\C^N$ that cancel $\F$,
  such that $\pi_e(\x)$ belongs to $Q$ and $\x$ is not in $S$;
\smallskip
\item ${\Proj}(L)=\pi_\X({\Cons(L)}) \subset \C^{n}$;
\smallskip
\item $\Clos{(L)}\subset \C^{n}$ is the Zariski closure of $\Proj(L)$.
\end{itemize}
Since $\Clos{(L)}$ is the object we will be most interested in, we
  will say that $L$ {\em defines} $\Clos{(L)}$.
\end{definition}

A few remarks are in order. First, note that the integer $d$ in
Definition~\ref{def:GLS} is the dimension one would expect for
$\Cons(L)$, if for instance the equations $\F$ define a reduced regular
sequence. Second, while we have
$\Proj(L) \subset \Clos{(L)}-S$, the inclusion may be strict,
as the following example shows (with $S=\emptyset$).

\begin{example}\label{example:Lag2}
  We illustrate these notions with the polynomials $\f=(f_1, f_2)$ of
  Examples~\ref{example:Lagexchange} and~\ref{example:Lag}; the only
  mild difference with the previous example is that Lagrange
  multipliers will now be denoted by $\L_1=[L_{1,1},L_{1,2}]$ instead
  of $[L_1,L_2]$. In this example, and its extensions below, we
  denote by $\Gamma$ any given straight-line program that
  evaluates~$\f$.

  Since $V(\f)$ is smooth, $L=(\Gamma, \scrQ, \scrS)$ is a generalized
  Lagrange system that defines $V(\f)$, where the zero-dimensional
  parametrizations $\scrQ=(\,)$ and $\scrS=(1)$ respectively define
  $\{\bullet\} \subset \Co^0$ and the empty set. There is nothing else
  to say for $L$, since there are actually no Lagrange multipliers in it.

  We saw that $\lag(\f,3,\L_1)$ is the sequence of polynomials
  $$2L_{1,1} X_{4}, \quad 2L_{1,1} X_{5} + L_{1,2} X_{3}, \quad
  L_{1,2} X_{1}.$$ Consider then the linear form
  $\ell=2L_{1,1}-L_{1,2}-1$ already used in Example~\ref{example:Lag};
  from this, we can derive a straight line program $\Gamma'$ that
  evaluates $(\f, \lag(\f, 3, \L_1), \ell)$. The triple $L'=(\Gamma',
  \scrQ, \scrS)$ is then a generalized Lagrange system of type $T=(1,
  (6,2), (2,4), 0)$.

  Example~\ref{example:Lag} implies that in the open set $\Open(X_1)$
  defined by $X_1 \ne 0$, $\Proj(L')$ is defined by
  $f_1=f_2=X_4=X_5=0$, so that $\Proj(L')$ coincides locally with the
  polar variety $\polar(0, 3, V)$. Globally, a calculation shows that
  the set $\Proj(L')$ consists of the polar variety $\polar(0, 3, V)$,
  minus the lines $(0,2,1,0,-1,u)_{u \in \Co}$ and $(0,-2,-1,0,1,u)_{u
    \in \Co}$.  The Zariski closure of $\Proj(L')$ is exactly $\polar(0,
  3, V)$.
\end{example}


\subsection{Definition of local and global normal form properties}\label{subsection:normalform}

We now introduce some properties, called {\em local} and {\em global
  normal forms}, which will be satisfied by the generalized Lagrange
systems that we consider to compute roadmaps. Given a generalized
Lagrange system $L=(\Gamma, \scrQ, \scrS)$ that defines
$V=\Clos{(L)}$, these properties will in particular allow us to define
charts and atlases related to $V$, establish dimension and smoothness
properties, and assert correctness of our algorithms.

We start with a definition of systems where the variables $\L$
are ``solved'' in terms of the variables $\X$. In all that follows, we
still write $\L=(\L_1,\dots,\L_k)$, with
$\L_i=(L_{i,1},\dots,L_{i,n_i})$ and $N=n+n_1+\cdots+n_k$. 

\begin{definition}\label{chap:lagrange:def:normalform}
  Let $M$ be non-zero in $\QQ[\X]$ and consider polynomials $\H$ in
  $\QQ[\X,\L]_M$, with $\X$ and $\L=(\L_1,\dots,\L_k)$ as above. We
  say that $\H$ is in {\em normal form} in $\QQ[\X,\L]_M$ if these
  polynomials have the form
  $$\H=\left ( h_1,\dots,h_c,
  (L_{1,j}-\rho_{1,j})_{j=1,\dots,n_1},\dots,
  (L_{k,j}-\rho_{k,j})_{j=1,\dots,n_k}\right ),$$ where all $h_i$ are
  in $\QQ[\X]$ and all $\rho_{\ell,j}$ are in~$\QQ[\X]_M$.  We call
  $\h=(h_1,\dots,h_c)$ and $\brho=(L_{i,j}-\rho_{i,j})_{1 \le i \le
    k, 1 \le j \le n_i}$ respectively the {\em $\X$-component} and the
  {\em $\L$-component} of $\H$.
\end{definition}
Remark that in this case, the total number of polynomials in $\H$ is
$c+N-n$.

\medskip

We can now define {\em local normal forms} for generalized Lagrange
systems; the existence of such local normal forms expresses the fact
that we can locally solve for the variables $\L$ over $V=\Clos{(L)}$,
while having a convenient local description of $V$.

\begin{definition}\label{chap:lagrange:def:nf}
  Let $L=(\Gamma,\scrQ,\scrS)$ be a generalized Lagrange system that defines
  a set $V$, and let 
all notation $Q,S,\dots$ be as in Definition~\ref{def:GLS}. A
  {\em local normal form} for $L$ is the data of
  $\phi=(\polmu,\poldelta,\h,\H)$ that satisfies the following conditions:
\begin{enumerate}
\item[${\lnf_1}$.] $\polmu$ and $\poldelta$ are in $\QQ[\X]-\{0\}$ and $\H$ is in
  normal form in $\QQ[\X,\L]_{\polmu \poldelta}$, with $\X$-component $\h=(h_1,\dots,h_c)$;
\smallskip
\item[${\lnf_2}.$] $|\H| = |\F|$, or equivalently $n-c=N-P$;
\smallskip
\item[${\lnf_3}.$] $\langle \F,I \rangle = \langle \H,I \rangle $ in
  $\QQ[\X,\L]_{\polmu \poldelta}$, where $I \subset \QQ[\X]$ is the defining
  ideal of $Q$;
\smallskip
\item[${\lnf_4}.$] $(\polmu,\h)$ is a chart of $(V,Q,S)$;
\smallskip
\item[${\lnf_5}.$] $\poldelta$ does not vanish on $\Open(\polmu) \cap \Proj(L)$.
\end{enumerate}
\end{definition}
The idea behind this definition is that the systems $\F$ and $\H$
define the same solutions $(\x,\bell)$, at least for those $\x$ that
lie above $Q$ and do not cancel $\polmu \poldelta$ (this is
${\lnf_3}$).  We ask that $\polmu$ defines the open set corresponding
to a chart of $V$ (this is
${\lnf_4}$), but we need more: expressing the variables $\L$ in
terms of $\X$ necessarily introduces a denominator, which is the
polynomial $\poldelta$; we authorize that it may vanish somewhere on
$V$, but not on $\Open(\polmu) \cap \Proj(L)$; this is ${\lnf_5}$.
Given a local normal form $\phi$ as above, we will call $\psi$ the
chart {\em associated} with $\phi$.

\begin{example}
  Continuing with the same example as above, $$\phi_1=\Big
  (X_1,\ 1,\ (f_1,f_2,X_4,X_5),\ (f_1,f_2,X_4,X_5,L_{1,1}-\frac
  12,L_{1,2}) \Big)$$ is a local normal form for $L'$, corresponding to
  the open set $\Open(X_1)$, giving us the chart
$(X_1,\ (f_1,f_2,X_4,X_5))$
 of $(W(0,3,V),
  \{\bullet\}, \emptyset)$; here, we have $\poldelta=1$ since solving
  for $L_{1,1}$ and $L_{1,2}$ introduces no further denominator.
  Corresponding to the open set $\Open(X_3)$, a calculation gives the local
  normal form
$$\phi_2 =
(X_3,\ X_3+X_5,\ (f_1,f_2,X_4,X_1X_5),\ (f_1,f_2,X_4,X_1X_5, L_{1,1}-
\frac 12 \frac{X_3}{X_3+X_5}, L_{1,2} + \frac{X_5}{X_3+X_5}),$$
with in particular the chart $(X_3, \ (f_1,f_2,X_4,X_1X_5))$ of
$(W(0,3,V), \{\bullet\}, \emptyset)$. Here, we have
$\poldelta=X_3+X_5$, which is the denominator we introduce in order to
solve the linear equations for $L_{1,1}$ and $L_{1,2}$ over
$\Open(X_3)$.  The locus where this denominator vanishes on
$\Open(X_3)\cap \polar(0,3,V)$ is precisely the two lines mentioned in
Example~\ref{example:Lag2}.
\end{example}

We can finally introduce the global version of the previous
property. Starting from a family of local normal forms
$\phi_i=(\polmu_i,\poldelta_i,\h_i,\H_i)$, we will cover $V-S$ using
the open sets $\Open(\polmu_i)$, in effect obtaining an atlas of
$(V,Q,S)$. However, we may not be able to ensure that the smaller open
sets $\Open(\polmu_i\poldelta_i)$ cover $V-S$ as well (since
$\Proj(L)$ may be smaller than $V-S$, as in the previous
example). Instead, given an ``interesting'' irreducible set $Y$
contained in $V$, but not in $S$, we add the condition that as soon as
some $\polmu_i$ does not vanish identically on $Y$, $\polmu_i
\poldelta_i$ itself does not vanish identically on $Y$, so we can make
sense of the corresponding description by the polynomials $\H_i$
almost everywhere on $Y$.  For instance, if $Y=\{\y\}$ is a single
point, and $\polmu_i$ does not vanish at $\y$, $\poldelta_i \polmu_i$
does not vanish there either.

Taking into account several such $Y$'s, not necessarily irreducible,
we are led to the following definition.

\begin{definition}\label{chap:lagrange:def:gnf}
  Let $L=(\Gamma,\scrQ,\scrS)$ be a generalized Lagrange system that defines
  a set $V$, and let 
all notation  $Q,S,\dots$ be as in Definition~\ref{def:GLS}. A {\em
    global normal form} of $L$ is the data of $\bphi=(\phi_i)_{1 \le i
    \le s}$ such that:
  \begin{enumerate}
  \item[${\gnf_1},$] each $\phi_i$ has the form
    $\phi_i=(\polmu_i,\poldelta_i,\h_i,\H_i)$ and is a local normal form of
    $L$;
\smallskip
  \item[${\gnf_2}.$] $\bpsi=(\polmu_i,\h_i)_{1 \le i \le s}$ is an atlas 
    of $(V,Q,S)$.
  \end{enumerate}
  Let further $\scrY=(Y_1,\dots,Y_r)$ be algebraic sets in $\C^n$.  A
  {\em global normal form} of $(L;\scrY)$ is the data of
  $\bphi=(\phi_i)_{1 \le i \le s}$ such that ${\gnf_1}$ and ${\gnf_2}$
  hold, and such that we also have, for $i$ in $\{1,\dots,s\}$ and $j$
  in $\{1,\dots,r\}$:
  \begin{enumerate}
  \item[${\gnf_3}.$] for any irreducible component $Y$ of $Y_j$
    contained in $V$ and such that $\Open(\polmu_i)\cap Y -S$ is not
    empty, $\Open(\polmu_i \poldelta_i)\cap Y -S$ is not empty.
  \end{enumerate}
\end{definition}
We say that $L$, resp.\ $(L,\scrY)$, has the {\em global normal form
  property} when there exists $\bphi$ as above satisfying $({\gnf_1},
{\gnf_2})$, resp.  $({\gnf_1}, {\gnf_2}, {\gnf_3})$.  Given a
global normal form $\bphi$ as above, we will call $\bpsi$ the atlas
{\em associated} with $\bphi$.

\begin{example}\label{example:globalnormalform}
  We already saw two charts for $L'$ above, built in the open sets
  $\Open(X_1)$ and $\Open(X_3)$, where $X_1$ and $X_3$ are two
  non-zero 1-minors of the truncated Jacobian of the original system
  $\f$. Two more minors can be considered, namely $X_4$ and $X_5$. The
  Lagrange system we consider has no solution over $\Open(X_4) \cap V$,
  so we need not consider $X_4$. For $X_5$, we obtain the local normal 
  form
  $$\phi_3 =
  (X_5,\ X_3+X_5,\ (f_1,f_2,X_1,X_3X_4),\ (f_1,f_2,X_1,X_3X_4, L_{1,1}-
  \frac 12 \frac{X_3}{X_3+X_5}, L_{1,2} + \frac{X_5}{X_3+X_5}).$$ One then
  checks that $\bphi=(\phi_1,\phi_2,\phi_3)$ is a global normal form
  for $L'$.
\end{example}

When $L$ possesses the global normal form property, one can deduce
several useful properties on the sets $\Cons(L)$ and $\Proj(L)$. For
instance, the following proposition is proved in
Section~\ref{chap:lagrange} of the electronic appendix.

\begin{proposition}\label{sec:lagrange:propertyP}\label{SEC:LAGRANGE:PROPERTYP}
  Let $L=(\Gamma,\scrQ,\scrS)$ be a generalized Lagrange system and
  let $\F=(F_1,\dots,F_P)$ in $\QQ[\X, \L]$ and $e \ge 0$ be as in
  Definition \ref{def:GLS}. If $L$ has the global normal form
  property, the following holds:
  \begin{itemize}
  \item the Jacobian matrix $\jac(\F,e)$ has full rank $P$ at every
    point $(\x,\bell)$ in $\Cons(L)$;
  \smallskip
  \item the restriction $\pi_\X: \Cons(L) \to \Proj(L)$ is a bijection.
\end{itemize}
\end{proposition}

It is important to note that just as for charts and atlases, while
local and global normal forms are a useful tool to establish
properties such as the ones above, the algorithms will not explicitly
compute any global normal form.


\subsection{Initialization and changes of variables}\label{ssec:initchgvar}

The simplest generalized Lagrange systems involve no Lagrange
multipliers at all: they essentially consist in a straight-line
program $\Gamma$ that computes a reduced re\-gular sequence
$\f=(f_1,\dots,f_p)$ in $\QQ[X_1,\dots,X_n]$, together with a
zero-dimensional parametrization of the singular locus of $V(\f)$.
Here, we take $e=0$ and thus $Q=\{\bullet\}$; in this case, recall
that we make the convention that the empty sequence $(\,)$ is seen as
a zero-dimensional parametrization of such a $Q$.

Because there is no canonical choice for a zero-dimensional
parametrization of the singular locus, we will take it as input.
When $V(\f)$ is smooth, so that $\sing(V(\f))$ is empty, we 
represent it using  the sequence $(1)$.


\begin{proposition}\label{sec:lagrange:prop:initnormalform}
  Let $\Gamma$ be a straight-line program that evaluates polynomials
  $\f=(f_1,\dots,f_p)$ in $\QQ[\X]$ that define a reduced regular
  sequence and such that $\sing(V(\f))$ is finite, and let $\scrS$ be
  a zero-dimensional parametrization of $\sing(V(\f))$.

  If $p < n$, then $L=(\Gamma,(\,),\scrS)$ is a generalized Lagrange
  system of type $(0,(n),(p),0)$ such that $\Clos{(L)}=V(\f)$. If
  $\scrY=(Y_1,\dots,Y_r)$ are algebraic sets contained in $\C^n$, then
  $(L;\scrY)$ has the global normal form property, with
  $\bphi=((1,1,\f,\f))$ as a global normal form.
\end{proposition}
The proof is an immediate consequence of the definitions.

\medskip

Our abstract algorithm in Section~\ref{ssec:abstractalgo} 
uses several changes of variables. In all 
cases, they are chosen in $\GL(n,e, \QQ)$, for some integers $e \le n$. 
Suppose then that $L=(\Gamma,\scrQ,\scrS)$ is a generalized Lagrange
system of type $(k,\n,\p,e)$, and recall that $\Gamma$ is a
straight-line program which evaluates a sequence of polynomials $\F$
in $\QQ[\X, \L]$ as in Definition~\ref{def:GLS}. For $\mA$ in $\GL(n,
e)$, we define $L^\mA$ as $L^\mA=(\Gamma^\mA, \scrQ, \scrS^\mA)$,
where $\Gamma^\mA$ is obtained from $\Gamma$ by applying the change of
variable $\Gamma$ to the $\X$-variables $X_1,\dots,X_n$ only; it
computes polynomials $\F^\mA$.  It is immediate that $L^\mA$ is a
generalized Lagrange system, of the same type as $L$.  Note also the
following straightforward equalities:
$$ \Proj(L^\mA)=\Proj(L)^\mA \quad \text{and} \quad
\Clos{(L^\mA)}=\Clos{(L)}^\mA.$$ We can apply the same construction to
systems in normal form. Given a local normal form
$\phi=(\polmu,\poldelta,\h,\H)$ of $\L$, we define $\phi^\mA$ in the natural
manner, as the 4-uple $(\polmu^\mA,\poldelta^\mA,\h^\mA,\H^\mA)$. Here as
well, for the last entry, we let $\mA$ act on the $\X$ variables of
the polynomials $\H$; thus, if $\H$ has the form
$$\H=\left ( h_1,\dots,h_c,
(L_{1,j}-\rho_{1,j})_{j=1,\dots,n_1},\dots,
(L_{k,j}-\rho_{k,j})_{j=1,\dots,n_k}\right ),$$ 
then $\H^\mA$ is 
$$\H=\left ( h^\mA_1,\dots,h^\mA_c,
(L_{1,j}-\rho^\mA_{1,j})_{j=1,\dots,n_1},\dots,
(L_{k,j}-\rho^\mA_{k,j})_{j=1,\dots,n_k}\right ).$$ 
Naturally, $\phi^\mA$ is a local normal form of $L^\mA$.

Finally, if $\bphi=(\phi_i)_{1 \le i \le s}$ is a global normal form
of $L$, resp.\ of $(L,(Y_1,\dots,Y_r))$, then
$\bphi^\mA=(\phi^\mA_i)_{1 \le i \le s}$ is a global normal form of
$L^\mA$, resp.\ of $(L^\mA,(Y^\mA_1,\dots,Y^\mA_r))$.


\subsection{Generalized Lagrange systems and polar varieties}\label{ssec:abstract:GLS:polar}

Starting from a generalized Lagrange system $L$ that defines an
algebraic set $V=\Clos{(L)}$, we are now interested in constructing
generalized Lagrange systems for polar varieties of $V$. The following
definition associates to $L$ a new generalized Lagrange system
$\Polarlag(L,\u,\dalgo)$, where $\dalgo$ will denote the index of the
polar variety we consider, and $\u$ is a vector of constants.  This
definition generalizes the process described in
Example~\ref{example:Lag}.

\begin{definition}\label{sec:lagrange:notation:polar}
  Let $L=(\Gamma,\scrQ,\scrS)$ be a generalized Lagrange system of
  type $(k,\n,\p,e)$, with $\n=(n,n_1,\dots,n_k)$,
  $\p=(p,p_1,\dots,p_k)$ and $\L=\L_1,\dots,\L_k$, and let $\F \subset
  \QQ[\X,\L]$ be the polynomials computed by $\Gamma$. Let
  $N=n+n_1+\cdots+n_k$, $P=p+p_1+\cdots+p_k$, and let $\dalgo$ be an
  integer in $\{1,\dots,N-e-P\}$.

  Let $\L_{k+1}=L_{k+1,1},\dots,L_{k+1,P}$ be new indeterminates. For $\u=(u_1,\dots,u_{P})$ in $\QQ^{P}$, define
  $$\F'_{\u}=\Big (\F, \  \lag(\F,e+\dalgo, \L_{k+1}), \ 
  u_{1} L_{k+1,1} + \cdots + u_{P} L_{k+1,P} - 1\Big ),$$ where
  $\lag(\F,e+\dalgo,\L_{k+1})$ denotes the entries of the vector
  $$\left [ \begin{matrix} L_{k+1,1} & \cdots & L_{k+1,P} \end{matrix}
    \right ] \cdot \jac(\F,e+\dalgo).$$ We define $\Polarlag(L,\u,\dalgo)$
  as the triple $(\Gamma'_{\u},\scrQ,\scrS)$, where $\Gamma'_\u$ 
  is a straight-line program that evaluates $\F'_\u$. 
\end{definition}
In other words, we take our input equations and we add the linear
equations involving Lagrange multipliers that describe that
$\jac(\F,e+\dalgo)$ is rank-deficient. The affine form involving the
coefficients $\u$ allows us to dehomogenize the equations involving 
the new Lagrange multipliers.

In order to make this definition unambiguous, let us explain how to
construct $\Gamma'_\u$: take the straight-line program $\Gamma$,
together with the straight-line program obtained by applying
Baur-Strassen's differentiation algorithm \cite{BaurStrassen} (to
compute the Jacobian of $\F'_\u$), and do the matrix-product vector
and the dot product in the direct manner.

\begin{lemma}\label{lemma:GLS:typeW}
  With notation as above, $\Polarlag(L,\u,\dalgo)$ is a generalized
  Lagrange system of type $(k+1,\n',\p',e)$, with
  $\n'=(n,n_1,\dots,n_k,P)$ and $\p'=(p,p_1,\dots,p_k,N-e-\dalgo+1)$.  In
  particular, the total numbers of indeterminates and equations
  involved in $\Polarlag(L,\u,\dalgo)$ are respectively
  $$N'=N+P \quad\text{and}\quad P'=N+P-e-(\dalgo-1),$$ so that
  $N'-e-P'=\dalgo-1$.
\end{lemma}
\begin{proof}
  The only point that deserves mention is that $N'-P' \ge e$, which is
  true because $N'-P'=e+(\dalgo-1)$ and $\dalgo \ge 1$.
\end{proof}

In the following proposition, we state how normal form properties are
transferred from $L$ to $\Polarlag(L,\u,\dalgo)$.  The statement of
the proposition is technical; here is what it says in essence. If $L$
defines $V \subset \C^n$ and has the global normal form property, we
expect $\Polarlag(L,\u,\dalgo)$ to possess it as well, and we expect
this generalized Lagrange system to define the polar variety
$\polar(e,\dalgo,V)$, at least in generic coordinates. However, this
may not be the case: the global normal form of $L$ involves
denominators, and if these denominators vanish identically on some
irreducible component of $\polar(e,\dalgo,V)$, we are not able to
derive a meaningful description at these points.

This proposition shows why we introduced the notion of global normal
form attached to $(L;Y_1,\dots,Y_r)$, for some algebraic sets
$Y_1,\dots,Y_r$. Indeed, we will prove that for a generic choice of
$\u$ and of a change of coordinates $\mA$, if we assume that
$(L^\mA,\polar(e,\dalgo,V^\mA))$, or equivalently
$(L,\polar(e,\dalgo,V^\mA)^{\mA^{-1}})$, have the global normal form
property, then our claim indeed holds. Since we may have to prove the
same property for further constructions of polar varieties (or fibers,
where the same issue will arise), we are led to the general kind of
statement made here, involving some extra algebraic sets $Y_i$. The
proof of the following proposition is in
Section~\ref{chapter:constructionspolar} of the electronic appendix.

\begin{proposition}\label{sec:lagrange:prop:transfer:polar}\label{SEC:LAGRANGE:PROP:TRANSFER:POLAR}
  Let $Q \subset \C^e$ be a finite set and let $V \subset \C^n$ and
  $S\subset \C^n$ be algebraic sets lying over $Q$, with $S$ finite.
  Suppose that $V$ is equidimensional of dimension $d$, with finitely
  many singular points.

  Let $\bpsi$ be an atlas of $(V,Q,S)$, let $\dalgo$ be an integer in
  $\{2,\dots,d\}$ such that $\dalgo \le (d+3)/2$, and let
  $\mA\in\GL(n,e)$ be in the open set
  $\scrGpolar(\bpsi,V,Q,S,\dalgo)$ defined in
  Proposition~\ref{prop:ch4}; write
  $W=\polar(e,\dalgo,V^\mA)$.

  Let $L=(\Gamma,\scrQ,\scrS)$ be a generalized Lagrange system such
  that $V=\Clos{(L)}$, $Q=\Zeroes(\scrQ)$ and $S=\Zeroes(\scrS)$.  Let
  $\scrY=(Y_1,\dots,Y_r)$ be algebraic sets in $\C^n$ and let finally
  $\bphi$ be a global normal form for $(L; (W^{\mA^{-1}}, \scrY))$ such
  that $\bpsi$ is the associated atlas of $(V,Q,S)$.

  There exists a non-empty Zariski open set
  $\mathscr{I}(L,\bphi,\mA,\scrY)\subset \C^{P}$ such that for all
  $\u$ in $\mathscr{I}(L,\bphi,\mA,\scrY) \cap \QQ^P$, the following holds:
  \begin{itemize}
  \item $\Polarlag(L^\mA, \u, \dalgo)$ is a generalized Lagrange system
    that defines $W$;
\smallskip
  \item If $W$ is not empty, then $(\Polarlag(L^\mA, \u, \dalgo);
    \scrY^\mA)$ admits a global normal form whose atlas is $\atlaspolar(\bpsi^\mA,V^\mA,Q,S^\mA,\dalgo)$ (Definition
    \ref{sec:atlas:notation:polar}).
 \end{itemize}
\end{proposition}

\begin{example}\label{example:Lag3}
  We illustrate Definition~\ref{sec:lagrange:notation:polar} starting
  from the polynomials $\f=(f_1,f_2)$ we have been using since
  Example~\ref{example:Lagexchange}, namely
$$\left \{\begin{array}{l}
f_1=X_1^2 + X_4^2 + X_5^2 - 1\\[2mm]
f_2=X_2X_3 + X_1X_6 + X_3X_5 - 1.
  \end{array} \right .$$
  Recall that $\Gamma$ denotes a straight-line program that evaluates
  $\f$.  Since $V(\f)$ is smooth, we saw that $L=(\Gamma, (\,), (1))$ is a
  generalized Lagrange system that defines $V(\f)$.
  
  Next, take the generalized Lagrange system $L'=(\Gamma', (\,), (1))$
  of Example~\ref{example:Lag2}, where $\Gamma'$ is a straight-line
  program that evaluates $\F=(\f, \lag(\f, 3, \L_1), \ell)$ and $\ell$
  is the linear form $\ell=2L_{1,1}-L_{1,2}-1$.  This generalized
  Lagrange system was  built according to
  Definition~\ref{sec:lagrange:notation:polar}, starting from
  polynomials $\f=(f_1,f_2)$; we saw in Example~\ref{example:Lag2}
  that $\Clos{(L')}$ is none other than $\polar(0,3, V(\f))$, which has
  dimension~$2$ (in this case, no change of variables was necessary).

  To do one more step, we now consider a $(6\times 6)$ invertible
  matrix $\mA$ with entries in $\Q$.  Taking $\L_2=[L_{2,1}, \ldots,
  L_{2, 6}]$ and a random vector $\u=(u_1,u_2,u_3,u_4,u_5,u_6)\in
  \Q^6$, we build now a new generalized Lagrange system
  $\Polarlag({L'}^\mA,2,\u) = (\Gamma'', \scrQ, \scrS)$ as in
  Definition~\ref{sec:lagrange:notation:polar}, where $\Gamma''$ is a
  straight-line program that evaluates $\F^\mA$, $\lag(\F^\mA, 2)$ and
  the linear form
$$
\ell'=u_1L_{2, 1}+u_2L_{2,2}+u_3L_{2,3}+u_4L_{2,4}+u_5L_{2,5}+u_6L_{2,6}-1. 
$$
Proposition~\ref{sec:lagrange:prop:transfer:polar} shows 
that for a generic choice of $\mA$ and $\u$,
$\Clos{(\Polarlag({L'}^\mA, 2, \u))}$ coincides with $\polar(0, 2, \polar(0, 3, V(\f))^\mA)$.

Since the type of $L'$ was $(1,(6,2),(2,4),0)$, and since we add $7$
equations and $6$ variables, the type of the new generalized Lagrange
system is $(2,(6,2,6),(2,4,7),0)$.
\end{example}


\subsection{Generalized Lagrange systems and fibers}\label{ssec:abstract:GLS:fibers}

Suppose that $L=(\Gamma,\scrQ,\scrS)$ is a generalized Lagrange system
which defines an algebraic set $V=\Clos{(L)}\subset \C^n$; let
$Q=\Zeroes(\scrQ)$. We now build a generalized Lagrange system that
defines a fiber of the form $\fbr(V,\fiber2)$, for some $\fiber2
\subset \C^{e+\dalgo-1}$ lying over $Q$, and we study its properties
(remark that the notation $\fiber2$ or $\dalgo$ are those that were
used in our abstract algorithm).

\begin{definition}\label{sec:lagrange:notation:fiber}
  Let $L=(\Gamma,\scrQ,\scrS)$ be a generalized Lagrange system of
  type $(k,\n,\p,e)$.  Let $N=n+n_1+\cdots+n_k$ and
  $P=p+p_1+\cdots+p_k$ and let $\dalgo$ be an integer in
  $\{1,\dots,N-e-P\}$.
  
  Let $\param2$ be a zero-dimensional parametrization that encodes a
  finite set $\fiber2 \subset \C^{e+\dalgo-1}$ and let finally
  $\paramsing2$ be a zero-dimensional parametrization that encodes a
  finite set $\fibersing2 \subset \C^n$ lying over $\fiber2$.  We
  define $\Fiberlag(L,\param2,\paramsing2)$ as the triple
  $(\Gamma,\param2,\paramsing2)$.
\end{definition}
In all cases where we use this construction, $L$ will have the global
normal form property; then, the quantity $N-e-P$ that appears above is
none other than the dimension $d$ of $\Clos{(L)}$. 

\begin{lemma}\label{lemma:GLS:typeF}
  With notation as above, $\Fiberlag(L,\param2,\paramsing2)$ is a
  generalized Lagrange system of type $(k,\n,\p,e+\dalgo-1)$.  In
  particular, the total numbers of indeterminates and equations
  involved in $\Fiberlag(L,\param2,\paramsing2)$ are respectively
  $N'=N$ and $P'=P,$ so that $N'-(e+\dalgo-1)-P'=d-(\dalgo-1)$.
\end{lemma}
\begin{proof}
  The only point that deserves a verification is that
  $(n+n_1+\cdots+n_k)-(p+p_1+\cdots+p_k) \ge e+\dalgo-1$, or
  equivalently that $N-e- P\ge \dalgo-1$; this inequality actually
  holds by definition of $\dalgo$.
\end{proof}

We can finally show how global normal form properties are inherited
through this construction. The following statement is a close analogue
for fibers of the one we obtained previously for polar varieties; its
proof is in Section~\ref{chapter:constructionsfiber}.

\begin{proposition}\label{sec:lagrange:prop:transfer:fiber}\label{SEC:LAGRANGE:PROP:TRANSFER:FIBER}
  Let $Q \subset \C^e$ be a finite set and let $V \subset \C^n$ and
  $S\subset \C^n$ be algebraic sets lying over $Q$, with $S$ finite.
  Suppose that $V$ is equidimensional of dimension $d$, with finitely
  many singular points.
  
  Let $\bpsi$ be an atlas of $(V,Q,S)$, let $\dalgo$ be an integer in
  $\{2,\dots,d\}$ such that $\dalgo \le (d+3)/2$, and let $\mA\in
  \GL(n,e)$ be in the open set $\scrGfiber(\bpsi,V,Q,S,\dalgo)$
  defined in Proposition~\ref{sec:atlas:prop:summary1}; write
  $W=\polar(e,\dalgo,V^\mA)$.

  Let $\param2$ and $\paramsing2$ be zero-dimensional parametrizations
  with coefficients in $\QQ$ that respectively define a finite set
  $\fiber2 \subset \C^{e+\dalgo-1}$ lying over $Q$ and the set
  $\fibersing2 = \fbr(S^\mA \cup W,\fiber2)$, and let
  $\Vfiber=\fbr(V^\mA,\fiber2)$.

  Let $L=(\Gamma,\scrQ,\scrS)$ be a generalized Lagrange system such
  that $V=\Clos{(L)}$, $Q=\Zeroes(\scrQ)$ and $S=\Zeroes(\scrS)$. Let
  $\scrY=(Y_1,\dots,Y_r)$ be algebraic sets in $\C^n$ and let finally
  $\bphi$ be a global normal form for $(L;({\Vfiber}^{\mA^{-1}},
  \scrY))$ such that $\bpsi$ is the associated atlas of $(V,Q,S)$.
  Then the following holds:
  \begin{itemize}
  \item $\Fiberlag(L^\mA, \param2, \paramsing2)$ is a generalized
    Lagrange system which defines $\Vfiber;$
\smallskip
  \item if $\Vfiber$ is not empty,
    $(\Fiberlag(L^\mA, \param2, \paramsing2); \scrY^\mA)$ admits a
    global normal form whose atlas is
    $\atlasfiber(\bpsi^\mA,V^\mA,Q,S^\mA,\fiber2)$ (Definition
    \ref{sec:atlas:notation:fiber}).
 \end{itemize}
\end{proposition}


\section{Solving generalized Lagrange systems}\label{sec:solvingglag}

We now describe the routines used in our main algorithm
for ``solving'' generalized Lagrange systems --- for instance, to
compute a one-dimensional parametrization of a set of the form
$\Clos{(L)}$, when it is known to have dimension one, or compute
critical points on this set.

These routines rely on variants of algorithms in \cite{GiLeSa01}, and
as such, their running time depends on degree bounds for the varieties
defined by the systems we have to solve (see Section~\ref{sec:prelim}
for preliminaries on degrees of algebraic sets). Ge\-neralized Lagrange
systems possess a multi-homogeneous structure which will allow us to
give strong degree bounds for these varieties. We start by stating
these bounds; they are variants of the classical one (see
e.g.~\cite{vanderWaerden29,vanderWaerden78}) adapted to our
setting. Next we state our complexity results for various
computational problems as mentioned above.


\subsection{Degree bounds}\label{ssec:degreebounds}

Let $e$ be a non-negative integer. In this section, we consider
polynomials $\F=(F_1,\dots,F_P)$ in $\C[\X,\L_1,\dots,\L_k]$, with
$n-e,n_1,\dots,n_k$ variables in the respective blocks
$\X,\L_1,\dots,\L_k$, and having degrees in $\X,\L_1,\dots,\L_k$ respectively bounded by
$$
\begin{array}{cl}
(D_1,0, 0,\dots,0) & \text{for  $F_1,\dots,F_{p}$} \\
(D_2,1,0,\dots,0) & \text{for $F_{p+1},\dots,F_{p+p_1}$} \\
\vdots & \vdots \\
(D_2,1,1,\dots,1) & \text{for $F_{p+\cdots+p_{k-1}+1},\dots,F_{p+\cdots+p_k}$},
\end{array}
$$ so that $P=p+p_1+ \cdots +p_k$; the total number of variables is
$N-e$, with $N=n+n_1+\cdots+n_k$. We assume that all $p_i$'s and $n_i$'s
are positive (including $n$ and $p$).

The structure of these systems is essentially that of the generalized
Lagrange systems our algorithm will construct by repeating the
constructions defined in Section~\ref{sec:GLS}, except that we only
have $n-e$ variables in the first block: this accounts for the fact
that in generalized Lagrange systems, we will ensure that the first
$e$ variables can assume finitely many values (so we may essentially
see them as being constant for such degree calculations). As for
generalized Lagrange systems, we assume that the following properties
are satisfied for $0 \le i \le k$:
\begin{equation}\label{eq:N-P}
  N_i-e \ge P_i, \quad\text{with}\quad N_i = n+\cdots+n_i\quad\text{and}\quad P_i = p+\cdots+p_i.
\end{equation}
Remark in particular that $N=N_k$ and $P=P_k$.

\begin{definition}\label{sec:solvelagrange:notationsNPdelta}
  Given integers $k,e,D_1,D_2$ and sequences of integers $\n=(n,n_1,\dots,n_k)$ and
  $\p=(p,p_1,\dots,p_k)$ as above, we define $\DF(k, e,\n,\p,D_1, D_2)$, as
  $$
  \DF(k, e, \n,\p,D_1, D_2)=(P_k+1)^k D_1^{p} D_2^{n-e-p}  \prod_{i=0}^{k-1} N_{i+1}^{N_i-e-P_i}.
  $$
\end{definition}
The quantity $\DF(k, e,\n,\p,D_1, D_2)$ is derived from calculations
that are in essence intersection products in the Chow ring of the
multi-projective space $\P^{n-e} \times \P^{n_1} \times \cdots \times
\P^{n_k}$. Concretely, this means that it is an upper bound on the sum
of coefficients of a truncated product of the form
$$ (D_1 \zeta_0)^{p} (D_2 \zeta_0 + \zeta_1)^{p_1} \cdots  
(D_2 \zeta_0 + \zeta_1 + \cdots + \zeta_k)^{p_k} \bmod\langle
\zeta_0^{n-e+1},\zeta_1^{n_1+1}\dots,\zeta_k^{n_k+1}\rangle$$
in $\Z[\zeta_0,\zeta_1,\dots,\zeta_k]$. 

Let $\Delta$ be the ideal generated by all $P$-minors of
$\jac(\F)$. We consider the Zariski closure $V$ of $V(\F)-V(\Delta)$:
the irreducible components of $V$ are thus those irreducible
components of $V(\F)$ where $\jac(\F)$ has generically full rank $P$.
For $i \le P$, let $V_i$ be the Zariski closure of
$V(F_1,\dots,F_i)-V(\Delta)$; thus, $V_P=V$. Our main result in this
subsection is the following degree bound.

\begin{proposition}\label{sec:posso:prop2}\label{SEC:POSSO:PROP2}
  Suppose that all inequalities in~\eqref{eq:N-P} hold. Then, 
  for $i$ in $\{1,\dots,P\}$, $V_i$ has degree at most $\DF(k, e,\n,\p,D_1,
  D_2)$.
\end{proposition}
This proposition is proved in Section~\ref{chap:degreebounds} of
the electronic appendix of this paper. The key feature in this bound
is that even though we have many equations of degree $D_1$ or $D_2$
(later on, we will have $P\simeq n^2$ such equations), these degrees
only appear with exponent $O(n)$; the other terms in the product are
of a combinatorial nature. This is to be compared with a direct
application of B\'ezout's theorem, which would lead to bounds of the
form $D_1^{p} D_2^{P-p}$ and would be unsuitable for our purposes.

We will use this result in the following context. If
$L=(\Gamma,\scrQ,\scrS)$ is a generalized Lagrange system with the
global normal form property, Proposition~\ref{SEC:LAGRANGE:PROPERTYP}
will allow us to apply the previous proposition; it will imply that
the algebraic set $\Clos{(L)}$ has degree at most $\kappa \delta$,
with $\kappa = \deg(\scrQ)$ and $\delta=\DF(k, e,\n,\p,D, D-1)$:
indeed, there are $\kappa$ points in $\Zeroes(\scrQ)$, and we
apply the proposition above each of these points. 


\subsection{Algorithms for generalized Lagrange systems}\label{solvinglagrange:algorithms}

Let $L=(\Gamma, \scrQ, \scrS)$ be a generalized Lagrange system of
type $(k, \n, \p, e)$, where $\Gamma$ is a straight-line program of
length $E$ that computes polynomials $\F=(\f, \f_1, \ldots, \f_k)$,
with $\f\subset \QQ[\X]$ and $\f_i\subset \QQ[\X, \L_1, \ldots, \L_i]$
for $1 \le i \le k$.  Below, the integer $D$ denotes the maximum
degree of the polynomials in $\f$; then, by Definition~\ref{def:GLS},
for $1 \le i \le k$, the maximum of the degrees in $\X$
(resp. $\L_1,\dots,\L_i$) of the polynomials in $\f_i$ is at most
$D-1$ (resp.\ $1$). We write $d=N-e-P$, still using the notation of 
Definition~\ref{def:GLS}.

The goal of this paragraph is to state complexity estimates for
routines which take as input $L$, assuming that $L$ has the global 
normal form property, and do the following: 
\begin{itemize}
\item return a one-dimensional parametrization of $\Clos{(L)}$, when
  this set has dimension  $d=1$;
\smallskip
\item return a zero-dimensional parametrization of $\polar(e,1,
  \Clos{(L)})-S$, with $S=\Zeroes(\scrS)$, assuming that this set is
  well-defined and finite;
\smallskip
\item take a zero-dimensional parametrization $\scrQ''$ as an
  additional input and return a zero-dimensional parametrization of
  the fiber
  $\fbr(\Clos{(L)}, \Zeroes(\scrQ''))$, assuming that this set is finite.
\end{itemize}
Whenever the algorithms below return paramet\-rizations, these
para\-met\-rizations will have coefficients in $\QQ$.

These algorithms are based on the geometric resolution algorithm
of~\cite{GiLeSa01,Lecerf2000} (that itself follows previous work
of~\cite{GiHeMoPa95,GiHeMoPa97,GiHeMoMoPa98}), with a slight
modification. Indeed, since the generalized Lagrange system
$L=(\Gamma,\scrQ,\scrS)$ defines an algebraic set $V=\Clos{(L)}$ lying
over $Q=\Zeroes(\scrQ)$, our algorithms need to ``solve'' equations with
coefficients in $\QQ[T]/\langle q \rangle$, where $q$ is the
squarefree polynomial appearing in $\scrQ$. If $q$ was irreducible, we
could directly apply the techniques in~\cite{GiLeSa01,Lecerf2000}, but
in general, we have to rely on {\em dynamic evaluation}
techniques~\cite{D5}. Details are given in Section~\ref{chap:posso}.

The quantity $\delta=\DF(k, e,\n,\p,D, D-1)$ introduced in
Definition~\ref{sec:solvelagrange:notationsNPdelta} will play a
crucial role in the cost analysis of our algorithms, as will the
degrees $\degQ=\deg(\scrQ)$ and $\degS = \deg(\scrS)$.  The main
feature of the geometric resolution algorithm
of~\cite{GiLeSa01,Lecerf2000}, which will be crucial for our main
result, is that its running time is {\em polynomial} in these
quantities.

We recall that our algorithms are randomized, in a sense that was
described in the introduction: failure can occur only if one of our
randomly chosen values happens to belong to some hypersurface of the
corresponding parameter space.  

We start with the routine
${\sf SolveLagrange}$ that computes a one-dimensional parametrization
of $\Clos{(L)}$ when it has dimension $d=1$; the proof is in
Section~\ref{proof:solvelagrange:prop:basicsolve} of the electronic
appendix.

\begin{proposition}\label{chap:solvelagrange:prop:basicsolve}\label{CHAP:SOLVELAGRANGE:PROP:BASICSOLVE}
  There exists a probabilistic algorithm ${\sf SolveLagrange}$ which
  takes as input a generalized Lagrange system $L$ of type $(k, \n,
  \p, e)$ such that $N-e-P=1$, and returns either a one-dimensional
  parametrization with coefficients in $\QQ$ or ${\sf fail}$ using
  $$\softO(N^3(E+N^3) (D+k) \degQ^3\delta^3 + N\degQ
  \delta\degS^2)$$ operations in~$\QQ$, using the notation introduced
  above. If either
  \begin{itemize}
  \item $\Clos{(L)}$ is empty,
\smallskip
  \item or $L$ has a global normal form,
  \end{itemize}
  then in case of success, the output of ${\sf SolveLagrange}$
  describes $\Clos{(L)}$.  In addition, $\Clos{(L)}$ has degree at most
  $\degQ \delta$.
\end{proposition}

Next, we state complexity estimates for computing $\polar(e, 1,
\Clos{(L)})-S$, with $S=\Zeroes(\scrS)$, whenever this set is
well-defined and zero-dimensional.  For a proof of the following
proposition, see Section~\ref{proof:main:critical} of the electronic
appendix.

\begin{proposition}\label{sec:main:critical}\label{SEC:MAIN:CRITICAL}
  There exists a probabilistic algorithm ${\sf W}_1$ which takes as
  input a generalized Lagrange system $L$ of type $(k, \n, \p, e)$ and
  returns either a zero-dimensional parametrization with coefficients
  in $\QQ$ or ${\sf fail}$ using
  $$\softO( (k+1)^{2d+1}  N^{4d+8}E D^{2d+1} \degQ^2 \delta^2 +N\degS^2)$$
 operations in~$\QQ$, using the notation introduced
  above. If either  $\Clos{(L)}$ is empty, or
  \begin{itemize}
    \item $\Clos{(L)}$ is $d$-equidimensional (so that $\polar(e, 1,
    \Clos{(L)})$ is well-defined),
\smallskip
  \item $\polar(e, 1, \Clos{(L)})$ is finite,
\smallskip
  \item and $(L; \polar(e, 1, \Clos{(L)}))$ has a global normal form,
  \end{itemize}
  then in case of success, the output of ${\sf W}_1$ describes
  $\polar(e,1,\Clos{(L)})-S$, with $S=\Zeroes(\scrS)$. In addition, the
  finite set $\polar(e,1,\Clos{(L)})-S$ has degree at most
  $\degQ\delta N^d (D-1+k)^d$.
\end{proposition}

Finally, we give complexity estimates for the computation of fibers.
The following proposition is proved in Section~\ref{proof:prop:fiber}
of the electronic appendix.

\begin{proposition}\label{sec:main:prop:fiber}\label{SEC:MAIN:PROP:FIBER}
  There exists a probabilistic algorithm ${\sf Fiber}$ which takes as
  input a generalized Lagrange system $L=(\Gamma,\scrQ,\scrS)$ of type
  $(k, \n, \p, e)$ and a zero-dimensional parametrization $\scrQ''$ of
  degree $\degQ''$, defining a finite set of points
  $Q''\subset \C^{e+d}$ lying over $Q=\Zeroes(\scrQ)$, and which returns
  either a zero-dimensional parametrization with coefficients in $\QQ$
  or ${\sf fail}$ using
  $$\softO( N^3(N E+N^3)D {\degQ''}^2 \delta^2 + N\degS^2 )$$
  operations in~$\QQ$, using the notation introduced
  above. If either
  \begin{itemize}
  \item $\Clos{(L)}$ is empty,
\smallskip
  \item or $\fbr(\Clos{(L)}, Q'')$ is finite and $(L; \fbr(\Clos{(L)}, Q''))$
    has a global normal form,
  \end{itemize}
  then in case of success, the output of ${\sf Fiber}$ describes
  $\fbr(\Clos{(L)}, Q'')-S$, with $S=\Zeroes(\scrS)$. In addition, $\fbr(\Clos{(L)}, Q'')-S$ has
  degree at most $\degQ''\delta$.
\end{proposition}


\section{Main algorithms}\label{sec:mainalgo}

We finally describe and prove the correctness of our main algorithms;
they are the concrete version of the abstract algorithms
${\sf RoadmapRec}$ and ${\sf MainRoadmap}$ given in
Section~\ref{ssec:abstractalgo}. Whereas we had maintained some
flexibility in the choice of the parameter $\dalgo$ in these abstract
algorithms, we now choose the value $\dalgo=\lfloor (d+3)/2\rfloor$,
as we saw that it leads to a recursion tree of logarithmic depth.  

The geometric objects taken as input or constructed in the algorithms
of Section~\ref{ssec:abstractalgo} will be encoded by the generalized
Lagrange systems introduced in Section~\ref{sec:GLS} and (for finite
sets) by zero-dimensional parametrizations; the output is encoded by a
one-dimensional parametrization. 


\subsection{Description}

We start with the description of our recursive algorithm
${\sf RoadmapRecLagrange}$, which is the concrete counterpart of
algorithm ${\sf RoadmapRec}$ of Section~\ref{ssec:abstractalgo}. It
takes as input
\begin{itemize}
\item a generalized Lagrange system $L=(\Gamma, \scrQ, \scrS)$ 
which has the global normal form property;
\smallskip
\item a zero-dimensional parametrization $\scrC$ that describes
  control points. 
\end{itemize}

In order to implement all operations, we use basic subroutines
manipulating zero-dimensional or one-dimensional parametrizations such
as ${\sf Union}$ (of zero-dimensional or one-dimensional
parametrizations), ${\sf Projection}$ (of zero-dimensional
parametrizations) and ${\sf Lift}$ (that computes
$\fbr(\Zeroes(\scrC),\Zeroes(\scrQ))$ where $\scrC$ and $\scrQ$ are
zero-dimensional parametrizations). These routines are described in
Section~\ref{chap:posso} of the electronic appendix; here, we will
simply mention that they run in time polynomial in $N$ and all
involved degrees. We also use the routines ${\sf SolveLagrange}$,
${\sf W_1}$ and ${\sf Fiber}$ which were described in the previous
section.

Some of these routines may return ${\sf fail}$; in that
case, by convention, the algorithm ${\sf RoadmapRecLagrange}$ and the
upcoming top-level algorithm ${\sf MainRoadmapLagrange}$ return
${\sf fail}$ as well. Finally, in the algorithm, for
$\mA\in \GL(n, e, \QQ)$, we use notation such as $\scrC^\mA$ for
readability; more precisely, this should be read as
${\sf ChangeVariables}(\scrC,\mA)$ where ${\sf ChangeVariables}$ is a
routine that takes as input $\scrC$ and $\mA$ and returns a
zero-dimensional parametrization that encodes $\Zeroes(\scrC)^\mA$.

\medskip\noindent ${\sf RoadmapRecLagrange}$$(L,\,\scrC)$ \hfill $L=(\Gamma,\scrQ,\scrS)$
\begin{enumerate}\itemsep0em
\item\label{rmplag:step:1} if $d=N-e-P\leq 1$, return ${\sf SolveLagrange}(L)$
\item\label{rmplag:step:2} let $\mA$ be a random change of variables
  in $\GL(n,e,\QQ)$ and $\u$ be a random vector in $\QQ^P$
\item\label{rmplag:step:3} let $\dalgo=\lfloor (d+3)/2 \rfloor$ \hfill
  $\dalgo \ge 2;\ \dalgo\simeq d/2$
\item\label{rmplag:step:4} let $L' = \Polarlag(L^\mA,\u,\dalgo)$ \hfill
  $d_{L'}=\dalgo-1\simeq d/2$
\item\label{rmplag:step:5} let $\scrB = {\sf Union}({\sf W}_1(L'), \scrC^\mA)$ \hfill $\dim(\Zeroes(\scrB))= 0$
\item\label{rmplag:step:6} let $\scrQ'' = {\sf
    Projection}(\scrB,e+\dalgo-1)$ \hfill $\dim(\Zeroes(\scrQ''))= 0$
\item\label{rmplag:step:7} let $\scrC' = {\sf Union}(\scrC^\mA, {\sf
    Fiber}(L', \scrQ''))$ \hfill new control points; $\dim(\Zeroes(\scrC'))=
  0$
\item\label{rmplag:step:9} let $\scrC'' = {\sf Lift}(\scrC',\scrQ'')$
  \hfill new control points; $\dim(\Zeroes(\scrC''))= 0$
\item\label{rmplag:step:7-a} let $\scrS' = {\sf Union}(\scrS^\mA, {\sf Fiber}(L', \scrQ''))$ \hfill  $\dim(\Zeroes(\scrS'))= 0$
\item\label{rmplag:step:9-a} let $\scrS'' = {\sf Lift}(\scrS',\scrQ'')$  \hfill  $\dim(\Zeroes(\scrS''))= 0$
\item\label{rmplag:step:8}\label{step:l:pp} let $\mathscr{R}'={\sf RoadmapRecLagrange}(L',\, \scrC')$
\item\label{rmplag:step:10} let $L''=
  \Fiberlag(L^\mA,
  \scrQ'',\scrS'')$ \hfill $d_{L''}=d-(\dalgo-1)\simeq d/2$
\item\label{rmplag:step:11} let $\mathscr{R}''={\sf RoadmapRecLagrange}(L'',\, \scrC'')$
\item\label{rmplag:step:12} return ${\sf  Union}({\mathscr{R}'}^{\mA^{-1}}, {\mathscr{R}''}^{\mA^{-1}})$
\end{enumerate}

Our main algorithm takes the following input:
\begin{itemize}
\item a straight-line program $\Gamma$ that computes a reduced regular
  sequence $\f=(f_1, \ldots, f_p)$ in $\QQ[\X]=\QQ[X_1, \ldots, X_n]$,
  such that $V(\f)$ satisfies the assumptions of our main theorem,
\smallskip
\item a zero-dimensional parametrization $\scrC$ encoding a finite set
  of points in $V$.
\end{itemize}
It starts by constructing a zero-dimensional parametrization $\scrS$
which encodes $\sing(V(\f))$ using a routine ${\sf SingularPoints}$,
then calls ${\sf RoadmapRecLagrange}$, taking as input the generalized
Lagrange system $(\Gamma,(\,),\scrS)$. The routine ${\sf
  SingularPoints}$ is described in Section~\ref{sec:posso:singularpoints} of the
electronic appendix.

\medskip\noindent
${\sf MainRoadmapLagrange}(\Gamma, \scrC_0)$
\begin{enumerate}
\item\label{step:main:1} $\scrS={\sf SingularPoints}(\Gamma)$
\smallskip
\item return ${\sf RoadmapRecLagrange}((\Gamma,(\,),\scrS), {\sf Union}(\scrC_0, \scrS))$
\end{enumerate}


\subsection{Correctness}\label{ssec:lag:correctness}

To prove the correctness of ${\sf MainRoadmapLagrange}$ on input
$(\Gamma, \scrC_0)$, it is sufficient to prove the correctness of
${\sf RoadmapRecLagrange}$ with input $(\Gamma, (\,),\scrS)$ and ${\sf
  Union}(\scrC_0, \scrS)$.

The strategy of our proof is to establish that this algorithm computes
the same objects as ${\sf RoadmapRec}$ when taking
$\dalgo=\lfloor(d+3)/2\rfloor$.  As in
Subsection~\ref{ssec:correctness}, we consider the binary tree $\scrT$
recording the recursive calls to ${\sf RoadmapRec}$.

To each node of the tree $\scrT$, one can now associate integers
$(k_\nodetau, \n_\nodetau, \p_\nodetau, e_\nodetau)$, that will be the
type of the generalized Lagrange system $L_\nodetau$ given as input to
${\sf RoadmapRecLagrange}$ in the corresponding recursive call.  We
can then denote by $P_\nodetau$ the sum of the entries of $\p_\nodetau$
(that is, the total number of equations in $L_\nodetau$). With this
notation, our correctness statement can be formulated as follows.

\begin{proposition}\label{prop:correctnessglobal}\label{PROP:CORRECTNESSGLOBAL}
  Consider polynomials $\f=f_1,\dots,f_p$ in $\QQ[X_1,\dots,X_n]$,
  given by a straight-line program $\Gamma$, that define a reduced
  regular sequence.

  Suppose that $V=V(\f) \subset \C^n$ has finitely many singular
  points and that $V(\f)\cap \R^n$ is bounded. Consider also a
  zero-dimensional parametrization $\scrC_0$ that describes a finite
  set $C_0 \subset \C^n$.
  
  Suppose that the matrices $(\mA_\nodetau)_{\nodetau  \text{~internal node of~} \scrT}$
  satisfy the assumptions of Theorem~\ref{THEO:MAINABSTRACT}. Then,
  there exists a family of non-empty Zariski open sets
  $\scrIopen_\nodetau\subset \C^{P_\nodetau}$, for $\nodetau$ 
  an internal node of $\scrT$, such that the following holds.

  Consider vectors $(\u_\nodetau)_{\nodetau \text{~internal node of~}
    \scrT}$, with $\u_\nodetau$ in $\QQ^{P_\nodetau}$ for all
  $\nodetau$.  If, for all internal nodes $\nodetau$ of $\scrT$,
  $\u_\nodetau$ is in $\scrIopen_\nodetau$, $\mA_\nodetau$
and $\u_\nodetau$ are used in 
  the corresponding recursive call of ${\sf RoadmapRecLagrange}$, 
 and if all calls to
  subroutines such as ${\sf Union}$, ${\sf Projection}$, ${\sf W_1}$,
  ${\sf Lift}$ are successful, then ${\sf MainRoadmapLagrange}(\Gamma,
  \scrC_0)$ returns a roadmap of $(V, C_0)$.
\end{proposition}
The proof is given in Section~\ref{chap:correctness} of the electronic
appendix; we briefly discuss its main points here.

As in Subsection~\ref{ssec:correctness}, to each node $\nodetau$ of
$\scrT$ are associated the algebraic sets
$V_\nodetau,\ B_\nodetau,\dots$ that are used by our abstract
algorithm at the corresponding recursive call. In addition, we now
also have a generalized Lagrange system $L_\nodetau$, together with
zero-dimensional parametrizations $\scrB_\nodetau, \scrQ''_\nodetau$,
etc.  The gist of the proof is to establish that at each such node
$\nodetau$, $L_\nodetau$ defines $V_\nodetau$, and similarly
$\Zeroes(\scrB_\nodetau)=B_\nodetau$, etc.  

In order to prove this by induction, we rely on
Propositions~\ref{sec:lagrange:prop:transfer:polar}
and~\ref{sec:lagrange:prop:transfer:fiber}. They show the existence of
a Zariski open $\scrIopen_\nodetau\subset \C^{P_\nodetau}$ such 
that if $\u_\nodetau$ belongs to $\scrIopen_\nodetau$, then the
generalized Lagrange systems $L'_\nodetau$ and $L''_\nodetau$ defined
at Steps~\ref{rmplag:step:4} and~\ref{rmplag:step:10} respectively
define the polar variety $W_\nodetau$ and the fiber $V''_\nodetau$.

 In order to apply these propositions, we need to assume that
 $L_\nodetau$ has the global normal form property; then, we know that
 this property is transferred to the descendants $L'_\nodetau$ and
 $L''_\nodetau$. However, we pointed out while stating the two
 propositions above that we need slightly stronger assumptions: when
 for instance we build polar varieties, we actually need $(L_\nodetau,
 W_\nodetau^{\mA_\nodetau^{-1}})$ to have the global normal form
 property in order to deduce that it is still the case for
 $L'_\nodetau$.  Having in mind to apply this property recursively
 means that at the top-level, the initial generalized Lagrange system
 must have the global normal form property in conjunction with a host
 of algebraic sets, corresponding in essence to all objects built
 throughout the algorithm. This is however precisely guaranteed by
 Proposition~\ref{sec:lagrange:prop:initnormalform}.


\subsection{Complexity analysis}\label{ssec:complexity}

This final paragraph is devoted to the complexity analysis of
Algorithm ${\sf MainRoadmapLagrange}$.  In the last 
section of the electronic appendix, we
prove the following result. Taken with Proposition~\ref{prop:correctnessglobal},
it establishes the main result stated in the introduction.

\begin{proposition}\label{prop:complexity:mainlagrange}\label{PROP:COMPLEXITY:MAINLAGRANGE}
  Consider polynomials $\f=f_1,\dots,f_p$ in $\QQ[X_1,\dots,X_n]$ of
  degrees bounded by $D$, given by a straight-line program $\Gamma$ of
  length $E$, that define a reduced regular sequence.  

  Suppose that $V=V(\f) \subset \C^n$ has finitely many singular
  points and that $V(\f)\cap \R^n$ is bounded. Consider also a
  zero-dimensional parametrization $\scrC_0$ of degree $\degC$ that describes a finite
  set $C_0 \subset \C^n$.

  Suppose that all matrices $\mA_\nodetau$ and all vectors $\u_\nodetau$
  satisfy the assumptions of Proposition~\ref{prop:correctnessglobal},
  and that all calls to subroutines such as ${\sf Union}$, ${\sf
    Projection}$, ${\sf W_1}$, ${\sf Lift}$ are successful. Then, ${\sf MainRoadmapLagrange}(\Gamma, \scrC_0)$ either returns
  ${\sf fail}$ or returns a one-dimensional parametrization of degree
  bounded by
\[
\softO \left (
\degC 16^{3d}  (n \log_2(n))^{2(2d+12\log_2(d))(\log_2(d)+6)}D^{(2n+1)(\log_2(d)+4)}
\right )\]
using 
\[
\softO \left (
\degC^3 16^{9d} E (n \log_2(n))^{6(2d+12\log_2(d))(\log_2(d)+7)}D^{3(2n+1)(\log_2(d)+5)}
\right ) 
\]
arithmetic operations in $\QQ$, with $d=n-p$.
\end{proposition}
We refer the reader to Section~\ref{chap:main:sec:complexite} of the
electronic appendix for the detailed cost analysis of this
proposition. Instead, we give here the main lines of an argument that
shows that the running time is polynomial in $\degC (nD)^{n \log(d)}$.

The divide-and-conquer nature of the algorithm implies that at all
stages, the total number of variables $N$ in the generalized Lagrange
systems we handle is $O(n^2)$; as a result, the quantity $\DF(k,
e,\n,\p,D, D-1)$ associated to any of these generalized Lagrange
systems is seen to become $(nD)^{O(n)}$.  

In geometric terms, all the inputs to our algorithms are pairs of the
form $(V,Q)$, with $V$ lying over a finite set $Q$, together with
control points $C$. Using the upper bound above,
Proposition~\ref{sec:posso:prop2} implies that the degree of the fiber
of $V$ above each point of $Q$ is $(nD)^{O(n)}$.

We also need to control the growth of the sets $Q$. Using the degree
bound in Proposition~\ref{sec:main:critical}, one can deduce that the
degree of $Q$ (as well as that of all finite sets computed in the
algorithm, and in particular the set of control points) grows by a
factor $(nD)^{O(n)}$ through each recursive call. Hence, all these
sets admit an overall degree bound of the form $\degC (nD)^{O(n
  \log(d))}$. The running time of the algorithm can be analyzed along
the same lines, once we notice that for the subroutines we use, the
running time is essentially polynomial in the input and output
degrees.


\subsection{Example}\label{example:final}

  We illustrate the execution of ${\sf MainRoadmapLagrange}$ when
  $\Gamma$ is a straight-line program evaluating the polynomials
  $\f=(f_1, f_2)\subset \Q[X_1, \ldots, X_6]$ given in
  Example~\ref{example:Lagexchange} and $\scrC_0=(1)$ is the parametrization
  encoding the empty set (so we have no control points).

  In the example of Subsection~\ref{ex:abstract}, we showed the
  execution of the divide-and-conquer version of the abstract
  algorithm ${\sf RoadmapRec}$ on the variety $V=V(\f)$.  In what
  follows, we focus on data representation by means of generalized
  Lagrange systems, and in particular on their {\em types}; recall that
  they take the form $(k,\n,\p,e)$, where $k$ is the number of blocks
  of Lagrange multipliers that were introduced, $\n$ gives the number
  of unknowns in each block, $\p$ gives the number of equations in each block, and
  $e$ indicates how many variables are fixed. The set $\Clos{(L)}$
  defined by a generalized Lagrange system $L$ is expected to have
  dimension $|\n| -e - |\p|$, where $|\ |$ denotes the sum of the
  entries of a vector. The reader can then verify how the description
  below matches that in  Subsection~\ref{ex:abstract}.

  Since $V$ is smooth, the parametrization $\scrS$ computed by ${\sf
    SingularPoints}$ defines the empty set and ${\sf
    RoadmapRecLagrange}$ is called with inputs the generalized
  Lagrange system $L=(\Gamma, (\, ), (1))$ and the parametrization
  $(1)$; the generalized Lagrange system $L$ has type $(0,(6),(2),0)$.

  In what follows, we assume that we are under the assumptions of
  Proposition~\ref{PROP:CORRECTNESSGLOBAL}, so that correctness is
  guaranteed.
  \begin{description}
  \item[{\bf Steps~\ref{rmplag:step:1}--\ref{rmplag:step:3}}] We have $d=6-2=4$; a matrix
    $\mA\in \GL(6, 0, \Q)$ and a vector $\u=(u_1, u_2)\in \Q^2$ are
    randomly chosen (Step~\ref{rmplag:step:2}) and we have $\dalgo=3$
    (Step~\ref{rmplag:step:3}).  \smallskip
  \item[{\bf Step~\ref{rmplag:step:4}}] We construct the generalized
    Lagrange system $L'=\Polarlag(L^\mA, \u, 3)$; it is the triple
    $(\Gamma', (), (1))$ where $\Gamma'$ is a straight-line program
    evaluating
$$
f^\mA_1, f^\mA_2, \quad [L_{1,1}, L_{1,2}]\cdot
\left [\begin{matrix}
\frac{\partial f^\mA_1}{\partial X_4} & \frac{\partial f^\mA_1}{\partial X_5} & \frac{\partial f^\mA_1}{\partial X_6} \\
\frac{\partial f^\mA_2}{\partial X_4} & \frac{\partial f^\mA_2}{\partial X_5} & \frac{\partial f^\mA_2}{\partial X_6} \\
\end{matrix}\right ], \quad 
u_1L_{1,1}+u_2L_{1,2}-1.
$$ This
construction is essentially what was described in
Examples~\ref{example:Lag} and~\ref{example:Lag2} (except that we did not apply a change of
variables in those examples).

\smallskip

The type of $L'$ is $(1, (6, 2), (2, 4), 0)$. By
Proposition~\ref{sec:lagrange:prop:transfer:polar}, $\Clos{(L')}$ is
the polar variety $W=\polar(0, 3, V(\f^\mA))$; by
Proposition~\ref{prop:ch4}, if it is not empty, it has dimension $2$,
as confirmed by the type of $L'$, since $2=(6+2)-0-(2+4)$.

\smallskip

  \item[{\bf Step~\ref{rmplag:step:8}}] This step consists in a
    recursive call to ${\sf RoadmapRecLagrange}$ with inputs $L'$ and
    $\scrC'$, where $\scrC'$ is constructed in
    Steps~\ref{rmplag:step:5}--\ref{rmplag:step:9-a}.  In this
    recursive call, we have $d=2$ and $\dalgo=2$.  Denoting by
    $\mA'\in \GL(6, 0, \Q)$ the matrix chosen at
    Step~\ref{rmplag:step:2} of that recursive call, the behavior is
    as follows:

\smallskip

    \begin{itemize}
    \item the generalized Lagrange system constructed at
      Step~\ref{rmplag:step:4} has type  $(2, (6,
    2, 6), (2, 4, 7), 0)$; it encodes  $\polar(0, 2, {W}^{\mA'})$, which either is empty
      or has dimension $1= (6+2+6)-0-(2+4+7)$. Its construction
      was illustrated in Example~\ref{example:Lag3}.
\smallskip
    \item the generalized Lagrange system constructed at
      Step~\ref{rmplag:step:10} has type $(1, (6, 2), (2, 4), 1)$ and
      encodes the fiber $\fbr(W^{\mA'}, \Zeroes(\scrQ''_1))$, where
      $\scrQ''_1$ is built at Step~\ref{rmplag:step:6} of that
      recursive call; it is either empty or has dimension
      $1=(6+2)-1-(2+4)$.
    \end{itemize}
\smallskip
    The recursive calls at Steps~\ref{rmplag:step:8}
    and~\ref{rmplag:step:11} will consist in executing
    ${\sf SolveLagrange}$ on their respective inputs and return
    one-dimensional parametrizations. The last step takes the union of
    the curves encoded by these parametrizations.
\smallskip

\item[{\bf Step~\ref{rmplag:step:10}}] At this step the generalized
  Lagrange system $L''=\Fiberlag(L^\mA, \scrQ'', \scrS'')$ is
  constructed. It has type $(0,(6),(2),2)$. 
  Proposition~\ref{sec:lagrange:prop:transfer:fiber} ensures that
  $L''$ defines the fiber
  $V''=\fbr(V^\mA, \Zeroes(\scrQ''))$, which is equidimensional of dimension $2=6-2-2$,
  if it is not empty.  \smallskip
  \item[{\bf Step~\ref{rmplag:step:11}}] This step consists in a
    recursive call to ${\sf RoadmapRecLagrange}$ with inputs $L''$ and
    $\scrC''$. Here, we have $d=2$ and $\dalgo=2$. Denoting by
    $\mA''\in \GL(6, 2, \Q)$ the matrix  chosen at Step~\ref{rmplag:step:2}, 
    we now have the following behavior:
\smallskip
    \begin{itemize}
    \item the generalized Lagrange system constructed at
      Step~\ref{rmplag:step:4} has type $(1, (6, 2), (2, 3), 2)$; it
       encodes $\polar(2,2,{V''}^{\mA''})$, which either is empty
      or has dimension $1$;
\smallskip    
\item the generalized Lagrange system constructed at
      Step~\ref{rmplag:step:10} has type $(0, (6), (2), 3)$ and will
      encode the fiber $\fbr({V''}^{\mA''}, \Zeroes(\scrQ''_2))$, which is either
      empty or has dimension $1$.
    \end{itemize}
\smallskip
The output of this step is a one-dimensional parametrization of the union of these curves.
\smallskip
  \item[{\bf Step~\ref{rmplag:step:12}}] This last step takes the
    union of the one-dimensional parametrizations computed through the
    recursive calls of Steps~\ref{rmplag:step:8}
    and~\ref{rmplag:step:11}, and restores the initial coordinates.
  \end{description}
  In the figure below, we show how the recursive calls are organized
  into a binary tree. The labels of the internal nodes of the tree indicate the
  input of ${\sf RoadmapRecLagrange}$ and the dimension of the set
  it defines; at, the leaves, the input defines a curve.
  \begin{center}
\begin{tikzpicture}[node distance=4cm, level distance=1.5cm,
level 1/.style={sibling distance=3.5cm, circle},
level 2/.style={sibling distance=2cm, circle}]
\node [arn_n] (Root) {$L=(\Gamma,(\,),(1))$, $d=4$}
    child {
    node [arn_n] {$(L', \scrC'), \quad d=2$} 
    child { node[leaf] {$ d=1$}  }
    child { node[leaf] {$ d=1$} }
  }
  child {
    node [arn_n] {$(L'', \scrC''), \quad d=2$}
    child { node[leaf] {$ d=1$} }
    child { node[leaf] {$ d=1$} }
  } ;

\end{tikzpicture}
\end{center}

\paragraph*{Acknowledgments}
  This research was supported by Institut Universitaire de France, the
  GeoLMI grant (ANR 2011 BS03 011 06) of the French National Research
  Agency, NSERC and the Canada Research Chairs program.

  We thank Saugata Basu and Marie-Fran\c{c}oise Roy for useful
  discussions during the preparation of this article. We also wish to 
  thank the referees of a previous version of this article for their
  very helpful comments.

\newpage
\tableofcontents
\newpage
\bibliographystyle{plain}
\bibliography{theo1}

\appendix

\section{Preliminaries}\label{sec:preliminaries}

In Section~\ref{sec:prelim}, we introduced basic material on algebraic
sets. In this section, we further discuss {\em locally closed sets},
basic properties of polar varieties and, in the last section, of
charts and atlases that are used further.


\subsection{Locally closed sets}\label{chap:prelim:sec:definitions}

We say that a subset $\lcs$ of $\C^n$ is {\em locally closed} if it
can be written $\lcs=\BasicOpen \cap \algZ$, with $\BasicOpen$ Zariski
open and $\algZ$ Zariski closed. For $\x$ in such a $\lcs$, we define
$\T_\x \lcs$ as $\T_\x \algZ$ (this is independent of the choice of
$\algZ$ or $\BasicOpen$).

The {\em dimension} of $\lcs$ is defined as that of its Zariski closure
$V$, and we say that $\lcs$ is equidimensional if $V$ is. When it is the
case, we define $\reg(\lcs)=\reg(V) \cap \lcs$ and $\sing(\lcs)=\sing(V)\cap
\lcs$; we say that $\lcs$ is non-singular if $\reg(\lcs)=\lcs$.

A first example of a locally closed set is the set $\reg(V)$, for $V$
an equidimensional algebraic set.  The following construction shows
some other locally closed sets that will arise naturally in the
sequel. Let $\f=(f_1,\dots,f_p)$ be polynomials in
$\C[X_1,\dots,X_n]$, with $p \le n$. We define $\oreg(\f)$ as the set
of all $\x$ in $V$ such that $\jac(\f)$ has full rank $p$ at $\x$.
Since $\jac(\f)$ having rank less than $p$ is a closed condition,
$\oreg(\f)$ is locally closed.

We also define $\freg(\f)$ as the Zariski closure of $\oreg(\f)$.  It
is the union of the irreducible components $V_i$ of $V(\f)$ such that
$\jac(\f)$ has generically full rank $p$ on $V_i$; if $\freg(\f)$ is
not empty, it is $(n-p)$-equidimensional by the Jacobian criterion
\cite[Theorem 16.19]{Eisenbud95}. Besides, if $\jac(\f)$ has full rank
$p$ at some point $\x\in \freg(\f)$, $\x$ is in $\reg(\freg(\f))$, so
we have $\oreg(\f) \subset \reg(\freg(\f))$. The converse may not be
true, so that the inclusion may be strict in general.

Slightly more generally, let $Q$ be a finite subset of $\C^e$ and let
$\f=(f_1,\dots,f_p)$ be in $\C[X_1,\dots,X_n]$, with now $p \le n -e
$. Just as we defined $\oreg(\f)$ and $\freg(\f)$ when $e=0$, we can
define $\oreg(\f,Q)$ and $\freg(\f,Q)$: the former is the set of all
$\x$ in $\fbr(V(\f),Q)$ such that $\jac(\f,e)$ has full rank $p$ at
$\x$, and $\freg(\f,Q)$ is the Zariski closure of $\oreg(\f,Q)$.  By
the Jacobian criterion, $\freg(\f,Q)$ is either empty or
$(n-e-p)$-equidimensional.


The following lemma will help us to give local descriptions of
algebraic sets.
\begin{lemma}\label{lemma:prelim:locallyclosed}
  Let $V \subset \C^n$ be an algebraic set and let $\BasicOpen \subset
  \C^n$ be a Zariski open set. Suppose that there exists an integer
  $c$, and that for all $\x$ in $\BasicOpen \cap V$ there exist
  \begin{itemize}
  \item an open set $\BasicOpen'_\x \subset \BasicOpen$ that contains $\x$,
\smallskip
  \item polynomials $\h_\x=(h_{\x,1},\dots,h_{\x,c})$ in $\C[X_1,\dots,X_n]$,
    with $c \le n$,
  \end{itemize}
  such that 
  \begin{itemize}
  \item $\BasicOpen'_\x \cap V = \BasicOpen'_\x \cap V(\h_\x)$
\smallskip
  \item  $\jac(\h_\x)$ has full rank $c$ at $\x$.
  \end{itemize}
  Then, $\lcs=\BasicOpen \cap V$ is either empty or a non-singular
  $d$-equidimensional locally closed set, with $d=n-c$, and for all $\x$
  in $\BasicOpen \cap V$, $\T_\x \lcs = \T_\x V= \ker(\jac_\x(\h_\x))$.
\end{lemma}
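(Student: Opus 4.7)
The set $v={\cal O}\cap V$ is tautologically locally closed (the intersection of a closed set and an open set), so if $v$ is empty we are done. Otherwise fix $\x\in v$ with its neighborhood ${\cal O}'_\x$ and equations $\h_\x=(h_{\x,1},\dots,h_{\x,c})$. The core of the argument is to promote the hypothesized set-theoretic equality $V\cap{\cal O}'_\x=V(\h_\x)\cap{\cal O}'_\x$ to an equality of ideals in the local ring $\C[\X]_\x$; once this is achieved, equidimensionality, non-singularity, and the tangent-space formula all fall out from the corresponding standard facts about $V(\h_\x)$.

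For the ideal equality, first pick a polynomial $M\in\QQ[\X]$ with $M(\x)\neq 0$ whose principal open $\{M\neq 0\}$ is contained in ${\cal O}'_\x$ (possible because the complement of ${\cal O}'_\x$ is Zariski closed and does not contain $\x$). The set-theoretic equality then holds on $\{M\neq 0\}$, so by Nullstellensatz applied in $\C[\X]_M$ we get $I(V)_M=\sqrt{\langle\h_\x\rangle}_M$, and localizing further at $\x$ (which is legitimate since $M(\x)\neq 0$) gives $I(V)\cdot\C[\X]_\x=\sqrt{\langle\h_\x\rangle}\cdot\C[\X]_\x$. The hypothesis that $\jac_\x(\h_\x)$ has full rank $c$ means that $\h_\x$ is a regular sequence in the regular local ring $\C[\X]_\x$ and that, by the Jacobian criterion, $\C[\X]_\x/\langle\h_\x\rangle$ is a regular local ring of dimension $n-c$; in particular it is reduced, so $\langle\h_\x\rangle$ is already radical in $\C[\X]_\x$. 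Therefore $I(V)\cdot\C[\X]_\x=\langle\h_\x\rangle\cdot\C[\X]_\x$.

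From this local identity the geometric conclusions follow immediately. The local ring $\C[\X]_\x/I(V)\cdot\C[\X]_\x$ is regular of dimension $d=n-c$, so every irreducible component of $V$ through $\x$ has dimension $d$ and $\x$ is a regular point of $V$; since the Zariski closure of $v$ is the union of components of $V$ meeting ${\cal O}$, ranging over $\x\in v$ shows that this closure is $d$-equidimensional and that $v\subset\reg(V)$, hence $v$ is non-singular. The equality $T_\x v=T_\x V$ is then clear since near $\x$ the closure of $v$ coincides with $V$. For $T_\x V=\ker(\jac_\x(\h_\x))$: any $F\in I(V)$ can, in $\C[\X]_\x$, be written $uF=\sum_i a_i h_{\x,i}$ with $u(\x)\neq 0$; differentiating at $\x$ and using $F(\x)=h_{\x,i}(\x)=0$ gives $u(\x)\nabla F(\x)=\sum_i a_i(\x)\nabla h_{\x,i}(\x)$, proving $\ker(\jac_\x(\h_\x))\subseteq T_\x V$. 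The reverse inclusion is symmetric: each $h_{\x,i}$ writes as $f_i/u_i$ with $f_i\in I(V)$ and $u_i(\x)\neq 0$, so $u_i(\x)\nabla h_{\x,i}(\x)=\nabla f_i(\x)$, which annihilates any $\v\in T_\x V$.

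The main obstacle is the first passage, from a merely set-theoretic coincidence on an open neighborhood to equality of ideals at the level of local rings; this is precisely what requires combining Nullstellensatz (on the localization at $M$) with the Jacobian criterion (to see that $\langle\h_\x\rangle$ is already radical locally). Everything else is formal bookkeeping.
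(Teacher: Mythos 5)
Your proof is correct, but it takes a genuinely different route from the paper's. The paper argues geometrically and set-theoretically: it applies the Jacobian criterion to $V(\h_\x)$ to produce a unique irreducible component $Z$ of $V(\h_\x)$ through $\x$ of dimension $d$, non-singular at $\x$ with $T_\x Z=\ker(\jac_\x(\h_\x))$, then shrinks the open neighborhood and plays with Zariski closures to show that $Z$ is contained in $V$, is an irreducible component of $V$, and is the \emph{only} component of $V$ through $\x$; everything is then read off from $Z$. You instead work at the level of local rings: you upgrade the set-theoretic coincidence $V\cap{\cal O}'_\x=V(\h_\x)\cap{\cal O}'_\x$ to the ideal identity $I(V)\cdot\C[\X]_\x=\langle\h_\x\rangle\cdot\C[\X]_\x$, by combining the Nullstellensatz on a principal open $\{M\ne 0\}\subset{\cal O}'_\x$ with the observation that $\C[\X]_\x/\langle\h_\x\rangle$ is regular local (hence a domain, so $\langle\h_\x\rangle$ is already radical there) — note this uses that the $h_{\x,i}$ vanish at $\x$, which indeed follows from $\x\in{\cal O}'_\x\cap V(\h_\x)$. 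Your intermediate statement is strictly stronger than what the paper establishes, and it pays off: uniqueness of the component through $\x$, its dimension, regularity of $\x$, and both inclusions for $T_\x V=\ker(\jac_\x(\h_\x))$ all become one-line consequences of the local ring being regular of dimension $n-c$ (your differentiation argument for the tangent space is a clean substitute for the paper's appeal to $T_\x Z$). What the paper's version buys in exchange is elementarity — it never invokes the Nullstellensatz or properties of regular local rings beyond what is packaged in the cited Jacobian criterion — at the cost of the somewhat more delicate closure bookkeeping needed to rule out other components of $V$ through $\x$.
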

\begin{proof}
  If $\BasicOpen \cap V$ is empty, there is nothing to prove, so we will
  assume it is not the case. Take $\x$ in $\BasicOpen \cap V$ and let
  $\BasicOpen'_\x$ and $\h_\x$ be as above. By the Jacobian
  criterion~\cite[Theorem 16.19]{Eisenbud95}, we know that there
  exists a unique irreducible component $\algZ$ of $V(\h_\x)$ containing
  $\x$, that $\algZ$ has dimension $d=n-c$, that $\algZ$ is non-singular at
  $\x$ and that $\T_\x \algZ$ is the nullspace of the Jacobian of $\h_\x$
  at~$\x$.

  In the next few paragraphs, we prove that $\algZ$ is actually an
  irreducible component of $V$, and that it is the only irreducible
  component of $V$ containing $\x$.

  We restrict $\BasicOpen'_\x$ to an open set $\BasicOpen''_\x$, still
  containing $\x$, so as to be able to assume that $\BasicOpen''_\x
  \cap V(\h_\x)=\BasicOpen''_\x \cap \algZ$. On the other hand, by
  restriction to $\BasicOpen''_\x$, we also deduce that
  $\BasicOpen''_\x \cap V = \BasicOpen''_\x \cap V(\h_\x)$, so that
  $\BasicOpen''_\x \cap V = \BasicOpen''_\x \cap \algZ$.  The Zariski
  closure of $\BasicOpen''_\x \cap \algZ$ is equal to $\algZ$ (since
  the former is a non-empty open subset of $\algZ$), so upon taking
  Zariski closure, the former equality implies that $\algZ$ is
  contained in $V$.

  Next, we prove that $\algZ$ is actually an irreducible component of
  $V$. Let indeed $\algZ'$ be an irreducible component of $V$ containing
  $\algZ$, so that we have $\algZ \subset \algZ' \subset V$. Taking the
  intersection with $\BasicOpen''_\x$, we deduce that $\BasicOpen''_\x
  \cap \algZ \subset \BasicOpen''_\x \cap \algZ' \subset \BasicOpen''_\x \cap V$.
  Since the right-hand side is equal to $\BasicOpen''_\x \cap \algZ$, we
  deduce that $\BasicOpen''_\x \cap \algZ=\BasicOpen''_\x \cap \algZ'$, which implies
  that $\algZ=\algZ'$.

  Similarly, we prove that $\algZ$ is the only irreducible component
  of $V$ containing $\x$. Let indeed $\algZ''$ be any other
  irreducible component of $V$. The inclusion $\algZ'' \subset V$
  yields $\BasicOpen''_\x \cap \algZ'' \subset \BasicOpen''_\x \cap
  \algZ$. This implies that $\BasicOpen''_\x \cap \algZ''$ is empty,
  since otherwise taking the Zariski closure would yield $\algZ''
  \subset \algZ$.  Thus, we have proved our claim on $\algZ$; it
  implies in particular that $\T_\x V = \T_\x \algZ$, that is,
  $\ker(\jac_\x(\h_\x))$.

  We can now conclude the proof of the lemma. We know that
  $\BasicOpen\cap V$ is a locally closed set, and we assumed that it
  is non-empty. Besides, its Zariski closure $V'$ is the union of the
  irreducible components of $V$ that intersect $\BasicOpen$.  Let
  $V''$ be one of them and let $\x$ be in $\BasicOpen\cap V''$.
  Because $\x$ is in $\BasicOpen\cap V$, the construction of the
  previous paragraphs shows that $V''$ coincides with the irreducible
  variety $\algZ$ defined previously, so $\dim(V'')=n-c$. This proves
  that $V'$ is $d$-equidimensional, with $d=n-c$.

  Finally, we have to prove that for all $\x$ in $\BasicOpen\cap V$,
  $\x$ is in $\reg(V')$. We know that there exists a unique
  irreducible component $\algZ$ of $V$ that contains $\x$, that $\algZ$ is
  non-singular at $\x$ and that $\T_\x \algZ=\ker(\jac_\x(\h_\x))$. But
  then, $\algZ$ is also the unique irreducible component of $V'$ that
  contains $\x$, so $\x$ is indeed in $\reg(V')$.
\end{proof}


\subsection{Critical points and polar varieties}\label{ssec:A12}

Let $V \subset \C^n$ be an equidimensional algebraic set (possibly
empty) and let $\varphi:V \to \C^m$ be a polynomial mapping.  A point
$\x \in \reg(V)$ is a {\em critical point} of $\varphi$ if
$d_\x\varphi(\T_\x V) \ne \C^m$, \new{where $d_\x\varphi$ is the
  differential of $\varphi$ at $\x$}. We denote by
$\openpolar(\varphi,V) \subset \reg(V)$ the set of all critical points
of $\varphi$; this is a locally closed set. A {\em critical value} of
$\varphi$ is the image by $\varphi$ of a critical point; a {\em
  regular value} is a point of $\C^m$ which is not a critical value.

We also define $\Kpolar(\varphi, V)$
as the union of
$\openpolar(\varphi,V)$ and $\sing(V)$. The following lemma shows in
particular that this is an algebraic set.
\begin{lemma} \label{lemma:critequi} 
  Suppose that $V$ is $d$-equidimensional. Given generators $\f$ of
  $\IdealV$, the following holds:
  $$\openpolar(\varphi,V) = \left \{ \x \in V \ | \ \rank(\jac_\x(\f))=n-d
  \ \text{and}\ \rank \left [ \begin{matrix}\jac_\x(\f)
      \\ \jac_\x(\varphi) \end{matrix} \right ] < n-d+m \right \}$$ 
and
  $$\Kpolar(\varphi,V) = \left \{ \x \in V \ | \ \rank \left
  [ \begin{matrix}\jac_\x(\f) \\ \jac_\x(\varphi) \end{matrix} \right ] <
  n-d+m\right \}.$$ 
  In particular, $\Kpolar(\varphi,V)$ is Zariski closed,
  and we have $\Kpolar(\varphi,V) = \polar(\varphi,V) \cup \sing(V)$,
  where $\polar(\varphi,V)$ is the Zariski closure of $\openpolar(\varphi,V)$.
\end{lemma}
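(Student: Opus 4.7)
The plan is to prove both equalities by combining the Jacobian rank characterization of regular and singular points given in Lemma~\ref{sec:prelim:lemma:sing} with a single rank-nullity computation for the image $d_\x\varphi(T_\x V)$.

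For the first equality, I start from the definition: $\crit(\varphi,V)$ consists of points $\x \in \reg(V)$ such that $d_\x\varphi(T_\x V) \ne \C^m$. Since $\f$ generates $I(V)$ and $V$ is $d$-equidimensional, Lemma~\ref{sec:prelim:lemma:sing} tells me that $\x \in \reg(V)$ iff $\rank(\jac_\x(\f)) = n-d$, and that in this case $T_\x V = \ker(\jac_\x(\f))$. The differential $d_\x\varphi$ is represented by $\jac_\x(\varphi)$, so $d_\x\varphi(T_\x V)$ is the image of $\ker(\jac_\x(\f))$ under $\jac_\x(\varphi)$. By rank-nullity applied to this restriction,
\[
\dim d_\x\varphi(T_\x V) = \dim \ker(\jac_\x(\f)) - \dim\bigl(\ker(\jac_\x(\f)) \cap \ker(\jac_\x(\varphi))\bigr),
\]
and the intersection of the two kernels is the kernel of the stacked matrix, of dimension $n - \rank\bigl[\begin{smallmatrix}\jac_\x(\f)\\ \jac_\x(\varphi)\end{smallmatrix}\bigr]$. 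Thus $\dim d_\x\varphi(T_\x V) = \rank\bigl[\begin{smallmatrix}\jac_\x(\f)\\ \jac_\x(\varphi)\end{smallmatrix}\bigr] - (n-d)$, and surjectivity onto $\C^m$ is equivalent to this rank equalling $n-d+m$. The first equality follows.

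For the second equality, recall that $K(\varphi,V) = \crit(\varphi,V) \cup \sing(V)$. If $\x \in \sing(V)$, then Lemma~\ref{sec:prelim:lemma:sing} gives $\rank(\jac_\x(\f)) < n-d$, so the stacked matrix has rank at most $\rank(\jac_\x(\f)) + m < n-d+m$, and $\x$ lies in the claimed set. Together with the first equality this gives the inclusion $K(\varphi,V) \subseteq \{\,\x \in V \mid \rank\bigl[\begin{smallmatrix}\jac_\x(\f)\\ \jac_\x(\varphi)\end{smallmatrix}\bigr] < n-d+m\,\}$. Conversely, take $\x \in V$ with $\rank\bigl[\begin{smallmatrix}\jac_\x(\f)\\ \jac_\x(\varphi)\end{smallmatrix}\bigr] < n-d+m$. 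Since the rank of $\jac_\x(\f)$ is at most $n-d$, either it is strictly less, in which case $\x \in \sing(V) \subset K(\varphi,V)$, or it equals $n-d$, in which case $\x \in \reg(V)$ and the first equality puts $\x$ in $\crit(\varphi,V) \subset K(\varphi,V)$.

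There is no real obstacle here: the argument is essentially a linear-algebra identity, and the only non-trivial input is the Jacobian criterion, which I invoke through Lemma~\ref{sec:prelim:lemma:sing} and which requires the hypothesis that $\f$ generates the full ideal $I(V)$ (not just an ideal with radical $I(V)$). That hypothesis is built into the statement.
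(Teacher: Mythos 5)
Your proof is correct and follows essentially the same route as the paper's: characterize $\reg(V)$ and $T_\x V$ via Lemma~\ref{sec:prelim:lemma:sing}, translate surjectivity of $d_\x\varphi$ on $T_\x V=\ker(\jac_\x(\f))$ into the rank condition on the stacked matrix, and then absorb $\sing(V)$ into the single rank inequality for $K(\varphi,V)$ by noting that $\rank(\jac_\x(\f))<n-d$ already forces the stacked rank below $n-d+m$. You merely make the rank–nullity step more explicit than the paper does.
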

\begin{proof}
  For $\x$ in $V$, $\x$ is in $\openpolar(\varphi,V)$ if and only if we
  have $\x\in\reg(V)$ and $\dim(d_\x\varphi(\T_\x V)) < m$. By
  Lemma~\ref{sec:prelim:lemma:sing}, the first condition amounts to
  the rank condition $ \rank(\jac_\x(\f))=n-d$. When this is
  satisfied, since $\T_\x V$ is the nullspace of $\jac_\x(\f)$, the
  second condition amounts to $$\rank \left
    [ \begin{matrix}\jac_\x(\f) \\ \jac_\x(\varphi) \end{matrix}
  \right ] < n-d+m,$$ which proves the formula for
  $\openpolar(\varphi,V)$. To prove the one for $\Kpolar(\varphi,V)$, observe
  that $\sing(V)$ is the subset of $V$ where $\jac(\f)$ has rank less
  than $n-d$, so that $\Kpolar(\varphi,V)$ is the subset of all $\x$ in $V$
  such that
  $$\left ( \rank(\jac_\x(\f))=n-d
  \ \text{and}\ \rank \left [ \begin{matrix}\jac_\x(\f)
      \\ \jac_\x(\varphi) \end{matrix} \right ] < n-d+m  \right )$$ 
  or  $$\rank(\jac_\x(\f))<n-d.$$
  Now, if $\jac_\x(\f)$ has rank less than $n-d$, then $\left [\begin{smallmatrix}\jac_\x(\f)
      \\ \jac_\x(\varphi) \end{smallmatrix} \right ]$ has rank less
  than $n-d+m$, so the condition above is equivalent to the one
  given in the statement of the lemma. The last property follows immediately,
  since the above expression of $\Kpolar(\varphi,V)$ shows that it is Zariski closed.
\end{proof}

Polar varieties are a particular case of the previous definition: if
$V$ is a $d$-equidimensional algebraic subset of $\C^n$ lying over a
finite subset $Q$ of $\C^e$, then we have
$\openpolar(e,d,V)=\openpolar(\pi_{e,d},V)$,
$\polar(e,d,V)=\polar(\pi_{e,d},V)$ and
$\Kpolar(e,d,V)=\Kpolar(\pi_{e,d},V)$. In particular, we obtain
that $$\Kpolar(e,d,V)=\polar(e,d,V) \cup \sing(V).$$ The
following lemma, which handles the simple case $e=0$, is similarly a
direct consequence of Lemma~\ref{lemma:critequi}.
\begin{lemma}\label{sec:prelim:lemma:Ksing}
  If $V\subset \C^n$ is a $d$-equidimensional algebraic set,
  $\IdealV=\langle \f\rangle$ and $\dalgo$ is in $\{1, \ldots, d\}$,
  then $\Kpolar(0,\dalgo,V)$ is the zero-set of $\f$ and of all
  $c$-minors of $\jac(\f,\dalgo)$, where $c=n-d$ is the codimension of
  $V$.  
\end{lemma}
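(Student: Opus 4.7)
The plan is to specialize Lemma~\ref{lemma:critequi} to the case $\varphi = \pi_\dalgo$, with $m = \dalgo$. First, I would observe that $\jac_\x(\pi_\dalgo)$ is independent of $\x$: it is the $\dalgo \times n$ block matrix $[\mI_\dalgo \mid \mzero]$, where $\mI_\dalgo$ denotes the $\dalgo \times \dalgo$ identity matrix and $\mzero$ the $\dalgo \times (n-\dalgo)$ zero matrix. Substituting into Lemma~\ref{lemma:critequi}, a point $\x \in V$ lies in $K(\dalgo, V)$ precisely when
\[
\rank \begin{bmatrix} \jac_\x(\f) \\ \mI_\dalgo & \mzero \end{bmatrix} < n-d+\dalgo = c + \dalgo.
\]

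The key step is to simplify this rank condition. I would use elementary column operations (which preserve rank) with the identity block in the bottom-left: subtracting suitable multiples of the last $\dalgo$ rows from the first rows clears the first $\dalgo$ columns of $\jac_\x(\f)$, leaving the block matrix
\[
\begin{bmatrix} \mzero & \jac_\x(\f,\dalgo) \\ \mI_\dalgo & \mzero \end{bmatrix},
\]
whose rank is exactly $\dalgo + \rank(\jac_\x(\f,\dalgo))$ by the block structure. Hence the condition becomes $\rank(\jac_\x(\f,\dalgo)) < c$.

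Finally, I would conclude: since $\rank(\jac_\x(\f,\dalgo)) < c$ is equivalent to the vanishing at $\x$ of every $c$-minor of $\jac(\f,\dalgo)$, and since $\x \in V$ amounts to the vanishing of $\f$ (as $I(V) = \langle \f\rangle$), the set $K(\dalgo, V)$ coincides with the common zero-set of $\f$ together with all $c$-minors of $\jac(\f,\dalgo)$. Being the zero-set of a finite family of polynomials, it is Zariski closed. There is no substantive obstacle here; the statement really is a direct consequence of Lemma~\ref{lemma:critequi}, and the only thing to verify carefully is the rank computation for the augmented Jacobian, which is handled by the column-reduction argument above.
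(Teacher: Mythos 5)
Your proof is correct and follows exactly the route the paper intends: the paper states this lemma as a ``direct consequence of Lemma~\ref{lemma:critequi}'' without further detail, and your specialization to $\varphi=\pi_{\dalgo}$ together with the block-reduction of the augmented Jacobian is precisely the computation being left implicit. One trivial slip: you announce ``column operations'' but then correctly describe row operations (clearing the first $\dalgo$ columns of $\jac_\x(\f)$ by subtracting multiples of the identity rows); the argument itself is fine.
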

\begin{lemma}\label{prelim:lemma:polarinclusions}
  Let $Q$ be a finite subset of $\C^e$, and let $V$ be an algebraic
  subset of $\C^n$ lying over $Q$. If $V$ is $d$-equidimensional,
  the following inclusions hold:
  $$\openpolar(e,1, V)\subset \openpolar(e,2, V)\subset \cdots \subset \openpolar(e,d, V).$$    
\end{lemma}
\begin{proof}
  Lemma~\ref{lemma:critequi} shows that for $1\le i \le i' \le d$, 
  $\openpolar(e,i, V)$ and $\openpolar(e,i', V)$ are defined by 
  rank conditions on matrices 
  $$\bf{M}_{i,\x}=\left [ \begin{matrix}\jac_\x(\f)
      \\ \jac_\x(\pi_{e,i}) \end{matrix} \right ]
  \quad\text{and}\quad
  \bf{M}_{i',\x=}\left [ \begin{matrix}\jac_\x(\f)
      \\ \jac_\x(\pi_{e,i'}) \end{matrix} \right ],$$
  where $\f$ is a finite set of generators of the ideal of $V$.
  The latter matrix is obtained by adding $i'-i$ rows to the former one;
  hence, if $\jac_\x(\f)$ has rank $n-d$ and
 $\bf{M}_{i,\x}$ has rank less than $n-d+i$, 
 $\bf{M}_{i',\x}$ has rank less than $n-d+i'$.
\end{proof}

Also, one of the constructions which are used in our roadmap algorithm
consists in considering polar varieties of polar varieties (see
Section~\ref{ssec:abstractalgo}). In this context, the following lemma
will be useful.

\begin{lemma}\label{prelim:lemma:polarpolar}
  Let $Q$ be a finite subset of $\C^e$, and let $V$ be an algebraic
  subset of $\C^n$ lying over $Q$. Suppose that $V$ is
  $d$-equidimensional, and let $\dalgo$ be an integer in
  $\{1,\dots,d\}$.  Suppose further that $W=\polar(e,\dalgo, V)$ is
  equidimensional. Then $\openpolar(e,1,V)$, and thus $\polar(e,1,V)$,
  are subsets of $\Kpolar(e,1, W)$.
\end{lemma}
\begin{proof}
  When $\openpolar(e, 1, V)$ is empty, we are done. Hence, assume it
  is not empty and let $\x$ be in $\openpolar(e, 1,
  V)$. Lemma~\ref{prelim:lemma:polarinclusions} implies that $\x$ is
  in $W$. Since we have assumed $W$ to be equidimensional, it makes
  sense to consider its singular and regular loci. If $\x$ is in
  $\sing(W)$, then $\x$ is in $\Kpolar(e, 1, W)$, by definition, so we
  are done. Assume now that $\x$ is in $\reg(W)$, and denote by $\T_\x
  W$ the tangent space to $W$ at~$\x$.

  By definition of $\openpolar(e, 1, V)$, $\x$ is in $\reg(V)$ and
  $d_\x \pi_{e,1} (\T_\x V)\neq \C$. Moreover, since $W\subset V$,
  $\T_\x W\subset \T_\x V$. We deduce that $d_\x \pi_{e,1} (\T_\x
  W)\neq \C$; hence $\x$ is in $\openpolar(e,1, W)$, and we are done.
\end{proof}

An essential ingredient for our algorithms is the control of the
dimension of polar varieties of an algebraic set $V\subset \C^n$,
together with the dimension of fibers taken on these polar varieties,
under the assumption that $V$ is equidimensional with finitely many
singular points. We mention the following result in this direction,
which holds in generic coordinates; it is sufficient for us to 
state it for $e=0$.

\begin{lemma}\label{sec:prelim:lemma:finitefiberpolar}
  Let $V$ be an algebraic subset of $\C^n$, and suppose that $V$ is
  $d$-equidimensional, with finitely many singular points. Then, for
  $\dalgo$ in $\{1,\dots,d\}$, there exists a non-empty Zariski open
  set $\ZOffp(V,\dalgo) \subset \GL(n)$ such that, for $\mA$ in
  $\ZOffp(V,\dalgo)$, for any $\x\in \C^{\dalgo-1}$,
  $\fbr(\polar(0,\dalgo,V^\mA), \x)$ and
  $\fbr(\Kpolar(0,\dalgo,V^\mA), \x)$ are finite.
\end{lemma}
This result is proved in~\cite[Theorem~1]{SaSc03}. Note that the
assumptions of that theorem require that $V$ be non-singular, but this
result extends to our setting where $\sing(V)$ is finite. Indeed, that
assumption was only used to ensure another property, that the
dimension of $\Kpolar(0,\dalgo,V^\mA)$ be at most $\dalgo-1$; the claim we are
making here still holds as soon as $\sing(V)$ is finite.

\medskip

Finally, we will have to consider the case of locally closed sets
instead of algebraic sets. Suppose thus that $\lcs\subset \C^n$ is a
locally closed set with Zariski closure $V$ and that $\lcs$ is
$d$-equidimensional; let further $\varphi$ be a polynomial mapping $V
\to \C^m$. Then, we define $\openpolar(\varphi,\lcs)$ as
$\openpolar(\varphi,\lcs)=\openpolar(\varphi,V) \cap \lcs$. 
In this context, we say that $\y \in \C^m$ is a {\em regular value of
  $\varphi$ on $\lcs$} if $\varphi^{-1}(\y)\cap \lcs$ and $\openpolar(\varphi,\lcs)$
do not intersect, and a {\em critical value of $\varphi$ on $\lcs$} if
they do.

In particular, if $V$ lies over a finite set $Q \subset \C^e$, for all
$\dalgo\in\{1,\dots,n\}$, $\openpolar(e,\dalgo,\lcs)$ is defined as
$\openpolar(e,\dalgo,\lcs)=\openpolar(e,\dalgo,V) \cap \lcs$.


\subsection{Properties of charts and atlases}\label{ssec:chartsatlasesprelim}

\subsubsection{Charts}

In this paragraph, we state a few of properties of charts, as defined
in Definition~\ref{def:chart}.

\begin{lemma}\label{sec:atlas:lemma:singS}
  Let $Q \subset \C^e$ be a finite set and let $V \subset \C^n$ and
  $S\subset \C^n$ be algebraic sets lying over $Q$.
  
  Let $\psi=(m,\h)$ be a chart of $(V,Q,S)$, with
  $\h=(h_1,\dots,h_c)$.  Then, $\Open(m)\cap V-S$ is a non-singular
  $d$-equidimensional locally closed set, with $d=n-e-c$. Besides, for
  all $\x$ in $\Open(m)\cap V-S$, $\T_\x V =
  \underbrace{(0,\dots,0)}_{e} \times \ker(\jac_\x(\h,e))$.
\end{lemma}
\begin{proof}
  Let $\singSBasicOpen\subset \C^n$ be the non-empty Zariski open set
  $\Open(m)-S$. For all $\x=(x_1,\dots,x_n)$ in $\singSBasicOpen\cap V$,
  let $\h_\x$ be the polynomials $ (X_1-x_1, \ldots,
  X_e-x_e,\h)$.
  Letting $\singSBasicOpen'_\x\subset \singSBasicOpen$ be an open set
  containing $\x$ such that $\fbr(V(\h),Q)$ and $\fbr(V(\h),\y)$
  coincide in $\singSBasicOpen'_\x$, where $\y=(x_1,\dots,x_e)$, we are in
  a position to apply Lemma~\ref{lemma:prelim:locallyclosed} to $V$,
  $\singSBasicOpen'_\x$ and $\h_\x$.  The lemma proves that
  $\singSBasicOpen \cap V$ is either empty or a non-singular
  $d$-equidimensional locally closed set, with $d=n-e-c$, and that for
  all $\x$ in $\singSBasicOpen \cap V$, $\T_\x V =
  \ker(\jac_\x(\h_\x))$.
  This is exactly the claimed result (since we know that
  $\singSBasicOpen \cap V$ is not empty).
\end{proof}

\begin{lemma}\label{sec:lemma:singS}
  Let $Q \subset \C^e$ be a finite set and let $V \subset \C^n$ and
  $S\subset \C^n$ be algebraic sets lying over $Q$.

  Suppose that $V$ is $d$-equidimensional and let $\psi=(m,\h)$ be a
  chart of $(V,Q,S)$. Then $\Open(m)\cap V-S$ is contained in
  $\reg(V)$, and $\h$ has cardinality $c=n-e-d$.
\end{lemma}
\begin{proof}
  The previous lemma implies that for all $\x$ in $\Open(m)\cap
  V-S$, $\T_\x V$ has dimension $n-e-c$, and also proves that the
  Zariski closure of $\Open(m)\cap V-S$ has the same dimension.
  Since this Zariski closure is the union of some irreducible
  components of $V$, it has dimension $d=\dim(V)$, so $d=n-e-c$, and
  every $\x$ as above is in $\reg(V)$.
\end{proof}

Conversely, provided that $V$ is equidimensional, the following lemma shows
that charts always exist at regular points.
\begin{lemma}\label{sec:atlas:lemma:chartregularpoint}
  Let $Q \subset \C^e$ be a finite set and let $V \subset \C^n$ and
  $S\subset \C^n$ be algebraic sets lying over $Q$.

  Suppose that $V$ is $d$-equidimensional.  For $\x$ in $\reg(V)-S$,
  there exists a chart $\psi=(m,\h)$ of $(V, Q,S)$ such that $\x\in
  \Open(m)$.
\end{lemma}
\begin{proof}
  Let $\x=(x_1,\dots,x_n)$ be in $\reg(V)-S$, let $\y=(x_1,\dots,
  x_e)\in Q$ and let $\H=( X_1-x_1, \ldots, X_e-x_e,h_1, \ldots, h_s)$
  be generators of the ideal of $V_\y=\fbr(V,\y)$. Without loss of
  generality, we assume that the polynomials $h_1, \ldots, h_s$ lie in
  $\C[X_{e+1}, \ldots, X_n]$, by evaluating the variables $X_1,
  \ldots, X_e$ at $x_1, \ldots, x_e$. We also consider a polynomial
  $q\in \C[X_1, \ldots, X_e]$ such that $q$ vanishes at all points of
  $Q$ except $\y$; note that this implies that $\Open(q)\cap
  V=V_\y$.

  Since $\x$ is in $\reg(V)$, and thus in $\reg(V_\y)$, the rank of
  $\jac(\H)$ at $\x$ is the codimension $c'=n-d$ of $V_\y$;
  equivalently, due to the shape of the polynomials $\H$, $\jac(\H,e)$
  has rank $c=c'-e$ at $\x$.  Up to renumbering the polynomials in
  $\H$, one can suppose that $\h=(h_1,\dots,h_{c})$ is such that
  $\jac_\x(\h,e)$ has full rank~$c$, or equivalently, that
  $\h'=(X_1-x_1, \ldots, X_e-x_e,h_1,\dots,h_{c})$ is such that
  $\jac_\x(\h')$ has full rank~$c'$.

  We let $\polmu$ be a $c$-minor of $\jac(\h,e)$ such that
  $\polmu(\x)\neq 0$ and let $\algZ$ be the Zariski closure of
  $\Open(q\polmu)\cap V(\h')$. Since
  $\x\in \Open(q\polmu)\cap V(\h')$, $\algZ$ is not empty.  Also, at
  all points of $\Open(q\polmu)\cap V(\h')$, $\jac(\h, e)$ has full
  rank $c$, or equivalently $\jac(\h')$ has full rank $c'$.  We deduce
  by Lemma~\ref{lemma:prelim:locallyclosed} that
  $\Open(q\polmu)\cap V(\h')$ is a non-singular
  $d$-equidimensional locally closed set, lying over $\y$ and
  containing $\x$; in particular, there is a unique irreducible
  component $\algZ'$ of $\algZ$ which contains~$\x$, and it has dimension
  $d$~\cite[Chapter~9, Theorem 9]{CLO}.

  We claim that $\algZ'$ is contained in $V_\y$. Indeed, since $\x$
  belongs to $\reg(V_\y)$, and $V_\y$ is $d$-equidimensional, there is
  a unique $d$-dimensional irreducible component $Y$ of $V_\y$ that
  passes through $\x$. Since all polynomials $\H$, and thus $\h'$,
  vanish on $Y$, we deduce that $\Open(q\polmu)\cap Y$ is
  contained in $\Open(q\polmu)\cap V(\h')$; taking the Zariski
  closure, we deduce that $Y$ is contained in $\algZ$ (since
  $\Open(q\polmu)\cap Y$ is a non-empty open subset of $Y$, its
  Zariski closure is $Y$). Thus, $Y$ is $d$-dimensional, irreducible,
  and contained in $\algZ$; this implies that $Y=\algZ'$, proving our claim.

  Let now $U$ be the Zariski closure of $\algZ-V$: it is the union of all
  irreducible components of $\algZ$ that are not contained in $V$. We
  proved before that there is a unique irreducible component $\algZ'$ of
  $\algZ$ which contains $\x$, and that $\algZ'$ is contained in $V_\y$, and
  thus in $V$; as a consequence, $\x$ is not in $U$. Then, there
  exists a polynomial $\polmu'$ in the ideal of $U$ such that
  $\polmu'(\x)\neq 0$. Define $m = q \polmu \polmu'$; we claim that $\psi=(m,
  \h)$ is a chart of $(V, Q, S)$.
  \begin{itemize}
  \item[$\sfC_1.$] Since by construction $\x\in \Open(q\polmu \polmu')\cap
    V-S$, this set is not empty.
\smallskip
  \item[$\sfC_2.$] We have to prove that
    $\Open(q\polmu \polmu')\cap V-S= \Open(q\polmu
    \polmu')\cap \fbr(V(\h),Q)-S$.
    Observe that due to our choice of $q$, this amounts to proving
    that
    $\Open(q\polmu \polmu')\cap V_\y-S= \Open(q\polmu
    \polmu')\cap V(\h')-S$.
    
    One inclusion is straightforward: if $\x'$ is in
    $\Open(q\polmu \polmu')\cap V_\y-S$, all polynomials $\H$
    vanish at $\x'$, and so do all polynomials $\h'$. Conversely, take
    $\x'$ in $\Open(q\polmu \polmu')\cap V(\h')-S$. This implies that
    $\x'$ is in $V'$, but it cannot be in $U$, since $\polmu'(\x')\ne 0$;
    thus, $\x'$ must be in $V$, or equivalently in $V_\y$, and we are
    done.

\smallskip
  \item[$\sfC_3.$] By construction, $c=n-d-e$, so $c+e=n-d$ satisfies
    $c+e \le n$.
\smallskip
  \item[$\sfC_4.$] Finally, take $\x'$ in
    $\Open(q\polmu \polmu')\cap V-S$. We have to prove that
    $\jac(\h,e)$ has full rank $c$ at $\x'$; this is immediate from
    the fact that $\polmu(\x')\ne 0$, and that $\polmu$ is a $c$-minor of
    that same matrix.
  \end{itemize}
  Since by construction $\x$ is in $\Open(q\polmu \polmu')$, the proof is
  complete.
\end{proof}

We finish this paragraph with a straightforward result: we can read
off the polar varieties as those points where the rank of a submatrix
of the Jacobian of $\h$ drops.

\begin{lemma}\label{sec:atlas:chartpolar}
  Let $Q \subset \C^e$ be a finite set and let $V \subset \C^n$ and
  $S\subset \C^n$ be algebraic sets lying over $Q$.

  Suppose that $V$ is $d$-equidimensional, let $\psi=(m,\h)$, with
  $\h=(h_1,\dots,h_c)$, be a chart of $(V,Q,S)$, and let ${\dalgo}$ be
  an integer in $\{1,\dots,d\}$. Then, for $\x$ in $\Open(m) \cap
  V-S$, $\x$ belongs to $\polar(e,{\dalgo}, V)$ if and only if $\jac_\x(\h,
  e+{\dalgo})$ does not have full rank $c$.
\end{lemma}
\begin{proof}
  Let $\x$ be in $\Open(m)\cap V-S$. By
  Lemma~\ref{sec:atlas:lemma:singS}, $\T_\x V$ coincides with
  $(0,\dots,0) \times \ker(\jac_\x(\h,e))$. Since $\x$ is in $\reg(V)$
  (Lemma~\ref{sec:lemma:singS}), it belongs to $\polar(e,{\dalgo},V)$ if
  and only if it belongs to $\openpolar(e, {\dalgo}, V)$. This is the case if and only
  if the projection $\ker(\jac_\x(\h,e))\to \C^{n-e-{\dalgo}}$ is not onto,
  and elementary linear algebra, as in Lemma~\ref{lemma:critequi},
  implies that this is equivalent to the submatrix $\jac_\x(\h, e+{\dalgo})$
  having rank less than $c$.
\end{proof}


\subsubsection{Atlases}\label{ssec:atlas:def}

In this section, we investigate properties of atlases
(Definition~\ref{def:atlas}), as a way to describe coverings of an
algebraic set $V$ by means of charts.

Let $V \subset \C^n$ and $S\subset \C^n$ be algebraic sets lying over
a finite set $Q \subset \C^e$. Consider an atlas $\bpsi=(\psi_i)_{1
  \le i \le s}$ of $(V, Q,S)$, with $\psi_i=(m_i,\h_i)$ for all $i$.
When the vectors of polynomials $\h_i$ in charts $\psi_i$ do not have the same
cardinality, one may not expect that $V$ be equidimensional. Even when
they all have the same cardinality, there may still be the possibility
that $V$ has isolated points in $S$, so the following lemma is the
best we can hope for in this direction.

\begin{lemma}\label{sec:atlas:lemma:singSX}
  Let $Q \subset \C^e$ be a finite set and let $V \subset \C^n$ and
  $S\subset \C^n$ be algebraic sets lying over $Q$.

  Let $\bpsi=(\psi_i)_{1 \le i \le s}$ be an atlas of $(V,Q,S)$, with
  each $\psi_i$ of the form $(m_i,\h_i)$. If all $\h_i$ have common
  cardinality $c$, then $V-S$ is a non-singular $d$-equidimensional
  locally closed set, with $d=n-e-c$.
\end{lemma}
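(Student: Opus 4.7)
The plan is to deduce the global statement from Lemma~\ref{sec:atlas:lemma:singS} applied to each chart, together with condition ${\sf A_3}$. First, $V-S$ is trivially locally closed, being the intersection of the closed set $V$ with the Zariski open set $\C^n - S$; if $V-S$ is empty the claim is vacuous, so I assume it is not.

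Let $V'$ denote the Zariski closure of $V-S$; concretely, $V'$ is the union of those irreducible components of $V$ that are not entirely contained in $S$. The first key step is to show that every irreducible component $V_j$ of $V'$ has dimension exactly $d = n-e-c$. Choose $\x \in V_j - S$ (nonempty by definition of $V'$) and, using ${\sf A_3}$, pick an index $i$ with $\x \in {\cal O}(m_i)$. By Lemma~\ref{sec:atlas:lemma:singS} applied to $\psi_i$, the set $W_i = {\cal O}(m_i) \cap V - S$ is $d$-equidimensional, hence so is its Zariski closure. That closure is a union of irreducible components of $V$ not contained in $V(m_i) \cup S$, all of which must therefore have dimension $d$; since $V_j$ is among them, $\dim V_j = d$. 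Thus $V'$ is $d$-equidimensional.

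It remains to prove $V-S \subseteq \reg(V')$, from which non-singularity of $V-S$ as a locally closed set follows. Fix $\x \in V-S$ and an index $i$ with $\x \in {\cal O}(m_i)$. Lemma~\ref{sec:atlas:lemma:singS} gives $\dim T_\x V = d$, so it suffices to identify $T_\x V'$ with $T_\x V$. The inclusion $T_\x V' \subseteq T_\x V$ is immediate since $V' \subseteq V$ forces $I(V) \subseteq I(V')$. For the reverse, write $V = V' \cup V''$, with $V''$ the union of the irreducible components of $V$ contained in $S$; since $\x \notin S$ we have $\x \notin V''$, so there exists $h \in I(V'')$ with $h(\x) \ne 0$. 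For any $f \in I(V')$, the product $fh$ lies in $I(V') \cap I(V'') = I(V)$, and, because $f(\x)=0$,
\[
d_\x(fh) \;=\; h(\x)\, d_\x f + f(\x)\, d_\x h \;=\; h(\x)\, d_\x f.
\]
Hence any $v \in T_\x V$ satisfies $d_\x f(v) = 0$ for every $f \in I(V')$, so $v \in T_\x V'$. This gives $T_\x V' = T_\x V$, whence $\dim T_\x V' = d = \dim V'$ and $\x \in \reg(V')$.

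The only subtle step is the identification $T_\x V = T_\x V'$; this is a standard local-separation trick using a function that vanishes on the unwanted components but not at $\x$, and is exactly where the hypothesis $\x \notin S$ is used. Everything else is a direct packaging of the chart-level Lemma~\ref{sec:atlas:lemma:singS} via the covering condition ${\sf A_3}$.
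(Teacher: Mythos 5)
Your proof is correct and follows the same route as the paper, which simply applies Lemma~\ref{sec:atlas:lemma:singS} to each chart and invokes ${\sf A_2}$--${\sf A_3}$ to globalize. You have merely written out the two details the paper leaves implicit — that the closures of the chart pieces exhaust the components of $\overline{V-S}$, and the identification $T_\x V = T_\x V'$ at points off the components contained in $S$ — and both are handled correctly.
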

\begin{proof}
  Lemma~\ref{sec:atlas:lemma:singS} shows that for all $i \le s$,
  $\Open(m_i)\cap V-S$ is a non-singular $d$-equidimen\-sional locally
  closed set. Properties ${\sfA_2}$ and ${\sfA_3}$ in
  Definition~\ref{def:atlas} conclude the proof of the lemma.
\end{proof}

When we know that $V$ is equidimensional, better can be said.

\begin{lemma}\label{sec:coro:lemma:singSX}
  Let $Q \subset \C^e$ be a finite set and let $V \subset \C^n$ and
  $S\subset \C^n$ be algebraic sets lying over $Q$.

  Suppose that $V$ is $d$-equidimensional and let $\bpsi=(m_i,\h_i)_{1
    \le i \le s}$ be an atlas of $(V,Q,S)$. Then $\sing(V)$ is
  contained in $S$, and all $\h_i$ have common cardinality $c=n-e-d$.
\end{lemma}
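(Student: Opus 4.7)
The plan is to obtain both conclusions as direct consequences of Corollary~\ref{sec:coro:lemma:singS}, which already handles the single-chart case, combined with the covering property ${\sf A_3}$ of the atlas.

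First I would establish the cardinality claim: for each index $i \in \{1,\dots,s\}$, $\psi_i = (m_i, \h_i)$ is by assumption ${\sf A_1}$ a chart of $(V,Q,S)$, and $V$ is $d$-equidimensional, so Corollary~\ref{sec:coro:lemma:singS} applies and yields that $\h_i$ has cardinality $n-e-d$. Since $d$, $n$, $e$ do not depend on $i$, all the $\h_i$ have the common cardinality $c = n-e-d$.

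Next I would prove the inclusion $\sing(V) \subset S$ by contradiction. Suppose that some $\x \in \sing(V)$ does not lie in $S$. Then $\x \in V - S$, so by ${\sf A_3}$ there exists an index $i$ with $\x \in {\cal O}(m_i)$; hence $\x \in {\cal O}(m_i) \cap V - S$. Applying Corollary~\ref{sec:coro:lemma:singS} to the chart $\psi_i$ tells us that ${\cal O}(m_i) \cap V - S$ is contained in $\reg(V)$. This would force $\x \in \reg(V)$, contradicting our assumption $\x \in \sing(V)$. Hence no such $\x$ exists, and $\sing(V) \subset S$.

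There is no real obstacle: both statements are essentially the ``globalization'' of the corresponding single-chart statements, and the covering hypothesis ${\sf A_3}$ is exactly what is needed to glue the local information together. The only thing to be careful about is that Corollary~\ref{sec:coro:lemma:singS} does require the equidimensionality hypothesis on $V$, which is part of our assumptions here.
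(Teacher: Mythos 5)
Your proof is correct and follows the same route as the paper: both apply Corollary~\ref{sec:coro:lemma:singS} chart by chart to get the common cardinality $c=n-e-d$, and both combine its conclusion ${\cal O}(m_i)\cap V-S\subset\reg(V)$ with the covering property ${\sf A_3}$ to deduce $V-S\subset\reg(V)$, i.e.\ $\sing(V)\subset S$. The paper states this last step directly as a union argument rather than by contradiction, but the content is identical.
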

\begin{proof}
  Lemma~\ref{sec:lemma:singS} proves that each $\Open(m_i)\cap V-S$ is contained in $\reg(V)$, so their union is. By
  assumption, the union of the sets $\Open(m_i)\cap V-S$ contains $V-S$,
  so that $V-S$ is contained in $\reg(V)$.  The same corollary also
  proves that all $\h_i$ have cardinality $c=n-e-d$.
\end{proof}

Slightly less elementary, the following lemma shows that
atlases always exist.

\begin{lemma}\label{sec:atlas:lemma:glob}
  Let $Q \subset \C^e$ be a finite set and let $V \subset \C^n$ be an
  algebraic set lying over $Q$.  Suppose that $V$ is
  $d$-equidimensional. Then, there exists an atlas of
  $(V, Q, \sing(V))$.
\end{lemma}
\begin{proof}
  Applying Lemma~\ref{sec:atlas:lemma:chartregularpoint} with
  $S=\sing(V)$, we deduce that for all $\x$ in $\reg(V)$, there exists
  a chart $\psi_\x=(m_\x,\h_\x)$ of $(V,Q,\sing(V))$, such that
  $m_\x(\x) \ne 0$. The open subsets $\Open(m_\x)$ cover $V-S=\reg(V)$;
  the following compactness argument shows that we can extract a
  finite cover from it.
  
  Let $I$ be the defining ideal of $V$. Then, the zero-set of $I +
  \langle (m_\x)_{\x \in \reg(V)} \rangle$ is contained in
  $\sing(V)$. Let $J=\langle f_1,\dots,f_r\rangle$ be the defining
  ideal of $\sing(V)$; then, every $f_i$ belongs to the radical of $I
  + \langle (m_\x)_{\x \in \reg(V)} \rangle$. Thus, there exists for
  all $i$ an expression of the form
  \begin{equation}\label{eq:fiK}
  f_i^{e_i} = \sum_{\x \in K} c_{i,\x} m_\x + I,    
  \end{equation}
 for some finite subset $F$ of $\reg(V)$. This implies that the
 finitely many $\Open(m_\x)$, for $\x$ in $F$, cover $\reg(V)$,
 which proves ${\sfA_3}$ by taking $\bpsi=(\psi_\x)_{\x \in F}$.

 It remains to prove that ${\sfA_2}$ holds, or in other words that
 $F$ is not empty. If that were not the case, Eq.~\eqref{eq:fiK}
 would imply that $V \subset \sing(V)$, a contradiction.
\end{proof}



\section{Proof of Proposition~\ref{prop:ch4}}\label{sec:proof3.6}\label{sec:chartsatlas}

The goal of this section is to prove Proposition~\ref{prop:ch4} which
we recall now: {\em Let $Q \subset \C^e$ be a finite set and let $V
  \subset \C^n$ and $S\subset \C^n$ be algebraic sets lying over $Q$,
  with $S$ finite.  Suppose that $V$ is equidimensional of
  dimension~$d$.  Let $\bpsi$ be an atlas of $(V,Q,S)$, and let
  ${\dalgo}$ be an integer in $\{1,\dots,d\}$. If $2 \le {\dalgo}\leq
  (d+3)/2$, there exists a non-empty Zariski open subset
  $\scrGpolar(\bpsi,V,Q,S,{\dalgo})$ of $\GL(n,e)$ such that for $\mA$
  in $\scrGpolar(\bpsi,V,Q,S,{\dalgo})$, the following holds:
  \begin{itemize}
  \item either $\polar(e,{\dalgo},V^\mA)$ is empty, or
    \smallskip
  \item $\atlaspolar(\bpsi^\mA,V^\mA,Q,S^\mA,{\dalgo})$ is an atlas of
    $(\polar(e,{\dalgo},V^\mA),Q,S^\mA)$, and
    $\polar(e,{\dalgo},V^\mA)$ is equidimensional of dimension
    $\dalgo-1$, with $\sing(\polar(e,{\dalgo},V^\mA))$ contained in the finite set $S^\mA$.
  \end{itemize}
}


\subsection{Geometry of polar varieties}\label{sec:geometry_critical}

We start with preliminary material. As was mentioned when we stated
this proposition, we need a local variant of results
from~\cite[Section~3]{BGHSS}, which were proved for smooth complete
intersections. Since the proofs are somewhat subtle, we prefer to give
them here {\em in extenso}, in order to avoid overlooking any
difficulties.

Throughout this subsection, we use the definitions and notation
introduced in Sections~\ref{sec:prelim},~\ref{sec:dim:smooth:finite}
and~\ref{chap:prelim:sec:definitions}.  Let $\h=(h_1, \ldots, h_c)$ be
polynomials in $\C[X_1, \ldots, X_n]$.  We are going to prove a few
results about polar varieties associated to the locally closed set
$\oreg(\h)$, provided we are in generic coordinates. These results are
summarized in the following proposition.

\begin{proposition}\label{sec:prelim:prop:main1}\label{SEC:PRELIM:PROP:MAIN1}
  Let $\h=(h_1,\dots,h_c)$ in $\C[X_1,\dots,X_n]$, with $1 \le c \le
  n$. Let ${{\dalgo}}$ be an integer satisfying $1 \le {{\dalgo}}\le
  d$, with $d=n-c$.

  Then, there exists a non-empty Zariski open set $\ZOlp(\h,\dalgo) \subset
  \GL(n)$ such that, for $\mA$ in $\ZOlp(\h,\dalgo)$, the
  following properties hold:
  \begin{itemize}
\item[(1)] for all $\x$ in $\oreg(\h^\mA)$, there exists a $c$-minor $\minor$
  of $\jac(\h^\mA)$ such that $\minor(\x)\neq 0$;
\smallskip
\item[(2)] all irreducible components of the Zariski closure
  of the set $\openpolar(0,{{\dalgo}}, \oreg(\h^\mA))$ have dimension ${\dalgo}-1$; 
\smallskip
\item[(3)] if ${\dalgo}\leq (d+3)/2$ then for all $\x\in \oreg(\h^\mA)$, there
  exists a $(c-1)$-minor $\pminor$ of $\jac(\h^\mA,{{\dalgo}})$ such that
  $\pminor(\x)\neq 0$;
\smallskip
\item[(4)] for every $c$-minor $\minor$ of the Jacobian matrix
  $\jac(\h^\mA)$ and for every $(c-1)$-minor $\pminor$ of the
  truncated Jacobian matrix $\jac(\h^\mA,{{\dalgo}})$, the polynomials
  $(\h^\mA,\sfH(\h^\mA,{{\dalgo}},\pminor))$ (see
  Definition~\ref{def:mH}) define $\openpolar(0,{{\dalgo}}, \oreg(\h^\mA))$ in
  $\Open(\minor \pminor)$, and their Jacobian matrix has full
  rank $n-({\dalgo}-1)$ at all points of $\Open(\minor \pminor)
  \cap \openpolar(0,{{\dalgo}}, \oreg(\h^\mA))$.
  \end{itemize}
\end{proposition}

\noindent The rest of Section~\ref{sec:geometry_critical} is devoted to the proof of this proposition.


\subsubsection{Sard's lemma and weak transversality}

In this paragraph, we re-prove two well-known transversality results
(Sard's lemma and Thom's weak transversality) in the context of
algebraic sets. These claims are folklore, but we did not find a
suitable reference for them.

The cornerstone of transversality is Sard's lemma; here, we give a
version for (possibly singular) algebraic sets. Note
that~\cite[Proposition~3.7]{Mumford76} establishes this claim when $V$
is irreducible and $\varphi$ is dominant. We will show that the same
arguments apply, up to minor modifications.

\begin{proposition}\label{prop:Sard}
  Let $V\subset \C^n$ be an equidimensional algebraic set and let
  $\varphi:V \to \C^m$ be a polynomial mapping. Then
  $\varphi(\openpolar(\varphi,V))$ is contained in a hypersurface of~$\C^m$.
\end{proposition}
\begin{proof}
Let us write the irreducible decomposition of the Zariski closure of
$\openpolar(\varphi,V)$ as
$$\overline{\openpolar(\varphi,V)} = \cup_{1\le i \le r} \algZ_i,$$ where the $\algZ_i$
are irreducible algebraic subsets of $V$. We suppose, by
contradiction, that $\varphi(\openpolar(\varphi,V))$ is dense in $\C^m$.
Then, $\varphi(\algZ_1 \cup \cdots \cup \algZ_r)$ is dense as well, which
implies that (up to renumbering) $\varphi(\algZ_1)$ is dense in $\C^m$.

By~\cite[Proposition~3.6]{Mumford76} (which applies to dominant
mappings between irreducible varieties), there exists a non-empty
open subset $\algZ'_1$ of $\algZ_1$ where all points are regular and
non-critical for $\varphi$. 

To continue, we prove that the equality $\openpolar(\varphi,V) =
\overline{\openpolar(\varphi,V)} \cap \reg(V)$ holds. Indeed, since
$\openpolar(\varphi,V)$ is contained in both $\overline{\openpolar(\varphi,V)}$
and $\reg(V)$, it is contained in $\overline{\openpolar(\varphi,V)} \cap
\reg(V)$. Conversely, Lemma~\ref{lemma:critequi} implies that
$\openpolar(\varphi,V)=\Kpolar(\varphi,V) \cap \reg(V)$, and that $\Kpolar(\varphi,V)$
is an algebraic set. Since $\openpolar(\varphi,V)$ is contained in
$\Kpolar(\varphi,V)$, its Zariski closure is contained in $\Kpolar(\varphi,V)$
too, so $\overline{\openpolar(\varphi,V)} \cap \reg(V)$ is contained in
$\Kpolar(\varphi,V) \cap \reg(V)$, that is, in $\openpolar(\varphi,V)$.

Taking the intersection with $\algZ_1$, the previous claim implies that $
\openpolar(\varphi,V) \cap \algZ_1 = \reg(V) \cap \algZ_1$; in particular, this is
an open subset of $\algZ_1$. More precisely, this is a {\em non-empty}
open subset of $\algZ_1$: if $\openpolar(\varphi,V) \cap \algZ_1$ were empty, we
would have $\openpolar(\varphi,V) = \openpolar(\varphi,V)-\algZ_1$, and thus
$\openpolar(\varphi,V) \subset \overline{\openpolar(\varphi,V)}-\algZ_1 \subset \algZ_2
\cup \cdots \cup \algZ_r$; taking the Zariski closure would yield
$\overline{\openpolar(\varphi,V)} \subset \algZ_2 \cup \cdots \cup \algZ_r$, a
contradiction.

Hence, both $\algZ'_1$ and $\openpolar(\varphi,V) \cap \algZ_1$ are non-empty
open subsets of $\algZ_1$. Since $\algZ_1$ is irreducible, they must intersect
at some point $\x$. Since $\x$ is in $\algZ'_1$, $\x$ is regular on $\algZ_1$
and $d_\x \varphi(\T_\x \algZ_1) =\C^m$ (recall that $d_\x \varphi$ denotes
the differential of $\varphi$ at $\x$). Since $\x$ is in
$\openpolar(\varphi,V)$, $\x$ is regular on $V$ and
$d_\x \varphi(\T_\x V) \ne \C^m$.  However, $d_\x \varphi(\T_\x \algZ_1)$ is
contained in $d_\x \varphi(\T_\x V)$, a contradiction.
\end{proof}

We continue with Thom's weak transversality theorem, specialized to
the particular case of transversality to a point; this can be
rephrased in terms of critical / regular values only. Our setup is the
following. Let $n,{{\dalgo}},m$ be positive integers and let
$\Phi(\X,\Theta) :\C^n \times \C^{{\dalgo}} \to \C^m$ be a polynomial
mapping. For $\vartheta$ in $\C^{{\dalgo}}$, $\Phi_{\vartheta}:\C^n\to
\C^m$ denotes the induced mapping $\x \mapsto \Phi(\x,\vartheta)$.

\begin{proposition}\label{thm:transversality}
  Let $\new{\BasicOpen} \subset \C^n$ be a Zariski open set and
  suppose that $0$ is a regular value of $\Phi$ on $\new{\BasicOpen}
  \times \C^{{\dalgo}}$. Then there exists a non-empty Zariski open
  subset $\UOpen\subset \C^{{\dalgo}}$ such that for all $\vartheta\in
  \UOpen$, $0$ is a regular value of $\Phi_{\vartheta}$ on
  $\new{\BasicOpen}$.
\end{proposition}

Before proving this proposition, let us establish a basic lemma.

\begin{lemma}\label{lemma:matrank1}
  Let $\mM$ be a matrix of the form $\left [\begin{smallmatrix} \mM_1 \\
      \mM_2 \end{smallmatrix} \right ]$. Then the equality $$\rank(\mM) = \rank(\mM_1) +
  \rank(\mM_2 | \ker(\mM_1) )$$ holds,
where $\mM_2 | \ker(\mM_1)$ denotes the restriction of the linear map defined by 
$\mM_2$ to the kernel of $\mM_1$.
\end{lemma}
\begin{proof}
  Let $(a,b)$ be the dimensions of $\mM$.
  From the equalities
$$\begin{array}{ccl}
\dim \ker(\mM_1) &=& \rank(\mM_2 | \ker(\mM_1) ) + \dim \ker(\mM_2 | \ker(\mM_1) ) \\[1mm]
&=& \rank(\mM_2 | \ker(\mM_1) ) + \dim (\ker(\mM_2) \cap  \ker(\mM_1)) \\[1mm]
&=& \rank(\mM_2 | \ker(\mM_1) ) + \dim\ker(\mM),
\end{array}$$
we deduce
$$b-\rank(\mM_1) = \rank(\mM_2 | \ker(\mM_1) ) + b-\rank(\mM)$$
and thus
$\rank(\mM) = \rank(\mM_1) + \rank(\mM_2 | \ker(\mM_1) ).$
\end{proof}

\begin{proof}[of Proposition~\ref{thm:transversality}]
  Let
  $\closedX'=\Phi^{-1}(0) \cap (\new{\BasicOpen} \times
  \C^{{\dalgo}})$
  and let $\closedX\subset \C^n\times \C^{{\dalgo}}$ be the Zariski
  closure of $\closedX'$. We will first prove: {\em if
    $\closedX' \ne \emptyset$, $\closedX$ is
    $(n+{{\dalgo}}-m)$-equidimensional, and $\closedX'$ is contained
    in $\reg(\closedX)$}.

  Assume that $\closedX' \ne \emptyset$, and take $(\x,\vartheta)$ in $\closedX'$;
  then, by assumption, $\jac_{(\x,\vartheta)}(\Phi)$ has full rank
  $m$. Since in a neighborhood of $(\x,\vartheta)$, $\closedX$ coincides with
  $\Phi^{-1}(0)$, the Jacobian criterion \cite[Theorem
  16.19]{Eisenbud95} implies that there is a unique irreducible
  component $\closedX_{(\x,\vartheta)}$ of $\closedX$ that contains $(\x,\vartheta)$, that
  $(\x,\vartheta)$ is regular on this component, that
  $\dim(\closedX_{(\x,\vartheta)})=n+{{\dalgo}}-m$ and that $\T_{(\x,\vartheta)} \closedX_{(\x,\vartheta)}$ is
  the nullspace of $\jac_{(\x,\vartheta)} (\Phi)$. 
 
  Since every irreducible component of $\closedX$ intersects $\closedX'$, this
  implies that $\closedX$ itself is equidimensional of dimension
  $n+{{\dalgo}}-m$, and thus that $\closedX'$ is contained in $\reg(\closedX)$. We
  are thus done with our claims on $\closedX$; note that we have also proved
  that for $(\x,\vartheta)$ in $\closedX'$, $\T_{(\x,\vartheta)} \closedX$ is the
  nullspace of $\jac_{(\x,\vartheta)} (\Phi)$ in $\C^n \times
  \C^{{\dalgo}}$.

  Denote by $\pi: \C^n\times \C^{{\dalgo}} \to \C^{{\dalgo}}$ the projection
  $(\x,\vartheta) \mapsto \vartheta$. We now prove: {\em if $\vartheta
    \in \C^{{\dalgo}}$ is such that $0$ is a critical value of
    $\Phi_{\vartheta}$ on $\new{\BasicOpen}$, then $\vartheta$ is a critical value of
    the restriction of 
    $\pi$ to $\closedX$}.
  
  Let $\vartheta \in \C^{{\dalgo}}$ be such that $0$ is a critical
  value of $\Phi_{\vartheta}$ on $\new{\BasicOpen}$. Thus, there exists $\x$ in
  $\openpolar(\Phi_{\vartheta}, \new{\BasicOpen})$ such that
  $\Phi(\x,\vartheta)=\Phi_{\vartheta}(\x)=0$. Since $\x$ lies in
  $\openpolar(\Phi_{\vartheta},\new{\BasicOpen})$, the matrix
  $\jac_\x(\Phi_{\vartheta})= \jac_{(\x,\vartheta)}(\Phi;\closedX) $ has
  rank less than $m$.

  On the other hand, our construction shows that $(\x,\vartheta)$ is
  in $\closedX'$ (so $\closedX'$ is not empty), and thus, using the above claim, in
  $\reg(\closedX)$. To conclude, we prove that $(\x,\vartheta)$ is in
  $\openpolar(\pi,\closedX)$; this is enough since by construction
  $\vartheta=\pi(\x,\vartheta)$. Let us consider the matrices
$$\jac_{(\x,\vartheta)}(\Phi)
= \left [ \begin{matrix} 
    \jac_{(\x,\vartheta)}(\Phi;\closedX) & \jac_{(\x,\vartheta)}(\Phi;\Theta)
\end{matrix} \right ]$$
and 
$$\mM = \left [ \begin{matrix}
    \jac_{(\x,\vartheta)}(\Phi;\closedX)  & \jac_{(\x,\vartheta)}(\Phi;\Theta)\\
      {\bf 0}_{{{\dalgo}}\times n} & {\bf 1}_{{{\dalgo}} \times {{\dalgo}}} 
    \end{matrix} \right ].$$
  By Lemma~\ref{lemma:matrank1}, we have the equality
  $\rank(\mM) = \rank(\jac_{(\x,\vartheta)}(\Phi)) + \rank( \pi\ |\
  \ker(\jac_{(\x,\vartheta)}(\Phi)))$.
  Since, as we saw above, the nullspace of
  $\jac_{(\x,\vartheta)}(\Phi)$ is the tangent space to $\closedX$ at
  $(\x,\vartheta)$, we get
$$\rank(\mM) = \rank(\jac_{(\x,\vartheta)}(\Phi)) + \rank(
\pi\ |\ \T_{(\x,\vartheta)}\closedX).$$
Recall that by assumption, $ \rank(\jac_{(\x,\vartheta)}(\Phi))=m$, so
that $$\rank(\mM) = m+ \rank( \pi\ |\ \T_{(\x,\vartheta)}\closedX).$$
On the other hand, one sees that
$\rank(\mM)=\rank(\jac_{(\x,\vartheta)}(\Phi;\closedX))+{{\dalgo}}$. Since
we have noted that $\rank(\jac_{(\x,\vartheta)}(\Phi;\closedX)) <m$,
we deduce that
$\rank( \pi\ |\ \T_{(\x,\vartheta)}\closedX) < {{\dalgo}}$, as
requested.

We can now conclude the proof of the proposition.
Proposition~\ref{prop:Sard} shows that the critical values of $\pi$ on
$\closedX$ are contained in a hypersurface of $ \C^{{\dalgo}}$, say $\Delta$. Let
$\UOpen=\C^{{\dalgo}}-\Delta$; this is a non-empty Zariski open subset
of $\C^{{\dalgo}}$. The former assertion shows that for all $\vartheta\in
\UOpen$, $0$ is a regular value of $\Phi_{\vartheta}$ on
$\new{\BasicOpen}$, as claimed.
\end{proof}


\subsubsection{Rank estimates}

In this paragraph, we prove a key result towards
Proposition~\ref{sec:prelim:prop:main1}, following a construction
from~\cite{BaGiHePa05,BGHSS}.

We consider polynomials $\h=(h_1,\dots,h_c)$ in $\C[X_1,\dots,X_n]$,
with $1 \le c \le n$, and we let $d=n-c$. We further denote by
$\mA=A_{1,1},\dots,A_{1,n},\dots,A_{d,1},\dots,A_{d,n}$ a family of
$dn$ new indeterminates. For ${{\dalgo}} \le d$, $\mA_{\le {\dalgo}}$
denotes the $ {{\dalgo}} n$ indeterminates
$A_{1,1},\dots,A_{1,n},\dots,A_{{{\dalgo}},1},\dots,A_{{{\dalgo}},n}$
and the $(c+{{\dalgo}})\times n$ polynomial matrix $J_{{\dalgo}}$ is
defined as
$$J_{{\dalgo}}=\begin{bmatrix}
  & \jac(\h) & \\
A_{1,1} & \cdots & A_{1,n} \\
\vdots & & \vdots \\
A_{{{\dalgo}},1} & \cdots & A_{{{\dalgo}},n} 
\end{bmatrix}.$$ We will often view elements $\a \in \C^{{{\dalgo}} n}$ as
vectors of length ${{\dalgo}}$ of the form
$\a=(\a_1,\dots,\a_{{\dalgo}})$ with all $\a_i$ in $\C^n$; for such an
$\a$, the matrix $J_{{\dalgo}}(\X,\a)$ (where the indeterminates $\mA$
are evaluated at $\a$) is then naturally defined. When $\a$ is a
sequence of linearly independent vectors, we say that $\a$ has rank
${\dalgo}$. We start with a result that is a slight generalization of
\cite[Lemma 3]{BaGiHePa05}.

\begin{lemma}\label{lemma:bound:codim}
  Let $\a\in \C^{{{\dalgo}}n}$,
  $\lcsw=\{\x\in \oreg(\h)\mid {\rm rank}(J_{{\dalgo}}(\x, \a))\leq
  c+{\dalgo}-1\}$
  and $\algZ$ be an irreducible component of the Zariski closure of
  $\lcsw$. Then, $\algZ$ has dimension at least~${\dalgo}-1$.
\end{lemma}
\begin{proof}
  Let $\mathfrak{a}$ be the ideal generated by all $(c+{\dalgo})$-minors
  of the $(c+{\dalgo})\times n$ matrix $J_{{\dalgo}}(\X, \a)$.
  One can rewrite $\lcsw$ as $\lcsw = \oreg(\h) \cap V(\mathfrak{a})
  \subset \freg(\h) \cap V(\mathfrak{a})$.  Thus, if the extended ideal
  $\mathfrak{a}\cdot \C[\freg(\h)]$ is not a proper ideal of $\C[\freg(\h)]$, $\freg(\h) \cap
  V(\mathfrak{a})$, and thus $\lcsw$, are empty, and we are done; we
  suppose it is not the case.

  Since $\oreg(\h)$ is an open subset of $\freg(\h)$, $\lcsw$ is an open subset of
  $\freg(\h) \cap V(\mathfrak{a})$, and its Zariski closure is the union of
  some irreducible components of $\freg(\h) \cap V(\mathfrak{a})$.  Let us
  take one of these irreducible components; call it $\algZ$. If we let
  $\mathfrak{p}$ be the ideal of definition of $\algZ$ in $\C[\freg(\h)]$, then,
  by definition, $\mathfrak{p}$ is an isolated prime component of the
  determinantal ideal $\mathfrak{a} \cdot \C[\freg(\h)]$. By \cite[Theorem
    3]{EN62}, the height of $\mathfrak{p}$ is at most
  $n-c-({\dalgo}-1)$. This implies that the codimension of $\algZ$ in $\freg(\h)$
  is at most $n-c-({\dalgo}-1)$. Since $\freg(\h)$ has dimension $n-c$, $\algZ$
  has dimension at least~$\dalgo-1$.
\end{proof}

Our key result in this paragraph is the following claim on the rank of
$J_{{\dalgo}}$, which says that for suitable values of ${\dalgo}$, and
for a generic $\a$, the matrix $J_{{\dalgo}}(\x,\a)$ has rank defect
at most one for any $\x$ in $\oreg(\h)$. Surprisingly, it does not use
transversality; only dimension considerations.
\begin{proposition}\label{prop:rankJ}
  For ${{\dalgo}}$ in $\{1,\dots,\lfloor (d+3)/2\rfloor\}$, there
  exists a non-empty Zariski open subset $\ZOA12_{{\dalgo}} \subset
  \C^{{{\dalgo}} n}$ such that for all $(\x,\a) \in \oreg(\h) \times
  \ZOA12_{{\dalgo}}$, the matrix $J_{{\dalgo}}(\x,\a)$ has rank at
  least $c+{{\dalgo}}-1$.
\end{proposition}
For ${{\dalgo}}$ as above, let us denote by ${\sf a}_{{\dalgo}}$ the
property in the proposition, so that proving the proposition amounts
to proving that ${\sf a}_{{\dalgo}}$ holds for
${{\dalgo}}=1,\dots,\lfloor (d+3)/2\rfloor$.  Obviously, ${\sf a}_1$
holds, since for all $\x$ in $\oreg(\h)$, $\jac_\x(\h)$ has rank $c
=c+1-1$ (so we can take $\ZOA12_1 = \C^n$). Thus, we can now focus on
the case ${{\dalgo}} \ge 2$.

For such a ${{\dalgo}}$, we will consider pairs of the form
$\mathsf{m}=(\mathsf{m}_{\text{row}},\mathsf{m}_{\text{col}})$ where
$\mathsf{m}_{\text{row}} \subset \{1,\dots,c+{{\dalgo}}-1\}$ and
$\mathsf{m}_{\text{col}} \subset \{1,\dots,n\}$ are sets of
cardinality $c+{{\dalgo}}-2$, and such that $\{1,\dots,c\} \subset
\mathsf{m}_{\text{row}}$. To one such $\mathsf{m}$, one can associate
the square submatrix $J_\mathsf{m}$ of size $c+{{\dalgo}}-2$ of $J_{{\dalgo}}$
whose rows and columns are indexed by the entries of
$\mathsf{m}_{\text{row}}$ and $\mathsf{m}_{\text{col}}$.  Thus,
$J_\mathsf{m}$ contains all rows coming from $\jac(\h)$ and excludes
two rows depending on the variables $\mA_{\le {{\dalgo}}}$, one of them
being the last row of $J_{{\dalgo}}$. We denote by $g_\mathsf{m}$ the
determinant of $J_\mathsf{m}$; this is a polynomial in $\C[\X,\mA_{\le
    {{\dalgo}}-1}]$, which we will see in $\C[\X,\mA_{\le {{\dalgo}}}]$ as well
when needed.

We denote by $\mathsf{Sub}_{{\dalgo}}$ the set of all pairs
$\mathsf{m}=(\mathsf{m}_{\text{row}},\mathsf{m}_{\text{col}})$ as
above such that, additionally, there exists $(\x,\a)\in
\oreg(\h)\times \C^{{{\dalgo}} n}$ such that $g_\mathsf{m}(\x,\a) \ne
0$.  Then, for $\mathsf{m}\in \mathsf{Sub}_{{\dalgo}}$, we introduce the
following condition:
\begin{enumerate}
\item [${\sf R}_{\mathsf{m}}:$] There exists a non-empty Zariski open
  subset $\ZOA12_\mathsf{m} \subset \C^{{{\dalgo}} n}$ such that
  for all $(\x,\a)$ in $\oreg(\h) \times \ZOA12_\mathsf{m}$, if
  $g_\mathsf{m}(\x,\a) \ne 0$, the matrix $J_{{\dalgo}}(\x,\a)$ has rank
  at least $c+{{\dalgo}}-1$.
\end{enumerate}

\begin{lemma}\label{lemma:rankJ}
  Let ${{\dalgo}}$ be in $\{2,\dots,d\}$; suppose that
  ${\sf a}_{{{\dalgo}}-1}$ holds, and that ${\sf R}_\mathsf{m}$ holds
  for all $\mathsf{m} \in \mathsf{Sub}_{{\dalgo}}$.  Then
  ${\sf a}_{{{\dalgo}}}$ holds.
\end{lemma}
\begin{proof}
  Under the assumptions of the lemma, we define $\ZOA12_{{\dalgo}}$ as
  the intersection of $\ZOA12_{{{\dalgo}}-1} \times \C^n \subset
  \C^{{{\dalgo}} n}$ (which is well-defined, since ${\sf
    a}_{{{\dalgo}}-1}$ holds)  with all $\ZOA12_\mathsf{m}$, for
  $\mathsf{m} \in \mathsf{Sub}_{{\dalgo}}$; this is still a non-empty
  Zariski open subset of $\C^{{{\dalgo}} n}$.

  Let us prove that this choice satisfies our constraints. We take
  $(\x,\a)$ in $\oreg(\h) \times \ZOA12_{{\dalgo}}$, and we prove that the
  matrix $J_{{\dalgo}}(\x,\a)$ has rank at least $c+{{\dalgo}}-1$.

  Let $\a'$ be the projection of $\a$ in $\C^{({{\dalgo}}-1) n}$.
  Because $\x$ is in $\oreg(\h)$, and because by construction $\a'$ is
  in $\ZOA12_{{{\dalgo}}-1}$, we know by the induction assumption that the
  matrix $J_{{{\dalgo}}-1}(\x,\a')$ has rank at least $c+{{\dalgo}}-2$.  Since (by
  assumption) $\jac_\x(\h)$ has full rank $c$, this implies that there
  exists a non-zero minor of size $c+{{\dalgo}}-2$ of $J_{{{\dalgo}}-1}(\x,\a')$,
  that contains the first $c$ rows. In other words, there exists
  $\mathsf{m}$ in $\mathsf{Sub}_{{{\dalgo}}}$ such that
  $g_\mathsf{m}(\x,\a)\ne 0$.

  Because $\a$ is in $\ZOA12_\mathsf{m}$, we deduce that
  $J_{{\dalgo}}(\x,\a)$ has rank at least $c+{{\dalgo}}-1$, concluding the proof.
\end{proof}

Recall that we already established that the statement ${\sf a}_1$ of
Proposition~\ref{prop:rankJ} holds for ${\dalgo}=1$. Thus, in order to
prove Proposition~\ref{prop:rankJ} (by induction on ${\dalgo}$), it
suffices to establish the following lemma.

\begin{lemma}
  For ${{\dalgo}}$ in $\{2,\dots,\lfloor (d+3)/2\rfloor\}$ and $\mathsf{m}$
  in $\mathsf{Sub}_{{\dalgo}}$, ${\sf R}_\mathsf{m}$ holds.
\end{lemma}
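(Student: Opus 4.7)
The plan is to prove ${\sf R}_\mathsf{m}$ by a standard transversality argument on an incidence variety. Let $\iota\in\{1,\dots,\dalgo-1\}$ be the unique index for which row $c+\iota$ is the $\mA$-row of $J_\dalgo$ missing from $\mathsf{m}_{\text{row}}$ (the other missing row is automatically $c+\dalgo$), and write $K=\{1,\dots,n\}\setminus\mathsf{m}_{\text{col}}$, so $|K|=d-\dalgo+2$. For each $k\in K$ and $r\in\{c+\iota,c+\dalgo\}$, let $g_{r,k}$ denote the $(c+\dalgo-1)$-minor of $J_\dalgo$ on rows $\mathsf{m}_{\text{row}}\cup\{r\}$ and columns $\mathsf{m}_{\text{col}}\cup\{k\}$. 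I package these together into the polynomial mapping
$$
\Psi=\bigl(\h,\ (g_{c+\iota,k})_{k\in K},\ (g_{c+\dalgo,k})_{k\in K}\bigr)\colon \C^n\times\C^{\dalgo n}\to \C^{c+2(d-\dalgo+2)},
$$
and consider the Zariski open set $U=\{(\x,\a)\mid g_\mathsf{m}(\x,\a)\neq 0\}$. On $U$, the identity $g_\mathsf{m}\neq 0$ forces $\jac_\x(\h)$ to have full rank $c$ (otherwise the rows of $J_\mathsf{m}$ would be linearly dependent), and the Schur complement formula applied with pivot block $J_\mathsf{m}$ expresses rank$(J_\dalgo(\x,\a))\le c+\dalgo-2$ as the simultaneous vanishing of the ratios $g_{r,k}/g_\mathsf{m}$. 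Consequently, the set one would like to prove generically empty is exactly $X:=\Psi^{-1}(0)\cap U$.

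The main step is to verify that $\Psi$ is submersive on $U$. At any $(\x,\a)\in U$, I will exhibit an invertible square submatrix of $\jac(\Psi)(\x,\a)$ of size $c+2(d-\dalgo+2)$ by selecting the $c$ columns of $\X$ corresponding to a non-vanishing $c$-minor of $\jac_\x(\h)$, together with the $A_{\iota,k}$-columns and the $A_{\dalgo,k}$-columns for $k\in K$. In the block partition of rows $(\h\mid g_{c+\iota,\cdot}\mid g_{c+\dalgo,\cdot})$ against columns $(\X\text{-block}\mid A_\iota\text{-block}\mid A_\dalgo\text{-block})$, this submatrix is block lower-triangular: $\h$ is independent of $\mA$, and $g_{c+\iota,k}$ does not involve row $c+\dalgo$, so the two upper-right blocks vanish. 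The three diagonal blocks are invertible: the $\h$-block is the chosen $c$-minor of $\jac_\x(\h)$, while expanding $g_{c+\iota,k}$ along its row $c+\iota$ gives $\partial g_{c+\iota,k}/\partial A_{\iota,k'}=0$ for $k'\in K\setminus\{k\}$ and $\partial g_{c+\iota,k}/\partial A_{\iota,k}=\pm g_\mathsf{m}$, so the $A_\iota$-block is diagonal with entries $\pm g_\mathsf{m}\neq 0$; the symmetric computation handles the $A_\dalgo$-block. This block-triangular invertibility is the key (and only subtle) computation.

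Granted submersivity, the Jacobian criterion guarantees that $X$ is either empty or smooth of pure dimension
$$
n+\dalgo n - \bigl(c+2(d-\dalgo+2)\bigr)=\dalgo n + 2\dalgo - d - 4.
$$
Since $\mathsf{m}\in\mathsf{Sub}_\dalgo$ and $2\le\dalgo\le\lfloor(d+3)/2\rfloor$, we have $2\dalgo\le d+3$, hence $\dim X\le\dalgo n-1<\dalgo n$. The Zariski closure of the image of the projection $\pi\colon X\to\C^{\dalgo n}$ is therefore a proper closed subset, and its complement $\Gamma_\mathsf{m}$ is a non-empty Zariski open subset of $\C^{\dalgo n}$. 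For $\a\in\Gamma_\mathsf{m}$, the fiber $\pi^{-1}(\a)$ is empty, which, unwinding the definition of $X$, is exactly the statement that no $\x\in\oreg(\h)$ with $g_\mathsf{m}(\x,\a)\neq 0$ can have rank$(J_\dalgo(\x,\a))\le c+\dalgo-2$. This is precisely ${\sf R}_\mathsf{m}$.
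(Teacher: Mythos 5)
Your proof is correct and follows essentially the same route as the paper: the same incidence variety cut out by $\h$ and the bordered $(c+\dalgo-1)$-minors, the same block-triangular Jacobian computation (with $\pm g_\mathsf{m}$ on the diagonal) giving submersivity, the same dimension count using $2\dalgo\le d+3$, and the same projection to the $\a$-space. The only cosmetic differences are that the paper localizes at $g_\mathsf{m}$ via an extra variable $T$ and the equation $g_\mathsf{m}T-1$ rather than working with the locally closed set directly, and it separately verifies non-emptiness of the incidence variety (this is where $\mathsf{m}\in\mathsf{Sub}_\dalgo$ enters), which you sidestep by allowing the "empty or smooth of pure dimension" alternative.
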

\begin{proof}
  Let ${{\dalgo}}$ and
  $\mathsf{m}=(\mathsf{m}_\text{row},\mathsf{m}_\text{col}) \in
  \mathsf{Sub}_{{\dalgo}}$
  be fixed. We let $i_1,i_2$ in $\{c+1,\dots,c+{{\dalgo}}\}$ be the
  two row indices not in $\mathsf{m}_\text{row}$ and
  $j_1,\dots,j_{d-{{\dalgo}}+2}$ be the column indices not in
  $\mathsf{m}_\text{col}$.

  Let us split the indeterminates $\mA_{\le {{\dalgo}}}$ into $\mA'$ and
  $\mA''$, where $\mA''$ contains the $2(d-{{\dalgo}}+2)$ variables
  $$A_{i_1,j_1},\dots,A_{i_1,j_{d-{{\dalgo}}+2}} \quad\text{and}\quad
  A_{i_2,j_1},\dots,A_{i_2,j_{d-{{\dalgo}}+2}}$$ and $\mA'$ contains all
  other ones, arranged in any order. Note in particular that the
  determinant $g_\mathsf{m}$ belongs to $\C[\X,\mA']$.
  Accordingly, any $\a \in \C^{{{\dalgo}} n}$ will be written as
  $\a=(\a',\a'')$, with $\a' \in \C^{{{\dalgo}} n - 2(d-{{\dalgo}}+2)}$ and $\a''
  \in \C^{2(d-{{\dalgo}}+2)}$.
  
  For $u \in \{1,2\}$ and $v \in \{1,\dots,d-{{\dalgo}}+2\}$, let us
  consider the $(c+{\dalgo}-1)$-minor $g_{u,v} \in \C[\X,\mA_{\le {{\dalgo}}}]$ of
  $J_{{\dalgo}}$ obtained by selecting all rows / columns from $\mathsf{m}$,
  as well as the one indexed by $(i_u,j_v)$, which corresponds to the
  position of the variable $A_{i_u,j_v}$ in $J_{{\dalgo}}$. There are
  $2(d-{{\dalgo}}+2)$ such minors, one for each variable in $\mA''$, and
  they can be written as $g_{u,v} = A_{i_u,j_v} g_{\mathsf{m}} +
  h_{u,v}$, with $h_{u,v} \in \C[\X,\mA']$.

  Introduce a new variable $T$ and consider the algebraic set $\alg2Z
  \subset \C^{n+ {{\dalgo}} n +1}$ defined by
  $$\alg2Z=V(h_1,\dots,h_c,\ g_{1,1},\dots, g_{2,d-{{\dalgo}}+2},\ g_\mathsf{m}T-1).$$
  The Jacobian matrix of these equations with respect to the variables
  $\X,\mA',\mA'',T$ is
  $$ \begin{bmatrix} 
    ~\jac(\h)~ & ~0~ & ~0~ & ~0~ \\
      \star & \star & \mathbf{D} & 0 \\
      \star & \star & \star & g_{\mathsf{m}}
  \end{bmatrix},$$
  where $\mathbf{D}$ is a diagonal matrix of size $2(d-{{\dalgo}}+2)$ having
  $g_\mathsf{m}$ on the diagonal. Thus, this Jacobian matrix has
  full rank $c+2(d-{{\dalgo}}+2)+1$ at every point of $\alg2Z$ (note that
  $g_\mathsf{m}(\x,\a) \ne 0$ implies that $\jac_\x(\h)$ has full rank
  $c$). 

  Next, we prove that $\alg2Z$ is not empty. Indeed, since we assume that
  $\mathsf{m}$ is in $\mathsf{Sub}_{{\dalgo}}$, there exists $(\x,\a)\in
  \oreg(\h)\times \C^{{{\dalgo}} n}$ such that $g_\mathsf{m}(\x,\a) \ne 0$.
  Write $\a=(\a',\a'')$.  Because $g_\mathsf{m}$ belongs to
  $\C[\X,\mA']$, we can change the values of $\a''$ without affecting
  the fact that $g_\mathsf{m}(\x,\a) \ne 0$. Since we have seen that
  the polynomials $g_{u,v}$ have the form $g_{u,v} = A_{i_u,j_v}
  g_{\mathsf{m}} + h_{u,v}$, with $h_{u,v} \in \C[\X,\mA']$, it is
  thus always possible to find suitable values for the variables
  $\mA''$ that ensure that $g_{u,v}(\a)=0$ for all $u,v$.  To
  summarize, $\alg2Z$ is not empty, and thus by the Jacobian
  criterion, it is equidimensional of
  dimension $d+ {{\dalgo}} n-2(d-{{\dalgo}}+2)$.

  Let $\alg2Z'$ be the Zariski closure of the projection of $\alg2Z$ on $\C^{n+
    {{\dalgo}} n}$ obtained by forgetting the coordinate $T$. Note that the
  restriction of the projection $\alg2Z\to \alg2Z'$ is birational; we deduce
  that $\alg2Z'$ is still equidimensional of dimension $d+ {{\dalgo}} n
  -2(d-{{\dalgo}}+2)$. Finally, let $\alg2Z''$ be the Zariski closure of the
  projection of $\alg2Z'$ on $\C^{{{\dalgo}} n}$ obtained by forgetting the
  coordinates $\X$; thus, $\alg2Z''$ has dimension at most $d+ {{\dalgo}}
  n-2(d-{{\dalgo}}+2)$. This implies that $\alg2Z''$ is a strict Zariski closed
  subset of $\C^{{{\dalgo}} n}$. Indeed, our assumption $2{{\dalgo}} \le d +3$
  implies that $d+ {{\dalgo}} n -2(d-{{\dalgo}}+2) < {{\dalgo}} n$.

  Let us take $\ZOA12_\mathsf{m}$ as the complementary of $\alg2Z''$ in
  $\C^{{{\dalgo}} n}$. To conclude, we prove that for all $(\x,\a)$ in
  $\oreg(\h) \times \ZOA12_\mathsf{m}$, if $g_\mathsf{m}(\x,\a) \ne
  0$, the matrix $J_{{\dalgo}}(\x,\a)$ has rank at least $c+{{\dalgo}}-1$.
  Indeed, for $(\x,\a)$ in $\oreg(\h) \times \ZOA12_\mathsf{m}$, such
  that $g_\mathsf{m}(\x,\a) \ne 0$, we can define
  $t=1/g_\mathsf{m}(\x,\a)$. The point $(\x,\a,t)$ does not belong to
  $\alg2Z$ (otherwise $\a$ would be in $\alg2Z''$), which implies that
  $g_{u,v}(\x,\a) \ne 0$ for some index $(u,v)$. The claim follows.
\end{proof}


\subsubsection{Proof of Proposition~\ref{sec:prelim:prop:main1}}

As above, we consider polynomials $\h=(h_1,\dots,h_c)$
in $\C[X_1,\dots,X_n]$, with $1 \le c \le n$ and we let $d=n-c$.
Recall what we have to prove: for ${\dalgo} \in \{1,\dots,d\}$, there
exists a non-empty Zariski open subset
$\ZOlp(\h,{{\dalgo}})\subset \GL(n)$, such that for $\mA$ in
$\ZOlp(\h,{{\dalgo}})$, the following holds:
\begin{itemize}
\item[(1)] for all $\x$ in $\oreg(\h^\mA)$, there exists a $c$-minor 
  $\minor$ of $\jac(\h^\mA)$ such that $\minor(\x)\ne 0$;
\smallskip
\item[(2)] every irreducible component of the Zariski closure
  of $\openpolar(0,{{\dalgo}}, \oreg(\h^\mA))$ has dimension ${\dalgo}-1$; 
\smallskip
\item[(3)] if ${\dalgo}\leq (d+3)/2$ then for all $\x$ in $\oreg(\h^\mA)$, there exists a
  $(c-1)$-minor $\pminor$ of $\jac(\h^\mA,{{\dalgo}})$ such that $\pminor(\x)\ne 0$;
\smallskip
\item[(4)] for every $c$-minor $\minor$ of the Jacobian matrix
  $\jac(\h^\mA)$ and for every $(c-1)$-minor $\pminor$ of the
  truncated Jacobian matrix $\jac(\h^\mA,{{\dalgo}})$, the polynomials
  $(\h^\mA,\sfH(\h^\mA,{{\dalgo}},\pminor))$ (see
  Definition~\ref{def:mH}) define $\openpolar(0,{{\dalgo}}, \oreg(\h^\mA))$ in
  $\Open(\minor \pminor)$, and their Jacobian matrix has full
  rank $n-({\dalgo}-1)$ at all points of $\Open(\minor \pminor)
  \cap \openpolar(0,{{\dalgo}}, \oreg(\h^\mA))$.
\end{itemize}
For ${\dalgo}$ as above, consider the polynomial mapping
$$\begin{array}{cccc}
\Phi:&\C^{n+c+{\dalgo}+{\dalgo}n} & \to & \C^{c+n} \\
& (\x,\lambda,\vartheta,\a) & \mapsto & \left (\h(\x),
[\lambda_1\ \cdots\ \lambda_c\ \vartheta_1\ \cdots\ \vartheta_{{\dalgo}}]\cdot \begin{bmatrix}
  & \jac_\x(\h) & \\
a_{1,1} & \cdots & a_{1,n} \\
\vdots & & \vdots \\
a_{{{\dalgo}},1} & \cdots & a_{{{\dalgo}},n} 
\end{bmatrix}\right );
\end{array}$$
note that the matrix involved is none other than $J_{{\dalgo}}$.
For $\a$ in $\C^{{{\dalgo}} n}$, we denote by $\Phi_\a$ the induced mapping
$\C^{n+c+{{\dalgo}}} \to \C^{c+n}$ defined by
$\Phi_\a(\x,\lambda,\vartheta)=\Phi(\x,\lambda,\vartheta,\a)$.

\begin{lemma}\label{prop:RK}
  Let $\ZOpropRK \subset \C^{n+c+{{\dalgo}}}$ be the open set defined by the
  rank conditions $\rank(\jac_\x(\h))=c$ and $\lambda\ne (0,\dots,0)$.
  There exists a non-empty Zariski open subset $\ZOdeltapropRK_{{\dalgo}}$ of
  $\C^{{{\dalgo}} n}$ such that for all $\a$ in $\ZOdeltapropRK_{{\dalgo}}$,
  $\a$ has rank ${{\dalgo}}$ and for $(\x,\lambda,\vartheta)$ in $\ZOpropRK
  \cap \Phi_\a^{-1}(0)$, the Jacobian matrix
  $\jac_{(\x,\lambda,\vartheta)} \Phi_\a$ has full rank $c+n$.
\end{lemma}
\begin{proof}
 In Section~3.2 of~\cite{BGHSS}, the following fact is proved: for
 {\em any} $(\x,\lambda,\vartheta,\a)$ in $\ZOpropRK$, the Jacobian matrix
 $\jac_{(\x,\lambda,\vartheta,\a)} \Phi$ has full rank $c+n$. This is
 in particular true for $(\x,\lambda,\vartheta,\a)$ in $\Phi^{-1}(0)$,
 so applying the weak transversality theorem
 (Proposition~\ref{thm:transversality}) to $\Phi$ on $\ZOpropRK \times
 \C^{{{\dalgo}} n}$ shows the existence of a non-empty Zariski open
 subset $\ZOdeltapropRK_{{\dalgo}}$ of $\C^{{{\dalgo}} n}$ such that for all
 $\a$ in $\ZOdeltapropRK_{{\dalgo}}$, and for $(\x,\lambda,\vartheta)$ in $\ZOpropRK
 \cap \Phi_\a^{-1}(0)$, the Jacobian matrix
 $\jac_{(\x,\lambda,\vartheta)}(\Phi_\a)$ has full rank $c+n$. 
 Upon restricting  $\ZOdeltapropRK_{{\dalgo}}$, we may in addition assume that 
 for all such $\a$, $\rank(\a)=\dalgo$.
\end{proof}

Let $\ZOdeltapropRK_{{\dalgo}} \subset \C^{{{\dalgo}} n}$ be as in
Lemma~\ref{prop:RK}.  When ${\dalgo}\leq (d+3)/2$, we let
$\ZOA12_{{\dalgo}} \subset \C^{{{\dalgo}} n}$ be as in
Proposition~\ref{prop:rankJ} else we set $\ZOA12_{{\dalgo}}\subset
\C^{{{\dalgo}} n}$ as the set of $\a$'s such that $\a$ has rank
${\dalgo}$. We consider the subset $\ZOlp(\h,{{\dalgo}}) \subset
\GL(n)$ of all invertible matrices $\mA$ such that the first
${{\dalgo}}$ rows of $\mA^{-1}$ are in $\ZOA12_{{\dalgo}} \cap
\ZOdeltapropRK_{{\dalgo}}$. This is a non-empty Zariski open subset of
$\GL(n)$. In what follows, we take $\mA$ in $\ZOlp(\h,{{\dalgo}})$,
and we prove that the conclusions of the proposition hold.  We will in
particular let $\b \in \C^{{{\dalgo}} n}$ be defined by taking the
first ${{\dalgo}}$ rows of $\mA^{-1}$; thus, $\b$ is in
$\ZOA12_{{\dalgo}}$ and~$\ZOdeltapropRK_{{\dalgo}}$.

Take first $\x$ in $\oreg(\h^\mA)$. The first point is clear, by 
definition of $\oreg(\h^\mA)$.
Consider next the matrix identity
 $\jac(\h^\mA) = \jac(\h)^\mA \mA$. A first consequence of it 
is that  $\oreg(\h)^\mA=\oreg(\h^\mA)$. It implies further that
\begin{equation}\label{eq:hA}
 \left [ \begin{array}{cc} \multicolumn{2}{c}{~\jac(\h^\mA)~} \\  
{\bf 1}_{{{\dalgo}}} & {\bf 0}
  \end{array} \right ]=
\left [ \begin{array}{c} \jac(\h)^\mA \\ \b \end{array}\right ] \mA =
J_{{\dalgo}}(\mA \X,\b) \mA.
\end{equation}
Let
$\lcsw=\{\x\in \oreg(\h)\mid {\rm rank}(J_{{\dalgo}}(\x, \b))\leq
c+{\dalgo}-1\}$.
By Lemma~\ref{lemma:critequi} and the above identity, we deduce that
$\openpolar(0,{\dalgo}, \oreg(\h^\mA))=\lcsw^\mA$. The following lemma will
allow us to estimate the dimension of $\lcsw$, and thus of
$\openpolar(0,{\dalgo}, \oreg(\h^\mA))$.

\begin{lemma}
  Let $\ZOpropRK$ be as in Lemma~\ref{prop:RK}. Then $\lcsw$ is the projection
  of $\ZOpropRK \cap \Phi_\b^{-1}(0)$ on the $\X$-space.
\end{lemma}
\begin{proof}
  A point $\x \in \oreg(\h)$ belongs to $\lcsw$ if and only if
  $J_{{\dalgo}}(\x, \b)$ has rank less than $c+\dalgo$, that is, if
  and only if there exists a nonzero vector
  $[\lambda_1\ \cdots\ \lambda_c\ \vartheta_1\ \cdots\
  \vartheta_{{\dalgo}}]$
  in the right nullspace of $J_{{\dalgo}}(\x, \b)$ (recall that this
  matrix has more columns than rows).
  For any such
  $[\lambda_1\ \cdots\ \lambda_c\ \vartheta_1\ \cdots\ \vartheta_{{\dalgo}}]$,
  $\lambda_1,\dots,\lambda_c$ cannot be all zero, since then this would imply 
  that $\b$ has rank less than $\dalgo$.
\end{proof}

Using the Jacobian criterion in the form of
Lemma~\ref{lemma:prelim:locallyclosed}, together with
Lemma~\ref{prop:RK}, we deduce that $\ZOpropRK \cap \Phi_\b^{-1}(0)$
is either empty or a non-singular $\dalgo$-equidimensional locally
closed set.

We can now prove the second point of
Proposition~\ref{sec:prelim:prop:main1}.  If
$\ZOpropRK \cap \Phi_\b^{-1}(0)$ is empty, its projection $\lcsw$ is
empty as well, and so is $\openpolar(0,{\dalgo},
\oreg(\h^\mA))$.
Otherwise, we saw in Lemma~\ref{lemma:bound:codim} that each
irreducible component of $\lcsw$ has dimension at least $\dalgo-1$, so
the following lemma is sufficient to conclude. In this lemma, we
denote by $\pi_\X$ the projection on the $\X$-space.

\begin{lemma}
  The locally closed set $\lcsw$ has dimension at most $\dalgo-1$.
\end{lemma}
\begin{proof}
  We saw that the Zariski closure $C$ of $\ZOpropRK \cap \Phi_\b^{-1}(0)$ is a
  $\dalgo$-equidimensional algebraic set. Let us write
  $C=\cup_{i\in I} C_i$, with all $C_i$ irreducible of dimension
  $\dalgo$.

  For $i$ in $I$, let $T_i$ be the Zariski closure of $\pi_\X(C_i)$,
  so that the projection $C_i \to T_i$ is a dominant mapping between
  irreducible varieties. The set $\lcsw$ is contained in the union of the
  $T_i$'s, so it is enough to prove that $\dim(T_i) \le \dalgo-1$
  holds for all $i$.

  Remark first that for all $i$, $\lcsw \cap T_i$ is dense in $T_i$.
  Indeed, define $C'_i = \ZOpropRK \cap \Phi_\b^{-1}(0) \cap C_i$; by
  construction, this is a dense subset of $C_i$, so that $T_i$ is also
  the Zariski closure of $\pi_\X(C'_i)$. On the other hand,
  $\pi_\X(C'_i)$ is contained in $\lcsw$, and thus in
  $\lcsw \cap T_i$, and we just saw that it is dense in $T_i$. Thus
  $\lcsw \cap T_i$ itself is dense in $T_i$.

  Fix $i$ such that $\dim(T_i)$ is maximal, and let $J \subset I$ be
  the set of all indices $j \in I$ such that $T_i=T_j$; thus, for $j$
  not in $J$, $T_i \cap T_j$ is a proper subvariety of $T_i$. This
  allows us to define a non-empty open set $\Omega \subset T_i$ such
  that for $y$ in $\Omega$, the following properties are satisfied:
  \begin{itemize}
  \item for all $j$ in $J$, for any irreducible component $F$ of
    $\pi_\X^{-1}(y) \cap C_j$, $F$ has dimension $\dalgo-\dim(T_i)$
    (this is by the theorem on the dimension of fibers for the
    projection $C_j \to T_j=T_i$);
\smallskip
  \item for all $j$ not in $J$, $\pi_\X^{-1}(y) \cap C_j$ is empty;
\smallskip
  \item $y$ is in $\lcsw$.
  \end{itemize}
  Take such a $y$. Then, $\pi_\X^{-1}(y) \cap C$ is the union of the
  sets $\pi_\X^{-1}(y) \cap C_j$, for $j$ in $J$, so it is an
  equidimensional algebraic set of dimension $\dalgo-\dim(T_i)$.

  On the other hand, $\pi_\X^{-1}(y) \cap \ZOpropRK \cap \Phi_\b^{-1}(0)$ has
  positive dimension, since it is defined by a homogeneous system (and
  does not consist only on the trivial solution $[0 \cdots 0]$). Since
  this set is contained in $\pi_\X^{-1}(y) \cap C$, the latter must
  have dimension at least one. Altogether, this implies that
  $\dim(T_i) \le \dalgo -1$, which implies that
  $\dim(\lcsw) \le \dalgo -1$.
\end{proof}

We prove now the third point, taking $\x$ in $\oreg(\h^\mA)$ and $\y=\mA\x$,
so that $\y \in \oreg(\h)$.  Because we assume that ${\dalgo}\le
(d+3)/2$ and that $\b$ is in $\ZOA12_{{\dalgo}}$, we deduce from
Proposition~\ref{prop:rankJ} that $J_{{\dalgo}}(\y,\b)$ has rank at
least $c+{{\dalgo}}-1$. Because $\mA$ is a unit, the matrix
equality~\eqref{eq:hA} implies that $\jac(\h^\mA,{{\dalgo}})$ has rank
at least $c-1$ at $\x$, and the third claim follows.

Only the last point is left to prove. Take $\minor$ and $\pminor$ as in the
proposition, respectively a $c$-minor of $\jac(\h^\mA)$ and a
$(c-1)$-minor of $\jac(\h^\mA,{{\dalgo}})$; without loss of generality, we
can assume that $\pminor \ne 0$. Let further $\iota$ be the index of the
row of $\jac(\h^\mA,{{\dalgo}})$ not in $\pminor$.

By Lemma~\ref{lemma:critequi}, we know that
  $$\openpolar(0,{{\dalgo}},\oreg(\h^\mA)) = \{ \x \in \oreg(\h^\mA) \ | \ \rank (\jac_\x(\h^\mA))=c
  \ \text{and}\ \rank (\jac_\x(\h^\mA,{{\dalgo}})) < c \}.$$
  Inside $\Open(\minor)$, $\oreg(\h^\mA)$ coincides with
  $V(\h^\mA)$. As a consequence, inside $\Open(\minor)$,
  $\openpolar(0,{{\dalgo}},\oreg(\h^\mA))$ coincides with the set of
  all $\x$ in $V(\h^\mA)$ such that all $c$-minors of
  $\jac(\h^\mA,{{\dalgo}})$ vanish at $\x$. Restricting further, we
  deduce from the exchange lemma of e.g.~\cite[Lemma~4]{BaGiHeMb01}
  that inside $\Open(\minor \pminor)$,
  $\openpolar(0,{{\dalgo}},\oreg(\h^\mA))$ coincides with
  $V(\h^\mA, \sfH(\h^\mA,{{\dalgo}}, \pminor))$, for the polynomials
  $\sfH(\h^\mA,{{\dalgo}}, \pminor)$ introduced in
  Definition~\ref{def:mH}.  Thus, it remains to prove that for all
  $\x$ in
  $V(\h^\mA, \sfH(\h^\mA,{{\dalgo}}, \pminor)) \cap \Open(\minor
  \pminor)$,
  the Jacobian matrix of $(\h^\mA, \sfH(\h^\mA,{{\dalgo}}, \pminor))$
  has full rank, equal to $n-{{\dalgo}}+1$.  (This will in particular
  reprove the second item in our
  proposition~\ref{sec:prelim:prop:main1}, but only in the open set
  $\Open(\minor \pminor)$.)

Let $L_1,\cdots,L_c$ and $T_1,\dots,T_{{\dalgo}}$ be new variables. We
deduce from~\eqref{eq:hA} that the ideal generated by the entries of
the vector
\begin{equation*}
[L_1 ~\cdots~ L_c\ T_1 ~\cdots~ T_{{\dalgo}}]\cdot
\left [ \begin{array}{cc} \multicolumn{2}{c}{~\jac(\h^\mA)~} \\  
{\bf 1}_{{{\dalgo}}} & 0
  \end{array} \right ]  
\end{equation*}
also admits for generators the entries of 
\begin{equation*}
[L_1 ~\cdots~ L_c\ T_1 ~\cdots~ T_{{\dalgo}}]\cdot
\left [ \begin{array}{c} \jac(\h) \\ \b \end{array}\right ]^\mA.  
\end{equation*}

Looking at the first equation above, and using
Proposition~\ref{lemma:linearsolve}, we deduce that there exist
$(\rho_j)_{j=1,\dots,c,j \ne \iota}$ and
$(\tau_i)_{i=1,\dots,{{\dalgo}}}$ in $\C[\X]_{\pminor}$ such that in
$\C[\X,\L,\mathbf{T}]_{\pminor}$, the ideal generated by the entries
of
\begin{equation*}
\h^\mA,\ 
[L_1 ~\cdots~ L_c\ T_1 ~\cdots~ T_{{\dalgo}}]\cdot
\left [ \begin{array}{cc} \multicolumn{2}{c}{~\jac(\h^\mA)~} \\  
{\bf 1}_{{{\dalgo}}} & 0
  \end{array} \right ]  
\end{equation*}
admits for generators polynomials of the form
\begin{equation}\label{eq:hHrhotau}
\h^\mA,\ L_\iota \sfH(\h^\mA,{{\dalgo}}, \pminor),\ (L_j  -\rho_j L_\iota)_{j=1,\dots,c, j \ne \iota},\ (T_i-\tau_i L_\iota)_{i=1,\dots,{{\dalgo}}}.  
\end{equation}
On the other hand, we also observe that 
\begin{equation*}
\h^\mA,\ [L_1~\cdots~L_c~T_1~\cdots~T_{{\dalgo}}]\cdot
\left [ \begin{array}{c} \jac(\h) \\ \b \end{array}\right ]^\mA
\end{equation*}
coincide with the entries of the polynomial vector $\Phi_\b^{\mA}$,
where $\Phi:\C^{n+c+{{\dalgo}}+{{\dalgo}} n} \to \C^{c+n}$ is the
polynomial mapping defined at the beginning of this paragraph, and
where the superscript ${}^\mA$ indicates that $\mA$ acts on the
variables $\X$.

Now, let $\x$ be in
$V(\h^\mA, \sfH(\h^\mA,{{\dalgo}}, \pminor)) \cap \Open(\minor
\pminor)$.
Define first $\lambda_\iota=1$, then $\lambda_j = \rho_j(\x)$ for
$j=1,\dots,c,j\ne \iota$ and $\vartheta_i=\tau_i(\x)$ for
$i=1,\dots,{{\dalgo}}$; these are all well-defined, since
$\pminor(\x)\ne 0$. It follows that $(\x,\lambda,\vartheta)$ cancels
all equations in~\eqref{eq:hHrhotau}.  Let $\y=\mA \x$. The previous
statements show that $(\y,\lambda,\vartheta)$ is in
$\Phi_\b^{-1}(0)$. Now, recall that $\b$ is in
$\ZOdeltapropRK_{{\dalgo}}$; besides, since $\minor(\x)\ne 0$, $\x$ is
in $\oreg(\h^\mA)$ and thus $\y$ is in $\oreg(\h)$. Since also
$\lambda \ne 0$, Lemma~\ref{prop:RK} implies that
$\jac_{\y,\lambda,\vartheta}(\Phi_\b)$ has full rank $c+n$ at
$(\y,\lambda,\vartheta)$.

Through the change of variables $\mA$, this implies that the Jacobian
of $\Phi_\b^{\mA}$ has full rank $c+n$ at $(\x,\lambda,\vartheta)$,
and this in turn implies the same property for the Jacobian of
$$\h^\mA,\ L_\iota \sfH(\h^\mA,{{\dalgo}}, \pminor),\ (L_j -\rho_j
L_\iota)_{j=1,\dots,c, j \ne \iota},\ (T_i-\tau_i
L_\iota)_{i=1,\dots,{{\dalgo}}}.$$ This finally implies that the Jacobian
matrix of $(\h^\mA, \sfH(\h^\mA,{{\dalgo}}, \pminor))$ has full rank $n-{{\dalgo}}+1$ at
$\x$, so the proof is complete.


\subsection{Charts and atlases for polar varieties} \label{ssec:app:atlases:polar}

We can now prove that if $\psi$ is a chart for a triple $(V,Q,S)$, the
construction $\chartpolar(\psi,m',m'')$ of
Definition~\ref{sec:chart:notation:polar} does indeed define a chart
for $\polar(e,{\dalgo},V)$, at least in generic coordinates and for
some suitable values of ${\dalgo}$.

\begin{lemma}\label{sec:chart:lemma:polarchart}
  Let $Q \subset \C^e$ be a finite set and let $V \subset \C^n$ and
  $S\subset \C^n$ be algebraic sets lying over $Q$.
  Suppose that $V$ is $d$-equidimensional, let $\psi=(m,\h)$ be a
  chart of $(V,Q,S)$, and let ${\dalgo}$ be an integer in
  $\{1,\dots,d\}$.

  There exists a non-empty Zariski open $\scrGpolarchart(\psi,V,Q,S,{\dalgo})
  \subset \GL(n,e)$ such that, for $\mA$ in
  $\scrGpolarchart(\psi,V,Q,S,{\dalgo})$, the following holds, where we write
  $W=\polar(e,{\dalgo},V^\mA)$.
  \begin{itemize}
  \item For any minors $m'$ and $m''$ of $\jac(\h^\mA)$ as in
    Definition~\ref{sec:chart:notation:polar}, writing
    $\chartpolar(\psi^\mA,m',m'')=(m^\mA m' m'',\h')$, the set
    $\Open(m^\mA m' m'') \cap W-S^\mA$ coincides with
    $\Open(m^\mA m' m'') \cap \fbr(V(\h'),Q)-S^\mA$.
\smallskip
  \item For $m',m''$ as above, if $\Open(m^\mA m' m'') \cap
    W-S^\mA\neq \emptyset$, then $\chartpolar(\psi^\mA,m',m'')$ is a chart of
    $(W,Q,S^\mA)$.
\end{itemize}
Moreover, when we additionally assume that ${\dalgo}\le (d+3)/2$, the
following holds for $\mA$ in $\scrGpolarchart(\psi,V,Q,S,{\dalgo})$.
  \begin{itemize}
  \item The sets $\Open(m^\mA m' m'') -S^\mA$, taken for all
    $m',m''$, cover $\Open(m^\mA)\cap V^\mA-S^\mA$.
\smallskip
  \item The sets $\Open(m^\mA m' m'')-S^\mA$, taken for all
    $m',m''$ such that $\Open(m^\mA m' m'') \cap W-S^\mA$ is not
    empty, cover $\Open(m^\mA)\cap W-S^\mA$.
  \end{itemize}
\end{lemma}
\begin{proof}
  For $\y=(x_1,\dots,x_e)$ in $Q$, let $\h_\y$ be the polynomials
  $$\h(x_1,\dots,x_e,X_{e+1},\dots,X_n),$$ which are in
  $\C[X_{e+1},\dots,X_n]$; more generally, for any
  $f \in \C[X_1,\dots,X_n]$, $f_\y$ will be defined in this manner.
  Let further $\ZOlp_\y$ be the non-empty Zariski
  open subset of $\GL(n-e)$ obtained by applying
  Proposition~\ref{sec:prelim:prop:main1} to $\h_\y$: this is valid,
  since, by assumption ${\dalgo}\leq d$ and, by
  Lemma~\ref{sec:lemma:singS}, $\h_\y$ involves $n-e-d$ equations in
  $n-e$ variables, so the assumptions of that proposition are
  satisfied.

  Let ${\mathscr{G}_\y} \subset \GL(n, e)$ be obtained by taking the
  direct sum of the identity matrix of size $e$ with the elements of
  $\ZOlp_\y$, and let finally $\scrGpolarchart(\psi,V,Q,S,{\dalgo})$
  be the intersection of the finitely many ${\mathscr{G}_\y}$'s.  This
  is a non-empty Zariski open subset of $\GL(n,e)$.  We now take $\mA$
  in $\scrGpolarchart(\psi,V,Q,S,{\dalgo})$, we let $\mA'\in \GL(n-e)$ be
  its second summand, and we prove that the claims of the proposition
  hold.

  Because $\mA$ is block-diagonal and leaves the first $e$ variables
  invariant, for any polynomial $h$ and for any $\y$ in $Q$, we have
  $(h_\y)^{\mA'}=(h^\mA)_\y$; we simply write it
  $h^\mA_\y$. Geometrically, we define the algebraic sets $V^\mA_\y
  \subset \C^n$ (by restricting the points in $V^\mA$ to those lying
  over $\y$) and ${V'_\y}^\mA \subset \C^{n-e}$ (by forgetting the
  first $e$ coordinates from $V^\mA_\y$), and similarly the sets
  $S^\mA_\y \subset \C^n$ and ${S'_\y}^\mA \subset \C^{n-e}$.

  Let now $m',m''$ be minors of respectively $\jac(\h,e)$ and
  $\jac(\h,e+{\dalgo})$, and let $\h'=(\h,
  \sfH(\h,e+{\dalgo},m'))$.
  We first prove the following claim: {\em in the open set
    $\Open(m^\mA m' m'')-S^\mA$, $\fbr(V(\h'),Q)$ coincides with
    $\openpolar(e,{\dalgo},V^\mA)$ and at any of these points,
    $\jac(\h',e)$ has full rank $n-e-({\dalgo}-1)$.}

  Fix $\y$ in $Q$, so that $m'_\y$ and $m''_\y$ are minors of
  respectively the matrices $\jac(\h^\mA_\y)$ and
  $\jac(\h^\mA_\y,{\dalgo})$. The polynomials $\h'_\y$ are precisely
  the polynomials considered in point (4) of
  Proposition~\ref{sec:prelim:prop:main1}. Because $\mA'$ is in
  $\ZOlp_\y$, that proposition implies that the
  polynomials $\h'_\y$ define $\openpolar({\dalgo}, \oreg(\h^\mA_\y))$ in
  $\Open(m'_\y m''_\y)$, and that their Jacobian matrix has full
  rank $n-e-({\dalgo}-1)$ everywhere on $\Open(m'_\y m''_\y)\cap
  \openpolar(0, {\dalgo}, \oreg(\h^\mA_\y))$.

  Using ${\sfC_2}$ and ${\sfC_4}$ for $\psi^\mA$ and restricting to
  the fiber above $\y$, we deduce that in
  $\Open(m^\mA_\y)-{S'_\y}^\mA$, ${V'_\y}^\mA$ coincides with
  $\oreg(\h^\mA_\y)$, so in
  $\Open(m^\mA_\y m'_\y m''_\y)-{S'_\y}^\mA$, the polynomials $\h'_\y$
  define $\openpolar(0,{\dalgo},{V'_\y}^\mA)$ as well.  Transporting all
  objects back to $\C^n$, and taking the union over all $\y \in Q$, we
  obtain that in $\Open(m^\mA m' m'')-S^\mA$, $\fbr(V(\h'),Q)$
  is the disjoint union of all $\openpolar(e,{\dalgo},V^\mA_\y)$,
  which is none other than $\openpolar(e,{\dalgo},V^\mA)$. Besides, at any
  of these points, $\jac(\h',e)$ has full rank $n-e-({\dalgo}-1)$, so
  our claim is proved.

  We can now prove the first two items. As a preliminary, remark that
  the number of polynomials in $\chartpolar(\psi^\mA,m',m'')$ is
  $c'=n-e-(\dalgo-1)$; then, $c'+e = n-({\dalgo}-1)$, so the
  assumption ${\dalgo} \ge 1$ implies $c'+e \le n$, which will
  establish ${\sfC_3}$ below.

  Writing $W=\polar(e,{\dalgo},V^\mA)$, we saw in
  Subsection~\ref{ssec:A12} the inclusions
  $$\openpolar(e,{\dalgo},V^\mA)\ \subset\ W\ \subset\ \Kpolar(e,\dalgo,V^\mA)=
  \openpolar(e,{\dalgo},V^\mA) \cup \sing(V^\mA).$$
  Let us take the intersection with $\Open(m^\mA m'm'')-S^\mA$.
  Lemma~\ref{sec:lemma:singS} shows that $\Open(m^\mA)-S^\mA$ does not
  intersect $\sing(V^\mA)$, so we deduce that
  $\Open(m^\mA m'm'')\cap W-S^\mA= \Open(m^\mA m'm'')\cap
  \openpolar(e,{\dalgo},V^\mA)-S^\mA$,
  which is equal to $\Open(m^\mA m'm'')\cap \fbr(V(\h'),Q)-S^\mA$ in
  view of the claim above. This remark, and the rank property for
  $\jac(\h',e)$ mentioned just above, prove properties ${\sfC_2}$ and
  ${\sfC_4}$ for $\chartpolar(\psi^\mA,m',m'')$; if
  $\Open(m^\mA m' m'') \cap W-S^\mA$ is not empty, we also have
  ${\sfC_1}$, and ${\sfC_3}$ was proved above. Thus, we are done with
  the first two items in the lemma.
  
  The third point is easier. Take $\x=(x_1,\dots,x_n)$ in
  $\Open(m^\mA) \cap V^\mA-S^\mA$, so that $\y=(x_1,\dots,x_e)$ is in
  $Q$, and let $\z=(x_{e+1},\dots,x_n)$. Since $\x$ is in
  $\Open(m^\mA) \cap V^\mA-S^\mA$, by ${\sfC_4}$ for
  $\psi^\mA$, the matrix $\jac(\h^\mA,e)$ has full rank $c$ at $\x$;
  equivalently, the matrix $\jac_\z(\h^\mA_\y)$ has full rank $c$ at
  $\z$, so $\z$ is in $\oreg(\h^\mA_\y)$.

  Now, we assume additionally that ${\dalgo}\leq (d+3)/2$. Due to our
  choice of $\mA$, we can apply
  Proposition~\ref{sec:prelim:prop:main1}; we deduce from points (1)
  and (3) of that proposition that there exist minors
  $\polmu',\polmu''$ of $\jac(\h^\mA_\y)$ and
  $\jac(\h^\mA_\y,{\dalgo})$ that do not vanish at $\z$. Now, there
  exist minors $m'$ and $m''$ of $\jac(\h^\mA,e)$ and
  $\jac(\h^\mA,e+{\dalgo})$ such that $\polmu'=m'_\y$ and
  $\polmu''=m''_\y$.  In particular, we deduce that $m'(\x)$ and
  $m''(\x)$ are both non-zero, so $\x$ is actually in
  $\Open(m^\mA m' m'')-S^\mA$. The third item is proved.

  The fourth point is obvious. Take $\x=(x_1,\dots,x_n)$ in
  $\Open(m^\mA) \cap W-S^\mA$. Then, $\x$ is in
  $\Open(m^\mA) \cap V^\mA-S^\mA$, so, since ${\dalgo}\le (d+3)/2$ by
  assumption, there exists $m'$ and $m''$ as before such that $\x$ is
  in $\Open(m^\mA m' m'')-S^\mA$. In particular,
  $\Open(m^\mA m' m'')\cap W-S^\mA$ is not empty.
\end{proof}

\begin{lemma}\label{lemma:lowerbounddimW}
  Let $Q \subset \C^e$ be a finite set and let $V \subset \C^n$ and
  $S\subset \C^n$ be algebraic sets lying over $Q$, with $S$ finite.
  Suppose that $V$ is $d$-equidimensional and let ${\dalgo}$ be an
  integer in $\{1,\dots,d\}$.  Then all irreducible components of
  $\polar(e,\dalgo,V)$ have dimension at least $\dalgo-1$.
\end{lemma}
\begin{proof}
  Up to replacing $n$ by $n-e$ and $\polar(e,\dalgo,V)$ by
  $\polar(0,\dalgo,V)$, and to working over all points of $Q$
  independently, we can assume that $e=0$ (so as to allow us to use
  Lemma~\ref{lemma:bound:codim}, which was written in this
  context). Then, it is enough to prove that for any $\x$ in
  $\openpolar(0,\dalgo,V)$, any irreducible component of
  $\polar(0,\dalgo,V)$ passing through $\x$ has dimension at least
  $\dalgo-1$.

  Consider the atlas $\bpsi=(\psi_i)_{1 \le i \le s}$ of
  $(V,\{\bullet\},\sing(V))$ introduced in
  Lemma~\ref{sec:atlas:lemma:glob}, and write $\psi_i=(m_i,\h_i)$ for
  all $i$. We know from Lemma~\ref{sec:coro:lemma:singSX} that all
  $\h_i$ have cardinality $c=n-d$. Besides, there exists an index $i$
  such that $\x$ is in $\Open(m_i)-\sing(V)$, and in this open set, Lemma~\ref{sec:atlas:chartpolar} shows that
  $$\polar(0,{{\dalgo}},V) \quad\text{and}\quad \left \{ \x \in \oreg(\h_i)
  \ | \ \rank (\jac_\x(\h_i,\dalgo)) < \dalgo \right
  \}
$$ coincide. In particular, the irreducible components of
  $\polar(0,{{\dalgo}},V)$ containing $\x$ are also the irreducible
  components of the Zariski closure of the locally closed set on the
  right-hand side. Now, for $\x$ in $\Open(m_i)-\sing(V)$, the matrix $\jac_\x(\h_i,\dalgo)$ satisfies
  the following equality:
$$\rank\, \left [ \begin{array}{cc} \multicolumn{2}{c}{~\jac_\x(\h_i)~} \\  
{\bf 1}_{{{\dalgo}}} & {\bf 0}
  \end{array} \right ] = c+ \rank(\jac_\x(\h_i,\dalgo)) $$
so applying Lemma~\ref{lemma:bound:codim} finishes proof.
\end{proof}

\subsection{Proof of the proposition}
We can now prove Proposition~\ref{prop:ch4}.
  Write $\bpsi=(\psi_i)_{1 \le i \le s}$. To each $\psi_i$, we
  associate the non-empty Zariski open subset
  $\scrGpolarchart(\psi_i,V,Q,S,{\dalgo})$ of
  Lemma~\ref{sec:chart:lemma:polarchart}, and we let
  $\scrGpolar(\bpsi,V,Q,S,{\dalgo})$ be their intersection; it is
  still non-empty and Zariski open.

  Take $\mA$ in $\scrGpolar(\bpsi,V,Q,S,{\dalgo})$ and write
  $W=\polar(e,{\dalgo},V^\mA)$; assume that $W$ is not empty
  (otherwise, there is nothing to do). Then, by
  Lemma~\ref{lemma:lowerbounddimW}, all irreducible components of $W$
  have dimension at least $\dalgo-1 \ge 1$.
  Let us prove that $\atlaspolar(\bpsi^\mA,V^\mA,Q,S^\mA,{\dalgo})$ is
  an atlas of $W$.
  \begin{itemize}
  \item   For all minors $m'$ and $m''$ of $\jac(\h_i^\mA)$
  as in Definitions~\ref{sec:chart:notation:polar}
  and~\ref{sec:atlas:notation:polar}, the second item in
  Lemma~\ref{sec:chart:lemma:polarchart} shows that if $\Open(m_i^\mA
  m' m'') \cap W-S^\mA$ is not empty, $\chartpolar(\psi_i^\mA,m',m'')$
  is a chart of $(W,Q,S^\mA)$. Thus, we have proved ${\sfA_1}$.
\smallskip
\item We next prove ${\sfA_3}$, that is, that all corresponding $\Open(m_i^\mA
  m' m'')$ cover $W-S^\mA$. For any fixed $i$, the last item in
  Lemma~\ref{sec:chart:lemma:polarchart} shows that the sets $\Open(
  m_i^\mA m' m'')\cap W-S^\mA$ cover $\Open(m_i^\mA)\cap
  W-S^\mA$. Since the open sets $\Open(m_i^\mA)$ cover $V-S^\mA$, and
  thus $W-S^\mA$, our claim is proved.
\smallskip
\item  ${\sfA_2}$ follows from the fact that
  $W$ is not contained in $S^\mA$ (since $W$ have dimension at least $1$,
  and $S$ is finite).
  \end{itemize}
  Hence, $\atlaspolar(\bpsi^\mA,V^\mA,Q,S^\mA,{\dalgo})$ is an atlas
  of $W$. Lemma~\ref{sec:coro:lemma:singSX} shows that all sequences
  of polynomials appearing in the atlas $\bpsi$ have the same
  cardinality; this implies that all polynomial sequences appearing in
  $\atlaspolar(\bpsi^\mA,V^\mA,Q,S^\mA,{\dalgo})$ have the same
  cardinality as well. As a result, Lemma~\ref{sec:atlas:lemma:singSX}
  implies that $W-S^\mA$ is a non-singular
  $({\dalgo}-1)$-equidimen\-sional locally closed set.  Since all
  irreducible components of $W$ have dimension at least~$1$, $W$ is
  the Zariski closure of $W-S^\mA$. Thus, $W$ itself is
  $({\dalgo}-1)$-equidimensional, and singular points of
  $\polar(e,{\dalgo},V^\mA)$ are contained in $S^\mA$; in particular, they
  are in finite number. 


\section{Proof of Proposition~\ref{sec:atlas:prop:summary1}}\label{sec:proof3.9}

The proof of Proposition~\ref{prop:ch5} uses
Proposition~\ref{sec:atlas:prop:summary1}; hence, we prove the latter
first. Its statement is as follows:  {\em Let $Q \subset \C^e$ be a
  finite set and let $V \subset \C^n$ and $S\subset \C^n$ be algebraic
  sets lying over $Q$, with $S$ finite.  Suppose that $V$ is
  equidimensional of dimension~$d$.  Let $\bpsi$ be an atlas of
  $(V,Q,S)$, and let ${\dalgo}$ be an integer in $\{1,\dots,d\}$.  If
  $2 \le {\dalgo}\leq (d+3)/2$, there exists a non-empty Zariski open
  subset $\scrGfiber(\bpsi,V,Q,S,{\dalgo})$ of $\GL(n,e)$ such that
  for $\mA$ in $\scrGfiber(\bpsi,V,Q,S,{\dalgo})$, the following
  holds.

  Define $W=\polar(e,{\dalgo},V^\mA)$ and let
  $\fiber2 \subset \C^{e+{\dalgo}-1}$ be a finite set lying over $Q$;
  define $\Vfiber=\fbr(V^\mA,\fiber2)$.  Let further
  $\fibersing2 = \fbr(S^\mA \cup \polar(e,{\dalgo},V^\mA),\fiber2)$.
  Then:
  \begin{itemize}
  \item $\fibersing2$ is finite,
\smallskip
\item either $\Vfiber$ is empty or
  $\atlasfiber(\bpsi^\mA,V^\mA,Q,S^\mA,\fiber2)$ is an atlas of
  $(\Vfiber,\fiber2,\fibersing2)$, and $\Vfiber$ is equidimensional of
  dimension $d-({\dalgo}-1)$, with $\sing(\Vfiber)$ contained in the
  finite set~$\fibersing2$.
  \end{itemize}
}

The outline of this section is similar to that of
Section~\ref{sec:chartsatlas}: we first work locally, showing how to
construct a chart for the set above, then handle global properties.

\begin{lemma}\label{lemma:chart:fiber}
  Let $Q \subset \C^e$ be a finite set and let $V \subset \C^n$ and
  $S\subset \C^n$ be algebraic sets lying over $Q$.  Suppose that
  $(V,Q)$ is equidimensional of dimension $d$, with finitely many
  singular points, let $\psi=(m,\h)$ be a chart of $(V,Q,S)$ and let
  ${\dalgo}$ be an integer in $\{1,\dots,d\}$.

  There exists a non-empty Zariski open
  $\scrGfiberchart(\psi,V,Q,S,{\dalgo}) \subset \GL(n,e)$ such that, for
  $\mA$ in $\scrGfiberchart(\psi,V,Q,S,{\dalgo})$, the following holds.

  Let $\fiber2 \subset \C^{e+{\dalgo}-1}$ be a finite set lying over
  $Q$ and define $\Vfiber=\fbr(V^\mA,\fiber2)$.  Let further
  $\fibersing2 = \fbr(S^\mA \cup \polar(e,{\dalgo},V^\mA),\fiber2)$. Then
  either $\Open(m^\mA)\cap \Vfiber-\fibersing2$ is empty or
  $\psi^\mA$ is a chart of $(\Vfiber,\fiber2,\fibersing2)$, and
  $\fibersing2$ is finite if $S$ is.
\end{lemma}
\begin{proof}
  For $\y$ in $Q$, let $V'_\y \subset \C^{n-e}$ be the algebraic set
  obtained by forgetting the first $e$ coordinates in $V_\y=\fbr(V,
  \y)$, let $\ZOffp_{\y}$ be the Zariski open set associated to $V'_\y$
  and ${\dalgo}$ by Lemma~\ref{sec:prelim:lemma:finitefiberpolar} and
  let $\scrGfiber'_{,\y}\subset \GL(n,e)$ be obtained as the direct
  sum of the size-$e$ identity matrix and $\ZOffp_\y$. Finally, we
  take for $\scrGfiberchart(\psi,V,Q,S,{\dalgo})$ the intersection of all
  $\scrGfiber'_{,\y}$, for $\y$ in $Q$.

  Take $\mA$ in $\scrGfiberchart(\psi,V,Q,S,{\dalgo})$, and let
  $\mA' \in \GL(n-e)$ be its second
  summand. Lemma~\ref{sec:prelim:lemma:finitefiberpolar} shows that
  for any $\y$ in $Q$ and $\x$ in $\C^{{\dalgo}-1}$,
  $\fbr(\polar(0,{\dalgo},{V'_\y}^{\mA}),\x)$ is finite. Transporting
  back to $\C^n$, this shows that for $\y$ in $Q$ and $\x$ in
  $\C^{e+{\dalgo}-1}$ lying over $\y$,
  $\fbr(\polar(e,{\dalgo},V_\y^\mA),\x)$ is finite. Considering all
  $\y\in Q$ at once, this implies that for any finite $\fiber2$ in
  $\C^{e+{\dalgo}-1}$ lying over $Q$,
  $\fbr(\polar(e,{\dalgo},V^\mA),Q)$ is finite.  So if we assume that
  $S$ is finite,
  $\fibersing2= \fbr(S^\mA \cup \polar(e,{\dalgo},V^\mA), \fiber2)$ is
  finite as well.

  We have thus proved the last claim. Let then
  $\Vfiber=\fbr(V^\mA,\fiber2)$ and assume that $\Open(m^\mA)\cap
  \Vfiber-\fibersing2$ is not empty; we can now establish the defining
  properties of a chart.
  \begin{itemize}
\item[${\sfC_1.}$] By assumption, $\Open(m^\mA)\cap
  \Vfiber-\fibersing2$ is not empty.
\smallskip
\item[${\sfC_2.}$] By construction, $\Open(m^\mA)\cap
  \Vfiber-\fibersing2 = \Open(m^\mA) \cap
  \fbr(V^\mA,\fiber2)-\fibersing2$, which is equal to $\Open(m^\mA)
  \cap V^\mA \cap
  \pi_{e+{\dalgo}-1}^{-1}(\fiber2)-\fibersing2$. Because $\psi^\mA$ is
  a chart of $(V^\mA,Q,S^\mA)$, and because $\fibersing2$ contains
  $S^\mA$, we can rewrite this as $\Open(m^\mA) \cap
  \fbr(V(\h^\mA),Q) \cap
  \pi_{e+{\dalgo}-1}^{-1}(\fiber2)-\fibersing2$, or equivalently as
  $\Open(m^\mA) \cap \fbr(V(\h^\mA),\fiber2)-\fibersing2$, since
  $\fiber2$ lies over $Q$. Thus, ${\sfC}_2$ is proved.
\smallskip
\item[${\sfC_3.}$] We have to prove that $c+e+{\dalgo}-1 \le n$. By
  assumption on ${\dalgo}$, we have $c+e+{\dalgo}-1 \le c+e+d-1$, and by
  Lemma~\ref{sec:lemma:singS}, $d=n-e-c$, so that $c+e+{\dalgo}-1
  \le n-1$, which is stronger than what we need.
\smallskip
\item[${\sfC_4.}$] Finally, we have to prove that for all $\x$ in
  $\Open(m^\mA)\cap \Vfiber-\fibersing2$, the Jacobian matrix
  $\jac(\h^\mA,e+{\dalgo}-1)$ has full rank $c$ at $\x$. Any such $\x$
  does not belong to $\fibersing2$, and thus does not belong to
  $\fbr(\polar(e,{\dalgo},V^\mA),\fiber2)$. Since $\x$ lies over $\fiber2$,
  we deduce that $\x$ is not in $\polar(e,{\dalgo},V^\mA)$.  Because $\x$
  is in $\Open(m^\mA)$, Lemma~\ref{sec:atlas:chartpolar} implies
  that $\jac(\h^\mA,e+{\dalgo})$, and thus
  $\jac(\h^\mA,e+{\dalgo}-1)$, have full rank at $\x$.  
  \end{itemize}
The lemma is proved.
\end{proof}

\begin{proof}[of Proposition~\ref{sec:atlas:prop:summary1}]
  Write $\bpsi=(\psi_i)_{1 \le i \le s}$; for $i$ in $\{1,\dots,s\}$,
  we write $\psi_i=(m_i,\h_i)$.  To each $\psi_i$, we associate the
  non-empty Zariski open subset
  $\scrGfiberchart(\psi_i,V,Q,S,{\dalgo})$ of
  Lemma~\ref{lemma:chart:fiber}, and we let
  $\scrGfiber(\bpsi,V,Q,S,{\dalgo})$ be their intersection; it is
  still non-empty and Zariski open.  Take $\mA$ in
  $\scrGfiber(\bpsi,V,Q,S,{\dalgo})$ and write
  \[
  \Vfiber=\fbr(V^\mA,\fiber2) \qquad\text{ and }\qquad \fibersing2 =
  \fbr(S^\mA \cup \polar(e,{\dalgo},V^\mA),\fiber2).\]
  Because $\mA$ is in $\scrGfiber(\bpsi,V,Q,S,{\dalgo})$, it is in
  particular in $\scrGfiberchart(\psi_i,V,Q,S,{\dalgo})$ \new{for
    some $1\leq i \leq s$}. Then Lemma~\ref{lemma:chart:fiber} proves
  that since $S$ is finite, $\fibersing2$ is finite.
  
  Let us further assume that $\Vfiber$ is not empty; Krull's principal
  ideal theorem then implies that every irreducible component of
  $\Vfiber$ has dimension at least $d-({\dalgo}-1)>0$. We now prove
  that $\atlasfiber(\bpsi^\mA,V^\mA,Q,S^\mA,\fiber2)$ is an atlas of
  $(\Vfiber,\fiber2,\fibersing2)$.
  \begin{itemize}
  \item Up to reordering the $\psi_i$, we can write
    $\atlasfiber(\bpsi^\mA,V^\mA,Q,S^\mA,\fiber2)=((\psi^\mA_i)_{1 \le
    i\le s'})$.  In Lemma~\ref{lemma:chart:fiber}, we proved that each
    such $\psi_i^\mA$ is a chart of $(\Vfiber,\fiber2,\fibersing2)$,
    so we have proved that ${\sfA_1}$ holds.
    \smallskip
  \item By assumption, the open sets $\Open(m_i)$, $i=1,\dots,s$,
    cover $V-S$, which implies that the sets $\Open(m_i^\mA)$, for the
    same values of $i$, cover $V^\mA-S^\mA$.  This implies that the
    open sets $\Open(m^\mA_i)$, $i=1,\dots,s$, cover
    $\Vfiber-\fibersing2$, since $\Vfiber\subset V$ and $S\subset
    \fibersing2$. Since we kept only those $\psi^\mA_i$ for which
    $\Open(m_i^\mA) \cap \Vfiber-\fibersing2$ is not empty, this
    establishes ${\sfA_3}$.
    \smallskip
  \item In order to prove ${\sfA_2}$ it suffices to verify that
    $\Vfiber$ is not a subset of $\fibersing2$; this is case, since 
    we saw that $\Vfiber$ has positive dimension, and $\fibersing2$ is finite.
  \end{itemize}
  Hence, we have proved that
  $\atlasfiber(\bpsi^\mA,V^\mA,Q,S^\mA,\fiber2)$ is an atlas of
  $(\Vfiber,\fiber2,\fibersing2)$.

  Lemma~\ref{sec:coro:lemma:singSX} shows that all $\h_i$ have the
  same cardinality.  As a result, Lemma~\ref{sec:atlas:lemma:singSX}
  implies that $\Vfiber-\fibersing2$ is a non-singular
  $(d-({\dalgo}-1))$-equidimensional locally closed set. Since 
  all irreducible components of $\Vfiber$ have dimension at least
  $d-({\dalgo}-1)> 0$, we deduce that $\Vfiber$ itself is
  $({\dalgo}-1)$-equidimensional and has all its singular points in
  $\fibersing2$.
\end{proof}


\section{Proof of Proposition~\ref{prop:ch5}}\label{chap:finitenessproperties}


The goal of this section is to prove the finiteness properties of
polar varieties stated as Proposition~\ref{prop:ch5}; they read as
follows: {\em Let $Q \subset \C^e$ be a finite set and let $V \subset
  \C^n$ be an algebraic set lying over~$Q$.  Suppose that $V$ is
  equidimensional of dimension $d$, with finitely many singular
  points, and let $\dalgo$ be an integer such that $2 \le \dalgo \le
  (d+3)/2$. Then, there exists a non-empty Zariski open set
  $\ZOfinite(V,Q,\dalgo) \subset \GL(n,e)$ such that, for $\mA$ in
  $\ZOfinite(V,Q,\dalgo)$, writing $W=\polar(e,\dalgo,V^\mA)$, either
  $W$ is empty, or $W$ is equidimensional of dimension $\dalgo-1$,
  with finitely many singular points, and $\Kpolar(e, 1, W)$ is
  finite.}

This claim extends to an arbitrary equidimensional algebraic set $V$
results that were already proved in~\cite{SaSc11} in the hypersurface
case. The proof techniques are similar, but slightly simpler for some
aspects (we do not rely anymore on some deep results of Mather's on
generic projections~\cite{Mather73}), and more involved in some others
(polar varieties are easier to define for hypersurfaces).

To prove this result, one can assume without loss of generality that
$e=0$. Assume indeed that we have proved our claim in that case. For
an arbitrary value of $e$, consider the finitely many points $\y \in
Q$ one after the other; for any such $\y$, define $V_\y \subset
\C^{n-e}$ as the set obtained from $\fbr(V,\y)\subset \C^n$ by
projection on the last $n-e$ coordinates: applying the case $e=0$ of
our proposition to the sets $V_\y$, it is enough to take the
intersection of the finitely many open sets $\ZOfinite(V_\y,\dalgo)
\subset \GL(n-e)$, and embed this intersection into $\GL(n,e)$ by
taking the direct sum with the identity matrix of size $e$.



\subsection{The locally closed set \texorpdfstring{$\algX$}{$\algX$}}

In all that follows, we use the notation of
Proposition~\ref{prop:ch5}.  For $\g=(g_1, \ldots, g_{\dalgo})\in
\C^{\dalgo}$, let $\rho_\g$ be the mapping $(x_1, \ldots,
x_{\dalgo})\mapsto g_1 x_1+\cdots+g_{\dalgo} x_{\dalgo}$; we will
denote by $\g_0\in \C^{\dalgo}$ the row vector $(1, 0, \ldots, 0)$, so
that $\rho_{\g_0}\circ \pi_{\dalgo}$ is simply the projection
$\pi_1$. With this notation, our goal is thus to prove that for a
generic choice of $\mA$, $$\openpolar(0,1,
\polar(0,\dalgo,V^\mA))=\openpolar(0,\rho_{\g_0}\circ \pi_{\dalgo},
\polar(0,\dalgo,V^\mA))$$ is finite.

In this paragraph, we define a set $\algX \subset \C^{n^2} \times
\C^n\times \C^{\dalgo} $ consisting of triples $(\mA,\x,\g)$ such that
$\x$ is in $\openpolar(0,\dalgo,V^\mA)$ and $\rho_\g \circ \pi_{\dalgo}$ vanishes on
$\T_\x \polar(0,\dalgo,V^\mA)$.  In order to ensure that this set is locally
closed, we will restrict $\mA$ to a suitable open set of $\GL(n)$, on
which a ``uniform'' description of the polar varieties will be
available.

The construction is slightly technical, but simple in essence: we
construct a family of polynomials (written \new{$\mathbf{P}$} below) in an
algorithmic manner, which will ensure that it defines the polar
variety $\polar(0,\dalgo,V^\mA)$ for a generic $\mA$.

Let $\F=(F_1,\dots,F_s) \subset \C[X_1,\dots,X_n]$ be generators of
the ideal of $V$ and let ${\mathfrak{A}}= ({\mathfrak{A}}_{i,j})_{1\leq i,j\leq
  n}$ be a matrix of new indeterminates. We define $\F^{\mathfrak{A}}$ as
usual, as the set of polynomial $(F_1({\mathfrak{A}}
\X),\dots,F_s({\mathfrak{A}} \X))$, and we define the polynomials
$\GG$ and $\JJ$ in $\C[{\mathfrak{A}}][X_1,\dots,X_n]$ as the sets of
$(n-d)$-minors of respectively $\jac(\F^{\mathfrak{A}})$ and
$\jac(\F^{\mathfrak{A}},\dalgo)$, where the derivatives are taken with
respect to $X_1,\dots,X_n$ only. For $\mA$ in $\GL(n)$, the
polynomials $\GG(\mA,\X)\subset \C[X_1,\dots,X_n]$ are defined by
evaluating the variables ${\mathfrak{A}}$ at $\mA$.

\begin{lemma}\label{lemma:5.2.1}
  For $\mA$ in $\GL(n)$, the zero-set of $(\F^\mA, \GG(\mA,\X))$ is
  $\sing(V^\mA)$ and the zero-set of $(\F^\mA, \JJ(\mA,\X))$ is
  $\Kpolar(0,\dalgo,V^\mA)$.
\end{lemma}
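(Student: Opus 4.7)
The plan is to deduce both equalities as direct applications of Lemma~\ref{sec:prelim:lemma:sing} and Lemma~\ref{sec:prelim:lemma:Ksing}, once we have transferred the hypotheses of those lemmas from $V$ to $V^\mA$. So the preliminary step is to observe that $V^\mA$ inherits the relevant structure from $V$: since $V$ satisfies $(\AS,d)$ it is $d$-equidimensional, and the linear isomorphism $\phi_\mA:\x\mapsto \mA^{-1}\x$ preserves dimension and equidimensionality, so $V^\mA$ is $d$-equidimensional as well. Moreover, the ring automorphism $F\mapsto F(\mA\X)$ sends the ideal $\langle\F\rangle=I(V)$ bijectively onto $\langle\F^\mA\rangle$, and elementary considerations show this image is exactly $I(V^\mA)$. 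Thus $I(V^\mA)=\langle\F^\mA\rangle$ and the hypotheses of both lemmas are satisfied with $c=n-d$.

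For the first equality, I would apply Lemma~\ref{sec:prelim:lemma:sing} to $V^\mA$ with generators $\F^\mA$: it states that $\sing(V^\mA)$ coincides with the common zero-set of $\F^\mA$ and of all $c$-minors of $\jac(\F^\mA)$. By the very definition of $\GG$, after specializing ${\mathfrak{A}}$ at $\mA$, the polynomials $\GG(\mA,\X)$ are precisely those $c$-minors, so $V(\F^\mA,\GG(\mA,\X))=\sing(V^\mA)$.

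For the second equality, I would apply Lemma~\ref{sec:prelim:lemma:Ksing} to $V^\mA$ with generators $\F^\mA$ and the integer $\dalgo$: it states that $K(\dalgo,V^\mA)$ equals the common zero-set of $\F^\mA$ and of the $c$-minors of $\jac(\F^\mA,\dalgo)$. Again by construction, $\JJ(\mA,\X)$ is exactly that set of minors after specialization, yielding $V(\F^\mA,\JJ(\mA,\X))=K(\dalgo,V^\mA)$.

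There is no genuine obstacle here; the only point requiring a moment's care is the identification $I(V^\mA)=\langle\F^\mA\rangle$, which follows purely formally from the fact that linear changes of variables are ring automorphisms and carry radical ideals to radical ideals. Everything else is bookkeeping that matches the two definitions of $\GG$ and $\JJ$ to the minor-based descriptions of $\sing$ and $K(\dalgo,\cdot)$ established earlier in Chapter~\ref{sec:preliminaries}.
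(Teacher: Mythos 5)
Your proof is correct and follows the same route as the paper's: identify $I(V^\mA)$ with $\langle\F^\mA\rangle$ (the paper states this without further comment), then read off $\sing(V^\mA)$ and $K(\dalgo,V^\mA)$ from Lemma~\ref{sec:prelim:lemma:sing} and Lemma~\ref{sec:prelim:lemma:Ksing} respectively, noting that $\GG(\mA,\X)$ and $\JJ(\mA,\X)$ are exactly the relevant $(n-d)$-minors. Your extra care about equidimensionality of $V^\mA$ and the transfer of the radical/defining ideal under the change of variables is exactly the "straightforward" part the paper elides.
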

\begin{proof}
  For $\mA$ in $\GL(n)$, the ideal $\langle \F^\mA \rangle$ is the
  defining ideal of $V^\mA$, and the polynomials $\GG(\mA,\X)$ and
  $\JJ(\mA,\X)$ are simply the corresponding minors of the matrix
  $\jac(\F^\mA)$; our claim for $\sing(V^\mA)$ is then
  straightforward, and that for $\Kpolar(0,\dalgo,V^\mA)$ follows from
  Lemma~\ref{sec:prelim:lemma:Ksing}.
\end{proof}

Applying a radical ideal computation algorithm, say for definiteness
that in \cite[Theorem~8.99]{BeWe93}, we obtain a finite set of
polynomials $\H \subset \C({\mathfrak{A}})[X_1,\dots,X_n]$ that
generate the radical of the ideal $\langle \F^{\mathfrak{A}}, \JJ
\rangle$ in $\C({\mathfrak{A}})[X_1,\dots,X_n]$. For $\mA$ in
$\GL(n)$, the polynomials $\H(\mA,\X)$ are defined similarly to the
polynomials $\GG(\mA,\X)$ above (provided no denominator vanishes),
and the following lemma shows that they have the expected
specialization properties.

\begin{lemma}\label{lemma:5.2.2}
  There exists a non-empty Zariski open subset $\ZOK_1 \subset
  \GL(n)$ such that for $\mA$ in $\ZOK_1$, the polynomials
  $\H(\mA,\X)$ are well-defined and the ideal $\langle \H(\mA,\X)
  \rangle$ is radical, with zero-set $\Kpolar(0,\dalgo,V^\mA)$.
\end{lemma}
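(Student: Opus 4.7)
The strategy is to exhibit $\mathscr{K}_1$ as the intersection of three nonempty Zariski open subsets of $\GL(n)$, addressing in turn (a) well-definedness of the specializations $\H(\mA,\X)$, (b) the equality $V(\H(\mA,\X))=K(\dalgo,V^\mA)$, and (c) radicality of the specialized ideal $\langle \H(\mA,\X)\rangle$.

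For (a) and (b), I would use the standard common-denominator specialization argument. Each element of $\H$ is a rational function in $\mathfrak{A}$ with polynomial coefficients in $\X$, so there is a nonzero $m_0\in\C[\mathfrak{A}]$ clearing all these denominators; this yields (a) on $\mathcal{O}(m_0)$. For (b), the equality $\langle \H\rangle=\sqrt{\langle\F^{\mathfrak{A}},\JJ\rangle}$ in $\C(\mathfrak{A})[\X]$ unpacks into two finite sets of rational identities: expressions of each $F_i^{\mathfrak{A}}$ and each $j\in\JJ$ as $\C(\mathfrak{A})$-linear combinations of $\H$, and expressions of some power $h^{k_h}$ of each $h\in\H$ as a $\C(\mathfrak{A})$-linear combination of $\F^{\mathfrak{A}},\JJ$. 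Enlarging $m_0$ by the product of the finitely many denominators appearing in these witnesses produces $m_1\in\C[\mathfrak{A}]$ with the property that over $\mathcal{O}(m_1)$ all identities specialize; the specialized ideals $\langle \H(\mA,\X)\rangle$ and $\langle\F^\mA,\JJ(\mA,\X)\rangle$ then share the same radical in $\C[\X]$, and in particular the same zero-set, which by Lemma~\ref{lemma:5.2.1} equals $K(\dalgo,V^\mA)$.

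The hard part is (c). Since $\langle \H\rangle$ is radical in $\C(\mathfrak{A})[\X]$, the finitely generated $\C[\mathfrak{A}]_{m_1}$-algebra $R:=\C[\mathfrak{A}]_{m_1}[\X]/I$, where $I$ is the contraction of $\langle\H\rangle$, has geometrically reduced generic fiber---we work in characteristic zero, so reduced implies geometrically reduced over the perfect field $\C(\mathfrak{A})$. Now for a morphism of finite type, the locus where the fiber is geometrically reduced is constructible (EGA IV, or generic flatness applied to the nilradical of $R$ viewed as a $\C[\mathfrak{A}]_{m_1}$-module); since this locus contains the generic point of $\operatorname{Spec}\C[\mathfrak{A}]_{m_1}$, it contains a Zariski-dense open $\mathcal{O}(m_2)$. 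The fiber of $R$ over such an $\mA$ is $\C[\X]/\langle \H(\mA,\X)\rangle$, whose reducedness is precisely radicality of $\langle \H(\mA,\X)\rangle$. Setting $\mathscr{K}_1:=\mathcal{O}(m_1 m_2)\cap\GL(n)$ then meets all three requirements simultaneously; the only nontrivial ingredient is the generic-fiber argument used in step (c), as (a) and (b) are routine specialization manipulations.
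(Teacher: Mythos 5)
Your proof is correct, but it reaches the radicality statement by a genuinely different route than the paper. The paper's proof handles (a), (b) and (c) in one stroke by exploiting the fact that the radical-computation algorithm of \cite[Theorem~8.99]{BeWe93} uses only arithmetic operations and zero-tests: it defines $\mathscr{K}_1$ as the locus where none of the finitely many non-zero rational functions in $\C(\mathfrak{A})$ arising during the generic computation vanishes or is undefined, so that for $\mA$ in this locus the computation of the radical of $\langle \F^\mA,\JJ(\mA,\X)\rangle$ over $\C[\X]$ literally mirrors the generic one and outputs $\H(\mA,\X)$; radicality and the identification of the zero-set via Lemma~\ref{lemma:5.2.1} are then immediate. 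You instead decouple the three issues: denominators for well-definedness, specialization of the two families of membership certificates for the zero-set, and generic geometric reducedness of fibers (constructibility of the reduced-fiber locus, valid here since the generic fiber is reduced over a field of characteristic zero) for radicality. Both arguments are sound; the paper's is tied to the specific algorithm but is self-contained at an elementary level, whereas yours is algorithm-independent but imports a scheme-theoretic input. One small point to tighten in your step (c): if you take $I$ to be the full contraction of $\langle\H\rangle$ to $\C[\mathfrak{A}]_{m_1}[\X]$, the fiber of $R$ over $\mA$ is $\C[\X]/I(\mA)$, which a priori only contains $\langle\H(\mA,\X)\rangle$; you should either shrink the open set so that finitely many generators of $I$ specialize into $\langle\H(\mA,\X)\rangle$, or more simply define $R$ using the denominator-cleared polynomials $\H$ themselves, whose generic fiber is still the reduced ring $\C(\mathfrak{A})[\X]/\langle\H\rangle$ and whose closed fibers over $\mathcal{O}(m_0)$ are exactly $\C[\X]/\langle\H(\mA,\X)\rangle$.
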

\begin{proof}
  Because we are in characteristic zero, it is possible to compute the
  radical of an ideal, over either $\C({\mathfrak{A}})[X_1,\dots,X_n]$
  or $\C[X_1,\dots,X_n]$, using an algorithm that does only arithmetic
  operations in $(+,-,\times,\div)$ and zero-tests; this is the case
  for the algorithm of~\cite[Theorem~8.99]{BeWe93} that we mentioned
  above (and would not be the case in positive characteristic).

  We choose for $\ZOK_1$ a non-empty Zariski open set where all
  steps performed to compute the radical of $\langle \F^{\mA},
  \JJ(\mA,\X) \rangle$ over $\C[X_1,\dots,X_n]$ are the mirror of
  those done to compute $\H$ over $\C({\mathfrak{A}})[X_1,\dots,X_n]$. For
  instance, $\ZOK_1$ can be taken as the locus where none of
  the (finitely many) non-zero rational functions in $\C({\mathfrak{A}})$
  that appear during the computation is undefined or vanishes. For
  $\mA$ in $\ZOK_1$, the ideal $\langle \H(\mA,\X) \rangle$ is
  then radical, and its zero-set is $\Kpolar(0,\dalgo,V^\mA)$, in view of the
  previous lemma.
\end{proof}

Doing similarly for colon ideal computation, using for instance the
algorithm in~\cite[Corollary~6.34]{BeWe93}, we obtain a finite set of
polynomials $$\mathbf{P}\subset \C({\mathfrak{A}})[X_1,\dots,X_n]$$ that generate
the colon ideal $\langle \H \rangle : \langle \F^{\mathfrak{A}}, \GG
\rangle$. 

\begin{lemma}\label{lemma:5.2.3}
  There exists a non-empty Zariski open subset $\ZOK_2 \subset
  \ZOK_1$ such that for $\mA$ in $\ZOK_2$, the polynomials 
  $\mathbf{P}(\mA,\X)$ are well-defined and the ideal $\langle
  \mathbf{P}(\mA,\X) \rangle$ is radical, with zero-set $\polar(0,\dalgo,V^\mA)$.
\end{lemma}
\begin{proof}
  The first point is proved as in the previous lemma, by choosing an
  open set $\ZOK_2 \subset \ZOK_1$ where all algorithmic steps in
  colon ideal computation specialize well. Then, because
  $\langle \H(\mA,\X) \rangle$ is radical (by the previous lemma), we
  know that $\langle \mathbf{P}(\mA,\X) \rangle$ is radical as well. To prove
  the second point, we use the fact that for any $\mA$ in
  $\ZOK_2$, the zero-set of $\langle \mathbf{P}(\mA,\X) \rangle$ is
  the Zariski closure of $\Kpolar(0,\dalgo,V^\mA)-\sing(V^\mA)$ since
  $\langle \H(\mA,\X) \rangle$ is radical and defines
  $\Kpolar(0,\dalgo,V^\mA)$ (by the previous lemma). The latter set is
  simply $\openpolar(0,\dalgo,V^\mA)$, so we are done.
\end{proof}

We are going to restrict further the Zariski open set $\ZOK_2$ by
taking its intersection with the following subsets of $\GL(n)$:
\begin{itemize}
\item the non-empty open set
  $\scrGpolar(\bpsi,V,\{\bullet\},\sing(V),\dalgo) \subset \GL(n)$
  defined defined by applying Proposition~\ref{prop:ch4} to the atlas
  $\bpsi$ of $(V,\{\bullet\},\sing(V))$ given in
  Lemma~\ref{sec:atlas:lemma:glob}; it ensures that
  $\polar(0,\dalgo,V^\mA)$ is either empty or
  $(\dalgo-1)$-equidimensional and that
  $\sing(\polar(0,\dalgo,V^\mA))$ is contained in $\sing(V^\mA)$.
  \smallskip
\item the non-empty open set
  $\scrGfiber(\bpsi,V,\{\bullet\},\sing(V),\dalgo) \subset \GL(n)$ defined by
  applying Proposition~\ref{sec:atlas:prop:summary1} to the same
  atlas; it has the property that for $\mA$ in this set, the
  restriction of $\pi_{\dalgo-1}$ to $\Kpolar(0,\dalgo,V^\mA)$, or
  equivalently to $\polar(0,\dalgo,V^\mA)$, has finite fibers;
\end{itemize}
Let us then call $\ZOK_\Klastindex$ the intersection of the non-empty Zariski open sets $\ZOK_2$,
$\scrGpolar(\bpsi,V,\{\bullet\},\sing(V),\dalgo)$ and
$\scrGfiber(\bpsi,V,\{\bullet\},\sing(V),\dalgo)$ in $\GL(n)$; this is
a non-empty Zariski open subset of $\GL(n)$. Having defined
$\ZOK_\Klastindex$ allows us to define $\algX \subset \C^{n^2}\times
\C^n \times \C^{\dalgo}$ as the set of triples $(\mA,\x,\g)$ such that
the following holds:
\begin{itemize}
\item $\mA$ is in $\ZOK_\Klastindex$,
\smallskip
\item $\x$ is in $\openpolar(0,\dalgo,V^\mA)$,
\smallskip
\item $\rho_\g \circ \pi_{\dalgo}$ vanishes on $\T_\x \polar(0,\dalgo,V^\mA)$.
\end{itemize}

\begin{lemma}\label{lemma:5.2.4}
  The set $\algX$ is locally closed.
\end{lemma}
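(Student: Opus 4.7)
The plan is to exhibit $\mathscr{X}$ as the intersection of an open subset and a closed subset of $\C^{n^2}\times\C^n\times\C^\dalgo$, using the uniform description of $W(\dalgo,V^\mA)$ through the polynomials $\KK$ over $\mathscr{K}_4$. Of the three defining conditions of $\mathscr{X}$, the first (``$\mA\in\mathscr{K}_4$'') is open by construction; the middle one is known to be locally closed; the real issue is to re-express the tangent-space condition as a polynomial one in $(\mA,\x,\g)$.

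First I would handle the condition $\x\in w(\dalgo,V^\mA)$. By Lemma~\ref{lemma:5.2.1}, the equations $\F^\mA(\x)=0$ and $\JJ(\mA,\x)=0$ cut out $K(\dalgo,V^\mA)$, while $\x\notin\sing(V^\mA)$ is the Zariski-open condition that some entry of $\GG(\mA,\x)$ is nonzero. Since $w(\dalgo,V^\mA)=K(\dalgo,V^\mA)\setminus\sing(V^\mA)$, these together show that $\{(\mA,\x)\in\mathscr{K}_4\times\C^n\mid \x\in w(\dalgo,V^\mA)\}$ is locally closed in $\C^{n^2}\times\C^n$.

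Next I would translate the tangent-space condition. Since $\mathscr{K}_4\subset\mathscr{K}_2$, Lemma~\ref{lemma:5.2.3} guarantees that $\langle\KK(\mA,\X)\rangle$ is a radical ideal defining $W(\dalgo,V^\mA)$; and since $\mathscr{K}_4\subset\mathscr{G}(V,\bullet,\dalgo)$, Lemma~\ref{sec:atlas:lemma:polar2} yields $\sing(W(\dalgo,V^\mA))\subset\sing(V^\mA)$. Hence for any $\x\in w(\dalgo,V^\mA)$, the point $\x$ lies in $\reg(W(\dalgo,V^\mA))$, and
\[
T_\x W(\dalgo,V^\mA)=\ker(\jac_\x(\KK(\mA,\X))),
\]
with $\jac_\x(\KK(\mA,\X))$ having the constant rank $n-\dalgo+1$ (since $W(\dalgo,V^\mA)$ is $(\dalgo-1)$-equidimensional). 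Let $\v_\g=(g_1,\dots,g_\dalgo,0,\dots,0)\in\C^n$. The linear form $\rho_\g\circ\pi_\dalgo$ vanishes on $T_\x W(\dalgo,V^\mA)$ if and only if $\v_\g$ lies in the row-span of $\jac_\x(\KK(\mA,\X))$; given that the latter has rank $n-\dalgo+1$, this is in turn equivalent to the vanishing of every $(n-\dalgo+2)$-minor of the augmented matrix obtained by adjoining $\v_\g$ to $\jac_\x(\KK(\mA,\X))$. After clearing the denominators of the rational coefficients of $\KK$ (which are units on $\mathscr{K}_4$), these minors become polynomial expressions in $(\mA,\x,\g)$, and their vanishing defines a closed subset of $\mathscr{K}_4\times\C^n\times\C^\dalgo$.

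Intersecting the locally closed locus of the second paragraph with this closed subset yields exactly $\mathscr{X}$, which is therefore locally closed in $\C^{n^2}\times\C^n\times\C^\dalgo$. The delicate step is the one in the previous paragraph: the equivalence between ``$\v_\g$ lies in the row-span'' and ``all $(n-\dalgo+2)$-minors vanish'' requires that the rank of $\jac_\x(\KK(\mA,\X))$ be a known constant, which is precisely what the incorporation of $\mathscr{G}(V,\bullet,\dalgo)$ into $\mathscr{K}_4$ buys us; without this equidimensionality and the control of $\sing(W(\dalgo,V^\mA))$, the tangent-space condition would not obviously be Zariski closed.
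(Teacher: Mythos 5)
Your proof is correct and follows essentially the same route as the paper's: both identify $w(\dalgo,V^\mA)$ as a locally closed locus cut out by the specialization-compatible families $\F^{\mathfrak{A}},\JJ,\GG,\KK$ over $\mathscr{K}_4$, and both convert the tangent-space condition into a determinantal condition on $\jac(\KK,\X)$ augmented by the row $(\mathfrak{g}_1,\dots,\mathfrak{g}_\dalgo,0,\dots,0)$, using that $\x\in w(\dalgo,V^\mA)$ forces $\x\in\reg(W(\dalgo,V^\mA))$ so that this Jacobian has constant rank $n-(\dalgo-1)$ with kernel $T_\x W(\dalgo,V^\mA)$. The paper phrases the last condition as the augmented matrix having rank exactly $n-(\dalgo-1)$ while you phrase it as the vanishing of all $(n-\dalgo+2)$-minors; these are equivalent on the relevant locus, so the arguments coincide.
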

\begin{proof}
  Let $\mathfrak{g}_1,\dots,\mathfrak{g}_{\dalgo}$ be new indeterminates that
  stand for the entries of $\g=(g_1,\dots,g_{\dalgo})$, and consider the
  set $\algX' \subset \C^{n^2}\times \C^n \times \C^{\dalgo}$
  defined through the following properties:
  \begin{itemize}
  \item $\mA$ is in $\ZOK_\Klastindex$,
\smallskip
  \item $(\mA,\x)$ is in $V(\mathbf{P})-V(\F^{\mathfrak{A}},\GG)$,
\smallskip
  \item the matrix obtained by adjoining to $\jac(\mathbf{P},\X)$ the row with
    entries $$[\mathfrak{g}_1,\dots,{\mathfrak{g}}_{\dalgo},0,\dots,0]$$ has rank
    $n-(\dalgo-1)$ at $(\mA,\x,\g)$.
  \end{itemize}
  By construction, $\algX'$ is locally closed, since it is the
  intersection of three locally closed sets (note that $\ZOK_\Klastindex$
  is an open subset of $\GL(n)$, which is itself open in
  $\C^{n^2}$). We conclude by proving that
  $\algX=\algX'$. The defining conditions on $\mA$ are
  identical on both sides; we then inspect those on $(\mA,\x)$ and
  finally on $(\mA,\x,\g)$.

  Lemmas~\ref{lemma:5.2.1} and~\ref{lemma:5.2.3} show that since $\mA$
  is in $\ZOK_\Klastindex$, $(\mA,\x)$ belongs to
  $V(\mathbf{P})-V(\F^{\mathfrak{A}},\JJ)$ if and only if $\x$ belongs to
  $\polar(0,\dalgo,V^\mA)-\sing(V^\mA)$, that is, to
  $\openpolar(0,\dalgo,V^\mA)$, so the defining conditions on $(\mA,\x)$
  are the same for $\algX$ and $\algX'$.

  Finally, we deal with the last conditions. In view of the above, we
  can assume that $\mA$ is in $\ZOK_\Klastindex$ and that $\x$ is in
  $\openpolar(0,\dalgo,V^\mA)$. Remark in particular that in this case,
  $\x$ is in $\reg(\polar(0,\dalgo,V^\mA))$, since $\mA \in \ZOK_\Klastindex$
  implies that $\sing(\polar(0,\dalgo,V^\mA))$ is contained in
  $\sing(V^\mA)$, whereas $\x$ is in
  $\openpolar(0,\dalgo,V^\mA) \subset \reg(V^\mA)$. Remember as well
  that $\polar(0,\dalgo,V^\mA)$ is $(\dalgo-1)$-equidimensional. This,
  together with Lemma~\ref{lemma:5.2.3}, implies that
  $\jac(\mathbf{P},\X)$ has rank $n-(\dalgo-1)$ at $(\mA,\x)$ and that
  its nullspace is $\T_\x \polar(0,\dalgo,V^\mA)$. The rank condition on
  the augmented matrix is then equivalent to
  $\rho_\g \circ \pi_{\dalgo}$ vanishing on
  $\T_\x \polar(0,\dalgo,V^\mA)$.
\end{proof}


\subsection{The dimension of \texorpdfstring{$\algX$}{$\algX$}}

In this paragraph, we prove that $\algX$ has dimension at most
$\dalgo+n^2$. This is done by applying the theorem on the dimension of
fibers twice. We define the projection
$$\begin{array}{cccc}
  \pi_{\mathfrak{A}} : & \C^{n^2} \times
 \C^n \times \C^{\dalgo} & \rightarrow & \C^{n^2} \\
&(\mA, \x,\g) & \mapsto & \mA;
\end{array} $$
and 
$$\begin{array}{cccc}
  \pi_{\X} : & \C^{n^2} \times
 \C^n \times \C^{\dalgo} & \rightarrow & \C^{n} \\
&(\mA, \x,\g) & \mapsto & \x.
\end{array} $$
Then, for $\mA$ in $\ZOK_\Klastindex$, $\algAX$ denotes the
fiber $\pi_{\mathfrak{A}}^{-1}(\mA) \cap \algX \subset \C^{n^2}
\times \C^n \times \C^{\dalgo}$.  In order to prove the bound on
$\dim(\algX)$, we will first prove that $\algAX$ has
dimension at most~$\dalgo$ and apply a form of the theorem on the
dimension of fibers to $\pi_\mathfrak{A}$. To prove the dimension
bound on $\algAX$, we will apply the same theorem, but to the
restriction of $\pi_\X$ to $\algAX$.

The definition of $\algX$ implies that $(\mA,\x,\g)$ is in
$\algAX$ if and only if $\x$ is in $\openpolar(0,\dalgo,V^\mA)$
and $\rho_\g \circ \pi_{\dalgo}$ vanishes on $\T_\x \polar(0,\dalgo,V^\mA)$,
and Lemma~\ref{lemma:5.2.4} implies that $\algX$ and thus
$\algAX$ are locally closed subsets of
$\C^{n^2} \times \C^n \times \C^{\dalgo}$.

As a useful preliminary, we prove the following lemma on the dimension of
fibers on locally closed sets.

\begin{lemma}\label{lemma:5.2.5}
  Let $\Slocallyclosed \subset \C^n$ be a locally closed set and let $r\in \N$ be
  such that the Zariski closure of $\pi_r(\Slocallyclosed)$ has dimension
  $s$. Assume that for all $\x$ in $\pi_r(\Slocallyclosed)$, the fiber
  $\pi_r^{-1}(\x) \cap \Slocallyclosed$ has dimension at most $t$. Then $\Slocallyclosed$ has
  dimension at most $s+t$.
\end{lemma}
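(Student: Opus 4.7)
The plan is to reduce to the irreducible case and apply the classical theorem on the dimension of fibers for dominant morphisms of irreducible varieties. Write $\overline{S} = \bigcup_i Z_i$ as the finite union of its irreducible components. Since $S$ is locally closed it is open in $\overline{S}$, hence each $S \cap Z_i$ is a dense open subset of $Z_i$, so $W_i := Z_i \setminus S$ is closed in $Z_i$ of dimension at most $d_i - 1$, where $d_i := \dim Z_i$. Because $\dim S = \max_i d_i$, it suffices to prove that $d_i \leq s + t$ for each $i$.

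Fix $i$ and set $T_i = \overline{\pi_r(Z_i)}$, $s_i = \dim T_i$. Since $\pi_r(Z_i) \subset \overline{\pi_r(S)}$ we have $s_i \leq s$. The restriction $\pi_r|_{Z_i} : Z_i \to T_i$ is a dominant morphism of irreducible varieties, so by the classical theorem on the dimension of fibers there is a non-empty Zariski open subset $U_i \subset T_i$ such that for $y \in U_i$, the fiber $\pi_r^{-1}(y) \cap Z_i$ is non-empty of dimension exactly $d_i - s_i$. Applying the same theorem to each irreducible component of $W_i$, with target its own image closure inside $T_i$ (and using that the fiber is empty outside this image closure), the finite intersection of the resulting generic opens yields a non-empty open $V_i \subset T_i$ on which $\dim(\pi_r^{-1}(y) \cap W_i) \leq \dim W_i - s_i \leq d_i - 1 - s_i$ whenever this fiber is non-empty.

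Pick any $y$ in the non-empty open set $U_i \cap V_i \subset T_i$, and put $F = \pi_r^{-1}(y) \cap Z_i$ and $F_W = \pi_r^{-1}(y) \cap W_i$. Since $\dim F = d_i - s_i$, the set $F$ has an irreducible component $F_0$ of dimension $d_i - s_i$. The subset $F_W$ is closed in $F$ of dimension strictly less than $\dim F_0$, hence $F_0 \setminus F_W$ is a non-empty open subset of $F_0$, of dimension $d_i - s_i$, and it is contained in $Z_i \setminus W_i \subset S$. In particular $y \in \pi_r(S)$ and $\dim(\pi_r^{-1}(y) \cap S) \geq d_i - s_i$. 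The hypothesis on fiber dimensions then forces $d_i - s_i \leq t$, i.e. $d_i \leq s_i + t \leq s + t$.

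The delicate point is the case $\overline{\pi_r(W_i)} = T_i$, in which one cannot simply choose $y$ outside $\pi_r(W_i)$; the argument above handles this uniformly by comparing the generic fiber dimensions of $Z_i$ and of $W_i$ over $T_i$, which differ by at least one because $\dim W_i \leq d_i - 1$, and by exploiting the fact that a component of $F$ of maximal dimension cannot be entirely swallowed by a closed subset of strictly smaller dimension.
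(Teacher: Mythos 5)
Your proof is correct and follows essentially the same route as the paper's: decompose $\overline{S}$ into irreducible components, apply the theorem on the dimension of fibers both to each component $Z_i$ and to the irreducible pieces of the "bad set" $W_i = Z_i\setminus S$ (splitting these according to whether they dominate $T_i$), and compare the two generic fiber dimensions to conclude. Your explicit use of a maximal-dimensional component $F_0$ of the fiber is just a slightly more detailed rendering of the paper's statement that the generic fiber of $T'$ is a locally closed set of the same dimension $m$.
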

\begin{proof}
  Let $T$ be an irreducible component of the Zariski closure of $\Slocallyclosed$
  and let $T'=\Slocallyclosed \cap T$; because $\Slocallyclosed$ is locally closed, one deduces
  that $T'$ is an open dense subset of $T$.

  Let further $C$ be the Zariski closure of $\pi_r(T)$. We claim that
  $\dim(C)\le s$. Indeed, because $T'$ is dense in $T$, we infer that
  $C$ is also the Zariski closure of $\pi_r(T')$. Since $\pi_r(T')$ is
  contained in $\pi_r(\Slocallyclosed)$, we conclude that its Zariski closure has
  dimension at most $s$.

  Since $T'$ is open dense in $T$, we can write $T'=T-Y$, where $Y$ is
  a strict algebraic subset of $T$; in particular, $\dim(Y)<\dim(T)$.
  Let us then consider the restriction of $\pi_r$ to a projection $T
  \to C$ and let $m$ be the dimension of its generic fiber, so that we
  have $m=\dim(T)-\dim(C)$. We claim that for a generic $\x$ in $C$,
  the fiber $\pi_r^{-1}(\x) \cap Y$ has dimension less than $m$.

  To prove this claim, we decompose $Y$ into its irreducible
  components, and distinguish those whose projection is dense in $C$
  from the others. Let us thus write $Y=Y_1 \cup \cdots \cup Y_u \cup
  Z_1 \cup \cdots \cup Z_v$, with all $Y_i,Z_j$ irreducible, and such
  that for all $i,j$, $\pi_r(Y_i)$ is not dense in $C$ and
  $\pi_r(Z_j)$ is dense in $C$. We can then consider fibers 
  of the form $\pi_r^{-1}(\x) \cap Y_i$ and $\pi_r^{-1}(\x) \cap Z_j$
  separately.
  \begin{itemize}
  \item For $1 \le i \le u$, there exists an open dense subset $O_i$
    of $C$ such that for $\x$ in $O_i$, the fiber $\pi_r^{-1}(\x) \cap
    Y_i$ is empty.
\smallskip
  \item For $1 \le j \le v$, let $m'_j$ be the dimension of the
    generic fiber of the restriction of $\pi_r$ to $Z_j$. This implies
    that $m'_j=\dim(Z_j)-\dim(C) < m$ (since
    $\dim(Z_j)<\dim(T)$). Thus, there exists an open dense subset $U_j$
    of $C$ such that for $\x$ in $U_j$, the fiber $\pi_r^{-1}(\x) \cap
    Z_j$ has dimension $m'_j$, which is less than $m$.
  \end{itemize}
  Our claim on the fibers $\pi_r^{-1}(\x) \cap Y$ is thus proved.
  Now, for $\x$ in $C$, the fiber $\pi_r^{-1}(\x) \cap T'$ is the
  set-theoretic difference of the Zariski closed sets $\pi_r^{-1}(\x)
  \cap T$ and $\pi_r^{-1}(\x) \cap Y$. For a generic $\x$ in $C$,
  $\pi_r^{-1}(\x) \cap T$ has dimension $m$, so in view of the previous
  discussion, we deduce that for a generic $\x$ in $C$, the fiber
  $\pi_r^{-1}(\x) \cap T'$ is a locally closed set of dimension $m$ as well.

  On the other hand, for any $\x$ in $C$, our assumption says that
  this fiber has dimension at most $t$, so that $t \ge m$. Since
  $m=\dim(T)-\dim(C) \ge \dim(T)-s$, we get $\dim(T) \le s+t$.  Doing
  so for all $T$, we get $\dim(\Slocallyclosed) \le s+t$.
\end{proof}

Let $\mA$ be in $\ZOK_\Klastindex$. In order to bound the dimension of
$\algAX$, we will apply the previous lemma to the restriction
of the projection $\pi_\X$  to $\algAX$.

Note that the image of $\algAX$ by $\pi_\X$ is contained in
$\openpolar(0,\dalgo,V^\mA)$.  For all $\x$ in
$\openpolar(0,\dalgo,V^\mA)$, let thus $\algxAX$ be the fiber
$\pi_\X^{-1}(\x) \cap \algAX$. Remark that set of all $\g$ such
that $(\mA,\x,\g)$ belongs to $\algX$ is a vector space, say
$\new{E}_{\x,\mA} \subset \C^{\dalgo}$, since $\rho_{a \g + a' \g'}= a
\rho_\g + a' \rho_{\g'}$ for all $a,a' \in \C$ and $\g,\g' \in
\C^{\dalgo}$; then, $\algxAX$ takes the form
$\{\mA\}\times\{\x\}\times \new{E}_{\x,\mA}$.

First, we need a lemma estimating the dimension of the vector space
$\new{E}_{\x,\mA}$, or equivalently of $\algxAX$.

\begin{lemma}\label{lemma:dimbeta}
  For $\mA \in \GL(n)$ and $\x\in \openpolar(0,\dalgo,V^\mA)$, the
  following equality holds:
  $$
  \dim(\pi_{\dalgo}(\T_\x \polar(0,\dalgo,V^\mA)))+\dim(\algxAX)=\dalgo.
  $$
\end{lemma}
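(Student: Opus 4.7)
The plan is to recognize that this equality is essentially a restatement of the rank-nullity theorem applied to a suitable linear map, once we unwind the definitions. So the proof should be short and essentially formal; no geometric input beyond the fact that $T_\x W(\dalgo, V^\mA)$ is a well-defined linear subspace of $\C^n$ (which is guaranteed here, because $\x \in w(\dalgo, V^\mA) \subset \reg(V^\mA)$ and, under the hypotheses secured by restricting to $\mathscr{K}_4$, $\x$ is also regular on $W(\dalgo, V^\mA)$).

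First, I would record the explicit description of the fiber: by the construction preceding the lemma, $\mathscr{X}_{\mA,\x} = \{\mA\} \times \{\x\} \times V_{\x,\mA}$, where
\[ V_{\x,\mA} = \{\, \g \in \C^{\dalgo} \mid \rho_\g \text{ vanishes on } \pi_{\dalgo}(T_\x W(\dalgo, V^\mA)) \,\}. \]
Indeed, $\rho_\g \circ \pi_\dalgo$ vanishes on a subspace $T \subset \C^n$ if and only if the linear form $\rho_\g$ on $\C^\dalgo$ vanishes on $\pi_\dalgo(T)$, since $\pi_\dalgo$ is surjective from $\C^n$ onto $\C^\dalgo$. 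In particular $\dim(\mathscr{X}_{\mA,\x}) = \dim(V_{\x,\mA})$.

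Next, set $E = \pi_{\dalgo}(T_\x W(\dalgo, V^\mA)) \subset \C^\dalgo$, which is a linear subspace. The assignment $\g \mapsto \rho_\g$ identifies $\C^\dalgo$ with the dual space $(\C^\dalgo)^*$, and under this identification $V_{\x,\mA}$ is exactly the annihilator $E^\perp$ of $E$. By the standard duality formula,
\[ \dim(V_{\x,\mA}) = \dim(E^\perp) = \dalgo - \dim(E). \]
Substituting back yields
\[ \dim(\pi_{\dalgo}(T_\x W(\dalgo, V^\mA))) + \dim(\mathscr{X}_{\mA,\x}) = \dim(E) + (\dalgo - \dim(E)) = \dalgo, \]
which is the desired identity. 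There is no real obstacle here: the only thing to verify is the equivalence between $\rho_\g \circ \pi_\dalgo$ vanishing on $T_\x W(\dalgo, V^\mA)$ and $\rho_\g$ vanishing on the projected space $E$, which is immediate from the definitions.
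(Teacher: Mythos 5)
Your proof is correct and follows essentially the same route as the paper: both identify $V_{\x,\mA}$ with the annihilator of $E=\pi_{\dalgo}(T_\x W(\dalgo,V^\mA))$ (equivalently, the dual of the cokernel of $\pi_{\dalgo}$ restricted to the tangent space) and conclude by the standard duality formula $\dim(E^\perp)=\dalgo-\dim(E)$. The only cosmetic remark is that your appeal to surjectivity of $\pi_{\dalgo}$ is unnecessary --- the equivalence between $\rho_\g\circ\pi_{\dalgo}$ vanishing on $T$ and $\rho_\g$ vanishing on $\pi_{\dalgo}(T)$ holds for any linear map --- but this does no harm.
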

\begin{proof}
  For a given $\mA$ and $\x$, $\g$ belongs to $\new{E}_{\x,\mA}$ if
  and only if the linear form $\rho_\g$ vanishes on
  $\pi_{\dalgo}(\T_\x \polar(0,\dalgo,V^\mA))$.  Thus $\new{E}_{\x,\mA}$
  is isomorphic to the dual of the cokernel of
  $\pi_{\dalgo}: \T_\x \polar(0,\dalgo,V^\mA) \to \C^{\dalgo}$, and the
  dimension equality follows.
\end{proof}

Thus, in order to control $\dim(\new{\algxAX})$, we need
to discuss the possible dimensions of
$\pi_{\dalgo}(\T_\x \polar(0,\dalgo,V^\mA))$, for
$\x \in \openpolar(0,\dalgo,V^\mA)$. It is then natural to introduce the
sets
$$\SlocallyclosediA=\{\x\in \openpolar(0,\dalgo,V^\mA)\mid \dim(\pi_{\dalgo}(\T_\x
\polar(0,\dalgo,V^\mA)))=\dalgo-i\}\text{ for } 1\leq i \leq \dalgo.$$
The following lemma relates the dimension of $\pi_r(\T_\x \Slocallyclosed)$ and
$\pi_r(\Slocallyclosed)$, for $\pi_r$ a projection and $\Slocallyclosed$ a locally closed set.

\begin{lemma}\label{lemma:dimpiTxS}
  Let $\Slocallyclosed \subset \C^n$ be a locally closed set and let $r,s \in
  \mathbb{N}$ be such that for all $\x$ in $\Slocallyclosed$, $\pi_r(\T_\x \Slocallyclosed)$ has
  dimension at most $s$. Then the Zariski closure of $\pi_r(\Slocallyclosed)$ has
  dimension at most $s$ as well.
\end{lemma}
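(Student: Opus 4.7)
The strategy is to reduce to the irreducible components of the Zariski closure of $S$. Write $S = V \cap \mathcal{O}$ with $V$ Zariski closed and $\mathcal{O}$ Zariski open, and decompose $V = V_1 \cup \cdots \cup V_t$ into irreducible components. Let $I \subseteq \{1,\dots,t\}$ be the set of indices $i$ for which $S \cap V_i$ is non-empty; only these contribute to $\pi_r(S)$. For each $i \in I$, denote by $K_i$ the Zariski closure of $\pi_r(V_i)$ in $\C^r$. Since $\pi_r(S) \subset \bigcup_{i\in I}\pi_r(V_i)$, the Zariski closure of $\pi_r(S)$ is contained in $\bigcup_{i \in I} K_i$, so it suffices to prove that $\dim(K_i) \le s$ for every $i \in I$.

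Fix $i \in I$. I would select a point $\x \in S \cap V_i$ satisfying two genericity conditions: first, $\x$ lies on no other component (so $\x \in V_i - \bigcup_{j \ne i}V_j$) and in $\reg(V_i)$; second, the differential $d_\x(\pi_r|_{V_i})$ attains its maximal rank over $V_i$. Each of these conditions, together with the membership $\x \in S \cap V_i = V_i \cap \mathcal{O}$ (a non-empty and hence dense Zariski open subset of $V_i$, by local closedness), carves out a dense Zariski open subset of the irreducible variety $V_i$; their intersection is therefore non-empty. At such an $\x$, the defining ideal of $V$ localized at $\x$ coincides with that of $V_i$ localized at $\x$ (since the defining ideals of the other components contain elements which do not vanish at $\x$), so $T_\x S = T_\x V = T_\x V_i$. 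As $\pi_r$ is linear, its differential is itself, and $\pi_r(T_\x S) = d_\x(\pi_r|_{V_i})(T_\x V_i)$ has dimension equal to the maximal rank of $d_\x(\pi_r|_{V_i})$ over $V_i$, which equals $\dim(K_i)$ by the standard identification of the image dimension of a morphism between irreducible varieties with the generic rank of its differential.

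Combining the two observations yields $\dim(K_i) = \dim(\pi_r(T_\x S)) \le s$ by the hypothesis, and taking the union over $i \in I$ completes the proof. The only non-trivial ingredient is the generic-rank statement for the morphism $\pi_r|_{V_i}: V_i \to K_i$, which is a classical consequence of the Jacobian criterion (used in the same spirit as in the proof of Proposition~\ref{prop:Sard}); once granted, the remaining argument is a direct dimension count and presents no serious obstacle.
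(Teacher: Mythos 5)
Your proof is correct and follows essentially the same route as the paper's: decompose into irreducible components, pick a point of $S$ lying on a single component, regular, and generic for the rank of $d_\x(\pi_r|_{V_i})$, identify $T_\x S$ with $T_\x V_i$ there, and invoke the generic-rank/Sard statement to equate $\dim(\pi_r(T_\x V_i))$ with $\dim(K_i)$. The only cosmetic difference is that you decompose the ambient closed set $V$ from the presentation $S=V\cap\mathcal{O}$ and keep the components meeting $S$, whereas the paper decomposes $\overline{S}$ directly; these yield the same components.
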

\begin{proof}
  Let $\algZ \subset \C^n$ be the Zariski closure of $\Slocallyclosed$, and let
  $\algZ_1,\dots,\algZ_k$ be its irreducible components. We will prove
  that the Zariski closure $C_i$ of $\pi_r(\algZ_i)$ has dimension at
  most $s$ for all $i$. This will be enough to conclude, since the
  union of the sets $C_i$ contains $\pi_r(\Slocallyclosed)$.

  Fix $i \le k$. Let
  $\openB_i=\Slocallyclosed \cap \algZ_i-\cup_{i'\ne i}\algZ_{i'}$. Remark that
  $\openB_i$ is an open dense subset of $\algZ_i$, and that for $\x$ in
  $\openB_i$, $\T_\x \Slocallyclosed = \T_\x \algZ_i$, so that $\pi_r(\T_\x \algZ_i)$ has
  dimension at most $s$.

  On the other hand, applying Sard's lemma in the form
  of~\cite[Theorem 3.7]{Mumford76} to the restriction of $\pi_r$ to
  $\algZ_i$, we know that there exists a non-empty Zariski open subset
  $\BasicOpen_i$ of $C_i$ such that for $\x$ in
  $\pi_r^{-1}(\BasicOpen_i) \cap \reg(\algZ_i)$, $\dim(\pi_r(\T_\x
  \algZ_i))=\dim(C_i)$.  Intersecting $\pi_r^{-1}(\BasicOpen_i) \cap
  \reg(\algZ_i)$ with $\openB_i$, we obtain a non-empty open subset
  $\UOpen_i$ of $\algZ_i$ such that for $\x$ in $\UOpen_i$, we have
  simultaneously $\dim(\pi_r(\T_\x \algZ_i))=\dim(C_i)$ and
  $\dim(\pi_r(\T_\x \algZ_i)) \le s$.
\end{proof}

\begin{lemma}\label{lemma:Sdi}
  For all $\mA\in \ZOK_\Klastindex$ and for all
  $i \in \{1,\dots,\dalgo\}$, $\SlocallyclosediA$ is a locally closed subset
  of $\C^n$ of dimension at most $\dalgo-i$, and
  $\cup_{i=1}^{\dalgo} \SlocallyclosediA$ is a partition of
  $\openpolar(0,\dalgo,V^\mA)$.
\end{lemma}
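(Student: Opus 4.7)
The statement has three components: the partition of $w(\dalgo,V^\mA)$ into the $S_{i,\mA}$'s, the local closedness of each $S_{i,\mA}$, and the dimension bound $\dim(S_{i,\mA})\le \dalgo-i$. The first two are comparatively direct; the dimension bound will be obtained from Lemma~\ref{lemma:dimpiTxS}, leveraging the finite-fiber property built into $\mathscr{K}_4$.

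\textbf{Partition and local closedness.} Since $\mathscr{K}_4\subset \mathscr{G}(V,\bullet,\dalgo)$, Lemma~\ref{sec:atlas:lemma:polar2} gives that either $W=W(\dalgo,V^\mA)$ is empty (in which case every $S_{i,\mA}$ is empty and there is nothing to prove) or $W$ is $(\dalgo-1)$-equidimensional with $\sing(W)\subset \sing(V^\mA)$. Because $w(\dalgo,V^\mA)\subset \reg(V^\mA)$, every $\x\in w(\dalgo,V^\mA)$ lies in $\reg(W)$, so $\dim(T_\x W)=\dalgo-1$, which forces $\dim(\pi_\dalgo(T_\x W))\in\{0,\dots,\dalgo-1\}$. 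Thus each such $\x$ belongs to $S_{i,\mA}$ for exactly one $i\in\{1,\dots,\dalgo\}$, proving the partition. For local closedness, note that by Lemma~\ref{lemma:5.2.3} the polynomials $\KK(\mA,\X)$ cut out $W$ with radical ideal and, at any $\x\in \reg(W)$, the matrix $\jac_\x(\KK,\X)$ has rank $n-(\dalgo-1)$ with nullspace $T_\x W$. Hence $\dim(\pi_\dalgo(T_\x W))$ equals the rank of an explicit polynomial matrix in $\x$, and $S_{i,\mA}$ is the intersection, inside the locally closed set $w(\dalgo,V^\mA)$, of a closed rank-$\le(\dalgo-i)$ condition with an open rank-$\ge(\dalgo-i)$ condition; thus $S_{i,\mA}$ is locally closed in $\C^n$.

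\textbf{Dimension bound.} Since $\mathscr{K}_4\subset \mathscr{G}'(V,\bullet,\dalgo)$, the restriction of $\pi_{\dalgo-1}$ to $W$, and hence to $S_{i,\mA}\subset W$, has finite fibers, so $\dim(S_{i,\mA})\le \dim(\overline{\pi_{\dalgo-1}(S_{i,\mA})})$. To bound the right-hand side I apply Lemma~\ref{lemma:dimpiTxS} with $S=S_{i,\mA}$, $r=\dalgo-1$ and $s=\dalgo-i$. For $\x\in S_{i,\mA}$, the containment $\overline{S_{i,\mA}}\subset W$ gives, by inclusion of defining ideals, $T_\x S_{i,\mA} := T_\x(\overline{S_{i,\mA}})\subset T_\x W$. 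Factoring $\pi_{\dalgo-1}=\tau\circ\pi_\dalgo$ with $\tau:\C^\dalgo\to \C^{\dalgo-1}$ the projection forgetting the last coordinate yields
$$\dim(\pi_{\dalgo-1}(T_\x S_{i,\mA}))\le \dim(\tau(\pi_\dalgo(T_\x W)))\le \dim(\pi_\dalgo(T_\x W))=\dalgo-i.$$
Lemma~\ref{lemma:dimpiTxS} then gives $\dim(\overline{\pi_{\dalgo-1}(S_{i,\mA})})\le \dalgo-i$, whence $\dim(S_{i,\mA})\le \dalgo-i$. The only delicate point in the argument is the convention for $T_\x S_{i,\mA}$ at possibly singular points of $S_{i,\mA}$: it is the tangent space of the Zariski closure $\overline{S_{i,\mA}}\subset W$, so the ideal-inclusion argument gives the crucial containment $T_\x S_{i,\mA}\subset T_\x W$ unconditionally, and the remaining steps are routine.
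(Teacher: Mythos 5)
Your proof is correct and follows essentially the same route as the paper's: the partition from $\dim T_\x W(\dalgo,V^\mA)=\dalgo-1$ on $\reg(W)$, local closedness via the rank of $\jac(\KK(\mA,\X),\X)$, the tangent-space containment $T_\x S_{i,\mA}\subset T_\x W(\dalgo,V^\mA)$ followed by Lemma~\ref{lemma:dimpiTxS}, and finally the finite-fiber property built into $\mathscr{K}_4$ together with (the $t=0$ case of) Lemma~\ref{lemma:5.2.5}. The only cosmetic difference is that you run the final count through $\pi_{\dalgo-1}$ where the paper uses $\pi_{\dalgo}$; both are valid since the fibers of $\pi_{\dalgo-1}$ restricted to $W(\dalgo,V^\mA)$ are already finite.
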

\begin{proof}
  Since $\mA$ is in $\ZOK_\Klastindex$, $\polar(0,\dalgo,V^\mA)$ is either empty or
  $(\dalgo-1)$-equidimensional, and in that case its singular locus is
  contained in that of $V^\mA$. 
  
  We can of course suppose that $\polar(0,\dalgo,V^\mA)$ is not empty. Then,
  for all
  $\x\in \openpolar(0,\dalgo,V^\mA) \subset \reg(\polar(0,\dalgo,V^\mA))$,
  $\T_\x \polar(0,\dalgo,V^\mA)$ has dimension $\dalgo-1$, which implies that
  its image by $\pi_{\dalgo}$ has dimension at most $\dalgo-1$.  This
  implies in turn that $\cup_{i=1}^{\dalgo} \SlocallyclosediA$ is a partition
  of $\openpolar(0,\dalgo,V^\mA)$.

  Next, we prove that each $\SlocallyclosediA$ is a locally closed
  set. Indeed, $\openpolar(0,\dalgo,V^\mA)$ is locally closed, and for
  $\x$ in $\openpolar(0,\dalgo,V^\mA)\subset \reg(\polar(0,\dalgo,V^\mA))$,
  $\pi_{\dalgo}(\T_\x \polar(0,\dalgo,V^\mA))$ having dimension $\dalgo-i$
  amounts to $\jac(\mathbf{P}(\mA,\X),\dalgo)$ having rank $n-\dalgo-i+1$ at
  $\x$, which is a locally closed condition.


  We can now fix $i \in \{1,\dots,\dalgo\}$. Since $\SlocallyclosediA$
  is a subset of $\Kpolar(0,\dalgo,V^\mA)$, and since $\mA$ has been
  chosen in the Zariski open set
  $\ZOK_\Klastindex \subset \scrGfiber(\bpsi,V,\{\bullet\},\sing(V),
  \dalgo)$,
  we conclude from the defining property of
  $\scrGfiber(\bpsi,V,\{\bullet\},\sing(V),\dalgo)$ given in
  Proposition~\ref{sec:atlas:prop:summary1} that for all
  $\y\in \C^{\dalgo}$, the fiber
  $\pi_{\dalgo}^{-1}(\y)\cap \SlocallyclosediA$ is finite (precisely,
  the defining property of $\scrGfiber(\bpsi,V,\{\bullet\},\dalgo)$
  applies to the fibers of $\pi_{\dalgo-1}$, which is stronger than
  what we use here).
  
  Next, we prove that the Zariski closure of $\pi_{\dalgo}(\SlocallyclosediA)$
  has dimension at most $\dalgo-i$.  Take $\x$ in $\SlocallyclosediA$, so that
  in particular $\x$ is in $\reg(\polar(0,\dalgo,V^\mA))$. We know that
  $\SlocallyclosediA$ is contained in $\openpolar(0,\dalgo,V^\mA)$, so upon taking
  Zariski closure and tangent spaces, we deduce that $\T_\x \SlocallyclosediA$
  is contained in $\T_\x \polar(0,\dalgo,V^\mA)$.  This implies that
  $\pi_{\dalgo}(\T_\x \SlocallyclosediA)$ is contained in $\pi_{\dalgo}(\T_\x
  \polar(0,\dalgo,V^\mA))$. Because $\x$ is in $\SlocallyclosediA$, we deduce that
  $\pi_{\dalgo}(\T_\x \SlocallyclosediA)$ has dimension at most
  $\dalgo-i$. Lemma~\ref{lemma:dimpiTxS} then implies that the Zariski
  closure of $\pi_{\dalgo}(\SlocallyclosediA)$ has dimension at most
  $\dalgo-i$, as claimed. Using the finiteness property for the fibers
  of $\pi_{\dalgo}$ (previous paragraph), Lemma~\ref{lemma:5.2.5} then
  implies that $\dim(\SlocallyclosediA) \le \dalgo-i$ as well.
 \end{proof}

We can then deduce an upper bound on the dimension of $\algAX$.
\begin{corollary}\label{lemma:DIMXmA}
  The set $\algAX$ has dimension at most $\dalgo$.
\end{corollary}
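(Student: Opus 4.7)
The plan is to combine Lemmas~\ref{lemma:dimbeta} and~\ref{lemma:Sdi} with the fiber dimension lemma (Lemma~\ref{lemma:5.2.5}), stratifying $\mathscr{X}_\mA$ according to the dimension of $\pi_\dalgo(T_\x W(\dalgo,V^\mA))$. The key observation is that for a point $\x$ where $\pi_\dalgo(T_\x W(\dalgo,V^\mA))$ is small, the fiber $\mathscr{X}_{\mA,\x}$ of the projection $\pi_\X$ is correspondingly large; these two dimensions add up to exactly $\dalgo$.

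First, I would observe that $\pi_\X(\mathscr{X}_\mA) \subset w(\dalgo,V^\mA)$, so that $\mathscr{X}_\mA$ decomposes as the disjoint union
\[
\mathscr{X}_\mA = \bigsqcup_{i=1}^{\dalgo} \mathscr{X}_{\mA,i}, \qquad \mathscr{X}_{\mA,i} := \pi_\X^{-1}(S_{i,\mA}) \cap \mathscr{X}_\mA,
\]
using that $\bigsqcup_{i=1}^{\dalgo} S_{i,\mA}$ is a partition of $w(\dalgo,V^\mA)$ by Lemma~\ref{lemma:Sdi}.

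Next, I would bound $\dim(\mathscr{X}_{\mA,i})$ for each $i$. For $\x \in S_{i,\mA}$, we have $\dim(\pi_\dalgo(T_\x W(\dalgo,V^\mA))) = \dalgo - i$ by definition, so Lemma~\ref{lemma:dimbeta} yields $\dim(\mathscr{X}_{\mA,\x}) = i$. Since $\mathscr{X}_{\mA,i}$ is a locally closed subset of $\C^{n^2}\times\C^n\times\C^\dalgo$ (being the intersection of the locally closed set $\mathscr{X}_\mA$ with the preimage under $\pi_\X$ of the locally closed set $S_{i,\mA}$), and each fiber of $\pi_\X|_{\mathscr{X}_{\mA,i}}$ has dimension exactly $i$, Lemma~\ref{lemma:5.2.5} applied to $\pi_\X$ on $\mathscr{X}_{\mA,i}$ gives
\[
\dim(\mathscr{X}_{\mA,i}) \le \dim(S_{i,\mA}) + i \le (\dalgo - i) + i = \dalgo,
\]
where the second inequality uses Lemma~\ref{lemma:Sdi}.

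Finally, taking the maximum over $i \in \{1,\dots,\dalgo\}$ concludes that $\dim(\mathscr{X}_\mA) \le \dalgo$. There is no real obstacle here, as all the hard work went into establishing the stratification lemma \ref{lemma:Sdi} and the rank-nullity style identity \ref{lemma:dimbeta}; the only mild care required is checking that $\mathscr{X}_{\mA,i}$ is locally closed so that Lemma~\ref{lemma:5.2.5} applies, but this is immediate from the fact that $\mathscr{X}_\mA$ and $S_{i,\mA}$ are both locally closed.
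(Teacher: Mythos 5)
Your proof is correct and follows essentially the same route as the paper's: the same stratification of $\mathscr{X}_\mA$ by the sets $\pi_\X^{-1}(S_{i,\mA})$, the same use of Lemma~\ref{lemma:dimbeta} to get fiber dimension $i$, and the same application of Lemma~\ref{lemma:5.2.5} to each stratum before taking the union. No gaps.
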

\begin{proof}
  By Lemma \ref{lemma:Sdi}, $\openpolar(0,\dalgo,V^\mA)$ is the disjoint
  union of the locally closed sets
  $$\SlocallyclosediA=\{\x\in \openpolar(0,\dalgo,V^\mA)\mid \dim(\pi_{\dalgo}(\T_\x
  \polar(0,\dalgo,V^\mA)))=\dalgo-i\}\text{ for } 1\leq i \leq \dalgo,$$
  with in addition $\dim(\SlocallyclosediA)\leq \dalgo-i$ for all $i$.

  For $i$ as above, let us further define $\algiAX =
  \algAX \cap \pi_{\X}^{-1}(\SlocallyclosediA)$; this is still a
  locally closed set in $\C^{n^2} \times \C^n \times \C^{\dalgo}$.  By
  construction, $\pi_{\X}(\algiAX)$ is contained in
  $\SlocallyclosediA$, so its Zariski closure has dimension at most $\dalgo-i$
  (Lemma~\ref{lemma:Sdi}).  On the other hand, because
  $\pi_{\X}(\algiAX)$ is contained in $\SlocallyclosediA$, we also
  know that for every $\x$ in $\pi_{\X}(\algiAX)$, the
  fiber $\pi_\X^{-1}(\x)\cap \algiAX$, which is equal to
  $\algxAX$, has dimension $i$
  (Lemma~\ref{lemma:dimbeta}).

  Applying Lemma~\ref{lemma:5.2.5}, we deduce that $\algiAX$
  has dimension at most $\dalgo$. Since $\algAX$ is the union 
  of the finitely many subsets $\algiAX$, its Zariski closure
  is contained in the union of the Zariski closures of those sets,
  so it has dimension at most $\dalgo$ as well.
\end{proof}

We now come to the main result of this paragraph.

\begin{corollary}\label{coro:5.2.10}
  The set $\algX$ has dimension at most $\dalgo+n^2$.
\end{corollary}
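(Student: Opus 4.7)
The plan is to apply Lemma~\ref{lemma:5.2.5} directly to the locally closed set $\mathscr{X}$ with respect to the projection $\pi_{\mathfrak{A}} : \C^{n^2} \times \C^n \times \C^{\dalgo} \to \C^{n^2}$. The two ingredients are already in place: Lemma~\ref{lemma:5.2.4} says that $\mathscr{X}$ is locally closed, which is exactly the hypothesis required by the fiber-dimension lemma we proved for locally closed sets. So nothing more than bookkeeping is needed.

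First I would bound $\dim(\overline{\pi_{\mathfrak{A}}(\mathscr{X})})$. By construction, $\mathscr{X}$ is contained in $\mathscr{K}_4 \times \C^n \times \C^{\dalgo}$, where $\mathscr{K}_4$ is an open subset of $\GL(n)$, itself open in $\C^{n^2}$. Hence $\pi_{\mathfrak{A}}(\mathscr{X}) \subset \mathscr{K}_4 \subset \C^{n^2}$, and its Zariski closure has dimension at most $n^2$. This plays the role of $s = n^2$ in the statement of Lemma~\ref{lemma:5.2.5} (in the notation of that lemma, we are projecting onto a coordinate subspace, so the $r$ there is the dimension of the target; one should observe that the lemma is stated for a projection $\pi_r: \C^n \to \C^r$, but the argument applies verbatim to the projection $\C^{n^2} \times \C^n \times \C^{\dalgo} \to \C^{n^2}$ onto any choice of coordinates).

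Next I would bound the dimension of the fibers. For $\mA \in \pi_{\mathfrak{A}}(\mathscr{X})$, the fiber $\pi_{\mathfrak{A}}^{-1}(\mA) \cap \mathscr{X}$ is precisely $\mathscr{X}_\mA$, and Corollary~\ref{lemma:DIMXmA} tells us that $\dim(\mathscr{X}_\mA) \le \dalgo$. This supplies $t = \dalgo$ in the hypothesis of Lemma~\ref{lemma:5.2.5}.

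Combining the two bounds through Lemma~\ref{lemma:5.2.5} yields $\dim(\mathscr{X}) \le s + t = n^2 + \dalgo$, which is the assertion of the corollary. There is no genuine obstacle here; the only subtlety is that the fiber-dimension lemma must be available for locally closed (as opposed to closed) sets, which is precisely why Lemma~\ref{lemma:5.2.5} was proved in the generality it was, and why we verified in Lemma~\ref{lemma:5.2.4} that $\mathscr{X}$ is locally closed rather than merely constructible.
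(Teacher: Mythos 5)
Your proposal is correct and follows exactly the paper's own argument: apply Lemma~\ref{lemma:5.2.5} to the restriction of $\pi_{\mathfrak{A}}$ to the locally closed set $\mathscr{X}$, bounding the image dimension by $n^2$ and the fiber dimensions by $\dalgo$ via Corollary~\ref{lemma:DIMXmA}. No gaps; the remark that the lemma applies verbatim to a projection with target $\C^{n^2}$ is the right (and only) point of care.
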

\begin{proof}
  This follows from applying Lemma~\ref{lemma:5.2.5} to the
  restriction of the projection $\pi_\mathfrak{A}: \C^{n^2}\times \C^n
  \times \C^{\dalgo} \to \C^{n^2}$ to $\algX$ and using the previous
  lemma to bound the dimension of the fibers.
\end{proof}


\subsection{Proof of Proposition~\ref{prop:ch5}}

We can now complete the proof of Proposition~\ref{prop:ch5}. We start
by turning the situation around and considering the projection
$$
\begin{array}{cccc}
  \varsigma : &  \C^{n^2}  \times \C^n \times \C^{\dalgo}& \rightarrow & \C^{n^2} \times \C^{\dalgo}\\
  &(\mA, \x, \g) & \mapsto & (\mA, \g).
\end{array}
$$ We claim that most fibers of this projection are finite.
Precisely, let $Y\subset \C^{n^2}\times \C^{\dalgo}$ be the Zariski
closure of the set of all $(\mA,\g)\in \C^{n^2}\times \C^{\dalgo}$ such
that the fiber $\varsigma^{-1}(\mA,\g)\cap \algX$ is infinite.

\begin{lemma}\label{lemma:algebraicY}
  The set $Y$ is a strict Zariski closed subset of $\C^{n^2}\times\C^{\dalgo}$. 
\end{lemma}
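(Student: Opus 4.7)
The plan is to leverage Corollary~\ref{coro:5.2.10}, which gives $\dim(\mathscr{X}) \le \dalgo + n^2 = \dim(\C^{n^2}\times \C^{\dalgo})$, together with the theorem on the dimension of fibers, to locate a dense open subset $U$ of $\C^{n^2}\times\C^{\dalgo}$ over which every fiber of $\alpha$ meets $\mathscr{X}$ in only finitely many points. That will force $Y_0 \subset \C^{n^2}\times\C^{\dalgo} - U$, and taking Zariski closures will give $Y \subset \C^{n^2}\times\C^{\dalgo} - U$, i.e.\ $Y$ is a strict Zariski closed subset.

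More concretely, let $X \subset \C^{n^2}\times \C^n \times \C^{\dalgo}$ denote the Zariski closure of the locally closed set $\mathscr{X}$; by Corollary~\ref{coro:5.2.10}, we still have $\dim(X) \le n^2+\dalgo$. Write $X = X_1 \cup \cdots \cup X_k$ as its decomposition into irreducible components and let $\alpha_i = \alpha|_{X_i}: X_i \to \C^{n^2}\times \C^{\dalgo}$. For each $i$, let $Z_i \subset \C^{n^2}\times \C^{\dalgo}$ be the Zariski closure of $\alpha_i(X_i)$ (an irreducible algebraic set). Two cases arise:
\begin{itemize}
\item If $Z_i \ne \C^{n^2}\times \C^{\dalgo}$, set $U_i = \C^{n^2}\times \C^{\dalgo} - Z_i$; this is a dense Zariski open subset, and by construction $\alpha_i^{-1}(\mA,\g) = \emptyset$ for $(\mA,\g) \in U_i$.
\item If $Z_i = \C^{n^2}\times \C^{\dalgo}$, then $\alpha_i$ is dominant, so the theorem on the dimension of fibers provides a dense Zariski open subset $U_i \subset \C^{n^2}\times \C^{\dalgo}$ such that for $(\mA,\g)\in U_i$, $\dim(\alpha_i^{-1}(\mA,\g)) = \dim(X_i) - (n^2+\dalgo) \le 0$. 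Hence those fibers are finite.
\end{itemize}
Let $U = U_1 \cap \cdots \cap U_k$, which is still dense and Zariski open in $\C^{n^2}\times \C^{\dalgo}$. For $(\mA,\g) \in U$, we have
\[
\alpha^{-1}(\mA,\g) \cap \mathscr{X} \ \subset\ \alpha^{-1}(\mA,\g) \cap X \ =\ \bigcup_{i=1}^{k} \alpha_i^{-1}(\mA,\g),
\]
which is a finite union of finite sets, hence finite. Thus $U$ is disjoint from $Y_0 := \{(\mA,\g)\mid \alpha^{-1}(\mA,\g) \cap \mathscr{X} \text{ infinite}\}$, so $Y_0 \subset \C^{n^2}\times \C^{\dalgo} - U$. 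Since the right-hand side is Zariski closed and proper, taking closures yields $Y = \overline{Y_0} \subset \C^{n^2}\times \C^{\dalgo} - U \subsetneq \C^{n^2}\times \C^{\dalgo}$.

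There is no real obstacle to this argument; the only point requiring care is the bookkeeping between $\mathscr{X}$ (locally closed) and its Zariski closure $X$: the dimension bound of Corollary~\ref{coro:5.2.10} transfers to $X$, and the containment $\alpha^{-1}(\mA,\g)\cap \mathscr{X} \subset \alpha^{-1}(\mA,\g)\cap X$ is enough to conclude because we only need an upper bound on the fiber (over $\mathscr{X}$) to assert finiteness. Irreducibility of $\C^{n^2}\times \C^{\dalgo}$ guarantees that a finite intersection of dense opens is dense, so the final complement is a strict closed subset, as required.
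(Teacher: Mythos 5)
Your argument is correct and is essentially the paper's own proof: both bound $\dim(\overline{\mathscr{X}})$ by $n^2+\dalgo$ via Corollary~\ref{coro:5.2.10}, decompose the closure into irreducible components, and apply the theorem on the dimension of fibers to each restriction of $\alpha$ (empty generic fiber in the non-dominant case, finite generic fiber in the dominant case) to exhibit a dense open set avoiding the infinite-fiber locus. Your write-up just spells out the bookkeeping between $\mathscr{X}$ and its closure in more detail.
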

\begin{proof}
  By definition, $Y$ is Zariski closed, so it remains to prove that it
  does not cover $\C^{n^2}\times\C^{\dalgo}$.  Let $\closedZ$ be an
  irreducible component of the Zariski closure of
  $\algX$. Corollary~\ref{coro:5.2.10} shows that $\closedZ$ has
  dimension at most $\dalgo+n^2$, so either $\varsigma(\closedZ)$ is not
  dense in $\C^{n^2} \times \C^{\dalgo}$, in which case for a generic
  $(\mA,\g)\in \C^{n^2} \times \C^{\dalgo}$ the fiber
  $\varsigma^{-1}(\mA,\g) \cap \closedZ$ is empty, or it is dense in
  $\C^{n^2} \times \C^{\dalgo}$, in which case that fiber is
  generically finite.
\end{proof}

Because $Y$ is a strict Zariski closed set of
$\C^{n^2}\times\C^{\dalgo}$, we claim that there exists a non-zero
$\g_1 \in \C^{\dalgo}$ and a non-empty Zariski open set
$\ZOK_\Klastlastindex \subset \ZOK_\Klastindex$ in $\C^{n^2}$ such
that for $\mA$ in $\ZOK_\Klastlastindex$, $(\mA,\g_1)$ is not in $Y$.
Indeed, consider the projection
$\C^{n^2}\times\C^{\dalgo}\to \C^{\dalgo}$ and its restriction to an
irreducible component $Y'$ of $Y$. Either this restriction is
dominant, in which case its generic fiber has dimension less than
$n^2$, or the image is contained in a strict Zariski closed subset of
$\C^{\dalgo}$.

Let us take $\g_1$ and $\ZOK_\Klastlastindex$ as above, with in addition
$\g_1$ non-zero. For $\mA$ in $\ZOK_\Klastlastindex$, the fiber
$\varsigma^{-1}(\mA, \g_1)$ is finite.  In other words, there exist
finitely many $\x$ in $\openpolar(0,\dalgo,V^\mA)$ such that
$\rho_{\g_1} \circ \pi_{\dalgo}$ vanishes on $\T_\x \polar(0,\dalgo,V^\mA)$.
The following lemma shows how we will obtain a similar result for
$\g_0=(1, 0, \ldots, 0)^t$ instead of $\g_1$.

\begin{lemma}
  Let $\mB$ be in $\GL(n)$ of the form
  $$\mB = \begin{bmatrix} \mB' & \mzero \\ \mzero & {\bf 1}_{n-\dalgo}  
  \end{bmatrix},$$
  with $\mB'$ in $\GL(\dalgo)$. 
  Then, for $\mA$ in $\GL(n)$, the following equalities hold:
  $$ V^{\mA \mB} = (V^\mA)^\mB, \quad \openpolar(0,\dalgo,V^{\mA \mB}) =
  \openpolar(0,\dalgo,V^\mA)^\mB \quad\text{and}\quad
  \polar(0,\dalgo,V^{\mA \mB}) = \polar(0,\dalgo,V^\mA)^\mB.$$ Besides,
  for $\x$ in $\polar(0,\dalgo,V^{\mA\mB})$, we have
  $$ \T_\x \polar(0,\dalgo,V^{\mA\mB}) =(\T_{\x^{\mB^{-1}}} \polar(0,\dalgo,V^\mA))^\mB$$
  and for $\u$ in $\T_\x \polar(0,\dalgo,V^{\mA\mB})$ and $\g$ in $\C^{\dalgo}$, we have
  $$ (\rho_\g \circ \pi_{\dalgo}) (\u) = (\rho_{{\mB'}^{-t} \g} \circ \pi_{\dalgo})(\u^{\mB^{-1}}).$$
\end{lemma}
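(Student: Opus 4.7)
The proof plan is essentially to unwind the definitions and exploit the fact that, because $\mB$ is block-diagonal of the specified form, the change of variables by $\mB$ commutes with the projection $\pi_\dalgo$ up to the linear action of $\mB'$ on the target $\C^\dalgo$. Throughout, I use the convention $\x^\mC = \mC^{-1}\x$ and $U^\mC = \mC^{-1}U$ for a subset $U\subset \C^n$, so in particular $\x^{\mB^{-1}} = \mB\x$ and $U^\mB = \mB^{-1}U$.

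First I would record the group-action identity $V^{\mA\mB}=(V^\mA)^\mB$: by definition, $V^{\mA\mB}=(\mA\mB)^{-1}V=\mB^{-1}\mA^{-1}V=\phi_\mB(V^\mA)=(V^\mA)^\mB$. Next I observe the key commutation: for any $\x\in\C^n$, the block structure of $\mB^{-1}=\mathrm{diag}({\mB'}^{-1},\mathbf{1}_{n-\dalgo})$ gives $\pi_\dalgo(\mB^{-1}\x)={\mB'}^{-1}\pi_\dalgo(\x)$. Viewed on tangent vectors (everything being linear), the same identity reads $\pi_\dalgo\circ\phi_\mB={\mB'}^{-1}\circ\pi_\dalgo$.

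Since $\phi_\mB$ is an invertible linear map, it sends regular points to regular points and, for $\x\in\reg(V^{\mA\mB})$, transports tangent spaces: $T_\x V^{\mA\mB}=\mB^{-1}\,T_{\mB\x}V^\mA$. Combining this with the commutation above,
\[\pi_\dalgo(T_\x V^{\mA\mB})={\mB'}^{-1}\,\pi_\dalgo(T_{\mB\x}V^\mA),\]
and since ${\mB'}^{-1}$ is an isomorphism of $\C^\dalgo$, the left-hand side has dimension $<\dalgo$ iff the right-hand one does. Thus $\x\in w(\dalgo,V^{\mA\mB})$ iff $\mB\x\in w(\dalgo,V^\mA)$, giving $w(\dalgo,V^{\mA\mB})=w(\dalgo,V^\mA)^\mB$. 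Taking Zariski closures (and using that $\phi_\mB$, being a regular isomorphism, commutes with Zariski closure) yields $W(\dalgo,V^{\mA\mB})=W(\dalgo,V^\mA)^\mB$. The tangent-space identity for $W$ then follows by the same tangent-transport argument applied to $W(\dalgo,V^{\mA\mB})=\phi_\mB(W(\dalgo,V^\mA))$.

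Finally, the last equality is a one-line computation: writing $\rho_\g(\mathbf{y})=\g^t\mathbf{y}$, we have
\[(\rho_{{\mB'}^{-t}\g}\circ\pi_\dalgo)(\mB\u)=({\mB'}^{-t}\g)^t\,\mB'\,\pi_\dalgo(\u)=\g^t\,{\mB'}^{-1}\mB'\,\pi_\dalgo(\u)=\g^t\pi_\dalgo(\u)=(\rho_\g\circ\pi_\dalgo)(\u),\]
using $\pi_\dalgo(\mB\u)=\mB'\pi_\dalgo(\u)$ for the first step. There is no genuine obstacle here; the only point requiring care is keeping track of the convention $\x^\mC=\mC^{-1}\x$ so that the transpose ${\mB'}^{-t}$ correctly matches the covariant vs.\ contravariant nature of the vectors $\g$ and $\u$.
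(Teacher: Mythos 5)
Your proof is correct. The only substantive difference from the paper's argument lies in how the equality $w(\dalgo,V^{\mA\mB})=w(\dalgo,V^\mA)^{\mB}$ is obtained. The paper does not argue this directly: it first notes $\sing(V^{\mA\mB})=\sing(V^\mA)^{\mB}$, then cites an external result (\cite[Section~2.3]{SaSc03}) for the identity $K(\dalgo,V^{\mA\mB})=K(\dalgo,V^\mA)^{\mB}$, and recovers the statement for $w$ by removing the singular locus from $K$. You instead work straight from the definition of the open polar variety as the locus where $\pi_{\dalgo}(T_\x V)$ drops dimension, using the tangent-space transport $T_\x V^{\mA\mB}=\mB^{-1}T_{\mB\x}V^{\mA}$ together with the commutation $\pi_{\dalgo}\circ\phi_{\mB}={\mB'}^{-1}\circ\pi_{\dalgo}$, which is the coordinate-free analogue of the Jacobian-minor manipulation underlying the cited reference. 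Your route is self-contained and arguably more transparent; the paper's route is shorter on the page but delegates the key step. The treatment of the Zariski closures, the tangent-space identity for $W$, and the final computation with $\rho_{{\mB'}^{-t}\g}$ are essentially identical to the paper's (you verify the last identity in the reverse direction, starting from $\u^{\mB^{-1}}$ rather than writing $\u=\v^{\mB}$, but the algebra is the same).
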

\begin{proof}
  The first equality is a direct consequence of the definition of
  $V^\mA$; it implies in particular that
  $\sing(V^{\mA
    \mB})=\sing(V^\mA)^\mB$.
  In~\cite[Section~2.3]{SaSc03}, we prove that
  $\Kpolar(0,\dalgo,V^{\mA \mB})=\Kpolar(0,\dalgo,V^\mA)^\mB$; in view of the
  previously noted equality of $\sing(V^{\mA \mB})$ and
  $\sing(V^\mA)^\mB$, we deduce that
  $\openpolar(0,\dalgo,V^{\mA \mB})=\openpolar(0,\dalgo,V^\mA)^\mB$, and similarly
  for their Zariski closures, that
  $\polar(0,\dalgo,V^{\mA \mB})=\polar(0,\dalgo,V^\mA)^\mB$.  The fourth equality
  follows immediately.

  To prove the last equality, take $\u$ in $\T_\x \polar(0,\dalgo,V^{\mA\mB})$ and
  $\g$ in $\C^{\dalgo}$. The third equality implies that $\u$ is of the form
  $\v^\mB$, for some $\v$ in $\T_{\x^{\mB^{-1}}} \polar(0,\dalgo,V^\mA)$.  Due to
  the form of $\mB$, we can write
  $\pi_{\dalgo}(\u)=\pi_{\dalgo}(\v^\mB)=\pi_{\dalgo}(\v)^{\mB'}$, which implies
  that $\rho_\g(\pi_{\dalgo}(\u)) = \rho_{\g'}(\pi_{\dalgo}(\v))$, with
  $\g'={\mB'}^{-t} \g$.
\end{proof}

Let us choose any $\mB$ and $\mB'$ as in the lemma, with additionally
${\mB'}^{-1} \g_0=\g_1$ (such a $\mB'$ exists, because $\g_1$ is
non-zero). We then let $\ZOfinite \subset \C^{n^2}$ be the non-empty
Zariski open set defined by
$\ZOfinite=\{\mA \mB \mid \mA \in \ZOK_\Klastlastindex\}$. We will now prove
that $\ZOfinite$ fulfills the conditions of
Proposition~\ref{prop:ch5}.

Take $\mA$ in $\ZOfinite$ and write $\mA =\mA' \mB$, with $\mA'$ in
$\ZOK_\Klastlastindex$. Because $\mA'$ is in $\ZOK_\Klastlastindex$,
and thus in $\ZOK_\Klastindex$, we know that either
$\polar(0,\dalgo,V^{\mA'})$ is empty, or it is equidimensional of
dimension $\dalgo-1$, with finitely many singular points. If it is not
empty, the previous lemma shows that $\polar(0,\dalgo,V^{\mA})
=\polar(0,\dalgo,V^{\mA'})^\mB$, so that $\polar(0,\dalgo,V^{\mA})$ is
equidimensional of dimension $\dalgo-1$, with finitely many singular
points as well. This proves the second property.

It remains to prove that $\Kpolar(0,1,\polar(0,\dalgo,V^{\mA}))$ is
finite; for this, as said in the introduction of this section, it is
enough to prove that $\openpolar(0,1,\polar(0,\dalgo,V^{\mA}))$ is
finite. By definition, $\x$ is in
$\openpolar(0,1,\polar(0,\dalgo,V^{\mA}))$ if and only if $\x$ is in
$\reg(\polar(0,\dalgo,V^\mA))$ and $\pi_1$ vanishes on $\T_\x
\polar(0,\dalgo,V^\mA)$.

Remark that there are only finitely many $\x$ in
$\reg(\polar(0,\dalgo,V^\mA))$ that are not in $\openpolar(0,\dalgo,V^\mA)$:
indeed, any such $\x$ is in
$\polar(0,\dalgo,V^\mA)-\openpolar(0,\dalgo,V^\mA)$, which is by construction
contained in the finite set $\sing(V^\mA)$. Thus, to conclude, it is
enough to show that there exist finitely many $\x$ in
$\openpolar(0,\dalgo,V^\mA)$ such that $\pi_1$ vanishes on
$\T_\x \polar(0,\dalgo,V^\mA)$.

\begin{lemma}
  For $\x$ in $\openpolar(0,\dalgo,V^\mA)$, $\pi_1$ vanishes on
  $\T_\x \polar(0,\dalgo,V^\mA)$ if and only if $(\mA',\x^{\mB^{-1}},\g_1)$
  belongs to $\varsigma^{-1}(\mA', \g_1)$.
\end{lemma}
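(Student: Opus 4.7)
The plan is to unwind the defining conditions of $\mathscr{X}$ applied to the triple $(\mA', \x^{\mB^{-1}}, \g_1)$ and see that two of the three conditions are automatic, while the remaining one is equivalent to $\pi_1$ vanishing on $T_\x W(\dalgo,V^\mA)$. Interpreting $\alpha^{-1}(\mA', \g_1)$ as its intersection with $\mathscr{X}$ (consistent with how the fiber was used in the construction of $Y$), the claim is that $(\mA', \x^{\mB^{-1}}, \g_1)\in \mathscr{X}$ iff $\pi_1$ vanishes on $T_\x W(\dalgo,V^\mA)$.

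First I would dispose of the two conditions on $\mA'$ and $\x^{\mB^{-1}}$. Since $\mA' \in \mathscr{K}_5 \subset \mathscr{K}_4$, the first condition in the definition of $\mathscr{X}$ is immediate. For the second, the previous lemma (applied to $\mA'$ in place of $\mA$) gives $w(\dalgo,V^\mA)=w(\dalgo,V^{\mA'})^\mB$. Using the paper's convention $V^\mB=\phi_\mB(V)$, this means that the map $\y \mapsto \y^{\mB^{-1}}$ sends $w(\dalgo,V^\mA)$ bijectively to $w(\dalgo,V^{\mA'})$, so from $\x \in w(\dalgo,V^\mA)$ we conclude $\x^{\mB^{-1}}\in w(\dalgo,V^{\mA'})$. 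Thus the problem reduces to showing that the third condition, namely that $\rho_{\g_1}\circ \pi_\dalgo$ vanishes on $T_{\x^{\mB^{-1}}}W(\dalgo,V^{\mA'})$, is equivalent to $\pi_1$ vanishing on $T_\x W(\dalgo,V^\mA)$.

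For this I would combine the tangent-space equality and the last identity of the previous lemma. The equality
\[
T_\x W(\dalgo,V^\mA) = \bigl(T_{\x^{\mB^{-1}}}W(\dalgo,V^{\mA'})\bigr)^\mB
\]
implies that the assignment $\u \mapsto \u^{\mB^{-1}}$ is a bijection between $T_\x W(\dalgo,V^\mA)$ and $T_{\x^{\mB^{-1}}}W(\dalgo,V^{\mA'})$. Applied to $\g=\g_0$, the last identity of the lemma reads
\[
\pi_1(\u) = (\rho_{\g_0}\circ \pi_\dalgo)(\u) = (\rho_{\mB'^{-t}\g_0}\circ \pi_\dalgo)(\u^{\mB^{-1}})
\]
for $\u \in T_\x W(\dalgo,V^\mA)$. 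Our choice of $\mB'$ translates into $\mB'^{-t}\g_0=\g_1$ (any nonzero $\g_1$ permits such a choice), so the right-hand side becomes $(\rho_{\g_1}\circ\pi_\dalgo)(\u^{\mB^{-1}})$. Hence $\pi_1$ vanishes on all of $T_\x W(\dalgo,V^\mA)$ iff $\rho_{\g_1}\circ\pi_\dalgo$ vanishes on all of $T_{\x^{\mB^{-1}}}W(\dalgo,V^{\mA'})$, which is exactly the third defining condition for $(\mA',\x^{\mB^{-1}},\g_1)\in \mathscr{X}$.

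There is essentially no obstacle beyond careful bookkeeping. The only delicate point is lining up the convention on the transpose in the last identity of the previous lemma with the chosen relation between $\mB'$, $\g_0$ and $\g_1$; since we have total freedom to pick an invertible $\mB' \in \GL(\dalgo)$ sending $\g_0$ to a prescribed nonzero vector, this is harmless and the equivalence follows.
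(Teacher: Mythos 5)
Your proposal is correct and follows essentially the same route as the paper: both verify membership in $\mathscr{X}$ for $(\mA',\x^{\mB^{-1}},\g_1)$ by noting $\mA'\in\mathscr{K}_4$ and $\x^{\mB^{-1}}\in w(\dalgo,V^{\mA'})$, then transport the vanishing condition via the tangent-space identity and the $\rho_\g$-identity of the preceding lemma with $\g=\g_0$. Your explicit remarks on interpreting $\alpha^{-1}(\mA',\g_1)$ as its intersection with $\mathscr{X}$ and on the transpose convention ${\mB'}^{-t}\g_0=\g_1$ are sound and consistent with what the paper intends.
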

\begin{proof}
  Take $\x$ in $\openpolar(0,\dalgo,V^\mA)$ and let $\y=\x^{\mB^{-1}}$.
  The previous lemma shows that $\T_\x \polar(0,\dalgo,V^{\mA}) =(\T_{\y}
  \polar(0,\dalgo,V^{\mA'}))^\mB$, and that for $\v$ in $\T_\y \polar(0,\dalgo,V^{\mA'})$ and
  $\u=\v^\mB$, we have
  $$\pi_1(\u) = (\rho_{\g_0} \circ \pi_{\dalgo}) (\u) = (\rho_{\g_1} \circ
  \pi_{\dalgo})(\v).$$
  Thus, $\pi_1$ vanishes on $\T_\x \polar(0,\dalgo,V^\mA)$ if and only
  $\rho_{\g_1} \circ \pi_{\dalgo}$ vanishes on
  $\T_\y \polar(0,\dalgo,V^{\mA'})$.  Because, by assumption, $\mA'$ is in
  $\ZOK_\Klastindex$ and (by the previous lemma) $\y$ is in
  $\openpolar(0,\dalgo,V^{\mA'})$, this is the case if and only if
  $(\mA', \y, \g_1)$ is in $\algX$. This is equivalent to
  $(\mA',\y,\g_1)$ belonging to $\varsigma^{-1}(\mA', \g_1)$.
\end{proof}

The construction of $\ZOfinite$ implies that $\varsigma^{-1}(\mA', \g_1)$ is
finite, so our finiteness property is proved.


\section{Proof of Theorem~\ref{THEO:MAINABSTRACT}}\label{sec:proof4.1}

In this section, we prove the following statement
(Theorem~\ref{THEO:MAINABSTRACT}) on Algorithm ${\sf MainRoadmap}(V,
C_0)$. To state this result, recall that the recursive calls of ${\sf
  RoadmapRec}$ are organized into a binary tree that we denoted by
$\scrT$.

{\em 
  Assume that $V$ is a $d$-equidimensional algebraic set with finitely
  many singular points and that $V\cap \R^n$ is bounded.  Let
  $C_0\subset \C^n$ be a finite set of points and let
  $(\mA_\nodetau)_{\nodetau \text{~internal node of~} \scrT}$ be a
  family of matrices, with $\mA$ in $\GL(n, e_\nodetau, \QQ)$ for all
  $\nodetau$.

There
  exists a family of non-empty Zariski open sets
  $(\scrOpen_\nodetau)_{\nodetau\text{~internal node of~} \scrT}$, where for all $\nodetau$,
  $\scrOpen_\nodetau$ is in $\GL(n, e_\nodetau)$ and depends on the
  matrices $(\mA_{\nodeoherletter})_{\nodeoherletter \text{~proper ancestor of~}
    \nodetau}$, such that the following holds: if, for all internal nodes $\nodetau$ of $\scrT$,
   $\mA_\nodetau$ is in
  $\scrOpen_\nodetau$ and if it is used as the change of variables in 
the corresponding recursive call of ${\sf RoadmapRec}$, ${\sf
    MainRoadmap}(V, C_0)$ returns a roadmap of $(V, C_0)$.
} 


\subsection{An induction property} \label{chap:abstractalgo:objects}

Let $V \subset \C^n$ be a $d$-equidimensional algebraic set with
finitely many singular points and let $C$ be a finite set in $\C^n$
which contains $\sing(V)$. 

Let $(\mA_\nodetau)_{\nodetau\text{~internal node of~} \scrT}$ be a
family of matrices, with $\mA$ in $\GL(n, e_\nodetau, \QQ)$ for all
$\nodetau$.  We are going to associate to each node of $\scrT$ some
algebraic sets such as $V_\nodetau, Q_\nodetau, C_\nodetau,
S_\nodetau,\dots$, and an atlas $\bpsi_\nodetau$ of $(V_\nodetau,
Q_\nodetau, S_\nodetau)$; if $\tau$ is an internal node, we also
associate to it the subset $\scrOpen_\nodetau$ of $\GL(n, e_\nodetau)$
mentioned in the theorem.  In order to initialize the construction, we
also consider an atlas $\bpsi$ of $(V,\bullet,\sing(V))$ (such an atlas 
always exist; see Lemma~\ref{sec:atlas:lemma:glob}).

The construction is by induction on the nodes $\nodetau$ of $\scrT$; the
induction property will be written as follows:
\begin{enumerate}
\item [${\sfTa:}$] There exists a family of non-empty Zariski open sets
  $(\scrOpen_{\nodeoherletter})_{\nodeoherletter\text{~proper ancestor of~} \nodetau}$,
  with $\scrOpen_{\nodeoherletter}$ in $\GL(n, e_{\nodeoherletter})$ for all $\nodeoherletter$,
  and with the following properties.
  Suppose that $\mA_{\nodeoherletter}$ belongs to $\scrOpen_{\nodeoherletter}$ for
  all proper ancestors  $\nodeoherletter$ of $\nodetau.$ Then, we associate to
  the node $\nodetau$ the objects
  $(V_{\nodetau},Q_\nodetau,S_\nodetau,C_\nodetau,\bpsi_\nodetau)$,
  which satisfy the following:
\begin{enumerate}
\smallskip
\item[${\sf t_{1}.}$] $Q_\nodetau$ is a finite subset of $\C^{e_\nodetau}$ and $S_\nodetau,C_\nodetau$
  are finite subsets of $\C^n$;
\smallskip
\item[${\sf t_{2}.}$] $V_\nodetau,S_\nodetau,C_\nodetau$ lie over $Q_\nodetau$;
\smallskip
\item[${\sf t_{3}.}$] either $V_\nodetau$ is empty, or $V_\nodetau$ lies
  over $Q_\nodetau$ and is $d_\nodetau$-equidimensional with finitely many
  singular points, in which case $\bpsi_\nodetau$ is an atlas of
  $(V_\nodetau,Q_\nodetau,S_\nodetau)$;
\smallskip
\item[${\sf t_{4}.}$] the inclusion $S_\nodetau \subset C_\nodetau$ holds.
\end{enumerate}
\end{enumerate}

The root $\rho$ of ${\scrT}$ (which has no proper ancestor) satisfies
${\sfTa}$, provided we define
$$V_\rho=V, \quad Q_\rho=\bullet,\quad S_\rho = \sing(V_\rho),\quad
C_\rho=C, \quad \bpsi_\rho = \bpsi.$$ Suppose now that an internal
node $\nodetau$ satisfies ${\sfTa}$. We define the subset $\scrOpen_\nodetau$ of
  $\GL(n, e_\nodetau)$ as follows:
\begin{itemize}
\item If $\mA_{\nodeoherletter}$ belongs to $\scrOpen_{\nodeoherletter}$ for all
  proper ancestors $\nodeoherletter$ of $\nodetau$, and if $V_\nodetau$ is
  empty, we take $\scrOpen_\nodetau=\GL(n, e_\nodetau)$.
\smallskip
\item If $\mA_{\nodeoherletter}$ belongs to $\scrOpen_{\nodeoherletter}$ for all
  proper ancestors $\nodeoherletter$ of $\nodetau$, and if $V_\nodetau$ is
  not empty, we define $\scrOpen_\nodetau$ as the intersection of the sets 
  $$\scrGpolar(\bpsi_\nodetau,V_\nodetau,Q_\nodetau,S_\nodetau,{\dalgo_\nodetau}), \qquad 
  \ZOfinite(V_\nodetau,Q_\nodetau,\dalgo_\nodetau)
  \qquad
  \text{ and }\qquad
  \scrGfiber(\bpsi_\nodetau,V_\nodetau,Q_\nodetau,S_\nodetau,{\dalgo_\nodetau})
  $$
  of Propositions~\ref{prop:ch4},~\ref{prop:ch5}
  and~\ref{sec:atlas:prop:summary1}.
\smallskip
\item Else, we take $\scrOpen_\nodetau=\GL(n, e_\nodetau)$.
\end{itemize}
In the first two cases, we then define $B_\nodetau,\fiber2_\nodetau,C'_\nodetau,C''_\nodetau$,
$W_\nodetau=\polar(e_\nodetau,\dalgo_\nodetau,V_\nodetau^{\mA_\nodetau})$ and
$V''_\nodetau=\fbr(V_\nodetau^{\mA_\nodetau}, \fiber2_\nodetau)$ as in algorithm ${\sf
  RoadmapRec}$.

\begin{lemma}\label{sec:abstractalgo:lemma:correctnessA1}
  If an internal node $\nodetau$ satisfies $\sfTa$, and if
  $\mA_{\nodeoherletter}$ belongs to $\scrOpen_{\nodeoherletter}$ for all ancestors
  $\nodeoherletter$ of $\nodetau$ (including $\tau$ itself), then
  $B_\nodetau,\fiber2_\nodetau,C'_\nodetau,C''_\nodetau$ are finite.
\end{lemma}
\begin{proof}
  We are necessarily in one of the first two cases in the previous 
case discussion.
  When $V_\nodetau$ is empty, all statements are clear. Otherwise, the
  finiteness of $B_\nodetau$, and thus of its projection $\fiber2_\nodetau$,
  are consequences of Proposition~\ref{prop:ch5}.  The first item in
  Proposition~\ref{sec:atlas:prop:summary1} implies that $C'_\nodetau$ is
  finite, and $C''_\nodetau$ is finite because it is a subset of
  $C'_\nodetau$.
\end{proof}

Let $\nodetau',\nodetau''$ be the children of an internal node $\nodetau$. 
If we are under the assumptions of the previous lemma, using in particular
Definitions~\ref{sec:atlas:notation:polar}
and~\ref{sec:atlas:notation:fiber}, we set
$$
V_{\nodetau'}=W_\nodetau,\ \
Q_{\nodetau'}=Q_\nodetau,\ \
S_{\nodetau'}=S_\nodetau^{\mA_\nodetau},\ \
C_{\nodetau'}=C'_\nodetau,\ \
\bpsi_{\nodetau'}=\atlaspolar(\bpsi_\nodetau^{\mA_\nodetau}, V^{\mA_\nodetau}_\nodetau,Q_\nodetau, S^{\mA_\nodetau}_\nodetau,\dalgo_\nodetau)
$$ 
and
$$
V_{\nodetau''}=V''_\nodetau,\ \ 
Q_{\nodetau''}=\fiber2_\nodetau,\ \
S_{\nodetau''}= \fbr(S_\nodetau^{\mA_\nodetau}\cup W_\nodetau, \fiber2_\nodetau),\ \
C_{\nodetau''}=C''_\nodetau,$$
and finally $$
\bpsi_{\nodetau''}=\atlasfiber(\bpsi_\nodetau^{\mA_\nodetau}, V_\nodetau^{\mA_\nodetau},Q_\nodetau,S_\nodetau^{\mA_\nodetau},\fiber2_\nodetau).
$$
Note that, by the previous lemma, $C_{\nodetau'},Q_{\nodetau'}$ and
$C_{\nodetau''},Q_{\nodetau''}$ are finite. 

\begin{lemma}\label{sec:abstractalgo:lemma:correctnessA2}
  If an internal node $\nodetau$ satisfies $\sfTa$, its children
  $\nodetau'$ and $\nodetau''$ satisfy $\sfTa$.
\end{lemma}
\begin{proof}
  This is mostly a routine verification. Property $\sfTa$ at either
  $\nodetau'$ or $\nodetau''$ amounts to assuming that
  $\mA_{\nodeoherletter}$ belongs to $\scrOpen_{\nodeoherletter}$ for
  all ancestors $\nodeoherletter$ of $\nodetau$, including $\nodetau$
  itself. In particular, we are under the assumptions of the previous
  lemma.

  By definition, $Q_{\nodetau'}=Q_\nodetau \subset \C^{e_{\nodetau'}}$
  is finite; as pointed out above, the previous lemma implies that
  this is also the case for $Q_{\nodetau''} \subset
  \C^{e_{\nodetau''}}$. Moreover, $S_{\nodetau'}$ is finite by
  construction, and $S_{\nodetau''}$ is finite by
  Proposition~\ref{sec:atlas:prop:summary1}. Thus, item ${\sf
    t_{1}}$ is proved.

  Then, one easily sees that $V_{\nodetau'},S_{\nodetau'},C_{\nodetau'}$ lie over
  $Q_{\nodetau'}=Q_{\nodetau}$; the same holds for $\nodetau''$ by construction.
  Thus, item ${\sf t_{2}}$ is proved. Next, we have to prove that the
  following holds:
  \begin{itemize}
  \item either $V_{\nodetau'}$ is empty, or $V_{\nodetau'}$ lies over
    $Q_{\nodetau'}$ and is $d_{\nodetau'}$-equidimensional with finitely many
    singular points, in which case $\bpsi_{\nodetau'}$ is an atlas of
    $(V_{\nodetau'}, Q_{\nodetau'}, S_{\nodetau'})$;
\smallskip
  \item either $V_{\nodetau''}$ is empty, or $(V_{\nodetau''}$ lies over
    $Q_{\nodetau''}$ and is $d_{\nodetau''}$-equidimensional with finitely
    many singular points, in which case $\bpsi_{\nodetau''}$ is an atlas
    of $(V_{\nodetau''}, Q_{\nodetau''}, S_{\nodetau''})$.
  \end{itemize}
  When $V_\nodetau$ is empty, both $V_{\nodetau'}$ and $V_{\nodetau''}$ are empty.
  Otherwise, both statements are consequences of
  Propositions~\ref{prop:ch4} and~\ref{sec:atlas:prop:summary1}, so
  ${\sf t_{3}}$ is proved. We finally prove ${\sf t_{4}}$: because
  $C_{\nodetau'}=C'_\nodetau$ contains $C_\nodetau^{\mA_\nodetau}$, which itself
  contains $S_\nodetau^{\mA_\nodetau}=S_{\nodetau'}$ (by induction assumption),
  property ${\sf t_{4}}$ holds for $\nodetau'$. For $\nodetau''$, recall
  that
  $S_{\nodetau''} = \fbr(S_\nodetau^{\mA_\nodetau} \cup W_\nodetau,\fiber2_\nodetau)$,
  whereas
  $C_{\nodetau''} = \fbr(C_\nodetau^{\mA_\nodetau} \cup W_\nodetau,\fiber2_\nodetau)$, so
  the claim follows from the similar property at $\nodetau$.
\end{proof}

Thus, repeated applications of the previous lemma allow us to define a
family of non-empty Zariski open sets $\scrOpen_\nodetau \subset
\GL(n, e_\nodetau)$, for $\nodetau$ internal node of $\scrT$, for which
all nodes of $\scrT$ satisfy property $\sfTa$.






\subsection{Proof of the theorem} 

In the previous subsection, we showed how to define all objects
attached to $\scrT$; we now prove that the algorithm
${\sf MainRoadmap}$ correctly returns a roadmap of $(V,C_0)$. The proof
is similar to that of our first generalization of Canny's
algorithm~\cite{SaSc11}, adapted to the fact that we handle more
general polar varieties.

The key ingredient is a connectivity result which is part
of~\cite[Theorem 14]{SaSc11}.  As stated, the theorem in that
reference also handles the transfer of some complete intersection
properties to systems defining the polar varieties we were
considering. These complete intersection properties do not hold in our
more general context, but the proof of the connectivity statement
given in~\cite[Section 4.3]{SaSc11} does not use them.

The following statement combines this connectivity result
and~\cite[Proposition 2]{SaSc11}, which ensures that taking the union
of roadmaps of the polar variety $W$ and the fiber
$V''=\fbr(V,\pi_{e+\dalgo-1}(B))$ with
$B=\Kpolar(e,1,V)\cup \Kpolar(e,1,W)\ \cup\ C$, one obtains a roadmap
of $V$. Observe that Lemma \ref{prelim:lemma:polarpolar} implies that
$B=\Kpolar(e,1,W)\cup C$; this yields the following proposition.
 
\begin{proposition}\label{sec:abstractalgo:theo:connect}
  Let $V$ and $Q$ be algebraic sets in $\C^n$ and $\C^e$ such that $V$
  lies over $Q$, is $d$-equidimensional with finitely many singular
  points and $V\cap \R^n$ is bounded. Let $C\subset \C^n$ be a finite
  set of points and let $\dalgo$ be in $\{1,\dots,d\}$. Suppose that
  the following assumptions hold:
  \begin{itemize}
  \item $V\cap \R^n$ is bounded;
\smallskip
  \item either the set $W=\polar(e,\dalgo,V)$ is empty, or $W$ is
    $(\dalgo-1)$-equidimensional with finitely many singular points;
\smallskip
  \item the set $B= \Kpolar(e,1,W)\ \cup\ C$ is finite;
\smallskip
  \item either the set $V''=\fbr(V,\fiber2)$, with
    $\fiber2=\pi_{e+\dalgo-1}(B)$, is empty, or $V''$ is
    $(d-(\dalgo-1))$-equidimensional with finitely many singular
    points;
\smallskip
  \item the set $C'=C \cup \fbr(W,\fiber2)$ is finite.
  \end{itemize}
  Let further $C''=\fbr(C',\fiber2)$. If $R'$ and $R''$ are roadmaps of
  respectively $(W, C')$ and $(V'',C'')$, then $R' \cup R''$ is a
  roadmap of $(V,C)$.
\end{proposition}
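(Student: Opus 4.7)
My plan is to derive the result by combining~\cite[Theorem~14]{SaSc11} with~\cite[Proposition~2]{SaSc11}, as already indicated in the discussion preceding the statement. The key observation is that both results in~\cite{SaSc11} are proved by purely geometric deformation arguments (in~\cite[Section~4.3]{SaSc11}) that rely only on the properties listed in the hypothesis of our proposition: $V\cap\R^n$ bounded, $V$ equidimensional with finite singular locus, and $W$ either empty or equidimensional of the expected dimension. The complete-intersection/smoothness assumptions appearing in the statement of~\cite[Theorem~14]{SaSc11} are used elsewhere in~\cite{SaSc11}, not in the connectivity arguments themselves, so they can be dropped here.

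First, I would invoke~\cite[Theorem~14]{SaSc11} with the larger control set $B^{\sharp}=K(e,1,V)\cup K(e,1,W)\cup C$ and the corresponding finite set $\fiber2^{\sharp}=\pi_{e+\dalgo-1}(B^{\sharp})$. This gives that every semi-algebraically connected component of $V\cap\R^n$ has a non-empty and semi-algebraically connected intersection with $W\cup\fbr(V,\fiber2^{\sharp})$. The underlying argument takes two points in the same connected component, joins them by a semi-algebraic path in $V\cap\R^n$, and then homotopes it so that its critical points with respect to $\pi_1$ are sent into fibers of $\pi_{e+\dalgo-1}$ above $\fiber2^{\sharp}$; the resulting path decomposes into sub-arcs lying alternately in $W$ and in $\fbr(V,\fiber2^{\sharp})$.

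Second, I would reduce $\fiber2^{\sharp}$ to our $\fiber2$ using Lemma~\ref{prelim:lemma:polarpolar}. Since $W=W(e,\dalgo,V)$ is either empty or equidimensional, the lemma gives $w(e,1,V)\subset K(e,1,W)$, and hence $K(e,1,V)\subset K(e,1,W)\cup\sing(V)$. In the algorithmic context (cf.\ Section~\ref{chap:abstractalgo:objects}) the invariant $\sing(V)\subset C$ is maintained throughout the recursion, so $B^{\sharp}=B$ and $\fiber2^{\sharp}=\fiber2$; the conclusion of the first step therefore applies verbatim with our $\fiber2$. Finally I would apply~\cite[Proposition~2]{SaSc11} which, assuming $R'$ is a roadmap of $(W,C')$ containing $C\cup\fbr(W,\fiber2)$ and $R''$ is a roadmap of $(V'',C'')$ with $C''\supset\fbr(C',\fiber2)$, assembles $R'\cup R''$ into a roadmap of $(V,C)$; these inclusions are precisely what is imposed in our statement.

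The main obstacle is the line-by-line verification that the proof in~\cite[Section~4.3]{SaSc11} truly carries over to the general equidimensional setting, rather than only to the smooth hypersurface case for which it was originally written. Each geometric step---the Sard-type argument for the first coordinate, the homotopy concentrating critical events into fibers above $\fiber2$, and the gluing in the polar variety---must be re-read and confirmed to use only equidimensionality of $V$ and $W$, boundedness of $V\cap\R^n$, and finiteness of $\sing(V)$. Once this routine inspection is done, the proof is simply the transcription of the combined statements above.
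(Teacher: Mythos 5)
Your proposal follows exactly the paper's route: the paper likewise obtains this proposition by combining \cite[Theorem~14]{SaSc11} with \cite[Proposition~2]{SaSc11}, observing that the complete-intersection hypotheses in the statement of Theorem~14 are not used in the connectivity argument of \cite[Section~4.3]{SaSc11}, and invoking Lemma~\ref{prelim:lemma:polarpolar} to collapse $K(e,1,V)\cup K(e,1,W)\cup C$ to $K(e,1,W)\cup C$. Your explicit remark that this collapse also needs $\sing(V)\subset C$ (guaranteed by the top-level convention of the algorithm) is, if anything, slightly more careful than the paper's one-line observation.
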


This proposition allows us to prove Theorem~\ref{THEO:MAINABSTRACT}.
In the previous section, we defined a family of non-empty Zariski open
sets $\scrOpen_\nodetau \subset \GL(n, e_\nodetau)$, for $\nodetau$
internal node of $\scrT$, for which all nodes of $\scrT$ satisfy
property $\sfTa$. Suppose now, as in the theorem, that $\mA_\nodetau$
is in $\scrOpen_\nodetau$ for all internal nodes $\nodetau$ of
$\scrT$. By property $\sfTa$, we associate to each node $\nodetau$ of
$\scrT$ the objects
$V_{\nodetau},Q_\nodetau,S_\nodetau,C_\nodetau,\bpsi_\nodetau$, which
satisfy properties ${\sf t_{1}},\dots,{\sf t_{4}}$.

To each node $\nodetau$ of the tree $\scrT$, we can then associate an
algebraic set $R_\nodetau$ in the obvious manner:
\begin{itemize}
\item if $\nodetau$ is a leaf, we define ${R}_\nodetau$ as $V_\nodetau$,
\smallskip
\item else, letting $\nodetau'$ and $\nodetau''$ be the children of $\nodetau$, we
  denote by ${R}_\nodetau$ the union of the curves ${R}_{\nodetau'}^{\mA_{\nodetau}^{-1}}$
  and ${R}_{\nodetau''}^{\mA_{\nodetau}^{-1}}$.
\end{itemize}
 
\begin{lemma}\label{sec:abstractalgo:lemma:correctness}
  For any node $\nodetau$ of $\scrT$, ${R}_\nodetau$ is a roadmap of
  $(V_\nodetau, C_\nodetau)$.
\end{lemma}
\begin{proof}
  First, remark that if $V\cap \R^n$ bounded, $V_\nodetau \cap \R^n$ is
  bounded for any $\nodetau$ in $\scrT$: indeed, all these algebraic sets
  are obtained from $V$ by a combination of either taking polar
  varieties or fibers, through changes of variables with coefficients
  in $\QQ$.

  The proof of the lemma is by decreasing induction on the depth of
  $\nodetau$.  If $\nodetau$ is a leaf (i.e. $d_\nodetau=1$), we know from
  ${\sfTa}$ that $V_\nodetau$ is either empty or $1$-equidimensional, so
  our assertion holds.  Thus, we can suppose that $\nodetau$ is not a leaf
  and we let $\nodetau'$ and $\nodetau''$ be the children of~$\nodetau$.

  If $V_\nodetau$ is empty, both $V_{\nodetau'}$ and $V_{\nodetau''}$ are empty,
  so (by the induction assumption) $R_{\nodetau'}$ and $R_{\nodetau''}$ are
  empty; as a result, $R_{\nodetau}$ is empty, and our claim holds. Else,
  assumption $\sfTa$ implies that $V_\nodetau$ is $d_\nodetau$-equidimensional
  with finitely many singular points, so that
  $V_\nodetau^{\mA_\nodetau}$ does too; besides, similar statements
  hold for $V_{\nodetau'}$ and $V_{\nodetau''}$, and
  all sets $B_\nodetau$ and $C'_\nodetau$ are finite.
  
  We are thus in a position to apply
  Proposition~\ref{sec:abstractalgo:theo:connect}. Together with the
  induction assumption, that proposition implies that $R_{\nodetau'} \cup
  R_{\nodetau''}$ is a roadmap of $(V_\nodetau^{\mA_\nodetau},C_\nodetau^{\mA_\nodetau})$.
  We deduce that ${R}_\nodetau={R}_{\nodetau'}^{\mA_\nodetau^{-1}}\cup
  R_{\nodetau''}^{\mA_\nodetau^{-1}}$ is a roadmap of $(V_\nodetau, C_\nodetau)$.
\end{proof}

Applying Lemma \ref{sec:abstractalgo:lemma:correctness} to $V$ and
$C_0\cup \sing(V)$ shows that ${\sf MainRoadmap}(V, C)$ returns a
roadmap of $(V,C_0 \cup \sing(V))$, which is in particular a roadmap
of $(V,C_0)$.  This proves Theorem~\ref{THEO:MAINABSTRACT}.


\section{Proof of Proposition~\ref{sec:lagrange:propertyP}}\label{chap:lagrange}

Let us recall the statement of
Proposition~\ref{sec:lagrange:propertyP}.  {\em Let
  $L=(\Gamma,\scrQ,\scrS)$ be a generalized Lagrange system and let
  $\F=(F_1,\dots,F_P)$ in $\QQ[\X, \L]$ and $e \ge 0$ be as in
  Definition \ref{def:GLS}. If $L$ has the global normal form
  property, the following holds:
  \begin{itemize}
  \item the Jacobian matrix $\jac(\F,e)$ has full rank $P$ at every
    point $(\x,\bell)$ in $\Cons(L)$;
\smallskip
  \smallskip
  \item the restriction $\pi_\X: \Cons(L) \to \Proj(L)$ is a bijection.
  \end{itemize}}

We start with two useful lemmas.

\begin{lemma} \label{lemma:conseq0} Let $L=(\Gamma,\scrQ,\scrS)$ be a
  generalized Lagrange system, with $U=\Proj(L)$, $V=\Clos{(L)}$,
  $Q=Z(\mathscr{Q})$ and $S=Z(\mathscr{S})$, and let
  $\F=(F_1,\dots,F_P)$ in $\QQ[\X, \L]$ and $e \ge 0$ be as in
  Definition \ref{def:GLS}.  Suppose that $\phi=(\polmu, \poldelta,
  \h, \H)$ is a local normal form for $L$.  Then, the following
  equalities hold in $\C^n$:
  $$
  \begin{array}{ccl}
    \Open(\polmu \poldelta) \cap U&=&  \Open(\polmu\poldelta) \cap \fbr(V(\h),Q) -S\\[2mm]
                                        &=&\Open(\polmu\poldelta) \cap V -S.
  \end{array}
  $$
\end{lemma}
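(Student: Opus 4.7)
The plan is to establish the two displayed equalities independently, with the second being immediate and the first requiring property ${\sf L_3}$ as its main input. Property ${\sf L_4}$ states that $(\mu,\h)$ is a chart of $(V,Q,S)$, and condition ${\sf C_2}$ of that chart yields
$${\cal O}(\mu)\cap V-S \;=\; {\cal O}(\mu)\cap \fbr(V(\h),Q)-S.$$
Intersecting both sides with the open set ${\cal O}(\delta)$ immediately produces the equality between ${\cal O}(\mu\delta)\cap V -S$ and ${\cal O}(\mu\delta)\cap \fbr(V(\h),Q)-S$, so the second equality of the lemma is settled with no further work. Note that ${\sf L_5}$ plays no role here, because we are already restricted to ${\cal O}(\mu\delta)$ from the outset.

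For the remaining equality, I would argue by double inclusion using the geometric reading of ${\sf L_3}$: the ideal identity $\langle \F, I\rangle = \langle \H, I\rangle$ in $\QQ[\X,\L]_{\mu\delta}$, where $I\subset \QQ[\X]$ is the defining ideal of $Q$, translates into
$$\fbr(V(\F),Q)\cap \pi_\X^{-1}({\cal O}(\mu\delta))
\;=\;\fbr(V(\H),Q)\cap \pi_\X^{-1}({\cal O}(\mu\delta)).$$
For the forward inclusion, pick $\x\in {\cal O}(\mu\delta)\cap U$ and choose $\lambda$ with $(\x,\lambda)\in\Cons(L)$; then $\F(\x,\lambda)=0$, $\x\in Q$ and $\x\notin S$ by the definition of $\Cons(L)$, so the above set equality gives $\H(\x,\lambda)=0$, and reading off the $\X$-component of the normal form shows $\h(\x)=0$, proving that $\x$ lies in ${\cal O}(\mu\delta)\cap \fbr(V(\h),Q)-S$.

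For the reverse inclusion, take $\x$ in ${\cal O}(\mu\delta)\cap\fbr(V(\h),Q)-S$. Because $\mu(\x)\delta(\x)\ne 0$, each rational function $\rho_{i,j}\in \QQ[\X]_{\mu\delta}$ occurring in the $\L$-component of $\H$ is well-defined at $\x$, so I can set $\lambda_{i,j}=\rho_{i,j}(\x)$. By construction the pair $(\x,\lambda)$ kills every polynomial of $\H$, and it lies over $Q$ since $\x\in Q$, hence by ${\sf L_3}$ it also kills $\F$. Together with $\x\notin S$, this places $(\x,\lambda)$ inside $\Cons(L)$, whence $\x=\pi_\X(\x,\lambda)\in U$. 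The only delicate point in this plan, and essentially the only step where care is required, is keeping the ideal $I$ of $Q$ in view when using ${\sf L_3}$: the identity holds in the localization $\QQ[\X,\L]_{\mu\delta}$ only after adjoining $I$, so both directions must be carried out strictly inside the fiber above $Q$, not in $\C^N$ as a whole.
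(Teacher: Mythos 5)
Your proof is correct and follows essentially the same route as the paper's: the second equality via ${\sf L_4}$/${\sf C_2}$ intersected with ${\cal O}(\delta)$, and the first by double inclusion, using ${\sf L_3}$ in one direction to transfer a zero of $\F$ to a zero of $\H$ (hence of $\h$), and in the other direction solving for the $\L$-coordinates from the $\L$-component of $\H$, which is licit since $\mu\delta$ does not vanish. Your closing remark about keeping the ideal $I$ of $Q$ in view is exactly the point the paper handles by noting $U\subset\pi_e^{-1}(Q)$ and checking $\pi_e(\x)\in Q$ before invoking ${\sf L_3}$.
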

\begin{proof} For the first equality, note that $U$ is contained in
  $\pi_e^{-1}(Q)$.  Thus, for $\x$ in
  $\Open(\polmu\poldelta) \cap \pi_e^{-1}(Q)$, we have to prove
  that $\x$ is in $U$ if and only if $\h(\x)=0$ and $\x$ is not in
  $S$.  Suppose that $\x$ is in $U$ and let $\F$ be the sequence of
  polynomials evaluated by $\Gamma$ as in Definition
  \ref{def:GLS}. Thus, there exists $\bell\in\C^{N-n}$ such that
  $\F(\x,\bell)=0$.  Because $\pi_e(\x)$ is in $Q$, and
  $\polmu(\x) \poldelta(\x)$ is not zero, ${\lnf_3}$ implies that
  $(\x,\bell)$ cancels $\H$ and so $\x$ cancels $\h$; besides, by
  definition of $U$, $\x$ is not in $S$. We are done for the first
  inclusion.

  Conversely, suppose that $\x$ cancels $\h$ and does not belong to
  $S$. 
  Since $\polmu(\x)\poldelta(\x)\neq 0$, we can determine $\bell\in\C^{N-n}$
  using the $\L$-component of $\H$, as no denominator vanishes. Then,
  $(\x,\bell)$ is a root of $\H$, and thus (by ${\lnf_3}$) of
  $\F$. Finally, we assumed that $\x$ does not belong to $S$, so
  $(\x,\bell)$ is in $\Cons(L)$, and $\x$ is in $U=\Proj(L)$, as
  claimed.

  To prove the second equality, observe that, through property
  ${\sfC_2}$ of charts, ${\lnf_4}$ implies that
  $ \Open(\polmu) \cap V -S= \Open(\polmu) \cap
  \fbr(V(\h),Q) -S$ and intersect with $\Open(\poldelta)$.
\end{proof}

Next, we relate the Jacobian matrix of the polynomials $\F$ in a
generalized Lagrange system $L=(\Gamma,\scrQ,\scrS)$ and that of the
polynomials $\H$ in a local normal form.

\begin{lemma}\label{lemma:mS}
  Let $L=(\Gamma,\scrQ,\scrS)$ be a generalized Lagrange system, with
  $Q=Z(\scrQ) \subset \C^e$, let $\F$ in $\QQ[\X, \L]$ be the sequence
  of polynomials evaluated by $\Gamma$ as in Definition \ref{def:GLS}
  and let $I$ be the defining ideal of $Q$.
    
  Suppose that $\phi=(\polmu, \poldelta, \h, \H)$ is a local normal
  form for $L$, with $\h$ of cardinality $c$. Then, there exists a
  $(P\times P)$ matrix $\mS$ with entries in
  $\QQ[\X]_{\polmu \poldelta}$, such that
  $\jac(\H,e) = \mS\, \jac(\F,e)$ holds over
  $\QQ[\X,\L]_{\polmu\poldelta}/\langle \F , I\rangle$ and such that
  $\det(\mS)$ divides any $c$-minor of $\jac(\h,e)$ in
  $\QQ[\X,\L]_{\polmu\poldelta}/\langle \F , I\rangle$.
\end{lemma}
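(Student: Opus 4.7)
The plan is to produce $\mS$ from an ideal-theoretic representation of $\H$ in terms of $\F$ and generators of $I$, establish the matrix identity by a differentiation argument, and read off the divisibility from the block structure of $\jac(\H,e)$.

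Let $g_1,\ldots,g_r \in \QQ[X_1,\ldots,X_e]$ be generators of $I$. Property ${\sf L_3}$ together with ${\sf L_2}$ ensures that each $H_j$ admits a representation
$$H_j = \sum_{k=1}^{P} s_{j,k}\, F_k + \sum_{l=1}^{r} t_{j,l}\, g_l, \qquad s_{j,k},\ t_{j,l}\in \QQ[\X,\L]_{\mu\delta},$$
giving a $P \times P$ matrix $\mS_0=(s_{j,k})$. Define $\mS$ by substituting $L_{i,j}\mapsto \rho_{i,j}$ in every entry of $\mS_0$; the result has entries in $\QQ[\X]_{\mu\delta}$, and the entries of $\mS-\mS_0$ lie in the ideal $\langle \brho\rangle$ generated by the $\L$-component of $\H$. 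Since $\langle \brho \rangle \subset \langle \H\rangle \subset \langle \H, I\rangle = \langle \F, I\rangle$ by ${\sf L_3}$, the product $(\mS_0-\mS)\jac(\F,e)$ vanishes modulo $\langle \F, I\rangle$. It therefore suffices to show $\jac(\H,e) \equiv \mS_0\,\jac(\F,e)$ modulo $\langle \F, I\rangle$. Differentiating the expression for $H_j$ with respect to any variable $v\notin\{X_1,\ldots,X_e\}$, the product-rule terms $(\partial s_{j,k}/\partial v)F_k$ and $(\partial t_{j,l}/\partial v)g_l$ both lie in $\langle \F, I\rangle$, while $t_{j,l}(\partial g_l/\partial v)=0$ because $g_l$ involves only $X_1,\ldots,X_e$. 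What remains is precisely $\partial H_j/\partial v \equiv \sum_k s_{j,k}\,(\partial F_k/\partial v)$, which is the claimed identity.

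For the divisibility, exploit the block form of $\jac(\H,e)$ induced by $\H=(h_1,\ldots,h_c,\ (L_{i,j}-\rho_{i,j}))$: ordering columns so that $X_{e+1},\ldots,X_n$ come first and the $\L$-variables last, one has
$$\jac(\H,e)=\begin{pmatrix}\jac(\h,e) & \mathbf{0}\\ \ast & \mathbf{1}_{N-n}\end{pmatrix}.$$
Given any $c$-subset $C_0$ of $\{1,\ldots,n-e\}$, let $C$ be obtained by augmenting $C_0$ with the indices of the $N-n$ columns corresponding to the $\L$-variables; the $P\times P$ submatrix of $\jac(\H,e)$ extracted by $C$ has determinant equal to the $c$-minor of $\jac(\h,e)$ selected by $C_0$. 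Extracting the same column subset from the right-hand side of $\jac(\H,e)\equiv \mS\,\jac(\F,e)$ and taking determinants in $\QQ[\X,\L]_{\mu\delta}/\langle\F,I\rangle$ gives
$$\det(\jac(\h,e)_{C_0}) \ \equiv\ \det(\mS)\cdot \det(\jac(\F,e)_{C}) \pmod{\langle\F,I\rangle},$$
so $\det(\mS)$ divides the chosen $c$-minor in the quotient ring. The delicate point, which rests essentially on ${\sf L_3}$, is that substituting $L_{i,j}\mapsto\rho_{i,j}$ in $\mS_0$ yields a matrix with entries in $\QQ[\X]_{\mu\delta}$ without spoiling the congruence modulo $\langle \F, I\rangle$.
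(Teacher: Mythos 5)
Your proof is correct and follows essentially the same route as the paper's: obtain the matrix from an ideal-theoretic representation of $\H$ in terms of $\F$ and $I$ (your differentiation argument makes explicit the step the paper only asserts), eliminate the $\L$-variables via the $\L$-component of $\H$ to land in $\QQ[\X]_{\mu\delta}$, and read off the divisibility from the block-triangular structure of $\jac(\H,e)$ by augmenting the chosen $c$ columns with all $\L$-columns. No gaps.
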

\begin{proof}
  Since the ideal $I$ is generated by polynomials in
  $\QQ[X_1,\dots,X_e]$, the equality $\langle \H \rangle=\langle \F
  \rangle$ in $\QQ[\X,\L]_{\polmu \poldelta}/I$ implies the existence
  of a $(P\times P)$ matrix $\mS$ with entries in
  $\QQ[\X,\L]_{\polmu\poldelta}/I$ such that $\jac(\H,e)=\mS\,
  \jac(\F,e)$ over $\QQ[\X,\L]_{\polmu\poldelta}/\langle \F ,
  I\rangle$. We can use the $\L$-component of $\H$ to eliminate all
  $\L$ variables appearing in $\mS$, so as to take all entries of
  $\mS$ in $\QQ[\X]_{\polmu\poldelta}$; this maintains equality modulo
  $\langle \F , I\rangle$, so the first point is proved.

  Let then $m'$ be a $c$-minor of $\jac(\h,e)$, and let $\mm'$ be the
  corresponding $(c\times c)$ submatrix of $\jac(\h,e)$. We can embed
  $\mm'$ into a unique $(P \times P)$ submatrix $\mM'$ of
  $\jac(\H,e)$, by adjoining to it all rows corresponding to the
  $\L$-component of $\H$, and all columns corresponding to the $\L$
  variables. Due to \new{the} block structure of $\H$, and thus of
  $\jac(\H,e)$, we have that $\det(\mM')=\det(\mm')=m'$.

  Let finally $\mM''$ \new{be} the $(P\times P)$ submatrix of
  $\jac(\F, e)$ obtained by selecting the same columns as those for
  $\mM'$. From the equality $\jac(\H,e)=\mS\, \jac(\F,e)$, we obtain
  $\mM'=\mS \,\mM''$ over
  $\QQ[\X,\L]_{\polmu\poldelta}/\langle \F, I\rangle$. We deduce that
  the determinant of $\mS$ divides that of $\mM'$, which is $m'$, in
  $\QQ[\X,\L]_{\polmu\poldelta}/\langle \F, I\rangle$.
\end{proof}

\begin{corollary}\label{coro:rkF}
  Let $L=(\Gamma,\scrQ,\scrS)$ be a generalized Lagrange system, with
  $U=\Proj(L)$, $Q=Z(\mathscr{Q})$ and $S=Z(\mathscr{S})$ and $\F$ in
  $\QQ[\X, \L]$ as in Definition \ref{def:GLS}.

  Suppose that $\phi=(\polmu, \poldelta, \h, \H)$ is a local normal
  form for $L$.  For $\x$ in $\Open(\polmu \poldelta) \cap U$, and for
  all $\bell$ such that $(\x,\bell)$ is in $\Cons(L)$, the Jacobian
  matrix $\jac(\F,e)$ has full rank $P$ at $(\x,\bell)$.
\end{corollary}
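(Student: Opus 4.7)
The plan is to combine Lemma~\ref{lemma:mS} (relating $\jac(\H,e)$ and $\jac(\F,e)$ via a matrix $\mS$) with the chart property ${\sf C_4}$ of $(\mu,\h)$ and the block-triangular structure of $\jac(\H,e)$ coming from normal form. The main point is that on the open set ${\cal O}(\mu\delta)$, $\mS$ is invertible at the points of interest, so the two Jacobians have the same rank, and the rank of $\jac(\H,e)$ is easy to read off.

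First I would fix $\x$ in ${\cal O}(\mu\delta)\cap U$ and $\bell$ such that $(\x,\bell)$ is in $\Cons(L)$. By Lemma~\ref{lemma:conseq0}, $\x$ lies in ${\cal O}(\mu\delta)\cap V-S \subset {\cal O}(\mu)\cap V-S$, so property ${\sf C_4}$ applied to the chart $\psi=(\mu,\h)$ associated with $\phi$ (which is a chart by ${\sf L_4}$) implies that $\jac(\h,e)$ has full rank $c$ at $\x$. Hence some $c$-minor $m'$ of $\jac(\h,e)$ is non-zero at $\x$.

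Next I would apply Lemma~\ref{lemma:mS}, which produces a $(P\times P)$ matrix $\mS$ with entries in $\QQ[\X]_{\mu\delta}$ such that $\jac(\H,e)=\mS\,\jac(\F,e)$ in $\QQ[\X,\L]_{\mu\delta}/\langle \F,I\rangle$, and such that $\det(\mS)$ divides $m'$ in that quotient ring. Since $(\x,\bell)$ cancels $\F$ and $\x$ lies over $Q=Z(I)$, these relations specialize at $(\x,\bell)$. As $m'(\x)\neq 0$, we conclude that $\det(\mS)(\x)\neq 0$, so $\mS$ is invertible at $(\x,\bell)$, and therefore
\[
\rank\,\jac_{(\x,\bell)}(\F,e)=\rank\,\jac_{(\x,\bell)}(\H,e).
\]

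Finally I would compute the rank on the $\H$-side using its normal form. Writing $\H=(\h,\,(L_{i,j}-\rho_{i,j})_{i,j})$ with all $\rho_{i,j}\in \QQ[\X]_{\mu\delta}$, the Jacobian with respect to the variables $X_{e+1},\dots,X_n,\L$ has the block form
\[
\jac(\H,e)=\begin{bmatrix} \jac(\h,e) & \mathbf{0} \\ \star & \mathbf{I}_{N-n}\end{bmatrix},
\]
because differentiating $L_{i,j}-\rho_{i,j}$ with respect to $\L$ gives a row of the identity. Therefore $\rank\,\jac_{(\x,\bell)}(\H,e)=\rank\,\jac_{\x}(\h,e)+(N-n)=c+(N-n)$. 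By condition ${\sf L_2}$, $|\H|=|\F|$, i.e.\ $c+(N-n)=P$, so the common rank equals $P$, which is the claimed full rank. There is no real obstacle here once Lemma~\ref{lemma:mS} is in hand; the only thing to be careful about is that the matrix identity $\jac(\H,e)=\mS\,\jac(\F,e)$ holds only modulo $\langle \F,I\rangle$, which is why we have to evaluate at a point $(\x,\bell)\in \Cons(L)$ lying over $Q$.
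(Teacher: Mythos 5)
Your proof is correct and follows essentially the same route as the paper: pass from $\x\in{\cal O}(\mu\delta)\cap U$ to the chart property ${\sf C_4}$ to get $\jac_\x(\h,e)$ of rank $c$, use the block-triangular normal form to get $\jac_{(\x,\bell)}(\H,e)$ of rank $P$, and transfer to $\jac(\F,e)$ via the matrix $\mS$ of Lemma~\ref{lemma:mS}. The only cosmetic difference is that you establish $\det(\mS)(\x)\neq 0$ via the divisibility clause to get equality of ranks, while the paper simply uses that the rank of the product $\mS\,\jac(\F,e)$ is at most that of $\jac(\F,e)$; both are valid.
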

\begin{proof}
  Let $\x$ and $\bell$ be as in the statement of the corollary and let
  $V=\Clos{(L)}$. Lemma~\ref{lemma:conseq0} implies that
  $\Open(\polmu\poldelta)\cap U$ is contained in
  $\Open(\polmu)\cap V-S$.  Consequently, by \new{property} ${\lnf_4}$
  \new{of local normal forms} and property $\sfC_4$ of charts, the
  Jacobian matrix $\jac(\h,e)$ has full rank $c$ at $\x$; this easily
  implies that the matrix $\jac(\H,e)$ has full rank $P$ at
  $(\x,\bell)$. Because $(\x,\bell)$ is in $V(\F,I)$,
  Lemma~\ref{lemma:mS} above implies that the equality
  $\jac(\H,e) = \mS\, \jac(\F,e)$ holds at $(\x,\bell)$. Thus,
  $\jac(\F,e)$ has full rank $P$ at $(\x,\bell)$.
\end{proof}

\begin{proof}[of Proposition~\ref{sec:lagrange:propertyP}]
  Let $U=\Proj(L)$, $V=\Clos{(L)}$, $Q=Z(\scrQ)$ and $S=Z(\scrS)$; let
  further $\bphi=(\phi_i)_{1\leq i \leq s}$ with $\phi_i=(\polmu_i,
  \poldelta_i, \h_i, \H_i)$ be a global normal form of $L$ and
  $(\x,\bell)$ be in $\Cons(L)$, so that $\x$ is in
  $U=\Proj(L)$. Since $U\subset V-S$, property ${\gnf_2}$ of global
  normal forms implies that there exists $i$ such that $\x$ is in
  $\Open(\polmu_i)$.  By ${\lnf_5}$, $\x$ is in $\Open(\polmu_i
  \poldelta_i) \cap U$, and Corollary~\ref{coro:rkF} implies that
  $\jac(\F, e)$ has full rank $P$ at $(\x, \bell)$. We have proved the
  first point.

  Next, we prove that the restriction $\pi_\X: \Cons(L) \to \Proj(L)$
  is a bijection. By construction, we know that it is onto, so we have
  to prove that it is injective. Let thus $\x$ be in $U$. As we saw
  above, since $\bphi$ is a global normal form, there exists
  $i\in\{1,\dots, s\}$ such that $\x$ is in
  $\Open(\polmu_i\poldelta_i)\cap U$. If $\bell \in \C^{N-n}$ is such
  that $(\x,\bell)$ is in $\Cons(L)$, then $(\x,\bell)$ cancels
  $\langle \F,I\rangle$, so by ${\lnf_3}$, it cancels
  $\langle \H_i,I \rangle$.  As a result, the value of $\bell$ is
  uniquely determined, as it is obtained by evaluating the
  ${\bf L}$-component of $\H_i$ at $\x$.
\end{proof}

Using this result, we exhibit the relationships between the sets
$\Cons(L)$, $\Proj(L)$ and $\Clos{(L)}$ associated to a generalized
Lagrange system $L$, and the set $\freg(\F,Q)$ defined in
Subsection~\ref{chap:prelim:sec:definitions}, where $\F$ and $Q$ are
as in Definition \ref{def:GLS}. These claims will be used in
Section~\ref{proof:solvelagrange:prop:basicsolve}.

\begin{lemma}\label{lemma:degree:clos}
  Let $L=(\Gamma,\scrQ,\scrS)$ be a generalized Lagrange system, with
  $Q=Z(\scrQ)$, $S=Z(\scrS)$, $\F$ in $\QQ[\X, \L]$ and $d=N-e-P$ as
  in Definition \ref{def:GLS}. Let further $Y=\freg(\F,Q) \subset
  \C^N$. If $L$ has the global normal form property, the following
  holds:
  $$\Cons(L)=Y-\pi_\X^{-1}(S),  \quad
  \Proj(L)=\pi_\X(Y-\pi_\X^{-1}(S)).$$ In addition, $Y$,
  $\Cons(L)$ and $\Clos{(L)}$ are $d$-equidimensional.
\end{lemma}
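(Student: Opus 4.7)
The plan is to establish the set-theoretic equalities first, and then to deduce the three equidimensionality statements from them by combining Lemma~\ref{sec:lagrange:propertyP} with the Jacobian criterion and the theorem on the dimension of fibers.

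First, I would prove $\Cons(L)=Y-\pi_\X^{-1}(S)$ by a double inclusion. The inclusion $Y-\pi_\X^{-1}(S)\subset\Cons(L)$ is immediate from the definitions: by construction $\freg(\F,Q)$ is a subset of $\fbr(V(\F),Q)$, and $\Cons(L)=\fbr(V(\F),Q)-\pi_\X^{-1}(S)$. For the reverse inclusion, I invoke the first item of Lemma~\ref{sec:lagrange:propertyP} (which uses the global normal form hypothesis): at every point of $\Cons(L)$, $\jac(\F,e)$ has full rank $P$, so each point of $\Cons(L)$ lies in $\oreg(\F,Q)$, and therefore in its Zariski closure $Y$. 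Since points of $\Cons(L)$ also avoid $\pi_\X^{-1}(S)$, this gives $\Cons(L)\subset Y-\pi_\X^{-1}(S)$. The two remaining equalities $\Proj(L)=\pi_\X(Y-\pi_\X^{-1}(S))$ and $\Clos(L)=\overline{\pi_\X(Y-\pi_\X^{-1}(S))}$ then follow immediately by applying $\pi_\X$ and then taking Zariski closure.

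Next, I would address the equidimensionality of $Y$. Since $Q\subset\C^e$ is a finite set defined over $\QQ$ (and hence smooth, so its defining ideal is generated by $e$ polynomials whose Jacobian has full rank $e$ at every point of $Q$), the union of these $e$ polynomials with $\F$ forms a system of $e+P$ equations in $N$ variables whose Jacobian has full rank $e+P$ at every point of $\oreg(\F,Q)$. Applying Lemma~\ref{lemma:prelim:locallyclosed} (the Jacobian criterion form) shows that $Y=\freg(\F,Q)$ is either empty or $(N-e-P)=d$-equidimensional. From this, $\Cons(L)=Y-\pi_\X^{-1}(S)$ is an open subset of $Y$, so every irreducible component of its Zariski closure is an irreducible component of $Y$ not entirely contained in $\pi_\X^{-1}(S)$, hence has dimension $d$; so $\Cons(L)$ is $d$-equidimensional (or empty).

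The main obstacle, and the step that really uses the second item of Lemma~\ref{sec:lagrange:propertyP}, is showing that $\Clos(L)$ is $d$-equidimensional. Let $Y_1,\dots,Y_s$ be those irreducible components of $Y$ not contained in $\pi_\X^{-1}(S)$; each is $d$-dimensional. For each $i$, the set $Y_i^\circ=Y_i-\pi_\X^{-1}(S)$ is a non-empty open dense subset of $Y_i$, and the restriction of $\pi_\X$ to $\Cons(L)\supset Y_i^\circ$ is injective by Lemma~\ref{sec:lagrange:propertyP}. Writing $T_i=\overline{\pi_\X(Y_i^\circ)}$, the morphism $\pi_\X\colon Y_i\to T_i$ is dominant, and since it is injective on the dense subset $Y_i^\circ$, its generic fiber has exactly one point. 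The theorem on dimensions of fibers then yields $\dim T_i=\dim Y_i=d$. Taking unions gives $\Clos(L)=\overline{\pi_\X(Y-\pi_\X^{-1}(S))}=T_1\cup\cdots\cup T_s$, which is $d$-equidimensional. The delicate point here, which I expect to be the trickiest to write up carefully, is justifying that injectivity on the dense locally closed subset $Y_i^\circ$ forces the generic fiber of the extended morphism $Y_i\to T_i$ to be a single point, so that the dimension bound can be applied in both directions.
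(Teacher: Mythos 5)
Your proof is correct and follows essentially the same route as the paper: the set equality via the full-rank statement of Lemma~\ref{sec:lagrange:propertyP}, the Jacobian criterion for the equidimensionality of $Y$, and the injectivity of $\pi_\X$ on $\Cons(L)$ combined with the fiber-dimension theorem to force $\dim T_i = d$ (the paper packages this last step as an application of its Lemma~\ref{lemma:5.2.5} to the components of $\overline{\Cons(L)}$ rather than of $Y$, which is the same thing). The only point you leave dangling is non-emptiness: the lemma asserts the sets \emph{are} $d$-equidimensional, and the paper dispatches this by noting that the global normal form property forces $\Clos(L)$, hence $Y$, to be non-empty.
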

\begin{proof}
  Using Proposition~\ref{sec:lagrange:propertyP}, we know that $\jac(\F,e)$
  has maximal rank at any point of
  $\Cons(L)=\fbr(V(\F),Q)-\pi_\X^{-1}(S)$; this implies that
  $\Cons(L)=Y-\pi_\X^{-1}(S)$.  The last equality
  is straightforward from the fact that $\Proj(L)=\pi_\X(\Cons(L))$. 

  As was mentioned in Subsection~\ref{chap:prelim:sec:definitions}, the
  Jacobian criterion shows that $Y$ is either empty or
  $d$-equidimensional.  By the global normal form property, $\Clos{(L)}$
  is not empty, so neither is $Y$; thus, $\Cons(L)$ is
  $d$-equidimensional as well (in the sense that its Zariski closure
  is) and the only missing part is the fact that $\Clos{(L)}$ is
  $d$-equidimensional.

  This will follow from the second item in
  Proposition~\ref{sec:lagrange:propertyP}, which states that the
  projection $\Cons(L) \to \Proj(L)$ is one-to-one. Let indeed
  $\closedZ$ be the Zariski closure of $\Cons(L)$, and let
  $\closedZ=\cup_{1\leq i \le s}\closedZ_i$ be its decomposition into
  irreducible; we saw above that all $\closedZ_i$ have dimension $d$.

  For $i$ in $\{1,\dots,s\}$, define $\lcswi=\Cons(L) \cap
  \closedZ_i$; each $\lcswi$ is a locally closed set, with Zariski
  closure $\closedZ_i$, and their union is equal to $\Cons(L)$. This
  in turn implies that $\Proj(L)$ is the union of the sets
  $\pi_\X(\lcswi)$. Denoting by $V_i$ the Zariski closure of
  $\pi_\X(\lcswi)$, this also implies that $\Clos{(L)}$ is the union of
  $\cup_{i=1}^s V_i$.

  Because the Zariski closure $V_i$ of $\pi_\X(\lcswi)$ coincides
  with that of $\pi_\X(\closedZ_i)$, it must be irreducible. The
  inequality $\dim(V_i) \le d$ clearly holds for all $i$; on the other
  hand, by Proposition~\ref{sec:lagrange:propertyP}, the fibers of the
  restriction of $\pi_\X$ are all finite, so Lemma~\ref{lemma:5.2.5}
  implies that $d \le \dim(V_i)$ holds as well for all $i$. This
  implies that $\Clos{(L)}$ is $d$-equidimensional, as claimed.
\end{proof}



\section{Proof of Proposition~\ref{sec:lagrange:prop:transfer:polar}}\label{chapter:constructionspolar}

This section is devoted to the proof of
Proposition~\ref{sec:lagrange:prop:transfer:polar}, whose statement is
as follows: {\em Let $Q \subset \C^e$ be a finite set and let $V
  \subset \C^n$ and $S\subset \C^n$ be algebraic sets lying over $Q$,
  with $S$ finite.  Suppose that $V$ is equidimensional of dimension
  $d$, with finitely many singular points.
  
  Let $\bpsi$ be an atlas of $(V,Q,S)$, let $\dalgo$ be an integer in
  $\{2,\dots,d\}$ such that $\dalgo \le (d+3)/2$, and let
  $\mA\in\GL(n,e)$ be in the open set
  $\scrGpolar(\bpsi,V,Q,S,\dalgo)$ defined in
  Proposition~\ref{prop:ch4}; write
  $W=\polar(e,\dalgo,V^\mA)$.

  Let $L=(\Gamma,\scrQ,\scrS)$ be a generalized Lagrange system such
  that $V=\Clos{(L)}$, $Q=\Zeroes(\scrQ)$ and $S=\Zeroes(\scrS)$.  Let
  $\scrY=(Y_1,\dots,Y_r)$ be algebraic sets in $\C^n$ and let finally
  $\bphi$ be a global normal form for $(L; (W^{\mA^{-1}}, \scrY))$ such
  that $\bpsi$ is the associated atlas of $(V,Q,S)$.

  There exists a non-empty Zariski open set
  $\mathscr{I}(L,\bphi,\mA,\scrY)\subset \C^{P}$ such that for all
  $\u$ in $\mathscr{I}(L,\bphi,\mA,\scrY) \cap \QQ^P$, the following holds:
  \begin{itemize}
  \item $\Polarlag(L^\mA, \u, \dalgo)$ is a generalized Lagrange system
    that defines $W$;
    \smallskip
  \item If $W$ is not empty, then $(\Polarlag(L^\mA, \u, \dalgo);
    \scrY^\mA)$ admits a global normal form whose atlas is $\atlaspolar(\bpsi^\mA,V^\mA,Q,S^\mA,\dalgo)$ (Definition
    \ref{sec:atlas:notation:polar}).
  \end{itemize}
}


\subsection{Local analysis}\label{sssec:lag}

First, we deal with local normal forms. In order to prepare for the
proof of the main proposition in the next subsection, we introduce
here extra statements related to a new set of points $\scrX$.

\begin{proposition}\label{prop:local}
  Let $Q \subset \C^e$ be a finite set and let $V \subset \C^n$ and
  $S\subset \C^n$ be algebraic sets lying over $Q$, with $S$ finite.
  Suppose that $V$ is equidimensional of dimension $d$, with finitely
  many singular points.

  Let $L=(\Gamma,\scrQ,\scrS)$ be a generalized Lagrange system of
  type $(k,\n,\p,e)$ that defines $V$, with $Q=\Zeroes(\scrQ)$ and
  $S=\Zeroes(\scrS)$; write $\n=(n,n_1,\dots,n_k)$. Let
  $\phi=(\polmu,\poldelta,\h,\H)$ be a local normal form for $L$ and let
  $\psi=(\polmu,\h)$ be the associated chart of $(V,Q,S)$; write
  $\h=(h_1,\dots,h_c)$ and
  $$\H=\left ( h_1,\dots,h_c,
    (L_{1,j}-\rho_{1,j})_{j=1,\dots,n_1},\dots,
    (L_{k,j}-\rho_{k,j})_{j=1,\dots,n_k}\right ).$$
  Let $\dalgo$ be an integer in $\{2,\dots,d\}$, such that
  $\dalgo \le (d+3)/2$, let $\mA \in \GL(n,e)$ be in the open set
  $\scrGpolarchart(\psi,V,Q,S,\dalgo)$ defined in
  Lemma~\ref{sec:chart:lemma:polarchart} and let
  $W=\polar(e,\dalgo,V^\mA)$.

  Let $m'$ and $m''$ be respectively a $c$-minor of $\jac(\h^\mA,e)$
  and a $(c-1)$-minor of $\jac(\h^\mA,e+\dalgo)$ and let
  $(\polmu',\h')=\chartpolar(\psi^\mA,m',m'')$ be as in
  Definition~\ref{sec:chart:notation:polar}, with in particular
  $\polmu'=\polmu^\mA m' m''$. Suppose that the following holds:
  \begin{itemize}
  \item for each irreducible component $Z$ of $W^{\mA^{-1}}$ such that
    $\Open(\polmu)\cap Z-S$ is not empty,
    $\Open(\polmu \poldelta)\cap Z-S$ is not empty;
\smallskip
  \item $\Open(\polmu')\cap W-S^\mA$ is not empty.
  \end{itemize}
  Finally, let $\scrX$ be a finite subset of
  $\Open(\polmu'\poldelta^\mA)\cap V^\mA-S^\mA$.  Then, there exists a
  non-empty Zariski open set
  $\mathscr{I}(L,\phi,\mA,m',m'',\scrX) \subset \C^P$ such that for
  $\u$ in $\mathscr{I}(L,\phi,\mA,m',m'',\scrX) \cap \QQ^P$, the
  following holds:
  \begin{itemize}
  \item There exists a non-zero polynomial $\poldelta'_\u$ in $\QQ[\X]$
    and $(\rho_{k+1,j,\u})_{1 \le j \le P}$ in $\QQ[\X]_{\polmu'
      \poldelta'_\u}$, such that, writing
    $$
      \H'_\u =  \left ( \h',(L_{1,j}-\rho^\mA_{1,j})_{1 \le j \le
    n_1},\dots, (L_{k,j}-\rho^\mA_{k,j})_{1 \le j \le
    n_k},(L_{k+1,j}-\rho_{k+1,j,\u})_{1 \le j \le P}\right ) 
$$
    $\phi'_\u= (\polmu', \poldelta'_\u, \h', \H'_\u)$ is a local normal form for
    $\Polarlag(L^\mA,\u,\dalgo)$;
\smallskip
  \item $\poldelta'_\u$ vanishes nowhere on $\scrX$;
\smallskip
  \item the sets $\Open(\polmu')\cap \Clos{(\Polarlag(L^{\mA}, \u, \dalgo))}
    -S^\mA$ and $\Open(\polmu')\cap W-S^\mA$ coincide.
  \end{itemize}
\end{proposition}
\noindent The proof of this proposition will occupy this subsection;
we freely use all notation introduced in the proposition. We start by
proving that the localization $\QQ[\X]_{\polmu'\poldelta^\mA}$ is well-defined.

\begin{lemma}\label{lemma:GLS:nonzero}
  The polynomial $\polmu' \poldelta^\mA$ is non-zero.
\end{lemma}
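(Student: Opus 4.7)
The plan is to observe that $\mu'\delta^{\mA}$ is a product of several simpler polynomials, and then check one by one that each factor is non-zero; since $\QQ[\X]$ is an integral domain, this will suffice.

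First I would unwind the definitions: $\mu' = \mu^{\mA}\, m'\, m''$ by the construction of $\mathcal{W}(\psi^{\mA},m',m'')$ in Definition~\ref{sec:chart:notation:polar}, so
\[
\mu' \delta^{\mA} \;=\; \mu^{\mA}\, m'\, m''\, \delta^{\mA}.
\]
By property ${\sf L_1}$ applied to the local normal form $\phi$, both $\mu$ and $\delta$ are non-zero elements of $\QQ[\X]$. Since $\mA \in \GL(n,e)$ is invertible, the substitution $f \mapsto f^{\mA}$ is a $\QQ$-algebra automorphism of $\QQ[\X]$; in particular it sends non-zero polynomials to non-zero polynomials, so both $\mu^{\mA}$ and $\delta^{\mA}$ are non-zero.

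Next I would argue that $m'$ and $m''$ are non-zero polynomials. This is where the second bulleted hypothesis enters: we are assuming that ${\cal O}(\mu') \cap W - S^{\mA}$ is non-empty. In particular ${\cal O}(\mu')$ is non-empty, which means $\mu' = \mu^{\mA} m' m''$ cannot be the zero polynomial, and so neither of its factors $m'$ and $m''$ can vanish identically.

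Combining, the four factors $\mu^{\mA}$, $m'$, $m''$, $\delta^{\mA}$ all lie in the integral domain $\QQ[\X]\setminus\{0\}$, so their product $\mu'\delta^{\mA}$ is non-zero. The argument is very short; the only subtlety worth double-checking is that the hypothesis ``${\cal O}(\mu') \cap W - S^{\mA}$ is non-empty'' truly forces $m'$ and $m''$ to be non-zero polynomials (as opposed to merely guaranteeing that some point satisfies $m'(\x)m''(\x) \ne 0$), but these two statements coincide once we know the open set is non-empty. No appeal to the assumption on $\mathscr{G}(\psi,V,Q,S,\dalgo)$ or to the genericity of $\mA$ is needed for this lemma.
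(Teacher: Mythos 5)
Your argument is correct and matches the paper's proof: $\delta^\mA$ is non-zero by ${\sf L_1}$ (applied to $\delta$, with invertibility of $\mA$), and $\mu'$ is non-zero because ${\cal O}(\mu')\cap W-S^\mA$ is assumed non-empty. The detour through the individual factors $m'$ and $m''$ is harmless but unnecessary, since the non-emptiness of ${\cal O}(\mu')$ already gives $\mu'\ne 0$ directly.
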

\begin{proof}
  By ${\lnf_1}$ applied to $L$, the polynomial $\poldelta$ (and thus
  $\poldelta^\mA$) is non-zero. Since we assume that $\Open(\polmu')\cap
  W-S^\mA$ is not empty, $\polmu'$ is non-zero.
\end{proof}

First, we deal with the Lagrange system associated with $\H^\mA$. In
all that follows, we recall that we write $c=|\h|$ and that the
notation $\lag$ is from Definition~\ref{def:LS}.

\begin{lemma}\label{lemma:GLS:lagH}
  Let $\iota$ be the index of the row of $\jac(\h^\mA,e+\dalgo)$ that
  does not belong to $m''$. There exist rational functions
  $(\rho^\star_{k+1,j})_{j=1,\dots,c,j\ne \iota}$ in
  $ \QQ[\X]_{\polmu'\poldelta^\mA}$ such that in
  $\QQ[\X,\L']_{\polmu' \poldelta^\mA}$, the ideal
  $\langle \H^\mA, \lag(\H^\mA,e+\dalgo,\L_{k+1}) \rangle$ coincides
  with the ideal
$$\left \langle
\begin{array}{l}
 \h^\mA,\ (L_{1,j}-\rho^\mA_{1,j})_{1 \le j \le n_1},\ \dots,\ (L_{k,j}-\rho^\mA_{k,j})_{1 \le j \le  n_k},\\[1mm]
 M_1 L_{k+1,\iota},\dots,M_{n-e-c-\dalgo+1} L_{k+1,\iota},\quad\\[1mm]
  (L_{k+1,j}-\rho^\star_{k+1,j} L_{k+1,\iota})_{j \ne \iota},\quad  L_{k+1,c+1},\dots,L_{k+1,P}
\end{array}\right \rangle,$$
   where $M_1,\dots, M_{n-e-c-\dalgo+1}$ are the $c$-minors of
  $\jac(\h^\mA,e+{\dalgo})$ obtained by successively adding the missing
  row and the missing columns of $\jac(\h^\mA,e+\dalgo)$ to $m''$.
\end{lemma}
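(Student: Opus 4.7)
The strategy is to exploit the block structure of the Jacobian $\jac(\H^\mA, e+\dalgo)$ coming from the normal form of $\H^\mA$, and then invoke Proposition~\ref{lemma:linearsolve} to reduce the resulting Lagrange system to the required form.

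First I would write $\jac(\H^\mA, e+\dalgo)$ in block form. Since $\h^\mA$ depends only on $\X$ and the remaining rows of $\H^\mA$ have the shape $L_{i,j} - \rho^\mA_{i,j}$, ordering the columns so that the $\X$-variables come first yields
$$\jac(\H^\mA, e+\dalgo) = \begin{bmatrix} \jac(\h^\mA, e+\dalgo) & \mathbf{0} \\ \mathbf{D} & I_{N-n} \end{bmatrix},$$
where $\mathbf{D}$ is an $(N-n) \times (n-e-\dalgo)$ matrix with entries $-\partial \rho^\mA_{i,j}/\partial X_\ell$ in $\QQ[\X]_{\mu^\mA \delta^\mA}$ and $I_{N-n}$ is the identity block on the $\L$-columns.

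Next I would read off $\lag(\H^\mA, e+\dalgo, \L_{k+1})$ from this decomposition. Splitting $\L_{k+1}$ into $\L'_{k+1} = (L_{k+1,1}, \ldots, L_{k+1,c})$ and $(L_{k+1,c+1}, \ldots, L_{k+1,P})$ to match the two row blocks, the identity block produces the equations $L_{k+1,c+1} = \cdots = L_{k+1,P} = 0$ directly in the $\L$-columns, and modulo these the $n-e-\dalgo$ equations coming from the $\X$-columns become exactly $\lag(\h^\mA, e+\dalgo, \L'_{k+1})$. Since $\mu' \delta^\mA$ is non-zero by Lemma~\ref{lemma:GLS:nonzero}, the localization $\QQ[\X, \L']_{\mu' \delta^\mA}$ is well-defined, and the ideal $\langle \H^\mA, \lag(\H^\mA, e+\dalgo, \L_{k+1}) \rangle$ coincides inside it with
$$\left\langle \h^\mA,\ (L_{i,j} - \rho^\mA_{i,j})_{1 \le i \le k,\, 1 \le j \le n_i},\ \lag(\h^\mA, e+\dalgo, \L'_{k+1}),\ L_{k+1,c+1}, \ldots, L_{k+1,P} \right\rangle.$$

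Finally, since $\mu' = \mu^\mA m' m''$ is divisible by $m''$, the minor $m''$ is a unit in $\QQ[\X, \L']_{\mu' \delta^\mA}$, so I would apply Proposition~\ref{lemma:linearsolve} to the sequence $\h^\mA$ (of cardinality $c$), the Lagrange system $\lag(\h^\mA, e+\dalgo, \L'_{k+1})$, and the $(c-1)$-minor $m''$ of $\jac(\h^\mA, e+\dalgo)$. This produces rational functions $(\rho^\star_{k+1,j})_{j = 1, \ldots, c,\, j \ne \iota}$ in $\QQ[\X]_{m''} \subset \QQ[\X]_{\mu' \delta^\mA}$ and rewrites $\langle \h^\mA, \lag(\h^\mA, e+\dalgo, \L'_{k+1}) \rangle$ as $\langle \h^\mA,\ L_{k+1,\iota} \scrH(\h^\mA, e+\dalgo, m''),\ (L_{k+1,j} - \rho^\star_{k+1,j} L_{k+1,\iota})_{j \ne \iota} \rangle$. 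Since the entries of $\scrH(\h^\mA, e+\dalgo, m'')$ are precisely the $c$-minors $M_1, \ldots, M_{n-e-c-\dalgo+1}$ described in the statement, substituting back yields the announced ideal presentation. There is no real obstacle in this argument: everything reduces to bookkeeping the block structure and a single invocation of Proposition~\ref{lemma:linearsolve}; the only subtlety is tracking the compatibility of the localizations, which is immediate from $m'' \mid \mu'$.
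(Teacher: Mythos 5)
Your proof is correct and follows essentially the same route as the paper: you exploit the block structure of $\jac(\H^\mA,e+\dalgo)$ (the identity block on the $\L$-columns forcing $L_{k+1,c+1}=\cdots=L_{k+1,P}=0$ and reducing the remaining equations to $\lag(\h^\mA,e+\dalgo,\L'_{k+1})$), then invoke Proposition~\ref{lemma:linearsolve} with the minor $m''$. The only detail the paper makes explicit that you omit is the verification that $e+\dalgo\le n-c$ (via $d=n-e-c$ from Lemma~\ref{lemma:lag:local:card}), which is what licenses the application of Proposition~\ref{lemma:linearsolve}; this is immediate from the standing hypothesis $\dalgo\le d$.
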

\begin{proof}
  The proof is in two steps. First, due to the special form of the
  polynomials $\H^\mA$, we show that the Lagrange system associated
  with these polynomials can be rewritten in a very simple manner in
  terms of the Lagrange system of $\h^\mA$.  Recall that $\H^\mA$
  takes the form $\H^\mA=\h^\mA, (L_{i,j}-\rho_{i,j}^\mA)_{1 \le i \le
    k, 1 \le j \le n_i}$. For $i$ in $\{1,\dots,k\}$ and $j$ in
  $\{1,\dots,n_j\}$, let us consider the column of $\jac(\H^\mA,e+\dalgo)$
  corresponding to derivatives with respect to $L_{i,j}$. The gradient
  row of the equation $L_{i,j}-\rho^\mA_{i,j}$ has a $1$ at the entry
  corresponding to this column, and this is the only equation giving a
  non-zero entry in this column. As a result, the equation
  $L_{k+1,u}=0$ appears in the Lagrange system, where $u$ is the index
  in $\{c+1,\dots,P\}$ of the equation $L_{i,j}-\rho^\mA_{i,j}$.  This
  proves that in $\QQ[\X,\L']_{\polmu' \poldelta^\mA}$, the ideal $\langle
  \H^\mA, \lag(\H^\mA,e+\dalgo,\L_{k+1}) \rangle$ is the ideal generated
  by
  $$\left \langle \H^\mA,\ \lag(\h^\mA,e+\dalgo,\ [L_{k+1,1},\dots,L_{k+1,c}]), \ L_{k+1,c+1},\dots,L_{k+1,P} \right \rangle.$$ 
  
  Lemma~\ref{sec:lemma:singS} shows that $d=n-e-c$, so inequality
  $\dalgo \le d$ can be restated as $e+\dalgo \le n-c$.  Thus, since
  we also have $m'' \ne 0$ (since $\polmu'\ne 0$), the assumption of
  Proposition~\ref{lemma:linearsolve} are satisfied. This proposition
  implies that there exist rational functions
  $(\rho^\star_{k+1,j})_{j=1,\dots,c,j\ne \iota}$ in
  $ \QQ[\X]_{\polmu'\poldelta^\mA}$ such that in
  $\QQ[\X,\L']_{\polmu' \poldelta^\mA}$, the ideal
  $\langle \h^\mA, \lag(\h^\mA,e+\dalgo,[L_{k+1,1},\dots,L_{k+1,c}])
  \rangle$ is the ideal generated by
  $$\left \langle \h^\mA,\,M_1 L_{k+1,\iota},\dots,M_{n-e-c-\dalgo+1}
  L_{k+1,\iota},\, (L_{k+1,j}-\rho^\star_{k+1,j} L_{k+1,\iota})_{j \ne
    \iota}\right \rangle,$$ where $M_1,\dots, M_{n-e-c-\dalgo+1}$ are the
  $c$-minors of $\jac(\h^\mA,e+{\dalgo})$ obtained by successively adding
  the missing row and the missing columns of $\jac(\h^\mA,e+\dalgo)$ to
  $m''$. This finishes the proof of the lemma.
\end{proof}

As before, call $\F$ the polynomials computed by $\Gamma$. We can now
use the relationship between $\H^\mA$ and $\F^\mA$ in order to rewrite
the Lagrange system of $\F^\mA$.

Let $I$ be the defining ideal of $Q$. From Lemma~\ref{lemma:mS}, we
know that there exists a $(P\times P)$ matrix $\mS$ with entries in
$\QQ[\X]_{\polmu^\mA \poldelta^\mA}$, such that
$\jac(\H^\mA,e) = \mS\, \jac(\F^\mA,e)$ holds over
$\QQ[\X,\L]_{\polmu^\mA \poldelta^\mA}/\langle \F^\mA, I\rangle$ and
such that $\det(\mS)$ divides $m'$ in
$\QQ[\X,\L]_{\polmu^\mA\poldelta^\mA}/\langle \F^\mA, I\rangle$.
Since $\polmu^\mA$ divides $\polmu'$, all previous equalities carry
over to $\QQ[\X,\L]_{\polmu'\poldelta^\mA}/\langle \F^\mA, I\rangle$.

\begin{lemma}
  There exists a matrix $\mT$ with entries in
  $\QQ[\X]_{\polmu'\poldelta^\mA}$ such that the product $\mT\, \mS$
  computed over
  $\QQ[\X,\L]_{\polmu'\poldelta^\mA}/\langle \F^\mA, I\rangle$ is the
  identity matrix.
\end{lemma}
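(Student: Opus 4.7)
The plan is to invoke Lemma~\ref{lemma:mS} to produce the required inverse from the adjugate of $\mS$, after showing that $\det(\mS)$ becomes invertible once we invert $\mu'$. Write $R = \QQ[\X,\L]_{\mu'\delta^\mA}/\langle \F^\mA, I\rangle$ for the target quotient ring. Lemma~\ref{lemma:mS}, applied to $L^\mA$ with the local normal form $\phi^\mA$, produces $\mS$ with entries in $\QQ[\X]_{\mu^\mA \delta^\mA}$ and guarantees that $\det(\mS)$ divides every $c$-minor of $\jac(\h^\mA,e)$ in $\QQ[\X,\L]_{\mu^\mA \delta^\mA}/\langle \F^\mA, I\rangle$. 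In particular, $\det(\mS)$ divides the specific minor $m'$, so there exists $g \in \QQ[\X,\L]_{\mu^\mA \delta^\mA}/\langle \F^\mA, I\rangle$ with $m' \equiv g\,\det(\mS) \pmod{\langle \F^\mA, I\rangle}$.

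The next step is to massage $g$ into a polynomial in $\QQ[\X]$. By ${\sf L_3}$ for $\phi$, the ideals $\langle \F^\mA, I\rangle$ and $\langle \H^\mA, I\rangle$ agree in $\QQ[\X,\L]_{\mu^\mA \delta^\mA}$, so we may reduce $g$ modulo the $\L$-component of $\H^\mA$ (each $L_{i,j}$ is set to $\rho^\mA_{i,j}$) to obtain a representative $\tilde g \in \QQ[\X]_{\mu^\mA \delta^\mA}$ with $g \equiv \tilde g \pmod{\langle \F^\mA, I\rangle}$. Thus $m' \equiv \tilde g\, \det(\mS) \pmod{\langle \F^\mA, I\rangle}$, and this relation also holds after localizing further at~$\mu'$.

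Now $m'$ itself is a factor of $\mu' = \mu^\mA m' m''$, hence a unit in $\QQ[\X]_{\mu'\delta^\mA}$, and we may define
\[
\mT \;=\; \frac{\tilde g}{m'} \cdot \operatorname{adj}(\mS),
\]
whose entries lie in $\QQ[\X]_{\mu'\delta^\mA}$ because both $\tilde g$ and the entries of $\operatorname{adj}(\mS)$ do, and because $m'$ is invertible there. The cofactor identity $\operatorname{adj}(\mS)\,\mS = \det(\mS)\,\mathbf{I}$ (valid in any commutative ring) then gives
\[
\mT\,\mS \;=\; \frac{\tilde g\,\det(\mS)}{m'}\,\mathbf{I} \;\equiv\; \frac{m'}{m'}\,\mathbf{I} \;=\; \mathbf{I} \pmod{\langle \F^\mA, I\rangle},
\]
which is the required equality in $R$.

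There is no real obstacle here: the only points to verify are that the division by $m'$ is legitimate in $\QQ[\X]_{\mu'\delta^\mA}$ (which follows from $m'\mid \mu'$) and that reducing $g$ modulo the $\L$-component is harmless (which follows from ${\sf L_3}$). All other ingredients are formal consequences of Lemma~\ref{lemma:mS} and the adjugate identity.
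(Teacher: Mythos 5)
Your proof is correct and follows essentially the same route as the paper's: both arguments rest on the fact (from Lemma~\ref{lemma:mS}) that $\det(\mS)$ divides $m'$, hence is a unit once $\mu'=\mu^\mA m'm''$ is inverted, and both eliminate the $\L$ variables from the inverse by substituting the $\L$-component of $\H^\mA$. Your version is simply more explicit, constructing the inverse as $(\tilde g/m')\operatorname{adj}(\mS)$ rather than asserting its existence and then rewriting its entries.
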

\begin{proof}
  Because $\det(\mS)$ divides $m'$, and thus $\polmu'$, in
  $\QQ[\X,\L]_{\polmu'\poldelta^\mA}/\langle \F^\mA, I\rangle$, $\mS$
  admits an inverse with entries in
  $\QQ[\X,\L]_{\polmu'\poldelta^\mA}/\langle \F^\mA, I\rangle$. This inverse
  may be rewritten using the $\L$-component of $\H^\mA$, so as to
  involve the $\X$ variables only.
\end{proof}

For $i$ in $\{1,\dots, P\}$, let $L^\star_{k+1,i} \in
\QQ[\X,\L_{k+1}]_{\polmu' \poldelta^\mA}$ be the $i$th entry of the size-$P$
column vector $\mT^t\, \L_{k+1}^t$, where we see $\L_k$ as a row
vector of size $P$, and let $\L_{k+1}^\star$ be the row vector
$[L^\star_{k+1,1}, \ldots, L^\star_{k+1,P}]$.

Let further $\h'$ be the sequence of polynomials
$h^\mA_1,\dots,h^\mA_c,M_1,\dots,M_{n-e-c-\dalgo+1}$.  Recall that for
$\u=(u_1,\dots,u_P)$ in $\QQ^P$, the system we consider in the
generalized Lagrange system $\Polarlag(L^{\mA}, \u, \dalgo)$ is
$$\F'_{\u}=\Big (\F^\mA, \ \lag(\F^\mA,e+\dalgo, \L_{k+1}), \ u_{1}
L_{k+1,1} + \cdots + u_{P} L_{k+1,P} - 1\Big ).$$ Introducing the new
equation $u_{1} L_{k+1,1} + \cdots + u_{P} L_{k+1,P} - 1$ will allow us to
cancel some spurious terms $L_{k+1,\iota}$ appearing in 
Lemma~\ref{lemma:GLS:lagH}.

\begin{lemma}\label{lemma:GLS:lagF}
  Let $\u$ be in $\QQ^P$. In $\QQ[\X,\L']_{\polmu' \poldelta^\mA}$, the ideal
  $\langle \F'_\u, I\rangle$ coincides with the ideal
    $$\left \langle
  \begin{array}{l}
    I,\quad \h',\ (L_{1,j}-\rho^\mA_{1,j})_{1 \le j \le n_1},\ \dots,\ (L_{k,j}-\rho^\mA_{k,j})_{1 \le j \le  n_k},\\[1mm]
    (L^\star_{k+1,j}-\rho^\star_{k+1,j}  L^\star_{k+1,\iota})_{j \ne \iota},\quad  L^\star_{k+1,c+1},\dots,L^\star_{k+1,P},\quad\\[1mm]
    u_{1} L_{k+1,1} + \cdots + u_{P} L_{k+1,P} - 1
  \end{array}\right \rangle.$$
\end{lemma}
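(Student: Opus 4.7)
The plan is to reduce the statement to Lemma~\ref{lemma:GLS:lagH}, applied with $\L^\star_{k+1}$ in place of $\L_{k+1}$, after performing three successive rewrites of the ideal $\langle \F'_\u, I\rangle$ inside $\QQ[\X,\L']_{\mu'\delta^\mA}$. First I would handle the $\F^\mA$-part: since $\phi^\mA$ is a local normal form for $L^\mA$, property ${\sf L_3}$ gives $\langle \F^\mA, I\rangle = \langle \H^\mA, I\rangle$ in $\QQ[\X,\L]_{\mu^\mA\delta^\mA}$, and this equality lifts to $\QQ[\X,\L']_{\mu'\delta^\mA}$ because $\mu^\mA$ divides $\mu'$. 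Second, I would rewrite $\lag(\F^\mA,e+\dalgo,\L_{k+1})$: the identity $\jac(\H^\mA,e) = \mS\,\jac(\F^\mA,e)$ modulo $\langle \F^\mA,I\rangle$ (Lemma~\ref{lemma:mS}) and the invertibility of $\mS$ yield $\jac(\F^\mA,e+\dalgo) = \mT\,\jac(\H^\mA,e+\dalgo)$ modulo the same ideal, so that
\[
\lag(\F^\mA,e+\dalgo,\L_{k+1}) = \L_{k+1}\mT\,\jac(\H^\mA,e+\dalgo) = \lag(\H^\mA,e+\dalgo,\L^\star_{k+1})
\]
modulo $\langle \F^\mA,I\rangle$, by the very definition of the $L^\star_{k+1,i}$.

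Third, I would apply Lemma~\ref{lemma:GLS:lagH} with $\L^\star_{k+1}$ playing the role of $\L_{k+1}$. This substitution is legitimate because $\L^\star_{k+1} = \L_{k+1}\mT$ with $\mT$ invertible over $\QQ[\X]_{\mu'\delta^\mA}$, so the proof of Lemma~\ref{lemma:GLS:lagH} (which only uses formal properties of the symbols making up the Lagrange multipliers) applies verbatim. This produces the ideal
\[
\left\langle \h^\mA,\ (L_{i,j}-\rho^\mA_{i,j}),\ M_1 L^\star_{k+1,\iota},\dots,M_{n-e-c-\dalgo+1} L^\star_{k+1,\iota},\ (L^\star_{k+1,j}-\rho^\star_{k+1,j}L^\star_{k+1,\iota})_{j\ne\iota},\ L^\star_{k+1,c+1},\dots,L^\star_{k+1,P}\right\rangle,
\]
to which I would now adjoin the inhomogeneous equation $u_1 L_{k+1,1}+\cdots+u_P L_{k+1,P}-1$ and the ideal $I$.

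The final step, and the only nontrivial one, is to trade the products $M_i L^\star_{k+1,\iota}$ for the bare minors $M_i$, thus recovering $\h' = (\h^\mA,M_1,\dots,M_{n-e-c-\dalgo+1})$. The point is that the relations $(L^\star_{k+1,j}-\rho^\star_{k+1,j}L^\star_{k+1,\iota})_{j\ne\iota}$ together with $L^\star_{k+1,c+1}=\cdots=L^\star_{k+1,P}=0$ express every $L^\star_{k+1,j}$ as $L^\star_{k+1,\iota}$ times some element of $\QQ[\X]_{\mu'\delta^\mA}$. Using $\L_{k+1}=\L^\star_{k+1}\mS$, this rewrites each $L_{k+1,i}$ as $L^\star_{k+1,\iota}$ times an element of $\QQ[\X]_{\mu'\delta^\mA}$, and hence the inhomogeneous equation becomes $L^\star_{k+1,\iota}\,R_\u(\X)-1$ for some $R_\u\in \QQ[\X]_{\mu'\delta^\mA}$. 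Thus $L^\star_{k+1,\iota}$ is invertible in the quotient (with inverse $R_\u$), so the ideal generated by the $M_i L^\star_{k+1,\iota}$ coincides with the ideal generated by the $M_i$; the equality of ideals in the statement follows.

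The main bookkeeping obstacle is simply to be careful about the ring in which each equality holds and about the invertible change $\L_{k+1}\leftrightarrow\L^\star_{k+1}$; once that is made rigorous, the argument is a direct chain of substitutions and no genericity on $\u$ is needed (when $\u=0$, both ideals equal the whole ring, as the inhomogeneous equation reduces to $-1$).
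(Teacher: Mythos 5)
Your proposal is correct and follows essentially the same route as the paper: rewrite $\langle \F^\mA,I\rangle$ via ${\sf L_3}$, use $\jac(\F^\mA,e+\dalgo)=\mT\,\jac(\H^\mA,e+\dalgo)$ to turn $\lag(\F^\mA,e+\dalgo,\L_{k+1})$ into $\lag(\H^\mA,e+\dalgo,\L^\star_{k+1})$, invoke Lemma~\ref{lemma:GLS:lagH} after the invertible substitution $\L_{k+1}\mapsto\L^\star_{k+1}$, and finally use the inhomogeneous equation to show $L^\star_{k+1,\iota}$ is a unit in the quotient so the factors $M_i L^\star_{k+1,\iota}$ can be replaced by the bare minors $M_i$. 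The only cosmetic difference is that the paper expresses the last step by setting $\u^\star=\mS\,\u$ and rewriting $\u^t\L_{k+1}^t$ as $(\u^\star)^t(\L^\star_{k+1})^t$, which is the transpose of your substitution $\L_{k+1}=\L^\star_{k+1}\mS$.
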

\begin{proof}
  The matrix $\mT$ satisfies the equality
  $$\jac(\F^\mA,e)=\mT\,\jac(\H^\mA,e)$$ over $\QQ[\X,\L]_{\polmu'
    \poldelta^\mA}/\langle \F^\mA, I\rangle$.
  Discarding the first $\dalgo$ columns in this equality, we get
  $\jac(\F^\mA,e+\dalgo)=\mT\,\jac(\H^\mA,e+\dalgo)$ over
  $\QQ[\X,\L]_{\polmu' \poldelta^\mA}/ \langle \F^\mA,
  I\rangle$.
  Left-multiplying by the row-vector $\L_{k+1}$, and using the fact
  that $\langle \F^\mA, I \rangle = \langle \H^\mA, I \rangle$ shows
  that the ideal
  $\langle I, \F^\mA, \lag(\F^\mA,e+\dalgo,\L_{k+1})\rangle$ is the
  ideal generated by
  $$\left \langle I,\ \H^\mA,\ \lag(\H^\mA,e+\dalgo,\L^\star_{k+1}) \right
  \rangle.$$ Evaluating the entries of $\L_{k+1}$ at
  $L^\star_{k+1,1},\dots,L^\star_{k+1,P}$ and using
  Lemma~\ref{lemma:GLS:lagH} shows that in $\QQ[\X,\L']_{\polmu'
    \poldelta^\mA}$, the ideal $\langle I, \H^\mA,
  \lag(\H^\mA,e+\dalgo,\L^\star_{k+1})\rangle$ coincides with the ideal
  $$\left \langle
  \begin{array}{l}
    I,\quad \h^\mA,\ (L_{1,j}-\rho^\mA_{1,j})_{1 \le j \le n_1},\ \dots,\ (L_{k,j}-\rho^\mA_{k,j})_{1 \le j \le  n_k},\\[1mm]
    M_1 L^\star_{k+1,\iota},\dots,M_{n-e-c-\dalgo+1} L^\star_{k+1,\iota},\quad
    (L^\star_{k+1,j}-\rho^\star_{k+1,j} L^\star_{k+1,\iota})_{j \ne \iota},\quad  \\[1mm]
L^\star_{k+1,c+1},\dots,L^\star_{k+1,P}
  \end{array}\right \rangle.$$
Let now $\u$ be in $\QQ^P$. We deduce from the previous equality that
in $\QQ[\X,\L']_{\polmu' \poldelta^\mA}$, the ideal
$\langle \F'_\u,I \rangle$ is the ideal generated by
  $$\left \langle
  \begin{array}{l}
    I,\quad \h^\mA,\ (L_{1,j}-\rho^\mA_{1,j})_{1 \le j \le n_1},\ \dots,\ (L_{k,j}-\rho^\mA_{k,j})_{1 \le j \le  n_k},\\[1mm]
    M_1 L^\star_{k+1,\iota},\dots,M_{n-e-c-\dalgo+1} L^\star_{k+1,\iota},\quad\\[1mm]
    (L^\star_{k+1,j}-\rho^\star_{k+1,j} L^\star_{k+1,\iota})_{j \ne \iota},\quad  L^\star_{k+1,c+1},\dots,L^\star_{k+1,P}\\[1mm]
    u_{1} L_{k+1,1} + \cdots + u_{P} L_{k+1,P} - 1
  \end{array}\right \rangle.$$
Let $u^\star_1,\dots,u^\star_P$ be the entries of the size-$P$ vector
$\mS\, \u$, which lie in $\QQ[\X]_{\polmu' \poldelta^\mA}$.  Then, due
to the definition of $L^\star_{k+1,i}$ as the $i$th entry of
$\mT^t\L_{k+1}^t$, the equality
  $$u_{1} L_{k+1,1} + \cdots + u_P L_{k+1,P} = 
  u^\star_{1} L^\star_{k+1,1} + \cdots + u^\star_{P} L^\star_{k+1,P}$$
  holds in
  $\QQ[\X,\L']_{\polmu' \poldelta^\mA}/\langle \F'_\u,I \rangle$.  As
  a consequence,
  $u^\star_{1} L^\star_{k+1,1} + \cdots + u^\star_{P}
  L^\star_{k+1,P}-1$
  is in $\langle \F'_\u,I\rangle$. We deduce further that
  $$(u^\star_{1} \rho_{k+1,1} + \cdots + u^\star_{c-1} \rho_{k+1,c}) L^\star_{k+1,\iota}-1$$
  is in $\langle \F'_\u,I\rangle$, where we write $\rho_{k+1,\iota}=1$.
  This shows that the ideal $\langle \F'_\u,I \rangle$ is the ideal generated by
  $$\left \langle
  \begin{array}{l}
    I,\quad \h',\ (L_{1,j}-\rho^\mA_{1,j})_{1 \le j \le n_1},\ \dots,\ (L_{k,j}-\rho^\mA_{k,j})_{1 \le j \le  n_k},\\[1mm]
    (L^\star_{k+1,j}-\rho^\star_{k+1,j}  L^\star_{k+1,\iota})_{j \ne \iota},\quad  L^\star_{k+1,c+1},\dots,L^\star_{k+1,P},\quad \\[1mm]
    u_{1} L_{k+1,1} + \cdots + u_{P} L_{k+1,P} - 1
  \end{array}\right \rangle,$$
  as claimed.
\end{proof}

To continue, we will rely on genericity properties for $\u$, that we
describe now. Let $\U=(U_1,\dots,U_P)$ be new indeterminates, let
$(t_{i,j})_{1 \le i,j \le P}$ be the entries of $\mT^t$ and let $\mM $
be the $(P\times P)$ matrix with entries in
$\QQ[\U,\X]_{\polmu' \poldelta^\mA}$ defined by
\begin{equation}
  \label{eq:mM}
\mM=\left [ \begin{matrix} 
t_{1,1} - \rho^\star_{k+1,1} t_{\iota,1} & \cdots & t_{1,P} - \rho^\star_{k+1,1} t_{\iota,P}\\
\vdots & & \vdots \\
\text{\sout{$t_{\iota,1} - \rho^\star_{k+1,\iota} t_{\iota,1}$}} & \cdots & \text{\sout{$t_{\iota,P} - \rho^\star_{k+1,\iota} t_{\iota,P}$}}\\
\vdots & & \vdots \\
t_{c,1} - \rho^\star_{k+1,c} t_{\iota,1} & \cdots & t_{c,P} - \rho^\star_{k+1,c} t_{\iota,P}\\
U_1  & \cdots & U_P \\
t_{c+1,1}  & \cdots & t_{c+1,P} \\
\vdots & & \vdots \\
t_{P,1}  & \cdots & t_{P,P} 
\end{matrix} \right ].
\end{equation}
We let $\mM^\star$ be the matrix $\mM$ multiplied by the minimal 
power of $\polmu' \poldelta^\mA$ such that $\mM^\star$ has entries in
$\QQ[\U,\X]$ and let further $\Lambda \in \QQ[\U,\X]$ be the
determinant of $\mM^\star$. Finally, for $\u$ in $\QQ^P$, we denote by
$\poldelta'_\u$ the polynomial $\poldelta^\mA \Lambda(\u,\X) \in \QQ[\X]$.

\begin{lemma}\label{lemma:GLS:Hprime}
  Let $\u$ in $\QQ^P$ be such that $\Lambda(\u,\X)\ne 0$. There exist
  rational functions $(\rho_{k+1,j,\u})_{1 \le j \le P}$ in
  $\QQ[\X,\L']_{\polmu'\poldelta'_\u}$ such that in
  $\QQ[\X,\L']_{\polmu' \poldelta'_\u}$, the ideal
  $\langle \F'_\u,I \rangle$ is equal to the ideal
  $$\langle I,\quad \h',\ (L_{1,j}-\rho^\mA_{1,j})_{1 \le j \le
    n_1},\dots, (L_{k,j}-\rho^\mA_{k,j})_{1 \le j \le
    n_k},\ (L_{k+1,j}-\rho_{k+1,j,\u})_{1 \le j \le P}\rangle.$$
\end{lemma}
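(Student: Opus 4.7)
The plan is to invoke Lemma~\ref{lemma:GLS:lagF} to reduce to solving a linear system in the unknowns $L_{k+1,1},\dots,L_{k+1,P}$, whose coefficient matrix is (up to a scaling unit) the matrix $\mM$ introduced in~\eqref{eq:mM}, and then apply Cramer's rule in the localization $\QQ[\X]_{\mu'\delta'_\u}$.

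More precisely, Lemma~\ref{lemma:GLS:lagF} shows that in $\QQ[\X,\L']_{\mu'\delta^\mA}$, the ideal $\langle \F'_\u,I\rangle$ is generated by $I$, $\h'$, the $\L$-linear relations $(L_{i,j}-\rho^\mA_{i,j})$ coming from $\H^\mA$, and the extra relations
\[
(L^\star_{k+1,j}-\rho^\star_{k+1,j}L^\star_{k+1,\iota})_{j=1,\dots,c,\;j\ne\iota},\quad L^\star_{k+1,c+1},\dots,L^\star_{k+1,P},\quad u_1L_{k+1,1}+\cdots+u_PL_{k+1,P}-1.
\]
Since $L^\star_{k+1,i}=\sum_{\ell}t_{i,\ell}L_{k+1,\ell}$ is a $\QQ[\X]_{\mu'\delta^\mA}$-linear combination of the $L_{k+1,\ell}$'s, the first two families above and the normalization $u_1L_{k+1,1}+\cdots+u_PL_{k+1,P}-1$ together form an inhomogeneous linear system in $L_{k+1,1},\dots,L_{k+1,P}$. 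I would then check row-by-row (placing the normalization row in position $\iota$, with the convention $\rho^\star_{k+1,\iota}=1$) that the coefficient matrix of this linear system is exactly $\mM(\u,\X)$ as defined in~\eqref{eq:mM}, and that the right-hand side is the $\iota$-th standard basis vector.

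By definition of $\mM^\star$ and $\Lambda$, the determinant of $\mM(\u,\X)$ agrees with $\Lambda(\u,\X)$ up to a power of $\mu'\delta^\mA$, which is a unit in $\QQ[\X]_{\mu'\delta^\mA}$. Since $\delta'_\u=\delta^\mA\,\Lambda(\u,\X)$, the matrix $\mM(\u,\X)$ becomes invertible in $\QQ[\X]_{\mu'\delta'_\u}$ as soon as $\Lambda(\u,\X)\ne 0$, which is the hypothesis. Applying Cramer's rule, I obtain, for each $j\in\{1,\dots,P\}$, a unique solution $\rho_{k+1,j,\u}\in\QQ[\X]_{\mu'\delta'_\u}$, and the linear equations on the $L_{k+1,\ell}$'s listed above are then equivalent to the $P$ relations $L_{k+1,j}-\rho_{k+1,j,\u}=0$ in $\QQ[\X,\L']_{\mu'\delta'_\u}$. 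Substituting these into the generators of $\langle \F'_\u,I\rangle$ given by Lemma~\ref{lemma:GLS:lagF} yields the claimed description.

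The only subtle point is the bookkeeping around the scaling factor relating $\mM$ and $\mM^\star$ (so that $\Lambda$ is a polynomial while $\det(\mM)$ naturally lives in $\QQ[\U,\X]_{\mu'\delta^\mA}$), and the verification that the linear system obtained from Lemma~\ref{lemma:GLS:lagF} matches~\eqref{eq:mM} row-by-row under the convention $\rho^\star_{k+1,\iota}=1$; once these are in place, the rest is a direct application of Cramer's rule in $\QQ[\X]_{\mu'\delta'_\u}$.
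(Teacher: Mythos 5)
Your proposal is correct and follows essentially the same route as the paper: starting from Lemma~\ref{lemma:GLS:lagF}, one views the remaining relations as an affine linear system in $L_{k+1,1},\dots,L_{k+1,P}$ with matrix $\mM(\u,\X)$, whose determinant becomes a unit in $\QQ[\X]_{\mu'\delta'_\u}$, and concludes by Cramer's formulas. The bookkeeping points you flag (the scaling relating $\mM$ to $\mM^\star$ and the row placement of the normalization equation) are exactly the details the paper leaves implicit, and your handling of them is sound.
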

\begin{proof}
  Starting from the conclusion of Lemma~\ref{lemma:GLS:lagF}, it
  remains to solve for the variables $L_{k+1,i}$.  Let us consider the
  subsystem
  $$(L^\star_{k+1,j}-\rho^\star_{k+1,j} L^\star_{k+1,\iota})_{j \ne \iota},\quad
  L^\star_{k+1,c+1},\dots,L^\star_{k+1,P}, \quad
  u_{1} L_{k+1,1} + \cdots + u_P L_{k+1,P}-1.
  $$
  This is an affine system in the indeterminates $L_{k+1,1},\dots,L_{k+1,P}$, 
  with matrix $\mM(\u,\X)$. By construction, the determinant of $\mM(\u,\X)$
  is invertible in $\QQ[\X,\L']_{\polmu'\poldelta'_\u}$, and the result follows using
  Cramer's formulas.
\end{proof}
In what follows, we let $\H'_\u$ be the polynomials in
$\QQ[\X]_{\polmu'\poldelta'_\u}$ given by
$$\H'_\u=\left ( \h',\ (L_{1,j}-\rho^\mA_{1,j})_{1 \le j \le
  n_1},\dots, (L_{k,j}-\rho^\mA_{k,j})_{1 \le j \le
  n_k},(L_{k+1,j}-\rho_{k+1,j,\u})_{1 \le j \le P}\right ).$$ Remark
that these polynomials, as well as $\poldelta'_\u$ itself, depend
on the choice of $\u$.

The following results will allow us to ensure the existence of values
of $\u$ that satisfy the assumptions of the former lemma.  Remark that
$\Proj(\Polarlag(L^{\mA}, \u, \dalgo))$ is contained in $\Proj(L)$,
since we add equations and $\scrQ$ and $\scrS$ do not change.

\begin{lemma}\label{lemma9}
  For $\x$ in $\Open(\polmu')\cap \Proj(L)^\mA$, the polynomial
  $\Lambda(\U,\x)$ is not identically zero.
\end{lemma}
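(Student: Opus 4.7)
The plan is to reduce the claim to a linear-independence statement about the rows of the specialized matrix $\mM(\U,\x)$, and then to derive that independence from the invertibility of $\mT$ at~$\x$, which in turn follows from the hypothesis that $m'(\x)\neq 0$.

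First I would observe that because we take $\x \in {\cal O}(\mu')\cap U^\mA$ and $\mu^\mA$ divides $\mu'$, property ${\sf L_5}$ applied to $\phi^\mA$ gives $\delta^\mA(\x)\neq 0$ as well. Since $\mM^\star$ is a scalar multiple of $\mM$ by a power of $\mu'\delta^\mA$, we have $\Lambda(\U,\x)=(\mu'(\x)\delta^\mA(\x))^{NP}\det\bigl(\mM(\U,\x)\bigr)$ for some integer $N\geq 0$, so it suffices to prove $\det\bigl(\mM(\U,\x)\bigr)$ is not identically zero in $\U$. Expanding along the row $[U_1\ \cdots\ U_P]$, this determinant is a linear form in $\U$, so it vanishes identically iff the remaining $P-1$ rows of $\mM(\U,\x)$ are linearly dependent over $\C$.

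Next I would describe those $P-1$ rows. By construction, they are obtained from the rows of $\mT^t(\x)$ indexed by $\{1,\dots,P\}\setminus\{\iota\}$ by subtracting $\rho^\star_{k+1,j}(\x)$ times row $\iota$ from row $j$ for $j\in\{1,\dots,c\}\setminus\{\iota\}$, and keeping rows $c+1,\dots,P$ unchanged. Writing the rows of $\mT^t(\x)$ as $v_1,\dots,v_P$, a direct check shows that if $\sum_{j\ne\iota}a_j v'_j=0$ then $\sum_{j\ne\iota}a_j v_j$ is a multiple of $v_\iota$; the linear independence of $v_1,\dots,v_P$ forces all $a_j=0$. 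Hence the $P-1$ vectors $v'_j$ are linearly independent as soon as $\mT(\x)$ is invertible.

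Thus the whole argument reduces to showing $\det(\mT(\x))\neq 0$, equivalently $\det(\mS(\x))\neq 0$. This is the key step, and it is where we use the hypothesis $\x\in{\cal O}(\mu')$ (so in particular $m'(\x)\ne 0$). By Lemma~\ref{lemma:mS}, $\det(\mS)$ divides $m'$ in $\QQ[\X,\L]_{\mu^\mA\delta^\mA}/\langle \F^\mA,I\rangle$. Since $\x\in U^\mA=\Proj(L^\mA)$, there exists $\bell$ with $(\x,\bell)\in\Cons(L^\mA)$, so $(\x,\bell)$ lies in $V(\F^\mA)\cap\pi_e^{-1}(Q)$, and we may evaluate this divisibility at $(\x,\bell)$. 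This gives $\det(\mS(\x))\cdot g(\x,\bell)=m'(\x)$ for some $g\in\QQ[\X,\L]_{\mu^\mA\delta^\mA}$, and $m'(\x)\neq 0$ forces $\det(\mS(\x))\neq 0$. Combining this with the previous paragraph, the $P-1$ non-$\U$ rows of $\mM(\U,\x)$ are linearly independent, the linear form $\det(\mM(\U,\x))$ is not identically zero, and therefore $\Lambda(\U,\x)\not\equiv 0$. The main obstacle is essentially the bookkeeping connecting the abstract relation $\jac(\H^\mA,e)=\mS\jac(\F^\mA,e)$ to a pointwise non-vanishing statement at $\x$; once that evaluation is justified via $\Cons(L^\mA)$, the rest is linear algebra.
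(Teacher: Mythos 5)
Your proof is correct and takes essentially the same route as the paper: both arguments come down to evaluating the identity $\mT\,\mS={\bf 1}$ at a point $(\x,\bell)\in\Cons(L^\mA)$ lying above $\x$ to conclude that $\mT(\x)$ is invertible, and then observing that $\det(\mM(\U,\x))$ is a nonzero linear form in $\U$ (the paper exhibits the explicit witness $\u=$ row $\iota$ of $\mT^t(\x)$, which turns the determinant into $\det(\mT^t(\x))$, while you argue via linear independence of the remaining $P-1$ rows). Your detour through Lemma~\ref{lemma:mS} to obtain $\det(\mS(\x))\ne 0$ is unnecessary, since the specialized identity $\mT(\x)\,\mS(\x)={\bf 1}$ already forces both determinants to be nonzero.
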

\begin{proof}
  It suffices to prove the existence of one value of $\u$ for which
  $\Lambda(\u,\x)\ne 0$. Because $\x$ is in
  $\Open(\polmu^\mA)\cap \Proj(L)^\mA$, the local normal form property
  ${\lnf_5}$ implies that it is in
  $\Open(\polmu^\mA \poldelta^\mA)\cap \Proj(L)^\mA$, and thus in
  $\Open(\polmu' \poldelta^\mA)\cap \Proj(L)^\mA$; in particular, both
  matrices $\mS$ and $\mT$ can be evaluated at $\x$.  Besides, because
  $\x$ is in $\Proj(L)^\mA$, there exists $\bell \in \C^N$ such that
  $(\x,\bell)$ is in $\fbr(V(\F^\mA), Q)$. Since
  $\polmu' \poldelta^\mA$ does not vanish at $\x$, the equality
  $\mT\, \mS={\bf 1}$ that holds over
  $\QQ[\X,\L]_{\polmu'\poldelta^\mA}/\langle \F^\mA, I\rangle$ still
  holds after specialization at $(\x,\bell)$.

  Let then $\u=(u_1,\dots,u_P)$ be the value at $\x$ of the row of
  index $\iota$ in $\mT^t$. Evaluating $U_1,\dots,U_P$ at
  $u_{1},\dots,u_{P}$ in the determinant $\Lambda(\U,\x)$ of
  $\mM(\U,\x)$ gives us the determinant of $\mT^t(\x)$, which is
  non-zero. As a result, $\Lambda(\U,\x)$ itself is non-zero.
\end{proof}

\begin{lemma}\label{sec:lagrange:lemma:L5}
  For $\u$ in $\QQ^P$ and $\x$ in
  $\Open(\polmu') \cap \Proj(\Polarlag(L^{\mA}, \u, \dalgo))$,
  $\poldelta'_\u(\x)=\poldelta^\mA(\x) \Lambda(\u,\x)$ is non-zero.
\end{lemma}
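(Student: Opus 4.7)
The factorization $\delta'_\u(\x)=\delta^\mA(\x)\,\Lambda(\u,\x)$ reduces the statement to showing each factor is nonzero.

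\emph{The factor $\delta^\mA(\x)$.} Since $\mu^\mA$ divides $\mu'=\mu^\mA m' m''$, and adjoining equations to $\F^\mA$ only shrinks the projection $\Proj$, so $U'_\u\subset U^\mA$, the hypothesis gives $\x\in{\cal O}(\mu^\mA)\cap U^\mA$. Applying property ${\sf L_5}$ of the local normal form $\phi^\mA$ of $L^\mA$ (which is inherited from $\phi$) then yields ${\cal O}(\mu^\mA)\cap U^\mA={\cal O}(\mu^\mA\delta^\mA)\cap U^\mA$; hence $\delta^\mA(\x)\ne 0$.

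\emph{The factor $\Lambda(\u,\x)$.} Since $\x\in U'_\u$, I will pick $\bell'$ with $(\x,\bell')\in\Cons({\cal W}(L^\mA,\u,\dalgo))$, and denote by $\ell=(\ell_1,\dots,\ell_P)\in\C^P$ its $\L_{k+1}$-component. From $\mu'(\x)\delta^\mA(\x)\ne 0$, Lemma~\ref{lemma:GLS:lagF} forces $(\x,\bell')$ to cancel the alternative generators given there; extracting the equations involving $\L_{k+1}$ shows that $\ell$ solves the linear system $\mM(\u,\x)\,\ell=\b$, where $\b\in\C^P$ is the vector with a single nonzero entry, equal to $1$, in the row indexed by $\U$.

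The core of the argument is to show that solvability of this linear system forces $\Lambda(\u,\x)=\det\mM(\u,\x)\ne 0$. The matrices $\mT$ and $\mS$ have entries in $\QQ[\X]_{\mu'\delta^\mA}$ and satisfy $\mT\mS=\mI$ modulo $\langle \F^\mA,I\rangle$; specializing this identity at $\x\in U^\mA$, where $\mu'(\x)\delta^\mA(\x)\ne 0$, yields $\mT(\x)\mS(\x)=\mI$, so $\mT^t(\x)$ is invertible. Using the substitution $\L^\star_{k+1}=\mT^t\L_{k+1}$ from Lemma~\ref{lemma:GLS:lagH}, the constraints $L^\star_{k+1,j}-\rho^\star_{k+1,j}L^\star_{k+1,\iota}=0$ (for $j\le c$, $j\ne\iota$) and $L^\star_{k+1,j}=0$ (for $j>c$) become $\mT^t(\x)\,\ell=\xi\cdot\v(\x)$, with $\xi=L^\star_{k+1,\iota}$ and $\v(\x)\in\C^P$ the vector defined by $v_\iota=1$, $v_j=\rho^\star_{k+1,j}(\x)$ for $j\le c$ and $j\ne\iota$, and $v_j=0$ for $j>c$. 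The normalization $\u^t\ell=1$ then reduces to $\xi\cdot\u^t(\mT^t(\x))^{-1}\v(\x)=1$, solvable in $\xi$ if and only if $\u^t(\mT^t(\x))^{-1}\v(\x)\ne 0$. On the other hand, expanding $\det\mM(\u,\x)$ along its $\U$-row and tracking the elementary row operations producing $\mM$ from $\mT^t$ will identify
\[
\Lambda(\u,\x)=\pm\det\mT^t(\x)\cdot\u^t(\mT^t(\x))^{-1}\v(\x),
\]
so the existence of $\ell$ forces $\Lambda(\u,\x)\ne 0$.

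The main obstacle is the clean identification of $\Lambda(\u,\x)$ with $\pm\det\mT^t(\x)\cdot\u^t(\mT^t(\x))^{-1}\v(\x)$: this requires carefully analyzing how the row operations that subtract $\rho^\star_{k+1,j}$ times row $\iota$ of $\mT^t$ interact with the cofactor expansion along $\U$, specifically recognizing the $\iota$-th column of the inverse of the intermediate matrix (obtained after these row operations but before replacing row $\iota$ with $\U$) as $(\mT^t(\x))^{-1}\v(\x)$. Once that identification is in place, solvability of $\mM(\u,\x)\,\ell=\b$ is directly equivalent to $\Lambda(\u,\x)\ne 0$.
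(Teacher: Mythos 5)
Your treatment of the factor $\delta^\mA(\x)$ is exactly the paper's ($U'_\u\subset U^\mA$, then ${\sf L_5}$), but for $\Lambda(\u,\x)$ you take a genuinely different and more computational route. The paper argues by contradiction: if $\det\mM(\u,\x)=0$ there is a nonzero left null vector $\v^t$; pairing it with the solvable system $\mM(\u,\x)\,\ell=\b$ (solvable because $\x\in U'_\u$) forces the coefficient of the $\U$-row in that dependence to vanish, so the dependence involves only rows independent of $\U$ and hence $\Lambda(\U,\x)\equiv 0$, contradicting Lemma~\ref{lemma9}. That argument needs no determinant formula at all. You instead invert $\mT^t(\x)$ directly (which is legitimate: specializing $\mT\,\mS=\mathbf{1}$ at a point of $\Cons(L^\mA)$ above $\x$ is precisely how Lemma~\ref{lemma9} is proved), write $\ell=\xi(\mT^t(\x))^{-1}\v(\x)$, and extract $\u^t(\mT^t(\x))^{-1}\v(\x)\ne 0$ from the normalization. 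The identity you flag as the main obstacle does hold and is routine to check: the operations producing $\mM$ from $\mT^t$ add multiples of row $\iota$ to the other rows (hence preserve the determinant and twist the inverse by the elementary matrix $E=\mathbf{1}-\sum_{j\le c,\,j\ne\iota}\rho^\star_{k+1,j}e_je_\iota^t$, whose inverse maps $e_\iota$ to $\v(\x)$), and then replace row $\iota$ by $\U$; cofactor expansion along that row gives $\det\mM(\u,\x)=\pm\det\mT^t(\x)\cdot\u^t(\mT^t(\x))^{-1}\v(\x)$, and $\Lambda$ differs from $\det\mM$ only by a power of $\mu'(\x)\delta^\mA(\x)\ne 0$. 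So your proof is correct once that computation is written out; the trade-off is that the paper's version avoids the explicit formula but imports Lemma~\ref{lemma9} as a black box, whereas yours re-derives the only fact from Lemma~\ref{lemma9} that is actually needed ($\det\mT^t(\x)\ne 0$) inline and in exchange must carry out the cofactor bookkeeping.
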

\begin{proof}
  We need to prove that neither $\poldelta^\mA$ nor $\Lambda(\u,\X)$
  vanishes at $\x$.  Because $\Proj(\Polarlag(L^{\mA}, \u, \dalgo))$
  is contained in $\Proj(L)^\mA$, and $\Open(\polmu')$ is contained in
  $\Open(\polmu^\mA)$, $\x$ is in
  $\Open(\polmu^\mA) \cap \Proj(L)^\mA$; so $\poldelta^\mA$ does not
  vanish at $\x$, by ${\lnf_5}$ for $L$ --- as claimed.

  Since $\polmu'(\x)\poldelta^\mA(\x)$ is not zero, the matrix
  $\mM(\u,\x)$ of Eq.~\eqref{eq:mM} is well-defined. Suppose that its
  determinant is zero, or equivalently that $\Lambda(\u,\x)=0$: this
  means that the rows of the matrix $\mM(\u,\x)$ are dependent. Thus,
  there exists $\v \in \C^P$ non-zero such that
  $\v^t \mM(\u,\x)=[0~\cdots~0]$.

  Because $\x$ is in $\Proj(\Polarlag(L^{\mA}, \u, \dalgo))$, there
  exists $\bell$ in $\C^{N'-n}$ such that $\F'_\u(\x,\bell)=0$. Recall
  from the proof of Lemma~\ref{lemma:GLS:Hprime} that the system
  $\F'_\u$ involves in particular linear equations in the unknowns
  $L_{k+1,1},\dots,L_{k+1,P}$, with matrix $\mM(\u,\X)$ and right-hand
  side $[0~\dots~0~1~0\cdots~0]^t$, with $1$ at entry $c$.  After
  evaluation at $\x,\bell$ and left-multiplication by $\v^t$, we
  deduce that $v_c=0$. As a result, the matrix $\mM(\U,\x)$ itself is
  singular, or in other words $\Lambda(\U,\x)=0$.  However, since $\x$
  is in $\Open(\polmu') \cap \Proj(L)^\mA$, this contradicts
  Lemma~\ref{lemma9}.
\end{proof}

We are now going to prove that for a generic choice of $\u$, the
previous construction gives a local normal form of $\Polarlag(L^{\mA},
\u, \dalgo)$; we start by defining the Zariski open subset of $\C^P$ where
this will be the case.

First, we define a finite set of points associated to
$W=\polar(e,\dalgo,V^\mA)$. Let $\closedZ_1, \ldots, \closedZ_\ell$ be
the irreducible components of $W$, and assume without loss of
generality that $\closedZ_1, \ldots, \closedZ_{\ell'}$ are those
irreducible components of $W$ that have a non-empty intersection with
$\Open(\polmu')-S^\mA$; by assumption, $\ell' \ge 1$, since
$\Open(\polmu')\cap W-S^\mA$ is not empty. Now, recall that
$\polmu'=\polmu^\mA m' m''$, so for $i$ in $\{1,\dots,\ell'\}$, we
have in particular that $\closedZ_i$ has a non-empty intersection with
$\Open(\polmu^\mA)-S^\mA$.  Thus, by assumption, $\closedZ_i$ has a
non-empty intersection with $\Open(\polmu^\mA
\poldelta^\mA)-S^\mA$.
Because $\closedZ_i$ is irreducible, we deduce that
$\Open(\polmu' \poldelta^\mA)\cap \closedZ_i-S^\mA$ is not empty. We
thus let $\z_i$ be an element in this set, for $i$ in
$\{1,\dots,\ell'\}$, and we let $\scrX(W)=\{\z_1,\dots,\z_{\ell'}\}$.
Remark that $\ell'\ge 1$ means that $\scrX(W)$ is not empty.

Recall as well that we are given a finite subset $\scrX$ of
$\Open(\polmu' \poldelta^\mA)\cap V^\mA-S^\mA$. We can then
define $\scrX'=\scrX(W) \cup \scrX$. This is a finite subset of
$\Open(\polmu' \poldelta^\mA)\cap V^\mA-S^\mA$.

Any $\z$ in $\scrX'$ is in
$\Open(\polmu^\mA \poldelta^\mA) \cap V^\mA-S^\mA$, and thus (by
Lemma~\ref{lemma:conseq0}) in
$\Open(\polmu^\mA \poldelta^\mA) \cap \Proj(L)^\mA-S^\mA$, and
eventually in $\Open(\polmu') \cap \Proj(L)^\mA$, so
Lemma~\ref{lemma9} implies that the polynomial $\Lambda(\U,\z)$ is not
identically zero. We let
$\mathscr{I}(L,\phi,\mA,m',m'',\scrX) \subset \C^P$ be the non-empty
Zariski open set defined as
$\C^P-V(\Lambda(\U,\z_1)\cdots \Lambda(\U,\z_s))$, where we write
$\scrX' =\{\z_1,\dots,\z_s\}$. Since $\scrX(W)$ is not empty,
$s \ge 1$.

\begin{lemma}\label{sec:lagrange:lemma:asstert3bislocal}
  Suppose that $\u$ belongs to
  $\mathscr{I}(L,\phi,\mA,m',m'',\scrX)$. Then
  $\Open(\polmu')\cap\Clos{(\Polarlag(L^{\mA}, \u,
  \dalgo))}-S^\mA=\Open(\polmu')\cap W -S^\mA.$
\end{lemma}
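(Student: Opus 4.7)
The plan is to exploit the local normal form identity of Lemma~\ref{lemma:GLS:Hprime} in order to describe $\Cons({\cal W}(L^\mA,\u,\dalgo))$ via $\H'_\u$ on the open set ${\cal O}(\mu'\delta'_\u)$, combine this with the chart property of $(\mu',\h')=\mathcal{W}(\psi^\mA,m',m'')$ for $(W,Q,S^\mA)$ coming from Lemma~\ref{sec:chart:lemma:polarchart}, and then bridge the gap from $U'_\u$ to its Zariski closure $V'_\u$, and from ${\cal O}(\mu'\delta'_\u)$ to ${\cal O}(\mu')$, via an irreducible component analysis.

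First, for $\x$ in ${\cal O}(\mu'\delta'_\u)$, Lemma~\ref{lemma:GLS:Hprime} shows that $(\x,\bell)\in\Cons({\cal W}(L^\mA,\u,\dalgo))$ if and only if $\x\in\fbr(V(\h'),Q)-S^\mA$ and $\bell$ is completely determined by the $\L$-component of $\H'_\u$; projecting to the $\X$-space and invoking the chart identity ${\cal O}(\mu')\cap\fbr(V(\h'),Q)-S^\mA={\cal O}(\mu')\cap W-S^\mA$ from Lemma~\ref{sec:chart:lemma:polarchart}, this yields
$${\cal O}(\mu'\delta'_\u)\cap U'_\u-S^\mA={\cal O}(\mu'\delta'_\u)\cap W-S^\mA.$$
By Lemma~\ref{sec:lagrange:lemma:L5}, $\delta'_\u$ vanishes nowhere on $U'_\u\cap{\cal O}(\mu')$, so this identity promotes to
$${\cal O}(\mu')\cap U'_\u-S^\mA={\cal O}(\mu'\delta'_\u)\cap W-S^\mA\subseteq{\cal O}(\mu')\cap W-S^\mA.$$

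To pass from $U'_\u$ to $V'_\u$, I would decompose $V'_\u=C_1\cup\cdots\cup C_r$ into irreducible components. Since $V'_\u=\overline{U'_\u}$, a standard maximality argument shows each $C_i\cap U'_\u$ is Zariski-dense in $C_i$. Hence either $C_i\cap U'_\u\cap{\cal O}(\mu')$ is empty (forcing $C_i\cap U'_\u\subseteq V(\mu')$ and, by Zariski closure, $C_i\subseteq V(\mu')$, so that $C_i$ contributes nothing to $V'_\u\cap{\cal O}(\mu')$), or $C_i\cap U'_\u\cap{\cal O}(\mu')$ is a non-empty open subset of the irreducible set $C_i$ and hence dense; in the latter case the inclusion just established gives $C_i\cap U'_\u\cap{\cal O}(\mu')\subseteq W$, so taking Zariski closures yields $C_i\subseteq W$. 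Unioning over $i$ produces ${\cal O}(\mu')\cap V'_\u-S^\mA\subseteq{\cal O}(\mu')\cap W-S^\mA$.

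For the reverse inclusion the genericity of $\u$ is essential. Let $Z$ be any irreducible component of $W$ meeting ${\cal O}(\mu')-S^\mA$; by the construction of $\scrX(W)$ there is a point $\z\in{\cal O}(\mu'\delta^\mA)\cap Z-S^\mA$, and the choice $\u\in\mathscr{I}(L,\phi,\mA,m',m'',\scrX)$ forces $\Lambda(\u,\z)\ne 0$, so $\delta'_\u(\z)=\delta^\mA(\z)\Lambda(\u,\z)\ne 0$. Thus $\z$ lies in ${\cal O}(\mu'\delta'_\u)\cap Z-S^\mA$, a non-empty and hence dense open subset of $Z$; specialising the first paragraph's identity to $Z\subseteq W$ gives $Z\cap U'_\u\supseteq{\cal O}(\mu'\delta'_\u)\cap Z-S^\mA$, whence $Z\subseteq\overline{U'_\u}=V'_\u$ by density. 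Since every point of ${\cal O}(\mu')\cap W-S^\mA$ belongs to such a $Z$, the reverse inclusion follows and the proof is complete. The main obstacle in this scheme is the third step: one must rule out parasitic irreducible components of $V'_\u$ lying inside ${\cal O}(\mu')\cap V(\delta'_\u)$, which is where Lemma~\ref{sec:lagrange:lemma:L5} and the irreducibility/density arguments work in tandem.
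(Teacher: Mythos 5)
Your proposal is correct and follows essentially the same route as the paper: the same chain (Lemma~\ref{lemma:GLS:Hprime}, projection to the $\X$-space, the chart identity of Lemma~\ref{sec:chart:lemma:polarchart}, and Lemma~\ref{sec:lagrange:lemma:L5}) establishes the key identity ${\cal O}(\mu')\cap U'_\u={\cal O}(\mu'\delta'_\u)\cap W-S^\mA$, and the passage to Zariski closures is handled by the same component-wise density argument, with the points of $\scrX(W)$ and the genericity of $\u$ ruling out components of $W$ lost inside $V(\delta'_\u)$. The paper merely packages the two inclusions as the single equality $A=B$ between the unions of relevant irreducible components of $V'_\u$ and of $W$; the content is identical.
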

\begin{proof}
  Because $s \ge 1$ and $\u$ belongs to
  $\mathscr{I}(L,\phi,\mA,m',m'',\scrX)$, $\Lambda(\u,\z_1) \ne 0$,
  which implies that the polynomial $\Lambda(\u,\X)$ is non-zero.
  We can thus apply Lemma~\ref{lemma:GLS:Hprime}, which implies that 
  $$\Open(\polmu'\poldelta'_\u)\cap \fbr(V(\F'_\u), Q)=
  \Open(\polmu'\poldelta'_\u)\cap \fbr(V(\H'_\u), Q),$$
  where the $\Open(\ )$ notation denotes here open subsets of
  $\C^{N'}$.  Since $\polmu' \poldelta'_\u$ is in $\QQ[\X]$, we deduce the
  equality
 $$\Open(\polmu'\poldelta'_\u)\cap \Pi_\X(\fbr(V(\F'_\u), Q))-S^\mA=
 \Open(\polmu'\poldelta'_\u)\cap \Pi_\X(\fbr(V(\H'_\u), Q))-S^\mA,$$ where
 the $\Open(\ )$ now denote open subsets of $\C^{n}$, as usual. 

 By definition,
 $\Proj(\Polarlag(L^{\mA}, \u, \dalgo))=\Pi_\X(\fbr(V(\F'_\u),
 Q))-S^\mA$.
 Also, remark that $\H'_\u$ is in normal form and $\h'$ is the
 $\X$-component of $\H'_\u$; consequently, we
 have
 $$\Open(\polmu'\poldelta'_\u)\cap
 \Proj(\Polarlag(L^{\mA}, \u, \dalgo))=\Open(\polmu'\poldelta'_\u)\cap \fbr(V(\h'), Q)-S^\mA.$$
 By Lemma~\ref{sec:lagrange:lemma:L5}, this can be rewritten as
  $$\Open(\polmu')\cap \Proj(\Polarlag(L^{\mA}, \u, \dalgo))=\Open(\polmu'\poldelta'_\u)\cap \fbr(V(\h'),
  Q)-S^\mA.$$
  On the other hand, since we suppose that
  $\Open(\polmu')\cap W-S^\mA$ is not empty, and that $\mA$ is in the
  open set $\scrGpolarchart(\psi,V,Q,S,\dalgo)$ defined in
  Lemma~\ref{sec:chart:lemma:polarchart}, that lemma shows that
  $(\polmu',\h')$ is a chart of $(W,Q,S^\mA)$, so that we have the
  equality
  $$\Open(\polmu' \poldelta'_\u)\cap W-S^\mA=\Open(\polmu' \poldelta'_\u)\cap \fbr(V(\h'),
  Q)-S^\mA.$$ Combining the former two equalities, we thus deduce
  \begin{equation}\label{eq:WProj}
  \Open(\polmu')\cap \Proj(\Polarlag(L^{\mA}, \u, \dalgo))=\Open(\polmu'\poldelta'_\u)\cap W-S^\mA.    
  \end{equation}
  We are going to relate the left- and right-hand sides of this
  equality to those appearing in the statement of the lemma.
 
  Let $A$ be the union of the irreducible components of
  $\Clos{(\Polarlag(L^{\mA}, \u, \dalgo))}$ which have a non-empty
  intersection with $\Open(\polmu')$, so that we have, by an immediate
  verification:
  \begin{enumerate}
  \item [${\sf a_1}.$]
    $\Open(\polmu')\cap A=\Open(\polmu')\cap \Clos{(\Polarlag(L^{\mA},
    \u, \dalgo))}$, \smallskip
  \item [${\sf a_2}.$]
    $A=\overline{\Open(\polmu')\cap \Proj(\Polarlag(L^{\mA}, \u,
      \dalgo))}$,
    because $\Clos{(\Polarlag(L^{\mA}, \u, \dalgo))}$ is the Zariski
    closure of $\Proj(\Polarlag(L^{\mA}, \u, \dalgo))$.
  \end{enumerate}
  Similarly, let $B$ be the union of the irreducible components of $W$
  which have a non-empty intersection with
  $\Open(\polmu')-S^\mA$; in other words, using the notation
  given prior to this lemma, $B=Z_1 \cup \cdots \cup Z_{\ell'}$. We
  claim that $B$ is also the union of the irreducible components of
  $W$ which have a non-empty intersection with
  $\Open(\polmu'\poldelta'_\u)-S^\mA$. Consider indeed an index
  $i$ in $\{1,\dots,\ell'\}$. By construction of $\z_i$,
  $\poldelta^\mA(\z_i)$ is non-zero, and by assumption on $\u$,
  $\Lambda(\u,\z_i)$ is non-zero; thus, $\poldelta'_\u$ does not
  vanish at $\z_i$.  Our claim is thus proved (since the converse
  inclusion is immediate), so as above, we have
  \begin{enumerate}
  \item [${\sf b_1}.$]
    $\Open(\polmu')\cap B-S^\mA=\Open(\polmu')\cap
    W-S^\mA$,
\smallskip
  \item [${\sf b_2}.$]
    $B=\overline{\Open(\polmu'\poldelta'_\u)\cap W-S^\mA}$
    (where we use the second characterization of $B$).
  \end{enumerate}
  Using Eq.~\eqref{eq:WProj}, as well as ${\sf a_2}$ and ${\sf b_2}$,
  we deduce that $A=B$. Finally, using ${\sf a_1}$ and ${\sf b_1}$, we conclude that
  $$\Open(\polmu')\cap \Clos{(\Polarlag(L^{\mA}, \u, \dalgo))}-S^\mA=\Open(\polmu')\cap W-S^\mA,$$
  as claimed.
\end{proof}

We can now conclude the proof of Proposition \ref{prop:local}.  Take $\u$ in
$$\mathscr{I}(L,\phi,\mA,m',m'',\scrX)\cap \QQ^P.$$ As we saw in the
proof of the previous lemma, $\Lambda(\u,\X)$ is non-zero, so
$\poldelta'_\u$ is non-zero and $\H'_\u$ is well-defined. We now prove
that $\phi'_\u=(\polmu',\poldelta'_\u, \h',\H'_\u)$ is a local normal
form for $\Polarlag(L^{\mA}, \u, \dalgo)$.
\begin{enumerate}
\item[${\lnf_1}.$] By construction, $\polmu'$ and $\poldelta'_\u$ are in 
  $\QQ[\X]-\{0\}$ and $\H'_\u$ is in normal form, with $\X$-component~$\h'$.
\smallskip
\item[${\lnf_2}.$] On one hand, we have $|\H'_\u| =
  |\H|+n-e-c-\dalgo+1+P$. On the other hand, Lemma~\ref{lemma:GLS:typeW}
  shows that $|\F'_\u|=P+N-e-\dalgo+1$. By ${\lnf_2}$ for $L$, we know
  that $|\H|+n-c=N$, so that $|\H'_\u|=|\F'_\u|$.
\smallskip
\item[${\lnf_3}.$] We proved in Lemma~\ref{lemma:GLS:Hprime} that
  the equality $$\langle \F'_\u,I \rangle = \langle \H'_\u,I \rangle$$
  holds in $\QQ[\X,\L]_{\polmu' \poldelta'_\u}$.
\smallskip
\item[${\lnf_4}.$] Since $\Open(\polmu')\cap W-S^\mA$ is not empty,
  Lemma~\ref{sec:chart:lemma:polarchart} shows that $(\polmu',\h')$ is
  a chart of $(W,Q,S^\mA)$.
  Lemma~\ref{sec:lagrange:lemma:asstert3bislocal} shows that
  $\Open(\polmu')\cap \Clos{(\Polarlag(L^{\mA}, \u, \dalgo))} -S^\mA =
  \Open(\polmu')\cap W -S^\mA$,
  so $(\polmu',\h')$ is also a chart of
  $(\Clos{(\Polarlag(L^{\mA}, \u, \dalgo))},Q,S^\mA)$.  \smallskip
\item[${\lnf_5}.$] This is a restatement of Lemma~\ref{sec:lagrange:lemma:L5}.
\end{enumerate}
The last point is to prove that $\poldelta'_\u$ vanishes nowhere on
$\scrX$. Indeed, by construction, for all $\z$ in $\scrX$,
$\poldelta^\mA(\z)$ is non-zero (by assumption on $\scrX$) and
$\Lambda(\u,\z)$ is non-zero (by definition of
$\mathscr{I}(L,\phi,\mA,m',m'',\scrX)$).


\subsection{Proof of the proposition}\label{ssec:proof:prop:transfer:polar}

The rest of this paragraph is devoted to prove
Proposition~\ref{sec:lagrange:prop:transfer:polar}.
We start by defining the family of local normal forms we will use for
the generalized Lagrange system $\Polarlag(\dalgo, L^{\mA}, \u)$.  Let
the global normal form $\bphi$ of $(L; W^{\mA^{-1}}, \scrY)$ be
written as $\bphi=(\phi_i)_{1 \le i \le s}$, with
$\phi_i=(\polmu_i,\poldelta_i,\h_i,\H_i)$ for all $i$. For $i$ in
$\{1,\dots,s\}$, we let $\psi_i=(\polmu_i,\h_i)$ be the chart of
$(V,Q,S)$ associated with $\phi_i$, so that
$\bpsi=(\psi_i)_{1 \le i \le s}$.

For all $(i,m',m'')$, where $i$ is in $\{1,\dots,s\}$ and $m',m''$ are
respectively a $c$-minor of $\jac(\h_i^\mA,e)$ and a $(c-1)$-minor of
$\jac(\h_i^\mA,e+\dalgo)$, we let
$(\polmu'_{i,m',m''},\h'_{i,m',m''})=\chartpolar(\psi_i^\mA,m',m'')$
be the polynomials introduced in
Definition~\ref{sec:chart:notation:polar}; in particular,
$\polmu'_{i,m',m''}=\polmu_i^\mA m'm''$. We define $\zeta$ as the set
of all these $(i,m',m'')$, such that
$\Open(\polmu'_{i,m',m''}) \cap W-S^\mA$ is not empty. Note that
$\zeta$ is empty if $W$ is empty.

Let $(i, m', m'')$ be in $\zeta$ and let $Z_1,\dots,Z_\ell$ be the
irreducible components of the sets $Y^\mA_1,\dots,Y^\mA_r$ such that
$Z_j \subset W$ and $\Open(\polmu'_{i,m',m''})\cap Z_j - S^\mA$
is not empty (note that the $Z_j$'s, as well as the index $\ell$,
depend on $(i,m',m'')$, although our notation does not reflect this).
For $j$ in $\{1,\dots,\ell\}$,
$\Open(\polmu_i^\mA) \cap Z_j -S^\mA$ is in particular not
empty; as a result, applying ${\gnf_3}$ to $Z_j^{\mA^{-1}}$ shows
that $\Open(\polmu_i^\mA \poldelta_i^\mA) \cap Z_j-S^\mA$ is not
empty. Because $Z_j$ is irreducible, this finally implies that
$\Open(\polmu'_{i,m',m''} \poldelta_i^\mA) \cap Z_j-S^\mA$ is
not empty; we thus let $\z_j$ be an element in this set and we set
$\scrX_{i,m',m''}=\{\z_1,\dots,\z_{\ell}\}$.

When $\zeta$ is empty, we set $\mathscr{I}(L,\bphi,\mA,\scrY)$ to be
the whole $\C^P$. When $\zeta$ is not empty,
$\mathscr{I}(L,\bphi,\mA,\scrY)$ will be defined using Proposition
\ref{prop:local}.  Let us first verify that for any $(i,m',m'')$ in
$\zeta$, the assumptions of Proposition~\ref{prop:local} are
satisfied. 

We take $(i,m',m'')$ as above. The definition of
$\scrGpolar(\bpsi,V,Q,S,{\dalgo})$ given in the proof of
Proposition~\ref{prop:ch4} proves that $\mA$ is in the non-empty
Zariski open set $\scrGpolarchart(\psi_i,V,Q,S,\dalgo)$ defined in
Lemma~\ref{sec:chart:lemma:polarchart}. The global normal form
assumption shows that for each irreducible component $Z$ of
$W^{\mA^{-1}}$ such that $\Open(\polmu_i)\cap Z-S$ is not empty,
$\Open(\polmu_i \poldelta_i)\cap Z-S$ is not empty. By construction of
$\zeta$, $\Open(\polmu'_{i,m',m''})\cap W-S^\mA$ is not
empty. Finally, $\scrX_{i,m',m''}$ is contained in
$\Open(\polmu'_{i,m',m''} \poldelta_i^\mA) \cap V^\mA-S^\mA$.

Applying Proposition~\ref{prop:local}, we deduce that there exists a
non-empty Zariski open subset
$$\mathscr{I}(L,\phi_i,\mA,m',m'',\scrX_{i, m',m''}) \subset \C^P$$ such
that for $\u$ in $\mathscr{I}(L,\phi_i,\mA,m',m'',\scrX_{i, m',
  m''})$, the following holds:
\begin{itemize}
\item there exists a non-zero $\poldelta'_{i,m',m'',\u}$ in $\QQ[\X]$
  and polynomials $\H'_{i,m',m'',\u}$ in
  $\QQ[\X,\L_{k+1}]_{\polmu'_{i,m',m''} \poldelta'_{i,m',m'',\u}}$
  such that
  $$\phi'_{i,m',m'',\u}= (\polmu'_{i,m',m''}, \poldelta'_{i,m',m'',\u},
  \h'_{i,m',m''}, \H'_{i,m',m'',\u})$$
  is a local normal form for $\Polarlag(L^\mA,\u,\dalgo)$;
\smallskip
\item $\poldelta'_{i,m',m'',\u}$ vanishes nowhere on $\scrX_{i,m',m''}$;
\smallskip
\item the sets
  $\Open(\polmu'_{i,m',m''})\cap \Clos{(\Polarlag(L^{\mA}, \u, \dalgo))}
  -S^\mA$ and $\Open(\polmu'_{i,m',m''})\cap W-S^\mA$ coincide.
\end{itemize}
Finally, let $\mathscr{I}(L,\bphi,\mA,\scrY)$ be the intersection
of all $\mathscr{I}(L,\phi_i,\mA,m',m'',\scrX_{i, m', m''})$, for $(i,
m',m'')$ in $\zeta$; this is a non-empty Zariski open subset of
$\C^P$. In what follows, we take $\u$ in
$\mathscr{I}(L,\bphi,\mA,\scrY)$ and we prove the assertions
in the proposition. We start with an easy lemma. 
\begin{lemma}\label{chap:GLS:polar:zeta}
  With the above notation,
  $\Open(\polmu'_{i,m',m''}) \cap \Clos{(\Polarlag(L^{\mA}, \u,
  \dalgo))}-S^\mA$
  is not empty if and only if $(i,m',m'')$ is in $\zeta$.
\end{lemma}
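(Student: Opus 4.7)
The plan is to prove the two directions of the equivalence separately, with the forward direction being essentially immediate and the reverse direction requiring a short but genuine geometric argument.

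For the direction $(i,m',m'')\in\zeta \Longrightarrow {\cal O}(\mu'_{i,m',m''})\cap V'_\u-S^\mA\neq\emptyset$, the preamble to the lemma has already verified that for $(i,m',m'')\in\zeta$ the hypotheses of Proposition~\ref{prop:local} hold (with the particular witness set $\scrX_{i,m',m''}$), and the third bullet of that proposition then provides the equality
\[
{\cal O}(\mu'_{i,m',m''})\cap V'_\u-S^\mA\ =\ {\cal O}(\mu'_{i,m',m''})\cap W-S^\mA.
\]
Since $(i,m',m'')\in\zeta$ means exactly that the right-hand side is non-empty, we get the non-emptiness of the left-hand side for free.

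For the converse direction, the cleanest route is to establish the global inclusion $V'_\u\subset W$; then ${\cal O}(\mu'_{i,m',m''})\cap V'_\u-S^\mA\subset {\cal O}(\mu'_{i,m',m''})\cap W-S^\mA$, and non-emptiness of the former forces $(i,m',m'')\in\zeta$. Since $W$ is Zariski closed, it is enough to show $U'_\u=\Proj({\cal W}(L^\mA,\u,\dalgo))\subset W$. Pick $\x\in U'_\u$ with witness $(\x,\bell_{\le k},\bell_{k+1})\in\Cons({\cal W}(L^\mA,\u,\dalgo))$. Unpacking Definition~\ref{sec:lagrange:notation:polar}, one sees that $(\x,\bell_{\le k})\in\Cons(L^\mA)$ and that $\bell_{k+1}$ is a non-zero left null vector of $\jac(\F^\mA,e+\dalgo)$ at $(\x,\bell_{\le k})$ (non-zeroness comes from the affine relation $\sum u_i\bell_{k+1,i}=1$). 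In particular $\x\in U^\mA$, so by ${\sf G_2}$ for $\bpsi^\mA$ combined with ${\sf L_5}$, there exists $j$ with $\x\in{\cal O}(\mu_j^\mA\delta_j^\mA)$.

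The rest is a local transfer of the rank defect from $\F^\mA$ to $\h_j^\mA$. Applying Lemma~\ref{lemma:mS} to the local normal form $\phi_j^\mA$ at $(\x,\bell_{\le k})$, one has $\jac(\H_j^\mA,e+\dalgo)=\mS_j\,\jac(\F^\mA,e+\dalgo)$ modulo $\langle\F^\mA,I\rangle$ with $\mS_j$ invertible at $\x$ (its determinant divides a non-vanishing minor). Therefore $\bell_{k+1}^t\mS_j^{-1}$ is a non-zero left null vector of $\jac(\H_j^\mA,e+\dalgo)$ at $(\x,\bell_{\le k})$. Because $\H_j$ is in normal form, its Jacobian truncated at column $e+\dalgo$ has the block shape $\left[\begin{smallmatrix}\jac(\h_j^\mA,e+\dalgo)&0\\ B& I\end{smallmatrix}\right]$, so its rank drops below $P$ precisely when $\jac(\h_j^\mA,e+\dalgo)$ drops below $c$ at $\x$. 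Since $\psi_j^\mA=(\mu_j^\mA,\h_j^\mA)$ is a chart of $(V^\mA,Q,S^\mA)$ and $\x\in{\cal O}(\mu_j^\mA)\cap V^\mA-S^\mA$, Lemma~\ref{sec:atlas:chartpolar} concludes that $\x\in W(e,\dalgo,V^\mA)=W$, as needed. The only real subtlety is making sure all the blocks and localizations are handled cleanly; once that book-keeping is in place the argument is mechanical.
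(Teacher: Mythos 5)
Your proof is correct, and the forward implication coincides with the paper's (both just quote the third conclusion of Proposition~\ref{prop:local} for triples in $\zeta$). The converse, however, follows a genuinely different route. The paper stays local to the given triple $(i,m',m'')$: from $\x\in{\cal O}(\mu'_{i,m',m''})\cap U'_\u-S^\mA$ it gets $\delta_i^\mA(\x)\neq 0$ via ${\sf L_5}$, reads off from the explicit generators of $\langle\F'_\u,I\rangle$ in Lemma~\ref{lemma:GLS:lagF} that $\x$ cancels $\h'_{i,m',m''}$, and then invokes the first item of Lemma~\ref{sec:chart:lemma:polarchart} to place $\x$ in $W$. You instead prove the global inclusion $U'_\u\subset W$ (hence $V'_\u\subset W$) by interpreting $\bell_{k+1}$ as a nonzero left null vector of $\jac(\F^\mA,e+\dalgo)$, transferring the rank defect to $\jac(\h_j^\mA,e+\dalgo)$ through the invertible matrix $\mS_j$ of Lemma~\ref{lemma:mS} and the block shape of a normal form, and concluding with Lemma~\ref{sec:atlas:chartpolar}; the desired equivalence then follows by intersecting with ${\cal O}(\mu'_{i,m',m''})$. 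Your argument is slightly stronger and cleaner in that it does not depend on the genericity of $\u$ and in effect establishes half of the subsequent lemma ($V'_\u\subset W$) for free; the paper's version buys economy by reusing machinery (Lemma~\ref{lemma:GLS:lagF}) that it needs anyway for the reverse inclusion $W\subset V'_\u$. The only points to keep tidy in your write-up are the ones you already flag: $\det(\mS_j)(\x)\neq 0$ comes from ${\sf L_4}$/${\sf C_4}$ plus the divisibility statement of Lemma~\ref{lemma:mS}, and the matrix identity $\jac(\H_j^\mA,e+\dalgo)=\mS_j\,\jac(\F^\mA,e+\dalgo)$ only holds modulo $\langle\F^\mA,I\rangle$, which is harmless since you evaluate it at a point of $\fbr(V(\F^\mA),Q)$.
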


\begin{proof}
  Suppose first that $(i,m',m'')$ is in $\zeta$. By assumption on
  $\u$, the three items above hold; the third one, and the fact that
  $(i,m',m'')$ is in $\zeta$, imply that
  $\Open(\polmu'_{i,m',m''}) \cap \Clos{(\Polarlag(L^{\mA}, \u,
  \dalgo))}-S^\mA$ is not empty.

  Conversely, suppose now that
  $\Open(\polmu'_{i,m',m''}) \cap \Clos{(\Polarlag(L^{\mA}, \u,
  \dalgo))}-S^\mA$
  is not empty. Because $\Clos{(\Polarlag(L^{\mA}, \u, \dalgo))}$ is the
  Zariski closure of $\Proj(\Polarlag(L^{\mA}, \u, \dalgo))$, we
  deduce that
  $\Open(\polmu'_{i,m',m''}) \cap \Proj(\Polarlag(L^{\mA}, \u,
  \dalgo))-S^\mA$
  is not empty. Take $\x$ in this set.  Because
  $\Proj(\Polarlag(L^{\mA}, \u, \dalgo))$ is contained in
  $\Proj(L)^\mA$, we deduce from ${\lnf_5}$ applied to $\phi_i^\mA$
  that $\poldelta^\mA_i$ does not vanish at
  $\x$. Lemma~\ref{lemma:GLS:lagF} then implies that $\x$ cancels
  $\h'_{i,m',m''}$, so that $\x$ is in $\fbr(V(\h'_{i,m',m''}), Q)$.
  The first item in Lemma~\ref{sec:chart:lemma:polarchart} implies
  that $\x$ is in $W$, so we are done.
\end{proof}

\begin{lemma}
  For $\u$ in $\mathscr{I}(L,\bphi,\mA,\scrY)$, the equality
  $\Clos{(\Polarlag(L^{\mA}, \u, \dalgo))}=W$ holds.
\end{lemma}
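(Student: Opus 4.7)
The plan is to prove the two inclusions $V'_\u\subset W$ and $W\subset V'_\u$ separately, by gluing together the local information provided by the family $(\phi'_{i,m',m'',\u})_{(i,m',m'')\in\zeta}$ coming out of Proposition~\ref{prop:local}, and by exploiting two covering properties: item~(3) of Lemma~\ref{sec:chart:lemma:polarchart} to cover $V^\mA-S^\mA$, and the fact that ${\cal W}(\bpsi^\mA,V^\mA,Q,S^\mA,\dalgo)$ is an atlas of $(W,Q,S^\mA)$, as given by Proposition~\ref{sec:atlas:prop:summary2}. The previous Lemma~\ref{chap:GLS:polar:zeta} will be the bridge between ``lying in some ${\cal O}(\mu'_{i,m',m''})$'' and ``belonging to an index in $\zeta$''.

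For $V'_\u\subset W$, it suffices (since $W$ is closed) to show $U'_\u\subset W$. The first step is to take an arbitrary $\x\in U'_\u$; by definition $\x\notin S^\mA$ and, since $U'_\u\subset U^\mA$, property ${\sf A_3}$ applied to the atlas $\bpsi^\mA$ of $(V^\mA,Q,S^\mA)$ provides an index $i$ with $\mu_i^\mA(\x)\neq 0$. Because the assumption $2\leq\dalgo\leq(d+3)/2$ lets us invoke item~(3) of Lemma~\ref{sec:chart:lemma:polarchart}, I can find minors $m',m''$ such that $\mu'_{i,m',m''}(\x)=\mu_i^\mA(\x)\,m'(\x)\,m''(\x)\neq 0$. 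Then $\x\in{\cal O}(\mu'_{i,m',m''})\cap V'_\u-S^\mA$, so Lemma~\ref{chap:GLS:polar:zeta} forces $(i,m',m'')\in\zeta$. The third assertion of Proposition~\ref{prop:local}, built into the definition of $\mathscr{I}(L,\bphi,\mA,\scrY)$, then gives
\[
\x\in{\cal O}(\mu'_{i,m',m''})\cap V'_\u-S^\mA={\cal O}(\mu'_{i,m',m''})\cap W-S^\mA\subset W.
\]
When $W$ is empty, $\zeta$ is empty; running the same argument shows that $U'_\u$ must be empty as well, hence $V'_\u=\emptyset=W$.

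For the reverse inclusion $W\subset V'_\u$, assume $W\neq\emptyset$. By Proposition~\ref{sec:atlas:prop:summary2}, $(W,Q)$ satisfies $(\AS,\dalgo-1,e)$, and ${\cal W}(\bpsi^\mA,V^\mA,Q,S^\mA,\dalgo)=(\psi'_{i,m',m''})_{(i,m',m'')\in\zeta}$ is an atlas of $(W,Q,S^\mA)$, so property ${\sf A_3}$ says the open sets ${\cal O}(\mu'_{i,m',m''})$ with $(i,m',m'')\in\zeta$ cover $W-S^\mA$. On each such open set, the equality ${\cal O}(\mu'_{i,m',m''})\cap W-S^\mA={\cal O}(\mu'_{i,m',m''})\cap V'_\u-S^\mA$ from Proposition~\ref{prop:local} gives a containment into $V'_\u$; taking the union over $\zeta$ yields $W-S^\mA\subset V'_\u$. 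Finally, because $\dalgo\geq 2$, $W$ is $(\dalgo-1)$-equidimensional of positive dimension and $S^\mA$ is finite, so $W-S^\mA$ is Zariski dense in $W$; passing to Zariski closures concludes $W\subset V'_\u$.

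The main obstacle is really bookkeeping at the first step: one must translate membership of a point of $U'_\u$ into membership of a patch indexed by $\zeta$, which hinges on the covering property of Lemma~\ref{sec:chart:lemma:polarchart}(3) (and thus on the hypothesis $\dalgo\leq(d+3)/2$ and the genericity of $\mA$ inherited from $\mathscr{H}(\bpsi,V,Q,S,\dalgo)$). Once that patch is located, Lemma~\ref{chap:GLS:polar:zeta} together with the third bullet of Proposition~\ref{prop:local} does the geometric work; the second inclusion is then mostly formal, modulo the density of $W-S^\mA$ in $W$.
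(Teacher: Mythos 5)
Your proof is correct and uses essentially the same ingredients and structure as the paper's: Lemma~\ref{chap:GLS:polar:zeta} to identify the relevant patches with the index set $\zeta$, the covering from item~(3) of Lemma~\ref{sec:chart:lemma:polarchart} (resp.\ the atlas property from Proposition~\ref{sec:atlas:prop:summary2}), the patchwise equality built into the definition of $\mathscr{I}(L,\bphi,\mA,\scrY)$, and the final density argument $\overline{W-S^\mA}=W$ when $\dalgo\ge 2$. The only difference is organizational — you prove the two inclusions separately (one pointwise, one by covering) where the paper shows directly that $V'_\u-S^\mA$ and $W-S^\mA$ are covered by the same patches on which they coincide — and this changes nothing of substance.
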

\begin{proof}
  For all $i$ in $\{1,\dots,s\}$, let $\zeta'_i$ be the set of all
  triples $(i, m',m'')$, where $m'$ and $m''$ are respectively
  $c$-minors of $\jac(\h_i^\mA,e)$ and $(c-1)$-minors of
  $\jac(\h_i^\mA,e+\dalgo)$, and let $\zeta_i$ be the subset of $\zeta'_i$
  for which $\Open(\polmu'_{i,m',m''})\cap W-S^\mA$ is not empty. In
  particular, $\zeta$ is the union of all $\zeta_i$; similarly, we let
  $\zeta'$ be the union of all $\zeta'_i$.

  By Lemma \ref{chap:GLS:polar:zeta},
  $\Open(\polmu'_{i,m',m''}) \cap \Clos{(\Polarlag(L^{\mA}, \u,
  \dalgo))}-S^\mA$
  is not empty if and only if $(i,m',m'')$ is in $\zeta$.  We are
  going to use this remark to prove first that
  $\Clos{(\Polarlag(L^{\mA}, \u, \dalgo))}-S^\mA=W-S^\mA$.


  Let $i$ be in $\{1,\dots,s\}$. We know from the third item in
  Lemma~\ref{sec:chart:lemma:polarchart} that the sets
  $\Open(\polmu'_{i,m',m''}) -S^\mA$, for $(m',m'')$ in $\zeta'_i$,
  cover $\Open(\polmu_i^\mA)\cap V^\mA-S^\mA$. Because
  $\bpsi^\mA$ is an atlas of $(V^\mA,Q,S^\mA)$, the sets
  $\Open(\polmu_i^\mA)\cap V^\mA-S^\mA$ themselves cover
  $V^\mA-S^\mA$, and we deduce that the sets
  $\Open(\polmu'_{i,m',m''}) -S^\mA$, for $(i,m',m'')$ in
  $\zeta'$, cover $V^\mA-S^\mA$.

  Since both $\Clos{(\Polarlag(L^{\mA}, \u, \dalgo))}$ and $W$ are
  subsets of $V^\mA$, these sets cover in particular
  $\Clos{(\Polarlag(L^{\mA}, \u, \dalgo))}-S^\mA$ and
  $W-S^\mA$. However, we saw above that the only triples $(i,m',m'')$
  for which the intersections $\Open(\polmu'_{i,m',m''}) \cap W-S^\mA$
  or
  $\Open(\polmu'_{i,m',m''}) \cap \Clos{(\Polarlag(L^{\mA}, \u,
    \dalgo))}-S^\mA$
  are not empty are those in $\zeta$ (this is by construction of
  $\zeta$ for $W$ and by Lemma~\ref{chap:GLS:polar:zeta} for
  $\Clos{(\Polarlag(L^{\mA}, \u, \dalgo))}$). Thus, we deduce that the
  sets $\Open(\polmu'_{i,m',m''}) -S^\mA$, for $(i,m',m'')$ in
  $\zeta$, cover both $\Clos{(\Polarlag(L^{\mA}, \u, \dalgo))}-S^\mA$
  and $W-S^\mA$.

  On the other hand, due to our choice of $\u$, we have seen that the
  following holds for all $(i, m', m'')$ in $\zeta$:
  $$\Open(\polmu'_{i, m', m''})\cap \Clos{(\Polarlag(L^{\mA}, \u, \dalgo))} - 
  S^\mA=\Open(\polmu'_{i, m', m''})\cap W-S^\mA.$$
  The last two paragraphs imply that
  $\Clos{(\Polarlag(L^{\mA}, \u, \dalgo))}-S^\mA=W-S^\mA$, as
  claimed. Since $\Clos{(\Polarlag(L^{\mA}, \u, \dalgo))}$ is the
  Zariski closure of $\Proj(\Polarlag(L^{\mA}, \u, \dalgo))$, which
  does not intersect $S^\mA$, we deduce that
  $\Clos{(\Polarlag(L^{\mA}, \u, \dalgo))}$ is also the Zariski
  closure of $\Clos{(\Polarlag(L^{\mA}, \u, \dalgo))}-S^\mA$.

  If $W$ is empty, we are done (since then
  $\Clos{(\Polarlag(L^{\mA}, \u, \dalgo))}-S^\mA$ is empty, and thus
  its Zariski closure $\Clos{(\Polarlag(L^{\mA}, \u, \dalgo))}$ is
  empty as well).  On the other hand, if $W$ is not empty, the facts
  that $V$ is equidimensional of dimension $d$, with finitely many
  singular points, and that $\mA$ is in
  $\scrGpolar(\bpsi,V,Q,S,\dalgo)$ show that one can apply
  Proposition~\ref{prop:ch4} and deduce that $W$ is
  $(\dalgo-1)$-equidimensional. Since $\dalgo \ge 2$ (so that
  $\dalgo-1 \ge 1$) and $S^\mA$ is finite, $W$ is the Zariski closure
  of $W-S^\mA$. The lemma is proved.
\end{proof}

We can now conclude the proof of the proposition.  For $\u$ in
$\mathscr{I}(L,\bphi,\mA,\scrY)$, we already know that
$\Polarlag(L^\mA,\u,\dalgo)$ is a generalized Lagrange system, and the
previous lemma shows that $\Clos{(\Polarlag(L^{\mA}, \u, \dalgo))}$ is
equal to $W$. Now, we assume that $W$ is not empty; it remains to
construct a global normal form for it.

Let $\bphi'_\u$ be the set of all local normal forms
$\phi'_{i,m',m'',\u}$ defined above, for $(i,m',m'')$ in $\zeta$. We
prove that $\bphi'_\u$ is a global normal form for
$(\Polarlag(L^\mA,\u,\dalgo);\scrY^\mA)$, and that
$\atlaspolar(\bpsi^\mA,V^\mA,Q,S^\mA,\dalgo)$ is the associated atlas
of $(W,Q,S^\mA)$.
\begin{enumerate}
\item [${\gnf_1}.$] We saw above that all $\phi'_{i,m',m'',\u}$ are
  local normal forms for $\Polarlag(L^\mA,\u,\dalgo)$.
\smallskip

\item [${\gnf_2}.$] We must now prove that the sets
  $$\psi'_{i,m',m''}=(\polmu'_{i,m',m''},\h'_{i,m',m''}),$$ for
  $(i,m',m'')\in\zeta$, form an atlas of
  $(\Clos{(\Polarlag(L^\mA,\u,\dalgo))},Q,S^\mA)$, or equivalently of
  $(W,Q,S^\mA)$. Remark that this family precisely
  defines $$\atlaspolar(\bpsi^\mA,V^\mA,Q,S^\mA,\dalgo).$$ Recall that
  $V$ is $d$-equidimensional, with finitely many singular points, and
  that $\mA$ is in $\scrGpolar(\bpsi,V,Q,S,\dalgo)$; hence, all
  assumptions of Proposition~\ref{prop:ch4} are satisfied. That
  proposition~\ref{prop:ch4} proves that
  $\atlaspolar(\bpsi^\mA,V^\mA,Q,S^\mA,\dalgo)$ is an atlas of
  $(W,Q,S^\mA)$, so our claim is proved.  \smallskip
  
\item[${\gnf_3}.$] Recall that we write $\scrY=Y_1,\dots,Y_r$.  Let
  $Z$ be an irreducible component of $Y^\mA_j$, for some $j$ in
  $\{1,\dots,r\}$. Suppose that $Z$ is contained in $W$, and let
  $(i,m',m'') \in \zeta$ be such that
  $\Open(\polmu'_{i,m',m''}) \cap Z-S^\mA$ is not empty. We have
  to prove that $\poldelta'_{i,m',m'',\u}$ does not vanish identically
  on $Z$.

  By construction, for such a $Z$, there exists an element $\z$ in the
  finite set $\scrX_{i,m',m''} \cap Z$. We saw previously that for our
  choice of $\u$, $\poldelta'_{i,m',m'',\u}$ vanishes nowhere on
  $\scrX_{i,m',m''}$; as a result, $\poldelta'_{i,m',m'',\u}$ does not
  vanish at $\z$, and thus does not vanish identically on $Z$.
\end{enumerate}




\section{Proof of Proposition~\ref{sec:lagrange:prop:transfer:fiber}}\label{chapter:constructionsfiber}

In this section, we prove
Proposition~\ref{sec:lagrange:prop:transfer:fiber}, whose statement is
as follows: {\em Let $Q \subset \C^e$ be a finite set and let
  $V \subset \C^n$ and $S\subset \C^n$ be algebraic sets lying over
  $Q$, with $S$ finite.  Suppose that $V$ is equidimensional of
  dimension $d$, with finitely many singular points.
  
  Let $\bpsi$ be an atlas of $(V,Q,S)$, let $\dalgo$ be an integer in
  $\{2,\dots,d\}$ such that $\dalgo \le (d+3)/2$, and let $\mA\in
  \GL(n,e)$ be in the open set $\scrGfiber(\bpsi,V,Q,S,\dalgo)$
  defined in Proposition~\ref{sec:atlas:prop:summary1}; write
  $W=\polar(e,\dalgo,V^\mA)$.

  Let $\param2$ and $\paramsing2$ be zero-dimensional parametrizations
  with coefficients in $\QQ$ that respectively define a finite set
  $\fiber2 \subset \C^{e+\dalgo-1}$ lying over $Q$ and the set
  $\fibersing2 = \fbr(S^\mA \cup W,\fiber2)$, and let
  $\Vfiber=\fbr(V^\mA,\fiber2)$.

  Let $L=(\Gamma,\scrQ,\scrS)$ be a generalized Lagrange system such
  that $V=\Clos{(L)}$, $Q=\Zeroes(\scrQ)$ and $S=\Zeroes(\scrS)$. Let
  $\scrY=(Y_1,\dots,Y_r)$ be algebraic sets in $\C^n$ and let finally
  $\bphi$ be a global normal form for $(L;({\Vfiber}^{\mA^{-1}},
  \scrY))$ such that $\bpsi$ is the associated atlas of $(V,Q,S)$.
  Then the following holds:
  \begin{itemize}
  \item $\Fiberlag(L^\mA, \param2, \paramsing2)$ is a generalized
    Lagrange system which defines $\Vfiber;$
\smallskip
  \item if $\Vfiber$ is not empty,
    $(\Fiberlag(L^\mA, \param2, \paramsing2); \scrY^\mA)$ admits a
    global normal form whose atlas is
    $\atlasfiber(\bpsi^\mA,V^\mA,Q,S^\mA,\fiber2)$ (Definition
    \ref{sec:atlas:notation:fiber}).
 \end{itemize}
}

As we did in the previous section, we start with a local analysis
which we use to prove the global statement.


\subsection{Local analysis}

In this paragraph, we consider a local normal form
$\phi=(\polmu,\poldelta,\h,\H)$ of $L$. We show how to deduce a local
normal form for $\Fiberlag(L^\mA,\param2,\paramsing2)$, for a suitable
choice of $\paramsing2$.

\begin{proposition}\label{sec:lagrange:prop:local:fibers}
  Let $Q \subset \C^e$ be a finite set and let $V \subset \C^n$ and
  $S\subset \C^n$ be algebraic sets lying over $Q$, with $S$ finite.
  Suppose that $V$ is $d$-equidimensional with finitely many singular
  points.

  Let $L=(\Gamma,\scrQ,\scrS)$ be a generalized Lagrange system of
  type $(k,\n,\p,e)$ such that $V=\Clos{(L)}$, $Q=\Zeroes(\scrQ)$ and
  $S=\Zeroes(\scrS)$. Let $\phi=(\polmu,\poldelta,\h,\H)$ be a local
  normal form for $L$ and let $\psi=(\polmu,\h)$ be the associated
  chart of $(V,Q,S)$. Let $\dalgo$ be an integer in $\{2,\dots,d\}$,
  such that $\dalgo \le (d+3)/2$, let $\mA \in \GL(n,e)$ be in the
  open set $\scrGfiberchart(\psi,V,Q,S,\dalgo)$ defined in
  Lemma~\ref{lemma:chart:fiber} and let $W=\polar(e,\dalgo,V^\mA)$.

  Let $\param2$ and $\paramsing2$ be zero-dimensional parametrizations
  with coefficients in $\QQ$, that respectively define a finite set
  $\fiber2 \subset \C^{e+\dalgo-1}$ lying over $Q$ and the set
  $\fibersing2 = \fbr(S^\mA \cup W,\fiber2)$, and let
  $\Vfiber=\fbr(V^\mA,\fiber2)$.  If $\Open(\polmu^\mA)\cap
  \Vfiber-\fibersing2$ is not empty, then $\phi^\mA$ is a local normal
  form for $\Fiberlag(L^\mA,\param2,\paramsing2)$.
\end{proposition}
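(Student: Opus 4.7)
The plan is to verify the five conditions $\sf L_1$--$\sf L_5$ of Definition~\ref{chap:lagrange:def:nf} for the tuple $\phi^\mA = (\mu^\mA,\delta^\mA,\h^\mA,\H^\mA)$ against the generalized Lagrange system ${\cal F}(L^\mA,\param2,\paramsing2) = (\Gamma^\mA,\param2,\paramsing2)$. Write $I\subset\QQ[X_1,\dots,X_e]$ for the defining ideal of $Q$ and $I'\subset\QQ[X_1,\dots,X_{e+\dalgo-1}]$ for the defining ideal of $\fiber2$; since $\fiber2$ lies over $Q$, we have $I\subset I'$. Also abbreviate $U^\mA=\Proj(L^\mA)$ and $U'=\Proj({\cal F}(L^\mA,\param2,\paramsing2)) = \pi_\X(\fbr(V(\F^\mA),\fiber2)-\pi_\X^{-1}(\fibersing2))$.

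First, $\sf L_1$ and $\sf L_2$ are essentially syntactic: since $\mA\in\GL(n,e)$ acts only on the $\X$-variables (leaving the first $e$ of them fixed), applying $\mA$ to $\phi$ preserves the normal-form shape, the non-vanishing of $\mu$ and $\delta$, and the cardinality $|\H|=|\F|$. For $\sf L_3$, I start from the equality $\langle\F,I\rangle=\langle\H,I\rangle$ in $\QQ[\X,\L]_{\mu\delta}$ given by $\sf L_3$ for $\phi$; since $\mA$ fixes the first $e$ coordinates, $I^\mA=I$, and we obtain $\langle\F^\mA,I\rangle=\langle\H^\mA,I\rangle$ in $\QQ[\X,\L]_{\mu^\mA\delta^\mA}$. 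Adjoining the extra generators of $I'$ (a superset of $I$) to both sides then yields $\langle\F^\mA,I'\rangle=\langle\H^\mA,I'\rangle$, which is exactly $\sf L_3$ for $\phi^\mA$.

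Condition $\sf L_4$ is the substantive one, but it has been done for us: since $\psi=(\mu,\h)$ is a chart of $(V,Q,S)$ by $\sf L_4$ for $\phi$, since $\mA$ lies in the open set $\mathscr{G}'(\psi,V,Q,S,\dalgo)$ and since $\fibersing2 = \fbr(S^\mA\cup W(e,\dalgo,V^\mA),\fiber2)$ is precisely the set appearing in the conclusion of Lemma~\ref{lemma:chart:fiber}, the hypothesis that ${\cal O}(\mu^\mA)\cap \Vfiber-\fibersing2$ be non-empty lets us apply that lemma directly and conclude that $(\mu^\mA,\h^\mA)=\psi^\mA$ is a chart of $(\Vfiber,\fiber2,\fibersing2)$.

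Finally, for $\sf L_5$, the main observation is the inclusion $U'\subset U^\mA$: any $\x\in U'$ lifts to some $(\x,\bell)$ with $\pi_{e+\dalgo-1}(\x)\in\fiber2$, hence $\pi_e(\x)\in Q$; moreover $\x\notin S^\mA$, for otherwise $\x\in\fbr(S^\mA,\fiber2)\subset\fibersing2$, contradicting $(\x,\bell)\notin\pi_\X^{-1}(\fibersing2)$. Given this, take $\x\in{\cal O}(\mu^\mA)\cap U'$; then $\x\in{\cal O}(\mu^\mA)\cap U^\mA$, and applying $\sf L_5$ to the local normal form $\phi^\mA$ of $L^\mA$ (which follows from $\sf L_5$ for $\phi$ after change of variables) yields $\delta^\mA(\x)\ne 0$. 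Hence ${\cal O}(\mu^\mA)\cap U' = {\cal O}(\mu^\mA\delta^\mA)\cap U'$, the reverse inclusion being trivial. No step here is a real obstacle; the key ingredient is Lemma~\ref{lemma:chart:fiber}, and the proposition is a clean translation of it from the chart language to the local-normal-form language.
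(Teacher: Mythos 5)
Your proof is correct and follows essentially the same route as the paper: $\sf L_1$–$\sf L_3$ are transported syntactically (with the same observation that $I\subset I'$ handles $\sf L_3$), $\sf L_4$ is exactly the application of Lemma~\ref{lemma:chart:fiber} made possible by the hypothesis on $\mA$ and the non-emptiness assumption, and $\sf L_5$ follows from the inclusion $U'\subset U^\mA$ by intersecting. Your justification of that inclusion (ruling out $\x\in S^\mA$ via $\fbr(S^\mA,\fiber2)\subset\fibersing2$) is slightly more explicit than the paper's ``by construction,'' but the argument is the same.
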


In what follows, we write $\F$ for the polynomials
computed by $\Gamma$. Suppose that
$\Open(\polmu^\mA)\cap \Vfiber-\fibersing2$ is not empty and let
$\mA$, and all further notation, be as in the proposition; note in
particular that $\phi^\mA=(\polmu^\mA,\poldelta^\mA,\h^\mA,\H^\mA)$.
The following items check the validity of ${\lnf_1},\dots,{\lnf_5}$.
\begin{enumerate}
\item[${\lnf_1}.$] Because $\phi$ is a local normal form for $L$,
  $\phi^\mA$ is a local normal form for $L^\mA$. Then, since
  ${\lnf_1}$ concerns only the polynomials in $\phi^\mA$, it continues
  to hold here.
\smallskip
\item[${\lnf_2}.$] For the same reason, and because the defining
  equations in $\Fiberlag(L^\mA,\param2,\paramsing2)$ are simply
  $\F^\mA$, $\lnf_2$ remains valid.
\smallskip

\item[${\lnf_3}.$] Property $\lnf_3$ for $L$ states that
  $\langle \F,I \rangle = \langle \H,I \rangle $ in
  $\QQ[\X,\L]_{\polmu \poldelta}$; it implies the equality
  $\langle \F^\mA,I \rangle = \langle \H^\mA,I \rangle $ in
  $\QQ[\X,\L]_{\polmu^\mA \poldelta^\mA}$. Let then $I' \subset \QQ[\X]$ be
  the defining ideal of $\fiber2$.  Adding $I'$ to both sides of the
  former equality gives the requested
  $\langle \F^\mA,I' \rangle = \langle \H^\mA,I' \rangle $ in
  $\QQ[\X,\L]_{\polmu^\mA \poldelta^\mA}$, since $ I \subset I'$.
\smallskip

\item[${\lnf_4}.$] Because
  $\Open(\polmu^\mA)\cap \Vfiber-\fibersing2$ is not empty and $\mA$
  is in $\scrGfiberchart(\psi,V,Q,S,\dalgo)$, Lemma~\ref{lemma:chart:fiber}
  shows that $\psi^\mA=(\polmu^\mA,\h^\mA)$ is a chart of
  $(\Vfiber,\fiber2,\fibersing2)$.
\smallskip

\item[${\lnf_5}.$] By construction, $\Proj(\Fiberlag(L^\mA,
  \param2, \paramsing2))$
  is contained in $\Proj(L^\mA)$. Applying ${\lnf_5}$ for
  $L^\mA$, we deduce that
  $\Open(\polmu^\mA) \cap U^\mA = \Open(\polmu^\mA
  \poldelta^\mA) \cap \Proj(L^\mA)$.
  Intersecting with $\Proj(\Fiberlag(L^\mA,
  \param2, \paramsing2))$ proves ${\lnf_5}$.
\end{enumerate}


\subsection{Proof of the proposition}\label{ssec:proof:prop:transfer:fiber}

We can now prove Proposition~\ref{sec:lagrange:prop:transfer:fiber}.
Since we assumed that $\mA$ is in $\scrGfiber(\bpsi,V,Q,S,\dalgo)$,
all assumptions of Proposition~\ref{sec:atlas:prop:summary1} are
satisfied and we deduce that $\Vfiber$ is either empty or $\Vfiber$
equidimensional of dimension $d-(\dalgo-1)$, with finitely many
singular points.

We already know that $\Fiberlag(L^\mA, \param2, \paramsing2)$ is a
generalized Lagrange system; the next lemmas then prove that
$\Vfiber=\Clos{(\Fiberlag(L^\mA, \param2, \paramsing2))}$.  Below, we
write $\bpsi=(\polmu_i,\h_i)_{1 \le i \le s}$ and
$\bphi=(\phi_1,\dots,\phi_s)$, with
$\phi_i=(\polmu_i,\poldelta_i,\h_i,\H_i)$ for $i$ in $\{1,\dots,s\}$.

\begin{lemma}\label{lemma:clot}
  $\Vfiber$ is the Zariski closure of $\fbr(\Proj(L)^\mA,\fiber2)$.
\end{lemma}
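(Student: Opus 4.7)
The inclusion $\fbr(U^\mA,\fiber2) \subset \Vfiber = \fbr(V^\mA,\fiber2)$ is immediate from $U^\mA \subset V^\mA$, so the Zariski closure of $\fbr(U^\mA,\fiber2)$ is contained in $\Vfiber$. The plan for the reverse inclusion is to use the global normal form property, specifically ${\sf G_3}$ applied to the irreducible components of ${\Vfiber}^{\mA^{-1}}$, to produce, for each irreducible component of $\Vfiber$, a dense open subset that lies in $\fbr(U^\mA,\fiber2)$.

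First, if $\Vfiber$ is empty there is nothing to prove. Otherwise, since $\mA \in \mathscr{H}(\bpsi,V,Q,S,\dalgo)$, Proposition~\ref{sec:atlas:prop:summary2} ensures that $(\Vfiber,\fiber2)$ satisfies $(\AS,d-(\dalgo-1),e+\dalgo-1)$; in particular $\Vfiber$ is equidimensional of dimension $d-(\dalgo-1) \ge 1$ (using $\dalgo \le (d+3)/2$ and $\dalgo \ge 2$). Let $Z'$ be any irreducible component of $\Vfiber$ and set $Z={Z'}^{\mA^{-1}}$; then $Z$ is an irreducible component of ${\Vfiber}^{\mA^{-1}}$ and $Z \subset V$. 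Since $S$ is finite and $\dim(Z) \ge 1$, the set $Z-S$ is nonempty.

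Write $\bphi=(\phi_i)_{1\le i \le s}$ with $\phi_i=(\mu_i,\delta_i,\h_i,\H_i)$ and $\bpsi=(\mu_i,\h_i)_{1\le i\le s}$. By ${\sf G_2}$, the open sets ${\cal O}(\mu_i)$ cover $V-S$, so there exists $i$ with ${\cal O}(\mu_i)\cap Z-S \ne \emptyset$. Since ${\Vfiber}^{\mA^{-1}}$ is among the algebraic sets with respect to which $\bphi$ is a global normal form, and $Z$ is an irreducible component of it contained in $V$, property ${\sf G_3}$ yields ${\cal O}(\mu_i\delta_i)\cap Z-S \ne \emptyset$. By Lemma~\ref{lemma:conseq0}, ${\cal O}(\mu_i\delta_i)\cap V-S = {\cal O}(\mu_i\delta_i)\cap U$, hence ${\cal O}(\mu_i\delta_i)\cap Z-S$ is a nonempty subset of $U\cap Z$.

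Applying the change of variables $\mA$, we deduce that ${\cal O}(\mu_i^\mA\delta_i^\mA)\cap Z'-S^\mA$ is a nonempty subset of $U^\mA\cap Z'$. Since $Z'\subset \Vfiber$ lies over $\fiber2$, this set is contained in $\fbr(U^\mA,\fiber2)$. Because it is a nonempty Zariski open subset of the irreducible set $Z'$, it is dense in $Z'$. As this holds for every irreducible component $Z'$ of $\Vfiber$, the Zariski closure of $\fbr(U^\mA,\fiber2)$ contains every irreducible component of $\Vfiber$, and equality follows.
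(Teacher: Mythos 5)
Your proof is correct and follows essentially the same route as the paper's: reduce to the nonempty case, use equidimensionality of positive dimension to find a point of each component away from the finite set $S$, invoke the atlas covering (${\sf G_2}$) to locate a chart, apply ${\sf G_3}$ to ${\Vfiber}^{\mA^{-1}}$ to get a nonempty set ${\cal O}(\mu_i\delta_i)\cap Z-S$, and conclude via Lemma~\ref{lemma:conseq0} that this dense open subset of the component lands in $\fbr(U^\mA,\fiber2)$. No gaps.
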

\begin{proof}
  Since $\Proj(L)^\mA$ is contained in $V^\mA$,
  $\fbr(\Proj(L)^\mA,\fiber2)$ is contained in
  $\Vfiber=\fbr(V^\mA,\fiber2)$; the Zariski closure of
  $\fbr(\Proj(L)^\mA,\fiber2)$ is then contained in $\Vfiber$ as
  well. Thus, we have to prove the converse inclusion. This is
  immediate when $\Vfiber$ is empty. Now we will assume that $\Vfiber$
  is not empty, so that it is equidimensional of dimension
  $d-(\dalgo-1)$. Since we assumed $2 \le \dalgo \le d$ and
  $\dalgo \le (d+3)/2$, we deduce that $d-(\dalgo-1) \ge 1$.

  Let $Z$ be an irreducible component of $\Vfiber$. Because $Z$ has
  positive dimension $\dalgo-1$, there exists $\x$ in $Z-S^\mA$, and
  thus there exists $\x'=\x^{\mA^{-1}}$ in $Z^{\mA^{-1}}-S$. Because
  $\bpsi=(\polmu_i,\h_i)_{1 \le i \le s}$ is an atlas of $(V,Q,S)$,
  and $\x'$ is in $V$, we deduce that there exists $i$ in
  $\{1,\dots,s\}$ such that $\x'$ is in $\Open(\polmu_i)$. As a
  consequence, $\Open(\polmu_i)\cap Z^{\mA^{-1}} - S$ is not empty.
 
  Remark that $Z^{\mA^{-1}}$ is an irreducible component of
  ${\Vfiber}^{\mA^{-1}}$, and is thus contained in $V$. Because
  $(L;{\Vfiber}^{\mA^{-1}}, \scrY)$ has the global normal form property,
  property ${\gnf_3}$ and the statement in the last paragraph imply
  that $Z'=\Open(\polmu_i \poldelta_i)\cap Z^{\mA^{-1}} - S$ is not
  empty.  In particular, $Z'$ is a Zariski dense open subset of
  $Z^{\mA^{-1}}$, and thus ${Z'}^{\mA}$ is Zariski dense in~$Z$.

  On the other hand, $Z^{\mA^{-1}}$ is contained in $V$, so $Z'$ is
  contained in $\Open(\polmu_i \poldelta_i)\cap V-S$.  By
  Lemma~\ref{lemma:conseq0}, $Z'$ is thus contained in
  $\Open(\polmu_i \poldelta_i)\cap \Proj(L)$, and thus in $\Proj(L)$;
  as a result, ${Z'}^{\mA}$ is contained in $\Proj(L)^\mA$. Since $Z$,
  and thus ${Z'}^{\mA}$, lie over $\fiber2$, we deduce that ${Z'}^\mA$
  is contained in $\fbr(\Proj(L)^\mA,\fiber2)$.  Taking Zariski
  closures, we deduce that $Z$ itself is contained in the Zariski
  closure of $\fbr(\Proj(L)^\mA,\fiber2)$. Proceeding in this manner
  with all irreducible components of $\Vfiber$, we finish the proof.
\end{proof}

\begin{lemma}\label{transfer:fiber:lemma:easy}
  $\Vfiber=\Clos{(\Fiberlag(L^\mA,\param2,\paramsing2))}$.
\end{lemma}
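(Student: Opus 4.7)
The plan is to reduce to Lemma~\ref{lemma:clot}, which identifies $\Vfiber$ with the Zariski closure of $\fbr(U^\mA,\fiber2)$, and then compare the latter to $U' = \Proj({\cal F}(L^\mA,\param2,\paramsing2))$ by a direct set-theoretic argument. The punchline will be that $U'$ differs from $\fbr(U^\mA,\fiber2)$ only by a finite set, which will not affect Zariski closures because every irreducible component of $\Vfiber$ has positive dimension.

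First I would unwind the definitions. Writing $\F$ for the sequence evaluated by $\Gamma$, Definition~\ref{def:CZPC} together with Definition~\ref{sec:lagrange:notation:fiber} gives $\Cons({\cal F}(L^\mA,\param2,\paramsing2)) = \fbr(V(\F^\mA),\fiber2) - \pi_\X^{-1}(\fibersing2)$. Since $\fiber2$ lies over $Q$ and $\fibersing2 = \fbr(S^\mA \cup W,\fiber2)$ contains every point of $S^\mA$ that lies over $\fiber2$, this set is a subset of $\Cons(L^\mA) \cap \pi_\X^{-1}(\pi_{e+\dalgo-1}^{-1}(\fiber2))$. Projecting to $\C^n$ yields the first inclusion
\[
U' \ \subset\ \fbr(U^\mA,\fiber2).
\]
For the partial converse, take $\x \in \fbr(U^\mA,\fiber2) \setminus U'$ and choose $\bell$ with $(\x,\bell) \in \Cons(L^\mA)$; then $(\x,\bell) \in \fbr(V(\F^\mA),\fiber2)$, so the only obstruction to $\x \in U'$ is $\x \in \fibersing2$. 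As $\x \in U^\mA$ excludes $\x \in S^\mA$, we must have $\x \in \fbr(W,\fiber2)$. Thus
\[
\fbr(U^\mA,\fiber2) \ \subset\ U' \cup \fbr(W,\fiber2).
\]
The key finiteness input is that $\fbr(W,\fiber2) \subset \fibersing2$ is finite, which follows from $\mA \in \mathscr{H}(\bpsi,V,Q,S,\dalgo)$ and Proposition~\ref{sec:atlas:prop:summary2}.

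To conclude, I take Zariski closures. If $\Vfiber$ is empty, Lemma~\ref{lemma:clot} forces $\fbr(U^\mA,\fiber2)$, and hence $U'$, to be empty, so both sides vanish. Otherwise, Proposition~\ref{sec:atlas:prop:summary2} ensures that $(\Vfiber,\fiber2)$ satisfies $(\AS,d-(\dalgo-1),e+\dalgo-1)$, and since $\dalgo \in \{2,\dots,d\}$ gives $d-(\dalgo-1) \ge 1$, $\Vfiber$ is equidimensional of positive dimension. Combining the two inclusions above with Lemma~\ref{lemma:clot} yields
\[
\overline{U'} \ \subset\ \Vfiber \ =\ \overline{\fbr(U^\mA,\fiber2)} \ \subset\ \overline{U'} \cup \fbr(W,\fiber2),
\]
so every irreducible component of $\Vfiber$ is contained in the closed set $\overline{U'} \cup \fbr(W,\fiber2)$; by irreducibility each is contained in $\overline{U'}$ or in $\fbr(W,\fiber2)$, and the latter is excluded since $\fbr(W,\fiber2)$ is finite while the components of $\Vfiber$ have positive dimension. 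Therefore $\Vfiber = \overline{U'} = \Clos({\cal F}(L^\mA,\param2,\paramsing2))$. There is no real obstacle here: the content is entirely bookkeeping around the definition of $\fibersing2$ and Lemma~\ref{lemma:clot}, with the positive-dimension remark doing all the geometric work.
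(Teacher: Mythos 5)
Your proof is correct and follows essentially the same route as the paper: both reduce to Lemma~\ref{lemma:clot}, observe that $U'$ and $\fbr(U^\mA,\fiber2)$ differ only by a subset of the finite set $\fibersing2$ (you sharpen this to $\fbr(W,\fiber2)$, which changes nothing), and conclude by taking Zariski closures using that $\Vfiber$ is equidimensional of positive dimension. The only difference is cosmetic — you spell out the component-by-component argument that the paper compresses into one sentence.
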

\begin{proof}
  We have to prove that $\Vfiber$ is the Zariski closure of
  $\Proj(\Fiberlag(L^\mA, \param2, \paramsing2))$. By construction,
  $$\Proj(\Fiberlag(L^\mA, \param2, \paramsing2))=\fbr(\Proj(L)^\mA,\fiber2)-\fibersing2.$$ This
  implies the inclusions
  $$\Proj(\Fiberlag(L^\mA, \param2, \paramsing2)) \subset \fbr(\Proj(L)^\mA,\fiber2) \subset U' \cup \fibersing2.$$
  Let us temporarily denote by $U'$ the Zariski closure
  $\Clos{(\Fiberlag(L^\mA,\param2,\paramsing2))}$ of
  $\Proj(\Fiberlag(L^\mA, \param2, \paramsing2))$. Since $\fibersing2$
  is finite, the previous inclusions and the previous lemma show that
  $U' \subset \Vfiber \subset U' \cup \fibersing2$.  Because
  $\fibersing2$ is finite and $\Vfiber$ is equidimensional of positive
  dimension, the right-hand inclusion implies that
  $\Vfiber \subset U'$, from which the requested equality
  $\Vfiber=U'$ follows.
\end{proof}

We can now prove the proposition. The first item follows from Lemma
\ref{transfer:fiber:lemma:easy}, and when $\Vfiber$ is empty, there is
nothing more to prove.

If we assume that $\Vfiber$ is not empty, it remains to show how to
construct a global normal form for it. We first define the local
normal forms we will use for the generalized Lagrange system
$\Fiberlag(L^\mA, \param2, \paramsing2)$. Up to reordering $\bphi$, we
can suppose that there exists $s' \in \{0,\dots,s\}$ such that
$\Open(\polmu^\mA_i) \cap \Vfiber-\fibersing2$ is not empty for
$1 \le i \le s'$, and empty for $i> s'$. We let
$\bphi'=(\phi^\mA_1,\dots,\phi^\mA_{s'})$.  We prove now that $\bphi'$
satisfies properties $\gnf_1, \gnf_2$ and~$\gnf_3$.
\begin{enumerate}
\item[${\gnf_1}.$] We saw in
  Proposition~\ref{sec:lagrange:prop:local:fibers} that for all
  $\phi^\mA_i$, with $i \le s'$, are local normal forms for
  $\Fiberlag(L^\mA,\param2,\paramsing2)$.
\smallskip
\item[${\gnf_2}.$] Let $\bpsi'=(\psi_i^\mA)_{1 \le i \le s'}$; we
  need to prove that $\bpsi'$ is an atlas
  of
  $$(\Clos{(\Fiberlag(L^\mA,\param2,\paramsing2))}, \fiber2,
  \fibersing2),$$ or equivalently, by
  Lemma~\ref{transfer:fiber:lemma:easy}, of $(\Vfiber, \fiber2,
  \fibersing2)$.  Definition~\ref{sec:atlas:notation:fiber} shows that
  $\bpsi'$ is none other than the set of polynomials
  $\atlasfiber(\bpsi^\mA,V^\mA,Q,S^\mA,\fiber2)$.  Since all
  assumptions of Proposition~\ref{sec:atlas:prop:summary1} are
  satisfied, that proposition proves that $\bpsi'$ is indeed an atlas
  of $(\Vfiber,\fiber2,\fibersing2)$, so our claim is proved.
  \smallskip
\item[${\gnf_3}.$] Recall that we write $\scrY=Y_1,\dots,Y_r$.  Let
  $Z$ be an irreducible component of $Y^\mA_j$, for some $j$ in
  $\{1,\dots,r\}$. Suppose that $Z$ is contained in $\Vfiber$, and let
  $i$ in $\{1,\dots,s'\}$ be such that
  $\Open(\polmu^\mA_i) \cap Z-\fibersing2$ is not empty. We have
  to prove that
  $\Open(\polmu^\mA_i\poldelta_i^\mA) \cap Z-\fibersing2$ is not
  empty.

  Let $\x$ be in $\Open(\polmu^\mA_i) \cap Z-\fibersing2$. Because
  $\x$ is in $Z$, and thus in $\Vfiber$, $\x$ lies over $\fiber2$. In
  particular, $\x$ is not in $S^\mA$ (since if it were, it would
  belong to $\fbr(S^\mA,\fiber2)$, and thus to $\fibersing2$). In
  other words, $\x$ is in $\Open(\polmu^\mA_i) \cap Z-S^\mA$.

  Then, $\x'=\x^{\mA^{-1}}$ belongs to
  $\Open(\polmu_i) \cap Z^{\mA^{-1}}-S$, so that
  $\Open(\polmu_i) \cap Z^{\mA^{-1}}-S$ is not empty. Besides,
  $Z^{\mA^{-1}}$ is an irreducible component of $Y_j$, and it is
  contained in $V$. We deduce (by applying ${\gnf_3}$ to $L$) that
  $\Open(\polmu_i\poldelta_i) \cap Z^{\mA^{-1}}-S$ is not empty, and
  thus that $\Open(\polmu^\mA_i\poldelta^\mA_i) \cap Z-S^\mA$ is not
  empty.

  To summarize, both $\Open(\polmu^\mA_i) \cap Z-\fibersing2$ and
  $\Open(\polmu^\mA_i\poldelta^\mA_i) \cap Z-S^\mA$ are non-empty open
  subsets of the irreducible set $Z$, so their intersection
  $\Open(\polmu^\mA_i\poldelta^\mA_i) \cap Z-\fibersing2$ is non-empty
  as well.
\end{enumerate}


\section{Proof of Proposition~\ref{sec:posso:prop2}}\label{chap:degreebounds}

The main goal of this section is to prove
Proposition~\ref{sec:posso:prop2}, whose statement is as follows: {\em
  Consider polynomials $\F=(F_1,\dots,F_P)$ in
  $\C[\X,\L_1,\dots,\L_k]$, with $n-e,n_1,\dots,n_k$ variables in the
  respective blocks $\X,\L_1,\dots,\L_k$, and having degrees in
  $\X,\L_1,\dots,\L_k$ respectively bounded by
$$
\begin{array}{cl}
(D_1,0, 0,\dots,0) & \text{for  $F_1,\dots,F_{p}$} \\
(D_2,1,0,\dots,0) & \text{for $F_{p+1},\dots,F_{p+p_1}$} \\
\vdots & \vdots \\
(D_2,1,1,\dots,1) & \text{for $F_{p+\cdots+p_{k-1}+1},\dots,F_{p+\cdots+p_k}$},
\end{array}$$
the total number of variables being $N-e$, with $N=n+n_1+\cdots+n_k$.
Write furthermore $N_i = n+n_1+\cdots+n_i$ and $P_i =
p+p_1+\cdots+p_i$, for $i=0,\dots,k$, and suppose that the following holds for 
all $i=0,\dots,k$:
\begin{itemize}
\item  $n_i$ and $p_i$ are positive,
\smallskip
\item $ N_i-e \ge P_i$.
\end{itemize}
Let finally $\Delta$ be the ideal generated by all $P$-minors of
$\jac(\F)$ and  consider the Zariski closure $V$ of $V(\F)-V(\Delta)$.
Then for $i$
in $\{1,\dots,P\}$, $V_i$ has degree at most $\DF(k, e,\n,\p,D_1,
D_2)$, with  
$$
\DF(k, e, \n,\p,D_1, D_2)=(P_k+1)^k D_1^{p} D_2^{n-e-p}  \prod_{i=0}^{k-1} N_{i+1}^{N_i-e-P_i}.
$$}
This proposition is proved in the second half of this section; we
start by proving a general multi-homogeneous bound that is a variant of
classical ones (see e.g.~\cite{vanderWaerden29,vanderWaerden78}),
adapted to our setting.


\subsection{A multi-homogeneous B\'ezout bound}

As above, consider blocks of variables $\X,\L_1,\dots,\L_k$ of
respective lengths $n-e,n_1,\dots,n_k$, and let $N=n+n_1+\cdots+n_k$,
so that the total number of variables is $N-e$. We say that a
polynomial $f$ in $\C[\X,\L_1,\dots,\L_k]$ has multi-degree bounded by
$(D_0,D_1,\dots,D_k)$ if its degree in the group of variables $\X$,
resp.\ $\L_i$, is at most $D_0$, resp.\ $D_i$, for $1 \le i \le
k$. Our goal here is to give an upper bound on the degree of algebraic
sets defined by polynomials in $\C[\X,\L_1,\dots,\L_k]$ in terms of
their multi-degrees.

All along, we let $\m$ be the ideal $\langle
\zeta_0^{n-e+1},\zeta_1^{n_1+1},\dots,\zeta_k^{n_k+1}\rangle$ in
$\Z[\zeta_0,\zeta_1\dots,\zeta_k]$. If $A$ is a polynomial in
$\Z[\zeta_0,\zeta_1,\dots,\zeta_k]$, $|A|_\infty$ is the maximum of the
absolute values of its coefficients, and $|A|_1$ is the sum of the
absolute values of its coefficients. If $I$ is an ideal in
$\C[\X,\L_1,\dots,\L_k]$, $V(I)$ will denote its zero-set in $\C^N$.

\begin{proposition}\label{sec:posso:prop1}
  Let $F_1,\dots,F_P$ be polynomials in $\C[\X,\L_1,\dots,\L_k]$ of
  mul\-ti-degrees respectively bounded by $(D_{i,0},D_{i,1},\dots,D_{i,k})$, for
  $i=1,\dots,P$. Let $V\subset \C^{N-e}$ be the equidimensional component
  of $V(F_1,\dots,F_P)$ of dimension $N-e-P$. Let further
  $$A=\prod_{i=1}^P (D_{i,0} \zeta_0 + D_{i,1} \zeta_1 +\cdots + D_{i,k} \zeta_k) \bmod \m.$$
  Then $\deg(V) \le |A|_1$.
\end{proposition}
This paragraph is devoted to prove Proposition \ref{sec:posso:prop1}.
This result is in essence the calculation of an intersection product
in the Chow ring of the multi-projective space $\P^{n-e} \times  \P^{n_1} \times \cdots
\times \P^{n_k}$, which is indeed $\Z[\zeta_0,\zeta_1,\dots,\zeta_k]/\m$.
However, the proof does not require familiarity with the techniques of
intersection theory; we rely on the aforementioned results of van der
Waerden and a theorem of~\cite{MoSo87} for these aspects.

Let $X_{0},L_{1,0},\dots,L_{k,0}$ be homogenization variables and let
$\X'$ and $\L'_1,\dots,\L'_k$ be the blocks of variables obtained by
adding respectively $X_{0},L_{1,0},\dots,L_{k,0}$ to $\X$ and
$\L_1,\dots,\L_k$. To a polynomial $f$ in $\C[\X,\L_1,\dots,\L_k]$, we
associate $f^H$ obtained by homogenizing $f$ in each block of
variables separately. To an ideal $I$ in $\C[\X,\L_1,\dots,\L_k]$, we
associate the ideal $I^H$ generated by the polynomials $\{f^H \ | \ f
\in I\}$. Conversely, for $F$ in $\C[\X',\L'_1,\dots,\L'_k]$,
$\varphi(F)$ is the polynomial obtained from $F$ by evaluating 
$X_0$ and all $L_{i,0}$ at $1$.

In what follows, we let $I$ be the radical of the ideal $\langle
F_1,\dots,F_P\rangle\subset \C[\X,\L_1,\dots,\L_k]$ and let $I=\Prime_1 \cap
\dots \cap \Prime_t$ be its prime decomposition. We further let $t' \le t$
and $I'= \Prime_1 \cap \dots \cap \Prime_{t'}$ be the intersection of the
components of dimension $d=N-e-P$ (reordering may be needed); thus, we
have
\begin{equation}
  \label{eq:degV}
\deg(V)=\deg(V(\Prime_1)) + \cdots + \deg(V(\Prime_{t'})).  
\end{equation}
\begin{lemma}\label{lemma7}
  The ideal ${I'}^H$ is radical and $\Prime_1^H \cap \dots \cap \Prime_{t'}^H$
  is its prime decomposition.
\end{lemma}
\begin{proof}
  First, we establish the following easy facts:
  \begin{enumerate}
  \item\label{f1} If $f$ is in $\C[\X,\L_1,\dots,\L_k]$, then $\varphi(f^H)=f$.
\smallskip
  \item\label{f3} If $J$ is an ideal of $\C[\X,\L_1,\dots,\L_k]$ and $F$
    is in $J^H$, $\varphi(F)$ is in $J$.
  \end{enumerate}
  The first item is obvious. To prove~\eqref{f3}, note that the
  assumption says that $F$ is a polynomial combination of polynomials
  $f^H$, for $f$ in $J$; apply $\varphi$ to conclude, using
  fact~\eqref{f1}.

  Now we can prove that all ideals $\Prime_i^H$ are prime, and that
  for all $i \ne i'$ in
  $\{1,\dots,t'\}$,$(\Prime_i \cap \Prime_{i'})^H= \Prime_i^H \cap
  \Prime_{i'}^H$
  and $\Prime_i^H \not \subset \Prime_{i'}^H$. The first two
  statements are~\cite[Proposition~4.3.10.b--d]{KrRo05}.  For the last
  one, suppose that $\Prime_i^H \subset \Prime_{i'}^H$, and let $f$ be
  in $\Prime_i$. Then, $f^H$ is in $\Prime_i^H$, so $f^H$ is in $\Prime_{i'}^H$;
  applying $\varphi$, $f=\varphi(f^H)$ is in $\Prime_{i'}$
  (facts~\eqref{f1} and~\eqref{f3}). This proves that
  $\Prime_i \subset \Prime_{i'}$, a contradiction.

  Iterating the second property above,
  ${I'}^H = \Prime_1^H \cap \cdots \cap \Prime_{t'}^H$; by the first
  property, all $\Prime_i^H$ are prime (so ${I'}^H$ is radical) and by the
  last one, $\Prime_i^H \not \subset \Prime_j^H$ holds for all $i\ne j$. This
  proves the lemma.
\end{proof}

If $J$ is a {\em homogeneous} ideal of $\C[\X',\L'_1,\dots,\L'_k]$,
$V^h(J)$ will denote the projective algebraic set it defines in
$\P^{N-e+k}$. If $Z$ is a projective algebraic set in $\P^{N-e+k}$, we
denote by $\deg(Z)$ its {\em degree}, which is defined as in the
affine case.

Finally, note that if $J$ is an ideal in $\C[\X,\L_1,\dots,\L_k]$,
$J^H \subset \C[\X',\L'_1,\dots,\L'_k]$ is multi-homogeneous, and thus
homogeneous in $N-e+k+1$ variables, so $V^h(J^H)\subset \P^{N-e+k}$ is
well-defined.

\begin{lemma}\label{lemma8}
  If $\Prime$ is a prime ideal in $\C[\X,\L_1,\dots,\L_k]$, the inequality
  \[\deg(V(\Prime)) \le \deg(V^h(\Prime^H))\] holds.
\end{lemma}
\begin{proof}
  Consider the affine cone $C$ defined by $\Prime^H$ in $\C^{N-e+k+1}$.
  By construction, the degree of $C$ equals $\deg(V^h(\Prime^H))$.
  
  Intersecting with the linear space
  $V(X_{0}-1,L_{1,0}-1,\dots,L_{k,0}-1)$ yields an algebraic set $C'$,
  with $\deg(C')\le \deg(C)$; note as well that $C'$ is defined by
  $\Prime$ and all linear equations
  $X_{0}-1,L_{1,0}-1,\dots,L_{k,0}-1$. Finally, projecting on
  $\C^{N-e}$, we obtain that $\deg(V(\Prime)) \le \deg(C')$, and we
  are done.
 \end{proof}

 If $J'$ is a multi-homogeneous ideal in $\C[\X',\L'_1,\dots,\L'_k]$,
 $\Vmp(J')$ will denote the multi-projective algebraic set it defines
 in $\P^{n-e} \times \P^{n_1}\times \cdots \times \P^{n_k}$ (the super-script $mp$
 indicates that the set lies in a multi-projective set).

 The dimension of a multi-projective algebraic set $Z$ in
 $\P^{n-e} \times \P^{n_1}\times \cdots \times \P^{n_k}$ is the Krull
 dimension of $\C[\X',\L'_1,\dots,\L'_k]/\Ideal(Z)$ minus $(k+1)$, where
 $\Ideal(Z)$ is the multi-homogeneous defining ideal of
 $Z$. By~\cite[Par. 12, pp. 754]{vanderWaerden29}, if $\Prime$ is a
 prime ideal in $\C[\X,\L_1,\dots,\L_k]$,
 $\dim(V(\Prime))=\dim(\Vmp(\Prime^H))$. Equidimensional
 multi-projective algebraic sets are defined as in the affine or
 projective cases.

For any integer $\ell$, let ${\mathfrak{R}}(\ell)$ be the set of
$(k+1)$-uples of integers $$\mm=(m_0,m_1,\dots,m_k) \in \N^{k+1}$$ such
that $|\mm|=\ell$, where we write $|\mm|=m_0+m_1+\cdots+m_k$.  Let then
$Z \subset \P^{n-e} \times \P^{n_1}\times \cdots \times \P^{n_k}$ be an
$\ell$-equidimensional multi-projective algebraic set. The {\em
  multi-degree} of $Z$ is a vector
$\bdelta(Z)=(\delta(Z,\mm))_{\mm \in {\mathfrak{R}}(\ell)}$: for any
such $\mm$, $\delta(Z,\mm)$ is the number of intersection points of
$Z$ with $m_0,\dots,m_k$ generic hyperplanes in respective coordinates
$\X',\L'_1,\dots,\L'_k$.

We can now return to the proof of our proposition. Recall that $I'$ is
the defining ideal of~$V$, and that $\Prime_1,\dots,\Prime_{t'}$ are its prime
components.
\begin{lemma}\label{lemma9a}
  The multi-projective set $\Vmp({I'}^H)$ is equidimensional of dimension
  $d=N-e-P$ and satisfies 
$$\deg(V) \le \sum_{\mm \in {\mathfrak{R}}(d)} \delta(\Vmp({I'}^H),\mm).$$
\end{lemma}
\begin{proof}
  By the remark above, each $\Vmp(\Prime_i^H)$ has dimension $d=N-e-P$.  Because
  all $\Prime_i^H$ are prime, we can use Van der Waerden's result~\cite{vanderWaerden78}
stating that
  $$ \deg(V^h(\Prime_i^H)) = \sum_{\mm \in {\mathfrak{R}}(d)} \delta(\Vmp(\Prime_i^H),\mm).$$
  Combining this with the bound in Lemma~\ref{lemma8}, we obtain
  $$\deg(V(\Prime_i)) \le \sum_{\mm \in {\mathfrak{R}}(d)} \delta(\Vmp(\Prime_i^H),\mm).$$
  Finally, we sum over $i=1,\dots,t'$. On the left,
  from~\eqref{eq:degV}, we get $\deg(V)$.  On the right, we get
  $$\sum_{i \le t'} \sum_{\mm \in {\mathfrak{R}}(d)} \delta(\Vmp(\Prime_i^H),\mm) =
  \sum_{\mm \in {\mathfrak{R}}(d)} \sum_{i \le t'} \delta(\Vmp(\Prime_i^H),\mm).$$
  Now, $\Vmp({I'}^H)$ is equidimensional of dimension $d$ and thus,
  for all $\mm$,
  $$ \sum_{i \le t'} \delta(\Vmp(\Prime_i^H),\mm) = \delta(\Vmp({I'}^H),\mm).$$
  This proves the lemma.
\end{proof}

Recall now that our input polynomials are denoted by $F_1,\dots,F_P$.
In the following lemma, if $Z$ is a multi-projective algebraic set in
$\P^{n-e} \times \P^{n_1}\times \cdots \times \P^{n_k}$, $Z_d$ will denote the union
of the irreducible components of $Z$ of dimension $d$.
\begin{lemma}\label{lemma10}
  Let $J$ be the ideal $J=\langle F_1^H,\dots, F_P^H\rangle$.  Then
  $$\deg(V) \le \sum_{\mm \in {\mathfrak{R}}(d)} \delta(\Vmp(J)_d,\mm).$$
\end{lemma}
\begin{proof}
  Fix a multi-index $\mm$ such that $|\mm|=d$. Recall that $I$ is the
  radical of the ideal $\langle F_1, \ldots, F_P\rangle$ and that $I'$
  is the intersection of those prime components of $I$ which have
  dimension $d=N-e-P$.

  We are going to prove the inequalities
  $$\delta(\Vmp({I'}^H),\mm) = \delta(\Vmp(I^H)_d,\mm) 
  \quad\text{and}\quad
  \delta(\Vmp(I^H)_d,\mm) \le \delta(\Vmp(J)_d,\mm).$$
\begin{itemize}
\item Lemma~\ref{lemma7} shows that
  $\Prime_1^H \cap \dots \cap \Prime_{t'}^H$ is the prime
  decomposition of ${I'}^H$; similarly,
  $\Prime_1^H \cap \dots \cap \Prime_{t}^H$ is the prime decomposition
  of ${I}^H$. For $j > t'$, the dimension of $\Vmp(\Prime_j^H)$ is greater
  than $d$; we deduce that $\Vmp(I^H)_d=\Vmp({I'}^H)$, and the first
  equality follows.  \smallskip
\item Let $K$ be the ideal $\langle F_1,\dots,F_P \rangle$, so that
  $I=\sqrt{K}$. Proposition~4.3.10.c of~\cite{KrRo05} shows that
  $I^H=\sqrt{K^H}$, so that $\Vmp(I^H)=\Vmp(K^H)$ and $\Vmp(I^H)_d=\Vmp(K^H)_d$.
  On the other hand, Corollary~4.3.8 of~\cite{KrRo05} shows that $K^H =
  J : (X_0 L_{1,0}\cdots L_{k,0})^\infty$. This 
  implies $\delta(\Vmp(K^H)_d,\mm) \le \delta(\Vmp(J)_d,\mm)$ 
  and thus gives the second claimed inequality.
\end{itemize}
The conclusion immediately follows from Lemma~\ref{lemma9a}.
\end{proof}

For $\mm=(m_0,m_1,\dots,m_k)$ in ${\mathfrak{R}}(d)$, recall that
$\delta(\Vmp(J)_d,\mm)$ is the number of intersection points of
$\Vmp(J)_d$ with $m_0,m_1,\dots,m_k$ generic hyperplanes
$H_{0,1},\dots,H_{k,m_k}$ in respective coordinates
$\X',\L'_1,\dots,\L'_k$. Because $d=N-e-P$, this is thus also the generic
number of isolated solutions of
$F_1^H,\dots,F_P^H,H_{0,1},\dots,H_{k,m_k}$ in
$\P^{n-e}\times \P^{n_1}\times \cdots \times \P^{n_k}$ (the intersections of
higher-dimensional components of $\Vmp(J)$ with
$H_{0,1},\dots,H_{k,m_k}$ have positive dimension). Let $A_0$ be the
polynomial
$$A_0=\prod_{i=1}^P (D_{i,0} \zeta_0 + D_{i,1} \zeta_1 + \cdots + D_{i,k} \zeta_k).$$
By the multi-homogeneous B\'ezout theorem given in~\cite{MoSo87}, we
deduce that
\begin{eqnarray*}
\delta(\Vmp(J)_d)_\mm &\le & {\rm coeff}( A_0 \zeta_0^{m_0} \cdots \zeta_k^{m_k}, \zeta_0^{n} \cdots \zeta_k^{n_k}) \\
 &\le& {\rm coeff}(A_0,\zeta_0^{n-m_0} \cdots \zeta_k^{n_k-m_k}).
\end{eqnarray*}
We deduce from Lemma~\ref{lemma10} the inequality
$$\deg(V) \le \sum_{\mm \in {\mathfrak{R}}(d)} {\rm
  coeff}(A_0,\zeta_0^{n-m_0} \cdots \zeta_k^{n_k-m_k}).$$ To
conclude the proof of Proposition~\ref{sec:posso:prop1}, it suffices to observe that the last sum equals $|A|_1$,
with $A = A_0 \bmod \m$.


\subsection{Proof of the proposition}\label{ssec:proof:posso:prop2}

We can now prove Proposition~\ref{sec:posso:prop2}. Consider a
non-negative integer $e$, polynomials $\F=(F_1,\dots,F_P)$ in
$\C[\X,\L_1,\dots,\L_k]$, with $n-e,n_1,\dots,n_k$ variables in the
respective blocks $\X,\L_1,\dots,\L_k$, and having multi-degrees
bounded by
$$
\begin{array}{cl}
(D_1,0, 0,\dots,0) & \text{for  $F_1,\dots,F_{p}$} \\
(D_2,1,0,\dots,0) & \text{for $F_{p+1},\dots,F_{p+p_1}$} \\
\vdots & \vdots \\
(D_2,1,1,\dots,1) & \text{for $F_{p+\cdots+p_{k-1}+1},\dots,F_{p+\cdots+p_k}$,}
\end{array}
$$
and we assume 
\begin{equation}\label{eq:N-P2}
  N_i-e \ge P_i, \quad\text{with}\quad N_i = n+\cdots+n_i\quad\text{and}\quad P_i = p+\cdots+p_i.
\end{equation}
Let $\Delta$ be the ideal generated by all $P$-minors of
$\jac(\F)$, and for $i \le P$, let $V_i$ be the Zariski closure of
$V(F_1,\dots,F_i)-V(\Delta)$. Our goal is to prove 
that   for $i$ in $\{1,\dots,P\}$, $V_i$ has degree at most $\DF(k, e,\n,\p,D_1,
D_2)$, with
$$
\DF(k, e, \n,\p,D_1, D_2)=(P_k+1)^k D_1^{p} D_2^{n-e-p}  \prod_{i=0}^{k-1} N_{i+1}^{N_i-e-P_i}.
$$
In what follows, as in the previous section, $\m$ is the
ideal $\langle \zeta_0^{n-e+1},\zeta_1^{n_1+1},\dots,\zeta_k^{n_k+1}\rangle$ in
$\Z[\zeta_0,\dots,\zeta_k]$.
\begin{lemma}\label{lemmaA}
  Suppose that all inequalities in~\eqref{eq:N-P2} hold.  Let $0\le
  i\le k$ and let $A$ be a homogeneous polynomial in
  $\Z[\zeta_0,\dots,\zeta_i] \subset \Z[\zeta_0,\dots,\zeta_k]$ with
  non-negative coefficients, of degree less than $P_i$, and reduced
  with respect to $\m$. Let also $b=d_0 \zeta_0 + \cdots + d_i
  \zeta_i$, with all $d_i$ positive integers and $B = Ab \bmod
  \m$. Then, $|A|_\infty \le |B|_\infty$.
\end{lemma}
\begin{proof}
  Let $z=\zeta_0^{u_0}\cdots \zeta_i^{u_i}$ be a monomial that appears
  in $A$ with a non-zero coefficient, so that $z$ is reduced with
  respect to $\m$. We will prove that there exists $\ell \le i$ such
  that $z'=z \zeta_\ell$ is reduced with respect to $\m$. Since
  all $d_i$'s and all coefficients of $A$ are positive integers, this
  implies that the coefficient of $z$ in $A$ is less than or equal to
  that of $z'$ in $B$, and the claim $|A|_\infty \le |B|_\infty$
  follows.

  We argue by contradiction, assuming that for all $\ell \le i$, $z
  \zeta_\ell$ is not reduced with respect to $\m$.
  
  First, remark that since $A$ is reduced with respect to $\m$, we
  have $u_0 \le n-e$ and $u_\ell \le n_\ell$ holds for
  $\ell=1,\dots,i$. On the other hand, if $z \zeta_\ell$ is not
  reduced with respect to $\m$, we have either $u_0 + 1 > n-e$ (if
  $\ell=0$) or $u_\ell +1 > n_\ell$ (otherwise), since $\zeta_\ell$ is
  the only variable whose exponent changes; in view of the
  inequalities above, this implies that $u_0 = n-e$ (if $\ell=0$) or
  $u_\ell = n_\ell$ (otherwise).  If this is the case for {\em all}
  values of $\ell$, $z$ has total degree $n-e+n_1+\cdots +n_i=N_i-e$; this
  is impossible, since $z$ has total degree less than $P_i$ and $P_i
  \le N_i-e$, by~\eqref{eq:N-P2}.
\end{proof}

Let
$$A = (D_1 \zeta_0)^{p} (D_2 \zeta_0 + \zeta_1)^{p_1} \cdots  
(D_2 \zeta_0 + \zeta_1 + \cdots + \zeta_k)^{p_k} \bmod \m.$$ 
The next lemma shows that it will be enough to prove an upper bound on the
coefficients of $A$.
\begin{lemma}
  Suppose that all inequalities in~\eqref{eq:N-P2} hold.  For all $0\le
  i\le k$, the inequality $\deg(V_{i}) \le (P_k+1)^k |A|_\infty$
  holds.
\end{lemma}
\begin{proof}
  Define $a_0=D_1\zeta_0$ and for $\ell=1,\dots,k$, $a_\ell = (D_2\zeta_0 + \zeta_1+\cdots
  +\zeta_\ell)$. Let $P_{-1}=0$ and, for $\ell=-1,\dots,k-1$ and $j=1,\dots,p_{\ell+1}$, define
further
$$A_{\ell,j} = a_0^{p} \cdots a_{\ell}^{p_{\ell}} a_{\ell+1}^j \bmod
\m;$$ remark that this polynomial has degree $P_\ell+j$, and that $A=A_{k-1,p_k}$.  

Fix now $i$ in $\{1,\dots,P\}$. There exists a unique $\ell$ in
$\{-1,\dots, k-1\}$ such that $P_\ell < i \le P_{\ell+1}$; let then
$j=i-P_\ell$, so that $i=P_\ell+j$; note that $0 < j \le p_{\ell+1}$
and that $A_{\ell,j}$ has degree~$i$.  
Proposition~\ref{sec:posso:prop1} gives the bound $\deg(V_{i}) \le
|A_{\ell,j}|_1$ (since $V_{i}$ is the union of some of the minimum
dimensional components defined by the first $i$ equations). Remark
next that for all $\ell,j$, $A_{\ell,j}$ has total degree at most
$P_k$, so it has at most $(P_k+1)^k$ non-zero coefficients. As a
consequence, we get $\deg(V_{i}) \le (P_k+1)^k |A_{\ell,j}|_\infty$.

It remains to give an upper bound on $ |A_{\ell,j}|_\infty$.  Fix
$\ell$ in $\{-1,\dots, k-1\}$, and take first $j$ in
$\{1,\dots,p_{\ell+1}-1\}$. Then, $A_{\ell,j+1}=A_{\ell,j} a_{\ell+1}
\bmod \m$. Since $A_{\ell,j}$ lies in
$\Z[\zeta_0,\dots,\zeta_{\ell+1}]$, has degree $P_{\ell}+j <
P_{\ell+1}$, and $a_{\ell+1}=D_2\zeta_0 + \zeta_1+\cdots
  +\zeta_{\ell+1}$ has positive coefficients, 
Lemma~\ref{lemmaA} shows that $|A_{\ell,j}|_\infty \le |A_{\ell,j+1}|_\infty$.

Consider now $\ell$ in $\{-1,\dots, k-2\}$ and $j=p_{\ell+1}$, so that
$$A_{\ell+1,1}=A_{\ell,p_{\ell+1}} a_{\ell+2} \bmod \m.$$ Now,
$A_{\ell,p_{\ell+1}}$ has degree $P_{\ell+1} < P_{\ell+2}$, lies in
$\Z[\zeta_0,\dots,\zeta_{\ell+1}] \subset
\Z[\zeta_0,\dots,\zeta_{\ell+2}]$, and $a_{\ell+2}=D_2\zeta_0 +
\zeta_1+\cdots +\zeta_{\ell+2}$ has positive coefficients. Thus, as
before, we deduce from Lemma~\ref{lemmaA} that
$|A_{\ell,p_{\ell+1}}|_\infty \le |A_{\ell+1,1}|_\infty$. Altogether,
this proves that for all $\ell,j$, $|A_{\ell,j}|_\infty \le
|A|_\infty$, as claimed.
\end{proof}

The inequality in the next lemma is then sufficient to prove
Proposition~\ref{sec:posso:prop2}.

\begin{lemma}
  The inequality $|A|_\infty \le D_1^{p} D_2^{n-e-p}
  \prod_{i=0}^{k-1} N_{i+1}^{N_i-e-P_i}$
  holds.
\end{lemma}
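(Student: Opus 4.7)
The plan is to bound each coefficient of $A$ individually, exploiting that reduction modulo $\m$ only deletes monomials and never combines them. For any monomial $\zeta_0^{e_0}\cdots\zeta_k^{e_k}$ surviving the reduction, one has $e_0\le n-e$ and $e_\ell\le n_\ell$ for $\ell\ge 1$. Writing $a_0:=e_0-p$ and $a_\ell:=e_\ell$ for $\ell\ge 1$, the coefficient factors as $D_1^p\,D_2^{a_0}\,S(e)$, where
$$S(e)\ :=\ [\zeta_0^{a_0}\zeta_1^{a_1}\cdots\zeta_k^{a_k}]\,B,\qquad B\ :=\ \prod_{j=1}^k\bigl(\zeta_0+\zeta_1+\cdots+\zeta_j\bigr)^{p_j}.$$
Since $a_0\le n-e-p$, the factor $D_2^{a_0}$ is already at most $D_2^{n-e-p}$, matching the desired $D_2$-exponent.

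Next I would evaluate $S(e)$ in closed form by ``peeling'' variables from the top. Because $\zeta_k$ occurs only in the last factor of $B$, extracting the power $\zeta_k^{a_k}$ contributes a Newton binomial $\binom{p_k}{a_k}$ and merges what remains into $(\zeta_0+\cdots+\zeta_{k-1})^{p_{k-1}+p_k-a_k}$; iterating this process (each intermediate step amounting to an elementary Vandermonde on the merged factor) and concluding with a single Vandermonde for $\zeta_0,\zeta_1$ delivers the clean formula
$$S(e)\ =\ \binom{a_0+a_1}{a_0}\prod_{i=2}^{k}\binom{M_i}{a_i},\qquad M_i\ :=\ p_i+p_{i+1}+\cdots+p_k-(a_{i+1}+\cdots+a_k).$$

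The key technical ingredient is then a two-sided estimate on the $M_i$. Using the total-degree identity $a_0+a_1+\cdots+a_k=p_1+\cdots+p_k$ (which holds because $B$ is homogeneous of that degree), one rewrites
$$M_i\ =\ (a_0+a_1+\cdots+a_i)\ -\ (p_1+p_2+\cdots+p_{i-1}).$$
Combining this with the surviving-monomial constraints $a_0\le n-e-p$ and $a_\ell\le n_\ell$ for $\ell\ge 1$ yields simultaneously
$$M_i\ \le\ N_i-e-P_{i-1}\ \le\ N_i\qquad\text{and}\qquad M_i-a_i\ \le\ N_{i-1}-e-P_{i-1}.$$
Applying the elementary estimate $\binom{m}{r}=\binom{m}{m-r}\le m^{\,m-r}$ (with the convention $0^0=1$) therefore gives $\binom{M_i}{a_i}\le N_i^{N_{i-1}-e-P_{i-1}}$ for $i\ge 2$; the first binomial is handled analogously via $a_0+a_1\le N_1-e-p\le N_1$ and $a_0\le N_0-e-P_0$, producing $\binom{a_0+a_1}{a_0}\le N_1^{N_0-e-P_0}$. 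Multiplying everything out finally yields
$$|A|_\infty\ \le\ D_1^p\,D_2^{n-e-p}\,N_1^{N_0-e-P_0}\prod_{i=2}^{k}N_i^{N_{i-1}-e-P_{i-1}}\ =\ D_1^p\,D_2^{n-e-p}\prod_{i=0}^{k-1}N_{i+1}^{N_i-e-P_i},$$
as claimed.

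The main obstacle is not the Vandermonde peeling itself (which is pure bookkeeping) but locating the right reformulation of $M_i$: the ``upstream'' expression produced by the peeling is not manifestly comparable to $N_i$, whereas the total-degree rewriting exposes the two-sided cancellation that simultaneously controls the binomial from both the ``small $a_i$'' and the ``small $M_i-a_i$'' sides, via the bounds $n_\ell$ on the $a_\ell$'s and $n-e-p$ on $a_0$.
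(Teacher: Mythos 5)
Your proof is correct and follows essentially the same route as the paper's: the explicit multinomial "peeling" formula for each coefficient, the total-degree identity to rewrite the binomial tops, the constraints $a_0\le n-e-p$, $a_\ell\le n_\ell$ from reduction modulo $\m$, and the estimate $\binom{m}{r}\le m^{m-r}$. The only difference is cosmetic (you factor out $\zeta_0^p$ before bookkeeping, the paper keeps $u_0=a_0+p$), so there is nothing to add.
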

\begin{proof}
 The polynomial $A$ is homogeneous of total degree $P_k =
  p+\cdots+p_k$, so all its monomials have the form $\zeta_0^{u_0}
  \cdots \zeta_k^{u_k}$, with $u_0 + \cdots + u_k = p + \cdots +
  p_k$, $u_0 \le n-e$ and $u_\ell \le n_\ell$ for $\ell \ge 1$. Then, considering
  successively $\zeta_k,\dots,\zeta_0$, we see that the coefficient of
  this monomial in $A$ is
{\small $$D_1^{p}D_2^{p_1 + \cdots +p_k- (u_1 + \cdots + u_k)}
\binom{p_1 + \cdots +p_k - u_2 - \cdots - u_k}{u_1} \cdots
\binom{p_{k-1}+p_k-u_k}{u_{k-1}} \binom{p_k}{u_k}. $$ }
Since $u_0 + \cdots + u_k = p + \cdots + p_k$, 
this equals
\begin{equation}
  \label{eq:coeff}
D_1^{p}D_2^{u_0-p} \binom{p_1 + \cdots +p_k - u_2 - \cdots - u_k}{u_1}\cdots \binom{p_{k-1}+p_k-u_k}{u_{k-1}} \binom{p_k}{u_k}.  
\end{equation}
Next, we use the fact that
$$p + \cdots + p_k = u_0 + \cdots + u_k$$
to deduce
$$p_\ell + \cdots +p_k-u_\ell - \cdots - u_k
= u_0 + \cdots + u_{\ell-1}- p - \cdots - p_{\ell-1}$$ 
and 
$$
p_\ell + \cdots +p_k- u_{\ell+1} - \cdots - u_k =   u_0+\cdots+u_\ell-p-\cdots-p_{\ell-1}.
$$
This implies respectively
$$p_\ell + \cdots +p_k-u_\ell - \cdots - u_k
\le n-e + n_1 + \cdots + n_{\ell-1}- p - \cdots - p_{\ell-1} =
N_{\ell-1}-e-P_{\ell-1}$$
and 
\begin{eqnarray*}
  p_\ell + \cdots +p_k-u_{\ell+1} - \cdots - u_k
& \le & n-e + n_1 \cdots + n_{\ell-1} + n_\ell - p - \cdots - p_{\ell-1} \\
& \le & n_\ell + N_{\ell-1}-e-P_{\ell-1}\\
& \le & N_\ell-e \\
& \le & N_\ell. 
\end{eqnarray*}
Finally, since ${\binom{a}{b}}\leq a^{a-b}$, we have thus proved the
inequality
$$\binom{p_\ell + \cdots +p_k - u_{\ell+1} - \cdots - u_k}{u_\ell} \le 
N_\ell^{N_{\ell-1}-e-P_{\ell-1}}.$$ Using this upper bound and $u_0\leq n-e$
in~\eqref{eq:coeff} proves our claim.
\end{proof}


\section{Solving polynomial systems}\label{chap:posso}

The contents of this section is independent from most previous ones:
we revisit algorithms for solving polynomial systems, with a focus on
dimension zero and dimension one.

Finite sets of points will be encoded by zero-dimensional
parametrizations: we discuss basic algorithms for this data structure
in Subsection~\ref{sec:basicroutinesparam0}; curves will be
represented by a one-dimensional analogue, which is the subject of
Subsection~\ref{sec:basicroutinesparam1}. In
Subsections~\ref{sec:prodfields} and~\ref{sec:param}, we present
extensions of these questions to computations over {\em products of
  fields}, which will be needed later on. Finally, the longest
paragraph in this section is Subsection~\ref{sec:possobasepoints}; it
presents an adaptation of the geometric resolution algorithm
of~\cite{GiLeSa01} (which
follows~\cite{GiHeMoPa95,GiHeMoPa97,GiHeMoMoPa98}) to systems with
coefficients in a product of fields. The ideas we use to solve this
question are well-known (dynamic evaluation techniques), but
controlling their complexity is not straightforward. The final
subsection uses these results to describe an algorithm called ${\sf
  SingularPoints}$ that was mentioned in the main text.

In all algorithms below, we count arithmetic operations
$\{+,-,\times,\div\}$ in $\QQ$ at unit cost. To state our complexity
estimates we use the $\softO(\ )$ notation, so logarithmic factors are
omitted: $f$ is in $\softO(g)$ if there exists a constant $a$ such
that $f$ is in $O(g\log^a(g))$.  For instance, over $\QQ[X]$, polynomial
multiplication, Euclidean division, extended GCD computation and
squarefree factorization in degree $D$ can all be done using
$\softO(D)$ operations in $\QQ$~\cite{GaGe99}.

For most algorithms involving solving systems of multivariate
polynomial equations, we will use a {\em straight-line program}
encoding for the input, as was already done for generalized Lagrange
systems.

Many algorithms below are probabilistic, in the sense that they use
random elements in $\QQ$. Every time a random vector $v$ is chosen in
some parameter space $\QQ^i$, there will exist a non-zero polynomial
$\Delta$ such that the choice leads to success as soon as
$\Delta(\gamma)\ne 0$. Most such algorithms are Monte Carlo, since we
are not always able to verify correctness in an admissible amount of
time. If we are able to detect some cases of failure, we return the
string ${\sf fail}$ (but even when we do not return ${\sf fail}$, we
do not guarantee that the output is correct).


\subsection{Zero-dimensional parametrizations}\label{sec:basicroutinesparam0}

Let $\KK$ be a field of characteristic zero and $\CC$ be its algebraic
closure. A zero-dimensional parametrization
$\scrQ=((q,v_1,\dots,v_N),\pollambda)$ with coefficients in $\KK$
consists in a sequence of polynomials $(q,v_1,\dots,v_N)$, such that
$q\in \KK[T]$ is squarefree and all $v_i$ are in $\KK[T]$ and satisfy
$\deg(v_i) < \deg(q)$, and in a $\KK$-linear form $\pollambda$ in
variables $X_1,\dots,X_N$, such that $\pollambda(v_1,\dots,v_N)=T$. We
already used several times the fact that the corresponding algebraic
set, denoted by $\Zeroes(\scrQ)\subset \CC^N$, is defined by
$$q(\roottau) = 0, \qquad X_i = v_i(\roottau) \ \ (1 \le i \le N);$$ the
constraint on $\pollambda$ says that the roots of $q$ are precisely the
values taken by $\pollambda$ on $\Zeroes(\scrQ)$. The {\em degree} of $\scrQ$
is then defined as $\degQ=\deg(q)$, and we call $q$ the {\em minimal
  polynomial} of $\scrQ$. By convention, when $N=0$, $\scrQ$ is the
empty sequence; it defines $\{\bullet\}\subset\C^0$ and we set $\degQ=1$.

Zero-dimensional parametrizations are used in our algorithms to
represent zero-dimen\-sional algebraic sets. In the following
paragraphs, we describe a few elementary operations on
zero-dimensional algebraic sets defined by such an encoding.  All
zero-dimen\-sional parametrizations used in this section have
coefficients in $\KK=\QQ$; we will use $\KK=\C$ as well in the next sections.


We first mention a concept that will appear, implicitly or explicitly,
on several occasions. If $\scrQ=((q,v_1,\dots,v_N),\pollambda)$ is a
zero-dimensional parametrization with coefficients in $\QQ$, we call
{\em decomposition} of $\scrQ$ the data of parametrizations
$\scrQ_1,\dots,\scrQ_s$, with
$\scrQ_i=((q_i,v_{i,1},\dots,v_{i,N}),\pollambda)$, such that
$q=q_1 \cdots q_s$ and for all $i,j$, $v_{i,j} = v_j \bmod q_i$.
Geometrically, this means that we have decomposed $\Zeroes(\scrQ)$ as
the disjoint union of $\Zeroes(\scrQ_1),\dots,\Zeroes(\scrQ_s)$.

We can now continue with our basic algorithms, starting from an
algorithm performing linear changes of variables on zero-dimensional
parametrizations.
\begin{lemma}\label{lemma:complexity0:changevar}
  Let $\scrQ$ be a zero-dimensional parametrization of degree $\degQ$,
  with $\Zeroes(\scrQ)\subset \C^N$, and let $\mA$ be in $\GL(N,\QQ)$.
  There exists an algorithm ${\sf ChangeVariables}$ which takes as
  input $\scrQ$ and $\mA$ and returns a zero-dimensional
  parametrization $\scrQ^\mA$ such that
  $\Zeroes(\scrQ^\mA)=\Zeroes(\scrQ)^\mA$ using $\softO(N^2\degQ+N^3)$
  operations in $\QQ$.
\end{lemma}
\begin{proof}
  Suppose that the input parametrization $\scrQ$ consists in
  polynomials $(q, v_1, \ldots, v_N)$ in $\QQ[T]$ and a linear form
  $\pollambda$. First, we compute $\mA^{-1}$ in time $O(N^3)$. Then,
  computing a parametrization of
  $\Zeroes(\scrQ)^\mA=\varphi_\mA(\Zeroes(\scrQ))$, with
  $\varphi_\mA: \x \mapsto \mA^{-1} \x$, is simply done by multiplying
  $\mA^{-1}$ by the vector $[v_1, \ldots, v_N]^t$, and multiplying
  $\mA^t$ by the vector of coefficients of $\pollambda$, so the
  running time is $\softO(N^2 \degQ)$ operations in $\QQ$.
\end{proof}

Next, we consider set-theoretic operations such as union, intersection
and difference.  The first operation of this kind takes as input
zero-dimensional parametrizations $\scrQ$ and $\scrQ'$ encoding finite
sets of points in $\C^N$; it computes a zero-dimensional
parametrization encoding $\Zeroes(\scrQ)-\Zeroes(\scrQ')$. The
algorithm is described in Lemma~3 in~\cite{PoSc13} and leads to the
following result. This result is probabilistic (the algorithm chooses
at random a linear form in $X_1,\dots,X_N$ that must take pairwise
distinct values on the points of both $\Zeroes(\scrQ)$ and
$\Zeroes(\scrQ')$).

\begin{lemma}\label{sec:basicroutinesparam:lemma:discard}
  Let $\scrQ$ and $\scrQ'$ be zero-dimensional parametrizations, with
  $\Zeroes(\scrQ)$ and $\Zeroes(\scrQ')$ in $\C^N$ of respective
  degrees $\degQ$ and $\degQ'$. There exists a probabilistic algorithm
  ${\sf Discard}$ which takes as input $\scrQ$ and $\scrQ'$ and
  returns either a zero-dimensional parametrization $\scrQ''$ or
  ${\sf fail}$ using $\softO(N\max(\degQ,\degQ')^2)$ operations in
  $\QQ$.  In case of success,
  $\Zeroes(\scrQ'')=\Zeroes(\scrQ)-\Zeroes(\scrQ')$.
\end{lemma}

Algorithm ${\sf Union}$ below takes as input a sequence of
zero-dimensional paramet\-rizations $\scrQ_1, \ldots, \scrQ_s$ and it
returns a parametrization encoding
$\Zeroes(\scrQ_1)\cup\cdots\cup \Zeroes(\scrQ_s)$. The algorithm is given in
Lemma~3 of~\cite{PoSc13} as well, for the case $s=2$; the general case
is dealt with in the same manner, and gives the following result.

\begin{lemma}\label{sec:main:lemma:union}
  Let $\scrQ_1,\dots,\scrQ_s$ be zero-dimensional parametrizations,
  the sum of whose degrees being at most $\degQ$, with
  $\Zeroes(\scrQ_i)\subset \C^N$ for all $i$. There exists a
  probabilistic algorithm ${\sf Union}$ which takes as input
  $\scrQ_1, \ldots, \scrQ_s$ and returns either a zero-dimensional
  parametrization $\scrQ$ or ${\sf fail}$ using $\softO(N\degQ^2)$
  operations in $\QQ$. In case of success,
  $\Zeroes(\scrQ)=\Zeroes(\scrQ_1)\cup\cdots\cup \Zeroes(\scrQ_s)$.
\end{lemma}

The next algorithm takes as input a zero-dimensional parametrization
$\scrQ$ and a polynomial $G$. It returns a zero-dimensional
parametrization encoding $\Zeroes(\scrQ)\cap V(G)$.  We will actually
not use this algorithm as it is, but rather an extension of it with
coefficients in a product of fields; we give this simpler version
first as a starting point for the product of fields version.

\begin{lemma}\label{sec:basicroutinesparam:lemma:intersect}
  Let $\scrQ$ be a zero-dimensional parametrization of degree $\degQ$,
  with $\Zeroes(\scrQ)\subset \C^N$, and let
  $G \in \QQ[X_1, \ldots, X_N]$ a polynomial given by a straight-line
  program $\Gamma$ of length $E$.  There exists an algorithm
  ${\sf Intersect}$ which takes as input $\scrQ$ and $\Gamma$ and
  returns a zero-dimensional parametrization of
  $\Zeroes(\scrQ)\cap V(G)$ using $\softO((N+E)\degQ)$ operations in~$\QQ$.
\end{lemma}
\begin{proof}
  We are given an input parametrization $\scrQ$ consisting in
  polynomials $(q, v_1, \ldots, v_N)$ in $\QQ[T]$ and in a linear form
  $\pollambda$, and a straight-line program $\Gamma$ that computes a
  polynomial $G$. The output consists in polynomials
  $((r,w_1,\dots,w_N),\pollambda)$, with $r=\gcd(q,G(v_1,\dots,v_N))$ and
  $w_i = v_i \bmod r$ for all $i$.  To compute $r$, we rewrite it as
  $r=\gcd(q,G(v_1,\dots,v_N) \bmod q)$.  First, we compute
  $$G(v_1,\dots,v_N) \bmod q$$ by evaluating the straight-line program
  for $G$ at $v_1,\dots,v_N$, doing all operations modulo $q$; this
  takes $\softO(E\degQ)$ operations in $\QQ$. The subsequent GCD takes
  $\softO(\degQ)$ operations in $\QQ$, and the Euclidean divisions
  used to compute $w_1,\dots,w_N$ cost $\softO(N\degQ)$ operations in~$\QQ$.
\end{proof}

Finally, we deal with projections and their fibers.  Given a
zero-dimensional parametrization $\scrQ$ encoding $Q=\Zeroes(\scrQ)\subset
\C^N$ and an integer $e$, we now want to compute a zero-dimensional
parametrization encoding $\pi_e(Q)$. The following result is an
immediate consequence of \cite[Lemma~4]{PoSc13}.

\begin{lemma}\label{sec:posso:lemma:projection}
  Let $\scrQ$ be a zero-dimensional parametrization of degree $\degQ$,
  with $\Zeroes(\scrQ)\subset \C^N$. There exists a probabilistic
  algorithm ${\sf Projection}$ which takes as input $\scrQ$ and $e$
  and returns either a zero-dimensional parametrization $\scrQ'$ or
  ${\sf fail}$ using $\softO(N^2\degQ^2)$ operations in $\QQ$.  In
  case of success, $\Zeroes(\scrQ')=\pi_e(Q)$.
\end{lemma}

In the converse direction, algorithm ${\sf Lift}$ below takes as input
two zero-dimen\-sional parametrizations $\scrQ$ and $\scrR$ encoding
respectively $Q=\Zeroes(\scrQ)\subset \C^N$ and
$R=\Zeroes(\scrR)\subset \C^e$ with $e\le N$. It returns a
zero-dimensional parametrization of the fiber
$\fbr(Q,R)=Q \cap\pi_e^{-1}(R)$.
\begin{lemma}\label{lemma:subroutine:Lift}
  Let $\scrQ$ and $\scrR$ be zero-dimensional parametrizations of
  degrees at most $\degQ$ with $\Zeroes(\scrQ)\subset \C^N$,
  $\Zeroes(\scrR)\in \C^e$ and $e\leq N$. There exists a probabilistic
  algorithm ${\sf Lift}$ which takes as input $\scrQ$ and $\scrR$ and
  returns a zero-dimensional parametrization $\scrQ'$ using
  $\softO(N\degQ^2)$ operations in $\QQ$. In case of success,
  $\Zeroes(\scrQ')=\Zeroes(\scrQ)\cap\pi_e^{-1}(\Zeroes(\scrR))$.
\end{lemma}
\begin{proof}
  We let $\scrQ=((q, v_1, \ldots, v_N), \pollambda)$ and $\scrR=((r, w_1,
  \ldots, w_e), \nu)$ with $\pollambda=\pollambda_1 X_1+\cdots+\pollambda_N
  X_N$ and $\nu=\nu_1 X_1+\cdots+\nu_e X_e$.  We replace $\nu$ by a
  new random linear form, for a cost of $\softO(e\degQ^2)$,
  using~\cite[Lemma 6]{GiLeSa01}. Since $\nu$ is randomly chosen, we can
  assume that it separates the elements of $\Zeroes(\scrR) \cup
  \pi_e(\Zeroes(\scrQ))$, that is, that it takes pairwise different values
  on the points of that set.

  Let $s=\gcd(q,r(\nu_1 v_1 + \cdots + \nu_e v_e))$.  We claim that if
  $\roottau$ is a root of $q$, then $s(\roottau)=0$ if and only if the point
  $\x=(v_1(\roottau),\dots,v_N(\roottau))\in \Zeroes(\scrQ)$ satisfies
  $\pi_e(\x)\in \Zeroes(\scrR)$. Indeed, if $\pi_e(\x)$ is in
  $\Zeroes(\scrR)$, then
  $\sigma=\nu(\pi_e(\x))= \nu_1 v_1(\roottau) + \cdots + \nu_e v_e(\roottau)$
  is a root of $r$, and thus
  $r(\nu_1 v_1 + \cdots + \nu_e v_e)(\roottau)=0$. Conversely, suppose
  that $s(\roottau)=0$, so that
  $r(\nu_1 v_1 + \cdots + \nu_e v_e)(\roottau)=0$. In other words,
  $\nu_1 v_1(\roottau) + \cdots + \nu_e v_e(\roottau)=\nu(\pi_e(\x))$ is a
  root of $r$.  Write $\sigma=\nu(\pi_e(\x))$, and let
  $\y=(w_1(\sigma),\dots,w_e(\sigma))\in \Zeroes(\scrR)$.  By
  construction, $\nu(\y)=\sigma$, so $\nu(\y)=\nu(\pi_e(\x))$. By our
  assumption on $\nu$, this means that $\y=\pi_e(\x)$, so $\pi_e(\x)$
  is in $\Zeroes(\scrR)$, as claimed.

  We first compute $r(\nu_1 v_1 + \cdots + \nu_e v_e) \bmod q$, by
  evaluating it at $\nu_1 v_1 + \cdots + \nu_e v_e$ is
  $\softO(\degQ^2)$ operations.  Then, the previous discussion shows
  that it is enough to return $((s,t_1,\dots,t_N),\pollambda)$, where
  $t_i = v_i \bmod s$ for all $i$; these are computed using $\softO(N
  \degQ)$ operations.
\end{proof}


\subsection{One-dimensional parametrizations}\label{sec:basicroutinesparam1}

Next, we discuss the one-dimensional analogue of the parametrizations
seen above. As above, let us first consider an arbitrary field $\KK$
of characteristic zero. A {\em
  one-dimensional parametrization}
$\scrQ=((q,v_1,\dots,v_N),\pollambda,\pollambda')$ with coefficients in
$\KK$ consists  in the following:
\begin{itemize}
\item polynomials $(q,v_1,\dots,v_N)$, such that $q\in \KK[U,T]$ is
  squarefree and monic in both $U$ and $T$, together with additional
  degree constraints explained below, and such that all $v_i$ are in
  $\KK[U,T]$ and satisfy $\deg(v_i,T) < \deg(q,T)$
\smallskip
\item linear forms $\pollambda,\pollambda'$ in $X_1,\dots,X_N$, such that
  $$\pollambda\left (v_1,\dots,v_N\right)=U \frac{\partial q}{\partial T}
  \bmod q \quad\text{and}\quad \pollambda'\left (v_1,\dots,v_N\right)=T
  \frac{\partial q}{\partial T} \bmod q.$$
\end{itemize}
This can thus be seen as a one-dimensional analogue of a
zero-dimen\-sional paramet\-rization. 

The corresponding algebraic set, denoted by
$\Zeroes(\scrQ)\subset \CC^N$, is now defined as the Zariski closure
of the locally closed set given by
$$q(\eta,\rootrho) = 0, \qquad \frac{\partial q}{\partial T}(\eta,\rootrho)
\ne 0, \qquad X_i = \frac{v_i(\eta, \rootrho)}{\frac{\partial q}{\partial
    T}(\eta, \rootrho)} \ \ (1 \le i \le N).$$
Remark that $\Zeroes(\scrQ)$ is one-equidimensional and that the
condition on $\pollambda$ and $\pollambda'$ means that the plane curve
$V(q)$ is the Zariski closure of the image of $\Zeroes(\scrQ)$ through
the projection $\x \mapsto (\pollambda'(\x),\pollambda(\x))$.

We define the {\em degree} $\degQ$ of $\scrQ$ as the degree of
$\Zeroes(\scrQ)$. Due to our assumption on $\pollambda$ and
$\pollambda'$, and using for instance~\cite[Theorem~1]{Schost03}, we
deduce that all polynomials $q,v_1,\dots,v_N$ have total degree at
most $\degQ$.

The additional degree constraint mentioned in the first item above is
that $q$ has degree {\em exactly} $\degQ$ in both $T$ and $U$ (so
under this assumption, we can simply read off $\degQ$ from $q$). This
constraint is actually very weak: because $\KK$ is infinite, {\em any}
algebraic curve in $\CC^N$ and defined over $\KK$ can be written as
$\Zeroes(\scrQ)$, for a suitable one-dimensional parametrization~$\scrQ$,
simply by choosing $\pollambda$ and $\pollambda'$ as random linear forms in
$X_1,\dots,X_N$ with coefficients in $\KK$~\cite{GiLeSa01}.

In the following paragraphs, we always take $\KK=\QQ$; we use $\KK=\C$
in the next sections. We describe a few elementary operations on
algebraic curves defined by such an encoding. As a preliminary remark,
note that if $\scrQ$ has degree $\degQ$, storing $\scrQ$ involves $O(N
\degQ^2)$ elements of $\QQ$, as each bivariate polynomial in $\scrQ$
has total degree at most $\degQ$.

\begin{lemma}\label{lemma:complexity1:changevar}
  Let $\scrQ$ be a one-dimensional parametrization of degree at most
  $\degQ$, with $\Zeroes(\scrQ)\subset \C^N$, and let $\mA$ be in
  $\GL(N,\QQ)$.  There exists an algorithm ${\sf ChangeVariables}$
  that takes as input $\scrQ$ and $\mA$ and returns a one-dimensional
  parametrization $\scrQ^\mA$ such that
  $\Zeroes(\scrQ^\mA)=\Zeroes(\scrQ)^\mA$ using
  $\softO(N^2\degQ^2+N^3)$ operations in $\QQ$.
\end{lemma}
\begin{proof}
  The proof is similar to that of Lemma~\ref{lemma:complexity0:changevar};
  it suffices to work on bivariate polynomials instead of univariate 
  ones, whence the extra cost.
\end{proof}

\begin{lemma}\label{sec:main:lemma:union1}
  Let $\scrQ$ and $\scrQ'$ be one-dimensional parametrizations, with
  $\Zeroes(\scrQ)$ and $\Zeroes(\scrQ')$ in $\C^N$ of respective degrees $\degQ$
  and $\degQ'$. There exists a probabilistic algorithm ${\sf Union}$
  which takes as input $\scrQ$ and $\scrQ'$ and returns either a
  one-dimensional parametrization $\scrQ''$ or ${\sf fail}$ using
  $\softO(N \max(\degQ,\degQ')^3)$ operations in~$\QQ$. In case of
  success, $\Zeroes(\scrQ'')=\Zeroes(\scrQ) \cup \Zeroes(\scrQ')$.
\end{lemma}
\begin{proof}
  First, we ensure that the pairs of linear forms associated to
  $\scrQ$ and $\scrQ'$ are the same; then, we use extended GCD
  techniques to combine them.
  
  For the first step, we pick two new random linear forms $\linearmu,\linearmu'$
  in $X_1,\dots,X_N$, and compute two new parametrizations $\scrS$ and
  $\scrS'$, both having $\linearmu$ and $\linearmu'$ as associated linear forms
  and such that $\Zeroes(\scrS)=\Zeroes(\scrQ)$ and $\Zeroes(\scrS')=\Zeroes(\scrQ')$.

  Suppose that the linear forms associated to $\scrQ$ are called
  $\pollambda$ and $\pollambda'$, and let us explain how to replace
  the second linear form $\pollambda'$ by $\linearmu'$ in $\scrQ$. We
  proceed as in Lemma~\ref{sec:main:lemma:union} (up to the harmless
  fact that the parametrizations of $X_1,\dots,X_N$ now take the form
  $X_i = {v_i}/{\frac{\partial q}{\partial T}}$)d, but working over
  the base field $\QQ(U)$. Using the results
  of~\cite[Lemma~2]{PoSc13}, this takes $\softO(N \degQ^2)$ operations
  in $\QQ(U)$. Letting $q$ denote the minimal polynomial of $\scrQ$,
  the fact that $\deg(q,U)=\deg(\Zeroes(\scrQ))$ implies that the
  projection $\Zeroes(\scrQ) \to \C$ given by $\x \mapsto
  \pollambda(\x)$ is finite; as a result, as in~\cite{GiLeSa01}, for a
  generic choice of $\linearmu'$, in the output of this step, all
  coefficients are in $\QQ[U]$.
  
  In order to keep the cost of computing with the extra variable $U$
  under control, we work using truncated power series in
  $\QQ[[U-u_0]]$ instead of rational functions. We choose randomly the
  point of expansion $u_0$ for our power series. For all choices of
  $u_0$, except finitely many of them, we can run the former algorithm
  with coefficients in $\QQ[[U-u_0]]$ and not encounter any division
  by a series with positive valuation (if we do, we return ${\sf
    fail}$). The degrees in $U$ of all coefficients in the output are
  at most $\degQ = \deg(q,U) = \deg(q,T)$, so is it enough to truncate
  all power series modulo $(U-u_0)^{\degQ+1}$. As a result, the total
  cost is $\softO(N \degQ^3)$ operations in $\QQ$, instead of
  $\softO(N \degQ^2)$ for the algorithm of
  Lemma~\ref{sec:main:lemma:union}.

  This process gives us a one-dimensional parametrization $\scrR$. We
  then proceed similarly to replace $\pollambda$ by $\linearmu$ in $\scrR$,
  obtaining a parametrization $\scrS$; this mainly amounts to
  exchanging the roles of $U$ and $T$, taking into account the
  particular form of denominator that appears in the
  parametrizations. We then follow the same steps with $\scrQ'$,
  obtaining a one-dimensional parametrization $\scrS'$, for a total of
  $\softO(N {\degQ'}^3)$ operations.

  In the second stage, we compute the union of $\Zeroes(\scrS)$ and
  $\Zeroes(\scrS')$. As above, we want to follow the algorithm given in
  Lemma~\ref{sec:main:lemma:union}, but with coefficients in
  $\QQ(U)$. We apply the same techniques of computations with
  truncated power series coefficients; this induces the same overhead
  $\softO(\max(\degQ,\degQ'))$ as it did in the previous paragraphs,
  so the cost is again $\softO(N \max(\degQ,\degQ')^3)$ operations in $\QQ$.
\end{proof}

Next, we deal with projections and their fibers.  Given a
one-dimensional parametrization $\scrQ$ encoding $V=\Zeroes(\scrQ)\subset
\C^N$ and an integer $e \le N$, we may want to compute a
one-dimensional parametrization encoding the Zariski closure of
$\pi_e(V)$. Remark however that $\pi_e(V)$ may not be purely
one-dimensional: some irreducible components of $V$ may project onto
isolated points (with thus infinite fibers). These points will not be
part of the output; only the one-dimensional component will be.

\begin{lemma}\label{sec:posso:lemma:projection1}
  Let $\scrQ$ be a one-dimensional parametrization of degree at most
  $\degQ$, with $V=\Zeroes(\scrQ)\subset \C^N$, and let $e$ be in
  $\{2,\dots,N\}$. There exists a probabilistic algorithm
  ${\sf Projection}$ which takes as input $\scrQ$ and $e$ and returns
  either a one-dimensional parametrization $\scrQ'$ or ${\sf fail}$
  using $\softO(N^2 \degQ^3)$ operations in $\QQ$.  In case of
  success, $\Zeroes(\scrQ')$ is the one-dimensional component of $\pi_e(V)$.
\end{lemma}
\begin{proof}
  We start from $\scrQ=((q,v_1,\dots,v_N),\pollambda,\pollambda')$, and we
  first apply an algorithm similar to that of
  Lemma~\ref{sec:posso:lemma:projection}, with polynomials in
  $\QQ(U)[T]$ instead of $\QQ[T]$. This computes polynomials
  $(r,w_1,\dots,w_e)$ and linear forms $\pollambda$ (given as input) and
  $\linearmu'$, where the latter depends only on $X_1,\dots,X_e$.  As in
  Lemma~\ref{sec:main:lemma:union1}, we circumvent the problem of
  computing with rational functions by working with power series in
  $U-u_0$, for a randomly chosen $u_0$; we need power series of
  precision $O(\degQ)$, so the total cost increases to
  $\softO(N^2\degQ^3)$. This part of the algorithm may return
  ${\sf fail}$ (if we attempt a division by a power series of positive
  valuation); otherwise, it returns a one-dimensional parametrization.

  At this stage, we have replaced $\pollambda'$ by a new linear form $\linearmu'$,
  that depends only on $X_1,\dots,X_e$. This does not give a
  one-dimensional parametrization of $\pi_e(V)$ yet, since $\pollambda$
  still involves all variables. As a second step, we follow the same 
  routine, working this time in $\QQ(T)[U]$. The cost is again 
  $\softO(N^2\degQ^3)$.
\end{proof}

The final operation is somewhat similar to algorithm ${\sf Discard}$ 
introduced for zero-dimen\-sional parametrizations, with a slight twist:
given a one-dimensional parametrization $\scrQ$ that defines a curve
$V=\Zeroes(\scrQ) \subset \C^N$, and given points $S$ in $\C^e$, for some 
$e \le N$, we want to compute a parametrization for the 
Zariski closure of $V-\pi_e^{-1}(S)$.

\begin{lemma}\label{sec:basicroutinesparam:lemma:discarddim1}
  Let $\scrQ$ be a one-dimensional parametrization of degree at most
  $\degQ$, with $\Zeroes(\scrQ)\subset \C^N$, and let $\scrR$ be a
  zero-dimensional parametrization of degree at most $\degQ'$, with
  $\Zeroes(\scrR) \subset \C^e$. There exists a probabilistic algorithm
  ${\sf Discard}$ which takes as input $\scrQ$ and $\scrR$ and returns
  either a one-dimensional parame\-trization $\scrQ'$ or ${\sf fail}$
  using $\softO(N \degQ \max(\degQ,\degQ')^2)$ operations in $\QQ$. In
  case of success, $\Zeroes(\scrQ')$ is the Zariski closure of
  $\Zeroes(\scrQ)-\pi_e^{-1}(\Zeroes(\scrR))$. 
\end{lemma}
\begin{proof}
  Let us write $\scrQ=((q,v_1,\dots,v_N),\pollambda,\pollambda')$ and
  $\scrR=((r, w_1, \ldots, w_e), \nu)$, with all polynomials in
  $\scrQ$ in $\QQ[U,T]$ and all polynomials in $\scrR$ in
  $\QQ[X]$. The parametrization we are looking for has the form
  $\scrQ'=((q',v_1',\dots,v_N'),\pollambda,\pollambda')$, for some factor
  $q'$ of $q$, and with $v'_i=v_i \bmod q'$ for all $i$.

  Suppose without loss of generality that $q$ has positive degree in
  $T$ (if $q=1$, there is nothing to do; if $q$ is in $\QQ[U]$,
  exchange $T$ and $U$). Then, we obtain the result by running the
  zero-dimensional algorithms ${\sf Lift}$ from
  Lemma~\ref{lemma:subroutine:Lift} and ${\sf Discard}$ from
  Lemma~\ref{sec:basicroutinesparam:lemma:discard}, with input $\scrQ$
  and $\scrR$; the coefficients should be taken in $\QQ(U)$, but as
  above, we use power series in $U$ of precision $O(\degQ)$.  The cost
  estimate follows from the results in these two lemmas, up to an
  $\softO(\degQ)$ overhead due to the fact that we work with power
  series of precision $O(\degQ)$.
\end{proof}


\subsection{Working over a product of fields: basic operations} \label{sec:prodfields}\label{ssec:prodfieldsbasic}

In the next subsections, we will deal with zero-dimensional and
one-dimensional paramet\-rizations with coefficients in a {\em product of
fields} instead of $\QQ$; these will be well suited to handle algebraic
sets lying over a given finite set $Q$. In this paragraph, we review
de\-finitions and describe several basic operations for polynomials
over a product of fields.

Let $q$ be a monic, squarefree polynomial of degree $\degQ$ in $\QQ[T]$
and define $\A=\QQ[T]/\langle q \rangle$. Because we do not assume
that $q$ is irreducible, $\A$ may not be a field; it is the product of
the fields $\A_1=\QQ[T]/\langle c_1\rangle, \ldots,
\A_\ell=\QQ[T]/\langle c_\ell\rangle$, where $c_1,\dots,c_\ell$ are
the irreducible factors of $q$. 

We describe here how complexity results for basic computations over
$\QQ$ can be extended to computations over $\A$. If $q$ were
irreducible, it would be straightforward to deduce that working in
$\A$ induces an overhead of the form $\softO(\degQ)$. For a general $q$,
one workaround would be to factor it into irreducibles and work modulo
all factors independently; however, we do not allow the use of
factorization algorithms in $\QQ[T]$: they may not be available over
$\QQ$, or too costly. The results below show that for many questions,
we will be able to bypass factorization algorithms and pay roughly the
same overhead   $\softO(\degQ)$ as if $q$ were irreducible.

Regardless of the factorization of $q$, addition, subtraction and
multiplication in $\A$ can be done in $\softO(\degQ)$ operations in
$\QQ$. Similarly, addition, subtraction and multiplication of
polynomials of degree $D$ in $\A[X]$ can be done within $\softO(D \degQ
)$ operations in~$\QQ$.

However, because $\A$ may not be a field, some notions need to be
adapted. The first obvious remark is that a non-zero element $u$
in $\A$ may not be invertible; however, we can test whether $u$
is a unit in $\A$, and if so compute its inverse, using
$\softO(\degQ)$ operations in $\QQ$, by means of an extended GCD
computation in $\QQ[T]$ between $q$ and the canonical lift of $u$
to $\QQ[T]$. In Lemma~\ref{lemma:inter}, we will need the following
straightforward extension of this result to inversion in extension
rings of $\A$ (the degrees we use here are those that will be needed
when we apply this result).

\begin{lemma}\label{geosolve:lemma:euclideinrings1}
  Let $F,G$ be polynomials in $\A[Y,X]$, with degree at most $\delta$
  in $X$ and $Y$ and with $F$ monic in $X$. Suppose that for any root
  $\roottau$ of $q$ in $\C$, the polynomials $F(\roottau,Y,X)$ and
  $G(\roottau,Y,X)$ are coprime in $\C(Y)[X]$. Then, for all $u \in
  \QQ$ except a finite number, and for any integer $D$, $G$ is
  invertible in $\A[Y,X]/\langle (Y-u)^{\delta D},F\rangle$ and
  one can compute its inverse using $\softO(D \degQ \delta^2)$
  operations in $\QQ$.
\end{lemma}
\begin{proof}
  Our assumption implies that for any root $\roottau$ of $q$,
  the polynomial $G(\roottau,Y,X)$ is invertible in $\C[Y,X]/\langle
  (Y-u),F(\roottau,Y,X)\rangle$ for all values of $u$ except for
  a finite number. Taking all roots of $q$ into account, we deduce
  that, except for a finite number of values of $u$, $G$ is
  invertible in $\A[Y,X]/\langle (Y-u),F(Y,X)\rangle$; when it
  is, Proposition~6 in~\cite{DaJiMoSc08} shows that its inverse can be
  computed in $\softO(\degQ \delta)$ operations in $\QQ$. Using Newton
  iteration modulo the powers of
  $(Y-u)$~\cite[Chapter~9]{GaGe99}, the claim of the lemma
  follows.
\end{proof}

The notion of greatest common divisor (GCD) in $\A[X]$ requires a more
significant adaptation: we require GCD's to be monic; as a result, we
may have to {\em split} $q$ into factors and output several
polynomials that will play the role of GCD's modulo the factors of $q$.
Explicitly, if $F,G$ are in $\A[X]$, a GCD of $(F,G)$ consists in
pairs $(q_1,H_1),\dots,(q_r,H_r)$, with $q_i$ monic in $\QQ[T]$ and
$H_i$ monic in $\QQ[T]/\langle q_i\rangle[X]$, such that $q=q_1 \cdots
q_r$ and such that the ideals $\langle q_i, H_i\rangle$ and $\langle
q_i, F, G\rangle$ coincide for all $i$. Note that $q_1,\dots,q_r$ are
not necessarily irreducible, so that such a GCD may not be unique.

To compute a GCD as above, we run the fast extended GCD algorithm in
$\A[X]$, as if $\A$ were a field, but using dynamic evaluation
techniques~\cite{D5}: if we are led to attempt to invert a
zero-divisor in $\A$, knowing this zero-divisor allows us to split $q$
into two factors; we can then continue with further computations in
two branches independently. These ideas were studied from the
complexity viewpoint in~\cite{ADCM03,DaMMMScXi06}, leading to the
following result.

\begin{lemma}\label{geosolve:lemma:GCD}
  Let $F,G$ be in $\A[X]$ of degree at most $\delta$. Then, one can
  compute a GCD $(q_1,H_1),\dots,(q_r,H_r)$ of $F$ and $G$ using
  $\softO(\degQ \delta)$ operations in $\QQ$.
\end{lemma}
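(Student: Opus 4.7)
The plan is to run a fast Euclidean-type algorithm in $\A[X]$ as though $\A$ were a field, using dynamic evaluation~\cite{D5} to handle the case where an intermediate computation demands inverting a zero-divisor. Concretely, I would invoke the half-GCD algorithm (see e.g.~\cite{GaGe99}) on the pair $(F,G)$ in $\A[X]$. When this algorithm attempts to invert a leading coefficient $\alpha\in \A$, I would first run an extended GCD between a lift of $\alpha$ to $\QQ[T]$ and the modulus $q$: if $\alpha$ is a unit modulo $q$, this returns its inverse for $\softO(\degQ)$ operations in $\QQ$; otherwise it yields a non-trivial factorization $q = q^{(0)} q^{(1)}$, where $\alpha$ is zero modulo $q^{(0)}$ and a unit modulo $q^{(1)}$, after which I would continue the computation in the two branches $\QQ[T]/\langle q^{(0)}\rangle$ and $\QQ[T]/\langle q^{(1)}\rangle$ independently. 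At the end of each branch, I would normalize the returned polynomial so that it is monic in $X$ modulo the corresponding factor of $q$ (possibly splitting once more according to the support of its leading coefficient), yielding pairs $(q_i,H_i)$ as required by the definition of a GCD in the statement.

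Correctness follows from the D5 principle: the computation performed in each final branch coincides with the Euclidean algorithm over the field $\A/\langle q_i\rangle$ localized at the appropriate unit, and by construction $\langle q_i, H_i\rangle = \langle q_i, F, G\rangle$ holds since $H_i$ is obtained from linear combinations of $F$ and $G$ with coefficients in $\A$. The fact that no further splittings are needed after we obtain a monic representative comes from the usual argument that a monic polynomial in $\QQ[T]/\langle q_i\rangle[X]$ that generates $\langle q_i,F,G\rangle/\langle q_i\rangle$ is uniquely determined, so the different branches of the computation are forced to agree.

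The core of the proof is the complexity analysis. The key input is the cost bound for the fast Euclidean algorithm combined with dynamic evaluation proved in~\cite{ADCM03,DaMMMScXi06}: the total number of operations in $\QQ$ spent across all branches to perform a half-GCD on polynomials of degree $\delta$ with coefficients in $\A$ is $\softO(\degQ\delta)$, i.e.\ only a polylogarithmic overhead over the cost of working in a field. The crucial mechanism behind this bound is that at every splitting $q = q^{(0)} q^{(1)}$ the work done in each sub-branch scales with $\deg(q^{(j)})$ rather than $\degQ$; since $\sum_i \deg(q_i)=\degQ$ throughout, the per-level cost telescopes to $\softO(\degQ\delta)$, and the logarithmic recursion depth of the half-GCD only contributes an additional polylogarithmic factor absorbed into $\softO(\ )$.

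The main obstacle is not the correctness of the splitting procedure, which is folklore, but rather invoking the complexity analysis of dynamic evaluation carefully enough that the overhead remains polylogarithmic rather than linear in the number of branches: a naive implementation that runs the extended Euclidean algorithm from scratch in each branch would incur a factor equal to the number of irreducible factors of $q$. I would therefore rely on the branch-sharing strategy of~\cite{ADCM03,DaMMMScXi06}, which organizes the splittings so that each level of the half-GCD recursion is executed once per current factor of $q$ and contributes a cost proportional to the degree of that factor; summing over the $O(\log\delta)$ recursion levels and over all current factors gives the announced $\softO(\degQ\delta)$ bound.
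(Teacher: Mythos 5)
Your proposal matches the paper's argument: the paper likewise runs the fast extended GCD algorithm in $\A[X]$ as if $\A$ were a field, splits $q$ via dynamic evaluation~\cite{D5} whenever a zero-divisor must be inverted, and delegates the $\softO(\degQ\delta)$ cost bound to the complexity analyses of~\cite{ADCM03,DaMMMScXi06}. Your additional remarks on branch-sharing and the telescoping of costs over the factors of $q$ are a faithful elaboration of what those references establish, so the approach is essentially identical.
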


As an application, we discuss how to define and compute a squarefree
part of a polynomial $F$ in $\A[X]$. As above, we impose the output to
be monic. Then, a {\em squarefree part} of such an $F$ consists in
pairs $(q_1,H_1),\dots,(q_r,H_r)$, such that $q=q_1\cdots q_r$ and for
all $i$, $H_i$ is monic in $\QQ[T]/\langle q_i\rangle[X]$, and the
ideal $\langle q_i,H_i\rangle$ is the radical of the ideal $\langle
q_i,F\rangle$ in $\QQ[T,X]$; as for GCD's, this squarefree part is not
uniquely defined.  Using the GCD algorithm above, we deduce easily the
following cost estimate for squarefree part computation.

\begin{lemma}\label{geosolve:lemma:SQF}
  Let $F$ be in $\A[X]$ of degree at most $\delta$. Then, one can
  compute a squarefree part $(q_1,H_1),\dots,(q_r,H_r)$ of $F$ using
  $\softO(\degQ \delta)$ operations in $\QQ$.
\end{lemma}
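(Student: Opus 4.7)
The plan is to mimic the classical over-a-field construction "squarefree part $= F/\gcd(F,F')$", but carry it out in $\A[X]$ using the GCD routine of Lemma~\ref{geosolve:lemma:GCD} together with dynamic evaluation splittings so as to respect the monicity requirement in the definition of a squarefree part.

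Concretely, I would first form the derivative $F' = \partial F/\partial X \in \A[X]$; this is a componentwise differentiation and costs $\softO(\degQ \delta)$. Next I would apply Lemma~\ref{geosolve:lemma:GCD} to the pair $(F,F')$, obtaining pairs $(q_1,G_1),\dots,(q_s,G_s)$ with $q=q_1\cdots q_s$, each $G_i$ monic in $(\QQ[T]/\langle q_i\rangle)[X]$, and $\langle q_i,G_i\rangle = \langle q_i,F,F'\rangle$ in $\QQ[T,X]$; the total cost of this step is $\softO(\degQ \delta)$. For each $i$, because $G_i$ is monic, the Euclidean division $F = G_i H_i^\star + R_i$ in $(\QQ[T]/\langle q_i\rangle)[X]$ is well-defined, and because $G_i$ divides $F$ modulo every irreducible factor of $q_i$ (one checks this factor by factor using the identity over a field), the remainder $R_i$ is zero, so $H_i^\star = F/G_i$ is obtained in $\softO(\deg(q_i)\,\delta)$ operations; summed over $i$ this is $\softO(\degQ \delta)$.

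The only remaining obstacle is that each $H_i^\star$ need not be monic in $X$. To enforce monicity I would use dynamic evaluation on the leading coefficient $c_i \in \QQ[T]/\langle q_i\rangle$ of $H_i^\star$: attempt to invert $c_i$ modulo $q_i$ by extended GCD in $\QQ[T]$; if $c_i$ is a unit, multiply $H_i^\star$ by its inverse to get a monic $H_i$; otherwise the extended GCD returns a nontrivial factor $q_i'$ of $q_i$, split $q_i = q_i' q_i''$, reduce $H_i^\star$ and $c_i$ modulo each factor, and recurse. Each splitting step costs $\softO(\deg(q_i)\,\delta)$, and since the depth of the resulting splitting tree on each branch is bounded by the number of distinct residual leading behaviors (at most $\delta+1$, since that bounds the degree in $X$ after every truncation), a standard amortization through the tree gives total cost $\softO(\degQ \delta)$, as in the analysis of Lemma~\ref{geosolve:lemma:GCD} (the same $\mathrm{D5}$-style split-and-recombine structure of~\cite{ADCM03,DaMMMScXi06}).

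Finally, I would verify the output specification: by construction $q = q_1\cdots q_s$ and each $H_i$ is monic in $(\QQ[T]/\langle q_i\rangle)[X]$. That $\langle q_i,H_i\rangle$ equals the radical of $\langle q_i,F\rangle$ in $\QQ[T,X]$ follows by checking modulo each irreducible factor $c$ of $q_i$: there $\QQ[T]/\langle c\rangle$ is a field, the GCD property gives $G_i \bmod c = \gcd(F\bmod c, F'\bmod c)$, and the standard identity $F/\gcd(F,F')$ yields the squarefree part of $F \bmod c$, so $H_i \bmod c$ generates the radical of $F \bmod c$. Assembling these via CRT across the irreducible factors of $q_i$ gives the claimed radical equality. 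The main technical care is thus in the dynamic-evaluation bookkeeping that ensures monicity while keeping the global cost down to $\softO(\degQ \delta)$.
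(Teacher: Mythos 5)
Your proposal is correct and follows essentially the same route as the paper, whose proof is just a one-line reduction to the GCD routine of Lemma~\ref{geosolve:lemma:GCD}: you have simply filled in the standard $F/\gcd(F,F')$ construction, the exact division justified factor-by-factor, and the D5-style normalization of the leading coefficient. The extra care you take with monicity and the amortized splitting cost is exactly the "easily deduced" content the paper leaves implicit.
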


In a similar vein, we will say that $F\in\A[X]$ is {\em squarefree} if
the ideal $\langle q,F\rangle$ is radical. This definition will carry
over to multivariate polynomials $F$ with coefficients in $\A$ (we
will need $F$ bivariate, at most).

Finally, we discuss the computation of resultants. For this question,
there will be no splitting involved in the output, since the resultant
can be defined over any ring. However, in the algorithm of
Subsection~\ref{sec:possobasepoints}, we will need further a rather
complex setup: we compute resultants of polynomials, not over $\A$,
but over a power series ring over $\A$. Explicitly, we work over the
ring
$$\B=\A[t,t_1,\dots,t_N,U]/\langle
(t,t_1,\dots,t_N)^2,(U-u_0)^{D\delta+1} \rangle,$$ for some new
variables $t,t_1,\dots,t_N,U$ and $u_0 \in \QQ$ and integers
$D,\delta$; remark that storing an element of $\B$ uses $O(\degQ N D
\delta)$ elements of $\QQ$.  Remark as well that $\B$ is the product
of the rings $\B_\roottau$, for $\roottau$ a root of $q$, with
$$\B_\roottau=\C[t,t_1,\dots,t_N,U,T]/\langle
(t,t_1,\dots,t_N)^2,(U-u_0)^{D\delta+1},(T-\roottau) \rangle.$$ For a
polynomial $F$ in $\B[X]$ and a root $\roottau$ of $q$, we denote by
$F_\roottau$ the image of $F$ in $\B_\roottau[X]$ obtained by evaluating $T$
at $\roottau$. Finally, in the following lemma, we use {\em subresultants}
of two polynomials, for which we use the definition
of~\cite[Chapter~6]{GaGe99} (these are elements of $\B$; they are
sometimes called {\em principal} subresultants).

\begin{lemma}\label{geosolve:lemma:euclideinrings2}
  Let $F,G$ be in $\B[X]$ with $F$ monic of degree $\delta$ and
  $\deg(G) < \delta$. Suppose that for every root $\roottau$ of $q$, every
  non-zero subresultant of $F_\roottau$ and $G_\roottau$ is a unit in
  $\B_\roottau$. Then, one can compute the resultant of $F$ and $G$ using
  $\softO(N D  \degQ \delta^2)$ operations in $\QQ$.
\end{lemma}
\begin{proof}
  As a preliminary, remark that additions and multiplications in $\B$
  can be done using $\softO(N D \degQ \delta)$ operations in $\QQ$
  (power series arithmetic in $N+1$ variables induces an extra $O(N)$
  factor; computations modulo $(U-u_0)^{D\delta+1}$ induce an
  additional $\softO(D \delta)$). Inversions (when feasible) could be
  done for a similar cost, but we will not use this fact directly.

  One can compute the resultant of polynomials with coefficients in a
  field in quasi-linear time using the fast resultant algorithm
  of~\cite[Chapter~11]{GaGe99}. For more general coefficient rings,
  this may not be the case anymore, but workarounds exist in some
  cases. 

  Precisely, we will use the fact that the former algorithm can still
  be applied to polynomials over any ring, provided all the non-zero
  subresultants of the input polynomials are units. Indeed, when it is
  the case, Theorem~11.13 in~\cite{GaGe99} implies that all remainders
  in the Euclidean remainder sequence have invertible leading
  coefficients, so this sequence is well-defined (the proof uses a
  formula established over a field in Lemma~11.12 of that reference,
  which actually holds over any ring); the fast resultant algorithm
  can then be executed.
  
  When the base ring is a product of fields such as $\A$, we can
  always reduce to such a situation through splittings. This may not
  be enough for us in general (as $\B$ is not a product of fields),
  but under the assumptions of the lemma, we will see that we can ensure 
  such a property.
  
  Consider first the polynomials $F_0$ and $G_0$ lying in $\A[X]$
  obtained by evaluating $U$ at $u_0$ and $t,t_1,\dots,t_N$ at zero
  in $F$ and $G$. As said above, one can compute the resultant of such
  polynomials by adapting the resultant algorithm
  of~\cite[Chapter~11]{GaGe99} to work over $\A$, similarly to the
  adaptation of the fast GCD algorithm used in
  Lemma~\ref{geosolve:lemma:GCD}.  As in
  Lemma~\ref{geosolve:lemma:GCD}, the total time of this step is
  $\softO(\degQ \delta)$ operations in~$\QQ$.

  Splittings may occur, yielding a result lying in a product of the
  form $\A_1 \times \cdots \times \A_s$, with $\A_i$ of the form
  $\A_i=\QQ[T]/\langle q_i \rangle$ for all $i$ and with $q= q_1\cdots
  q_s$. Due to these splittings, modulo each $q_i$, the whole Euclidean
  remainder sequence is well-defined (that is, all remainders have
  invertible leading terms); by means again of the formulas
  in~\cite[Theorem~11.13]{GaGe99}, we deduce that all non-zero
  subresultants of $F_0 \bmod q_i$ and $G_0 \bmod q_i$ are invertible
  in $\A_i$.

  For $i$ in $\{1,\dots,s\}$, we are going to compute the resultant
  $R_i$ of $F_i$ and $G_i$ in $\B_i[X]$, where
  $$\B_i=\A_i[t,t_1,\dots,t_N,U]/\langle (t,t_1,\dots,t_N)^2,
  (U-u_0)^{D \delta+1} \rangle$$ and where $(F_i,G_i)$ are the
  images of $(F,G)$ modulo $q_i$ (computing these remainders takes
  $\softO( N D \degQ \delta^2)$ operations in $\QQ$ by fast simultaneous
  modular reduction~\cite[Chapter~10]{GaGe99}).  The last operation
  will then be to apply the Chinese Remainder theorem, in order to
  recover a result in $\B$, rather than in the product of the
  $\B_i$'s. The cost of that step will be $\softO( N D \degQ \delta)$.

  Thus, we can focus on the computation of a single resultant $R_i$.
  Fixing an index $i$ in $\{1,\dots,s\}$, we claim that we can follow
  the same subresultant algorithm, but with coefficients now in
  $\B_i$, and that all non-zero subresultants of  $F_i$ and $G_i$
  are units in $\B_i$: this is proved in the last two paragraphs. If this is
  the case, then the running time will be $\softO(\delta)$ times the
  cost of arithmetic operations $(+,\times,\div)$ in $\B_{i}$, which
  is $\softO( N D \degQ_i \delta)$, with $\degQ_i=\deg(q_i)$. The total
  is $\softO( N D \degQ_i \delta^2)$ per index $i$, for a grand total
  of $\softO( N D \degQ \delta^2)$; this will prove our claim on the 
cost of the calculation.

  Let $F_{i,0}$ and $G_{i,0}$ be the polynomials in $\A_i[X]$ obtained
  by evaluating $U$ at $u_0$ and $t,t_1,\dots,t_N$ at zero in $F_i$
  and $G_i$, or equivalently by reducing $F_{0}$ and $G_{0}$ modulo
  $q_i$. Recall that we pointed out earlier that all the non-zero
  subresultants of $F_{i,0}$ and $G_{i,0}$ are units in $\A_i$.
  
  Let $\sigma\in \B_{i}$ be one of the non-zero subresultants of $F_i$
  and $G_i$, say $\sigma=\det(S_k(F_i,G_i))$ for some index $k\le
  \deg(G_i)$ using the notation of~\cite[Chapter~6]{GaGe99}; we have
  to prove that $\sigma$ is a unit in $\B_i$.  Because $\sigma$ is
  non-zero, there exist a root $\roottau$ of $q_i$ such that $\sigma(\roottau)
  \in \B_\roottau$ is non-zero, with $\B_\roottau$ as defined above this
  lemma. But $\sigma(\roottau)$ is then a non-zero subresultant of $F_\roottau$
  and $G_\roottau$ (since $F$ is monic). By assumption, this implies that
  $\sigma(\roottau)$ is a unit in $\B_\roottau$. In particular, we obtain that
  the image of $\sigma(\roottau)$ is non-zero in $\B_\roottau/\langle
  t,t_1,\dots,t_N,U-u_0\rangle$, which implies that the image of
  $\sigma$ itself is non-zero in $\B_i/\langle
  t,t_1,\dots,t_N,U-u_0\rangle=\A_i$.  But, because $F$ is monic,
  $\sigma \bmod \langle t,t_1,\dots,t_N,U-u_0\rangle \in \A_i$ is a
  subresultant of $F_{i,0}$ and $G_{i,0}$, so the remark in the
  previous paragraph implies that it is a unit in $\A_i$. Thus, by
  Hensel's lemma, we deduce that $\sigma$ is a unit in $\B_{i}$.
\end{proof}


\subsection{Equations over a product of fields}\label{sec:param}

In this paragraph, we show how one can make sense of systems of
equations with coefficients in a product of fields, and we explain how
the notions of parametrizations seen before can be extended to include
the case of coefficients in a product of field. The last subsection
shows how to use these data structures to design an intersection
algorithm that will be central to our general polynomial system
solving algorithm.

{\em In all this section, $q$ is a monic squarefree polynomial in
  $\QQ[T]$, and we define the product of fields $\A=\QQ[T]/\langle q
  \rangle$. We let $\degQ$ denote the degree of $q$.}


\subsubsection{Systems of equations}\label{sec:D5eqs}

Consider polynomials $\F=(F_1,\dots,F_s)$ in the ring
$\A[X_{e+1},\dots,X_N]$ (the choice of indices in the variables will
turn out to be natural in our applications below). To a root $\roottau$ of
$q$ in $\C$, we associate the evaluation mapping $\phi_\roottau: \A \to
\C$, naturally defined as $\phi_\roottau(f)=f(\roottau)$; this mapping carries
over to polynomial rings over $\A$.

We can then define the polynomials $\F_\roottau=(\phi_\roottau(F_i))_{1 \le i
  \le s}$, so that each $\F_\roottau$ is a vector of $s$ polynomials in
$\C[X_{e+1},\dots,X_N]$. Finally, to our system $\F$, we can then
associate the algebraic sets $(V_\roottau)_{q(\roottau)=0}$, where each
$V_\roottau=V(\F_\roottau)$ lies in $\C^{N-e}$.

A prominent example of this situation is when we are given a whole
zero-dimensional parametrization
$\scrQ=((q,v_1,\dots,v_e),\pollambda)$, together with polynomials $\f$
in $\QQ[X_1,\dots,X_N]$. We can then define the
polynomials $$\F=\f(v_1,\dots,v_e,X_{e+1},\dots,X_N) \bmod q$$ which
lie in $\A[X_{e+1},\dots,X_N]$, and the associated algebraic sets
$(V_\roottau=V(\F_\roottau))_{q(\roottau)=0}$. On the other hand, defining as
usual $Q=\Zeroes(\scrQ) \subset \C^e$, the zero-set
$$V=\fbr(V(\f),Q) \subset \C^N$$ can be decomposed as the disjoint
union of the sets $V_\x$, for $\x$ in $Q$. For any such
$\x=(x_1,\dots,x_e)$, $\roottau=\pollambda(\x)$ is a root of $q$, such
that $x_i=v_i(\roottau)$ for $i=1,\dots,e$, and one verifies that $V_\x$
can be rewritten as $(x_1,\dots,x_e) \times V_\roottau$, for
$V_\roottau \subset \C^{N-e}$ as defined above.

In the same context, we may as well be interested in the set
$V'=\freg(\f,Q)$, which was defined in
Subsection~\ref{chap:prelim:sec:definitions} as the Zariski closure of the
set of all points in $\fbr(V(\f),Q)$ where $\jac(\f,e)$ has full rank.
Then, $V'$ is the disjoint union of the sets $V'_\x$, for
$\x=(x_1,\dots,x_e)$ in $Q$, with $V'_\x$ of the form
$V'_\x=(x_1,\dots,x_e) \times V'_\roottau$, where $\roottau=\pollambda(\x)$ is
the root of $q$ corresponding to $\x$ and $V'_\roottau$ is defined as
$V'_\roottau=\freg(\F_\roottau)$.

In terms of data structures, we will often assume that polynomials
$\F$ are given by means of a straight-line program, say $\Gamma$. In
this context of computations over $\A$, we will assume that $\Gamma$
{\em has coefficients in $\A$}: this means that $\Gamma$ has input
variables $X_{e+1},\dots,X_N$, operations $+,-,\times$ and uses
constants from $\A$ instead of $\QQ$. As before, the length of $\Gamma$
is the number of operations it performs.


\subsubsection{Dimension zero}\label{sec:D5param0}

Let $q$ and $\A$ be as above. A {\em zero-dimensional paramet\-rization}
$\scrR=((r,w_{e+1},\dots,w_N),\linearmu)$ with coefficients in $\A$ consists
in polynomials $(r,w_{e+1},\dots,w_N)$ such that $r\in \A[X]$ is monic
and squarefree (in the sense of Subsection~\ref{ssec:prodfieldsbasic})
and all $w_i$ are in $\A[X]$ and satisfy $\deg(w_i) < \deg(r)$, and in
a linear form $\linearmu$ in $X_{e+1},\dots,X_N$ with coefficients in $\QQ$,
such that $\linearmu(w_{e+1},\dots,w_N)=X$. The {\em degree} of $\scrR$ is
defined as that of $r$. 

For any root $\roottau$ of $q$, we can then define $\scrR_\roottau$ as the
zero-dimensional parametrization with coefficients in $\C$, obtained
by applying the evaluation map $\phi_\roottau$ defined above to the
coefficients of all polynomials in $\scrR$. The algebraic sets
associated to $\scrR$ are then naturally defined as the family
$(\Zeroes(\scrR_\roottau))_{q(\roottau)=0}$, where each $\Zeroes(\scrR_\roottau)$ is a subset
of $\C^{N-e}$.

\begin{lemma}\label{lemma:descent0}
  Let $q$ and $\scrR$ be as above, let $\degQ$ be the degree of $q$
  and $\degR$ be the degree of $\scrR$. There exists a probabilistic
  algorithm ${\sf Descent}$ which takes as input $q$ and $\scrR$ and
  returns either a zero-dimensional parametrization $\scrR'$ with
  coefficients in $\QQ$ or ${\sf fail}$ using
  $\softO(N \degQ^2 \degR^2)$ operations in $\QQ$. In case of success,
  $\Zeroes(\scrR')=\cup_{q(\roottau)=0} \Zeroes(\scrR_\roottau)$
 in $\C^{N-e}$.
\end{lemma}
\begin{proof}
  First, we replace $\linearmu$ by a new random linear form, say
  $\linearmu'=\linearmu'_1 X_{e+1}+\cdots+\linearmu'_N X_N$; this is done using the
  algorithm of~\cite[Lemma~2]{PoSc13} with coefficients in $\A$.  The
  algorithm involves only operations $(+,\times)$, except for a
  squarefreeness test; in our case, this test is done using
  Lemma~\ref{geosolve:lemma:SQF} (if the output is false, we return
  ${\sf fail}$). Altogether, the cost of this first step is
  $\softO(N\degQ\degR^2)$ operations in $\QQ$. Call
  $((r',v'_{e+1},\dots,v'_N),\linearmu')$ the resulting parametrization with
  coefficients in $\A$.

  Then, we compute the minimal polynomial of $\scrR'$ by applying the
  bivariate change-of-order algorithm of~\cite{PaSc06} to $q$ and
  $r'$, this time with coefficients in $\QQ$; this takes
  $\softO(\degQ^2 \degR^2)$ operations in $\QQ$ (choosing $\linearmu'$ random 
  ensures that the output polynomial is indeed squarefree).
Computing the
  parametrizations that describe the values of $X_{e+1},\dots,X_N$ is
  then done by modular compositions on the polynomials
  $v'_{e+1},\dots,v'_N$, as in~\cite{PoSc13}, in time $\softO(N
  \degQ^2 \degR^2)$.
\end{proof}

Often, we will actually know more than $q$: we will be given a
zero-dimensional paramet\-rization $\scrQ=((q,v_1,\dots,v_e),\pollambda)$
with coefficients in $\QQ$. In this case, we can define
$\Zeroes(\scrQ,\scrR)$ as the finite set defined by
$$q(\roottau)=0,\quad r(\roottau,\rootrho)=0,\quad X_i=v_i(\roottau)\ \ (1 \le i \le e), \quad X_i = w_i(\roottau,\rootrho)\ \ (e+1 \le i \le N).$$
In other words, $\Zeroes(\scrQ,\scrR)$ is the disjoint union of the
finite sets $(v_1(\roottau),\dots,v_e(\roottau))\times \Zeroes(\scrR_\roottau)$,
for $\roottau$ a root of $q$. In this situation, we can deduce a
zero-dimensional parametrization with coefficients in $\QQ$ for this
set.

\begin{lemma}\label{lemma:descent0-2}
  Let $\scrQ$ and $\scrR$ be as above, let $\degQ$ be the degree of
  $\scrQ$ and $\degR$ the degree of $\scrR$. There exists a
  probabilistic algorithm ${\sf Descent}$ which takes as input $\scrQ$
  and $\scrR$ and returns either a zero-dimensional parametrization
  $\scrR'$ with coefficients in $\QQ$ or ${\sf fail}$ using
  $\softO(N \degQ^2 \degR^2)$ operations in $\QQ$. In case of success,
  $\Zeroes(\scrR')=\Zeroes(\scrQ,\scrR)$.
\end{lemma}
\begin{proof}
  The algorithm is entirely similar to that of
  Lemma~\ref{lemma:descent0}, except that in the last stage, we also
  apply modular compositions to the polynomials $v_1,\dots,v_e$ in
  order to obtain a description of the values of $X_1,\dots,X_e$. The
  overall analysis does not change.
\end{proof}

Not {\em any} family of finite algebraic sets $(V_\roottau)_{q(\roottau)=0}$,
with $V_\roottau \subset \C^{N-e}$ for all $\roottau$, may be described as
$V_\roottau=\Zeroes(\scrR_\roottau)$, for some zero-dimensional parametrization
$\scrR$ with coefficients in $\A$. For instance, since we require that
$r$ be monic and squarefree in $\A[X]$, all $V_\roottau$'s must have
the same cardinality.

Thus, to represent a family of finite algebraic sets
$(V_\roottau)_{q(\roottau)=0}$, with $V_\roottau \subset \C^{N-e}$ for all $\roottau$,
we will use a sequence of pairs $(q_1,\scrR_1),\dots,(q_s,\scrR_s)$
with, for all $i$, $q_i$ monic in $\QQ[T]$ and $\scrR_i$ a
zero-dimensional parametrization with coefficients in
$\A_i=\QQ[T]/\langle q_i \rangle$, and with $q = q_1 \cdots q_s$, such
that the following holds. For any root $\roottau$ of $q$, there exists a
unique $i$ in $\{1,\dots,s\}$ such that $q_i(\roottau)=0$. Then
$\scrR_{i,\roottau}$ is well-defined, and we require that $V_\roottau =
\Zeroes(\scrR_{i,\roottau})$. We will call $(q_1,\scrR_1),\dots,(q_s,\scrR_s)$
{\em zero-dimensional parametrizations over $\A$ for
  $(V_\roottau)_{q(\roottau)=0}$}.

Even then, not every family of algebraic sets $(V_\roottau)_{q(\roottau)=0}$
can be represented by zero-dimensional parametrizations over $\A$,
since the fields of definitions of the various sets $V_\roottau$ also
matter. There is however one class of examples where we can assert it
will be the case, and which encompasses all examples we will see
below: take two families of polynomials $\F$ and $\G$ in
$\A[X_{e+1},\dots,X_N]$ and, for any root $\roottau$ of $q$, define
$V_\roottau\subset \C^{N-e}$ as the set of isolated points of the Zariski
closure of $V(\F_\roottau)-V(\G_\roottau)$. We claim that in this situation,
there do exist zero-dimensional parametrizations over $\A$ for
$(V_\roottau)_{q(\roottau)=0}$: simply take $q_1,\dots,q_s$ as the irreducible
factors of $q$, and let $\scrR_i$ be the zero-dimensional
parametrizations for the ideal that defines the isolated points of the
Zariski closure of $V(\F)-V(\G)$ over the fraction field of
$\QQ[T]/\langle q_i \rangle$. Of course, the algorithms below will
avoid factoring $q$ into irreducibles.

We continue with some algorithms to perform elementary set-theoretic
operations on sets $(V_\roottau)_{q(\roottau)=0}$ using such a
representation. First, we give a cost estimate for applying a linear
change of variables.

\begin{lemma}\label{lemma:complexity0POF:changevar}
  Let $(q_1,\scrR_1),\dots,(q_s,\scrR_s)$ be zero-dimensional
  parametrizations over $\A$ that define algebraic sets
  $(V_\roottau)_{q(\roottau)=0}$, let $\degQ$ be the degree of $\scrQ$ and
  $\degR$ be the maximum of the degrees of $\scrR_1,\dots,\scrR_s$,
  and let $\mA$ be in $\GL(N-e,\QQ)$.

  There exists an algorithm ${\sf ChangeVariables}$ which takes as
  input
  $$(q_1,\scrR_1),\dots,(q_s,\scrR_s)$$ and $\mA$ and returns
  zero-dimensional parametrizations
  $(q_1,\scrR^\mA_1),\dots,(q_s,\scrR^\mA_s)$ over $\A$ that define
  the algebraic sets $(V_\roottau^\mA)_{q(\roottau)=0}$ using
  $\softO(N^2\degQ \degR+N^3)$ operations in $\QQ$.
\end{lemma}
\begin{proof}
  For $i=1,\dots,s$, we can apply Algorithm ${\sf ChangeVariables}$
  from Lemma \ref{lemma:complexity0:changevar} with coefficients in
  $\A_i=\QQ[T]/\langle q_i\rangle$, since this algorithm only involves
  operations $(+,\times)$ in $\A_i$ and inversions in $\QQ$.  The cost
  is thus $\softO(N^2 \degR+N^3)$ operations in $\A_i$, which is
  $\softO(N^2\degQ_i \degR+N^3)$ operations in $\QQ$, and the
  conclusion of the lemma follows by summing over all $i$.
\end{proof}

As announced prior to
Lemma~\ref{sec:basicroutinesparam:lemma:intersect}, we will also need
below an algorithm to intersect finite algebraic sets of the form
$(V_\roottau)_{q(\roottau)=0}$ with a hypersurface. We assume that the
algebraic sets $(V_\roottau)_{q(\roottau)=0}$ are represented by means of
zero-dimensional parametrizations $(q_1,\scrR_1),\dots,(q_s,\scrR_s)$
over $\A$, and that the hypersurface is defined by a polynomial $G$ in
$\A[X_{e+1},\dots,X_N]$.  As done before, we will assume that $G$ is
given by a straight-line program $\Gamma$ with coefficients in $\A$.

\begin{lemma}\label{sec:basicroutinesparam:lemma:intersect2}
  Let $(q_1,\scrR_1),\dots,(q_s,\scrR_s)$ be zero-dimensional
  parametriza\-tions over $\A$ that define algebraic sets
  $(V_\roottau)_{q(\roottau)=0}$, let $\degQ$ be the degree of $\scrQ$ and
  $\degR$ the maximum of the degrees of $\scrR_1,\dots,\scrR_s$. 

  Let further $G$ be a polynomial in $\A[X_{e+1},\dots,X_N]$, given by
  a straight-line program $\Gamma$ of length $E$.

  There exists an algorithm ${\sf Intersect}$ which takes as input
  $(q_1,\scrR_1),\dots,(q_s,\scrR_s)$ and $\Gamma$ and returns
  zero-dimensional parametrizations
  $(q'_1,\scrR'_1),\dots,(q'_t,\scrR'_t)$ over $\A$ that define the
  algebraic sets $(V'_\roottau)_{q(\roottau)=0}$, with
  $V'_\roottau=V_\roottau \cap V(G)$ for all $\roottau$, using
  $\softO( (E+N)\degQ \degR)$ operations in~$\QQ$.
\end{lemma}
\begin{proof}
  As in Lemma~\ref{sec:basicroutinesparam:lemma:intersect}, we first
  compute $g=G(w_{e+1},\dots,w_N) \bmod r$; this requires $\softO( E
  \degQ \degR)$ operations in $\QQ$. We can then compute a GCD
  $$(q_1,h_1),\dots,(q_s,h_s)$$ of $r$ and $g$ in $\A[X]$; the cost is
  $\softO( \degQ \degR)$ by Lemma~\ref{geosolve:lemma:GCD}.

  We conclude by computing $v_{i,j} = v_i \bmod q_j$ (for
  $i=1,\dots,e$ and $j=1,\dots,s$) and $w_{i,j}=w_i \bmod \langle
  q_j,h_j\rangle$ (for $i=e+1,\dots,N$ and $j=1,\dots,s$), all in
  $\softO( N \degQ \degR )$ operations. Finally, we return the pairs
  $\scrQ_j=((q_j,v_{1,j},\dots,v_{e,j}),\pollambda)$ and
  $\scrR_j=((h_j,w_{e+1,j},\dots,w_{N,j}),\linearmu)$.
\end{proof}


\subsubsection{Dimension one} 

The previous idea can be extended to represent curves. A {\em
  one-dimensional parametrization}
$\scrR=((r,w_{e+1},\dots,w_N),\linearmu,\linearmu')$ with coefficients in $\A$
consists in the following:
\begin{itemize}
\item polynomials $(r,w_{e+1},\dots,w_N)$, such that $r\in \A[U,X]$ is
  squarefree (in the sense of Subsection~\ref{ssec:prodfieldsbasic}) and
  monic in both $U$ and $X$, all $w_i$ are in $\A[U,X]$ and satisfy
  $\deg(w_i,X) < \deg(q,X)$; we will impose the same degree constraint 
as  in Subsection~\ref{sec:basicroutinesparam1} (detailed below);
\smallskip
\item linear forms $\linearmu,\linearmu'$ in $X_{e+1},\dots,X_N$ with coefficients in
  $\QQ$ such that, as in Subsection~\ref{sec:basicroutinesparam1}, we have
  $$\linearmu\left (w_{e+1},\dots,w_N\right)=U \frac{\partial r}{\partial X}
  \bmod r \quad\text{and}\quad \linearmu'\left (w_{e+1},\dots,w_N\right)=X
  \frac{\partial r}{\partial X} \bmod r.$$ 
\end{itemize}

As in dimension zero, we will mostly be interested in the situation
where we know a zero-dimensional parametrization of the form
$\scrQ=(q,(v_1,\dots,v_e),\pollambda)$.  We can then define
$\Zeroes(\scrQ,\scrR)$ as the Zariski closure of the locally closed set
defined by
$$q(\roottau)=0,\qquad r(\roottau,\eta,\rootrho) = 0,\qquad  \frac{\partial r}{\partial X}(\roottau,\eta,\rootrho) \ne 0$$
and
$$X_i = v_i(\roottau) \ \ (1 \le i \le e),\qquad X_{i} =
\frac{w_i(\roottau,\eta,\rootrho)}{\frac{\partial r}{\partial
    X}(\roottau,\eta,\rootrho)} \ \ (e+1 \le i \le N).$$ When $q$ or $r$ is
constant, $\Zeroes(\scrQ,\scrR)$ is empty.  Else, it is an algebraic curve
that lies over $\Zeroes(\scrQ)$; furthermore, it is the disjoint union of
the finitely many curves $Z_\x$, for $\x$ in $\Zeroes(\scrQ)$, where $Z_\x$
is defined as $Z_\x=\fbr(\Zeroes(\scrQ,\scrR),\x)$ and thus lies over $\x$.

Equivalently, for any root $\roottau$ of $q$, we define $\scrR_\roottau$ as
the one-dimensional paramet\-rization with coefficients in $\C$ obtained
by applying the evaluation map $\phi_\roottau$ to the coefficients of all
polynomials in $\scrR$. Then, also associated to $\scrR$ are the
algebraic sets $(\Zeroes(\scrR_\roottau))_{q(\roottau)=0}$, where each
$\Zeroes(\scrR_\roottau)$ is a subset of $\C^{N-e}$. For $\x=(x_1,\dots,x_e)$ in
$\Zeroes(\scrQ)$, $Z_\x=(x_1,\dots,x_e)\times \Zeroes(\scrR_\roottau)$, where
$\roottau=\pollambda(\x)$ is the root of $q$ corresponding to $\x$.

In terms of degree, for $\roottau$ a root of $q$, we let $\degR_\roottau$ be
the degree of curve $\Zeroes(\scrR_\roottau)$, and let $\degR$ be
the maximum of all $\degR_\roottau$. Using ~\cite[Theorem~1]{Schost03}, we
deduce that for any root $\roottau$ of $q$, $\phi_\roottau(r)$ has degree at
most $\degR_\roottau$ in both $U$ and $X$, and similarly for the
polynomials $w_i$. Thus, $r$ and all $w_i$'s have degree at most
$\degR$ in both $U$ and~$X$.

Our last constraint, mentioned above, is that for all $\roottau$,
$r(\roottau,U,X)$ has degree $\degR_\roottau$ in both $U$ and $X$; since we
assumed that $r$ is monic in both $U$ and $X$, this actually implies
that $\degR_\roottau=\degR$ holds for all $\roottau$.

\begin{lemma}\label{lemma:descent1}
  Let $q$ and $\scrR$ be as above, let $\degQ$ be the degree of
  $\scrQ$ and $\degR$ the degree of $\scrR$. There exists a
  probabilistic algorithm ${\sf Descent}$ which takes as input $\scrQ$
  and $\scrR$ and returns either a one-dimensional parametrization
  $\scrR'$ with coefficients in $\QQ$ or ${\sf fail}$ using
  $\softO(N \degQ^3 \degR^3)$ operations in $\QQ$. In case of success,
  $\Zeroes(\scrR')=\cup_{q(\roottau)=0} \Zeroes(\scrR_\roottau)$.
\end{lemma}
\begin{proof}
  As we did several times in Subsection~\ref{sec:basicroutinesparam1}, we
  follow the zero-dimensional version of the algorithm (which was in
  this case Lemma~\ref{lemma:descent0}), with the intent of doing all
  computations over $\QQ(U)$; the algorithm chooses a new linear form
  in $X_{e+1},\dots,X_N$ at random, and for a generic choice, the output
  coefficients will actually be in $\QQ[U]$.
  
  In order to avoid computations with rational functions in $U$, we
  replace them by power series in $U-u_0$, for a randomly chosen
  $u_0$. Since the output has degree at most $\degQ \degR$ in $U$, the
  overhead compared to the zero-dimensional case is
  $\softO(\degQ \degR)$, and the cost increases to
  $\softO(N \degQ^3 \degR^3)$ operations in $\QQ$.
\end{proof}

Continuing the analogy with the case of dimension zero, we may not be
able to represent any family of algebraic curves
$(V_\roottau)_{q(\roottau)=0}$ as $V_\roottau=\Zeroes(\scrR_\roottau)$, for a
one-dimensional parametrization $\scrR$ with coefficients in $\A$. The
workaround will be the same: we consider a sequence of pairs
$(q_1,\scrR_1),\dots,(q_s,\scrR_s)$ with, for all $i$, $q_i$ monic in
$\QQ[T]$ and $\scrR_i$ a one-dimensional parametrization with
coefficients in $\A_i=\QQ[T]/\langle q_i \rangle$, and with $q = q_1
\cdots q_s$, such that the following holds. For any root $\roottau$ of
$q$, there exists a unique $i$ in $\{1,\dots,s\}$ such that
$q_i(\roottau)=0$. Then $\scrR_{i,\roottau}$ is well-defined, and we require
that $V_\roottau = \Zeroes(\scrR_{i,\roottau})$. We will call
$(q_1,\scrR_1),\dots,(q_s,\scrR_s)$ {\em one-dimensional
  parametrizations over $\A$ for $(V_\roottau)_{q(\roottau)=0}$}. As in
dimension zero, an arbitrary family $(V_\roottau)_{q(\roottau)=0}$ may not
admit such a representation; in all cases of interest to us, though,
it will be the case.

We conclude with a cost estimate for applying a change of variables,
in precisely this context.
\begin{lemma}\label{lemma:complexity1POF:changevar}
  Let $(q_1,\scrR_1),\dots,(q_s,\scrR_s)$ be one-dimensional
  parametrizations over $\A$ that define algebraic sets
  $(V_\roottau)_{q(\roottau)=0}$, let $\degQ$ be the degree of $\scrQ$ and
  $\degR$ the maximum of the degrees of $\scrR_1,\dots,\scrR_s$,
  and let $\mA$ be in $\GL(N-e,\QQ)$.

  There exists an algorithm ${\sf ChangeVariables}$ which takes as
  input
  $$(q_1,\scrR_1),\dots,(q_s,\scrR_s)$$ and $\mA$ and returns
  one-dimensional parametrizations
  $(q_1,\scrR^\mA_1),\dots,(q_s,\scrR^\mA_s)$ over $\A$ that define
  the algebraic sets $(V_\roottau^\mA)_{q(\roottau)=0}$ using $\softO(N^2\degQ
  \degR^2+N^3)$ operations in $\QQ$.
\end{lemma}
\begin{proof}
  The proof is similar to that of
  Lemma~\ref{lemma:complexity1:changevar}, but working over the rings
  $\A_i=\QQ[T]/\langle q_i \rangle$ instead of $\QQ$.
\end{proof}


\subsubsection{An intersection algorithm} \label{ssec:inter}

Finally, we describe the main step for the algorithms of the next
paragraphs, following~\cite{GiLeSa01,Lecerf2000}.  We are interested
in ``computing'' an intersection such as $V\cap V(G)$, or such as the
Zariski closure of $V \cap V(G) - V(H)$, for an algebraic set $V$ and
polynomials $G,H$. Following the philosophy of those references, that
goes back to~\cite{GiHeMoPa95,GiHeMoPa97,GiHeMoMoPa98}, both input and
output will be represented by means of hyperplane sections, since this
is sufficient to perform the required tasks (in a numerical context,
similar ``witness points'' feature prominently in algorithms based on
homotopy continuation methods, see~\cite{SoWa05} and references
therein).

The algorithms below are direct extensions of those
in~\cite{GiLeSa01}; the main difference is that here, all computations
are done over a product of fields.

As in the previous paragraphs, $q$ is a monic squarefree polynomial in
$\QQ[T]$, and $\A$ is product of fields $\A=\QQ[T]/\langle q \rangle$.
As usual, we fix two integers $N$ and $e$, and in what follows we work
in $\C^{N-e}$ (these will be the actual choices of dimensions when we
use this algorithm in the next paragraph). As in
Subsection~\ref{sec:D5eqs}, for a root $\roottau$ of $q$ and a family of
polynomials $\F$ in $\A[X_{e+1},\dots,X_N]$, we write $\F_\roottau$ for
the polynomials in $\C[X_{e+1},\dots,X_N]$ obtained from $\F$ through
the evaluation map $\phi_\roottau: \A \to \C$.

The algorithm relies on the following assumptions.
\begin{enumerate}
\item[${\sf g_1}.$] $(V_\roottau)_{q(\roottau)=0}$ is a family of
  algebraic sets, with each $V_\roottau$ either empty or
  $d$-equidimensional in $\C^{N-e}$.
\smallskip
\item[${\sf g_2}.$] $\F=(F_1,\dots,F_{P})$, with $P=N-e-d$, are
  polynomials in $\A[X_{e+1},\dots,X_N]$ such that for each $\roottau$ root
  of $q$, if $V_\roottau$ is not empty, it is contained in $V(\F_\roottau)$,
and the matrix $\jac(\F_\roottau)$ has generically full rank
  $P$ on all the irreducible components of $V_\roottau$.
\end{enumerate}
In addition, we consider two further polynomials $G$ and $H$ in
$\A[X_{e+1},\dots,X_N]$. For $\roottau$ root of $q$, we define $V'_\roottau =
V_\roottau \cap V(G) \subset \C^{N-e}$; our next assumption is then
the following:
\begin{enumerate}
\item [${\sf g_3}.$] each $V'_\roottau$ is either empty or $(d-1)$-equidimensional.
\end{enumerate}
We can finally define $V''=(V''_\roottau)_{q(t)=0}$ by letting $V''_\roottau$
be the Zariski closure of $V'_\roottau-V(H)$ for any root $\roottau$  of $q$.

To analyze the upcoming algorithm, we let $\degQ$ be the degree of
$q$, $\delta$ be the maximum of the degrees of the algebraic sets
$V_\roottau$, for $\roottau$ a root of $q$ and $D=\max(\deg(G),\deg(H))$. In
terms of data representation, we will suppose that $\F,G,H$ are given
by a straight-line program $\Gamma$ with coefficients in $\A$, as
defined in Subsection~\ref{sec:D5eqs}; we denote by $E$ an upper bound on
the length of it.

Finally, we use the following short-hand in all this paragraph: if
$\y=(y_1,\dots,y_d)$ is in $\C^d$, we write
$\pi(\y)=(y_1,\dots,y_{d-1})\in \C^{d-1}$. Then, the main result
of this paragraph is the following.

\begin{proposition}\label{prop:inter}
  There exists a probabilistic algorithm ${\sf SolveIncremental}$
  which takes as input \new{$\F$, $G$ and $H$ as above and}
  zero-dimensional parametrizations
  $(q_1,\scrR_1),\dots,(q_s,\scrR_s)$ over $\A$, and returns either
  zero-dimensional parametrizations
  $(q''_1,\scrR''_1),\dots,(q''_t,\scrR''_t)$ over $\A$ or
  ${\sf fail}$ using $\softO(N(E+N^3) D \degQ \delta^2 )$ operations
  in $\QQ$, and with the following characteristics.

  Suppose that ${\sf g_1}$, ${\sf g_2}$, ${\sf g_3}$ hold.  There
  exist a non-empty Zariski open subset $\mathscr{N}$ of $\GL(N-e)$, and,
  for $\mA$ in $\mathscr{N}$, a non-empty Zariski open subset $\mathscr{N}_\mA$
  of $\C^d$, such that if $\y\in \mathscr{N}_\mA$, and if the input
  $(q_1,\scrR_1),\dots,(q_s,\scrR_s)$ describes
  $(\fbr(V^\mA_\roottau,\y))_{q(\roottau)=0}$, then in case of success, the
  output $(q''_1,\scrR''_1),\dots,(q''_t,\scrR''_t)$ of ${\sf
    SolveIncremental}$ describes
  $(\fbr({V''_\roottau}^\mA,\pi(\y)))_{q(\roottau)=0}$.
\end{proposition}

The proof of this proposition will occupy the rest of this paragraph.
We start by dimension and degree properties.

\begin{lemma}\label{lemma:dimension}
  Suppose that ${\sf g_1}$, ${\sf g_2}$ and ${\sf g_3}$ hold. There
  exists a non-empty Zariski open subset $\ZOomega$ of $\GL(N-e)$, such
  that for $\mA$ in $\ZOomega$, and for every root $\roottau$ of $q$, the
  following holds.
  There exists a non-empty Zariski open subset $\ZOomega_{\mA,\roottau}$ of 
  $\C^d$ such that for $\y$ in $\ZOomega_{\mA,\roottau}$, we have:
  \begin{itemize}
  \item the fiber $\fbr(V_\roottau^\mA, \y)$ is empty or of dimension
    zero, and has the same degree as ${V_\roottau}$,
\smallskip
  \item the fiber $\fbr(V_\roottau^\mA, \pi(\y))$ is empty or one-equidimensional,
    and has the same degree as ${V_\roottau}$,
\smallskip
  \item the fibers $\fbr({V'_\roottau}^\mA, \pi(\y))$ and
    $\fbr({V''_\roottau}^\mA, \pi(\y))$ are empty or of dimension zero, and
    have the same degree as respectively $V'_\roottau$ and ${V''_\roottau}$.
  \end{itemize}
\end{lemma}
\begin{proof}
  Fix a root $\roottau$ of $q$. If $V_\roottau$ is empty, all assertions obviously
  hold, so we will assume that we are not in this case. By ${\sf g_1}$, 
  we deduce that $V_\roottau$ is $d$-equidimensional.

  Then, for a generic change of variables $\mA$ in $\GL(N-e)$,
  $V_\roottau^{\mA}$ is in Noether position with respect to the projection
  on the first $d$ variables. For such choices, all fibers for the
  projection on these $d$ variables are zero-dimensional, and all of
  them in a Zariski dense subset of $\C^d$ have degree $\deg(V_\roottau)$.
  Similarly, all fibers for the projection on the first $d-1$
  variables are one-equidimensional, and all of them in a
  Zariski dense subset of $\C^{d-1}$ have degree $\deg(V_\roottau)$ (for
  all this, see for instance~\cite[Corollary~2.5]{DuLe08}). The same
  argument applies to the set $V'_\roottau$ and $V''_\roottau$ (which are
  either $(d-1)$-equidimensional or empty by ${\sf g_3}$) to prove the
  third point.
\end{proof}


Algorithm ${\sf SolveIncremental}$ follows the intersection process
of~\cite{GiLeSa01}; the only nontrivial difference is that our
computations take place with coefficients taken modulo $q$, or factors
of it. If $q$ were irreducible, we could simply point out that the
algorithm of~\cite{GiLeSa01} still applies over the field
$\A=\QQ[T]/\langle q \rangle$, and we would be done. Without this
assumption, the only steps that require attention are those involving
inversions in $\A$.

The length of the exposition in~\cite{GiLeSa01} prevents us from
giving all details of the algorithms, let alone proofs of correctness:
we briefly revisit the main steps in the algorithm and indicate the
necessary modifications. First, starting from zero-dimensional
parametrizations over $\A$ for the finite sets $(\fbr(V_\roottau^\mA,\y))$, we
recover one-dimensional parametrizations over $\A$ for the curves
$(\fbr(V_\roottau^\mA,\pi(\y)))$ (Lemma~\ref{lemma:lift} below, to be
compared to~\cite[Lemma~3]{GiLeSa01}). Then, we perform an
intersection process (Lemma~\ref{lemma:inter} below, to be compared
to~\cite[Lemma~16]{GiLeSa01}). Altogether, we simply lose a factor
$\softO(\degQ)$ in the running time, and combining these two lemmas proves
Proposition~\ref{prop:inter}.

\begin{lemma}\label{lemma:lift}
  There exists an algorithm ${\sf SolveIncremental\text{-}Lift}$ that takes
  as input zero-dimensional parametrizations
  $(q_1,\scrR_1),\dots,(q_s,\scrR_s)$ over $\A$, and returns either
  one-dimen\-sional parametrizations
  $(q'_1,\scrR'_1),\dots,(q'_s,\scrR'_s)$ over $\A$ or ${\sf fail}$
  using $\softO( N(E+N^3)\degQ \delta^2)$ operations in~$\QQ$, and
  with the following characteristics.

  Suppose that ${\sf g_1}$, ${\sf g_2}$, ${\sf g_3}$ hold.  For $\mA$
  in $\ZOomega$, there exists a non-empty Zariski open subset
  $\ZOomega'_\mA$ of $\C^d$, such that if $\y \in \ZOomega'_\mA$, and if
  the input $(q_1,\scrR_1),\dots,(q_s,\scrR_s)$ describes
  $(\fbr(V^\mA_\roottau,\y))_{q(\roottau)=0}$, then in case of success, the
  output $(q'_1,\scrR'_1),\dots,(q'_s,\scrR'_s)$ of
  ${\sf SolveIncremental\text{-}Lift}$ describes
  $(\fbr({V_\roottau}^\mA,\pi(\y)))_{q(\roottau)=0}$.
\end{lemma}
\begin{proof}
  The first restriction is that $\mA$ should satisfy the assumptions
  of the previous lemma. Further restrictions on $\y$ are needed: for
  any root $\roottau$ of $q$, the fiber $\fbr(V_\roottau^\mA,\y)$ should have
  the same degree as $V_\roottau$ itself (see the previous lemma), and the
  square Jacobian matrix $\jac(\F_\roottau,d)$ should be invertible on all
  points of $\fbr(V_\roottau^\mA,\y)$. Proposition~4.3 in~\cite{DuLe08}
  shows that under assumption ${\sf g_2}$, this is the case for a
  generic choice of $\y$. Taking all roots $\roottau$ into considerations
  defines the set $\ZOomega'_\mA$.

  Let then $(q_1,\scrR_1),\dots,(q_s,\scrR_s)$ be the input
  zero-dimensional parametrizations over $\A$ for
  $(\fbr(V_\roottau^\mA,\y))_{q(\roottau)=0}$, with for all $i$,
  $\scrR_i=((r_i,w_{i,e+1},\dots,w_{i,N}),\linearmu_i)$, all polynomials in
  $\scrR_i$ having coefficients in $\A_i=\QQ[T]/\langle q_i \rangle$.
  Remark that $\deg(r_i) \le \delta$ holds for all $i$ and that
  $\degQ_1 + \cdots + \degQ_s=\degQ$, with $\degQ_i=\deg(q_i)$ for all
  $i$.

  First, we restrict our attention to those roots $\roottau$ of $q$ for
  which $V_\roottau$ is not empty. Since we assume that $V_\roottau$ and
  $\fbr(V_\roottau^\mA,\y)$ have the same degree, it suffices to discard
  those pairs $(q_i,\scrR_i)$ for which $\scrR_i$ defines the empty
  set, {\it i.e.} for which $r_i=1$. At the end of the process, we
  will then re-introduce some ``dummy'' pairs for those indices, of
  the form $(q_i,\scrR'_i)$, where $\scrR'_i$ is a one-dimensional
  parametrization of the form (say) $((1,0,\dots,0), \linearmu_i, \linearmu'_i)$
  that defines the empty set. In order to avoid introducing further
  notation, we still write $(q_1,\scrR_1),\dots,(q_s,\scrR_s)$ for the
  remaining objects.

  We are going to work with all pairs $(q_i,\scrR_i)$
  independently. For this, we first have to transform the
  straight-line program $\Gamma$ that computes $\F$ into straight-line
  programs $\Gamma_1,\dots,\Gamma_s$, where $\Gamma_i$ has
  coefficients in $\A_i$: for a given $i$, this is done by replacing
  all constants in $\A$ that appear in $\Gamma$ by their images modulo
  $q_1,\dots,q_s$; altogether, this take $\softO(E \degQ)$ operations
  in $\QQ$. Then, for $i=1,\dots,s$, we follow Algorithm~2
  from~\cite{GiLeSa01}, with coefficients in $\A_i$. This consists in
  two steps:
  \begin{itemize}
  \item inverting the matrix
    $\jac(\F,d)(w_{i,e+1},\dots,w_{i,N})$ over $\B_i=\QQ[T,X]/\langle q_i, r_i \rangle$;
\smallskip
  \item using this inverse, applying a version of Newton iteration, to
    compute a one-dimensional parametrization $\scrR'_i$ with
    coefficients in $\A_i$.
  \end{itemize}
  In the first step, we compute the matrix $\jac(\F,d)$ evaluated at
  $(w_{i,e+1},\dots,w_{i,N})$ and its determinant (the cost is
  subsumed by the cost of lifting given below). The assumption made
  above on $\y$ implies that the inversion we attempt is indeed
  feasible (if not, we return ${\sf fail}$). Then, as explained
  in~\cite[Proposition~6]{DaJiMoSc08}, the determinant can be inverted
  using $\softO(\degQ_i \delta)$ operations in $\QQ$.

  The second part of the algorithm is the lifting per se; this part
  does not require any inversion, so the analysis
  in~\cite[Lemma~3]{GiLeSa01} carries over to our situation over
  $\A_i$, giving a running time of $\softO(N(E+N^3) \delta^2)$
  operations $(+,\times)$ in $\A_i$, or $\softO( N(E+N^3)\degQ_i
  \delta^2)$ operations in $\QQ$. Summing over all $i$ concludes the
  proof of the lemma.
\end{proof}

Combining ${\sf SolveIncremental\text{-}Lift}$ and algorithm
${\sf SolveIncremental\text{-}Intersect}$ below is enough to prove
Proposition~\ref{prop:inter}.

\begin{lemma}\label{lemma:inter}
  There exists an algorithm ${\sf SolveIncremental\text{-}Intersect}$ that
  takes as input one-dimensional parametrizations
  $(q'_1,\scrR'_1),\dots,(q'_s,\scrR'_s)$ over $\A$, and returns
  either zero-dimensional parametrizations
  $(q''_1,\scrR''_1),\dots,(q''_t,\scrR''_t)$ over $\A$ or
  ${\sf fail}$ using $\softO( N(E+N^2) D \degQ\delta^2)$ operations
  in~$\QQ$, and with the following characteristics.

  Suppose that ${\sf g_1}$, ${\sf g_2}$, ${\sf g_3}$ hold. Then, there
  exist a non-empty Zariski open subset $\ZOomega''$ of $\GL(N-e)$, and,
  for $\mA$ in $\ZOomega''$, a non-empty Zariski open subset
  $\ZOomega''_\mA$ of $\C^d$, such that if $\y$ in $\ZOomega''_\mA$, and
  if the input $(q'_1,\scrR'_1),\dots,(q'_s,\scrR'_s)$ describes
  $(\fbr(V^\mA_\roottau,\pi(\y)))_{q(\roottau)=0}$, then in case of success,
  the output $(q''_1,\scrR''_1),\dots,(q''_t,\scrR''_t)$ of
  ${\sf SolveIncremental\text{-}Intersect}$ describes the set
  $(\fbr({V''_\roottau}^\mA,\pi(\y)))_{q(\roottau)=0}$.
\end{lemma}
\begin{proof} 
  The first assumptions on $(\mA,\y')$ are that all sets
  $\fbr(V_\roottau^\mA,\y')$ are empty or one-equidi\-men\-sional and have
  the same degree as $V_\roottau$; similarly, all sets
  $\fbr({V''_\roottau}^\mA,\y')$ must be empty or zero-dimensional and
  have the same degree as $V''_\roottau$ (see Lemma
  \ref{lemma:dimension}). The algorithm requires further assumptions
  on $(\mA,\y')$, which are mentioned in \cite[Lemma~16]{GiLeSa01} and
  discussed in detail in \cite[Proposition~4.3]{DuLe08}. We shall not
  need to give them in detail here; using
  \cite[Proposition~4.3]{DuLe08}, it is enough to note that they hold
  for generic choices of $\mA$ and $\y'$ as above, which leads to the
  existence of the open sets $\ZOomega''$ and $\ZOomega''_\mA$.
  
  Let $(q'_1,\scrR'_1),\dots,(q'_s,\scrR'_s)$ be the input
  one-dimensional parametrizations over $\A$ for the sets
  $(\fbr(V_\roottau^\mA,\y'))_{q(\roottau)=0}$, with for all $i$,
  $\scrR'_i=((r_i,w_{i,e+1},\dots,w_{i,N}),\linearmu_i,\linearmu'_i)$, where $r_i$
  is in $\A_i[U,X]$, with $\A_i=\QQ[T]/\langle q'_i\rangle$. Now
  we write $\degQ_i=\deg(q'_i)$ and we remark that $\degQ_1 + \cdots +
  \degQ_s=\degQ$. Up to discarding all $(q'_i,\scrR'_i)$ for which
  $r_i=1$, we may assume that none of the sets $\fbr(V_\roottau^\mA,\y')$
  is empty; at the end of the process, we will reintroduce pairs
  $(q'_i,\scrR''_i)$ for those pairs we discarded, with
  $\scrR''_i=((1,0\dots,0),\nu_i)$, for some linear form $\nu_i$.

  The algorithm starts as in the previous lemma, replacing $\Gamma$ by
  straight-line programs $\Gamma_1,\dots,\Gamma_s$ having coefficients
  in respectively $\A_1,\dots,\A_s$. The cost of this preparation will
  be negligible compared to what follows.

  We will work independently with all pairs $(q'_i,\scrR'_i)$; this
  time, we follow~\cite[Algorithm 11]{GiLeSa01}. Let us thus fix $i$
  in $\{1,\dots,s\}$. Algorithm~11 in~\cite{GiLeSa01} relies on four
  subroutines, which are called (in that order) Algorithms~8,~7,~9
  and~10 in that reference. We review them briefly and underline the
  steps that require adaptation when working over a product of fields
  (that is, those steps that involve inversions).
  \begin{itemize}
  \item In the first one (Algorithm 8), the only difficulty arises
    when we invert $\partial r_i/\partial X$ modulo the ideal
    $\langle (U-u_0)^{D \delta+1}, r_i \rangle$ in $\A_i[U,X]$, for
    a randomly chosen $u_0\in\QQ$. Our genericity assumptions on
    $\mA$ and $\y$ imply that this inversion is feasible and that we
    are under the assumptions of
    Lemma~\ref{geosolve:lemma:euclideinrings1}; in view of that lemma,
    this can be done using $\softO( D \degQ_i \delta^2)$ operations in
    $\QQ$; all other steps in Algorithm~8 carry over to arithmetic
    over $\A_i$ without modification and their costs add up to
    $\softO( N^2 D \degQ_i \delta^2)$ operations in $\QQ$. If the
    inversion is impossible, we return ${\sf fail}$.

\smallskip

    The output of this step is a sequence of polynomials
    $$R_i,V_{i,e+1},\dots,V_{i,N}$$ in $\B_i[X]$, with
    $$\B_i=\A_i[t,t_{e+1},\dots,t_N,U]/\langle (t,t_{e+1},\dots,t_N)^2,
    (U-u_0)^{D \delta+1}\rangle,$$ where $t,t_{e+1},\dots,t_N$ 
    are new variables.

  \item In the second subroutine (Algorithm 7), we perform a similar
    inversion as in the previous step, but with coefficients in a ring
    of the form $$\A_i[t,t_{e+1},\dots,t_N]/\langle
    (t,t_{e+1},\dots,t_N)^2 \rangle$$ instead of $\A_i$: this can be
    done by first computing the inverse over $\A_i$ (as in the
    previous step, so we can again apply the result of
    Lemma~\ref{geosolve:lemma:euclideinrings1}), then doing one step
    of Newton iteration to lift the inverse modulo $$\langle
    (t,t_{e+1},\dots,t_N)^2 \rangle.$$ This results in an overhead of
    $O(N)$, for a total of $\softO( N D\degQ_i \delta^2)$ operations in
    $\QQ$.

\smallskip

    Then, we compute the resultant $S_i$ of two polynomials of degree
    at most $\delta$ in $\B_i[X]$, with as above
    $$\B_i=\A_i[t,t_{e+1},\dots,t_N,U]/\langle (t,t_{e+1},\dots,t_N)^2,
    (U-u_0)^{D \delta+1}\rangle.$$ These polynomials are derived
    from $G$ and from the output $$R_i,V_{i,e+1},\dots,V_{i,N}$$ of the
    previous step; using the straight-line program $\Gamma_i$ for $G$,
    they are computed in $\softO( N (E+N^2)D \degQ_i\delta^2)$
    operations in~$\QQ$.  
\smallskip

    The discussion in~\cite[Section~6.3]{GiLeSa01} then shows that for
    a choice of $\mA$ and $\y$ satisfying the genericity assumptions
    mentioned in the preamble, the assumptions of
    Lemma~\ref{geosolve:lemma:euclideinrings2} are satisfied; as a
    result, the running time of the resultant computation is
    $\softO( N^2 D \degQ_i\delta^2)$ operations in $\QQ$. If these
    assumptions are not satisfied,
    Lemma~\ref{geosolve:lemma:euclideinrings2} will attempt a division
    by a power series of positive valuation; if this is detected, we
    return ${\sf fail}$.

\smallskip

    The cost of all other operations, which involve no inversion in
    $\A_i$, adds up to a similar $\softO( N^2 D\degQ_i \delta^2)$. The
    total for this subroutine is thus $\softO( N (E+ N^2) D \degQ_i
    \delta^2)$ operations in $\QQ$.
\smallskip

  \item Next subroutine is Algorithm~9, where we compute a squarefree
    part of a polynomial (derived from polynomial $S_i$ above) of
    degree at most $D \delta$ in $\A_i[U]$, followed by $O(N)$
    simpler operations on such polynomials (Euclidean divisions). We
    handle the squarefree part computation using
    Lemma~\ref{geosolve:lemma:SQF} using $\softO( D \degQ_i \delta)$
    operations in $\QQ$; the Euclidean divisions take $\softO( N D \degQ_i
    \delta)$ operations in $\QQ$.
\smallskip
    
    Invoking Lemma~\ref{geosolve:lemma:SQF} may induce a factorization
    of $q'_i$ into polynomials $q'_{i,1},\dots,q'_{i,j_i}$; we continue
    the computations modulo each $q'_{i,k}$ separately. This requires
    reducing the coefficients of $O(N)$ polynomials of degree $D
    \delta$ with coefficients in $\A_i$ modulo $q'_{i,k}$: this is done
    by fast modular reduction using a total $\softO( N D \degQ_i
    \delta)$ operations in $\QQ$.

\smallskip

    For $k=1,\dots,j_i$, Algorithm~9 further requires an inversion in
    the ring $\A_{i,k}[U]/\langle M_{i,k} \rangle$, with
    $\A_{i,k}=\QQ[T]/\langle q'_{i,k}\rangle$, where $M_{i,k}$ is a
    monic polynomial of degree at most $D \delta$ derived from the
    outcome of the above squarefree computation.  For a choice of
    $\mA$ and $\y$ satisfying the genericity assumptions in the
    preamble, it is proved in~\cite{GiLeSa01} that all these
    inversions are feasible; using
    again~\cite[Proposition~6]{DaJiMoSc08}, each of them is seen to
    cost $\softO( D \degQ_{i,k}\delta)$ operations in $\QQ$, where
    $\degQ_{i,k}$ is the degree of $q'_{i,k}$. The total for these
    inversions is $\softO( D \degQ_i \delta)$ and altogether, the cost
    of Algorithm~9 is $\softO( N D \degQ_i\delta)$ operations in
    $\QQ$.

\smallskip

    If some inversion turns out to be not feasible, we return
    ${\sf fail}$.

\smallskip

  \item For $k=1,\dots,j_i$, Algorithm~10 finally entails the
    evaluation of our input polynomial $H$ at elements of residue
    class rings of the form $\A_{i,k}[U]/\langle M'_{i,k} \rangle$,
    with $\A_{i,k}$ as above and all $M'_{i,j}$ of degree at most $D
    \delta$ (derived from the polynomials $M_{i,k}$ above), followed
    by a GCD computation in degree $D \delta$ in the rings
    $\A_{i,k}[U]$ and $O(N)$ Euclidean divisions in similar degrees.
    The output of the algorithm is then directly deduced from these
    results.

\smallskip

    For a given index $k$, the cost of evaluating $H$ is
    $\softO( E D \degQ_{i,k} \delta)$ operations in $\QQ$. The GCD computation
    is handled using Lemma~\ref{geosolve:lemma:GCD}, for a cost of
    $\softO( D \degQ_{i,k} \delta)$; the cost of all Euclidean
    divisions is then $\softO( N D \degQ_{i,k}\delta)$. In total,
    the cost for a given index $i$ is $\softO( (E+N) D\degQ_{i} \delta))$.

\smallskip

    {\em In the next section, we will use again this last subroutine; 
      as in~\cite{GiLeSa01}, we will refer to it as Algorithm {\sf Clean}}
  \end{itemize}
  Altogether, the cost for a given index $i$ is $\softO( N(E+N^2) D
  \degQ_i \delta^2)$; the total is thus $\softO( N(E+N^2) D \degQ
  \delta^2)$ operations in $\QQ$.
\end{proof}


\subsection{Polynomial system solving}\label{sec:possobasepoints}

We now reach the main part of this section: some algorithms
for solving systems of polynomial equations. As before, we consider
$N-e$ coordinates $X_{e+1},\dots,X_N$ and let $q$ be a squarefree
polynomial of degree $\degQ$ in $\QQ[T]$.

Our main results in this paragraph are
Propositions~\ref{geosolve:prop:variant1a} (in
Subsection~\ref{ssec:solvingf}) and \ref{geosolve:prop:variant2} (in
Subsection~\ref{ssec:solvingfg}); these are estimates on the cost of
solving equations with coefficients in $\A=\QQ[T]/\langle q \rangle$,
respectively of the form $\F(\x)=0$ (under some regularity assumptions)
and $\F=\G=0$ (under regularity assumptions only on $\F$). All are
based on the geometric resolution algorithm in \cite{GiLeSa01} and its
variant in~\cite{Lecerf2000}. The only difference is that computations
are run modulo $q$ (or factors of it), whereas in previous references
the same results were given over $\QQ$; thus, we have to rely on the
algorithm described in the previous paragraph.


\subsubsection{Basic definitions}\label{ssec:definitionsF}

Let $\F=(F_1, \ldots, F_P)$ be polynomials in the ring $\A[X_{e+1},\dots,X_N]$,
with $P \le N-e$. In this short paragraph, we define the objects
associated to $\F$ that will play a prominent role in the sequel.

For $\roottau$ a root of $q$, we define polynomials $\F_\roottau \in
\C[X_{e+1},\dots,X_N]$ as in Subsection~\ref{sec:D5eqs}; we will feel
free to use the same notation for further families of polynomials.  We
will be interested in the family of algebraic sets
$(V_\roottau)_{q(\roottau)=0}$, where each algebraic set
$V_\roottau=\freg(\F_\roottau) \subset \C^{N-e}$ is as in
Subsection~\ref{sec:D5eqs}. As was pointed out in
Subsection~\ref{chap:prelim:sec:definitions}, by the Jacobian criterion
(\cite[Theorem 16.19]{Eisenbud95}, or
  Lemma~\ref{lemma:prelim:locallyclosed}),
each $V_\roottau$ is either equidimensional of dimension $d=N-e-P$ or empty.

Defining the set $\Delta$ of maximal minors of $\jac(\F)$, which thus
have size $P$, and the Zariski open sets
$\mathcal{O}_\roottau=\C^{N-e}-V(\Delta_\roottau)$, $V_\roottau=\freg(\F_\roottau)$ is
by definition the Zariski closure of
$V(\F_\roottau)\cap \mathcal{O}_\roottau$.

The algorithm will solve the whole system $\F$ by considering all
intermediate systems it defines. For $1\leq i \leq P$, we thus denote
by $\F_i$ the sequence $(F_1, \ldots, F_i)$; if $\roottau$ is a root of
$q$, we then let $V_{i,\roottau}$ the Zariski closure of
$V(\F_{i,\roottau})\cap \mathcal{O}_\roottau$; when $i=P$, we recover
$V_\roottau=V_{P,\roottau}$.

\begin{lemma}\label{sec:geosolve:lemma:equidim}
  For each root $\roottau$ of $q$, the following holds:
  \begin{itemize}
  \item for $1\leq i \leq P$, the matrix $\jac(\F_{i,\roottau})$ has 
    generically full rank $i$ on each irreducible component of $V_{i,\roottau}$;
\smallskip
  \item for $1\leq i \leq P$, $V_{i,\roottau}$ is either empty or
    equidimensional of dimension $N-e-i$;
\smallskip
  \item for $1 \le i < P$, $V_{i,\roottau} \cap V(F_{i+1,\roottau})$ is either
    empty or equidimensional of dimension $N-e-i-1$.
  \end{itemize}
\end{lemma}
\begin{proof}
  Fix a root $\roottau$ of $q$; suppose that $i \le P$ and that
  $V_{i,\roottau}$ is not empty.

  Let $\Delta_{i,\roottau}$ be the set of maximal ($i\times i$) minors of
  $\jac(\F_{i,\roottau})$. If all the minors in $\Delta_{i,\roottau}$ vanish
  at a point $\x \in \C^{N-e}$, then all the minors in $\Delta_\roottau$
  vanish at $\x$, so $V(\Delta_{i,\roottau})$ is contained in
  $V(\Delta_\roottau)$, and thus $V(\F_{i,\roottau})-V(\Delta_\roottau)$ is
  contained in $V(\F_{i,\roottau})-V(\Delta_{i,\roottau})$.  Letting $\tilde
  V_{i,\roottau}$ be the Zariski closure of
  $V(\F_{i,\roottau})-V(\Delta_{i,\roottau})$, we deduce that $V_{i,\roottau}$ is
  the union of the irreducible components of $\tilde V_{i,\roottau}$ not
  contained in $V(\Delta_\roottau)$.  By the Jacobian criterion, $\tilde V_{i,\roottau}$ is
  $(N-e-i)$-equidimensional or empty. This implies that all irreducible
  components of $V_{i,\roottau}$ have the same dimension $N-e-i$, so the
  first two items are proved.

  Suppose further that $i < P$.  Because $V_{i,\roottau}$ is
  equidimensional of dimension $N-e-i$, any irreducible component of
  $V_{i,\roottau} \cap V(F_{i+1,\roottau})$ has dimension either $N-e-i$ or
  $N-e-i-1$. Let us prove that the latter necessarily holds. Assume
  that there exists such an irreducible component $Z$ of dimension
  $N-e-i$. Then, $Z$ must be an irreducible component of $V_{i,\roottau}$
  itself, and $F_{i+1,\roottau}$ vanishes identically on $Z$.

  Because $Z$ is contained in $V_{i,\roottau}$, it is contained in
  $V(\F_{i,\roottau})$, and because $F_{i+1,\roottau}$ is zero on $Z$, $Z$ is
  actually contained in $V(\F_{i+1,\roottau})$. As a consequence,
  $Z-V(\Delta_\roottau)$ is contained in
  $V(\F_{i+1,\roottau})-V(\Delta_\roottau)$.  Because $Z$ is an irreducible
  component of $V_{i,\roottau}$, we know that the Zariski closure of
  $Z-V(\Delta_\roottau)$ is $Z$ itself, so that $Z$ is contained in
  $V_{i+1,\roottau}$. This is a contradiction, since $V_{i+1,\roottau}$ has
  dimension $N-e-i-1$.
\end{proof}

The cost of our algorithms will depend on the degree of the
intermediate algebraic sets $V_{i,\roottau}$. The actual notion we will
use is the following, taken from~\cite{GiHeMoMoPa98}.

\begin{definition}\label{def:gdeg}
   For $1\leq i \leq P$, we denote by $\delta_i$ the maximum of the
   degrees of the sets $V_{i,\roottau}$, for $\roottau$ a root of $q$. We call
   $\delta=\max(\delta_1, \ldots, \delta_P)$ the {\em geometric
     degree} of $\F$.
\end{definition}


\subsubsection{Solving \texorpdfstring{$\F=0$}{$\F=0$}}\label{ssec:solvingf}

With notation as above, our first goal is to give an algorithm that
solves equations $\F=0$, with $\F=(F_1,\dots,F_P)$ in
$\A[X_{e+1},\dots,X_N]$. More precisely, we restrict our attention to
dimension zero or one, and we compute zero, resp.\ one-dimensional
parametrizations of the family $(V_\roottau)_{q(\roottau)=0}$, with
$V_\roottau=\freg(\F_\roottau)$. In other words, we focus on the cases $P=N-e$
and $P=N-e-1$.

\begin{proposition}\label{geosolve:prop:variant1a}
  There exists a probabilistic algorithm ${\sf Solve\_F}$ that takes as
  input a squarefree polynomial $q$ and a straight-line program
  $\Gamma$ with coefficients in $\A$, with the following
  characteristics: Suppose that $\Gamma$ has length $E$, computes
  polynomials $\F$ of degree at most $D$, that $q$ has degree $\degQ$
  and let $\delta$ be the geometric degree of $\F$. Then,
  \begin{itemize}
  \item when $P=N-e$, ${\sf Solve\_F}(q,\Gamma)$ outputs either
    zero-dimensional paramet\-rizations over $\A$ or ${\sf fail}$
    using $\softO( N^3(E+N^3) D \degQ \delta^2)$ operations in $\QQ$.
    In case of success, the output describes the family
    $(V_\roottau)_{q(\roottau)=0}$, where $V_\roottau=\freg(\F_\roottau)$ for all
    $\roottau$.
\smallskip
  \item when $P=N-e-1$, ${\sf Solve\_F}(q,\Gamma)$ outputs either
    one-dimensional paramet\-rizations over $\A$ or ${\sf fail}$ using
    $\softO( N^3(E+N^3) D \degQ \delta^2)$ operations in $\QQ$.  In
    case of success, the output describes the family
    $(V_\roottau)_{q(\roottau)=0}$, where $V_\roottau=\freg(\F_\roottau)$ for all
    $\roottau$.
  \end{itemize}
\end{proposition}

The proof of this proposition will occupy this paragraph. Given an
$(N-e)\times P$ matrix $\mS$ with entries in $\QQ$, we will denote by
$J_\mS$ the determinant of $\jac(\F)\mS$. Given such an $\mS$, for
$1\leq i \leq P$ and for $\roottau$ a root of $q$, we denote by $V_{i,
  \mS,\roottau} \subset \C^{N-e}$ the Zariski closure of
$V(\F_{i,\roottau})-V(J_{\mS,\roottau})$, with $\F_{i,\roottau}$ as defined in the
previous paragraph.  The algebraic sets $V_{i,\mS,\roottau}$ are simpler to
define than the sets $V_{i,\roottau}$ (we do not need to involve all
determinants in $\Delta_\roottau$); the following lemma shows that they
coincide for a generic choice of~$\mS$.

\begin{lemma}\label{lemma:VVi}
  There exists a non-empty Zariski open subset $\mathfrak{S}$ of
  $\C^{(N-e)P}$ such that for $\mS$ in $\mathfrak{S}$, for all $i$ in
  $\{1,\dots,P\}$ and all roots $\roottau$ of $q$,
    $V_{i,\mS,\roottau}=V_{i,\roottau}$ holds.
\end{lemma}
\begin{proof}
  Let us first fix a root $\roottau$ of $q$ and $i$ in
  $\{1,\dots,P\}$. Recall that by construction, $V_{i,\roottau}$ is the
  Zariski closure of $V(\F_{i,\roottau})-V(\Delta_\roottau)$, where
  $\Delta_\roottau$ is the ideal generated by all $P$-minors of
  $\jac(\F_\roottau)$, and $V_{i,\mS,\roottau}$ is the Zariski closure of
  $V(\F_{i,\roottau})-V(J_{\mS,\roottau})$. In what follows, we prove the
  slightly more general result: {\em let $\closedZ$ be {\em any} algebraic
    set in $\C^{N-e}$. Then, for a generic choice of $\mS$, the
    Zariski closures $\closedZ'$ and $\closedZ''$ of respectively $\closedZ-V(\Delta_\roottau)$
    and $\closedZ-V(J_{\mS,\roottau})$ coincide.}

  Let $\closedZ_{1},\dots,\closedZ_{\lambda(\roottau)}$ be the decomposition of
  $\closedZ$ into irreducible components. Then, $\closedZ'$ is the union of
  those $U_{k}$ that are not contained in $V(\Delta_\roottau)$, whereas
  $\closedZ''$ is the union of those that are not contained in
  $V(J_{\mS,\roottau})$. Thus, we have to prove that for a generic choice
  of $\mS$, for all $k$, $\closedZ_{k}$ is contained in $V(\Delta_\roottau)$ if
  and only if it is contained in $V(J_{\mS,\roottau})$.

  Suppose first that $\closedZ_{k}$ is contained in $V(\Delta_\roottau)$
  and let $\x$ be in $\closedZ_{k}$. By assumption, the Jacobian
  matrix $\jac(\F_\roottau)$ has rank less than $P$ at $\x$; thus, it is
  also the case for $\jac(\F_\roottau)\mS$, for any $\mS$ in
  $\QQ^{(N-e)P}$, so $\closedZ_{k}$ is contained in $V(J_\mS)$.  In
  other words, for {\em any} $\mS$, if $\closedZ_{k}$ is contained in
  $V(\Delta)$, it is contained in $V(J_\mS)$.

  Conversely, suppose that $\closedZ_{k}$ is not contained in
  $V(\Delta_\roottau)$, so there exists $\x$ in $\closedZ_{k}$ such that
  $\jac(\F_\roottau)$ has rank $P$ at $\x$. This implies that there exists
  $\mS$ in $\QQ^{(N-e)P}$ such that $\jac(\F_\roottau)\mS$ still has rank
  $P$ at $\x$, so for this particular choice of $\mS$, $\closedZ_{k}$
  is not contained in $V(J_{\mS,\roottau})$. The set of $\mS$ for which
  this holds is a Zariski open subset $\mathfrak{S}_{k,\roottau}$ of
  $\C^{(N-e)P}$ (because $J_\mS(\x)$ is a polynomial in $\mS$), that
  is non empty in view of the previous remark.

  Taking for ${\mathfrak{S}}$ the intersection of the finitely many
  Zariski open subsets
  $$\mathfrak{S}_{1,\roottau},\dots,\mathfrak{S}_{\pollambda(\roottau),\roottau},$$ for
  all roots $\roottau$ of $q$, proves our claim and hence the lemma.
\end{proof}

If $\mS$ satisfies the assumptions of the previous lemma, we obtain
the following alternative description for $V_{i+1,\roottau}$ from
$V_{i,\roottau}$. This shows that we will be able to apply the algorithm
of Subsection~\ref{ssec:inter} to the present situation.

\begin{lemma}\label{lemma:VVi2}
  Suppose that $\mS$ belongs to $\mathfrak{S}$. Then, for $0 \le i < P$,
  and for every root $\roottau$ of $q$,
  $V_{i+1,\roottau}$ is the Zariski closure of $V_{i,\roottau} \cap V(F_{i+1,\roottau})-V(J_{\mS,\roottau})$.
\end{lemma}
\begin{proof}
  Fix a root $\roottau$ of $q$ and $i$ in $\{1,\dots,P-1\}$. Under our
  assumption on $\mS$, the previous lemma shows that $V_{i,\roottau}$ and
  $V_{i+1,\roottau}$ are the Zariski closures of respectively
  $V(\F_{i,\roottau})-V(J_{\mS,\roottau})$ and
  $V(\F_{i+1,\roottau})-V(J_{\mS,\roottau})$.

  Let us write $V(\F_{i,\roottau})$ as $A \cup B$, where $A$, resp.\ $B$,
  is the union of the irreducible components of $V(\F_{i,\roottau})$ where
  $J_{\mS,\roottau}$ vanishes identically, resp.\ is not identically zero.
  As a result, $V_{i,\roottau}=B$. On the other hand, we deduce that
  $V(\F_{i+1,\roottau})= (A \cap V(F_{i+1,\roottau})) \cup (B \cap
  V(F_{i+1,\roottau}))$, so that $V_{i+1,\roottau}$ is the Zariski closure of
  $B \cap V(F_{i+1,\roottau})-V(J_{\mS,\roottau})$. Since we have seen that
  $B=V_{i,\roottau}$, the lemma is proved.
\end{proof}

The bulk of Algorithm ${\sf Solve\_F}$ is an incremental intersection
process: for $i=0,\dots,P-1$, we start from zero-dimensional
parametrizations over $\A$ for the sets $\fbr(V_{i,\roottau}^\mA,\y_i)$,
for some random $\y_i$ in $\QQ^{N-e-i}$ and $\mA$ in $\GL(N-e,\QQ)$
and deduce one-dimensional parametrizations over $\A$ for the sets
$\fbr(V_{i+1,\roottau}^\mA,\y_{i+1})$, where $\y_{i+1}$ is obtained from
$\y_i$ by discarding its last entry.

Assuming that $\mS$ belongs to $\mathfrak{S}$, the operation above
will be done by applying Algorithm ${\sf SolveIncremental}$ of
Proposition~\ref{prop:inter} to the sets $(V_{i,\roottau})$, the system
$\F_i$, $G=F_{i+1}$ and $H=J_\mS$; indeed,
Lemmas~\ref{sec:geosolve:lemma:equidim},~\ref{lemma:VVi}
and~\ref{lemma:VVi2} show that we are then under the assumptions of
this proposition. There is a slight difference, however, for $i=0$:
then, there are no equations to use for the lifting step of that
algorithm; in that case, it is straightforward to bypass the lifting
step and directly enter the intersection step.

As input, the algorithm of Proposition~\ref{prop:inter} requires
zero-dimensional paramet\-rizations over $\A$ for the sets
$\fbr(V_{i,\roottau}^\mA,\y_i)$, together with a straight-line program
that evaluates $F_1,\dots,F_{i}$, $G$ and $H$. What we are given is a
straight-line program $\Gamma$ of length $E$ for
$\F=F_1,\dots,F_P$. However, due to the definition of $J_\mS$, it is
easy to deduce a straight-line program $\Gamma'$ that computes $\F$
and $J_\mS$ of length $E'=O(NE+N^4)=O(N(E+N^3))$, where the first term
gives the cost of computing $\F$ and its Jacobian matrix, and the
extra $O(N^4)$ steps amount to computing the determinant giving
$J_\mS$ (which has degree at most $N D$). As a result, the cost of one
call to Proposition~\ref{prop:inter} is $\softO( N^2 (E+N^3) D \degQ
\delta^2)$.

Applying this $P$ times, we obtain zero-dimensional parametrizations
over $\A$ for the sets $(\fbr(V_\roottau^\mA,\y))_{q(\roottau)=0}$, for some
$\mA$ in $\GL(N-e,\QQ)$ and $\y$ in $\QQ^{N-e-P}$ using $\softO( P
N^2 (E+N^3) D \degQ \delta^2)$ operations in $\QQ$, which is $\softO(
N^3 (E+N^3) D \degQ \delta^2)$.

If $P=N-e$, each $V_\roottau$ is either zero-dimensional or empty, and the
set $\fbr(V_\roottau^\mA,\y)$ is simply equal to $V_\roottau^\mA$
itself. Thus, we can finally undo the change of variables $\mA$ by
using Algorithm ${\sf ChangeVariables}$ from
Lemma~\ref{lemma:complexity0POF:changevar}, using a negligible
$\softO(N^2\degQ \delta+N^3)$ operations in $\QQ$.  This proves the
first part of Proposition~\ref{geosolve:prop:variant1a}.

If $P=N-e-1$, each $V_\roottau$ is an algebraic curve, or it is
empty. Starting from the zero-dimensional parametrizations for the
sets $\fbr(V_\roottau^\mA,\y)$, where $\y$ is in $\QQ$, we first apply
Lemma~\ref{lemma:lift} in order to obtain one-dimensional
parametrizations over $\A$ for the sets $V_\roottau^\mA$ (the cost is
within the bounds given above). As above, we conclude with a change of
variables, using Algorithm ${\sf ChangeVariables}$ from
Lemma~\ref{lemma:complexity1POF:changevar}. The cost is
$\softO(N^2\degQ \delta^2+N^3)$ operations in $\QQ$ which is
negligible. This concludes the proof of
Proposition~\ref{geosolve:prop:variant1a}.


\subsubsection{Solving \texorpdfstring{$\F=\G=0$}{$\F=\G=0$}}\label{ssec:solvingfg}

In this second paragraph, we discuss a refinement of the previous
question. In addition to $q$ and to the polynomials
$\F=(F_1,\dots,F_P)$ introduced previously, we also consider a family
of new polynomials $\G=(G_1, \ldots, G_t)$ in $\A[X_{e+1},\dots,X_N]$,
where we write as before $\A=\QQ[T]/\langle q\rangle$. Notation
for polynomials
$\F_\roottau$ or $\G_\roottau$ is as in the previous paragraphs.

Recall from Subsection~\ref{ssec:preliminaries:fixing} that for a root
$\roottau$ of $q$, $\oreg(\F_\roottau)$ is the set of all
$\x=(x_{e+1},\dots,x_N)$ in $V(\F_\roottau)$ where $\jac(\F_\roottau)$ has
full rank $P$. We are interested here in describing the sets
$(Y_\roottau)_{q(\roottau)=0}$, where for any root $\roottau$ of $q$, $Y_\roottau$ is
the set of {\em isolated points} of $\oreg(\F_\roottau) \cap V(\G_\roottau)
\subset \C^{N-e}$.

\begin{proposition}\label{geosolve:prop:variant2}
  There exists a probabilistic algorithm ${\sf Solve\_FG}$ that takes
  as input a squarefree polynomial $q$ and a straight-line program
  $\Gamma'$ with coefficients in $\A$, with the following
  characteristics.

  Suppose that $\Gamma'$ has length $E'$, computes polynomials $\F$
  and $\G$ of degree at most $D$, resp.\ $D'$, that $q$ has degree
  $\degQ$; let $\delta$ be the geometric degree of $\F$ and
  $\delta'=\delta {D'}^{N-e-P}$. Then ${\sf Solve\_FG}(q,\Gamma')$
  outputs either zero-dimensional parametrizations over $\A$ or
  ${\sf fail}$ using $\softO( N^3(tE'+tN+N^3)D''\degQ {\delta'}^2 )$
  operations in $\QQ$, with $D''=\max(D,D')$. In case of success, the
  output describes $(Y_\roottau)_{q(\roottau)=0}$, where $Y_\roottau$ is the set
  of isolated points of $\oreg(\F_\roottau) \cap V(\G_\roottau)$ for all
  $\roottau$.

  In addition, the degree of each set $Y_\roottau$ is bounded by $\delta'$.
\end{proposition}

In order to prove Proposition~\ref{geosolve:prop:variant2}, the
results of the previous paragraph cannot be applied directly, as we
do not restrict ourselves anymore to the points where the Jacobian of
the whole system $\F,\G$ has full rank. However, the fact that we only want
isolated solutions will allow us to find a workaround.

We start with the degree bound.  Let us first define as in
Subsection~\ref{ssec:definitionsF} the algebraic sets
$(V_\roottau)_{q(\roottau)=0}$, where $V_\roottau=\freg(\F_\roottau) \subset
\C^{N-e}$. In addition, we recall that for a root $\roottau$ of $q$,
$\mathcal{O}_\roottau$ is the Zariski open set $\C^{N-e}-V(\Delta_\roottau)$,
where $\Delta_\roottau$ is the set of $P$-minors of $\jac(\F_\roottau)$. Then, we
can establish the following easy statement.

\begin{lemma}\label{lemma:Ytau}
  For any root $\roottau$ of $q$, $Y_\roottau$ is the set of isolated points
  of $V_\roottau \cap V(\G_\roottau) \cap \mathcal{O}_\roottau$.
\end{lemma}
\begin{proof}
  By definition, $Y_\roottau$ is the set of isolated points of
  $\oreg(\F_\roottau) \cap V(\G_\roottau)$. Starting from the definition of
  $V_\roottau$ as the Zariski closure of $\oreg(\F_\roottau)=V(\F_\roottau) \cap
  \mathcal{O}_\roottau$, we obtain $V_\roottau \cap \mathcal{O}_\roottau =
  \oreg(\F_\roottau)$.  This implies that $V_\roottau\cap V(\G_\roottau) \cap
  \mathcal{O}_\roottau= \oreg(\F_\roottau) \cap V(\G_\roottau)$, and looking at
  the set of isolated points on both sides proves our claim.
\end{proof}

For any root $\roottau$ of $q$, $V_\roottau$ has by construction degree at
most $\delta$, and Lemma~\ref{sec:geosolve:lemma:equidim} shows that
it is either equidimensional of dimension $N-e-P$ or empty. As a
consequence, Proposition~2.3 in~\cite{HeSc80} implies that the degree
of $V_\roottau \cap V(\G_\roottau)$ is at most $\delta {D'}^{N-e-P}$. Using
the lemma above, this proves the first point in
Proposition~\ref{geosolve:prop:variant2}.

Let $\a=(a_{1,1},\dots,a_{N-e-P,t})$ be in $\QQ^{t(N-e-P)}$ and, for
$i$ in $\{1,\dots,N-e-P\}$, define
$$G'_i=a_{i,1} G_1 + \cdots+a_{i,t} G_t;$$ remark that in all that
follows, polynomials $G'_i$ and the algebraic sets they define depend
on the choice of $\a$, but we chose not to add a subscript to our
notation.

For any root $\roottau$ of $q$, we denote by $Y_{P,\roottau}$ the algebraic
set $V_\roottau=\freg(\F_\roottau)$ and, for $1\leq i \leq N-e-P$, we denote
by $Y_{P+i,\roottau}$ the union of the irreducible components of
$V_\roottau\cap V(G'_{1,\roottau}, \ldots, G'_{i,\roottau})$ of dimension
$N-e-(P+i)$ that have a non-empty intersection with $\mathcal{O}_\roottau$
(as before, the subscript indicates relative codimension). In
particular, for $i=N-e-P$, $Y_{N-e,\roottau}$ has dimension zero; we will
prove below that for a generic choice of $\a$, the equality
$Y_\roottau=Y_{N-e,\roottau} \cap V(\G_\roottau)$ holds.

For $i$ in $\{0,\dots, N-e-P\}$, the set $Y_{P+i,\roottau}$ is further
decomposed into
$$Y_{P+i,\roottau}^R \quad\text{and}\quad Y_{P+i,\roottau}^I,$$ where
$Y_{P+i,\roottau}^R$ (the regular part) is the union of all irreducible
components of $Y_{P+i,\roottau}$ that are not contained in
$V(G'_{i+1,\roottau})$ and $Y_{P+i,\roottau}^I$ (the irregular part) is the
union of all other irreducible components.

In what follows, we rely on the choice of an $(N-e)\times P$-matrix
$\mS$ with entries in $\QQ$, as in the previous paragraph.
\begin{lemma}\label{lemma:interG}
  For a generic choice of $\mS$, and for $i$ in $\{0,\dots,N-e-P-1\}$,
  the following holds for each root $\roottau$ of $q$:
  \begin{itemize}
  \item $Y_{P+i,\roottau}^R \cap V(G'_{i+1,\roottau})$ is either empty or
    equidimensional of dimension $N-e-(P+i+1)$;
\smallskip
  \item $Y_{P+i+1,\roottau}$ is the Zariski closure of $Y_{P+i,\roottau}^R \cap
    V(G'_{i+1,\roottau})-V(J_{\mS,\roottau})$;
\smallskip
  \item if $i < N-e-P-1$, $Y^R_{P+i+1,\roottau}$ is the Zariski closure of
    $Y_{P+i,\roottau}^R \cap V(G'_{i+1,\roottau})-V(J_{\mS,\roottau} G'_{i+2,\roottau})$.
  \end{itemize}
\end{lemma}
\begin{proof}
  In all that follows, we fix a root $\roottau$ of $q$.
  The first item is a direct consequence of the definition of
  $Y^R_{P+i,\roottau}$. Next, for $i=1,\dots,N-e-P-1$, write
  $$V_\roottau \cap V(G'_{1,\roottau},\dots,G'_{i,\roottau}) = Y^{R}_{P+i,\roottau}
  \cup Y^I_{P+i,\roottau} \cup Y^{\mathcal{O}_\roottau}_{P+i,\roottau} \cup
  Y^{d}_{P+i,\roottau},$$ where $Y^R_{P+i,\roottau}$ and $Y^I_{P+i,\roottau}$ are
  as above, $Y^{\mathcal{O}_\roottau}_{P+i,\roottau}$ is the union of the
  irreducible components of $Y_{P+i,\roottau}$ that do not intersect the
  open set $\mathcal{O}_\roottau$ and $Y^{d}_{P+i,\roottau}$ are all other
  irreducible components, which must have dimension greater that
  $N-e-(P+i)$.  Intersecting with $V(G'_{i+1,\roottau})$, we obtain that
  $V_\roottau \cap V(G'_{1,\roottau},\dots,G'_{i+1,\roottau})$ is the union of the
  following sets:
  $$ Y^R_{P+i,\roottau}\cap V(G'_{i+1,\roottau}),\ Y^I_{P+i,\roottau} \cap
  V(G'_{i+1,\roottau}),\ Y^{\mathcal{O}_\roottau}_{P+i,\roottau}\cap V(G'_{i+1,\roottau}),\ Y^{d}_{P+i,\roottau}\cap
  V(G'_{i+1,\roottau}).$$ The set $Y_{P+i+1,\roottau}$ is obtained by keeping only the
  irreducible components of the above sets that have dimension
  $N-e-(P+i+1)$ and that intersect $\mathcal{O}_\roottau$. The last three terms
  above do not contribute to this construction, so we deduce that
  $Y_{P+i+1,\roottau}$ is the union of the irreducible components of
  $Y^R_{P+i,\roottau}\cap V(G'_{i+1,\roottau})$ that intersect $\mathcal{O}_\roottau$.

  Because $\mathcal{O}_\roottau=\C^{N-e}-V(\Delta_\roottau)$, we deduce that
  $Y_{P+i+1,\roottau}$ is the Zariski closure of $Y^R_{P+i,\roottau}\cap
  V(G'_{i+1,\roottau})-V(\Delta_\roottau)$. As we saw in the proof of
  Lemma~\ref{lemma:VVi}, this means that $Y_{P+i+1,\roottau}$ is the
  Zariski closure of $Y^R_{P+i,\roottau}\cap
  V(G'_{i+1,\roottau})-V(J_{\mS,\roottau})$, for a generic choice of
  $\mS$. This proves the second item.

  If $i < N-e-P-1$, the definition of $Y^R_{P+i+1,\roottau}$ implies that it is
  obtained by discarding from $Y_{P+i+1,\roottau}$ all irreducible components
  on which $G'_{i+2,\roottau}$ vanishes identically; the last item follows.
\end{proof}

The previous lemma holds for any choice of $\a$. For a generic choice
of $\a$, the following lemma further gives a description of the sets
$V_\roottau \cap V(G'_{1,\roottau},\dots,G'_{i,\roottau})$.

\begin{lemma}\label{lemma:A87}
  For a generic choice of $\a$, the following holds for any root
  $\roottau$ of $q$. Let $i$ be in $\{1,\dots,N-e-P\}$ and let $Z$ be an
  irreducible component of $V_\roottau \cap
  V(G'_{1,\roottau},\dots,G'_{i,\roottau})$. Then, either $Z$ is contained in
  $V_\roottau \cap V(\G_\roottau)$, or the following two properties hold:
  \begin{itemize}
  \item $\dim(Z) = N-e-(P+i)$;
\smallskip
  \item for $\x$ in $Z\cap \mathcal{O}_\roottau-V(\G_\roottau)$,
    $\jac(\F_\roottau,G'_{1,\roottau},\dots,G'_{i,\roottau})$ has full rank $P+i$ at $\x$.
  \end{itemize}
\end{lemma}
\begin{proof}
  This is a restatement of the first two items of Theorem~A.8.7
  in~\cite{SoWa05} taking into account that for $\roottau$ as above, a
  point $\x$ in $V_\roottau \cap \mathcal{O}_\roottau$ is a regular point on $V_\roottau$.
\end{proof}

When $\a$ satisfies the assumptions of the previous lemma, the first
item in this lemma shows that for any root $\roottau$ of $q$, $V_\roottau \cap
V(G'_{1,\roottau},\dots,G'_{i,\roottau})$ is the union of $V_\roottau \cap
V(\G_\roottau)$ and (possibly) of some algebraic set of pure dimension
$N-e-(P+i)$. For $i=N-e-P$, we obtain in particular the following
result, as announced above.

\begin{lemma}\label{lemma:YfromYNe}
  For a generic choice of $\a$, and for any root $\roottau$ of $q$, the
  equality $Y_\roottau=Y_{N-e,\roottau} \cap V(\G_\roottau)$ holds.
\end{lemma}
\begin{proof}
  As usual, we fix a root $\roottau$ of $q$. Recall that we proved in
  Lemma~\ref{lemma:Ytau} that $Y_\roottau$ is the set of isolated points
  of $V_\roottau \cap V(\G_\roottau) \cap \mathcal{O}_\roottau$.

  On the other hand, taking $i=N-e-P$ in Lemma~\ref{lemma:A87}, we deduce
  that $V_\roottau \cap V(G'_{1,\roottau},\dots,G'_{N-e-P,\roottau})$ is the union
  of $V_\roottau \cap V(\G_\roottau)$ and of finitely many isolated
  points. Since $Y_{N-e,\roottau}$ is the set of isolated points in
  $V_\roottau \cap V(G'_{1,\roottau},\dots,G'_{N-e-P,\roottau}) \cap
  \mathcal{O}_\roottau$, we deduce that $Y_{N-e,\roottau}$ is the union of the
  finite set $Y_\roottau$ we are interested in and of some isolated
  points, say $Y'_\roottau$, that are not in $V(\G_\roottau)$. The conclusion
  follows.
\end{proof}

As a result, we are now going to show how to compute a description of
the sets $Y_{N-e,\roottau}$, since filtering out the undesired extra
points will raise no difficulty.  To this end, we follow the
intersection process of Subsection~\ref{ssec:inter}.

To start the process, we deal with equations $\F$ only. This is done
using the algorithm ${\sf Solve\_F}$ given in the previous paragraph;
we obtain zero-dimensional parametrizations over $\A$ for the finite
sets $\fbr(V_{P,\roottau}^\mA,\y)$, for $\roottau$ a root of $q$,
and for some $\mA$ in $\GL(N-e)$ and $\y$ in $\QQ^{N-e-P}$, using
$\softO( N^3 (E'+N^3) D''\degQ \delta^2)$ operations in $\QQ$. We then
remove all those points that cancel the polynomials $G'_{1,\roottau}$,
for $\roottau$ as above. For a generic choice of $\mA$ and $\y$, the
remaining points define the sets $\fbr({Y^R_{P,\roottau}}^\mA,\y)$.

This hardly impacts the running time: this last step is done using 
Algorithm {\sf Clean} of~\cite{GiLeSa01}, which we already used in
the proof of Lemma~\ref{lemma:inter}. The analysis made in 
that proof remains valid, and shows that this step takes
$\softO( (E'+t+N) D\degQ \delta))$ operations in $\QQ$,
since the cost of evaluating $G'_1$ is $O(E'+t)$.
We will bound the cost so far by 
$\softO( N^3 (E'+t+N^3) D''\degQ \delta^2)$.

Using the last claim in Lemma~\ref{lemma:interG}, the same process
allows us to compute zero-dimensional parametrizations over $\A$ of
witness points for the families of algebraic sets
$Y^R_{P,\roottau},\dots,Y^R_{N-e-1,\roottau}$; the last step is done
by applying the second claim in that lemma instead, giving us
zero-dimensional parametrizations for the sets
$(Y_{N-e,\roottau})_{q(\roottau)=0}$. Let us verify that at every
stage, we are indeed under the assumptions of
Proposition~\ref{prop:inter}:
\begin{itemize}
\item By construction, for any root $\roottau$ of $q$, $Y^R_{P+i,\roottau}$ is either empty or equidimensional
  of dimension $N-e-(P+i)$.
\smallskip
\item For any such $\roottau$, the polynomials
  $\F_\roottau,G'_{1,\roottau},\dots,G'_{i,\roottau}$ vanish on $Y^R_{P+i.\roottau}$,
  and we claim that for a generic choice of $\a$, the matrix
  $\jac(\F_\roottau,G'_{1,\roottau},\dots,G'_{i,\roottau})$ has generically full
  rank $P+i$ on each irreducible component $Z$ of $Y^R_{P+i,\roottau}$.
  The second item in Lemma~\ref{lemma:A87} ensures it: $Z$ cannot be
  contained in $V_\roottau\cap V(\G_\roottau)$ (otherwise, it would be
  contained in $V(G'_{i+1,\roottau})$, which we assume is not the case)
  and $Z \cap \mathcal{O}_\roottau-V(\G_\roottau)$ is non empty, so there
  exists $\x$ in $Z \cap \mathcal{O}_\roottau-V(\G_\roottau)$ where said
  Jacobian matrix has full rank.
\smallskip
\item $Y^R_{P+i,\roottau}\cap V(G'_{i+1,\roottau})$ is either empty or
  $(N-e-(P+1))$-equidimensional: this is the first item in
  Lemma~\ref{lemma:interG}.
\end{itemize}

In terms of complexity, remark that all $G'_1,\dots,G'_{N-e-P}$ can be
computed by a straight-line program of length $O(E'+tN)$, and that for
all $i \le N-e-P$ and for any root $\roottau$ in $q$, $Y^R_{P+i,\roottau}$ has
degree at most $\delta'=\delta {D'}^{N-e-P}$ (using again
Proposition~2.3 in~\cite{HeSc80}). As a result, the total cost is
$\softO( N^3 (E'+tN+N^3) D''\degQ {\delta'}^2)$ operations in $\QQ$.

At this stage, we have obtained a description of the sets
$(Y_{N-e,\roottau}^\mA)_{q(\roottau)=0}$ by means of pairs
$(q_1,\scrR_1),\dots,(q_s,\scrR_s)$. In view of
Lemma~\ref{lemma:YfromYNe}, we keep only the points on the sets
$Y_{N-e,\roottau}^\mA$ where $G_{1,\roottau}^\mA,\dots,G_{t,\roottau}^\mA$ all
vanish; this is done by applying $t$ times the Algorithm
${\sf Intersect}$ from
Lemma~\ref{sec:basicroutinesparam:lemma:intersect2}. The cost is
$\softO(t\degQ \delta' (E'+N^2))$, since evaluating $\G^\mA$ induces
an $O(N^2)$ additional cost in the straight-line program for $\G$;
this is negligible compared to the previous cost.

We are thus left with pairs of the form
$(q'_1,\scrR'_1),\dots,(q'_v,\scrR'_v)$ that form zero-dimensional
parametrizations over $\A$ for the sets $(Y_\roottau^\mA)_{q(\roottau)=0}$.
As in the previous paragraph, we use algorithm ${\sf ChangeVariables}$
from Lemma~\ref{lemma:complexity0POF:changevar} in order to obtain
zero-dimensional parametrizations over $\A$ for the sets
$(Y_\roottau)_{q(\roottau)=0}$, using $\softO(N^2\degQ \delta+N^3)$ operations
in $\QQ$, which is negligible. This concludes the proof of
Proposition~\ref{geosolve:prop:variant2}.


\subsubsection{An application}\label{sec:posso:singularpoints}

We end this paragraph with a first application of the routine ${\sf
  Solve\_FG}$. Let $\f=(f_1, \ldots, f_p)\subset \QQ[X_1, \ldots,
X_n]$ be a reduced regular sequence defining an algebraic set $V(\f)
\subset \C^n$ such that $\sing(V(\f))$ is finite. We apply ${\sf
  Solve\_FG}$ to compute a zero-dimensional para\-metrization of
$\sing(V(\f))$. 

One possible approach would be to solve the system consisting of $\f$
and all $p$-minors of its Jacobian matrix. In the following
proposition, we use Lagrange systems instead, since it allows us to
obtain a slightly better cost.

\begin{proposition}\label{prop:singularpoints}
  Let $\Gamma$ be a straight-line program of length $E$ that computes
  a reduced regular sequence $\f=(f_1, \ldots, f_p)$, with
  $\deg(f_i)\leq D$ for all $i$, and such that $\sing(V(\f))$ is
  finite. Suppose that $D \ge 2$.

  There exists a probabilistic algorithm ${\sf SingularPoints}$ which
  takes as input $\f$ and either returns ${\sf fail}$ or returns a
  zero-dimensional parametrization using $\softO( E D^{4n+1} )$
  operations in $\QQ$. In case of success, the output describes
  $\sing(V(\f))$ and it has degree bounded by $n D^{2n}$.
\end{proposition}
\begin{proof}
  Consider new indeterminates $\L=(L_1,\dots,L_p)$, and the system
  $\G$ consisting of $\f$ and $\lag(\f,0,\L)$, where the second term
  denotes the entries of the matrix $[L_1~\cdots~L_p]\cdot \jac(\f)$.  The
  set we want to compute is the projection on the $\X$-space of the
  solutions of the system $\G=0$, $(L_1,\dots,L_p)\ne (0,\dots,0)$.
  We are going to reduce the solution of this set of equations and
  inequations to several instances of systems that can be solved by 
  means of Algorithm ${\sf Solve\_FG}$. 
  
  Let us partition $V(\f)$ into subset $(V_i)_{0 \le i \le p}$, where
  $V_i$ is the subset of all $\x$ in $V(\f)$ where $\jac(\f)$ has rank $i$;
  we are thus interested in describing $V_0,\dots,V_{p-1}$.
  Fix $i$ in $\{0,\dots,p-1\}$: at any such point, the solution set
  $S_\x$ of $\lag(\f,0,\L)$ is a linear subspace of $\C^p$
  of dimension $p-i$, so that the intersection of $S_\x$ with $(p-i-1)$ random
  linear forms $(\u_j \cdot \L=0)_{1 \le j \le p-i-1}$ and $1$ random
  affine form $\u_0 \cdot \L=1$ is a single point~$\bell_\x$.

  Let us thus introduce the systems $\G_i$, for $i=0,\dots,p-1$, where
  $\G_i$ consists of the $2p+n-i$ equations $\f$, $\lag(\f,0,\L)$,
  $(\u_j \cdot \L=0)_{1 \le j \le p-i-1}$ and $\u_0 \cdot \L=1$.
  We claim that for a generic choice of all $\u_j$'s, the isolated 
  points of $V(\G_i) \subset \C^{n+p}$ are precisely those points 
  $(\x,\bell_\x)$, for $\x$ in $V_i$. 

  Take a point $(\x,\bell)$ in $V(\G_i)$. If the Jacobian matrix
  $\jac(\f)$ had full rank at $\x$, we would necessarily have
  $\bell=0$, a contradiction with the constraint $\u_0 \cdot
  \L=1$. Hence, $\x$ is in $\sing(V(\f))$. Suppose in addition that
  $(\x,\bell)$ is isolated in $V(\G_i)$: this implies that $\bell$ is
  an isolated solution of the linear system $\lag(\f,0,\L)|_{\X=\x}$,
  $(\u_j \cdot \L=0)_{1 \le j \le p-i-1}$, $\u_0 \cdot \L=1$: since the 
  $\u_j$'s are chosen generic, this implies that $\jac(\f)$ has rank
  $p-i$ at $\x$, and $\x$ is indeed in $V_i$.

  Conversely, the discussion of the previous paragraphs shows that any
  point $(\x,\bell_\x)$, for $\x$ in $V_i$, is indeed a solution of
  $\G_i$; we have to prove that it is isolated. We saw above that any
  point $(\x',\bell')$ in $V(\G_i)$ is in $\sing(V(\f))$, and $\x$ is 
  isolated in  $\sing(V(\f))$ (as this set is finite). By construction,
  $\bell_\x$ is isolated among the solutions of $\lag(\f,0,\L)|_{\X=\x}$,
  $(\u_j \cdot \L=0)_{1 \le j \le p-i-1}$, $\u_0 \cdot \L=1$, so we are done 
  with the proof of our claim.

  Let $\F$ be the empty set. The algorithm calls Algorithm ${\sf Solve\_FG}$
  of Proposition \ref{geosolve:prop:variant2} $p$ times, with inputs
  (say) $q=X$, $e=0$ and straight-line programs $\Gamma_0,\dots,\Gamma_{p-1}$
  that respectively evaluate the polynomials $\G_0,\dots,\G_{p-1}$. Since 
  there are no polynomials $\F$, in each
  case, we obtain the isolated solutions of $V(\G_i)$; then, we project
  them on the $\X$-space and return the union of the corresponding 
  finite sets of points.

  For a given index $i$, the polynomials in $\G_i$ involve $n+p \le
  2n$ variables, have total degree at most $D$, and can be computed by
  a straight-line program of length $O(nE + n^2)$, where the first
  term corresponds to the overhead induced by the calculation of all
  partial derivatives of $\f$, and the second one to all dot products.
  Because we assume $D\ge 2$, we can neglect polynomials in $n$
  compared to terms of the form $D^n$ in our soft-O estimates.  For
  each index $i$, the cost of Proposition~\ref{geosolve:prop:variant2}
  then becomes $\softO( E D^{4n+1} )$, and the bound on the degree of
  the output is $D^{2n}$; in particular, the sum of the output degrees
  is at most $nD^{2n}$.  The total time spent in the subsequent
  projection and union operations (Lemmas~\ref{sec:main:lemma:union}
  and~\ref{sec:posso:lemma:projection}) is then $\softO(D^{2n})$.
\end{proof}


\section{Proof of Proposition~\ref{chap:solvelagrange:prop:basicsolve}}
\label{proof:solvelagrange:prop:basicsolve}

In this section, we prove
Proposition~\ref{chap:solvelagrange:prop:basicsolve}.  We consider a
generalized Lagrange system $L=(\Gamma, \scrQ, \scrS)$ of type $(k,
\n, \p, e)$, where $\Gamma$ is a straight-line program of length $E$
that computes polynomials $\F=(\f, \f_1, \ldots, \f_k)$, with
$\f\subset \QQ[\X]$ and $\f_i\subset \QQ[\X, \L_1, \ldots, \L_i]$ for
$1 \le i \le k$. As in Definition~\ref{def:GLS}, we write $d=N-e-P$;
we let $D$ denote the maximum degree of the polynomials in $\f$,
$\delta=\DF(k, e,\n,\p,D, D-1)$ is as in
Definition~\ref{sec:solvelagrange:notationsNPdelta}. Finally,
we write 
$Q=\Zeroes(\scrQ)\subset \C^e$ and $S=\Zeroes(\scrS)\subset \C^n$, 
as well as
$\degQ=\deg(\scrQ)$ and $\degS = \deg(\scrS)$.

With this notation, we prove the following: {\em There exists a
  probabilistic algorithm ${\sf SolveLagrange}$ which takes as input a
  generalized Lagrange system $L$ as above, such that $N-e-P=1$, and
  returns either a one-dimensional parametrization with coefficients
  in $\QQ$ or ${\sf fail}$ using
  $$\softO(N^3(E+N^3) (D+k) \degQ^3\delta^3 + N\degQ
  \delta\degS^2)$$ operations in~$\QQ$, using the notation introduced
  above. If either
  \begin{itemize}
  \item $\Clos{(L)}$ is empty,
\smallskip
  \item or $L$ has a global normal form,
  \end{itemize}
  then in case of success, the output of ${\sf SolveLagrange}$
  describes $\Clos{(L)}$.  In addition, $\Clos{(L)}$ has degree at most
  $\degQ \delta$.} 


\subsection{Algorithm ${\sf IsEmpty}$}

We start by an auxiliary function for testing emptiness.

\begin{proposition}\label{prop:isempty}
  There exists a probabilistic algorithm ${\sf IsEmpty}$ which takes
  as input a generalized Lagrange system $L$ and returns either
  ${\sf true}$, ${\sf false}$ or ${\sf fail}$ using
  $\softO( N^3(E+N^3) (D+k) \degQ \delta^2+ N \degQ^2 \delta^2 + N
  \degS^2)$
  operations in~$\QQ$, using the notation introduced above. If either
  \begin{itemize}
  \item $\Clos{(L)}$ is empty,
\smallskip
  \item or $L$ has a global normal form,
  \end{itemize}
  then in case of success, ${\sf IsEmpty}$ decides whether $\Clos{(L)}$
  is empty.
\end{proposition}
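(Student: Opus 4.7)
The plan is to reduce the emptiness test for $\Clos(L)$ to a zero-dimensional system-solving problem and apply Algorithm {\sf Solve\_F} from Proposition~\ref{geosolve:prop:variant1a}. The key idea is that under the global normal form property, Lemmas~\ref{sec:lagrange:propertyP} and~\ref{lemma:degree:clos} imply $\Cons(L)$ is either empty or $d$-equidimensional with $d = N - e - P$; hence cutting with $d$ generic affine hyperplanes produces a finite set that is non-empty iff $\Cons(L)$ is. If instead $\Clos(L)$ is empty (without any normal form hypothesis), then all points of $\fbr(V(\F), Q)$ lie in $\pi_\X^{-1}(S)$, so after removing that set we still recover the correct answer.

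Concretely, first I substitute $X_1, \ldots, X_e$ by $v_1, \ldots, v_e$ in the straight-line program $\Gamma$ to obtain a straight-line program $\Gamma^\star$ of length $O(E)$ with coefficients in $\A = \QQ[T]/\langle q\rangle$ that evaluates polynomials $\F^\star \in \A[X_{e+1}, \ldots, X_n, \L]$. Next, I pick $d$ random affine linear forms $\ell_1 - c_1, \ldots, \ell_d - c_d$ over $\QQ$ in the remaining variables and append them to $\Gamma^\star$, producing a straight-line program $\Gamma'$ of length $E + O(N^2)$ computing a system $\F'$ of $N - e$ equations in $N - e$ variables. The maximum total degree in $\F'$ is at most $D + k$, and by Proposition~\ref{sec:posso:prop2} the geometric degree of $\F'$ (which appends linear forms to the system associated to $L$) is bounded by $\delta = \DF(k, e, \n, \p, D, D-1)$.

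I then call ${\sf Solve\_F}(q, \Gamma')$; on success it returns zero-dimensional parametrizations over $\A$ describing a finite set $Y \subset \C^N$ defined over $\A$. Algorithm {\sf Descent} (Lemma~\ref{lemma:descent0-2}) applied to $\scrQ$ and these parametrizations produces a zero-dimensional parametrization $\scrR$ over $\QQ$ for $Y$. I remove $\pi_\X^{-1}(S)$ by calling {\sf Lift} (Lemma~\ref{lemma:subroutine:Lift}) with inputs $\scrR$ and $\scrS$, then {\sf Discard} (Lemma~\ref{sec:basicroutinesparam:lemma:discard}); the algorithm returns {\sf true} if the resulting parametrization is empty (minimal polynomial of degree zero), {\sf false} otherwise, and {\sf fail} if any subroutine does.

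Correctness follows by combining the two cases above, noting that under the global normal form property Lemma~\ref{lemma:degree:clos} gives $\Cons(L) = \freg(\F, Q) - \pi_\X^{-1}(S)$, so for a generic choice of the $c_i$'s the finite set $Y - \pi_\X^{-1}(S)$ equals $\Cons(L) \cap V(\ell_1 - c_1, \ldots, \ell_d - c_d)$, which is empty exactly when $\Cons(L) = \Clos(L)^{-1}(S)^c$ is. For the complexity, {\sf Solve\_F} contributes $\softO(N^3(E + N^3)(D+k)\degQ\,\delta^2)$, {\sf Descent} contributes $\softO(N \degQ^2 \delta^2)$, and {\sf Lift} followed by {\sf Discard} contributes $\softO(N \max(\degQ\,\delta, \degS)^2) = \softO(N\degQ^2\delta^2 + N \degS^2)$, summing to the bound stated in the proposition. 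The main delicate point is to check that a generic choice of the $c_i$'s simultaneously ensures the degree bound $\delta$ from Proposition~\ref{sec:posso:prop2} for the augmented system, the genericity conditions of Proposition~\ref{geosolve:prop:variant1a}, and that the linear section preserves full Jacobian rank on each fiber above a root of $q$ when $L$ has a global normal form.
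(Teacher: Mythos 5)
Your proposal is correct and follows essentially the same route as the paper: cut the system with $d$ generic (affine) linear forms, solve the resulting square system over $\A=\QQ[T]/\langle q\rangle$ with {\sf Solve\_F}, descend to $\QQ$, and test whether the resulting finite set lies in $\pi_\X^{-1}(S)$, with the same case split ($\Clos(L)$ empty versus $L$ having a global normal form) and the same cost accounting. The one point you flag but do not carry out --- that for generic choices the regular locus of the augmented system coincides with $\freg(\F,Q)\cap V(\Lambda)$ and is not contained in $\pi_\X^{-1}(S)$ --- is exactly what the paper verifies via a transversality argument (\cite[Theorem~A.8.7]{SoWa05}).
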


Before proving this proposition, we introduce notation that will be 
useful below. Let us write $\scrQ=((q,v_1,\dots,v_e),\pollambda)$, define
$\A=\QQ[T]/\langle q \rangle$, and let $\tilde \F$ be the polynomials
$\F(v_1,\dots,v_e,X_{e+1},\dots,X_{N})$, that lie in
$\A[X_{e+1},\dots,X_{N}]$.  Recall that we assume that polynomials
$\F$ are given by a straight-line program $\Gamma$; replacing all
inputs $X_1,\dots,X_e$ by $v_1,\dots,v_e$ in $\Gamma$, we obtain a
straight-line program $\tilde \Gamma$ with coefficients in $\A$ that computes
the polynomials $\tilde \F$.  The following lemma gives an upper bound
on the geometric degree (see Definition~\ref{def:gdeg}) of these polynomials in terms of $\delta$.

\begin{lemma}\label{lemma:geometricdegree}
  The geometric degree of $\tilde \F$ is at most $\delta$.
\end{lemma}
\begin{proof}
  The definition of generalized Lagrange systems implies that
  all inequalities in~\eqref{eq:N-P2} are satisfied. Thus, applying
  Proposition~\ref{sec:posso:prop2} to the systems $\tilde
  \F_\roottau=\phi_\roottau(\tilde \F)$ (as defined in
  Section~\ref{sec:D5eqs}), for $\roottau$ a root of $q$, proves our 
  inequality.
\end{proof}

The other notation we will need is the following. Let $\Delta$ be the
set of maximal minors of $\jac(\F, e)$, let $\mathcal{O}$ be the Zariski
open set $\C^{N}-V(\Delta)$ and let finally $V=\freg(\F,Q)$ be the
Zariski closure of $\fbr(V(\F), Q)\cap \mathcal{O}$.  Recall as well that
we denote by $\pi_\X:\C^{N} \to \C^n$ the projection on the
$\X$-space.

\begin{proof}[of Proposition~\ref{prop:isempty}]
 Choose $d$ random linear forms $\Lambda$ with coefficients in $\QQ$
 in all variables $\X,\L_1,\dots,\L_k$, and let $\F'$ be the system
 obtained by adjoining $\Lambda$ to $\F$. Just as we defined $V$ as
 the Zariski closure of $\fbr(V(\F),Q) \cap \mathcal{O}$, we define
 $V'=\freg(\F',Q)$ as the Zariski closure of $\fbr(V(\F'),Q) \cap
 \mathcal{O'}$, where $\mathcal{O}'$ is the Zariski open set
 $\C^{N}-V(\Delta')$ and $\Delta'$ is the set of maximal minors of
 $\jac(\F', e)$. Remark that $\F'$ consists of $P+d = N -e$
 equations, so that $\jac(\F', e)$ is actually square of size $N-e$,
 and $\Delta'$ simply consists in the determinant of that matrix. In
 particular, by Proposition~\ref{geosolve:prop:variant1a}, $V'$ is a
 finite set, so we can alternatively define it as $V'=\fbr(V(\F'),Q)
 \cap \mathcal{O'}$.

 Under the assumptions that either $\Clos{(L)}$ is empty or $L$ has a
 global normal form, we are going to prove that for a generic choice
 of $\Lambda$, $V'$ is contained in $\pi_\X^{-1}(S)$ if and only if
 $\Clos{(L)}$ is empty. The condition on $V'$ will be tested using
 Algorithm ${\sf Solve\_F}$ introduced in Section~\ref{chap:posso}.

 Suppose first that $\Clos{(L)}$ is empty. In this case, $\Cons(L)$ is
 empty as well, which implies that $\fbr(V(\F),Q)$ is contained in
 $\pi_\X^{-1}(S)$. As a result, $V'$, which is a subset of
 $\fbr(V(\F),Q)$, is contained in $\pi_\X^{-1}(S)$ as well.

 Suppose on the other hand that $L$ has a global normal form. By
 Lemma~\ref{lemma:degree:clos}, $V$ is equidimensional of dimension
 $d$ and it does not lie over $S$ (since otherwise, the third
 equality in that lemma would imply that $\Clos{(L)}$ is empty, whereas
 it establishes that $\Clos{(L)}$ is $d$-equidimensional). As a
 consequence, for a generic choice of $d$ linear forms $\Lambda$, $V
 \cap V(\Lambda)$ is a non-empty finite set, not contained in
 $\pi_\X^{-1}(S)$. To conclude this discussion, we will now prove that
 in this case, for generic $\Lambda$, $V'=V \cap V(\Lambda)$ (so that,
 as claimed above, $V'$ is not contained in $\pi_\X^{-1}(S)$).

 Take $\x$ in $V'$, so that $\x$ is in $\fbr(V(\F'),Q)$ and
 $\jac(\F',e)$ has full rank $N-e$ at $\x$. This implies that $\x$
 is in $\fbr(V(\F),Q)$ and that $\jac(\F,e)$ has full rank
 $N-e-d=P$ at $\x$, so $\x$ is in $\fbr(V(\F), Q)\cap \mathcal{O}$,
 and thus in $V$. Since $\x$ also cancels the linear forms $\Lambda$,
 $\x$ is in $V \cap V(\Lambda)$.  Conversely, for a generic choice of
 $\Lambda$, every point $\x$ in $V \cap V(\Lambda)$ is non-singular on
 $V$, and $V(\Lambda)$ intersects $V$ transversally at $\x$ (this is
 for instance a consequence of~\cite[Theorem~A.8.7]{SoWa05}). For such
 an $\x$, $\T_\x V$ is the nullspace of $\jac(\F,e)$ at $\x$, so the
 transversality condition means that $\jac(\F',e)$ has full rank
 $N-e$ at $\x$.  This proves that $\x$ is in $V'$.

 As announced above, the discussion in the last paragraphs shows that
 for a generic choice of $\Lambda$, and under the assumption that
 either $\Clos{(L)}$ is empty or $L$ has a global normal form, $V'$ is
 contained in $\pi_\X^{-1}(S)$ if and only if $\Clos{(L)}$ is
 empty. Algorithm ${\sf IsEmpty}$ is then simple.  Starting from
 polynomials $\F'$, we define
 $\tilde \F'=\F'(v_1,\dots,v_e,X_{e+1},\dots,X_{N})$, so that these
 polynomials lie in $\A[X_{e+1},\dots,X_{N}]$. As was pointed out in
 Section~\ref{sec:D5eqs}, $V'$ is the disjoint union of the sets
 $\x \times V'_\roottau$, for $\x$ in $Q$, where $\roottau=\pollambda(\x)$ is a
 root of $q$ and $V'_\roottau=\freg(\tilde \F'_\roottau)$.

 Thus, we use Algorithm ${\sf Solve\_F}$ of
 Proposition~\ref{geosolve:prop:variant1a}, with input $q$ and (a
 straight-line program for) $\tilde \F'$. Upon success, the output is
 a family of zero-di\-men\-sio\-nal parametri\-zations over $\A$ of
 the form $(q_1,\scrR_1),\dots,(q_s,\scrR_s)$ for the sets
 $(V'_\roottau)_{q(\roottau)=0}$, where each $\scrR_i$ has the form
 $\scrR_i=((r_i,w_{i,e+1},\dots,w_{i,N}),\linearmu_i)$, and has coefficients
 in $\A_i=\QQ[T]/\langle q_i\rangle$. We can then define the
 zero-dimensional parametrizations
 \[\scrR'_i=((r_i,v_1 \bmod q_i,\dots,v_e \bmod
 q_i,w_{i,e+1},\dots,w_{i,N}),\linearmu_i),\]
 for $1\leq i \leq s$ so that $(q_1,\scrR'_1),\dots,(q_s,\scrR'_s)$
 are zero-dimensional parametrizations over $\A$ for the sets
 $$((v_1(\roottau),\dots,v_e(\roottau))\times V'_\roottau)_{q(\roottau)=0}.$$  Using
 Algorithms ${\sf Descent}$ from Lemma~\ref{lemma:descent0} and
 ${\sf Union}$ from Lemma~\ref{sec:main:lemma:union}, we obtain a
 zero-dimensional parametrization $\scrR'$ of degree $\degQ\delta$
 with coefficients in $\QQ$ that defines the union of these sets, that
 is, $V'$. Finally, we can test whether $V'=\Zeroes(\scrR')$ is contained in
 $\pi_\X^{-1}(S)$ using Algorithm ${\sf Lift}$ from
 Lemma~\ref{lemma:subroutine:Lift}.

 Let us give the cost of all these steps. The system $\F'$ can be
 computed by a straight-line program $\Gamma'$ of length
 $E'=E+O(N^2)$, where the second term stands for the cost of
 computing linear forms $\Lambda$. From this, we can deduce a
 straight-line program $\tilde \Gamma'$ that computes polynomials
 $\tilde \F'$ with the same number of steps, by replacing all inputs
 $X_1,\dots,X_e$ by $v_1,\dots,v_e$ in $\Gamma'$.

 If all polynomials $\f$ have degree at most $D$, then all polynomials
 in $\F$ and $\F'$ have degree at most $D+k$. Finally, the geometric
 degree $\delta'$ of $\tilde\F'$ is less than or equal that of 
 $\tilde\F$, since all additional equations are linear. Since
 we saw above that the latter is at most  $\delta$, we deduce that the
 cost of calling ${\sf Solve\_F}(q, \tilde \Gamma')$ is
 $\softO( N^3(E+N^3) (D+k) \degQ \delta^2)$ operations in $\QQ$.  The
 total cost of all calls to ${\sf Descent}$, ${\sf Union}$ and
 ${\sf Lift}$ is $\softO(N \degQ^2 \delta^2 + N \degS^2)$.
\end{proof}


\subsection{Proof of the proposition}

We can now prove Proposition~\ref{chap:solvelagrange:prop:basicsolve}.
First, we call ${\sf IsEmpty}$ (Proposition~\ref{prop:isempty}): if
the output is true, we simply return the one-dimen\-sional
parametrization that defines the empty set; the cost
$\softO( N^3(E+N^3) (D+k) \degQ \delta^2+ N \degQ^2 \delta^2 + N
\degS^2)$
will be negligible compared to that of other steps. Else, we may
assume that there exists a global normal form for $L$. Then, by
Lemma~\ref{lemma:degree:clos}, $\Clos{(L)}$ is the Zariski closure of
$\pi_\X(V-\pi_\X^{-1}(S))$, with  $V=\freg(\F,Q)$. By definition of the geometric degree
(Definition~\ref{def:gdeg}),
and using Lemma~\ref{lemma:geometricdegree}, we
obtain that $V$ has degree at most $\degQ\delta$; as a
consequence, the degree of $\Clos{(L)}$ admits the same upper bound.

  In order to compute a one-dimensional parametrization of $\Clos{(L)}$,
  we first apply the routine ${\sf Solve\_F}$ given in
  Proposition~\ref{geosolve:prop:variant1a} to $q$ and the
  straight-line program $\tilde \Gamma$ that computes $\tilde \F$.
  This gives us one-dimensional parametrizations over $\A$ for the
  sets $(V_\roottau)_{q(\roottau)=0}$, with $V_\roottau = \freg(\tilde \F_\roottau)$,
  and the cost is $\softO( N^3(E+N^3) (D+k)\degQ \delta^2)$ operations
  in $\QQ$. As in the proof of the previous lemma, we apply next
  Algorithms ${\sf Descent}$ and ${\sf Union}$, but in their
  one-dimensional versions (Lemmas~\ref{lemma:descent1}
  and~\ref{sec:main:lemma:union1}); the cost is
  $\softO(N \degQ^3 \delta^3)$ operations in~$\QQ$.

  As output, we obtain a one-dimensional parametrization of $V$ with
  coefficients in $\QQ$, and we saw above that it has degree at most
  $\degQ\delta$. Discarding those points in $V$ whose image by
  $\pi_\X$ lies in $S$ is done using the routine ${\sf Discard}$ of
  Lemma~\ref{sec:basicroutinesparam:lemma:discarddim1}. This requires
  $\softO(N\max(\degQ \delta, \degS)^2)$ arithmetic operations in
  $\QQ$ at most and the extra cost is bounded by
  $\softO(N \degQ^3 \delta^3 + N \degQ \delta \degS^2)$.

  The last step of this algorithm applies projection $\pi_\X$, by
  means of algorithm ${\sf Projection}$ from
  Lemma~\ref{sec:posso:lemma:projection1}; the cost is
  $\softO(N\degQ^3 \delta^3)$ operations in $\QQ$. The cost given in
  this lemma is an upper bound on all costs seen so far.


\section{Proof of Proposition~\ref{sec:main:critical}}
\label{proof:main:critical}

We prove now Proposition~\ref{sec:main:critical}. The setup is exactly
as in the previous section: we consider a generalized Lagrange system
$L=(\Gamma, \scrQ, \scrS)$ of type $(k, \n, \p, e)$, where $\Gamma$ is
a straight-line program of length $E$ that computes polynomials
$\F=(\f, \f_1, \ldots, \f_k)$, with $\f\subset \QQ[\X]$ and
$\f_i\subset \QQ[\X, \L_1, \ldots, \L_i]$ for $1 \le i \le k$. We
write $d=N-e-P$, $D$ is the maximum degree of the polynomials in $\f$,
$\delta=\DF(k, e,\n,\p,D, D-1)$.  Finally, we write $Q=\Zeroes(\scrQ) \subset \C^e$
and $S=\Zeroes(\scrS)\subset \C^n$, as well as $\degQ=\deg(\scrQ)$ and $\degS =
\deg(\scrS)$.

Then, we prove the following: {\em There exists a probabilistic
  algorithm ${\sf W}_1$ which takes as input a generalized Lagrange
  system $L$ as above and returns either a zero-dimensional
  parametrization with coefficients in $\QQ$ or ${\sf fail}$ using
  $$\softO( (k+1)^{2d+1}  N^{4d+8}E D^{2d+1} \degQ^2 \delta^2 +N\degS^2)$$
 operations in~$\QQ$. If either  $\Clos{(L)}$ is empty, or
  \begin{itemize}
    \item $\Clos{(L)}$ is $d$-equidimensional (so that $\polar(e, 1,
    \Clos{(L)})$ is well-defined),
\smallskip
  \item $\polar(e, 1, \Clos{(L)})$ is finite,
\smallskip
  \item $(L; \polar(e, 1, \Clos{(L)}))$ has a global normal form,
  \end{itemize}
  then in case of success, the output of ${\sf W}_1$ describes
  $\polar(e,1,\Clos{(L)})-S$.  In addition, the finite set
  $\polar(e,1,\Clos{(L)})-S$ has degree at most $\degQ\delta N^d (D-1+k)^d$.
}

\begin{lemma}\label{sec:lagrange:prop:critoflagrange}\label{chap:lagrange:defcrit}
  Let $Q \subset \C^e$ be a finite set and let $V \subset \C^n$ and
  $S\subset \C^n$ be algebraic sets lying over $Q$, with $S$
  finite. Suppose that $V$ is $d$-equidimensional with finitely many
  singular points.

  Let further $L=(\Gamma,\scrQ,\scrS)$ be a generalized Lagrange
  system, with $V=\Clos{(L)}$, $Q=Z(\scrQ)$, $S=Z(\scrS)$, $\F$ in
  $\QQ[\X, \L]$ as in Definition \ref{def:GLS} and define $d=N-e-P$.
  Suppose that $(L; \polar(e,1,V))$ has the global normal form property and
  that $\polar(e,1,V)$ is finite, and let $\G$ be the set of $P$-minors of
  $\jac(\F, e+1)$. Let finally $Z$ be the isolated points of
  $\oreg(\F,Q) \cap V(\G)$.  Then, $\polar(e,1,V)- S=\pi_\X(Z)-S.$
\end{lemma}
\begin{proof}
  We denote by $\lcLambda$ the locally closed set
  $$\fbr(V(\F,\G),Q) - \pi_\X^{-1}(S) = \Cons(L) \cap V(\G).$$ First,
  we prove that $\polar(e,1,V)-S = \pi_\X(\lcLambda)$.
By assumption, there exists a global normal form
$$\bphi=(\phi_i)_{1\leq i \leq s}$$ of $(L; \polar(e,1,V))$ with
$\phi_i=(\polmu_i, \poldelta_i, \h_i, \H_i)$. We claim that
$\polar(e,1,V)-S$ is contained in the union of the open sets
$\Open(\polmu_i \poldelta_i)$.  Indeed, take $\x$ in
$\polar(e,1,V)-S$, so $\x$ is in particular in $\polar(e,1,V)$. Since
$\polar(e,1,V)$ is by assumption finite, $\x$ is actually an
irreducible component of $\polar(e,1,V)$. Besides, since $\x$ is in
$V-S$, ${\gnf_2}$ implies that there exists $i$ in $\{1,\dots,s\}$
such that $\x$ is actually in $\Open(\polmu_i)\cap V-S$; by
${\gnf_3}$, this implies that $\poldelta_i$ does not vanish at $\x$,
as claimed.

We start by proving that $\pi_\X(\lcLambda)\subset \polar(e,1,V)$;
this will actually prove that
$\pi_\X(\lcLambda)\subset \polar(e,1, V)-S$, since the projection
$\pi_\X(\lcLambda)$ avoids $S$.  Let thus $(\x, \bell)$ be in
$\lcLambda$.  Then, $(\x, \bell)$ is in $\Cons(L)$, and $\x$ is in
$\new{\Proj(L)} \subset V-S$.  We deduce by $\gnf_2$ and $\lnf_5$ that
there exists $i \in \{1,\dots,s\}$ such that $\x$ is in
$\Open(\polmu_i \poldelta_i)\cap \new{\Proj(L)}$.

Denote by $I$ the defining ideal of $Q$. By Lemma~\ref{lemma:mS},
there exists a $(P\times P)$ matrix $\mS$ with entries in
$\QQ[\X]_{\polmu_i \poldelta_i}$ such that
$\jac(\H_i, e)=\mS\,\jac(\F, e)$ over
$\QQ[\X,\L]_{\polmu_i\poldelta_i}/\langle \F , I\rangle$. Since, by
definition of $\lcLambda$, $\jac(\F, e+1)$ has rank less than $P$ at
$(\x, \bell)$, we deduce that $\jac(\H_i, e+1)$ also has rank less
than $P$ at $(\x, \bell)$. Since $\H_i$ is in normal form, we conclude
that $\jac(\h_i,e+1)$ has rank less than $c$ at $\x$.  As a result,
since $\x$ is in particular in $\Open(\polmu_i)\cap V-S$,
Lemma~\ref{sec:atlas:chartpolar} shows that $\x$ is in
$\polar(e,1,V)$.

Conversely, we prove that $\polar(e,1, V)-S$ is contained in
$\pi_\X(\lcLambda)$. Let thus $\x$ be in $\polar(e,1, V)-S$. In view
of our preliminary remarks, we know that there exists
$i \in \{1,\dots,s\}$ such that $\x$ is in
$\Open(\polmu_i \poldelta_i)$. Since $\x$ is also in $V-S$,
Lemma~\ref{lemma:conseq0} implies that $\x$ is in $\new{\Proj(L)}$. As
a result, there exists $\bell$ such that $(\x,\bell)$ is in
$\Cons(L)$. It remains to prove that $\jac(\F, e+1)$ has rank less
than $P$ at $(\x, \bell)$.

By ${\lnf_3}$, $(\x,\bell)$ is in $\fbr(V(\H_i),Q)$. On the other
hand, as we saw above, there exists a $(P\times P)$ matrix $\mS$ with
entries in $\QQ[\X]_{\polmu_i \poldelta_i}$ such that
$\jac(\H_i, e)=\mS\,\jac(\F, e)$ over
$\QQ[\X,\L]_{\polmu_i \poldelta_i}/\langle \F , I\rangle$. Thus, to
prove that $\jac(\F, e+1)$ has rank less than $P$ at $(\x, \bell)$, it
is enough to prove that
\begin{itemize}
\item the determinant of $\mS$ does not vanish at $\x$;
\smallskip
\item $\jac(\H_i, e+1)$ has rank less than $P$ at $(\x, \bell)$. 
\end{itemize}
We start with the first assertion. By properties $\lnf_4$ and
$\sfC_4$, we deduce that $\jac(\h_i, e)$ has full rank $c$ at $\x$;
the last statement in Lemma~\ref{lemma:mS} then implies that
$\det(\mS)$ is non-zero at $\x$, as claimed. We now prove the second
assertion. Because $(\polmu_i, \h_i)$ is a chart of $(V, Q, S)$, and
$V$ is $d$-equidimensional with finitely many singular points, one can
apply Lemma \ref{sec:atlas:chartpolar} to $V$ and deduce that
$\jac(\h_i, e+1)$ has rank less than $c$ at $\x$. Using again the fact
that $\h_i$ is the $\X$-component of $\H_i$, and that $\H_i$ is in
normal form, we deduce that $\jac(\H_i, e+1)$ has rank less than $P$
at $(\x,\bell)$, as requested.

At this stage, we have proved that $$\polar(e,1,V)-S =
\pi_\X(\lcLambda),\qquad \text{ with }\qquad \lcLambda=\fbr(V(\F,\G),Q) -
\pi_\X^{-1}(S).$$ Next, we prove that $\lcLambda$ is finite and that
$\jac(\F,e)$ has full rank $P$ at every point in~$\lcLambda$.

We saw above that $\polar(e, 1, V)-S$ is contained the union of the
open sets $\Open(\polmu_i \poldelta_i)$ and thus (by
Lemma~\ref{lemma:conseq0}) in $\new{\Proj(L)}$.  Using again the global normal form
property, one can apply Proposition~\ref{sec:lagrange:propertyP} and deduce
that $\pi_\X$ induces a bijection between $\polar(e, 1, V)-S$ and its
preimage $\pi_\X^{-1}(\polar(e, 1, V)-S) \cap \Cons(L)$, so that in
particular, $\pi_\X^{-1}(\polar(e, 1, V)-S) \cap \Cons(L)$ is finite;
that lemma proves as well that $\jac(\F, e)$ has maximal rank at any
point of that set.  Applying $\pi_\X^{-1}$ to both sides of the
equality $\polar(e,1,V)- S=\pi_\X(\lcLambda)$, and using the fact that
$\lcLambda$ is contained in $\Cons(L)$, we deduce that
$\pi_\X^{-1}(\polar(e, 1, V)-S) \cap \Cons(L)=\lcLambda$, so we are
done with the claims above.

The fact that that $\jac(\F,e)$ has full rank $P$ at every point
in~$\lcLambda$ implies that $\lcLambda$ can be rewritten as
$\lcLambda = \oreg(\F,Q)\cap V(\G)-\pi_\X^{-1}(S)$. Now, the locally
closed set $\oreg(\F,Q)\cap V(\G)$ can be written as
$\oreg(\F,Q)\cap V(\G) = Z \cup T$, with $Z$ being its isolated points
and $T$ the union of all components of positive dimension, and where
the union is disjoint.  As a consequence, we have
$\lcLambda = (Z - \pi_\X^{-1}(S))\ \cup\ (T - \pi_\X^{-1}(S))$. Now,
if $T - \pi_\X^{-1}(S)$ is not empty, it must be infinite, so
$\lcLambda$ being finite implies that
$\lcLambda = Z - \pi_\X^{-1}(S)$, and we are done.
\end{proof}

As in the previous section, we define $\A=\QQ[T]/\langle q \rangle$,
and let $\tilde \F$ be the polynomials
$\F(v_1,\dots,v_e,X_{e+1},\dots,X_{N})$, that lie in
$\A[X_{e+1},\dots,X_{N}]$.  Recall that we assume that polynomials
$\F$ are given by a straight-line program $\Gamma$; replacing all
inputs $X_1,\dots,X_e$ by $v_1,\dots,v_e$ in $\Gamma$, we obtain a
straight-line program $\tilde \Gamma$ with coefficients in $\A$ that
computes the polynomials $\tilde \F$.

The algorithm starts by checking whether $\Clos{(L)}$ is empty, using
algorithm ${\sf IsEmpty}$ (Proposition~\ref{prop:isempty}); the cost
$\softO( N^3(E+N^3) (D+k) \degQ \delta^2 + N \degQ^2 \delta^2 + N
\degS^2)$
of this step will be negligible (or of the same order) compared to
that of what follows. If $\Clos{(L)}$ is empty, we return the
zero-dimensional parametrization $(1)$ that defines (by convention) the 
empty set, and we are done.

We can thus assume that $\Clos{(L)}$ lies over $Q$ and is
$d$-equidimensional, so that
$\polar(e, 1, \Clos{(L)})$ is well-defined; we also assume that
$\polar(e, 1, \Clos{(L)})$ is finite and that
$(L; \polar(e, 1, \Clos{(L)}))$ has a global normal form.  
In particular, all singular points of $\Clos{(L)}$ are 
contained in $S=\Zeroes(\scrS)$ by Lemma~\ref{sec:coro:lemma:singSX}, 
so they are in finite number. 

Let $\G$ be
the set of $P$-minors of $\jac(\F, e+1)$ and denote by $Z$ the
isolated points of $\oreg(\F,Q) \cap V(\G)$; then,
Lemma~\ref{chap:lagrange:defcrit} shows that
  $$ \polar(e, 1, \Clos{(L)})-S= \pi_\X(Z)-S.  $$ Let us define the
  polynomials $$\tilde \G = \G(v_1,\dots,v_e,X_{e+1},\dots,X_{N})$$
  which lie in $\A[X_{e+1},\dots,X_{N}]$, as do the polynomials
  $\tilde \F$. The definition of $Z$ then shows that it can be written
  as the disjoint union of the sets
  $Z_\roottau = \x(\roottau) \times \zeta_\roottau$, where $\roottau$ is a root of $q$
  and $\x(\roottau)=(v_1(\roottau),\dots,v_e(\roottau))$, and $\zeta_\roottau$ is the
  set of isolated points of
  $\oreg(\tilde \F_\roottau) \cap V(\tilde \G_\roottau)$.
  
  To compute a zero-dimensional parametrization of
  $\polar(e, 1, \Clos{(L)})-S$, we first call the routine ${\sf Solve\_FG}$
  of Proposition~\ref{geosolve:prop:variant2} with input $q$ and a
  straight-line program that evaluates $\tilde \F$ and $\tilde \G$;
  this outputs zero-dimensional parametrizations over $\A$ for the
  sets $(\zeta_\roottau)_{q(\roottau)=0}$, of the form
  $(q_1,\scrR_1),\dots,(q_s,\scrR_s)$; each $\scrR_i$ has the form
  $\scrR_i=((r_i,w_{i,e+1},\dots,w_{i,N}),\linearmu_i)$.

  As in Proposition~\ref{prop:isempty}, we can then define the
  zero-dimensional parametrizations
  $$\scrR'_i=((r_i,v_1 \bmod q_i,\dots,v_e\bmod q_i,w_{i,e+1},\dots,w_{i,N}),\linearmu_i),$$ so
  that $(q_1,\scrR'_1),\dots,(q_s,\scrR'_s)$ are zero-dimensional
  parametrizations over $\A$ for the sets $(Z_\roottau)_{q(\roottau)=0}$.

  Using Algorithms ${\sf Descent}$ from Lemma \ref{lemma:descent0} and
  ${\sf Union}$ from Lemma \ref{sec:main:lemma:union}, we obtain a
  zero-dimensional parametrization $\scrR'$ with coefficients in $\QQ$
  that defines the union of these sets, that is, $Z$.

  Next, we use the routine ${\sf Projection}$
  of~Lemma~\ref{sec:posso:lemma:projection} to obtain a
  zero-dimensional pa\-ra\-met\-rization of $\pi_\X\left (Z\right )$.
  Finally, we use the routine ${\sf Discard}$
  of~Lemma~\ref{sec:basicroutinesparam:lemma:discard} to compute a
  ze\-ro-dimensional parametrization of $\pi_\X(Z)-S$.

  First, we establish the degree bound on $\polar(e, 1,
  \Clos{(L)})-S$.
  Note that the degrees of the polynomials in $\G$ and $\Delta$ are at
  most $D'=N(D+k-1)$, since $\G$ and $\Delta$ are minors of size at
  most $N$ of matrices with polynomial entries of degrees at most
  $D+k-1$. By Proposition~\ref{geosolve:prop:variant2}, we deduce that
  each $\zeta_t$, or equivalently each $Z_t$, has degree at most
  $\delta {D'}^{d}$. Then, the finite set $Z$ has degree at most
  $\degQ \delta {D'}^{d}$; the same holds for $\pi_\X(Z)-S$, and thus
  for $\polar(e, 1,\Clos{(L)})-S$. This concludes the proof for our
  degree bounds.

  By differentiating every step in $\Gamma$, we deduce from it a
  straight-line program that computes both $\F$ and its Jacobian
  matrix using $O(N E)$ operations. There are 
  $$t={\binom{N-e-1}{P}} \leq (N-e-1)^{N-e-1-P}\leq N^d$$
  polynomials in $\G$. Using Berkowitz' determinant algorithm (which
  evaluates any minor in $\G$ using $O(N^4)$ steps), we obtain a
  straight-line program $\Gamma'$ evaluating $\F$ and $\G$ of length
  $E'=O( N^{d + 4} + N E)$. As in the previous propositions, we
  evaluate $X_1,\dots,X_e$ at $v_1,\dots,v_e$ in $\Gamma'$; this
  results in a straight-line program $\tilde \Gamma'$ of length $E'$,
  with coefficients in $\A$, for the polynomials $\tilde \F$ and
  $\tilde \G$.  Using Proposition~\ref{geosolve:prop:variant2} we
  deduce that we can run Algorithm ${\sf Solve\_FG}$ with input $q$
  and $\tilde \Gamma'$ in
  $$
  \softO( N^3(tE'+tN+N^3)D''\degQ \delta^2  {D'}^{2d}  )
  $$ operations in $\QQ$, with $D''=\max(D, D')=\max(D, N(D-1+k))$.
  Since $ t \le N^d$, we deduce that $tN\leq N^{d+1}$ and
  $tE'=O(N^{2d+4} + N^{d+1}E)$. Using the obvious inequality $D+k-1 \leq (k+1) D$
  that holds for $k \ge 0$ and $D \ge 1$, 
  and its consequence $D' \le (k+1) D N $, we
  obtain
  $$N^3(tE'+tN+N^3){D''} = O( k (N^{2d+8} + N^{d+5}E ) D) = O (k  N^{2d+8} E D)$$
  and
  $$
  {D'}^{2d}\leq (k+1)^{2d} N^{2d} D^{2d}. 
  $$
Incorporating these inequalities in the above
complexity estimate and using some straightforward simplifications, we
obtain that the cost of the first step is bounded by 
$$
\softO( (k+1)^{2d+1}  N^{4d+8}E D^{2d+1}\degQ^2 \delta^2). 
$$ Denoting by $(q_1,\scrR_1),\dots,(q_s,\scrR_s)$ the
zero-dimensional parametrizations returned by the first step, the
degree estimates given above show that each $\scrR_i$ has degree at
most $\delta {D'}^d$.  We deduce that the cost of applying Algorithm
${\sf Descent}$ to any given pair $(q_i,\scrR_i)$ is
$\softO(N \degQ_i^2 \delta^2 {D'}^{2d})$, with $\degQ_i=\deg(q_i)$;
the total cost adds up to a negligible
$\softO(N \degQ^2 \delta^2 {D'}^{2d})$.  The same estimate holds for
applying Algorithm ${\sf Union}$; for ${\sf Projection}$, the total
cost is $\softO(N^2 \degQ^2 \delta^2 {D'}^{2d})$.

At this stage, we have a zero-dimensional parametrization of
$\pi_\X(Z)$.  Finally, Lemma
\ref{sec:basicroutinesparam:lemma:discard} shows that removing those
points in $Z$ that lie in $S$ can be done in
$\softO(N\max(\degQ\delta {D'}^{d}, \degS)^2)$ operations in $\QQ$;
the extra cost is thus $\softO(N\degS^2)$.  Summing up these
estimates, we obtain the announced cost.


\section{Proof of Proposition~\ref{sec:main:prop:fiber}}
\label{proof:prop:fiber}

In this section, we prove Proposition~\ref{sec:main:prop:fiber}.  Let
us repeat the definition of the main objects it deals with: we
consider a generalized Lagrange system $L=(\Gamma, \scrQ, \scrS)$ of
type $(k, \n, \p, e)$, where $\Gamma$ is a straight-line program of
length $E$ that computes polynomials $\F=(\f, \f_1, \ldots, \f_k)$,
with $\f\subset \QQ[\X]$ and $\f_i\subset \QQ[\X, \L_1, \ldots, \L_i]$
for $1 \le i \le k$. We let $d=N-e-P$, $D$ be the maximum degree of
the polynomials in $\f$ and $\delta=\DF(k, e,\n,\p,D, D-1)$.  We write
$Q=\Zeroes(\scrQ) \subset \C^e$ and $S=\Zeroes(\scrS)\subset \C^n$, as
well as $\degQ=\deg(\scrQ)$ and $\degS = \deg(\scrS)$.

With these definitions, we prove the following:  {\em There exists a probabilistic algorithm ${\sf Fiber}$
  which takes as input a generalized Lagrange system
  $L=(\Gamma,\scrQ,\scrS)$ of type $(k, \n, \p, e)$ and a
  zero-dimensional parametrization $\scrQ''$ of degree $\degQ''$,
  defining a finite set of points $Q''\subset \C^{e+d}$ lying over
  $Q=\Zeroes(\scrQ)$, and which returns either a zero-dimensional
  parametrization with coefficients in $\QQ$ or ${\sf fail}$ using
  $$\softO( N^3(N E+N^3)D {\degQ''}^2 \delta^2 + N\degS^2 )$$
  operations in~$\QQ$, using the notation introduced
  above. If either
  \begin{itemize}
  \item $\Clos{(L)}$ is empty,
\smallskip
  \item or $\fbr(\Clos{(L)}, Q'')$ is finite and $(L; \fbr(\Clos{(L)}, Q''))$
    has a global normal form,
  \end{itemize}
  then in case of success, the output of ${\sf Fiber}$ describes
  $\fbr(\Clos{(L)}, Q'')-S$.  In addition, $\fbr(\Clos{(L)}, Q'')-S$ has
  degree at most $\degQ''\delta$.
}

\begin{lemma}\label{chap:lagrange:computefibers}
  Let $Q \subset \C^e$ be a finite set and let $V \subset \C^n$ and
  $S\subset \C^n$ be algebraic sets lying over $Q$, with $S$ finite. 

  Let further $L=(\Gamma,\scrQ,\scrS)$ be a generalized Lagrange
  system, with $V=\Clos{(L)}$, $Q=Z(\scrQ)$, $S=Z(\scrS)$, $\F$ in
  $\QQ[\X, \L]$ as in Definition \ref{def:GLS} and define $d=N-e-P$.

  Let $\fiber2 \subset \C^{e+d}$ be a finite set lying over $Q$ and
  suppose that $\fbr(V, \fiber2)$ is finite and that $(L; \fbr(V, \fiber2))$ has
  the global normal form property.  Let finally $Z'$ be the isolated
  points of $\fbr(\oreg(\F,Q),\fiber2)$.  Then, $\fbr(V, \fiber2)-S=\pi_\X(Z')-S$.
\end{lemma}
\begin{proof}
  Let $\lcLambda$ be the locally closed set
  $$\fbr(\fbr(V(\F),Q),\fiber2)-\pi_\X^{-1}(S).$$ We first prove that
  $\fbr(V, \fiber2)-S=\pi_\X(\lcLambda)$. Note from the outset that
  $\lcLambda$ can be rewritten as $\lcLambda=\fbr(\Cons(L),\fiber2)$.

  Since there exists a global normal form for $(L; \fbr(V, \fiber2))$
  and $\fbr(V, \fiber2)$ is finite, we can prove as in
  Lemma~\ref{sec:lagrange:prop:critoflagrange} that
  $\fbr(V, \fiber2)-S$ is contained in $\Proj(L)$, and thus that
  $\fbr(V, \fiber2)-S$ is contained in $\fbr(\Proj(L),\fiber2)$. On
  the other hand, $\Proj(L)$ is contained in $V-S$, so that
  $\fbr(\Proj(L),\fiber2)$ is contained in $\fbr(V,\fiber2)-S$; we can
  thus conclude that $\fbr(V, \fiber2)-S=\fbr(\Proj(L), \fiber2)$.  As
  a consequence, we get, as claimed above:
  \begin{eqnarray*}
    \fbr(V, \fiber2)-S &= & \fbr(\Proj(L), \fiber2) \\
    & = & \fbr(\pi_\X(\Cons(L)), \fiber2)\\ 
    & = & \pi_\X(\fbr(\Cons(L), \fiber2))\\ 
    & = & \pi_\X(\lcLambda).
  \end{eqnarray*}
  To conclude, it will thus be enough to prove that
  $\lcLambda=Z'-\pi_\X^{-1}(S)$. We start by proving that proving that
  $\lcLambda$ is finite and that $\jac(\F,e)$ has full rank $P$ at
  every point in~$\lcLambda$.

  Using again the global normal form property, one can apply
Proposition~\ref{sec:lagrange:propertyP}, to deduce that
  $\fbr(\Cons(L), \fiber2)$ is in one-to-one correspondence with
  $\fbr(\Proj(L), \fiber2)$. Since
  $\fbr(\Proj(L), \fiber2)=\fbr(V, \fiber2)-S$, and $\fbr(V,\fiber2)$
  is finite by assumption, we deduce that
  $\lcLambda=\fbr(\Cons(L), \fiber2)$ is finite.  Using again
  Proposition~\ref{sec:lagrange:propertyP}, we also conclude that
  $\jac(\F, e)$ has maximal rank at any point in $\Cons(L)$ and thus
  in particular at every point in $\lcLambda$; our claims above are
  thus proved.

  As in the proof of Lemma~\ref{chap:lagrange:defcrit}, the latter fact implies that
  we can rewrite $\lcLambda$ as
  $\lcLambda=\fbr(\oreg(\F,Q),\fiber2)-\pi_\X^{-1}(S)$, and the fact that
  $\lcLambda$ is finite allows us to prove that
  $\lcLambda=Z'-\pi_\X^{-1}(S)$, where $Z'$ is the set of isolated
  points of $\fbr(\oreg(\F,Q),\fiber2)$.
\end{proof}

  As in the previous propositions, we start by checking whether
  $\Clos{(L)}$ is empty, using algorithm ${\sf IsEmpty}$; the cost is
  $\softO(N^3(E+N^3) (D+k) \degQ \delta^2 + N \degQ^2 \delta^2 + N
  \degS^2)$.
  If $\Clos{(L)}$ is empty, we return the zero-dimensional
  parametrization that defines the empty set, and we are done.

  Else, we can assume that $\fbr(\Clos{(L)}, Q'')$ is finite and that
  $(L; \fbr(\Clos{(L)}, Q''))$ has a global normal form. We are thus
  under the assumptions of Lemma~\ref{chap:lagrange:computefibers}. If
  we define as in that lemma the set $Z' \subset \C^{N}$ as the set
  of isolated points of $\fbr(\oreg(\F,Q),Q'')$, then that lemma shows
  that $\fbr(\Clos{(L)}, Q'')-S=\pi_\X(Z')-S$. Because $Q''$ lies over
  $Q$, the set $\fbr(\oreg(\F,Q),Q'')$ can be rewritten as the set of
  all points in $V(\F)$ that lie over $Q''$ and at which $\jac(\F,e)$
  has full rank $P$.

  Let us write $\scrQ''=((q',v'_1,\dots,v'_{e+d}),\pollambda')$, and
  define the product of fields $\A'=\QQ[T]/\langle q'\rangle$, as well
  as the polynomials $\bar \F=\F(v'_1,\dots,v'_e,X_{e+1},\dots,X_{N})$
  in $\A'[X_{e+1},\dots,X_{N}]$. We also define the polynomials $\bar
  \G=(\bar G_{e+1},\dots,\bar G_{e+d})$, with, for all $i$, $\bar G_i
  = X_i - v'_i \in \A'[X_{e+1},\dots,X_{N}]$.  For a root $\roottau$ of
  $q'$, let us then write $\zeta'_\roottau \subset \C^{N-e}$ for the set
  of isolated points of $\oreg(\bar \F_\roottau)\cap V(\bar \G_\roottau)$, and
  write $Z'_\roottau = (v'_1(\roottau),\dots,v'_e(\roottau)) \times \zeta'_\roottau
  \subset \C^{N}$.  Then, using the last remark in the previous
  paragraph, one verifies that $Z'$ is the disjoint union of the sets
  $Z'_\roottau$, for $\roottau$ a root of $q'$.

  Since all polynomials $\bar \G$ have degree $1$,
  Proposition~\ref{geosolve:prop:variant2} applied to $\bar \F$ and
  $\bar \G$ implies that each $\zeta'_\roottau$ has degree at most
  $\delta$; this is thus also the case for the sets $Z'_\roottau$, so that
  $Z'$ has degree at most $\degQ'' \delta$. This implies that the same
  inequality also holds for $\fbr(\Clos{(L)}, Q'')-S$, as claimed.

  To compute a zero-dimensional parametrization encoding
  $\fbr(\Clos{(L)}, Q'')-S$, we first call the routine ${\sf Solve\_FG}$
  of Proposition~\ref{geosolve:prop:variant2} with input $q'$ and a
  straight-line program that evaluates $\bar \F$ and $\bar \G$; this
  outputs zero-dimensional parametrizations over $\A'$ for the sets
  $(\zeta'_\roottau)_{q'(\roottau)=0}$, of the form
  $(q'_1,\scrR_1),\dots,(q'_s,\scrR_s)$; each $\scrR_i$ has the form
  $\scrR_i=((r_i,w_{i,e+1},\dots,w_{i,N}),\linearmu_i)$.

  We continue as in the previous proposition: we define the
  zero-dimensional parametrizations $\scrR'_i=((r_i,v'_1 \bmod
  q'_i,\dots,v'_e\bmod q'_i,w_{i,e+1},\dots,w_{i,N}),\linearmu_i)$, so that
  $(q'_1,\scrR'_1),\dots,(q'_s,\scrR'_s)$ are zero-dimensional
  parametrizations over $\A'$ for the sets $(Z'_\roottau)_{q'(\roottau)=0}$.

  Using Algorithms ${\sf Descent}$ from Lemma~\ref{lemma:descent0} and
  ${\sf Union}$ from Lemma~\ref{sec:main:lemma:union}, we obtain a
  zero-dimensional parametrization $\scrR'$ with coefficients in $\QQ$
  that defines the union $Z'$ of these sets. Next, we use routine
  ${\sf Projection}$ of~Lemma~\ref{sec:posso:lemma:projection} to
  obtain a zero-dimensional parametrization of
  $\pi_\X\left (Z'\right )$, and ${\sf Discard}$ of
  Lemma~\ref{sec:basicroutinesparam:lemma:discard} to compute a
  zero-dimensional parametrization of $\pi_\X(Z')-S$.

From the straight line program $\Gamma$ for $\F$, we can deduce a
straight-line program $\bar \Gamma$ over $\A'$ for both $\bar \F$ and
$\bar \G$: we substitute as usual $X_1,\dots,X_e$ by
$v'_1,\dots,v'_e$, and we add $O(N)$ operations that compute the
equations $X_i-v'_i$, for $i=e+1,\dots,e+d$.  Since all polynomials in
$\bar \F$ and $\bar \G$ have degree at most $D$, and since $\bar \G$
contains at most $N$ polynomials, the cost given by
Proposition~\ref{geosolve:prop:variant2} is $\softO( N^3(N
E+N^3)D \degQ'' \delta^2 )$ operations in $\QQ$.

Because all parametrizations $\scrR_i$ have degree at most $\delta$,
the cost of applying ${\sf Descent}$ and ${\sf Union}$ is
$\softO(N {\degQ''}^2 \delta^2)$, and the cost of applying
${\sf Projection}$ is $\softO(N^2 {\degQ''}^2 \delta^2)$. Applying
${\sf Discard}$ takes $\softO(N \max({\degQ''} \delta, \degS)^2)$
operations in $\QQ$ at most which is boun\-ded by
$\softO(N ({\degQ''} \delta+ \degS)^2)$.  Summing up the costs of all
these steps yields the announced result.



\section{Proof of Proposition~\ref{prop:correctnessglobal}}\label{chap:correctness}

This section is devoted to prove of
Proposition~\ref{prop:correctnessglobal}, which establishes the
correctness of algorithm ${\sf MainRoadmapLagrange}$. In
Subsection~\ref{ssec:correctness}, we defined a binary tree $\scrT$
that describes the trace of algorithm ${\sf RoadmapRec}$, with nodes
denoted by $\nodetau$. We reuse this construction for
Proposition~\ref{prop:correctnessglobal}, whose statement is as
follows.

{\em   
  Consider polynomials $\f=f_1,\dots,f_p$ in $\QQ[X_1,\dots,X_n]$,
  given by a straight-line program $\Gamma$, that define a reduced
  regular sequence.

  Suppose that $V=V(\f) \subset \C^n$ has finitely many singular
  points and that $V(\f)\cap \R^n$ is bounded. Consider also a
  zero-dimensional parametrization $\scrC_0$ that describes a finite
  set $C_0 \subset \C^n$.

  Suppose that the matrices $(\mA_\nodetau)_{\nodetau  \text{~internal node of~} \scrT}$
  satisfy the assumptions of Theorem~\ref{THEO:MAINABSTRACT}. Then,
  there exists a family of non-empty Zariski open sets
  $\scrIopen_\nodetau\subset \C^{P_\nodetau}$, for $\nodetau$ 
  an internal node of $\scrT$, such that the following holds.

  Consider vectors $(\u_\nodetau)_{\nodetau \text{~internal node of~}
    \scrT}$, with $\u_\nodetau$ in $\QQ^{P_\nodetau}$ for all
  $\nodetau$.  If, for all internal nodes $\nodetau$ of $\scrT$,
  $\u_\nodetau$ is in $\scrIopen_\nodetau$, $\mA_\nodetau$
and $\u_\nodetau$ are used in 
  the corresponding recursive call of ${\sf RoadmapRecLagrange}$, 
 and if all calls to
  subroutines such as ${\sf Union}$, ${\sf Projection}$, ${\sf W_1}$,
  ${\sf Lift}$ are successful, then ${\sf MainRoadmapLagrange}(\Gamma,
  \scrC_0)$ returns a roadmap of $(V, C_0)$.
 }

\medskip

The algorithm ${\sf MainRoadmapLagrange}$ performs a call to ${\sf
  RoadmapRecLagrange}$, just as the abstract algorithm ${\sf MainRoadmap}$
does to ${\sf RoadmapRec}$. We already established correctness of
${\sf RoadmapRec}$ through Theorem~\ref{THEO:MAINABSTRACT}, where we
defined the Zariski open sets $\scrOpen_\nodetau\subset \GL(n,
e_\nodetau)$ for $\nodetau$ an internal node of $\scrT$.

The strategy of our proof of correctness for ${\sf
  RoadmapRecLagrange}$ is then to prove that it computes the same
objects as ${\sf RoadmapRec}$, assuming in the whole section that we
take $\dalgo=\lfloor(d+3)/2\rfloor$.  We prove that this claim holds
if $\mA_\nodetau$ is in $\scrOpen_\nodetau$ for all internal nodes
$\nodetau$ of $\scrT$, and if the vector $\u_\nodetau$ is
well-chosen. As we previously did, we proceed by induction on the
depth of $\nodetau$. We will introduce an induction assumption which
is the counterpart of the induction assumption $\sfTa$ given in
Subsection \ref{chap:abstractalgo:objects}; proving this new property
at a node $\nodetau$ will now depend on the choice of vector
$\u_\nodetau$.


\subsection{Basic constructions}

Let us start by reviewing the construction of the objects attached to
the binary tree $\scrT$. Let $\Gamma$ and $\scrC_0$ be the input of
${\sf MainRoadmapLagrange}$, where $\Gamma$ computes polynomials
$\f=(f_1,\dots,f_p)$ in $\QQ[X_1, \ldots, X_n]$, that define
$V=V(\f)\subset \C^n$. We suppose that $\f$ forms a reduced regular
sequence, that $\sing(V)$ is finite and $V \cap \R^n$ is bounded.  Let
finally $d=n-p$ and $\bpsi$ be the atlas of $(V, \bullet, \sing(V))$
given by $\bpsi=(\psi)$, with $\psi=(1,\f)$.

As in ${\sf MainRoadmapLagrange}$, we define
$$\scrS={\sf SingularPoints}(\Gamma)\qquad \text{ and }\qquad \scrC =
{\sf Union}(\scrC_0,\scrS),$$ so that $\Gamma$ and $\scrC$ are the
input to the recursive algorithm ${\sf RoadmapRecLagrange}$; thus, we
have that $C=\Zeroes(\scrC)$ satisfies $C = C_0 \cup \sing(V)$, with
$C_0=\Zeroes(\scrC_0)$. Accordingly, on input $(V,C_0)$, algorithm
${\sf MainRoadmap}$ indeed calls ${\sf RoadmapRec}$ with input $V$ and
$C$.

Each node $\nodetau$ of the tree $\scrT$ is labelled by integers
$(d_\nodetau, e_\nodetau)$.  Let now $(\mA_\nodetau)_{\nodetau
  \text{~internal node of~} \scrT}$ be a family of matrices, with
$\mA_\nodetau$ in $\GL(n, e_\nodetau, \QQ)$ for all $\nodetau$. 
 We
saw in the proof of Theorem~\ref{THEO:MAINABSTRACT} that there exist
non-empty Zariski open sets $\scrOpen_\nodetau\subset \GL(n,
e_\nodetau)$ for all internal nodes $\nodetau$ of $\scrT$, with the
following properties: Suppose that $\mA_{\nodetau}$ belongs to
$\scrOpen_{\nodetau}$ for all internal nodes $\nodetau$ of
$\scrT$. Then, we associate to each node $\nodetau$ of $\scrT$ the
objects
$(V_{\nodetau},Q_\nodetau,S_\nodetau,C_\nodetau,\bpsi_\nodetau)$,
which satisfy the following:
\begin{enumerate}
\smallskip
\item[${\sf t_{1}.}$] $Q_\nodetau$ is a finite subset of $\C^{e_\nodetau}$ and $S_\nodetau,C_\nodetau$
  are finite subsets of $\C^n$;
\smallskip
\item[${\sf t_{2}.}$] $V_\nodetau,S_\nodetau,C_\nodetau$ lie over $Q_\nodetau$;
\smallskip
\item[${\sf t_{3}.}$] either $V_\nodetau$ is empty, or $V_\nodetau$ lies
  over $Q_\nodetau$ and is $d_\nodetau$-equidimensional with finitely many
  singular points, in which case $\bpsi_\nodetau$ is an atlas of
  $(V_\nodetau,Q_\nodetau,S_\nodetau)$;
\smallskip
\item[${\sf t_{4}.}$] the inclusion $S_\nodetau \subset C_\nodetau$ holds.
\end{enumerate}
In addition, in these conditions, algorithm ${\sf MainRoadmap}$
returns a roadmap of its input $(V,C_0)$. In algorithm ${\sf
  RoadmapRec}$, we also defined algebraic sets
$B_\nodetau,Q''_\nodetau,C'_\nodetau,C''_\nodetau,W_\nodetau=\polar(e_\nodetau,\dalgo_\nodetau,V_\nodetau^{\mA_\nodetau})$
and $V''_\nodetau=\fbr(V_\nodetau^{\mA_\nodetau}, Q''_\nodetau)$.

In what follows, as in the statement of
Proposition~\ref{PROP:CORRECTNESSGLOBAL}, we assume that
$\mA_{\nodetau}$ indeed belongs to $\scrOpen_{\nodetau}$ for all
internal nodes $\nodetau$ of $\scrT$, so that the above conclusions
hold.

For the analysis of ${\sf RoadmapRecLagrange}$, we now associate to
each node $\nodetau$ of $\scrT$ a family of algebraic sets
$\mathscr{Y}_{\nodetau}$, all contained in $V_\nodetau$; this will allow us to 
specify some global normal form properties that will be needed below
(see property ${\sf t_{3}'}$).

We start by leaves, since it is then straightforward: for these nodes,
$\scrY_\nodetau$ is empty. Consider next two internal nodes $\nodetau,\kappa$
in $\scrT$, such that $\kappa$ is one of the descendants of $\nodetau$ (we
count $\nodetau$ as one of its own descendants), and let
$\nodetau_1=\nodetau,\dots,\nodetau_m=\kappa$ be the path from $\nodetau$ to $\kappa$
in $\scrT$. Let further $\mB_{\nodetau,\kappa}=\mA_{\nodetau_1} \cdots
\mA_{\nodetau_{m}} \in \GL(n,\QQ)$ be the product of all matrices from
$\nodetau$ to $\kappa$, so that applying the inverse of
$\mB_{\nodetau,\kappa}$ puts the geometric objects associated to $\kappa$
in the coordinate system considered at $\nodetau$. Then, we define
$$\mathscr{Y}_{\nodetau,\kappa} = \left \{
W_\kappa^{\mB_{\nodetau,\kappa}^{-1}}, \quad
\polar(e_\kappa,1,W_\kappa)^{\mB_{\nodetau,\kappa}^{-1}},\quad
{\fbr(W_\kappa,Q''_\kappa)}^{\mB_{\nodetau,\kappa}^{-1}}, \quad
{V''_\kappa}^{\mB_{\nodetau,\kappa}^{-1}} \right \}. $$ Finally, for a
given node $\nodetau$ of $\scrT$, we denote by $\mathscr{Y}_{\nodetau}$ the
union of all $\mathscr{Y}_{\nodetau,\kappa}$, for $\kappa$ a descendant of
$\nodetau$. By construction, $\mathscr{Y}_\nodetau$ is thus a finite family of
algebraic sets, that are all contained in~$V_\nodetau$. It is important to
note that the sets $\mathscr{Y}_\nodetau$ only depend on the input $(V,C)$
and the changes of variables $\mA_\nodetau$. Note as well that for an
internal node $\nodetau$, $\scrY_\nodetau$ is the union of
\begin{itemize}
\item the sets $ W_\nodetau^{\mA_{\nodetau}^{-1}},\
  \polar(e_\nodetau,1,W_\nodetau)^{\mA_{\nodetau}^{-1}},\
  {\fbr(W_\nodetau,Q''_\nodetau)}^{\mA_{\nodetau}^{-1}}, \ 
  {V''_\nodetau}^{\mA_{\nodetau}^{-1}}$,
\smallskip
\item the sets $\scrY_{\nodetau'}^{\mA_\nodetau^{-1}}$ and
  $\scrY_{\nodetau''}^{\mA_\nodetau^{-1}}$, where $\nodetau'$ and $\nodetau''$ are the
  children of $\nodetau$.
\end{itemize}
In particular, if $\nodetau$ is an internal node of $\scrT$ and $\nodetau'$,
$\nodetau''$ are its children, then $\mathscr{Y}_{\nodetau'}$ and
$\mathscr{Y}_{\nodetau''}$ are both contained in
$\mathscr{Y}_\nodetau^{\mA_\nodetau}$.


\subsection{Genericity assumptions}

The computations performed by ${\sf RoadmapRecLagrange}$ on input
$(\Gamma,\scrC)$ can be described using a binary tree; as one should
expect, we will verify below that this is the same tree $\scrT$ as for
${\sf RoadmapRec}$. We will indeed associate to each node $\nodetau$
of the tree $\scrT$ a type
$(k_\nodetau,\n_\nodetau,\p_\nodetau,e_\nodetau)$, defining
$k_\nodetau$, $\n_\nodetau$ and $\p_\nodetau$ inductively
($e_\nodetau$ was defined before); we will then see that, when the
random choices made in the algorithm are lucky, tracing ${\sf
  RoadmapRecLagrange}$ amounts to associating to each $\nodetau \in
\scrT$ a generalized Lagrange system $L_\nodetau$ of type
$(k_\nodetau,\n_\nodetau,\p_\nodetau,e_\nodetau)$.

Let us first define the integers $k_\nodetau$, $\n_\nodetau$ and $\p_\nodetau$.
At the root $\rho$, we set $k_\rho=0$, $\n_\rho=(n)$,
$\p_\rho=(p)$. Suppose then that $\nodetau$ has type
$(k_\nodetau,\n_\nodetau,\p_\nodetau,e_\nodetau)$, with
$\n_\nodetau=(n_{\nodetau},n_{\nodetau,1},\dots,n_{\nodetau,k_\nodetau})$ and
$\p_\nodetau=(p_{\nodetau},p_{\nodetau,1},\dots,p_{\nodetau,k_\nodetau})$, and write as
usual
$$N_\nodetau = n_{\nodetau} + n_{\nodetau,1}+\dots + n_{\nodetau,k_\nodetau}
\quad\text{and}\quad 
P_\nodetau = p_{\nodetau} + p_{\nodetau,1}+\dots + p_{\nodetau,k_\nodetau}.$$
Then, if $\nodetau$ is an internal node of $\scrT$, we define the types
at his two children as follows:
\begin{itemize}
\item the left child
  $\nodetau'$ has type $(k_{\nodetau'},\n_{\nodetau'},\p_{\nodetau'},e_{\nodetau'})$,
with
$$
\begin{array}{c}
k_{\nodetau'}=k_\nodetau+1, \quad
\n_{\nodetau'} = (n_{\nodetau},n_{\nodetau,1},\dots,n_{\nodetau,k_\nodetau},P_\nodetau),\\    
\p_{\nodetau'} = (p_{\nodetau},p_{\nodetau,1},\dots,p_{\nodetau,k_\nodetau},N_\nodetau-e_\nodetau-\dalgo_\nodetau+1)
\end{array}
$$
with $\dalgo_\nodetau=\lfloor(d_\nodetau+3)/2\rfloor$; recall that in this case, we defined
$d_{\nodetau'}=\dalgo_\nodetau-1$ and $e_{\nodetau'} = e_\nodetau$;
\smallskip
\item the right child $\nodetau''$ has type $(k_{\nodetau''},\n_{\nodetau''},\p_{\nodetau''},e_{\nodetau''})$,
with
$$k_{\nodetau''}=k_\nodetau, \quad \n_{\nodetau''} = \n_\nodetau, \quad \p_{\nodetau''} = \p_\nodetau;$$
in this case, we defined previously $d_{\nodetau''}= d_\nodetau-(\dalgo_\nodetau-1)$ and $e_{\nodetau''} = e_\nodetau + \dalgo_\nodetau-1$.
\end{itemize}
In particular, we deduce inductively that, for all $\nodetau$,
$n_{\nodetau}=n$ and $p_{\nodetau}=p$ hold, and that the indices $d_\nodetau$ and
$e_\nodetau$ associated to node $\nodetau$ satisfy $d_\nodetau =
N_\nodetau-e_\nodetau-P_\nodetau.$

As for algorithm ${\sf RoadmapRec}$, the node corresponding to the
recursive call at Step~\ref{rmplag:step:8} is the left child $\nodetau'$,
and the node corresponding to the recursive call at
Step~\ref{rmplag:step:11} is the right child $\nodetau''$. 

Consider now vectors $(\u_\nodetau)_{\nodetau \text{~internal node
    of~} \scrT}$, with $\u_\nodetau$ in $\QQ^{P_\nodetau}$ for all
$\nodetau$. Proof of existence of the Zariski open sets
$(\scrIopen_\nodetau)$ will be done by induction on the node
$\nodetau$ of $\scrT$, with the following induction assumption.
\begin{enumerate}
\item [${\sfTpa:}$] There exists a family of non-empty Zariski open sets
  $(\scrIopen_\nodeoherletter)_{\nodeoherletter\text{~proper ancestor of~} \nodetau}$,
  with $\scrIopen_{\nodeoherletter}$ in $\C^{P_\nodeoherletter}$
 for all $\nodeoherletter$,
  and with the following properties.
  Suppose that $\u_{\nodeoherletter}$ belongs to $\scrIopen_{\nodeoherletter}$ for
  all proper ancestors  $\nodeoherletter$ of $\nodetau.$ 
  Then to the node $\nodetau$ are associated the objects
  $(L_{\nodetau},\scrC_\nodetau)$, such that:
  \begin{enumerate}
  \item[${\sf t'_{1}.}$] $L_\nodetau=(\Gamma_\nodetau, \scrQ_\nodetau, \scrS_\nodetau)$ is a generalized
    Lagrange system of type $(k_\nodetau,\n_\nodetau,\p_\nodetau,e_\nodetau)$ and $\scrC_\nodetau$ is a zero-dimensional parametrization;
\smallskip
  \item[${\sf t'_{2}.}$] $V_\nodetau=\Clos{(L_\nodetau)}$, $Q_\nodetau=\Zeroes(\scrQ_\nodetau)$, $S_\nodetau=\Zeroes(\scrS_\nodetau)$
    and $C_\nodetau=\Zeroes(\scrC_\nodetau)$;
  \end{enumerate}
\smallskip
  and, if $V_\nodetau$ is not empty, then 
\smallskip
  \begin{enumerate}
  \item[${\sf t'_{3}.}$] $(L_\nodetau; \scrY_\nodetau)$ admits a global normal form $\bphi_\nodetau$;
\smallskip
  \item[${\sf t'_{4}.}$] the atlas of $(V_\nodetau,Q_\nodetau,S_\nodetau)$ associated with $\bphi_\nodetau$ is $\bpsi_\nodetau$.
  \end{enumerate}
\end{enumerate}

We claim that the root $\rho$ of $\scrT$ satisfies ${\sfTpa}$.
Indeed, following algorithm ${\sf MainRoadmapLagrange}$, we take
$L_\rho=(\Gamma,(\,),\scrS)$ and $\scrC_\rho= \scrC$. Then,
Proposition~\ref{sec:lagrange:prop:initnormalform} implies that
${\sfTpa}$ holds at the root $\rho$ of $\scrT$, with global normal
form $\bphi_\rho=((1,1,\f,\f))$.

Suppose now that that an internal node $\nodetau$ satisfies
${\sfTpa}$. We define the subset $\scrIopen_\nodetau$ of
$\C^{P_\nodetau}$ as follows:
\begin{itemize}
\item If $\u_{\nodeoherletter}$ belongs to $\scrIopen_{\nodeoherletter}$ for all
  proper ancestors $\nodeoherletter$ of $\nodetau$, and if $V_\nodetau$ is
  empty, we take $\scrIopen_\nodetau=\C^{P_\nodetau}$.

\smallskip

\item If $\u_{\nodeoherletter}$ belongs to
  $\scrIopen_{\nodeoherletter}$ for all proper ancestors
  $\nodeoherletter$ of $\nodetau$, and if $V_\nodetau$ is not empty,
  the sets $V_\nodetau,Q_\nodetau,S_\nodetau$, 
  the atlas $\bpsi_\nodetau$, the integer $\dalgo_\nodetau$, the
  change of variable $\mA_\nodetau$, the generalized Lagrange system
  $L_\nodetau$, its normal form $\bphi_\nodetau$ and the algebraic
  sets $\mathscr{Y}_\nodetau$ satisfy the assumptions of
  Proposition~\ref{sec:lagrange:prop:transfer:polar}, so that we can
  let $\scrIopen_\nodetau$ be the Zariski open set
  $\mathscr{I}(L_\nodetau,\bphi_\nodetau,\mA_\nodetau,\scrY_\nodetau)\subset
  \C^{P_\nodetau}$ defined in that proposition. Remark that the
  assumptions of this proposition require that
  $W_\nodetau^{\mA_\nodetau^{-1}}$ belong to $\mathscr{Y}_\nodetau$;
  this is the case by construction.
\smallskip

\item Else, we take $\scrIopen_\nodetau=\C^{P_\nodetau}$.
\end{itemize}

\begin{lemma}\label{sec:abstractalgo:lemma:correctnessH}
  If $\nodetau$ is an internal node that satisfies $\sfTpa$ and if the
  calls to all subroutines ${\sf Union}$, ${\sf Projection}$, ${\sf
    W}_1$, ${\sf Fiber}$, ${\sf Lift}$ are successful, the children
  $\nodetau'$ and $\nodetau''$ of $\nodetau$ satisfy~$\sfTpa$.
\end{lemma}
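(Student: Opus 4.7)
My plan is to verify each of the four items ${\sf h'_{0,1}}$--${\sf h'_{0,4}}$ for each child, working separately with $\tau'$ and $\tau''$. The type assertion ${\sf h'_{0,1}}$ is immediate from Lemmas~\ref{lemma:GLS:typeW} and~\ref{lemma:GLS:typeF}, together with the inductive definition of $(k_\tau,\n_\tau,\p_\tau,e_\tau)$. The case where $V_\tau$ is empty is trivial: by Lemma~\ref{sec:abstractalgo:lemma:correctnessA2}, both $V_{\tau'}$ and $V_{\tau''}$ are empty, so ${\sf h'_{0,3}}$ and ${\sf h'_{0,4}}$ are vacuous and only the set-theoretic identifications in ${\sf h'_{0,2}}$ need be checked. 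I can therefore assume $V_\tau \ne \emptyset$.

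For the left child, I would apply Proposition~\ref{sec:lagrange:prop:transfer:polar} to $L_\tau$, the atlas $\bpsi_\tau$, the matrix $\mA_\tau$ (which lies in $\mathscr{H}(\bpsi_\tau,V_\tau,Q_\tau,S_\tau,\dalgo_\tau)$ by ${\sf H_1}$), the family $\scrY_\tau$ and the global normal form $\bphi_\tau$ from ${\sf h'_{0,3}}$; observe that $W_\tau^{\mA_\tau^{-1}}$ is in $\scrY_\tau$ by construction, so the hypotheses of the proposition are met. By ${\sf H'_1}$, $\u_\tau$ lies in the open set $\mathscr{I}(L_\tau,\bphi_\tau,\mA_\tau,\scrY_\tau)$, and the proposition yields $\Clos(L') = W_\tau = V_{\tau'}$ together with a global normal form for $(L';\scrY_\tau^{\mA_\tau})$ whose associated atlas is ${\cal W}(\bpsi_\tau^{\mA_\tau},V_\tau^{\mA_\tau},Q_\tau,S_\tau^{\mA_\tau},\dalgo_\tau) = \bpsi_{\tau'}$. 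Since $\scrY_{\tau'}\subset \scrY_\tau^{\mA_\tau}$, this is also a global normal form for $(L';\scrY_{\tau'})$, giving ${\sf h'_{0,3}}$ and ${\sf h'_{0,4}}$.

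For the right child, I would apply Proposition~\ref{sec:lagrange:prop:transfer:fiber} to $L_\tau$ with the same data, after first identifying the zero-dimensional parametrizations $\scrQ''$ and $\scrS''$. Using Proposition~\ref{sec:main:critical} applied to $L'$, ${\sf W}_1(L')$ returns $W(e_\tau,1,W_\tau) - S_\tau^{\mA_\tau}$; the inclusions $\sing(W_\tau) \subset \sing(V_\tau^{\mA_\tau}) \subset S_\tau^{\mA_\tau} \subset C_\tau^{\mA_\tau}$ (from Lemma~\ref{sec:atlas:lemma:polar2} and ${\sf h_{0,4}}$) imply that $Z(\scrB) = K(e_\tau,1,W_\tau) \cup C_\tau^{\mA_\tau} = B_\tau$, so $Z(\scrQ'') = Q''_\tau$. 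A parallel computation, using Proposition~\ref{sec:main:prop:fiber} applied to ${\sf Fiber}(L',\scrQ'')$, yields $Z(\scrS'') = \fbr(S_\tau^{\mA_\tau}\cup W_\tau, Q''_\tau) = S_{\tau''}$. Proposition~\ref{sec:lagrange:prop:transfer:fiber} (whose hypotheses follow again from ${\sf H_1}$ and ${\sf h'_{0,3}}$, noting ${V''_\tau}^{\mA_\tau^{-1}}\in \scrY_\tau$) then gives $\Clos(L'') = V''_\tau = V_{\tau''}$ together with a global normal form for $(L'';\scrY_\tau^{\mA_\tau})$, hence for $(L'';\scrY_{\tau''})$, whose atlas coincides with $\bpsi_{\tau''}$.

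The one non-routine bookkeeping step is the identification $Z(\scrC') = C_{\tau'}$ and $Z(\scrC'') = C_{\tau''}$: the subroutine ${\sf Fiber}$ outputs $\fbr(W_\tau,Q''_\tau) - S_\tau^{\mA_\tau}$ rather than the set $\fbr(W_\tau,Q''_\tau)$ used in {\sf RoadmapRec}, and similarly in the projection step above. The key observation, which is the main technical obstacle and will be invoked several times, is that the discrepancy is always a subset of $S_\tau^{\mA_\tau}$, and hence of $C_\tau^{\mA_\tau}$ by the inclusion ${\sf h_{0,4}}$; consequently, taking the union with $\scrC^{\mA_\tau}$ reabsorbs the missing points, yielding $Z(\scrC') = C_\tau^{\mA_\tau}\cup\fbr(W_\tau,Q''_\tau) = C'_\tau$ and, after applying ${\sf Lift}$, $Z(\scrC'') = \fbr(C'_\tau,Q''_\tau) = C''_\tau$. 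This completes the verification of ${\sf h'_{0,2}}$ for both children.
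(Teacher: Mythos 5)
Your proposal is correct and takes essentially the same route as the paper: the same item-by-item identification of each object computed by {\sf RoadmapRecLagrange} with its counterpart in {\sf RoadmapRec}, the same appeals to Propositions~\ref{sec:lagrange:prop:transfer:polar} and~\ref{sec:lagrange:prop:transfer:fiber} for the two children, and the same key observation that the sets returned by ${\sf W}_1$ and ${\sf Fiber}$ differ from the abstract ones only by subsets of $S_\tau^{\mA_\tau}\subset C_\tau^{\mA_\tau}$, which the unions with $\scrC_\tau^{\mA_\tau}$ reabsorb. The only point you dismiss a bit quickly is the case $V_\tau=\emptyset$, where showing $\Clos(L''_\tau)=\emptyset$ requires a short argument that $\fbr(V(\F_\tau^{\mA_\tau}),Q''_\tau)\subset\pi_\X^{-1}(Z(\scrS''_\tau))$ rather than being purely formal, but this is routine and does not affect the validity of your approach.
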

\begin{proof}
  To prove $\sfTpa$ at either $\nodetau'$ or $\nodetau''$, we assume
  that $\u_{\nodeoherletter}$ belongs to $\scrIopen_{\nodeoherletter}$
  for all ancestors $\nodeoherletter$ of $\nodetau$, including
  $\nodetau$ itself. In particular, we are in one of the first two
  cases in the previous case discussion.

  Because $\nodetau$ is an internal node, we know that we are not in
  the case $d \le 1$, so that we need only consider steps from 2 on in
  the algorithm. In all that follows, we assume that the calls to all
  subroutines ${\sf Union}$, ${\sf Projection}$, ${\sf W}_1$, ${\sf
    Fiber}$, ${\sf Lift}$ are successful. First, we prove that all
  objects computed by ${\sf RoadmapRecLagrange}$ match the quantities
  defined in ${\sf RoadmapRec}$.
  \begin{itemize}
  \item $V_\nodetau=\Clos{(L_\nodetau)}$, $Q_\nodetau=\Zeroes(\scrQ_\nodetau)$, $S_\nodetau=\Zeroes(\scrS_\nodetau)$
    and $C_\nodetau=\Zeroes(\scrC_\nodetau)$.

\smallskip

    These are true by assumption $\sfTpa$ for $\nodetau$.
\smallskip
  \item $L'_\nodetau$ is a generalized Lagrange system of type
    $(k_{\nodetau'},\n_{\nodetau'},\p_{\nodetau'},e_{\nodetau'})$ such that
    $\Clos{(L'_{\nodetau})}=W_\nodetau$.

\smallskip

    The claim on the type of $L'_\nodetau$ follows from our inductive
    definition of the type, together with Lemma~\ref{lemma:GLS:typeW}.
    The second claim is obtained through a case discussion:

\smallskip

    \begin{itemize}
    \item If $V_\nodetau$ is empty, $V_\nodetau'=W_\nodetau$ is empty as well; on
      the other hand, since $V_\nodetau=\Clos{(L_\nodetau)}$, the construction
      of $L'_\nodetau$ implies that $\Clos{(L'_\nodetau)}$ is empty.
\smallskip
    \item If $V_\nodetau$ is not empty, our assumption on $\u_\nodetau$ shows
      that we can apply the results of
      Proposition~\ref{sec:lagrange:prop:transfer:polar}, which
      implies the claim. In addition, if $W_\nodetau$ is not empty,
      $(L'_\nodetau; \scrY_\nodetau^{\mA_\nodetau}-\{W_\nodetau\})$ admits a global
      normal form, and the associated atlas of
      $(W_\nodetau,Q_\nodetau,S^{\mA_\nodetau}_\nodetau)$ is
      $\atlaspolar(\bpsi_\nodetau^{\mA_\nodetau},V_\nodetau^{\mA_\nodetau},Q_\nodetau,S_\nodetau^{\mA_\nodetau},\dalgo_\nodetau)$,
      that is, $\bpsi_{\nodetau'}$.
    \end{itemize}

\smallskip

  \item ${\sf W}_1(L'_\nodetau)$ is a zero-dimensional parametrization of
    $\polar(e_\nodetau, 1, W_\nodetau)-\Zeroes(\scrS_\nodetau^{\mA_\nodetau}).$

\smallskip

    All we need to do is to verify that the assumptions of
    Proposition~\ref{sec:main:critical} are satisfied, remembering
    that $\Clos{(L'_{\nodetau})}=W_\nodetau$.

\smallskip

    \begin{itemize}
    \item If $W_\nodetau$ is empty, this is clear.
\smallskip
    \item Because $B_\nodetau$ is finite
      (Lemma~\ref{sec:abstractalgo:lemma:correctnessA1}),
      $\Kpolar(e_\nodetau,1,W_\nodetau)=\Kpolar(e_\nodetau, 1, \Clos{({L'_\nodetau})})$ is finite,
      which in turn implies that $\polar(e_\nodetau, 1, \Clos{({L'_\nodetau})})$ is
      finite. The other point to verify is that
      $(L_\nodetau',\polar(e_{\nodetau},1,\Clos{(L'_\nodetau)}))$ has a global normal
      form; this is because $(L'_\nodetau;
      \scrY_\nodetau^{\mA_\nodetau}-\{W_\nodetau\})$ admits a global normal form,
      and $\scrY_\nodetau^{\mA_\nodetau}-\{W_\nodetau\}$ contains
      $\polar(e_{\nodetau},1,\Clos{(L'_\nodetau)})$.
    \end{itemize}

\smallskip

  \item  $\Zeroes(\scrB_\nodetau)=B_\nodetau$.

\smallskip

    Since we know that $\Zeroes(\scrS_\nodetau^{\mA_\nodetau})=S_\nodetau^{\mA_\nodetau}$
    and $\Zeroes(\scrC_\nodetau^{\mA_\nodetau})=C_\nodetau^{\mA_\nodetau}$, we deduce from
    the previous item that $\Zeroes(\scrB_\nodetau)$ is the union of $\polar(e_\nodetau,
    1, W_\nodetau)-S_\nodetau^{\mA_\nodetau}$ and $C_\nodetau^{\mA_\nodetau}$.  Also by
    assumption ${\sfTa}$ on $\nodetau$, $S_\nodetau$ is contained in
    $C_\nodetau$; thus, after applying $\mA_\nodetau$, we deduce that
    $\Zeroes(\scrB_\nodetau)$ is the union of $\polar(e_\nodetau, 1, W_\nodetau)$ and
    $C_\nodetau^{\mA_\nodetau}$.

\smallskip

    Now, we claim that $\sing(W_\nodetau)$ is contained in
    $S_\nodetau^{\mA_\nodetau}$ (and thus in $C_\nodetau^{\mA_\nodetau}$): this is
    obvious if $W_\nodetau$ is empty; else, using
    Lemma~\ref{sec:coro:lemma:singSX}, this is because $W_\nodetau$ is
    $(\dalgo_\nodetau-1)$-equidimensional and $\bpsi_{\nodetau'}$ is an atlas
    of $(W_\nodetau,Q_\nodetau,S_\nodetau^{\mA_\nodetau})$.

\smallskip

    The difference $\Kpolar(e_\nodetau, 1, W_\nodetau)-\polar(e_\nodetau, 1, W_\nodetau)$
    is contained in $\sing(W_\nodetau)$, and thus in
    $C_\nodetau^{\mA_\nodetau}$. As a result, we finally conclude that
    $\Zeroes(\scrB_\nodetau)$ is the union of $\Kpolar(e_\nodetau, 1, W_\nodetau)$ and
    $C_\nodetau^{\mA_\nodetau}$, that is, $B_\nodetau$.

\smallskip

  \item   $\Zeroes(\scrQ''_\nodetau)=Q''_\nodetau$.

\smallskip

    This follows from the previous item, by  projecting on $\C^{e_\nodetau+\dalgo_\nodetau-1}$.

\smallskip

  \item   $\Zeroes(\scrC'_\nodetau)=C'_\nodetau$.

\smallskip

    The right-hand side is equal to $C_\nodetau^{\mA_\nodetau} \cup
    \fbr(W_\nodetau,Q''_\nodetau)$. For the left-hand side, remember that
    $\Zeroes(\scrQ''_{\nodetau})=Q''_{\nodetau}$, and that
    $\Zeroes(\scrC'_\nodetau)=C_\nodetau^{\mA_\nodetau} \cup {\sf
      Fiber}(L'_\nodetau,\scrQ''_\nodetau)$. Let us then verify that the
    assumptions of Proposition~\ref{sec:main:prop:fiber} applied to
    $L'_\nodetau$ and $\scrQ''_\nodetau$ are satisfied, keeping in mind that
    $W_\nodetau=\Clos{(L'_\nodetau)}$:

\smallskip

    \begin{itemize}
    \item If $W_\nodetau$ is empty, this is clear.

\smallskip

    \item If $W_\nodetau$ is not empty, this is because
      $\fbr(W_\nodetau,Q''_\nodetau)$ is finite, and $$(L'_\nodetau,
      \fbr(W_\nodetau,Q''_\nodetau))$$ has the global normal form property
      (because $(L'_\nodetau; \scrY_\nodetau^{\mA_\nodetau}-\{W_\nodetau\})$ admits a
      global normal form, and $\scrY_\nodetau^{\mA_\nodetau}-\{W_\nodetau\}$
      contains $\fbr(W_\nodetau,Q''_\nodetau)$).
   \end{itemize}

\smallskip

    As a result, ${\sf Fiber}(L'_\nodetau,\scrQ''_\nodetau)$ returns a
    zero-dimensional parametrization of
    $\fbr(W_\nodetau,Q''_\nodetau)-S_\nodetau^{\mA_\nodetau}$.  Since we saw above
    that $S_\nodetau^{\mA_\nodetau}$ is contained in $C_\nodetau^{\mA_\nodetau}$, we
    conclude that $C_\nodetau^{\mA_\nodetau} \cup {\sf
      Fiber}(L'_\nodetau,\scrQ''_\nodetau)$ defines $C_\nodetau^{\mA_\nodetau} \cup
    \fbr(W_\nodetau,Q''_\nodetau)$. As was pointed out above, this is enough
    to conclude.

\smallskip

  \item    $\Zeroes(\scrC''_\nodetau)=C''_\nodetau$.

\smallskip

    This follows directly from the specifications of ${\sf Lift}$.

\smallskip

  \item  $\Zeroes(\scrS'_\nodetau)=S_\nodetau^{\mA_\nodetau} \cup \fbr(W_\nodetau,Q''_\nodetau)$.

\smallskip

    This is the same argument as in the proof that  $\Zeroes(\scrC'_\nodetau)=C'_\nodetau$,
    replacing $C_\nodetau^{\mA_\nodetau}$ by $S_\nodetau^{\mA_\nodetau}$.

\smallskip

  \item $\Zeroes(\scrS''_\nodetau)=\fbr(S_\nodetau^{\mA_\nodetau} \cup W_\nodetau,Q''_\nodetau)$.

\smallskip

    Again, this follows from the specifications of ${\sf Lift}$.

\smallskip

  \item $L''_\nodetau$ is a generalized Lagrange system of type
    $(k_{\nodetau''},\n_{\nodetau''},\p_{\nodetau''},e_{\nodetau''})$ such that
    $\Clos{(L''_{\nodetau})}=V''_\nodetau.$

\smallskip

    The claim on the type of $L''_\nodetau$ follows from our inductive
    definition of the type, together with Lemma~\ref{lemma:GLS:typeF}.
    The second claim is obtained through a case discussion:

\smallskip

    \begin{itemize}
    \item If $V_\nodetau$ is empty, then $V''_\nodetau$, which is a section of
      it, is empty as well. Since we have $V_\nodetau=\Clos{(L_\nodetau)}$, we
      deduce from Definition~\ref{def:CZPC} that
      $\fbr(V(\F_\nodetau),Q_\nodetau)$ is contained in $\pi_\X^{-1}(S_\nodetau)$,
      where $\F_\nodetau$ are the polynomials computed by
      $\Gamma_\nodetau$. We will now prove that the definition of
      $L''_\nodetau= \Fiberlag(L_\nodetau^{\mA_\nodetau},
      \scrQ''_\nodetau,\scrS''_\nodetau)$ given
      in~\ref{sec:lagrange:notation:fiber} implies that
      $\Clos{(L''_\nodetau)}$ is empty, which is what we have to establish.

\smallskip

      Since we saw that $\Zeroes(\scrQ''_\nodetau)=Q''_\nodetau$, our claim is
      equivalent to $$\fbr(V(\F^{\mA_\nodetau}_\nodetau),Q''_\nodetau)$$ being contained in
      $\pi_\X^{-1}(\Zeroes(\scrS''_\nodetau))$, where we saw that
      $$\Zeroes(\scrS''_\nodetau)=\fbr(S_\nodetau^\mA \cup W_\nodetau,Q''_\nodetau).$$

\smallskip

      By assumption ${\sf t_{2}}$ for $\nodetau$, $Q''_\nodetau$ lies over
      $Q_\nodetau$. Take $$(\x,\bell)\in
      \fbr(V(\F^{\mA_\nodetau}_\nodetau),Q''_\nodetau).$$ Then,
      $(\x^{\mA_\nodetau^{-1}},\bell)$ is in $\fbr(V(\F_\nodetau),Q''_\nodetau)$.
      Then previous remark shows that $(\x^{\mA_\nodetau^{-1}},\bell)$ is
      in $\fbr(V(\F_\nodetau),Q_\nodetau)$, so that the assumption that $V_\nodetau$
      is empty implies that $\x^{\mA_\nodetau^{-1}}$ is in $S_\nodetau$;
      equivalently, $\x$ is in $S_\nodetau^{\mA_\nodetau}$. Since $\x$ lies over 
      $Q''_\nodetau$, we deduce that $\x$ is in $\fbr(S_\nodetau^\mA,Q''_\nodetau)$,
      and thus in $\Zeroes(\scrS''_\nodetau)$, as claimed.

\smallskip

    \item If $V_\nodetau$ is not empty, the algebraic sets
      $V_\nodetau,Q_\nodetau,S_\nodetau$, the atlas $\bpsi_\nodetau$, the integer
      $\dalgo_\nodetau$, the change of variable $\mA_\nodetau$, the
      parametrizations $\scrQ''_\nodetau$ and $\scrS''_\nodetau$, the generalized
      Lagrange system $L_\nodetau$, its normal form $\bphi_\nodetau$, the
      algebraic sets $\mathscr{Y}_\nodetau$ satisfy the assumptions of
      Proposition~\ref{sec:lagrange:prop:transfer:fiber}.  (Remark that
      the assumptions of this proposition require that
      ${V''_\nodetau}^{\mA_\nodetau^{-1}}$ belong to $\mathscr{Y}_\nodetau$; this is
      the case by construction). 
      
\smallskip

      Then, that proposition proves our claim. In addition,
      $(L''_\nodetau, \scrY_\nodetau^{\mA_\nodetau}-\{V''_\nodetau\})$ admits a global
      normal form whose atlas
      is
      $$\atlasfiber(\bpsi_\nodetau^{\mA_\nodetau},V_\nodetau^{\mA_\nodetau},Q_\nodetau,S_\nodetau^{\mA_\nodetau},Q_\nodetau')$$
      that is, $\bpsi_{\nodetau''}$.
    \end{itemize}
  \end{itemize}

We can now prove that $\nodetau'$ satisfies ${\sfTpa}$. We already saw
that the type of $L_{\nodetau'}=L'_\nodetau$ is as claimed. Since in addition
we have by definition $\scrC_{\nodetau'}=\scrC'_{\nodetau}$, and this set has
dimension zero, we deduce that ${\sf t'_{1}}$ holds at $\nodetau'$.

To prove ${\sf t'_{2}}$, notice that we have already seen that
$V_{\nodetau'}=W_\nodetau$ coincides with
$\Clos{(L_{\nodetau'})}=\Clos{(L'_\nodetau)}$.  By construction,
$Q_{\nodetau'}=Q_\nodetau$, and by assumption ${\sfTa}$ for $\nodetau$,
$Q_\nodetau=\Zeroes(\scrQ_\nodetau)$; since $\scrQ_{\nodetau'}=\scrQ_\nodetau$, we
deduce that $Q_{\nodetau'}=\Zeroes(\scrQ_{\nodetau'})$. Similarly,
$S_{\nodetau'}=S_\nodetau^{\mA_\nodetau}$, and by assumption ${\sfTa}$ for $\nodetau$,
$S_\nodetau=\Zeroes(\scrS_\nodetau)$. Since
$\scrS_{\nodetau'}=\scrS_\nodetau^{\mA_\nodetau}$, we obtain
$S_{\nodetau'}=\Zeroes(\scrS_{\nodetau'})$. Finally, we saw above that
$\Zeroes(\scrC'_\nodetau)=C'_\nodetau$, or equivalently
$\Zeroes(\scrC_{\nodetau'})=C_{\nodetau'}$.  Thus, ${\sf t'_{2}}$ is proved.

Suppose finally that $W_\nodetau=V_{\nodetau'}$ is not empty. We saw above
that $(L'_\nodetau; \scrY_\nodetau^\mA-\{W_\nodetau\})$ admits a global normal
form whose atlas is $\bpsi_{\nodetau'}$. Because $\scrY_{\nodetau'}$ is
contained in $\scrY_\nodetau^\mA-\{W_\nodetau\}$, this proves at once ${\sf
  t'_{3}}$ and ${\sf t'_{4}}$. So, we are done for $\nodetau'$.

To conclude, we prove that $\nodetau''$ satisfies ${\sfTpa}$. As in the
case of $\nodetau'$, we saw above that the type of $L_{\nodetau''}=L''_\nodetau$
is as claimed. Since in addition we have
$\scrC_{\nodetau''}=\scrC''_{\nodetau}$, and this set has dimension zero, we
deduce that ${\sf t'_{1}}$ holds at $\nodetau''$.

To prove ${\sf t'_{2}}$ at $\nodetau''$, we have to establish the
equalities $V_{\nodetau''}=\Clos{(L_{\nodetau''})}$,
$Q_{\nodetau''}=\Zeroes(\scrQ_{\nodetau''})$,
$S_{\nodetau''}=\Zeroes(\scrS_{\nodetau''})$ and
$C_{\nodetau''}=\Zeroes(\scrC_{\nodetau''})$.  The first two items were proved
above.  Next, we have to prove that
$S_{\nodetau''}=\Zeroes(\scrS_{\nodetau''})$, or equivalently
$\fbr(S_\nodetau^{\mA_\nodetau} \cup W_\nodetau,
Q''_\nodetau)=\Zeroes(\scrS_{\nodetau''})$:
this was proved above as well. Finally, we need to prove that
$C_{\nodetau''}=\Zeroes(\scrC_{\nodetau''})$, or equivalently
$C''_\nodetau=\Zeroes(\scrC''_\nodetau)$: this was also proved above.  Thus,
${\sf t'_{2}}$ is proved.

Suppose in addition that $V''_\nodetau=V_{\nodetau''}$ is not empty. We saw
above that $(L''_\nodetau; \scrY_\nodetau^\mA-\{V''_\nodetau\})$ admits a global
normal form whose atlas is $\bpsi_{\nodetau''}$. Because $\scrY_{\nodetau''}$
is contained in $\scrY_\nodetau^\mA-\{V''_\nodetau\}$, this proves at once
${\sf t'_{3}}$ and ${\sf t'_{4}}$. Thus, $\nodetau''$ satisfies
${\sfTpa}$ and the lemma is proved.
\end{proof}



\subsection{Proof of the proposition}

Repeated applications of the previous lemma allow us to define a
family of non-empty Zariski open sets $\scrIopen_\nodetau \subset
\GL(n, e_\nodetau)$, for $\nodetau$ internal node of $\scrT$, for
which all nodes of $\scrT$ satisfy property $\sfTpa$.

If, as Proposition~\ref{PROP:CORRECTNESSGLOBAL}, we assume that for
all internal nodes $\nodetau$ of $\scrT$, $\u_\nodetau$ is in
$\scrIopen_\nodetau$, property $\sfTpa$ shows that we can associate to
any node $\nodetau$ of $\scrT$ a generalized Lagrange system
$L_\nodetau$, that defines the algebraic set $V_\nodetau$ considered
when running ${\sf RoadmapRec}$, when using the same matrices
$\mA_\nodetau$ as in ${\sf RoadmapRecLagrange}$.

The only pending point to prove is that at the leaves $\nodetau$ of
the recursion, the behavior of ${\sf RoadmapRecLagrange}$ agrees with
that of ${\sf RoadmapRec}$. Indeed, after we have reached the leaves,
going up the recursion tree simply amounts to performing changes of
variables and unions, for which there is no difficulty.
  
Let us then consider a leaf $\nodetau$. By assumption, $\nodetau$ satisfies
${\sfTpa}$, so in particular $\Clos{(L_\nodetau)}=V_\nodetau$, and either
$V_\nodetau$ is empty or $L_\nodetau$ admits a global normal form (recall that
$\scrY_\nodetau$ is empty at the leaves). We can then apply
Proposition~\ref{chap:solvelagrange:prop:basicsolve}, and deduce that
we correctly return a one-dimensional parametrization of $V_\nodetau$.

As a consequence, correctness follows from Theorem
\ref{THEO:MAINABSTRACT}, and
Proposition~\ref{prop:correctnessglobal} is proved.


\section{Proof of Proposition~\ref{prop:complexity:mainlagrange}}\label{chap:main:sec:complexite}

Finally, we prove Proposition~\ref{prop:complexity:mainlagrange} whose
statement is as follows.

{\em   Consider polynomials $\f=f_1,\dots,f_p$ in $\QQ[X_1,\dots,X_n]$ of
  degrees bounded by $D$, given by a straight-line program $\Gamma$ of
  length $E$, that define a reduced regular sequence. 

  Suppose that $V=V(\f) \subset \C^n$ has finitely many singular
  points and that $V(\f)\cap \R^n$ is bounded. Consider also a
  zero-dimensional parametrization $\scrC_0$ of degree $\degC$ that describes a finite
  set $C_0 \subset \C^n$.

  Suppose that all matrices $\mA_\nodetau$ and all vectors $\u_\nodetau$
  satisfy the assumptions of Proposition~\ref{prop:correctnessglobal},
  and that all calls to subroutines such as ${\sf Union}$, ${\sf
    Projection}$, ${\sf W_1}$, ${\sf Lift}$ are successful. Then, ${\sf MainRoadmapLagrange}(\Gamma, \scrC_0)$ either returns
  ${\sf fail}$ or returns a one-dimensional parametrization of degree
  bounded by
\[
\softO \left (
\degC 16^{3d}  (n \log_2(n))^{2(2d+12\log_2(d))(\log_2(d)+6)}D^{(2n+1)(\log_2(d)+4)}
\right )\]
using 
\[
\softO \left (
\degC^3 16^{9d} E (n \log_2(n))^{6(2d+12\log_2(d))(\log_2(d)+7)}D^{3(2n+1)(\log_2(d)+5)}
\right ) 
\]
arithmetic operations in $\QQ$, with $d=n-p$.
}

We start by establishing some elementary bounds on the number of
variables and polynomials in the generalized Lagrange systems
considered during the recursive calls of ${\sf RoadmapRecLagrange}$.

Next, we prove uniform degree bounds on the geometric objects
represented by generalized Lagrange systems and zero-dimensional
parametrizations computed at Steps
(\ref{rmplag:step:5}--\ref{rmplag:step:9-a}) of
${\sf RoadmapRecLagrange}$. This enables us to deduce bounds on the
degree of the output roadmap and, consequently, bounds on the size of
the output.

Finally, we use these degree bounds to bound the cost of ${\sf
  RoadmapRecLagrange}$, and thus of ${\sf MainRoadmapLagrange}$. This
mainly relies on algorithms ${\sf SolveLagrange}$, ${\sf W1}$ and
${\sf Fiber}$ described in
Propositions~\ref{chap:solvelagrange:prop:basicsolve},~\ref{sec:main:critical}
and~\ref{sec:main:prop:fiber} and the basic routines dealing with
zero- and one-dimensional parametrizations given in Section
\ref{chap:posso}.


\subsection{Notation and auxiliary results}\label{chap:complexite:section:notations}

We first recall notation introduced in Section~\ref{chap:correctness},
where we attached integers and data to the nodes of the tree, and
introduce further quantities. Then, we prove basic inequalities on
these quantities, that will be needed for the cost analysis.


\subsubsection{Notation}\label{chap:complexite:subsection:notations}

In the whole section, we assume without loss of generality that the following inequalities hold:
\begin{itemize}
\item $n\geq 2$
\smallskip
\item $p\geq 1$
\smallskip
\item $n-p\geq 1$
\smallskip
\item $D\geq 2$ (else, $V\cap\R^n$ cannot satisfy the boundedness
  assumption).
\end{itemize}
Each node $\nodetau$ of $\scrT$ is labelled with the following integers:
\begin{itemize}
\item $d_\nodetau$ (defined previously; it is the dimension of the current algebraic set),
\smallskip
\item $e_\nodetau$ (defined previously; it is the number of variables assuming fixed values),
\smallskip
\item $h_\nodetau$, which we define as the height of $\nodetau$.
\end{itemize}
Since by assumption at any node $\nodetau$ of $\scrT$, $\mA_\nodetau$
is in $\scrOpen_\nodetau$ and $\u_\nodetau$ is in
$\scrIopen_\nodetau$, and since all calls to our various subroutines
are successful, to each node $\nodetau$ are also associated the
following objects and quantities:
\begin{itemize}
\item a generalized Lagrange system $L_\nodetau=(\Gamma_\nodetau, \scrQ_\nodetau,
  \scrS_\nodetau)$,
\smallskip
\item a zero-dimensional parametrization $\scrC_\nodetau$,
\smallskip
\item an integer $E_\nodetau$, which denotes the length of $\Gamma_\nodetau$.
\end{itemize}
When $\nodetau$ is not a leaf, the following objects are defined:
\begin{itemize}
\item zero-dimensional parametrizations $\scrB_\nodetau$, $\scrQ''_\nodetau$,
  $\scrC'_\nodetau$, $\scrC''_\nodetau$, $\scrS'_\nodetau$, $\scrS''_\nodetau$, that
  are computed at Steps \ref{rmplag:step:5}--\ref{rmplag:step:9-a};
\smallskip
\item one-dimensional parametrizations $\scrR'_\nodetau$, $\scrR''_\nodetau$,
  $\scrR_\nodetau$, respectively computed at Steps
  \ref{rmplag:step:8},~\ref{rmplag:step:11} and returned at Step
  \ref{rmplag:step:12};
\smallskip
\item generalized Lagrange systems $L'_\nodetau$, $L''_\nodetau$ constructed
  at Steps~\ref{rmplag:step:4} and~\ref{rmplag:step:10};
\smallskip
\item algebraic sets $\mathscr{Y_\nodetau}$ introduced in the previous
  section for the collection of all geometric objects associated to
  the descendants of $\nodetau$;
\smallskip
\item an integer $\dalgo_\nodetau=\lfloor (d_\nodetau+3)/2 \rfloor$;
\smallskip
\item an integer $k_\nodetau$ and vectors of integers
  $\n_\nodetau=(n,n_{\nodetau,1},\dots,n_{\nodetau,k_\nodetau})$ and
  $\p_\nodetau=(p,p_{\nodetau,1},\dots,p_{\nodetau,k_\nodetau})$. For $i$ in 
  $\{0,\dots,k_\nodetau\}$, we define
\smallskip
  \begin{itemize}
  \item $N_{i,\nodetau}=n+\sum_{\ell=1}^i n_{\nodetau,\ell}$, and $N_\nodetau = N_{k_\nodetau,\nodetau}$
\smallskip
  \item $P_{i,\nodetau}=p+\sum_{\ell=1}^i p_{\nodetau,\ell}$, and $P_\nodetau = P_{k_\nodetau,\nodetau}$
\smallskip
  \item $d_{i,\nodetau}=N_{i,\nodetau}-e_\nodetau-P_{i,\nodetau}$; note that we have
    $d_\nodetau = d_{k_\nodetau,\nodetau}$.
  \end{itemize}
\end{itemize}
When $\nodetau$ is a leaf, the one-dimensional parametrization computed at
Step \ref{rmplag:step:1} is denoted by~$\scrR_\nodetau$.



\subsubsection{Some useful inequalities}

We start with a technical but simple and useful lemma.  It shows that
the number of equations and unknowns is at all times at most $2n^2$.
In what follows, we use notation such as $\dinit,E_\rho,\dots$ to
denote the values of the various quantities seen above at the root.

\begin{lemma}\label{chap:complexite:ineq1}
  Let $\nodetau$ be a node of $\scrT$. The following holds:
  \begin{itemize}
  \item $k_\nodetau\leq h_\nodetau\leq \lceil \log_2(\dinit)\rceil$
\smallskip
  \item $E_\nodetau\leq 4n^{4+2\log_2(\dinit)}(E_\rho+n^4)$
\smallskip
  \item for $i$ in $\{0,\dots,k_\nodetau\}$, we have:
\smallskip
  \begin{itemize}
  \item $ P_{i, \nodetau}+1\leq N_{i, \nodetau}\leq 2^in$
\smallskip
  \item $d_{i,  \nodetau} \leq \frac{\dinit}{2^{i}}+1$;
\smallskip
  \end{itemize}
  so, in particular, $d_\nodetau\leq \frac{\dinit}{2^{h_\nodetau}}+1$.
  \end{itemize}
\end{lemma}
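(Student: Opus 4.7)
The plan is to prove each assertion by induction along the tree $\scrT$, using the type recurrences of Chapter~\ref{chap:main:algo}: at the left child $\tau'$ one increments $k_\tau$ by $1$ and appends a new Lagrange block of size $P_\tau$ while keeping $e$ constant, whereas at the right child $\tau''$ one leaves $k,\n,\p$ unchanged and only adds $\dalgo_\tau-1$ to $e$. These rules control both the combinatorial quantities and the growth of the straight-line program.

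For the first item, $k_\tau\leq h_\tau$ is immediate from $k_\rho=0$ and the two transition rules: depth grows by exactly $1$ at each descent, whereas $k$ grows by at most $1$. For the upper bound on $h_\tau$, I would use that $\dalgo_\tau=\lfloor(d_\tau+3)/2\rfloor$ implies $d_{\tau'}=\lceil d_\tau/2\rceil$ and $d_{\tau''}=\lfloor d_\tau/2\rfloor$, so $d$ at least halves at every descent; since the recursion stops as soon as $d_\tau\leq 1$, the depth cannot exceed $\lceil\log_2(d_\rho)\rceil$.

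For the third item, the key observation is that the $i$-th block of Lagrange multipliers is introduced at the unique left-child ancestor $\sigma_i^\star$ of $\tau$ whose parent $\sigma_i$ satisfies $k_{\sigma_i}=i-1$; beyond $\sigma_i^\star$ both $N_{i,\cdot}$ and $P_{i,\cdot}$ are frozen (left-child steps append blocks of higher index, right-child steps leave $\n,\p$ unchanged), while right-child steps add $\dalgo-1$ to $e$ and therefore only decrease $d_{i,\cdot}$. An induction on $i$ then yields everything at once: the type recurrence gives $n_{\tau,i}=P_{i-1,\sigma_i}\leq N_{i-1,\sigma_i}-1\leq 2^{i-1}n-1$ by the inductive hypothesis, hence $N_{i,\tau}=N_{i-1,\tau}+n_{\tau,i}\leq 2^i n$; the strict inequality $N_{i,\tau}\geq P_{i,\tau}+1$ reduces at $\sigma_i^\star$ to $d_{i,\sigma_i^\star}\geq 1$, which holds because $d_{i,\sigma_i^\star}=\dalgo_{\sigma_i}-1\geq 1$ (the base case $i=0$ uses the standing hypothesis $n\geq p+1$); finally $d_{i,\sigma_i^\star}=\lfloor(d_{i-1,\sigma_i}+1)/2\rfloor\leq(d_\rho/2^{i-1}+2)/2=d_\rho/2^i+1$ by induction, and this bound is inherited at $\tau$ since neither type of descent increases $d_{i,\cdot}$. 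The statement for $d_\tau$ is then item~3 at $i=k_\tau$ combined with item~1.

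The second item is the main obstacle and requires careful bookkeeping of the straight-line program size. A right-child transition only applies a change of variables, so $E_{\tau''}\leq E_\tau+O(n^2)$. A left-child transition first applies a change of variables to $\Gamma_\tau$ (an additive $O(n^2)$), then evaluates $\jac(\F^{\mA_\tau})$ by applying Baur--Strassen row by row (a multiplicative factor $O(P_\tau)$ on the length, since each $F_i$ is a single scalar polynomial computed by a sub-SLP of length at most $E_\tau$), then performs the row-by-matrix product $[\L_{k+1}]\cdot\jac(\F^{\mA_\tau},e+\dalgo_\tau)$ (an additive $O(N_\tau^2)$) and adds the final affine equation (an additive $O(N_\tau)$); hence $E_{\tau'}\leq c_1 N_\tau E_\tau+c_2 N_\tau^2$ for absolute constants $c_1,c_2$. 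Using items~1 and~3 together with the standing hypothesis $d_\rho\leq n$, one gets $N_\tau\leq 2^{h_\tau}n\leq 2 n d_\rho\leq 2 n^2$, so the per-level multiplicative factor is at most $2c_1 n^2$. Telescoping this recurrence across the at most $\lceil\log_2(d_\rho)\rceil$ levels of descent then yields a bound of the form $(2c_1 n^2)^{h_\tau}(E_\rho+c_3 n^4)$, which, after absorbing all absolute constants into the polynomial exponent, gives the claimed $4\,n^{4+2\log_2(d_\rho)}(E_\rho+n^4)$. The only delicate point is organizing the constants so that every lower-order additive term (change of variables, affine equation, low-level contributions to the telescoping sum) can be absorbed into the prefactor $4$; this is routine but tedious.
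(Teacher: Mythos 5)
Your proposal is correct and follows essentially the same route as the paper: an induction along $\scrT$ driven by the type recurrences, with the first and third items handled exactly as in the text (the paper's treatment of item~3 is a direct induction on the node rather than your "block introduced at $\sigma_i^\star$ and frozen thereafter" bookkeeping, but these are the same argument). The one place where your write-up is looser than it can afford to be is the second item: with your per-level recurrence $E_{\tau'}\le c_1N_\tau E_\tau+c_2N_\tau^2$ and the bound $N_\tau\le 2n^2$, telescoping gives a factor $(2c_1n^2)^{h_\tau}$, which for $c_1=3$ and $h_\tau$ as large as $\log_2(d_\rho)+1$ overshoots the stated prefactor $4$; the paper closes this by (i) using $N_\tau\le n^2$ at internal nodes (since $k_\tau\le h_\tau\le\log_2(d_\rho)\le\log_2 n$ there) so the per-level factor is $3n^2$, and (ii) proving the strengthened induction hypothesis $E_\tau\le(3n^2)^{h_\tau}E_\rho+4^{h_\tau}n^{4+2h_\tau}$, from which the claimed bound follows via $h_\tau\le 1+\log_2(d_\rho)$ and $4^{h_\tau}\le 4n^2$. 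So the constant-chasing you defer as "routine" does require these two adjustments, but no new idea.
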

\begin{proof}
  The fact that $h_\nodetau \le \lceil \log_2(\dinit)\rceil$ is true by
  construction, for all nodes $\nodetau$. 
  Our reasoning for the other inequalities is by increasing induction
  on the height of $\nodetau$.  We actually prove a slightly stronger form
  of the upper bound on $E_\nodetau$, which reads $$E_\nodetau\leq
  (3n^2)^{h_\nodetau} E_\rho+4^{h_\nodetau} n^{4+2h_\nodetau}.$$ Note that this
  inequality implies that \[E_\nodetau\leq (4n^2)^{h_\nodetau} E_\rho+4^{h_\nodetau}
  n^{4+2h_\nodetau}\leq (4n^2)^{h_\nodetau}(E_\rho+n^4)\leq
  4n^{4+2\log_2(\dinit)}(E_\rho+n^4),\] since $h_\nodetau\leq \lceil
  \log_2(\dinit)\rceil\leq 1+\log_2(\dinit)$.

  At the root $\nodetau=\rho$, all inequalities are immediate, except for
  the case $i=0$ of $P_{i, \nodetau}+1\leq N_{i, \nodetau}\leq 2^in$ (which is
  the only one we have to consider); this is equivalent to $n-p \ge
  1$, which is true by assumption.  

  Let now $\nodetau$ be a node of $\scrT$. Assume that it satisfies the
  induction assumption, and that it is not a leaf; then, it has a left
  child $\nodetau'$ and a right child~$\nodetau''$.

  Let us work with $\nodetau'$ first. By
  Definition~\ref{sec:lagrange:notation:polar}, we have
  $k_{\nodetau'}=k_{\nodetau}+1$; since we have $k_{\nodetau}\leq
  h_{\nodetau}$ by induction, and $h_{\nodetau'}=h_{\nodetau}+1$ by
  definition, we deduce that $k_{\nodetau'}\leq h_{\nodetau'}$. Thus,
  the first item is proved.

  Next, since $h_{\nodetau'}=h_{\nodetau}+1$, we have to establish
  $E_{\nodetau'}\leq
  (3n^2)^{h_{\nodetau}+1}E_\rho+4^{h_{\nodetau}+1}n^{4+2(h_{\nodetau}+1)}$. Propagating
  partial derivatives in the forward manner, we would obtain that one
  can evaluate $\F_\nodetau$ and all its partial derivatives within $4N_\nodetau E_{\nodetau}$
  operations; however, using the reverse mode as in Baur-Strassen's
  algorithm \cite{BaurStrassen}, the cost reduces to $3P_\nodetau E_\nodetau \le 3 N_\nodetau E_\nodetau$.

  Multiplying on the right $\jac(\F_\nodetau,e_\nodetau+\dalgo_\nodetau)$ with a vector of $P_\nodetau$
  variables costs at most $2N_\nodetau P_\nodetau$ operations; a final $2P_\nodetau$ operations
  come from the cost of computing the affine form in
  $\Polarlag(L_\nodetau,\u_\nodetau,\dalgo_\nodetau)$. Using the induction assumption, we have
  $N_\nodetau \le n^2$ and $2N_\nodetau P_\nodetau+2P_\nodetau \le 2n^4$; we deduce that
  $$E_{\nodetau'}\leq 3N_\nodetau E_\nodetau+2N_\nodetau P_\nodetau+2P_\nodetau\leq 3n^2((3n^2)^{{h_{\nodetau}}}E_\rho+4^{h_{\nodetau}} n^{4+2{h_{\nodetau}}})+2n^4,$$
  which implies that $$E_{\nodetau'}\leq (3n^2)^{{h_{\nodetau}}+1} E_\rho +3\cdot 4^{h_{\nodetau}} n^{4+2({h_{\nodetau}}+1)}
  + 2n^4.$$ 
  Now, since $n \ge 2$, we have the upper bound $2n^4 \le n^{4+2({h_{\nodetau}}+1)}$;
  using the inequality $3\cdot 4^{h_{\nodetau}}+1\leq 4^{{h_{\nodetau}}+1}$, we conclude that $E_{\nodetau'}\leq
  (3n^2)^{{h_{\nodetau}}+1}+4^{{h_{\nodetau}}+1}n^{4+2({h_{\nodetau}}+1)}$ as requested. This proves the second 
  point for $\nodetau'$.

  For the third item, using again Definition
  \ref{sec:lagrange:notation:polar}, we have $N_{i,\nodetau}=N_{i,\nodetau'}$ and
  $P_{i,\nodetau}=P_{i,\nodetau'}$ for $i$ in $\{0,\dots,k_{\nodetau}\}$, as well as $e_{\nodetau}=e_{\nodetau'}$; in
  particular, the only new inequalities we have to prove are for index
  $i=k_{\nodetau}+1$.

  We first prove that $P_{k_{\nodetau}+1,\nodetau'}+1\leq N_{k_{\nodetau}+1,\nodetau'}\leq 2^{k_{\nodetau}+1}n$.  By
  Lemma \ref{lemma:GLS:typeW}, we have
  $$ N_{k_{\nodetau}+1,\nodetau'}=N_{\nodetau}+P_{\nodetau} \qquad \text{ and }\qquad P_{k_{\nodetau}+1,{\nodetau'}}=N_{\nodetau}+P_{\nodetau}-e_{\nodetau}-\dalgo_\nodetau+1$$
  with $\dalgo_\nodetau=\lfloor\frac{{d_{\nodetau}}+3}{2}\rfloor \ge 2$ (Step
  \ref{rmplag:step:3}). We deduce that $P_{k_{\nodetau}+1,\nodetau'}+1\leq N_{k_{\nodetau}+1,\nodetau'}$. On
  the other hand, by our induction assumption $P_{\nodetau}+1 \le N_{\nodetau}\leq 2^{k_{\nodetau}}n^2$,
  we deduce that $N_{k_{\nodetau}+1,\nodetau}\leq 2^{k_{\nodetau}+1}n$. Finally, note that
  $$
  {d_{\nodetau'}}=\dalgo_\nodetau-1 = \lfloor\frac{{d_{\nodetau}}+1}{2}\rfloor \leq
  \frac{{d_{\nodetau}}}{2}+\frac{1}{2}\leq \left
  (\frac{\dinit}{2^{k_{\nodetau}+1}}+\frac{1}{2}\right )+\frac{1}{2}\leq \frac{\dinit}{2^{k_{\nodetau}+1}}+1,
  $$ as requested. Thus, we are done with $\nodetau'$.

  Proving the inequalities for $\nodetau''$ is done with a similar
  reasoning: we use instead Definition~\ref{sec:lagrange:notation:fiber} and Lemma~\ref{lemma:GLS:typeF}
  which imply that $k_{\nodetau''}=k_{\nodetau}$; since $h_{\nodetau''}={h_{\nodetau}}+1$, we obtain $k_{\nodetau''}\leq h_{\nodetau''}$.
  Next,  we need to establish that $E_{\nodetau''}\leq
  (3n^2)^{h_{\nodetau''}}+4^{h_{\nodetau''}}n^{4+2 h_{\nodetau''}}$. This is immediate since by definition
  of $L_{\nodetau''}$, we have $E_{\nodetau''}=E_{\nodetau}$ and $h_{\nodetau''}={h_{\nodetau}}+1$.

  Finally, we have $P_{i,\nodetau''}=P_{i,\nodetau}$ and $N_{i,\nodetau''}=N_{i,\nodetau}$ for $i$ in
  $\{0,\dots,k_{\nodetau}\}$, so the inequalities $ P_{i,\nodetau}+1\leq N_{i,\nodetau}\leq 2^in$
  remain true.  We also have ${d_{\nodetau''}}={d_{\nodetau}}-(\dalgo_\nodetau-1)\leq \frac {d_{\nodetau}}2$; since we
  supposed that ${d_{\nodetau}} \le \frac{\dinit}{2^{h_{\nodetau}}}$, and $h_{\nodetau''}={h_{\nodetau}}+1$, we obtain ${d_{\nodetau''}}\leq
  \frac{\dinit}{2^{h_{\nodetau''}}}+1$.
\end{proof}

\begin{lemma}\label{chap:complexite:ineq2}
  Let $\nodetau$ be an internal node of $\scrT$. Then, the
  following inequality holds:
  $$N_\nodetau^{d_\nodetau} \leq \left ( n^2 \right )^{\frac{\dinit}{2^{h_\nodetau}}+1}.
$$
\end{lemma}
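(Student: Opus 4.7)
The plan is to combine the two key bounds already supplied by Lemma~\ref{chap:complexite:ineq1} ($N_\tau\le 2^{h_\tau}n$ and $d_\tau\le d_\rho/2^{h_\tau}+1$) with the crucial extra fact that internal nodes have $d_\tau\ge 2$. The proof reduces, in essence, to showing $N_\tau\le n^2$ for every internal $\tau$, after which the claimed bound follows by raising to the power $d_\tau$.

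First I would record that $N_\tau=N_{k_\tau,\tau}\le 2^{k_\tau}n\le 2^{h_\tau}n$ by Lemma~\ref{chap:complexite:ineq1}. Next I would observe that because $\tau$ is internal we have $d_\tau\ge 2$, and the same lemma gives $d_\tau\le d_\rho/2^{h_\tau}+1$; comparing these two inequalities yields $2^{h_\tau}\le d_\rho$. Under the standing assumptions of this chapter, $p\ge 1$ and hence $d_\rho=n-p\le n-1<n$, so $2^{h_\tau}\le n$. Plugging this into the bound on $N_\tau$ gives $N_\tau\le n\cdot n=n^2$.

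Once $N_\tau\le n^2$ is established, I finish by writing
\[
N_\tau^{d_\tau}\le (n^2)^{d_\tau}\le (n^2)^{d_\rho/2^{h_\tau}+1},
\]
where the last inequality uses $d_\tau\le d_\rho/2^{h_\tau}+1$ together with $n^2\ge 1$.

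There is essentially no hard step here; the only subtle point is that the inequality $2^{h_\tau}\le n$ can only be extracted by combining internality ($d_\tau\ge 2$) with the descent estimate on $d_\tau$—it is not a consequence of the tree structure alone. The argument does not apply verbatim to leaves, which is exactly why the statement is restricted to internal nodes.
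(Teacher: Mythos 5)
Your proof is correct and follows essentially the same route as the paper's: bound $N_\tau \le 2^{h_\tau} n$, show $2^{h_\tau} \le d_\rho \le n$ using internality of $\tau$, and then raise to the power $d_\tau \le d_\rho/2^{h_\tau}+1$. The only minor difference is in justifying $2^{h_\tau}\le d_\rho$ — the paper invokes the depth bound $h_\tau \le \lceil\log_2 d_\rho\rceil - 1$ for internal nodes, whereas you combine $d_\tau\ge 2$ with the descent estimate $d_\tau\le d_\rho/2^{h_\tau}+1$, an equally valid (and arguably more self-contained) micro-step.
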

\begin{proof}
  By the previous lemma, we have that $k_\nodetau \leq h_\nodetau $,
 that $N_\nodetau $ is bounded by $2^{k_\nodetau}n$, and that $d_\nodetau$ is bounded by
  $\frac{\dinit}{2^{h_\nodetau}}+1$. We deduce
  that
  \[
  N_\nodetau^{d_\nodetau}  \leq \left (2^{k_\nodetau}n\right )^{\frac{\dinit}{2^{h_\nodetau}}+1} \leq 
  \left (2^{h_\nodetau} n
  \right )^{\frac{\dinit}{2^{h_\nodetau}}+1}. 
  \]
  Now, since $\nodetau$ is an internal node, we actually have ${h_\nodetau} \le
  \lceil \log_2(\dinit)\rceil -1 \le \log_2(\dinit)$, so we have $2^{h_\nodetau}
  \le \dinit \le n$.
\end{proof}


\subsection{Uniform degree bounds}\label{chap:complexity:sec:degreebounds}

We use the following notation for the degrees of various objects (when
they are defined): for any node $\nodetau$,
\begin{itemize}
\item $\degC_\nodetau$, $\degC'_\nodetau$ and $\degC''_\nodetau$ are the degrees
  of respectively $\Zeroes(\scrC_\nodetau)$, $\Zeroes(\scrC'_\nodetau)$ and
  $\Zeroes(\scrC''_\nodetau)$;
\smallskip
\item $\degQ_\nodetau$ and $\degQ''_\nodetau$ are the degrees of
  respectively $\Zeroes(\scrQ_\nodetau)$ and  $\Zeroes(\scrQ''_\nodetau)$
\smallskip
\item $\degS_\nodetau$, $\degS'_\nodetau$ and $\degS''_\nodetau$ are the degrees
  of respectively $\Zeroes(\scrS_\nodetau)$, $\Zeroes(\scrS'_\nodetau)$ and
  $\Zeroes(\scrS''_\nodetau)$;
\smallskip
\item $\degB_\nodetau$ is the degree of $\Zeroes(\scrB_\nodetau)$;
\smallskip
\item $\degfiber_\nodetau$ is the degree of ${\sf Fiber}(L'_\nodetau, \scrQ''_\nodetau)$;
\smallskip
\item $\delta_\nodetau= \DF(k_\nodetau, e_\nodetau, \n_\nodetau,\p_\nodetau,D, D-1)$ (see Definition
  \ref{sec:solvelagrange:notationsNPdelta}).
\end{itemize}
If $\nodetau$ is an internal node and $\nodetau',\nodetau''$ are its children,
then by construction, $\scrQ_{\nodetau'}=\scrQ_\nodetau$ and
$\scrS_{\nodetau'}=\scrS^{\mA_\tau}_\nodetau$, so
$(\degQ_{\nodetau'},\degS_{\nodetau'})=(\degQ_{\nodetau},\degS_{\nodetau})$;
similarly, we have $\scrQ_{\nodetau''}=\scrQ''_\nodetau$ and
$\scrS_{\nodetau''}=\scrS''_\nodetau$, so $(\degQ_{\nodetau''},\degS_{\nodetau''})=
(\degQ''_\nodetau,\degS''_\nodetau)$. Note also that
$\scrC_{\nodetau'}=\scrC'_\nodetau$ and $\scrC_{\nodetau''}=\scrC''_\nodetau$, which
implies that $\degC_{\nodetau'}=\degC'_\nodetau$ and
$\degC_{\nodetau''}=\degC''_\nodetau$.

The goal of this paragraph is to establish uniform bounds on the
degrees
$\degC_\nodetau, \degQ_\nodetau, \degfiber_\nodetau, \degB_\nodetau, \degS_\nodetau$ and
$\delta_\nodetau$, for any node $\nodetau$ of $\scrT$ where they are defined
(if $\nodetau$ is a leaf, only $\degC_\nodetau$, $\degQ_\nodetau$, $\degS_\nodetau$
and $\delta_\nodetau$ are). Our bounds are expressed in terms of the
quantities
 $$\bdelta= 16^{\dinit+2} n^{2\dinit+12 \log_2(\dinit)} D^n$$ and 
$$\bzeta=  (\degC_\rho+\degQ_\rho) 16^{2(\dinit+3)} 
(n
\log_2(n))^{2(2\dinit+12\log_2(\dinit))(\log_2(\dinit)+4)}D^{(2n+1)(\log_2(\dinit)+2)}.$$

\begin{proposition}\label{chap:complexity:prop:uniformdegreebounds}
  Let $\nodetau$ be a node of $\scrT$. Then the inequalities
  $$\delta_\nodetau\leq \bdelta \quad\text{and}\quad \degC_\nodetau,
  \degQ_\nodetau, \degS_\nodetau \le \bzeta$$ hold. If $\nodetau$ is an internal
  node, we also have $\degfiber_\nodetau, \degB_\nodetau \le \bzeta$.  If
  $\nodetau$ is a leaf, the output of ${\sf SolveLagrange}(L_\nodetau)$ has
  degree at most $\bzeta \bdelta$.
\end{proposition}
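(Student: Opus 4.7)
The plan is to prove the four assertions in a specific order, with the $\delta_\tau$ bound first (which is purely combinatorial), then the bounds on the various degrees attached to nodes by induction on the depth, and finally the output-degree bound at the leaves as a direct combination of these.

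First, I would bound $\delta_\tau$. By Definition~\ref{sec:solvelagrange:notationsNPdelta},
\[
\delta_\tau = (P_\tau+1)^{k_\tau}\,D^{p}\,(D-1)^{n-e_\tau-p}\,\prod_{i=0}^{k_\tau-1} N_{i+1,\tau}^{N_{i,\tau}-e_\tau-P_{i,\tau}}.
\]
Invoking Lemma~\ref{chap:complexite:ineq1}, we have $k_\tau \le \lceil \log_2(\dinit)\rceil$, $N_{i+1,\tau} \le 2^{i+1}n$, $P_\tau+1 \le N_\tau \le 2^{k_\tau}n$, and $N_{i,\tau}-e_\tau-P_{i,\tau} = d_{i,\tau} \le \dinit/2^i + 1$. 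Taking logarithms, the non-trivial factor becomes
\[
\sum_{i=0}^{k_\tau-1}\Bigl(\frac{\dinit}{2^i}+1\Bigr)\bigl((i+1)\log 2 + \log n\bigr),
\]
which I would bound termwise; the first part yields at most $2\dinit\log n + O(\dinit)$, and the second at most $O(\log_2^2(n)) + O(\log_2(n)\log n)$. Combining with the factor $D^{n-e_\tau}\le D^n$ and the $(P_\tau+1)^{k_\tau}$ contribution, one checks that $\delta_\tau \le 16^{\dinit+2} n^{2\dinit+12\log_2(n)} D^n = \bdelta$. This is essentially arithmetic, but writing the exponents carefully is what dictates the value of $\bdelta$.

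Second, I would handle the root $\rho$: by construction $\degC_\rho, \degQ_\rho$ are the input degrees (with $\scrQ_\rho$ having degree $1$), and Proposition~\ref{prop:singularpoints} gives $\degS_\rho \le n^d D^n \le \bdelta \le \bzeta$. Then I would prove by induction on the depth of $\tau$ that $\degC_\tau, \degQ_\tau, \degS_\tau \le \bzeta$, and that at internal nodes $\degB_\tau, \degfiber_\tau \le \bzeta$. The inductive step uses the algebraic-set identities established in Chapter~\ref{chap:main:algo} together with the output-degree estimates from the subroutines:
\begin{itemize}
\item Proposition~\ref{sec:main:critical} gives $|{\sf W}_1(L'_\tau)| \le \degQ_{\tau'}\delta_{\tau'} N_{\tau'}^{d_{\tau'}}(D-1+k_{\tau'})^{d_{\tau'}}$;
\item Proposition~\ref{sec:main:prop:fiber} gives $\degfiber_\tau \le \degQ''_\tau \delta_{\tau'}$;
\item the {\sf Union}, {\sf Projection}, {\sf Lift} routines produce parametrizations whose degrees are at most the sum or product of the input degrees.
\end{itemize}
Combining, I get recurrences of the form $\degC_{\tau'},\degQ_{\tau'},\degS_{\tau'} \le c(\degC_\tau+\degQ_\tau+\degS_\tau)\cdot \bdelta \cdot N_\tau^{d_\tau}(D-1+k_\tau)^{d_\tau}$ for a small constant $c$, and symmetrically for $\tau''$ using {\sf Lift}.

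Third, I would use Lemma~\ref{chap:complexite:ineq2} to control the amplification factor at depth $h$: $N_\tau^{d_\tau} \le (n^2)^{\dinit/2^{h_\tau}+1}$, and similarly $(D-1+k_\tau)^{d_\tau} \le (D \log_2 n)^{\dinit/2^{h_\tau}+1}$. Taking the product of these amplification factors along a root-to-node path of length $r \le \lceil\log_2(\dinit)\rceil$, the telescoping sum $\sum_{h=0}^{r-1} (\dinit/2^h + 1)$ is bounded by $2\dinit + r$, yielding an overall factor of at most $(n^2)^{2\dinit+r}(D\log_2 n)^{2\dinit+r}$ times $\bdelta^r$. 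Absorbing the polynomial-in-$\log_2(n)$ corrections into the exponents defining $\bzeta$ closes the induction.

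Finally, at a leaf $\tau$, Proposition~\ref{chap:solvelagrange:prop:basicsolve} gives an output degree bounded by $\degQ_\tau \delta_\tau \le \bzeta \cdot \bdelta$. The main technical obstacle is bookkeeping: the exponents of $n$, $\log_2(n)$ and $D$ in $\bzeta$ are tuned precisely so that the telescoped amplification factors from both polar-variety recursions (left children, governed by ${\sf W}_1$) and fiber recursions (right children, governed by {\sf Fiber} and {\sf Lift}) fit simultaneously; the verification is a somewhat delicate accounting using Lemmas~\ref{chap:complexite:ineq1} and~\ref{chap:complexite:ineq2}, but it involves no new geometric idea.
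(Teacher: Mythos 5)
Your proposal follows essentially the same route as the paper: bound $\delta_\tau\le\bdelta$ directly from Lemma~\ref{chap:complexite:ineq1} and the definition of $\DF$, set up a per-level multiplicative recursion for the degrees using the output estimates of ${\sf W}_1$, {\sf Fiber}, {\sf Union}, {\sf Projection} and {\sf Lift}, then telescope along root-to-node paths via $\sum_h \dinit/2^{h}\le 2\dinit$ and the $O(\log_2 n)$ depth, finishing the leaf case with Proposition~\ref{chap:solvelagrange:prop:basicsolve}. The one slip is that your per-level amplification factor should carry $\bdelta^2$ rather than $\bdelta$ (the ${\sf W}_1$ bound enters $\degQ''_\tau$, and {\sf Fiber} then multiplies by another $\delta_{\tau'}$), but $\bzeta$ is tuned to absorb $\bdelta^{2(\log_2(n)+1)}$, so the accounting still closes.
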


The proof of the above result will occupy most of this paragraph. We
start by proving the inequality $\delta_\nodetau \le \bdelta$ and next we
establish a recurrence formula on the quantities $\degB_\nodetau$,
$\degfiber_\nodetau$, $\degC_{\nodetau}+\degQ_{\nodetau}$, $\degS_\nodetau$ when
$\nodetau$ varies as a node of $\scrT$ (Lemma~\ref{chap:complexite:rec1}
below), as a key ingredient for the proof of
Proposition~\ref{chap:complexity:prop:uniformdegreebounds}.

\begin{lemma}\label{lemma:bornedelta}
  Let $\nodetau$ be a node of $\scrT$. Then, the inequality
  $\delta_\nodetau\leq\bdelta$ holds.
\end{lemma}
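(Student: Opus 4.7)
The plan is to unfold the definition of $\delta_\tau$ given in Definition~\ref{sec:solvelagrange:notationsNPdelta}, namely
$$\delta_\tau = (P_\tau+1)^{k_\tau}\, D^{p}\, (D-1)^{n-e_\tau-p} \prod_{i=0}^{k_\tau-1} N_{i+1,\tau}^{N_{i,\tau}-e_\tau-P_{i,\tau}},$$
and bound each factor separately using the inequalities collected in Lemma~\ref{chap:complexite:ineq1}. Writing $d_{i,\tau}=N_{i,\tau}-e_\tau-P_{i,\tau}$, I would use $P_\tau+1 \le N_\tau \le 2^{k_\tau} n$ to get $(P_\tau+1)^{k_\tau}\le 2^{k_\tau^{2}} n^{k_\tau}$; the elementary bound $D^{p}(D-1)^{n-e_\tau-p}\le D^{n-e_\tau}\le D^{n}$; and $N_{i+1,\tau}^{d_{i,\tau}}\le (2^{i+1}n)^{d_{i,\tau}}$ combined with $d_{i,\tau}\le \dinit/2^{i}+1$.

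The main computation is then to evaluate the two sums appearing in the resulting product. For the $2$-exponent I would use the classical identity $\sum_{i\ge 0}(i+1)/2^{i}=4$ to get
$$\sum_{i=0}^{k_\tau-1}(i+1)(\dinit/2^{i}+1)\ \le\ 4\dinit + \tfrac{k_\tau(k_\tau+1)}{2}\ \le\ 4\dinit + k_\tau^{2},$$
and for the $n$-exponent the geometric sum bound $\sum_{i=0}^{k_\tau-1}(\dinit/2^{i}+1)\le 2\dinit+k_\tau$. Multiplying these in yields
$$\delta_\tau \ \le\ 2^{4\dinit+2k_\tau^{2}}\, n^{2\dinit+2k_\tau}\, D^{n}.$$

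Since $\tau$ is a node of $\scrT(\dinit)$ with $\dinit\le n$, Lemma~\ref{chap:complexite:ineq1} gives $k_\tau\le \lceil\log_2(\dinit)\rceil\le \log_2(n)+1$. Using $n\ge 2$, so $\log_2(n)\ge 1$, this gives $2k_\tau\le 4\log_2(n)$ and $2k_\tau^{2}\le 8\log_2(n)^{2}$; in particular $2^{2k_\tau^{2}}\le 2^{8\log_2(n)^{2}}= n^{8\log_2(n)}$. Substituting,
$$\delta_\tau \ \le\ 2^{4\dinit}\, n^{8\log_2(n)}\, n^{2\dinit+4\log_2(n)}\, D^{n} \ =\ 2^{4\dinit}\, n^{2\dinit+12\log_2(n)}\, D^{n} \ \le\ 16^{\dinit+2}\, n^{2\dinit+12\log_2(n)}\, D^{n}\ =\ \bdelta.$$

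This is a purely arithmetic argument and there is no real obstacle; the only place where one has to be a little careful is in converting the stray power of $2$ coming from $2^{k_\tau^{2}}$ into a power of $n$ (which requires $n\ge 2$) so that the bound comes out with the shape prescribed by $\bdelta$. The hypothesis $n\ge 2$ collected at the top of the chapter is exactly what makes this conversion work.
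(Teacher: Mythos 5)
Your proof is correct and follows essentially the same route as the paper's: unfold $\DF$, bound $D^p(D-1)^{n-e_\tau-p}\le D^n$, use $P_{i,\tau}+1\le N_{i,\tau}\le 2^i n$ and $d_{i,\tau}\le \dinit/2^i+1$ from Lemma~\ref{chap:complexite:ineq1}, evaluate the two geometric sums, and convert the residual power of $2$ into a power of $n$ via $k_\tau\le\log_2(n)+1$ and $n\ge 2$. The only differences are cosmetic bookkeeping in how the $2^{O(k_\tau^2)}$ factor is absorbed into $n^{O(\log_2 n)}$, and both arrive at the same bound $\bdelta$.
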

\begin{proof}
  Using the definition of $\delta_\nodetau=\DF(k_\nodetau, e_\nodetau,
  \n_\nodetau,\p_\nodetau,D, D-1)$ given in Definition
  \ref{sec:solvelagrange:notationsNPdelta}, we can rewrite the left-hand side as
  $$\delta_\nodetau=(P_\nodetau+1)^{k_\nodetau}
  D^{p}(D-1)^{n-e_\nodetau-p}\prod_{i=0}^{k_\nodetau-1}
  N_{i+1,\nodetau}^{N_{i,\nodetau}-e_\nodetau-P_{i,\nodetau}}.$$ We will prove that 
  \[
  (P_\nodetau+1)^{k_\nodetau}
  D^{p}(D-1)^{n-e_\nodetau-p}\prod_{i=0}^{k_\nodetau-1}
  N_{i+1,\nodetau}^{N_{i,\nodetau}-e_\nodetau-P_{i,\nodetau-P}} \le  16^{\dinit+2} n^{2\dinit+3\log_2(\dinit)+8} D^n;
  \]
  from that, our conclusion will follow, since $3 \log_2(\dinit)+ 8 \le 12
  \log_2(\dinit)$ holds if $\dinit \ge 2$ (if $\dinit=1$, the upper bound we wish 
  to establish is clearly true).  Since $e_\nodetau\geq 0$, we get
  $D^{p}(D-1)^{n-e_\nodetau-p}\leq D^n$. Thus, it remains to establish
  $$
  (P_\nodetau+1)^{k_\nodetau}\prod_{i=0}^{k_\nodetau-1} N_{i+1,\nodetau}^{N_{i,\nodetau}-e_\nodetau-P_{i,\nodetau}}\leq  16^{\dinit+2} n^{2\dinit+3\log_2(\dinit)+8},
  $$ which is what we do now.  Lemma \ref{chap:complexite:ineq1}
  implies that for $i$ in $\{0,\dots,k_\nodetau\}$ we have $P_{i,\nodetau}+1\leq
  N_{i,\nodetau}\leq 2^{i}n$ and $d_{i,\nodetau}\leq \frac{\dinit}{2^i}+1$, with $d_{i,\nodetau} =
  N_{i,\nodetau}-e_\nodetau-P_{i,\nodetau}$. Recall also that $N_{k_\nodetau,\nodetau}=N_\nodetau$. As a consequence, we get
  \begin{equation*}
    \begin{gathered}
      (P_\nodetau+1)^{k_\nodetau}\prod_{i=0}^{k_\nodetau-1}N_{i+1,\nodetau}^{N_{i,\nodetau}-e_\nodetau-P_{i,\nodetau}}\leq 
      N_\nodetau^{k_\nodetau}\prod_{i=0}^{k_\nodetau-1}\left (2^{i+1}n\right )^{\frac{\dinit}{2^i}+1}
      \leq \left (2^{k_\nodetau}n\right )^{k,\nodetau}\prod_{i=0}^{k_\nodetau-1}\left (2^{i+1}n\right )^{\frac{\dinit}{2^i}+1}    \\
      \leq 2^{k_\nodetau^2+k_\nodetau+\sum_{i=0}^{k_\nodetau-1}(i+1)\frac{\dinit}{2^i}} n^{2k_\nodetau+\sum_{i=0}^{k_\nodetau-1}\frac{\dinit}{2^i}}.  
    \end{gathered}  
  \end{equation*}
  Straightforward computations show that
  $$\sum_{i=0}^{k_\nodetau-1}\frac{\dinit}{2^i} \leq 2\dinit
  \quad \text{ and }\quad
  \sum_{i=0}^{k_\nodetau-1}\left (i+1 \right )\frac{\dinit}{2^i}\leq 4\dinit.$$
  We deduce that 
  \[
  (P_\nodetau+1)^{k_\nodetau}\prod_{i=0}^{k_\nodetau-1} N_{i+1,\nodetau}^{N_{i,\nodetau}-e_\nodetau-P_{i,\nodetau}} \leq  2^{4\dinit + k_\nodetau^2+k_\nodetau} n^{2\dinit+2k_\nodetau}
  \]
  and it remains to prove that $2^{4 \dinit +k_\nodetau^2+k_\nodetau} n^{2\dinit+2k_\nodetau}\leq
  16^{\dinit+2} n^{2\dinit+3\log_2(\dinit)+8}$.  Using $k_\nodetau \le \log_2(\dinit)+1$
  (Lemma \ref{chap:complexite:ineq1}), one deduces that
  $n^{2\dinit+2k_\nodetau}\leq n^{2\dinit+2\log_2(\dinit)+2}$. Using again $k_\nodetau\leq
  \log_2(\dinit)+1$, we also deduce that $2^{k_\nodetau}\leq 2n$ and $2^{k_\nodetau^2}\leq
  (2n)^{\log_2(\dinit)+1}$, which implies that $2^{k_\nodetau^2+k_\nodetau}\leq
  (2n)^{\log_2(\dinit)+2}$. This implies that
  \[
  2^{4\dinit+k_\nodetau^2+k_\nodetau}\leq 16^\dinit (2n)^{\log_2(\dinit)+2} 
  \]
  and finally, 
  \[
  2^{4\dinit+k_\nodetau^2+k_\nodetau} n^{2\dinit+2k_\nodetau} \leq  16^{\dinit+\log_2(\dinit)+2} n^{2\dinit+3\log_2(\dinit)+4}.
  \]
  Noticing that $16^{\log_2(\dinit)}\le n^4$, we are done.
\end{proof}

We can now establish the recurrence formula on the quantities
$\degB_\nodetau$, $\degfiber_\nodetau$, $\degC_{\nodetau}+\degQ_{\nodetau}$,
$\degS_\nodetau$ when $\nodetau$ varies as a node of $\scrT$.

\begin{lemma}\label{chap:complexite:rec1}
  Let $\nodetau$ be an internal node of $\scrT$, and define
  \[
  \zeta_\nodetau=\left ( n^2 \log_2(n) D\right )^{\frac{\dinit}{2^{h_\nodetau}}+1}.
  \]  
  Then, letting $\nodetau'$ and $\nodetau''$ be respectively the left and
  right child of $\nodetau$, all the quantities $\degB_\nodetau$,
  $\degfiber_\nodetau$, $\degC_{\nodetau'}+\degQ_{\nodetau'}$,
  $\degC_{\nodetau''}+\degQ_{\nodetau''}$, $\degS_\nodetau$, $\degS'_{\nodetau}$,
  $\degS''_{\nodetau}$ are at most $2 \bdelta^2 \zeta_\nodetau
  (\degC_\nodetau+\degQ_\nodetau)$.
\end{lemma}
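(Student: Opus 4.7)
The proof chains together the degree estimates of the subroutines invoked at steps \ref{rmplag:step:5}--\ref{rmplag:step:9-a} of {\sf RoadmapRecLagrange}, starting from a bound on $\deg({\sf W}_1(L'_\tau))$ and propagating through {\sf Union}, {\sf Projection}, {\sf Fiber} and {\sf Lift}.

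The starting point is Proposition~\ref{sec:main:critical} applied to $L'_\tau=L_{\tau'}$: since the $\scrQ$ component of $L'_\tau$ has degree $\degQ_\tau$ (the matrix $\mA_\tau\in\GL(n,e_\tau,\QQ)$ fixes the first $e_\tau$ coordinates, and the ${\cal W}$ construction does not modify $\scrQ$), and $\delta_{\tau'}\le\bdelta$ by Lemma~\ref{lemma:bornedelta}, one gets $\deg({\sf W}_1(L'_\tau))\le \degQ_\tau\,\bdelta\,M$ where $M=N_{\tau'}^{d_{\tau'}}(D-1+k_{\tau'})^{d_{\tau'}}$. Step~\ref{rmplag:step:5} then yields $\degB_\tau\le \deg({\sf W}_1(L'_\tau))+\degC_\tau$; step~\ref{rmplag:step:6} gives $\degQ''_\tau\le\degB_\tau$ since projection does not increase degree; Proposition~\ref{sec:main:prop:fiber} applied to $L'_\tau$ and $\scrQ''_\tau$ gives $\degfiber_\tau\le\degQ''_\tau\,\bdelta$; and steps~\ref{rmplag:step:7}--\ref{rmplag:step:9} yield $\degC'_\tau\le\degC_\tau+\degfiber_\tau$ and $\degC''_\tau\le\degC'_\tau$. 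The singular parametrizations are handled similarly: $\degS_\tau\le\degC_\tau$ by item ${\sf h_{0,4}}$ of ${\sf H_0}$; $\degS_{\tau'}=\degS_\tau$ since ${\cal W}$ leaves $\scrS$ unchanged; and steps~\ref{rmplag:step:7-a}--\ref{rmplag:step:9-a} give $\degS_{\tau''}\le\degS_\tau+\degfiber_\tau$. Chaining all of these bounds each listed quantity by $(\degC_\tau+\degQ_\tau)(1+\bdelta^2 M)$.

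The conclusion then reduces to establishing the key inequality $M\le\zeta_\tau$, after which $(1+\bdelta^2 M)\le 2\bdelta^2\zeta_\tau$ using $\bdelta,\zeta_\tau\ge 1$. For $\tau'$ internal, Lemma~\ref{chap:complexite:ineq2} gives $N_{\tau'}^{d_{\tau'}}\le n^{\dinit/2^{h_\tau}+2}$, while Lemma~\ref{chap:complexite:ineq1} gives $k_{\tau'}\le\log_2(n)+1$ and $d_{\tau'}\le\dinit/2^{h_\tau+1}+1$; this yields $(D-1+k_{\tau'})^{d_{\tau'}}\le(2D\log_2(n))^{\dinit/2^{h_\tau+1}+1}$, and a direct comparison with $\zeta_\tau=(n^2\log_2(n)D)^{\dinit/2^{h_\tau}+1}$ shows $\zeta_\tau/M\ge (n^2 D\log_2(n)/2)^{\dinit/2^{h_\tau+1}}/2\ge 1$ under $n,D\ge 2$ and $\dinit/2^{h_\tau+1}\ge 1/2$. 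When $\tau'$ is a leaf, $d_{\tau'}\le 1$ gives $M\le 4n^2 D\log_2(n)$, which is bounded by $\zeta_\tau$ directly since $\dinit/2^{h_\tau}+1\ge 2$ at the parent of a leaf.

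The main obstacle is precisely this exponent bookkeeping: the factor $n^2\log_2(n)D$ in the definition of $\zeta_\tau$ has been tailored to absorb the contributions of $N_{\tau'}\le 2n^2$ and $k_{\tau'}\le\log_2(n)+1$ in Proposition~\ref{sec:main:critical}, while the halving of the exponent $\dinit/2^{h_\tau}$ between a node and its children provides the geometric slack that swallows the multiplicative constants $2$ and $4$ appearing along the way.
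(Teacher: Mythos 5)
Your proof is correct and follows essentially the same route as the paper's: the identical chain of degree bounds through ${\sf W}_1$, {\sf Union}, {\sf Projection}, {\sf Fiber} and {\sf Lift}, the only cosmetic difference being that you control $M=N_{\tau'}^{d_{\tau'}}(D-1+k_{\tau'})^{d_{\tau'}}$ by applying Lemma~\ref{chap:complexite:ineq2} at the child $\tau'$ (with a separate leaf case), whereas the paper works directly at the internal node $\tau$. One small bookkeeping slip: the uniform intermediate bound $(\degC_\tau+\degQ_\tau)(1+\bdelta^2 M)$ does not quite cover $\degC_{\tau''}+\degQ_{\tau''}$, which from your own estimates is only bounded by roughly $(\degC_\tau+\degQ_\tau)(2+\bdelta M+\bdelta^2 M)$; this is harmless, since that quantity is still at most $2\bdelta^2\zeta_\tau(\degC_\tau+\degQ_\tau)$ once $M\le\zeta_\tau$ is established.
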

\begin{proof}
  We let $L_\nodetau$ be the generalized Lagrange system at node $\nodetau$,
  and $L'_\nodetau$ be the one computed at Step~\ref{rmplag:step:4}.
  Remark that, as pointed out before, the quantities
  $\degC_{\nodetau'},\degC_{\nodetau''},\degQ_{\nodetau'},\degQ_{\nodetau''}$
  are respectively equal to
  $\degC'_{\nodetau},\degC''_{\nodetau},\degQ_{\nodetau},\degQ''_{\nodetau}$;
  we use the latter for the proof.
  
  \begin{itemize}
  \item $\degB_\nodetau\leq \degC_\nodetau+\degQ_\nodetau\bdelta\zeta_\nodetau$.
    
\smallskip

    By definition of $\scrB$ in Step~\ref{rmplag:step:5} of Algorithm
    ${\sf RoadmapRecLagrange}$, $\degB_\nodetau$ is bounded by the sum of degrees
    of ${\sf W}_1(L'_\nodetau)$ and $\scrC_\nodetau$ (that is, $\degC_\nodetau$).

\smallskip

    {F}rom Proposition~\ref{sec:main:critical}, we deduce that the
    zero-dimensional paramet\-rization returned by ${\sf W}_1(L'_\nodetau)$ has
    degree at most $\degQ_{\nodetau'} \delta_{\nodetau'} (N_{\nodetau'}(D-1+k_{\nodetau'}))^{d_{\nodetau'}}$.
  
\smallskip

    We saw previously that $\degQ_{\nodetau'}=\degQ_\nodetau$ and
    $k_{\nodetau'}=k_\nodetau+1$; then, we obtain that $
    D-1+k_{\nodetau'}=D+k_{\nodetau}$. We claim that we can use the
    upper bound $D + k_{\nodetau} \le \log_2(n) D$: if $n < 4$, the
    only possible value for $k_{\nodetau}$ is $k_{\nodetau}=0$, for
    which the claim clearly holds; otherwise, because $\nodetau$ is an
    internal node, $k_\nodetau \le \log_2(n)$, and the inequality $D +
    \log_2(n) \le \log_2(n) D$ holds for all $D \ge 2$.
    Moreover, we have $d_{\nodetau'}\leq  \frac{\dinit}{2^{h_\nodetau}}+1$,
    so that $(D-1+k_{\nodetau'})^{d_{\nodetau'}}$ is at most 
    $(\log_2(n) D)^{\frac{\dinit}{2^{h_\nodetau}}+1}$.

\smallskip

    Next, we prove that $N_{\nodetau'}^{d_{\nodetau'}}$ is at most
    $(n^2)^{\frac{\dinit}{2^{h_\nodetau}}+1}$. If $\tau'$ is an
    internal node, this is a consequence of Lemma
    \ref{chap:complexite:ineq2} (since the lemma proves that this
    quantity is at most $(n^2)^{\frac{\dinit}{2^{h_{\nodetau'}}}+1}$
    and $h_{\nodetau'} \ge h_\nodetau$).  Else, $\tau'$ is a leaf, so
    that $d_{\nodetau'}=1$; in that case, we have the inequality
    $N_{\nodetau'} \le 2n^2$ from Lemma~\ref{chap:complexite:ineq1},
    so our conclusion follows as well.

\smallskip

    Finally, $\delta_{\nodetau'}$ is bounded by $\bdelta$ by
    Lemma~\ref{lemma:bornedelta} so  altogether, we get that
    $\degQ_{\nodetau'} \delta_{\nodetau'}
    (N_{\nodetau'}(D-1+k_{\nodetau'}))^{d_{\nodetau'}}$ is at most
    $$ \degQ_\nodetau \bdelta \left (n^2 \log_2(n) D\right )^{\frac {\dinit}{2^{h_\nodetau}}+1}.
    $$ This proves that
    $$\degB_\nodetau \le \degC_\nodetau +  \degQ_\nodetau\bdelta \left (n^2 \log_2(n) D\right
    )^{\frac {\dinit}{2^{h_\nodetau}}+1},$$
    which we recognize as $ \degC_\nodetau +  \degQ_\nodetau \bdelta\zeta_\nodetau$.
    
\smallskip

  \item $\degQ''_{\nodetau} \le \degC_\nodetau+\degQ_\nodetau \bdelta\zeta_\nodetau$.
    
\smallskip

    We just proved that $\degB_\nodetau = \deg(\scrB_\nodetau)$ satisfies $\degB_\nodetau \le \degC_\nodetau
    + \degQ_\nodetau \bdelta\zeta_\nodetau$; on the other hand, by construction,
    $\degQ''_{\nodetau}=\deg(\scrQ''_{\nodetau})$ satisfies $\degQ''_{\nodetau} \le \degB_\nodetau$.

\smallskip
    
  \item $\degfiber_\nodetau \leq \bdelta(\degC_\nodetau+\degQ_\nodetau \bdelta\zeta_\nodetau)$.
    
\smallskip

    {F}rom Proposition \ref{sec:main:prop:fiber}, we deduce
    that $$\degfiber_\nodetau = \deg({\sf Fiber}(L'_\nodetau, \scrQ''_{\nodetau}))$$ satisfies
    $\degfiber_\nodetau\leq \delta'_\nodetau \degQ''_{\nodetau}$. Since $\delta'_\nodetau\le \bdelta$ by
    Lemma~\ref{lemma:bornedelta}, the previous bound on $\degQ''_{\nodetau}$
    implies that $\degfiber_\nodetau \le \bdelta(\degC_\nodetau+\degQ_\nodetau  \bdelta \zeta_\nodetau)$,
    as requested.
    
\smallskip

  \item $\degC'_\nodetau \le \degC_\nodetau + \bdelta(\degC_\nodetau+\degQ_\nodetau \bdelta\zeta_\nodetau)$.

\smallskip

    The set $\Zeroes(\scrC'_\nodetau)$ is the union of $\Zeroes(\scrC_\nodetau)^{\mA_\nodetau}$ and ${\sf
      Fiber}(L'_\nodetau, \scrQ''_{\nodetau})$, so its cardinality $\degC'_\nodetau$ is at most 
    $\degC_\nodetau + \degfiber_\nodetau$.

\smallskip
    
  \item $\degC''_{\nodetau} \le \degC_{\nodetau} + \bdelta(\degC_{\nodetau}+\degQ_{\nodetau} \bdelta\zeta_{\nodetau})$.
    
\smallskip

    This is because the set $\Zeroes(\scrC''_{\nodetau})$ is a subset of $\Zeroes(\scrC'_{\nodetau})$.
    
\smallskip

  \item $\degC'_{\nodetau}+\degQ_{\nodetau'} \le 2 \bdelta^2\zeta_{\nodetau} (\degC_{\nodetau}+\degQ_{\nodetau})$.
    
\smallskip

    We know that $\degC'_{\nodetau} \le \degC_{\nodetau} +
    \bdelta(\degC_{\nodetau}+\degQ_{\nodetau} \bdelta \zeta_{\nodetau}
    )$, and that $\degQ_{\nodetau'} = \degQ_{\nodetau}$, so that
    $\degC'_{\nodetau}+\degQ_{\nodetau'} \le \degC_{\nodetau} +
    \bdelta(\degC_{\nodetau}+\degQ_{\nodetau} \bdelta\zeta_{\nodetau})
    + \degQ_{\nodetau}$, which admits the upper bound given above.

\smallskip

  \item $\degC''_{\nodetau}+\degQ''_{\nodetau} \le 2 \bdelta^2\zeta_{\nodetau} (\degC_{\nodetau}+\degQ_{\nodetau})$.
    
\smallskip

    We know that $\degC''_{\nodetau} \le \degC_{\nodetau} +
    \bdelta(\degC_{\nodetau}+\degQ_{\nodetau}
    \bdelta\zeta_{\nodetau})$ and $\degQ''_{\nodetau} \le
    \degC_{\nodetau}+\degQ_{\nodetau} \bdelta\zeta_{\nodetau}$, so
    $\degC''_{\nodetau}+\degQ''_{\nodetau} \le 2\degC_{\nodetau} +
    \bdelta(\degC_{\nodetau}+\degQ_{\nodetau} \bdelta\zeta_{\nodetau})
    +\degQ_{\nodetau} \bdelta\zeta_{\nodetau}$, which admits the upper
    bound given above (since $\bdelta \ge 2$).

\smallskip

  \item $\degS_{\nodetau} \le \degC_{\nodetau}$.

\smallskip

    This is because we proved that $\Zeroes(\scrS_{\nodetau})$ is contained in $\Zeroes(\scrC_{\nodetau})$.

\smallskip

   \item $\degS'_{\nodetau} \le 2 \bdelta^2\zeta_{\nodetau} (\degC_{\nodetau}+\degQ_{\nodetau})$.

\smallskip

     This is because we proved that $\Zeroes(\scrS'_{\nodetau})$ is contained in $\Zeroes(\scrC'_{\nodetau})$,
     so $\degS'_{\nodetau} \le \degC'_{\nodetau} \le \degC'_{\nodetau}+\degQ'_{\nodetau}$.

\smallskip

  \item $\degS''_{\nodetau} \le 2 \bdelta^2\zeta_{\nodetau}  (\degC_{\nodetau}+\degQ_{\nodetau})$.

\smallskip

    Same argument as above, for the inclusion $\Zeroes(\scrS''_{\nodetau}) \subset \Zeroes(\scrC''_{\nodetau})$.
  \end{itemize}
  At this stage, we are mostly done; we only need to verify that the
  bounds given for $\degB_{\nodetau}$, $\degfiber_{\nodetau}$ and $\degS_\nodetau$ are at most
  $2 \bdelta^2\zeta_{\nodetau} (\degC_{\nodetau}+\degQ_{\nodetau})$, which is indeed the case.
\end{proof}

\begin{proof}[of Proposition \ref{chap:complexity:prop:uniformdegreebounds}]
  We proved in Lemma~\ref{lemma:bornedelta} the inequality
  $\delta_\nodetau \le \bdelta$. Let next $\nodetau$ be an internal node of
  $\scrT$, with children $\nodetau'$ and $\nodetau''$. We will prove below
  that $\degB_\nodetau,\degfiber_\nodetau,\degC_\nodetau,\degQ_\nodetau,\degS_\nodetau$,
  as well as $\degC_{\nodetau'},\degQ_{\nodetau'},\degS_{\nodetau'}$ and
  $\degC_{\nodetau''},\degQ_{\nodetau''},\degS_{\nodetau''}$ are all at most
  $\bzeta$; this is enough to conclude, since it covers the bounds for
  the two child nodes.

  Let $\gamma$ be a root-to-leaf path in $\scrT$ containing
  $\nodetau$; we denote by $\gamma'$ the path obtained from
  $\gamma$ by excluding the leaf it
  contains. Lemma~\ref{chap:complexite:rec1} implies that for any node
  $\gamma$, and in particular~$\nodetau$, all the quantities written
  above are at most
  $$(\degC_\rho+\degQ_\rho)\prod_{\nu\in \gamma'} 2 \zeta_\nu \bdelta^2.$$
  Our first step is to prove the following:
  \begin{equation}\label{eq:firststepnu}
    \prod_{\nu\in
      \gamma'} 2 \zeta_\nu \bdelta^2 \leq 2 n\bdelta^{2(\log_2(\dinit)+1)} \left ( n^2  \log_2(n) D \right )^{2\dinit+\log_2(\dinit)+1}. 
  \end{equation}
  Recall that, by Lemma \ref{chap:complexite:rec1}, 
  \[
  \zeta_\nu=  \left ( n^2 \log_2(n) D \right )^{\frac{\dinit}{2^{h_\nu}}+1},
  \]
  so that we have to give an upper bound on 
  $$\prod_{\nu \in \gamma'} 2 \bdelta^2 \left ( n^2 \log_2(n) D \right
  ) \cdot \prod_{\nu \in \gamma'} \left ( n^2 \log_2(n) D \right
  )^{\frac{\dinit}{2^{h_\nu}}}.$$
  For the first product, since the depth of $\scrT$ is at most
  $\lceil \log_2(\dinit) \rceil$, the number of nodes in $\gamma'$ is
  at most $\lceil \log_2(\dinit)\rceil \le \log_2(\dinit)+1$.  Thus,
  the first product is at most
  $$ 2n \bdelta^{2(\log_2(\dinit)+1)} \left (n^2 \log_2(n) D \right
  )^{\log_2(\dinit)+1}.$$
  For the second product, remarking that
  $\sum_{\nu \in \gamma'}\frac {\dinit}{2^{h_\nu}} \leq 2\dinit$, we
  obtain the upper bound $ ( n^2 \log_2(n) D )^{2\dinit}$, which ends
  the proof of~\eqref{eq:firststepnu}.

  Recall that $\bdelta= 16^{\dinit+2} n^{2\dinit+12\log_2(\dinit)} D^n$, so
  that 
  \begin{align*}
  \bdelta^{2(\log_2(\dinit)+1)} &=& 16^{2(\dinit+2)(\log_2(\dinit) +1)} n^{2(2\dinit+12\log_2(\dinit))(\log_2(\dinit)+1)} D^{2n(\log_2(\dinit)+1)}\\
&\leq& 16^{2(\dinit+2)}n^{8(\dinit+2)}  n^{2(2\dinit+12\log_2(\dinit))(\log_2(\dinit)+1)} D^{2n(\log_2(\dinit)+1)}.
  \end{align*}
  Using the crude upper bounds $2 \le 16^2$ and $n \le n^8$, we deduce
  that the left-hand side of~\eqref{eq:firststepnu} is at most
  {\small$$ 16^{2(\dinit+3)}n^{8(\dinit+3)}
    n^{2(2\dinit+12\log_2(\dinit))(\log_2(\dinit)+1)}
    D^{2n(\log_2(\dinit)+1)}\left ( n^2 \log_2(n) D \right
    )^{2\dinit+\log_2(\dinit)+1}. $$}
  We see that the exponent of $D$ is at most
  $(2n+1)(\log_2(\dinit)+2)$. Replacing both bases $n$ and $n^2 \log_2(n)$
  by $(n \log_2(n))^2$, we see that powers of $n$ appearing in the
  previous expression admit an upper bound of the form
  $$(n \log_2(n))^{8(\dinit+3) + 2(2\dinit+12\log_2(\dinit))(\log_2(\dinit)+1) +
    2(2\dinit+\log_2(\dinit)+1)}.$$
  The exponent is at most $2(2\dinit+12 \log_2(\dinit))(\log_2(\dinit)+4)$,
  so the proof of our upper bounds is complete.

  It remains to deal with the degrees at the leaves: this is a direct
  consequence of the degree bound in
  Proposition~\ref{chap:solvelagrange:prop:basicsolve}, together 
  with the above bounds on $\degQ_\nodetau$ and $\delta_\nodetau$.
\end{proof}

\begin{corollary}\label{chap:complexity:cor:bounds}
  Let $\nodetau$ be a node of $\scrT$. Then the following inequalities hold. 
{\small\begin{eqnarray*}
\degQ_\nodetau \delta_\nodetau & \leq & 
(\degC_\rho+\degQ_\rho)
16^{3(\dinit+3)}  (n \log_2(n))^{2(2\dinit+12\log_2(\dinit))(\log_2(\dinit)+5)}D^{(2n+1)(\log_2(\dinit)+3)}\\
\degQ_\nodetau \delta_\nodetau^2 & \leq & (\degC_\rho+\degQ_\rho)
16^{4(\dinit+3)}  (n \log_2(n))^{2(2\dinit+12\log_2(\dinit))(\log_2(\dinit)+5)}D^{(2n+1)(\log_2(\dinit)+3)}\\
\degQ_\nodetau \delta_\nodetau \degS_\nodetau^2 & \leq & 
 (\degC_\rho+\degQ_\rho)^3
16^{7(\dinit+3)} (n \log_2(n))^{6(2\dinit+12\log_2(\dinit))(\log_2(\dinit)+5)}D^{3(2n+1)(\log_2(\dinit)+3)}.
  \end{eqnarray*}}
\end{corollary}
\begin{proof}
  By Proposition~\ref{chap:complexity:prop:uniformdegreebounds},
  the quantities above admit the respective upper bounds 
  $\bzeta \bdelta$, 
  $\bzeta \bdelta^2$ and
  $\bzeta^3 \bdelta$.
  Given the definitions of $\bdelta$ and $\bzeta$, namely
  \begin{eqnarray*}
    \bdelta&=& 16^{\dinit+2} n^{2\dinit+12 \log_2(\dinit)} D^n\\
    \bzeta&=&  (\degC_\rho+\degQ_\rho) 16^{2(\dinit+3)} (n \log_2(n))^{2(2\dinit+12\log_2(\dinit))(\log_2(\dinit)+4)}D^{(2n+1)(\log_2(\dinit)+2)},
  \end{eqnarray*}
  the bounds given in the corollary follow directly, using in particular the upper
  bound $\bdelta \le 16^{\dinit+3} (n\log_2(n))^{2\dinit+12 \log_2(\dinit)}
  D^n$.
\end{proof}


\subsection{Runtime estimates for ${\sf RoadmapRecLagrange}$}

The goal of this paragraph is to prove the following bounds on the
output degree and runtime for ${\sf RoadmapRecLagrange}$.

\begin{proposition}\label{chap:complexite:prop:roadmaprec}
  Let $L_\rho=(\Gamma_\rho, (\,),\scrS_\rho)$ be a generalized
  Lagrange system such that $\Clos{(L_\rho)}$ is $d$-equidimensional
  with finitely many singular points and $\Clos{(L_\rho)}\cap\R^n$ is
  bounded.  Let $\scrC_\rho$ be a zero-dimensional parametrization
  encoding a finite set of points in $\C^n$. Assume that the
  assumptions and inequalities stated in the introduction 
  of Subsection~\ref{chap:complexite:subsection:notations}
 hold, and that $\Zeroes(\scrS_\rho)$ is contained in
  $\Zeroes(\scrC_\rho)$.

  Then, ${\sf RoadmapRecLagrange}((\Gamma_\rho,(\,),
  \scrS_\rho),\scrC_\rho)$ outputs a roadmap of $(\Clos{(L_\rho)},
  \Zeroes(\scrC_\rho))$ of degree
  \[\softO \left ((\degC_\rho+\degQ_\rho)
  16^{3d}  (n \log_2(n))^{2(2d+12\log_2(\dinit))(\log_2(\dinit)+5)}D^{(2n+1)(\log_2(\dinit)+3)}\right )\]
  using 
  \[
  \softO \left (
  (\degC_\rho+\degQ_\rho)^3
  16^{9d} E_\rho (n \log_2(n))^{6(2d+12\log_2(\dinit))(\log_2(\dinit)+6)}D^{3(2n+1)(\log_2(\dinit)+4)}
  \right ) 
  \]
  operations in $\QQ$. 
\end{proposition}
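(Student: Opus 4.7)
The plan is to verify correctness via the tree structure developed in Chapter~\ref{chap:main:algo}, then to deduce the stated degree and complexity bounds by a node-by-node accounting combining the uniform geometric bounds of Proposition~\ref{chap:complexity:prop:uniformdegreebounds} and Corollary~\ref{chap:complexity:cor:bounds} with the subroutine costs of Chapter~\ref{chap:solvelagrange}.

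Correctness is essentially Corollary~\ref{sec:mainalgo:prop:correctnessglobal}: the matrices $\scrA=(\mA_\tau)$ and vectors $\scrU=(\u_\tau)$ are drawn at random, each avoiding a proper Zariski closed set depending on data previously computed along its root-to-node path, so assumptions ${\sf H}$ (Definition~\ref{def:abstract:H}) and ${\sf H'}$ (Definition~\ref{def:concrete:H'}) hold generically; under these, when no subroutine returns {\sf fail}, the output is a roadmap of $(\Clos(L_\rho), Z(\scrC_\rho))$, and the genericity caveats are absorbed into the probabilistic nature of the statement.

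For the degree bound, observe that $\scrR_\tau$ at a node $\tau$ is either the output of ${\sf SolveLagrange}(L_\tau)$ (at a leaf) or the Union of $\scrR_{\tau'}^{\mA_\tau^{-1}}$ and $\scrR_{\tau''}^{\mA_\tau^{-1}}$ at an internal node. Since {\sf Union} for one-dimensional parametrizations (Lemma~\ref{sec:main:lemma:union1}) adds degrees, a straightforward induction on the height of $\tau$ yields $\deg(\scrR_\tau)\le \sum_{\lambda\text{ leaf below }\tau}\deg(\scrR_\lambda)$. Proposition~\ref{chap:solvelagrange:prop:basicsolve} bounds each leaf degree by $\degQ_\lambda\delta_\lambda$, which by Corollary~\ref{chap:complexity:cor:bounds} is within the target; the $O(d_\rho)\subset O(n)$ factor from the $2d_\rho-1$ leaves is absorbed into $\softO$.

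For the runtime, one sums the per-node cost over $\scrT$. At an internal node $\tau$ the relevant subroutines and their worst-case costs are: {\sf ChangeVariables} on $\scrC_\tau,\scrS_\tau$ (Lemma~\ref{lemma:complexity0:changevar}); ${\sf W}_1(L'_\tau)$, bounded by $\softO((k_\tau{+}1)^{2d_\tau+1}N_\tau^{4d_\tau+8}E'_\tau D^{2d_\tau+1}\degQ_\tau^2\delta_\tau^2+N_\tau\degS_\tau^2)$ (Proposition~\ref{sec:main:critical}) with $E'_\tau=O(N_\tau E_\tau+N_\tau^4)$ from the construction of ${\cal W}$; ${\sf Fiber}(L'_\tau,\scrQ''_\tau)$, bounded by $\softO(N_\tau^3(N_\tau E'_\tau+N_\tau^3)D(\degQ''_\tau)^2\delta_\tau^2+N_\tau\degS_\tau^2)$ (Proposition~\ref{sec:main:prop:fiber}); the zero-dimensional {\sf Union}, {\sf Projection}, {\sf Lift} (Section~\ref{sec:basicroutinesparam0}); and the final one-dimensional {\sf Union} at Step~\ref{rmplag:step:12}, which is cubic in the degree bound and thus contributes $\softO(N_\tau(\bzeta\bdelta)^3)$. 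At a leaf the cost is that of {\sf SolveLagrange}, namely $\softO(N_\tau^3(E_\tau+N_\tau^3)(D+k_\tau)\degQ_\tau^3\delta_\tau^3+N_\tau\degQ_\tau\delta_\tau\degS_\tau^2)$. Every $k_\tau,N_\tau,E_\tau$ is controlled by Lemma~\ref{chap:complexite:ineq1}, and the relevant degree products $\degQ_\tau\delta_\tau^2$, $\degQ_\tau^2\delta_\tau^2$, $\degQ_\tau\delta_\tau\degS_\tau^2$ are controlled by Corollary~\ref{chap:complexity:cor:bounds}; multiplying by $2d_\rho-1$ nodes is absorbed into $\softO$.

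The main obstacle is the bookkeeping: one must verify that each of the seven or eight distinct per-node costs, once upper-bounded by the uniform estimates, fits inside the stated envelope. The cubically-behaving contributions come from two sources, the final {\sf Union} (cubic in $\bzeta\bdelta$) and {\sf SolveLagrange} at the leaves (cubic in $\degQ_\tau\delta_\tau$); both yield the $(\degC_\rho+\degQ_\rho)^3$ factor and the exponent $3(2n+1)(\log_2(n)+4)$ on $D$ in the target bound. The combinatorial factor $(k{+}1)^{2d+1}N^{4d+8}$ appearing in ${\sf W}_1$ is tamed because $N_\tau^{d_\tau}\le (n^2)^{\dinit/2^{h_\tau}+1}$ by Lemma~\ref{chap:complexite:ineq2}, keeping the depth-dependent exponents within the $(\log_2(n)+6)$ slack provided in the target. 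Matching the constant $16^{9\dinit}$ and the extra logarithmic factor amounts to tracking one further $\bdelta^2$ coming from {\sf SolveLagrange} compared with the bounds of Corollary~\ref{chap:complexity:cor:bounds}.
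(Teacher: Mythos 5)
Your proposal follows essentially the same route as the paper: correctness is delegated to Corollary~\ref{sec:mainalgo:prop:correctnessglobal} under assumptions ${\sf H}$ and ${\sf H'}$, the output degree is obtained by summing the $O(n)$ leaf degrees each bounded via Proposition~\ref{chap:solvelagrange:prop:basicsolve} and Corollary~\ref{chap:complexity:cor:bounds}, and the runtime is a node-by-node accounting in which the leaf calls to {\sf SolveLagrange} (cubic in $\degQ_\tau\delta_\tau$) dominate, with the remaining steps absorbed. This matches the paper's decomposition into Lemmas~\ref{chap:complexite:lemma:step:1}--\ref{chap:complexity:lemma14}, so the approach is correct and not materially different.
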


Note that the number of nodes in $\scrT$ is $O(n)$, because $\scrT$ is
a binary tree of depth bounded by $\lceil\log_2(\dinit)\rceil$. Thus,
to bound the number of arithmetic operations of performed by
${\sf RoadmapRecLagrange}$, it is enough to take $n$ times a bound on
the cost of each step. Because all our bounds will involve a term that
will be at least $D^n$, since we ignore polylogarithmic factors, we
can safely omit the extra factor $n$.

We bound the cost of each step using the uniform degree bounds given
in Proposition~\ref{chap:complexity:prop:uniformdegreebounds}, the
complexity estimates of Subsection \ref{solvinglagrange:algorithms}
for solving generalized Lagrange systems and the complexity estimates
of Subsections \ref{sec:basicroutinesparam0}
and~\ref{sec:basicroutinesparam1} of Section \ref{chap:posso} for
basic routines on parametrizations.


\subsubsection{Analysis of Step \ref{rmplag:step:1}}

\begin{lemma}\label{chap:complexite:lemma:step:1}
  Under the above notation and assumptions, the total cost of all
  calls to Step \ref{rmplag:step:1} of ${\sf RoadmapRecLagrange}$ on
  input $(L_\rho, \scrC_\rho)$ is
  \[
  \softO \left (
  (\degC_\rho+\degQ_\rho)^3
  16^{9\dinit} E_\rho (n \log_2(n))^{6(2\dinit+12\log_2(\dinit))(\log_2(\dinit)+6)}D^{3(2n+1)(\log_2(\dinit)+4)}
  \right ) 
  \]
  operations in $\QQ$. 
\end{lemma}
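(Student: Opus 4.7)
Step \ref{rmplag:step:1} is executed only at leaves of the recursion tree $\scrT$. Since $\scrT$ is a binary tree of depth at most $\lceil \log_2(\dinit) \rceil$ with the dimension invariant $d_\tau \ge 1$ at every leaf, the number of leaves is at most $\dinit \le n$. As announced in the introduction of the section, an overall factor $n$ is absorbed by the $\softO$, so it suffices to bound the cost of a single call {\sf SolveLagrange}$(L_\tau)$ at an arbitrary leaf $\tau$.

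The plan is to plug the uniform bounds of Section \ref{chap:complexity:sec:degreebounds} into the cost estimate given by Proposition \ref{chap:solvelagrange:prop:basicsolve}, namely
\[
\softO\!\bigl(N_\tau^{3}(E_\tau+N_\tau^{3})(D+k_\tau)\,\degQ_\tau^{3}\delta_\tau^{3} + N_\tau\degQ_\tau\delta_\tau\degS_\tau^{2}\bigr).
\]
The contributing ingredients are the following. First, by Lemma \ref{chap:complexite:ineq1}, $N_\tau \le 2^{k_\tau}n \le 2n^{2}$, $k_\tau \le \log_2(\dinit)+1$ so $D+k_\tau \le \log_2(n)\,D$ (using $D \ge 2$ and the easy case analysis for $n<4$ already used in the proof of Lemma \ref{chap:complexite:rec1}), and $E_\tau \le 4n^{4+2\log_2(\dinit)}(E_\rho+n^{4})$. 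Consequently the factor $N_\tau^{3}(E_\tau+N_\tau^{3})(D+k_\tau)$ is absorbed into $\softO(E_\rho\, n^{c_1 \log_2(n)}D)$ for an explicit small constant~$c_1$, and this extra polynomial overhead in $n$ and $\log_2(n)$ is swallowed by the jump $(\log_2(n)+5)\to(\log_2(n)+6)$ in the exponent of $(n\log_2(n))$ in the target bound.

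Next, I would bound $\degQ_\tau^{3}\delta_\tau^{3} = (\degQ_\tau\delta_\tau)^{3}$ and $\degQ_\tau\delta_\tau\degS_\tau^{2}$ using the two relevant inequalities of Corollary \ref{chap:complexity:cor:bounds}: the second term in the cost estimate is then dominated by the first one, up to constants absorbed in the $\softO$. Cubing the bound on $\degQ_\tau\delta_\tau$ gives exactly the main shape
\[
(\degC_\rho+\degQ_\rho)^{3}\,16^{9(\dinit+3)}(n\log_2(n))^{6(2\dinit+12\log_2(n))(\log_2(n)+5)}D^{3(2n+1)(\log_2(n)+3)}.
\]
It then remains to combine this with the overhead $\softO(E_\rho\, n^{c_1\log_2(n)}D)$ from the previous paragraph. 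The constant $16^{27}$ coming from $16^{9(\dinit+3)}$ versus $16^{9\dinit}$ is hidden in $\softO$; the extra factor $D$ is absorbed by raising the exponent of $D$ from $3(2n+1)(\log_2(n)+3)$ to $3(2n+1)(\log_2(n)+4)$, which provides a surplus $D^{3(2n+1)}$ that dominates $D$ for $n\ge 1$; and the polylogarithmic-in-$n$ overhead in $n^{c_1\log_2(n)}$ is absorbed by raising the exponent of $(n\log_2(n))$ from $(\log_2(n)+5)$ to $(\log_2(n)+6)$, which yields a surplus of $(n\log_2(n))^{6(2\dinit+12\log_2(n))}$ that comfortably dominates it.

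The only genuinely non-routine part of the argument is the last bookkeeping step: checking that the various small polynomial-in-$n$ overheads introduced by $N_\tau^{3}(E_\tau+N_\tau^{3})(D+k_\tau)$ indeed fit into the single exponent bump $(\log_2(n)+5)\to(\log_2(n)+6)$ and $(\log_2(n)+3)\to(\log_2(n)+4)$. This is a direct computation on the exponents of $n$, $\log_2(n)$ and $D$, exactly analogous to the one carried out at the end of the proof of Proposition \ref{chap:complexity:prop:uniformdegreebounds}, so I would simply write it out as a short explicit inequality and conclude.
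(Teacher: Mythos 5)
Your proposal is correct and follows essentially the same route as the paper's proof: bound the cost of a single call to {\sf SolveLagrange} via Proposition~\ref{chap:solvelagrange:prop:basicsolve}, control $N_\tau$, $E_\tau$, $k_\tau$ with Lemma~\ref{chap:complexite:ineq1}, plug in the degree bounds of Corollary~\ref{chap:complexity:cor:bounds} (cubing the bound on $\degQ_\tau\delta_\tau$ and noting the $\degQ_\tau\delta_\tau\degS_\tau^2$ term is dominated), and absorb the residual polynomial-in-$n$ and constant overheads into the exponent bumps and the $\softO$. The remaining "bookkeeping" you defer is exactly the routine simplification the paper also performs, so nothing is missing.
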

\begin{proof}
  It is enough to give a bound on the maximal cost of calling the
  routine ${\sf SolveLagrange}$. Since the assumptions of
  Proposition~\ref{chap:solvelagrange:prop:basicsolve} are satisfied,
  so the cost of each call to ${\sf SolveLagrange}$ is
  \begin{equation}
    \label{eq:analysis:1}
  \softO(N_\nodetau^3(E_\nodetau+N_\nodetau^3) (D+k_\nodetau) \degQ_\nodetau^3\delta_\nodetau^3 + N_\nodetau\degQ_\nodetau
  \delta_\nodetau\degS_\nodetau^2)  
  \end{equation}
  arithmetic operations in $\QQ$. By Lemma \ref{chap:complexite:ineq1}, the
  following inequalities hold.
\[ 
N_\nodetau\leq 2n^2, \qquad E_\nodetau = O\left (n^{4+2\log_2(\dinit)}(E_\rho+n^4)\right ) \qquad \text{ and } \qquad k_\nodetau\leq \lceil \log_2(\dinit)\rceil. 
\]
This shows that $\softO(N_\nodetau^3(E_\nodetau+N_\nodetau^3) (D+k_\nodetau)$ lies in 
\[
  \softO\left (n^6(n^{4+2\log_2(\dinit)}(E_\rho+n^4)) D\right ). 
\]
Using Corollary \ref{chap:complexity:cor:bounds}, we have 
\[
\degQ_\nodetau \delta_\nodetau  \leq  
(\degC_\rho+\degQ_\rho)
16^{3(\dinit+3)}  (n \log_2(n))^{2(2\dinit+12\log_2(\dinit))(\log_2(\dinit)+5)}D^{(2n+1)(\log_2(\dinit)+3)}
\]
and 
\[
\degQ_\nodetau \delta_\nodetau \degS_\nodetau^2  \le
 (\degC_\rho+\degQ_\rho)^3
16^{7(\dinit+3)} (n \log_2(n))^{6(2\dinit+12\log_2(\dinit))(\log_2(\dinit)+5)}D^{3(2n+1)(\log_2(\dinit)+3)}.
\]
As argued previously, because the above bounds involve terms at least
equal to $D^n$, polynomial factors in $n$ are omitted thanks to the
soft-Oh notation. Then, using straightforward simplifications, we
obtain that \eqref{eq:analysis:1} is 
\[
\softO \left (
(\degC_\rho+\degQ_\rho)^3
16^{9(\dinit+3)} E_\rho (n \log_2(n))^{6(2\dinit+12\log_2(\dinit))(\log_2(\dinit)+6)}D^{3(2n+1)(\log_2(\dinit)+4)}
\right ),
\]
which is
\[
\softO \left (
(\degC_\rho+\degQ_\rho)^3
16^{9\dinit} E_\rho (n \log_2(n))^{6(2\dinit+12\log_2(\dinit))(\log_2(\dinit)+6)}D^{3(2n+1)(\log_2(\dinit)+4)}
\right ). 
\]
\end{proof}


\subsubsection{Analysis of Steps \ref{rmplag:step:2}--\ref{rmplag:step:6}}

\begin{lemma}\label{chap:complexity:lemma2-6}
  Under the above notation and assumptions, the total cost of all
  calls to Steps \ref{rmplag:step:2}--\ref{rmplag:step:6} of
  ${\sf RoadmapRecLagrange}$ is
$$
\softO \left (
    (\degC_\rho+\degQ_\rho)^2
    16^{6\dinit} E_\rho (n \log_2(n))^{4(2\dinit+12\log_2(\dinit))(\log_2(\dinit)+6)}D^{2(2n+1)(\log_2(\dinit)+4)} \right )$$
operations in $\QQ$.  
\end{lemma}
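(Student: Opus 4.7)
The plan is to observe first that $\scrT$ has $O(n)$ nodes (at most $2d_\rho - 1$), so, up to a factor absorbed by $\softO$, it suffices to bound the cost of Steps~\ref{rmplag:step:2}--\ref{rmplag:step:6} at a single internal node $\tau$. Step~\ref{rmplag:step:2} samples $\mA \in \GL(n,e_\tau,\QQ)$ and $\u \in \QQ^{P_\tau}$ in $O(N_\tau^2)$ operations; Step~\ref{rmplag:step:3} is a single arithmetic operation; both are negligible. Step~\ref{rmplag:step:4} builds $L' = {\cal W}(L_\tau^{\mA},\u,\dalgo_\tau)$ by applying $\mA$ to $\Gamma_\tau$ (cost $O(N_\tau E_\tau + N_\tau^3)$) and appending the new Lagrange equations through Baur--Strassen differentiation and a matrix-vector product, so that $L'$ is encoded by a straight-line program of length $E_{\tau'} = O(N_\tau E_\tau + N_\tau^4)$, as already computed inside the proof of Lemma~\ref{chap:complexite:ineq1}.

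The dominant contributions come from Steps~\ref{rmplag:step:5} and~\ref{rmplag:step:6}. At Step~\ref{rmplag:step:5}, the call to ${\sf W}_1(L')$ costs, by Proposition~\ref{sec:main:critical},
\[
\softO\!\left((k_{\tau'}+1)^{2d_{\tau'}+1} N_{\tau'}^{4d_{\tau'}+8} E_{\tau'} D^{2d_{\tau'}+1} \degQ_{\tau'}^{\,2} \delta_{\tau'}^{\,2} + N_{\tau'} \degS_{\tau'}^{\,2}\right),
\]
and then ${\sf Union}$ with $\scrC^\mA$ adds only $\softO(N_\tau \degB_\tau^{\,2})$ by Lemma~\ref{sec:main:lemma:union}. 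Step~\ref{rmplag:step:6} calls ${\sf Projection}$ on $\scrB_\tau$ at cost $\softO(N_\tau^2 \degB_\tau^{\,2})$ by Lemma~\ref{sec:posso:lemma:projection}. Since Proposition~\ref{chap:complexity:prop:uniformdegreebounds} gives $\degB_\tau \le \bzeta$, these two terms are dominated by the cost of ${\sf W}_1$, and it remains only to simplify the latter.

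For the combinatorial prefactor, Lemma~\ref{chap:complexite:ineq1} gives $k_{\tau'} \le \log_2(n)+1$, $N_{\tau'} \le 2n^2$, $d_{\tau'} \le \dinit$, and $E_{\tau'} \le 4n^{4+2\log_2(\dinit)}(E_\rho+n^4)$, so $(k_{\tau'}+1)^{2d_{\tau'}+1} N_{\tau'}^{4d_{\tau'}+8} D^{2d_{\tau'}+1} E_{\tau'}$ is polynomial in $n, E_\rho$ times $D^{2\dinit+1}$. For the degree factors, Proposition~\ref{chap:complexity:prop:uniformdegreebounds} gives $\degQ_{\tau'}\le\bzeta$, $\delta_{\tau'}\le\bdelta$, $\degS_{\tau'}\le \bzeta$, hence the leading degree contribution is $\bzeta^{2}\bdelta^{2}$. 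Substituting the explicit forms
\[
\bdelta = 16^{\dinit+2} n^{2\dinit+12\log_2(n)} D^n, \qquad \bzeta = (\degC_\rho+\degQ_\rho)\, 16^{2(\dinit+3)} (n\log_2(n))^{2(2\dinit+12\log_2(n))(\log_2(n)+4)} D^{(2n+1)(\log_2(n)+2)},
\]
and carrying out the exponent arithmetic in the three groups $16$, $n\log_2(n)$ and $D$, one checks that
\[
\bzeta^{2}\bdelta^{2} \le (\degC_\rho+\degQ_\rho)^{2}\, 16^{6\dinit}\, (n\log_2(n))^{4(2\dinit+12\log_2(n))(\log_2(n)+5)}\, D^{2(2n+1)(\log_2(n)+3)},
\]
with the residual constants absorbed into $\softO$.

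Finally, multiplying by the combinatorial prefactor, the extra factor $D^{2\dinit+1} \le D^{2n+1}$ raises the exponent of $D$ from $2(2n+1)(\log_2(n)+3)$ to $2(2n+1)(\log_2(n)+4)$, and the factor $E_\tau$, of polynomial growth in $n$, is absorbed into an increase of the exponent of $n\log_2(n)$ from $(\log_2(n)+5)$ to $(\log_2(n)+6)$; the resulting per-node bound, multiplied by $O(n)$ nodes, gives the announced estimate. The only genuinely delicate point is tracking these exponent shifts carefully so that every polynomial-in-$n$ prefactor is accommodated by moving from $\log_2(n)+5$ to $\log_2(n)+6$ and from $\log_2(n)+3$ to $\log_2(n)+4$; this bookkeeping is the main, though routine, obstacle and explains the precise exponents in the statement.
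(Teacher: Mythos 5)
Your proposal is correct and follows essentially the same route as the paper: the cost is dominated by the call to ${\sf W}_1(L'_\tau)$ at Step~\ref{rmplag:step:5}, bounded via Proposition~\ref{sec:main:critical}, with the combinatorial prefactors controlled by Lemma~\ref{chap:complexite:ineq1} and the degree factors by Proposition~\ref{chap:complexity:prop:uniformdegreebounds}, the {\sf Union} and {\sf Projection} calls being negligible. The only cosmetic difference is that you bound $\degQ_{\tau'}$ and $\delta_{\tau'}$ separately by $\bzeta$ and $\bdelta$ and then square, whereas the paper invokes the product bound of Corollary~\ref{chap:complexity:cor:bounds} directly; these are equivalent, and your exponent bookkeeping (absorbing $D^{2\dinit+1}$ into the shift $\log_2(n)+3\mapsto\log_2(n)+4$ and the polynomial-in-$n$ prefactors into $\log_2(n)+5\mapsto\log_2(n)+6$) matches the paper's.
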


\begin{proof}
  Steps \ref{rmplag:step:2}--\ref{rmplag:step:6} are performed for
  internal nodes of $\scrT$. Let $\nodetau$ be such a node. Steps
  \ref{rmplag:step:2}--\ref{rmplag:step:4} perform changes of
  variables and construct generalized Lagrange systems; their
  computational cost is negligible compared the cost of Steps
  \ref{rmplag:step:5} and \ref{rmplag:step:6}.

  Step \ref{rmplag:step:5} consists in computing $\scrB_\nodetau={\sf
    Union}({\sf W}_1(L'_\nodetau), \scrC_\nodetau^\mA)$. Remark that
  $L_{\nodetau'}=L'_\nodetau$.  Since the assumptions of
  Proposition~\ref{sec:main:critical} are satisfied, the call ${\sf
    W}_1(L'_\nodetau)$ uses
  \begin{equation}\label{eq:costW1}
\softO( (k_{\nodetau'}+1)^{2d_{\nodetau'}+1} D^{2d_{\nodetau'}+1}
  N_{\nodetau'}^{4d_{\nodetau'}+8}E_{\nodetau'}\degQ_{\nodetau'}^2 \delta_{\nodetau'}^2
  +N_\nodetau\degS_{\nodetau'}^2)    
  \end{equation}
  arithmetic operations in $\QQ$.  To analyze the cost of the calls to
  ${\sf Union}$ (at Step \ref{rmplag:step:5}) and ${\sf Projection}$ (at
  Step \ref{rmplag:step:6}), we use Lemmas~\ref{sec:main:lemma:union}
  and~\ref{sec:posso:lemma:projection}, which state that these calls
  use $\softO(N_\nodetau\degQ_\nodetau^2)$ and $\softO(N_\nodetau^2\degQ_\nodetau^2)$
  arithmetic operations in $\QQ$.  The costs of these calls are
  negligible compared to cost of calling ${\sf W}_1$ above.
  
  As above, thanks to the soft-Oh notation, polynomial factors in $n$
  can be omitted in complexity estimates where $n$ appears as an
  exponent, so it is enough to give an upper bound on the expression
  in~\eqref{eq:costW1}. For the same reason, as in the proof of the
  previous lemma, the contribution of $E_{\nodetau'}$ will be
  $n^{2\log_2(\dinit)}E_\rho$; similarly, since $N_{\nodetau'} \le 2n^2$ (Lemma
  \ref{chap:complexite:ineq1}), terms polynomial in it can be
  neglected. Finally, by construction, $d_{\nodetau'}$ is at most $\dinit$
  and $k_{\nodetau'}$ is at most
  $\lceil \log_2(\dinit) \rceil\leq \lceil \log_2(n) \rceil$, by Lemma
  \ref{chap:complexite:ineq1} again.

  Finally, the term $\degS_{\nodetau'}^2$ is negligible in front of
  $\degQ_{\nodetau'}^2 \delta_{\nodetau'}^2$. Plugging these bounds in the
  above complexity estimates, we obtain that the number of arithmetic
  operations used by the calls to ${\sf W}_1$ lies in
$$
\softO( (\lceil \log_2(n)\rceil+1)^{2\dinit} D^{2\dinit+1}
  (2n^2)^{4\dinit} n^{2\log_2(\dinit)} E_{\rho}\degQ_{\nodetau'}^2 \delta_{\nodetau'}^2).$$
    Using the upper bound $\lceil \log_2(n)\rceil+1 \le 2 \log_2(n)$,
    we see that this is 
$$\softO( 2^{6 \dinit}  E_{\rho}
  (n\log_2(n))^{8\dinit+2\log_2(\dinit)}  D^{2\dinit+1}\degQ_{\nodetau'}^2 \delta_{\nodetau'}^2).
    $$
    Now, we can use the first bound given in
    Corollary~\ref{chap:complexity:cor:bounds}, which states 
    that $$
    \degQ_{\nodetau'} \delta_{\nodetau'}  \leq 
    (\degC_\rho+\degQ_\rho)
    16^{3(\dinit+3)}  (n \log_2(n))^{2(2\dinit+12\log_2(\dinit))(\log_2(\dinit)+5)}D^{(2n+1)(\log_2(\dinit)+3)}.$$
    this shows that the total running time is
\[
\softO \left (
    (\degC_\rho+\degQ_\rho)^2
    16^{6\dinit} E_\rho (n \log_2(n))^{4(2\dinit+12\log_2(\dinit))(\log_2(\dinit)+6)}D^{2(2n+1)(\log_2(\dinit)+4)} \right ).
\]
\end{proof}


\subsubsection{Analysis of Steps \ref{rmplag:step:7}--\ref{rmplag:step:9-a}}

\begin{lemma}\label{chap:complexity:lemma7-9}
  Under the above notation and assumptions, the total cost of all
  calls to Steps \ref{rmplag:step:7}--\ref{rmplag:step:9-a} of ${\sf
    RoadmapRecLagrange}$ is bounded from above by the total cost of all
  calls to Steps~\ref{rmplag:step:2}--\ref{rmplag:step:6}
\end{lemma}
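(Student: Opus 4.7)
Steps~\ref{rmplag:step:7}--\ref{rmplag:step:9-a} are executed at every internal node $\tau$ of $\scrT$, and each consists of: two calls to ${\sf Fiber}(L'_\tau,\scrQ''_\tau)$ (appearing in Steps~\ref{rmplag:step:7} and~\ref{rmplag:step:7-a}), two calls to ${\sf Union}$ of zero-dimensional parametrizations, and two calls to ${\sf Lift}$. The plan is to show that each of these calls costs no more (up to the constants absorbed in $\softO(\ )$) than the call to ${\sf W}_1(L'_\tau)$ performed in Step~\ref{rmplag:step:5}, since we proved in Lemma~\ref{chap:complexity:lemma2-6} that the latter dominates the cost of Steps~\ref{rmplag:step:2}--\ref{rmplag:step:6}.

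First I would handle the auxiliary calls. By Lemmas~\ref{sec:main:lemma:union} and~\ref{lemma:subroutine:Lift}, the {\sf Union} and {\sf Lift} calls cost $\softO(N_\tau \max(\degC'_\tau,\degQ''_\tau,\degS'_\tau)^2)$ arithmetic operations, and Proposition~\ref{chap:complexity:prop:uniformdegreebounds} bounds all of these degrees by $\bzeta$; this is subsumed by the $\softO((\degC_\rho+\degQ_\rho)^2 \cdots)$ bound of Lemma~\ref{chap:complexity:lemma2-6}, since that bound already contains a $\bzeta^2$-type factor through $\degQ_{\tau'}^2\delta_{\tau'}^2$.

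The main work is to compare the cost of ${\sf Fiber}(L'_\tau,\scrQ''_\tau)$ with that of ${\sf W}_1(L'_\tau)$. Since ${\sf H}$ and ${\sf H'}$ hold, Lemma~\ref{sec:mainalgo:lemma:transferproperties} ensures the assumptions of Proposition~\ref{sec:main:prop:fiber} are met, and the cost of {\sf Fiber} is
\[
\softO\bigl( N_{\tau'}^3(N_{\tau'} E_{\tau'}+N_{\tau'}^3) D\, {\degQ''_\tau}^{2}\, \delta_{\tau'}^2 + N_{\tau'} \degS_{\tau'}^2\bigr),
\]
whereas the cost of ${\sf W}_1$ (used in Lemma~\ref{chap:complexity:lemma2-6}) is
\[
\softO\bigl( (k_{\tau'}+1)^{2d_{\tau'}+1} N_{\tau'}^{4d_{\tau'}+8} E_{\tau'} D^{2d_{\tau'}+1} \degQ_{\tau'}^2 \delta_{\tau'}^2 + N_{\tau'} \degS_{\tau'}^2\bigr).
\]
The key step, and really the only obstacle, is to control $\degQ''_\tau$ in terms of $\degQ_{\tau'}=\degQ_\tau$. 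By Lemma~\ref{chap:complexite:rec1}, $\degQ''_\tau \le \degB_\tau \le \degC_\tau + \degQ_\tau\,\bdelta\,\zeta_\tau$; Proposition~\ref{chap:complexity:prop:uniformdegreebounds} then gives $\degQ''_\tau \le \bzeta$, hence ${\degQ''_\tau}^{2}$ admits the same upper bound as $\degQ_{\tau'}^2$ up to the constants hidden in $\softO$.

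Given this, a term-by-term comparison concludes the proof: since $d_{\tau'}\ge 1$ (otherwise $\tau'$ would be a leaf and Steps~\ref{rmplag:step:7}--\ref{rmplag:step:9-a} would not be executed via this node), we have $D^{2d_{\tau'}+1}\ge D^{3}\ge D$ and $N_{\tau'}^{4d_{\tau'}+8}\ge N_{\tau'}^4\ge N_{\tau'}^3(N_{\tau'}+N_{\tau'}^3)$, so the factor $N_{\tau'}^3(N_{\tau'} E_{\tau'}+N_{\tau'}^3) D$ appearing in the {\sf Fiber} cost is dominated by $(k_{\tau'}+1)^{2d_{\tau'}+1} N_{\tau'}^{4d_{\tau'}+8} E_{\tau'} D^{2d_{\tau'}+1}$ from the ${\sf W}_1$ cost. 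Summing over all internal nodes of $\scrT$ and absorbing the factor $|\scrT|=O(n)$ into $\softO(\ )$, the total cost of Steps~\ref{rmplag:step:7}--\ref{rmplag:step:9-a} is bounded by that of Steps~\ref{rmplag:step:2}--\ref{rmplag:step:6}, as stated.
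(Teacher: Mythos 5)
Your proposal is correct and follows the same route as the paper: bound all degrees appearing in Steps~\ref{rmplag:step:7}--\ref{rmplag:step:9-a} by $\bzeta$ via Proposition~\ref{chap:complexity:prop:uniformdegreebounds}, then observe that the resulting costs are dominated term by term by the ${\sf W}_1$ estimate~\eqref{eq:costW1} used in Lemma~\ref{chap:complexity:lemma2-6} (you are in fact more explicit than the paper, which treats the {\sf Fiber} comparison in one line). One small slip: the intermediate inequality $N_{\tau'}^4\ge N_{\tau'}^3(N_{\tau'}+N_{\tau'}^3)$ is false as written, but your conclusion stands since $d_{\tau'}\ge 1$ gives $N_{\tau'}^{4d_{\tau'}+8}\ge N_{\tau'}^{12}\ge N_{\tau'}^4+N_{\tau'}^6$.
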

\begin{proof}
  Steps \ref{rmplag:step:7}--\ref{rmplag:step:9-a} are performed for
  internal nodes of $\scrT$; let $\nodetau$ be such a node. Recall that
  these steps consist in computing ${\sf Fiber}(L'_\nodetau,
  \scrQ''_\nodetau)$, take its unions $\scrC'_\nodetau$ and $\scrS'_\nodetau$ with
  $\scrC^{\mA_\nodetau}_\nodetau$ and $\scrS^{\mA_\nodetau}_\nodetau$ respectively and compute
  $\scrC''_\nodetau={\sf Lift}(\scrC'_\nodetau, \scrQ''_\nodetau)$ and
  $\scrS''_\nodetau={\sf Lift}(\scrS'_\nodetau, \scrQ''_\nodetau)$.

  Denote by $\nodetau'$ and $\nodetau''$ the left and right children of $\nodetau$
  and observe that $\scrC'_\nodetau=\scrC_{\nodetau'}$,
  $\scrC''_\nodetau=\scrC_{\nodetau''}$, $\scrS^{\mA_\nodetau}_\nodetau=\scrS_{\nodetau'}$ and
  $\scrS''_\nodetau=\scrS_{\nodetau''}$. We deduce by Proposition
  \ref{chap:complexity:prop:uniformdegreebounds} that the degrees of
  all these objects are at most $\bzeta$.

  By Lemma \ref{lemma:subroutine:Lift}, the calls to ${\sf Lift}$ are
  polynomial in $N_\nodetau \leq 2n^2$ (Lemma \ref{chap:complexite:ineq1})
  and quadratic in the above degree bounds. The cost is thus at most
  that reported in the previous lemma, since the estimate
  in~\eqref{eq:costW1} involved similar (and actually higher) costs.
\end{proof}


\subsubsection{Analysis of Step \ref{rmplag:step:12}}

\begin{lemma}\label{chap:complexity:lemma14}
  Under the above notation and assumptions, the total cost of all
  calls to Step \ref{rmplag:step:12} of ${\sf RoadmapRecLagrange}$ is
  bounded from above by the total cost of all calls to
  Step~\ref{rmplag:step:1}.
\end{lemma}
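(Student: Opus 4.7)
The plan is to bound the per-call cost of Step~\ref{rmplag:step:12}, sum over the internal nodes of $\scrT$, and then verify that the resulting bound is dominated by the estimate for Step~\ref{rmplag:step:1} established in Lemma~\ref{chap:complexite:lemma:step:1}.

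First I would bound the degrees of the one-dimensional parametrizations $\mathscr{R}'_\tau$ and $\mathscr{R}''_\tau$ handed to Step~\ref{rmplag:step:12} at an internal node $\tau$. By induction on the height of $\tau$, unfolding Step~\ref{rmplag:step:12} shows that each of $\mathscr{R}'_\tau$ and $\mathscr{R}''_\tau$ is a union, after intermediate changes of variables, of the leaf-level outputs returned by {\sf SolveLagrange} in the subtree rooted at a child of $\tau$. Each such leaf output has degree at most $\bzeta\bdelta$ by the last claim of Proposition~\ref{chap:complexity:prop:uniformdegreebounds}, and $\scrT$ has $O(\dinit)$ leaves; hence $\deg(\mathscr{R}'_\tau)$ and $\deg(\mathscr{R}''_\tau)$ both lie in $\softO(\bzeta\bdelta)$.

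Next, by Lemma~\ref{lemma:complexity1:changevar} each of the two {\sf ChangeVariables} calls at $\tau$ costs $\softO(n^2(\bzeta\bdelta)^2 + n^3)$ arithmetic operations in $\QQ$, while by Lemma~\ref{sec:main:lemma:union1} the call to {\sf Union} costs $\softO(n(\bzeta\bdelta)^3)$ operations in $\QQ$; the latter dominates. Summing over the $O(\dinit)$ internal nodes of $\scrT$ therefore gives a total cost of $\softO(\bzeta^3\bdelta^3)$ for all executions of Step~\ref{rmplag:step:12}.

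The last step, which is routine but constitutes the only nontrivial bookkeeping, is to verify that $\softO(\bzeta^3\bdelta^3)$ is dominated by the bound from Lemma~\ref{chap:complexite:lemma:step:1}. Using the explicit expressions of $\bdelta$ and $\bzeta$ from the beginning of Section~\ref{chap:complexity:sec:degreebounds}, one checks termwise that the exponents of $D$ and of $n\log_2(n)$ appearing in $\bzeta^3\bdelta^3$ are strictly smaller than those appearing in the bound of Lemma~\ref{chap:complexite:lemma:step:1} (for instance, the comparison for the $D$-exponent reduces to $3n + 3(2n+1)(\log_2(n)+2) \le 3(2n+1)(\log_2(n)+4)$, which holds for all $n \ge 1$, and the comparison for the $n\log_2(n)$-exponent is similar using the gap between $(\log_2(n)+4)$ and $(\log_2(n)+6)$). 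The factor $E_\rho \ge 1$ present in the Lemma~\ref{chap:complexite:lemma:step:1} bound together with the implicit $\softO$ constants absorbs the remaining polynomial overhead in $n$. This yields the claim.
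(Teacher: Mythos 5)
Your proof is correct and follows the same route as the paper: bound the degrees of the roadmaps entering Step~\ref{rmplag:step:12} (the paper implicitly uses that they are unions of leaf outputs of degree at most $\bzeta\bdelta$), apply the cubic cost of {\sf Union} from Lemma~\ref{sec:main:lemma:union1}, and observe that the resulting $\softO(\bzeta^3\bdelta^3)$ is absorbed by the $\degQ_\tau^3\delta_\tau^3$ term in Eq.~\eqref{eq:analysis:1}. The paper's own argument is just a two-sentence compression of exactly this; your explicit exponent comparison and the handling of the {\sf ChangeVariables} calls are harmless elaborations, and the $O(\dinit)$ factor from summing over leaves is legitimately hidden by the paper's convention that polynomial factors in $n$ are polylogarithmic in quantities containing $D^n$.
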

\begin{proof}
  Step \ref{rmplag:step:12} is performed for internal nodes of
  $\scrT$; let $\nodetau$ be such a node.  The call to the routine ${\sf
    Union}$ at Step \ref{rmplag:step:12} is linear in $n$ and cubic in
  the maximum of the degrees of the roadmaps computed at Steps
  \ref{rmplag:step:8} and \ref{rmplag:step:11} (Lemma
  \ref{sec:main:lemma:union1}). The cost of
  Lemma~\ref{chap:complexite:lemma:step:1} involves a cost that is at
  least as high, see Eq.~\eqref{eq:analysis:1}.
\end{proof}


\subsubsection{Proof of Proposition \ref{chap:complexite:prop:roadmaprec}}

Let us summarize the complexity estimates established above
\begin{itemize}
\item Lemma \ref{chap:complexite:lemma:step:1}, the calls to Step
  \ref{rmplag:step:1} use
  \[
  \softO \left (
  (\degC_\rho+\degQ_\rho)^3
  16^{9\dinit} E_\rho (n \log_2(n))^{6(2\dinit+12\log_2(\dinit))(\log_2(\dinit)+6)}D^{3(2n+1)(\log_2(\dinit)+4)}
  \right ) 
  \]
operations in $\QQ$.
\smallskip
\item Lemma \ref{chap:complexity:lemma2-6} implies that all calls to
  Steps \ref{rmplag:step:2}--\ref{rmplag:step:6} use
  \[
  \softO \left (
  (\degC_\rho+\degQ_\rho)^2
  16^{6\dinit} E_\rho (n
  \log_2(n))^{4(2\dinit+12\log_2(\dinit))(\log_2(\dinit)+6)}D^{2(2n+1)(\log_2(\dinit)+4)}
  \right )
  \]
  operations in $\QQ$.
\smallskip
\item By Lemma \ref{chap:complexity:lemma7-9} and Lemma~\ref{chap:complexity:lemma14},
  all other costs can be absorbed in the above bounds.
\end{itemize}
The cost from Step~\ref{rmplag:step:1} is dominant, and gives the
total reported in Proposition \ref{chap:complexite:prop:roadmaprec}.
The bound on the output degree follows from
Proposition~\ref{chap:complexity:prop:uniformdegreebounds} and
Corollary~\ref{chap:complexity:cor:bounds}; removing 
polylogarithmic factors, it becomes
\[(\degC_\rho+\degQ_\rho)
16^{3\dinit}  (n \log_2(n))^{2(2\dinit+12\log_2(\dinit))(\log_2(\dinit)+5)}D^{(2n+1)(\log_2(\dinit)+3)},\]
as claimed.


\subsection{Proof of the proposition}

We finally estimate the complexity of ${\sf MainRoadmapLagrange}$. On
input $\Gamma$ and $\scrC_0$, where
\begin{itemize}
\item $\Gamma$ is a straight-line program of length $E$ evaluating a
  sequence of polynomials
  $\f=(f_1, \ldots, f_p)\subset \QQ[X_1, \ldots, X_n]$ of degree
  $\leq D$ such that $V(\f)$ is $d$-equidimensional (with $d=n-p$)
  with finitely many singular points, $V(\f)\cap \R^n$ is bounded, and
\smallskip
\item $\scrC_0$ is a zero-dimensional parametrization of degree $\degC$
  encoding a finite set of points in $V(\f)$.
\end{itemize}
${\sf MainRoadmapLagrange}$ starts by calling the routine
${\sf SingularPoints}$ (see Proposition \ref{prop:singularpoints}) to
compute a zero-dimensional parametrization $\scrS_\rho$ encoding the
singular points of $V(\f)$ and next performs a call to
${\sf RoadmapRecLagrange}$ with input
$(\Gamma_\rho,(\,), \scrS_\rho), \scrC_\rho$, where $\scrC_\rho$
is a zero-dimensional parametrization encoding
$\Zeroes(\scrC_0)\cup \Zeroes(\scrS_\rho)$.

By Proposition \ref{prop:singularpoints}, the call to ${\sf SingularPoints}$ uses 
$$ \softO(E D^{4n} )$$ operations in $\QQ$ and returns a
zero-dimensional parametrization of degree bounded by $nD^{2n}$, so we
conclude that the degree of $\scrC_\rho$ is bounded by
$\degC+nD^{2n}$; the call to ${\sf Union}$ takes quadratic time in
this degree (and polynomial time in $n$), so we can ignore it.  Also,
by construction $\Zeroes(\scrQ_\rho)=\{\bullet\}$, hence $\degQ_\rho=1$.

Using Proposition \ref{chap:complexite:prop:roadmaprec}, and after a
few straightforward simplifications, we deduce that the call to ${\sf
  RoadmapRecLagrange}$ on input $(\Gamma_\rho, (\,),\scrS_\rho),
\scrC_\rho$ outputs a one-dimensional parametrization of degree
\[
\softO \left (
\degC 16^{3\dinit}  (n \log_2(n))^{2(2d+12\log_2(\dinit))(\log_2(\dinit)+6)}D^{(2n+1)(\log_2(\dinit)+4)}
\right )\]
using 
\[
\softO \left (
\degC^3 16^{9\dinit} E (n \log_2(n))^{6(2d+12\log_2(\dinit))(\log_2(\dinit)+7)}D^{3(2n+1)(\log_2(\dinit)+5)}
\right ) 
\]
operations in $\QQ$. Observing that $\dinit=d$ ends the proof.



\newpage
\includepdf[pages={1-},scale=0.75]{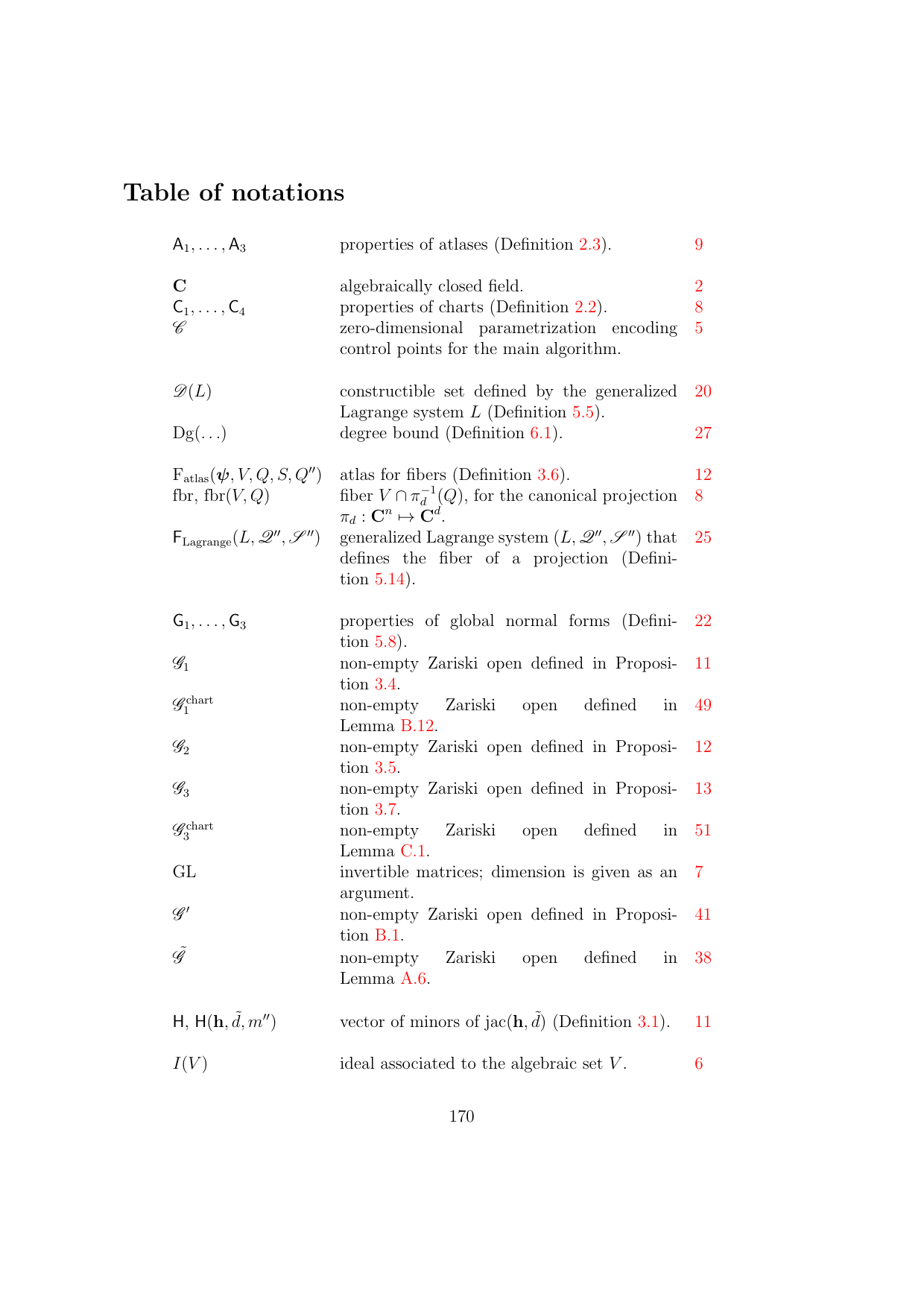}

\end{document}